\definecolor{darkgreen}{rgb}{0.0, 0.5, 0.0}
\definecolor{darkred}{rgb}{0.9, 0.0, 0.0}
\definecolor{darkblue}{rgb}{0.0, 0.0, 0.9}
\newcommand{\posw}[1]{\textcolor{darkgreen}{+#1}}
\newcommand{\negw}[1]{\textcolor{darkred}{-#1}}
\newcommand{\nilw}{\textcolor{darkblue}{0}}
\newcommand{\neuw}[1]{\textcolor{brown}{#1}}
\tikzset{
	state/.style={rounded rectangle,draw=black,inner sep=1.5mm,minimum width=9mm,minimum height=5mm},
	rstate/.style={rectangle,draw=black,inner sep=1ex,minimum width=9mm,minimum height=5mm},
	istate/.style={minimum width=5mm, minimum height=6mm},
	bullet/.style={circle,draw=black,fill=black,inner sep=0cm, minimum size=0.7mm},
	initial text={},
	every initial by arrow/.style={->,>=stealth'},
	ptran/.style={rounded corners, ->,>=stealth',auto},
	ntran/.style={rounded corners, -,auto}
} 
\newcites{apx}{Literature Referred to in the Appendix}
\declaretheorem{theorem}
\newtheorem{remark}[theorem]{Remark}
\let\Until\undefined %
\newcommand{\accdiaplus}{%
\ensuremath{\diamondplus}}
\newcommand{\rawdiaplus}{%
  \begin{tikzpicture}
    \useasboundingbox (-0.7ex, -0.7ex) rectangle (0.7ex, 0.7ex);
    \node (w) at (-0.7ex,0) {};
    \node (e) at (+0.7ex,0) {};
    \node (s) at (0,-0.7ex) {};
    \node (n) at (0,+0.7ex) {};
    \draw[line width=.2mm] (n.center) -- (e.center) -- (s.center) -- (w.center) -- cycle;
    \draw (n.center) -- (s.center);
    \draw (e.center) -- (w.center);
  \end{tikzpicture}}
\newsavebox{\diamondplusbox}
\savebox{\diamondplusbox}{\rawdiaplus}
\DeclareRobustCommand{\diamondplus}{\texorpdfstring{\mathop{\raisebox{-0.25ex}{\resizebox{1.45ex}{!}{\usebox{\diamondplusbox}}}}\nolimits}{diamondplus}}
\newcommand{\rawdiaminus}{%
  \begin{tikzpicture}
    \useasboundingbox (-0.7ex, -0.9ex) rectangle (0.7ex, 0.9ex);
    \node (w) at (-0.7ex,0) {};
    \node (e) at (+0.7ex,0) {};
    \node (s) at (0,-0.9ex) {};
    \node (n) at (0,+0.9ex) {};
    \draw (n.center) -- (e.center) -- (s.center) -- (w.center) -- (n.center);
    \draw (e.center) -- (w.center);
  \end{tikzpicture}}
\newsavebox{\diamondminusbox}
\savebox{\diamondminusbox}{\rawdiaminus}
\newcommand{\onlyLong}[1]{\ifthenelse{\boolean{longVersion}}{#1}{}}%
\newcommand{\onlyShort}[1]{\ifthenelse{\boolean{longVersion}}{}{#1}}%
\newcommand{\ie}{\textit{i.e.}, }
\newcommand{\eg}{\textit{e.g.}, }
\newcommand{\tudparagraph}[2]{%
\vspace*{#1}

\noindent
{\bf #2}
}
\newcommand{\fE}{\mathfrak{E}}
\newcommand{\cB}{\mathcal{B}}
\newcommand{\cC}{\mathcal{C}}
\newcommand{\cE}{\mathcal{E}}
\newcommand{\cF}{\mathcal{F}}
\newcommand{\cG}{\mathcal{G}}
\newcommand{\cH}{\mathcal{H}}
\newcommand{\cM}{\mathcal{M}}
\newcommand{\cN}{\mathcal{N}}
\newcommand{\cZ}{\mathcal{Z}}
\newcommand{\eqdef}{\ensuremath{\stackrel{\text{\tiny def}}{=}}}
\newcommand{\Ende}{\hfill ${\scriptscriptstyle \blacksquare}$}
\newcommand{\Distr}{\mathit{Distr}}
\renewcommand{\Pr}{\mathrm{Pr}}
\newcommand{\Expected}{\mathbb{E}}
\newcommand{\Exp}[2]{\mathbb{E}^{#1}_{#2}}
\newcommand{\ExpRew}[2]{\mathbb{E}^{#1}_{#2}}
\newcommand{\MP}{\mathrm{MP}} %
\newcommand{\ocM}{\mathfrak{M}}
\newcommand{\M}{\cM}
\newcommand{\sinit}{s_{\mathit{\scriptscriptstyle init}}}
\newcommand{\stateN}[2]{{#1}_{\scriptscriptstyle #2}}
\newcommand{\entrystate}[1]{{#1}_{\mathit{\scriptscriptstyle in}}}
\newcommand{\exitstate}[1]{{#1}_{\mathit{\scriptscriptstyle out}}}
\newcommand{\init}{{\mathit{\scriptscriptstyle init}}}
\newcommand{\Act}{\mathit{Act}}
\newcommand{\Post}{\mathit{Post}}
\newcommand{\mdpP}{P} %
\newcommand{\fpath}{\pi}
\newcommand{\cycle}{\xi}
\newcommand{\infpath}{\varsigma}
\newcommand{\first}{\mathit{first}}
\newcommand{\last}{\mathit{last}}
\renewcommand{\state}[2]{#1[#2]}
\newcommand{\fragment}[3]{#1[#2 \ldots #3]}
\newcommand{\prefix}[2]{\mathit{pref}(#1,#2)}
\newcommand{\FPaths}{\mathit{FPaths}}
\newcommand{\IPaths}{\mathit{IPaths}}
\newcommand{\MPaths}{\mathit{MPaths}}
\newcommand{\Paths}{\MPaths}
\newcommand{\InfPaths}{\IPaths}
\newcommand{\Cyl}{\mathit{Cyl}}
\newcommand{\sched}{\mathfrak{S}}
\newcommand{\tsched}{\mathfrak{T}}
\newcommand{\usched}{\mathfrak{U}}
\newcommand{\vsched}{\mathfrak{V}}
\newcommand{\residual}[2]{#1 {\uparrow} {#2}}
\newcommand{\stratone}{\sigma}
\newcommand{\strattwo}{\tau}
\newcommand{\play}[3]{\mathsf{Play}_{#1}(#2,#3)}
\newcommand{\MEC}{\mathit{MEC}}
\newcommand{\WDMEC}{\mathit{WDMEC}}
\newcommand{\wgt}{\mathit{wgt}}
\newcommand{\accwgt}{\textsf{wgt}}
\newcommand{\wmax}{w^{\max}}
\newcommand{\mincredit}{\mathit{mincredit}}
\newcommand{\lgr}{\mathit{lgr}}
\newcommand{\rec}{\mathit{rec}}
\newcommand{\maxwgt}[3]{\mathit{maxwgt}^{#1}_{#2}[#3]}
\newcommand{\ActEC}{\mathfrak{A}}
\newcommand{\spider}[2]{\mathsf{Spider}_{#2}(#1)}
\newcommand{\purge}[1]{\mathit{purge}(#1)}
\newcommand{\Purge}[2]{\mathit{purge}_{#1}(#2)}
\newcommand{\Limit}[1]{\mathit{Limit}_{#1}}
\newcommand{\good}{\mathit{good}}
\newcommand{\goal}{\mathit{goal}}
\newcommand{\fail}{\mathit{fail}}
\newcommand{\final}{\mathit{final}}
\newcommand{\ZeroEC}{\mathit{ZeroEC}}
\newcommand{\GoodEC}{\mathit{GoodEC}}
\newcommand{\PumpEC}{\mathit{PumpEC}}
\newcommand{\GambEC}{\mathit{GambEC}}
\newcommand{\divsched}{\mathfrak{D}}
\newcommand{\Good}{\mathit{Good}}
\newcommand{\neXt}{\bigcirc}
\DeclareMathOperator{\Until}{\ensuremath{\pmb{\mathrm{U}}}}
\newcommand{\true}{\mathit{true}}
\newcommand{\Nat}{\mathbb{N}}
\newcommand{\Rational}{\mathbb{Q}}
\newcommand{\Real}{\mathbb{R}}
\newcommand{\Integer}{\mathbb{Z}}
\newcommand{\NP}{\textsf{NP}}
\newcommand{\coNP}{\textsf{coNP}}
\newcommand{\PTIME}{\textsf{P}}
\newcommand{\PSPACE}{\textsf{PSPACE}}
\newcommand{\EXPTIME}{\textsf{EXPTIME}}
\newcommand{\UP}{\textsf{UP}}
\newcommand{\coUP}{\textsf{coUP}}
\newcommand{\move}[1]{\stackrel{#1}{\longrightarrow}}
\newcommand{\CiteAppendix}[1]{}
\colorlet{christelColor}{orange!30!white}
\colorlet{clemensColor}{red!30!white}
\colorlet{nathalieColor}{blue!30!white}
\colorlet{ocanColor}{green!30!white}
\colorlet{danielColor}{cyan!30!white}
\newcommand{\dwr}{\textsf{DWR}} %
\newcommand{\Easdwr}{$\dwr^{\exists,=1}$} %
\newcommand{\Eposdwr}{$\dwr^{\exists,>0}$}
\newcommand{\Uasdwr}{$\dwr^{\forall,=1}$}
\newcommand{\Uposdwr}{$\dwr{}^{\forall,>0}$}
\newcommand{\valueEas}[1]{K_{#1}^{\exists,=1}} %
\newcommand{\valueEpos}[1]{K_{#1}^{\exists,>0}}
\newcommand{\valueUas}[1]{K_{#1}^{\forall,=1}}
\newcommand{\valueUpos}[1]{K_{#1}^{\forall,>0}}
\newcommand{\wB}{\textsf{WB}} %
\newcommand{\wcoB}{\textsf{WcoB}} %
\newcommand{\EaswB}{$\wB^{\exists,=1}$} %
\newcommand{\EaswcoB}{$\wcoB^{\exists,=1}$} %
\newcommand{\EposwB}{$\wB^{\exists,>0}$} %
\newcommand{\EposwcoB}{$\wcoB^{\exists,>0}$} %
\newcommand{\valueEposB}[1]{B_{#1}^{\exists,>0}}
\newcommand{\UaswB}{$\wB^{\forall,=1}$} %
\newcommand{\UposwB}{$\wB^{\forall,>0}$} %
\newcommand{\valueUasB}[1]{B_{#1}^{\forall,=1}}
\title{Stochastic Shortest Paths and Weight-Bounded Properties 
       in Markov Decision Processes}
\author{Christel Baier}
\affiliation{Technische Universit\"at Dresden, Germany}
\author{Nathalie Bertrand}
\affiliation{Univ Rennes, Inria, CNRS, IRISA, France}
\author{Clemens Dubslaff}
\author{Daniel Gburek}
\affiliation{Technische Universit\"at Dresden, Germany}
\author{Ocan Sankur}
\affiliation{Univ Rennes, Inria, CNRS, IRISA, France}
\thanks{The authors are partly supported by the DFG through
        the collaborative research centre HAEC (SFB 912),
        the Excellence Initiative by the German Federal
            and State Governments (cluster of excellence cfAED),
        the Research Training Group QuantLA (GRK 1763), and
        the DFG-project BA-1679/11-1. The collaboration is supported by Inria associate team programme.}
\begin{abstract}
  The paper deals with finite-state Markov decision processes (MDPs) 
  with integer weights assigned to each state-action pair.
  New algorithms are presented to classify end components according to their 
  limiting behavior with respect to the accumulated weights.
  These algorithms are used to provide solutions for
  two types of fundamental problems for integer-weighted MDPs.
  First, a polynomial-time algorithm for the classical 
  stochastic shortest path problem is presented, generalizing
  known results for special classes of weighted MDPs.
  Second, qualitative probability constraints
  for weight-bounded (repeated) reachability conditions are addressed. 
  Among others, it is shown that the problem to decide whether a
  disjunction of weight-bounded reachability conditions
  holds almost surely under some scheduler
  belongs to $\textrm{NP}\cap \textrm{coNP}$,
  is solvable in pseudo-polynomial time and is at least as hard as solving
  two-player mean-payoff games, while the corresponding problem
  for universal quantification over schedulers is solvable in polynomial time.
\end{abstract}
\begin{document}

\maketitle

\renewcommand{\thetheorem}{\arabic{section}.\arabic{theorem}}
\renewcommand{\thelemma}{\arabic{section}.\arabic{lemma}}
\renewcommand{\thecorollary}{\arabic{section}.\arabic{corollary}}
\renewcommand{\theremark}{\arabic{section}.\arabic{remark}}
\renewcommand{\theexample}{\arabic{section}.\arabic{example}}
\renewcommand{\thedefinition}{\arabic{section}.\arabic{definition}}

\section{Introduction}

Markov decision processes (MDPs) are a prominent model used, \eg
in operations research, artificial intelligence, robotics
and the formal analysis of probabilistic nondeterministic programs.
Various types of stochastic shortest (or longest) path problems can be
formalized as an optimization problem for MDPs with integer or
rational weights for the transitions where the task is to
determine an optimal scheduling policy for the MDP until
reaching a target.
Here, optimality is understood with respect to %
the expected accumulated weight or the probability of reaching the target
under weight constraints.
Such problems can be seen as a
control-synthesis problem that, \eg asks to implement a
decision-making routine for a robot so that the robot
eventually reaches a safe state almost surely, while providing
guarantees on the achieved utility.  

Stochastic shortest (or longest) path problems are well understood and
supported by various tools for finite-state MDPs with nonnegative
weights only, for which the algorithms can rely on the monotonicity of
accumulated weights along the prefixes of paths.  In this case,
schedulers that maximize or minimize the expected accumulated weight
until reaching the target can be determined in polynomial time based
on a preprocessing of end components (\ie strongly connected
sub-MDPs) and linear programs~\cite{BerTsi91,deAlf99}.  One can
compute schedulers maximizing the probability for reaching the target
within a given cost in pseudo-polynomial time using an iterative
approach that successively increases the weight bound and treats
zero-weight loops by linear-programming
techniques~\cite{UB13,BDDKK14}. The corresponding decision problem is
\PSPACE-hard, even for acyclic MDPs~\cite{HaaseKiefer15}.

For MDPs with arbitrary integer weights, the lack of monotonicity of
accumulated weights makes analogous questions much harder. 
Even for finite-state
Markov chains with integer weights, 
the set of relevant configurations (\ie states augmented with
the weight that has been accumulated so far) 
can be infinite and, in MDPs with integer weights
optimal or $\varepsilon$-optimal
schedulers might require an infinite amount of memory.  
The latter is known from
energy-MDPs~\cite{ChatDoy11,BKN16,MaySchTozWoj17} 
where one aims at
finding a scheduler under which the system never runs out of energy
(\ie the accumulated weight plus some initial credit is
always positive) and satisfies an $\omega$-regular property
(\eg a parity condition) with probability 1 or maximizes the
expected mean payoff. 
Another indication for the additional difficulties that arise when
switching from nonnegative weights to integers is given by the work on
one-counter MDPs \cite{BBEKW10}, which can be seen as MDPs where all
weights are in $\{-1,0,+1\}$ and that terminate as soon as the counter
value is 0.  Among others, \cite{BBEKW10} establishes \PSPACE-hardness
and an $\EXPTIME$ upper bound for the almost-sure termination problem
under some scheduler, while the corresponding weight-bounded
(control-state) reachability problem in nonnegative MDPs is in
\PTIME~\cite{UB13}.

This paper addresses several fundamental problems 
for MDPs with integer weights.
Our main contributions 
are as follows.
First, we show that the classical stochastic shortest path problem,
where the task is to \emph{minimize the expected weight} until
reaching a target, is solvable in polynomial time for arbitrary
integer-weighted MDPs.  We hereby extend previous results for
restricted classes of MDPs \cite{BerTsi91,deAlf99}, while the general
case was open.
Second,
  we study disjunctions of
  \emph{weight-bounded reachability conditions} with qualitative probability
  bounds and existential or universal scheduler quantification.
  The problem to check the existence of a scheduler satisfying a
  disjunction of weight-bounded reachability conditions almost surely
  (referred to as decision problem \Easdwr) is shown to be in
  $\NP{\cap}\coNP$, solvable in pseudo-polynomial time, and as hard as
  non-stochastic two-player mean-payoff games (and therefore not known
  to be in \PTIME).
  The same complexity results are achieved for checking whether a
  disjunction of 
  weight-bounded reachability conditions holds with positive
  probability under all schedulers (problem \Uposdwr).
  In contrast, problem \Uasdwr{} that asks
  whether a disjunctive weight-bounded reachability
  condition holds almost surely under all schedulers 
  is shown to be in \PTIME.
 We also present algorithms for computing optimal weight-bounds
 with analogous time complexities:
 pseudo-polynomial  for the optimization variants of \Easdwr{} and
 \Uposdwr{} and polynomial  for \Uasdwr.
These results should be contrasted with the 
polynomial-time decidability of \Easdwr{} and \Uposdwr{} 
for MDPs where all
weights are nonnegative \cite{UB13}.

Although several other problems for integer-weighted MDPs
are known to be in $\NP\cap \coNP$
and as hard as nonstochastic two-player mean-payoff
games (see, \eg \cite{ChatDoy11,MaySchTozWoj17,BFRR17}
and the discussion on related work in Section~\ref{sec:discussion}),
our techniques crucially depart from previous work
by heavily relying on new algorithms to classify end
components (ECs) of MDPs.
We see these results on the \emph{classification of ECs} as a further
main contribution as it provides a useful vehicle for reasoning about
different problems for integer-weighted MDPs. An indication for the
latter is that we use these classification algorithms not
only to establish the results listed above for \Easdwr{} and \Uasdwr,
but also to prove the polynomial-time solvability of the classical
shortest path problem in general integer-weighted MDPs and to deal
with weight-bounded B\"uchi conditions.

Our classification of ECs is according to the existence of schedulers that
increase the weight to infinity (\emph{pumping ECs}), or ensure that the
weight eventually exceeds any threshold possibly without converging to
$+\infty$ (\emph{weight-divergent ECs}), or have oscillating behavior
(\emph{gambling ECs}), or keep the accumulated weights within a compact
interval (\emph{bounded ECs}).  
A sufficient and necessary criterion for the pumping
property is that the maximal expected mean payoff is positive, which
is decidable in polynomial time by computing the maximal expected mean
payoff using linear-programming techniques~\cite{Puterman,Kallenberg}.
While this observation has been made by several other authors, we are
not aware of earlier algorithms for checking 
the gambling or boundedness property. 
For checking weight-divergence, the results of 
\cite{BBEKW10} 
for one-counter MDPs without boundary 
yield a polynomial time bound
for the special case of MDPs where all weights are in 
$\{+1,0,-1\}$ and a pseudo-polynomial time bound in the general case.
We improve this result by presenting
a polynomial-time algorithm for deciding weight-divergence for
MDPs with arbitrary integer weights.
Moreover, in case that the given MDP $\cM$ is not weight-divergent,
the algorithm generates a new MDP $\cN$ with the same state space that
has no 0-ECs (\ie end components where the accumulated weight
of all cycles is 0) and that is equivalent to $\cM$ for all
properties that are invariant with respect to behaviors inside 0-ECs.
The generation of such an MDP $\cN$ relies on an iterative technique
to flatten 0-ECs. This new technique, called \emph{spider
  construction}, can be seen as a generalization of the method
proposed in \cite{Alfaro98Thesis,deAlf99} to eliminate 0-ECs in
nonnegative MDPs. There, all states that belong to some
maximal end component of the sub-MDP built by state-action pairs with weight 0
are collapsed. This technique obviously fails for integer-weighted MDPs as 0-ECs can
contain state-action pairs with negative and positive weights.  The
spider construction maintains the state space, but turns the graph
structure of maximal 0-ECs into an acyclic graph with a single sink
state that captures the original behavior of all other states in the
same maximal 0-EC.
Besides deciding weight-divergence, 
the spider construction 
will be the key to solve the classical shortest path problem
for arbitrary integer-weighted MDPs.

Checking the gambling property is \NP-complete in the general case,
but can be decided in polynomial time using the spider construction,
provided that the maximal expected mean payoff is 0. The latter is the
relevant case for solving problems \Easdwr{} and \Uasdwr{} as well as
corresponding problems for weight-bounded B\"uchi conditions.
We establish 
an analogous result for the boundedness property,
shown to be equivalent to the existence of 0-ECs
in cases where the given end component has maximal expected mean payoff 0.

\textbf{Outline.}
Section~\ref{sec:classification} presents 
the classification of end components and
corresponding algorithms.
Our results on
the stochastic shortest path problem
and weight-bounded (repeated) reachability properties
will be presented in Sections~\ref{sec:SSP} and \ref{sec:DWR},
respectively.
For full proofs we refer to the \onlyLong{appendix}\onlyShort{extended version of this paper~\cite{ext}}.

\section{Preliminaries}

\label{sec:prelim}

We briefly define our notations; for details see, \eg\cite{Puterman,BaierKatoen08}.

\begin{definition}[Markov decision processes (MDPs).]
An \emph{MDP} is a tuple
$\cM = (S,\Act,P,\wgt)$ where $S$ is a finite set of states, $\Act$
is a finite set of actions, 
$P\colon S \times \Act \times S \to [0,1]\cap \Rational$
is a probabilistic transition function satisfying 
$\sum_{t \in S} P(s,\alpha,t) \in \{0,1\}$
for all
$(s,\alpha) \in S \times \Act$,
and 
$\wgt\colon S \times \Act \to \Integer$ is a weight function.
\end{definition}
Action $\alpha$ is \emph{enabled} in $s$ 
if $\sum_{t\in S} P(s,\alpha,t)=1$, 
in which case $(s,\alpha)$ is called a \emph{state-action pair} of $\cM$.
$\Act(s)$ denotes the set of actions enabled in $s$.
State $s$ is called a \emph{trap} if $\Act(s)=\varnothing$.

Let $\|\cM\|$ denote the number of state-action pairs in $\cM$.
The \emph{size} of MDP $\cM$ is $\|\cM\|$ plus %
the sum of the logarithmic lengths of the probabilities
and weights in $\cM$.
~\\
\indent A \emph{path} in an MDP $\M = (S,\Act,P,\wgt)$ is an
alternating sequence of states and actions, that can be finite
$\fpath = s_0\, \alpha_0 \, s_1 \, \alpha_1 \, s_2 \, \alpha_2 \ldots
s_n$ or infinite
$\infpath = s_0\, \alpha_0 \, s_1 \, \alpha_1 \, s_2 \, \alpha_2
\ldots$, such that for every index $i$, $\alpha_i \in
\Act(s_i)$ and $P(s_i,\alpha_i,s_{i+1})>0$. 
A path is called \emph{maximal} if it is infinite or ends in a trap.
$\FPaths$, $\InfPaths$ and $\Paths$ denote the set of finite, infinite and
maximal paths, respectively. 
The \emph{weight} of a finite path
$\fpath = s_0\, \alpha_0 \, s_1 \, \alpha_1 \, \ldots \, \alpha_{n-1}\,
s_n$ is $\wgt(\fpath) = \sum_{i=0}^{n-1} \wgt(s_i,\alpha_i)$. For any
path $\fpath = s_0\, \alpha_0 \, s_1 \, \alpha_1 \, s_2 \, \alpha_2 \ldots$,
we write $\prefix{\fpath}{i}$ for its prefix up to state $s_i$.  The
first (resp.~last) state of a finite path $\fpath$ is denoted
$\first(\fpath)$ (resp.~$\last(\fpath)$).  If $\infpath$ is infinite,
$\lim(\infpath)$ is the set of state-action pairs occurring
infinitely often in $\infpath$.

A \emph{scheduler} resolves nondeterminism in MDPs. 
Formally, a scheduler for $\M$ is a partial function
$\sched\colon \FPaths \to \Distr(\Act)$ that maps every finite path 
$\fpath$ where $t=\last(\fpath)$ is not a trap to a
distribution over $\Act(t)$. 
Given a scheduler $\sched$ and a state $s$, the behavior of $\cM$ under 
$\sched$ with starting state $s$ can be formalized by a 
(possibly infinite-state)
Markov chain. $\Pr^{\sched}_{\cM,s}$ denotes the induced probability
measure.
We use standard notions for 
deterministic, memoryless, finite- and infinite-memory schedulers.
Thus, memoryless deterministic (MD) schedulers
can be viewed as functions assigning actions to non-trap states 
and the induced Markov chain is finite.

The analysis of the behaviors in MDPs often relies
on their end components. An \emph{end component} of
$\cM$ %
is a pair $\cE = (T,\ActEC)$ consisting of
a set of states $T \subseteq S$ and a function
$\ActEC\colon T \to 2^\Act$ such that
(1) $\emptyset \neq \ActEC(s) \subseteq \Act(s)$  for each $s \in T$,
(2) $\{t\in S : P(s,\alpha,t)>0\}\subseteq T$
   for each $s \in T$ and $\alpha \in \ActEC(s)$, 
and (3) the sub-MDP
induced by $(T,\ActEC)$ is strongly connected.
We often identify end components 
with their sets of state-action pairs.
That is, if $\cE=(T,\ActEC)$ is as above, we identify
$\cE$ with the set $\{(t,\alpha) : t\in T, \alpha \in \ActEC(t)\}$
and  rely on the fact that for each scheduler
the limit $\lim(\infpath)$ of almost all infinite $\sched$-paths
$\infpath$ constitutes an end component  \cite{Alfaro98Thesis}.
$\cE$ is a \emph{maximal end component} (MEC) if
there is no end component $\cF$ such that $\cE$ is strictly contained
in $\cF$. MECs of an MDP are computable 
in polynomial time \cite{Alfaro98Thesis,ChatHen11}.
All notations introduced
for MDPs can be used for end components, which are themselves strongly
connected MDPs.

\tudparagraph{1ex}{Specifying properties.} 
We use the term \emph{properties} to denote measurable subsets of
$(S \times \Integer)^{\omega} \cup (S \times \Integer)^*\times S$
with respect to the standard cylindrical sigma-algebra.
To reason about probabilities of properties
concerning the measure $\Pr^{\sched}_{\cM,s}$
where $\sched$ is a scheduler and $s$ is a starting state,
every path (state-action sequence) in $\cM$ 
is naturally mapped to a state-integer sequence.
Temporal properties with weight
constraints will be described by LTL-like formulas. 
The atoms of such
formulas are (sets of) states or weight expressions of the form
$\mathord{\accwgt} \bowtie w$ where 
$\mathord{\bowtie} \in \{\leqslant,<,\geqslant,>,=\}$
is a comparison operator and $w\in \Integer$ is a threshold.  Such
formulas are interpreted over path-position pairs. More precisely,
given a path
$\infpath = s_0\, \alpha_0 \, s_1 \, \alpha_1 \, s_2 \, \alpha_2
\ldots$ in $\cM$ and $i\in \Nat$
$ (\infpath,i) \, \models \, \accwgt \bowtie w \text{ iff }
\wgt(\prefix{\infpath}{i}) \bowtie w $, and as usual,
$\infpath \models \varphi$ is a shortcut for
$(\infpath,0)\models \varphi$.  Towards an example, 
let $\goal$ be a state in $\cM$. Then
$\infpath \models \Diamond (\goal \wedge (\accwgt \geqslant w))$ iff
$\infpath$ has a finite prefix $\fpath$ such that
$\last(\fpath)=\goal$ and $\wgt(\fpath) \geqslant w$.

To reason about optimal probabilities of a property $\varphi$, let 
$\Pr^{\sup}_{\cM,s}(\varphi) = 
   \sup_{\sched} \Pr^{\sched}_{\cM,s}(\varphi)$ and 
  $ \Pr^{\inf}_{\cM,s}(\varphi)  = 
   \inf_{\sched} \Pr^{\sched}_{\cM,s}(\varphi)$
where $\sched$ ranges over all schedulers for $\cM$.
We write $\Pr^{\max}_{\cM,s}(\varphi)$ rather than
$\Pr^{\sup}_{\cM,s}(\varphi)$ if the supremum is indeed a maximum,
which is the case, \eg if $\varphi$ is an ordinary LTL formula 
(without weight constraints).  
Note that the maximum/minimum might not exist for
weight-bounded properties. In any case,
$\Pr^{\max}_{\cM,s}(\varphi) =1$ (resp.
$\Pr^{\max}_{\cM,s}(\varphi) >0$) indicates the existence of a
scheduler $\sched$ with $\Pr^{\sched}_{\cM,s}(\varphi) =1$ (resp.
$\Pr^{\sched}_{\cM,s}(\varphi) >0$).

 Given  a random variable $f$,
 $\ExpRew{\sup}{\cM,s}(f) = \sup_{\sched} \ExpRew{\sched}{\cM,s}(f)$
 and
 $\ExpRew{\inf}{\cM,s}(f)= \inf_{\sched} \ExpRew{\sched}{\cM,s}(f)$
 denote the extremal expectations of  $f$,
 where sup and inf take values in
 $\Real \cup \{-\infty,+\infty\}$,
while, for instance, $\ExpRew{\max}{\cM,s}(f)$ 
will be used when the maximum exists.
 In particular, we will use the random variable associated with the
 mean payoff, defined on infinite paths by
 $ \MP(\infpath) = \limsup_{n \to \infty}
 \frac{\wgt(\prefix{\infpath}{n})}{n} $.
Recall that the maximal expected mean payoff in strongly connected
MDPs does not depend on the starting state and that there exist
MD-schedulers with a single \emph{bottom strongly connected component} (BSCC) 
maximizing the expected mean payoff. When $\cM$ is strongly connected, we omit
the starting state and write $\ExpRew{\max}{\cM}(\MP)$.
\section{Classification of End Components}
\setcounter{theorem}{0}
\label{sec:classification}

As basic building blocks of our algorithms, we define four types of schedulers and end components
of MDPs.
The \emph{pumping} end components have a scheduler 
that let the accumulated weight almost surely diverge to infinity;
positively (resp. negatively) \emph{weight-divergent} ones have a scheduler
where almost surely the limsup (resp. liminf) of the accumulated sum 
is infinity (resp. minus infinity);
the \emph{gambling} ones have schedulers with expected mean payoff 0 
and where the accumulated 
weight approaches both plus and 
minus infinity with probability~$1$;
while the \emph{zero end components} only have~$0$ cycles,
so the weight stays bounded with probability~$1$.
\begin{definition}
An infinite path $\infpath$ in an MDP $\cM$ is called\\[.3em]
\begin{tabular}{l@{\hspace*{0.15cm}}l}
 \footnotesize{$\bullet$} &
 \emph{pumping} if
  $\liminf\limits_{n \to \infty} \ \wgt(\prefix{\infpath}{n}) = +\infty$,
  \\[0.5ex]

  \footnotesize{$\bullet$} &
  \emph{positively weight-divergent}, or briefly \emph{weight-divergent},
  \\ 
  &
  if
   $\limsup\limits_{n \to \infty}  \wgt(\prefix{\infpath}{n}) = +\infty$,
  \\ [0.5ex]
  \footnotesize{$\bullet$} &
  \emph{negatively weight-divergent} 
  if
   ${\liminf\limits_{n \to \infty}  \wgt(\prefix{\infpath}{n}) = -\infty}$,
  \\[0.5ex]
  \footnotesize{$\bullet$} &
  \emph{gambling} if $\infpath$ is 
  positively and negatively weight-divergent,
  \\[0.5ex]
  \footnotesize{$\bullet$} &
   \emph{bounded from below}
   if
   $\liminf\limits_{n \to \infty} \wgt(\prefix{\infpath}{n}) \in \Integer$.
\end{tabular}
\end{definition}
\noindent
A scheduler $\sched$ for $\cM$ is called \emph{pumping from state $s$}
if\linebreak
\(\Pr^{\sched}_{\cM,s}\{\infpath \in \InfPaths : \text{$\infpath$ is
  pumping}\}=1\), \ie almost all $\sched$-paths from $s$ are pumping.
$\sched$ is called \emph{pumping} if it is pumping from all states
$s$.  The MDP $\cM$ itself is said to be \emph{pumping} if it has at
least one pumping scheduler.  $\cM$ is called \emph{universally
  pumping} if all schedulers of $\cM$ are pumping.

The notions of weight-divergent (or negatively weight-divergent or
bounded from below) schedulers and MDPs are defined analogously.
\emph{Gambling} schedulers are those where almost all paths are
gambling and where the expected mean payoff is 0.  A strongly
connected MDP $\cM$ is called \emph{gambling} if
$\Exp{\max}{\cM}(\MP)=0$ and $\cM$ has a gambling scheduler
(see Fig.~\ref{fig:gambling-ec}).
Obviously, a strongly connected MDP $\cM$ is pumping (universal
pumping or weight-divergent or gambling, respectively) from some state iff $\cM$ is
pumping (universal pumping or weight-divergent or
gambling, respectively).  %
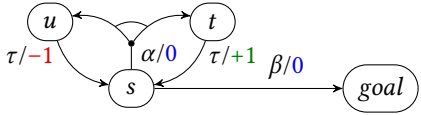
\begin{figure}[ht]
    \centering%

\begin{tikzpicture}
	\node[state] (s) {$s$};
	\node[bullet, above=3mm of s] (b) {};
	\node[state, above right= 0mm and 10mm of b] (t) {$t$};
	\node[state, above left= 0mm and 10mm of b] (u) {$u$};
	\node[state,right=25mm of s] (goal) {$\goal$};
	\path%
		(s) edge[ntran] node[right, pos=0.7]{$\alpha$/\nilw} (b)
		(b) edge[ptran, bend left] coordinate[pos=0.3] (bt) (t)
		(b) edge[ptran, bend right] coordinate[pos=0.3] (bu) (u)
		(t) edge[ptran, bend left] node[right, pos=0.2]{$\tau$/\posw{1}} (s)
		(u) edge[ptran, bend right] node[left, pos=0.2]{$\tau$/\negw{-1}} (s)
		(s) edge[ptran] node[above,pos=0.7]{$\beta$/\nilw} (goal);
	\draw%
		(bt) to[bend right] (bu);
\end{tikzpicture}

     \caption{EC $\cE = \{(s,\alpha), (u,\tau), (t,\tau)\}$ is gambling in
    case all distributions are uniform.
      The MD scheduler that always takes~$(s,\alpha)$ is gambling.
      Moreover,
      $\goal$ can be reached almost surely for any weight threshold, using the
      infinite-memory scheduler that takes~$(s,\alpha)$ if below the threshold,
      and $(s,\beta)$ otherwise. 
      One can show that this cannot be achieved with a finite-memory scheduler.
      \vspace{-0.2cm}
    }
    \label{fig:gambling-ec}
  \end{figure}

A \emph{zero end component} (0-EC) is an end component
$\cE$ where $\wgt(\cycle)=0$ for each cycle $\cycle$ in $\cE$ and 
use the term \emph{0-BSCC} when $\cE$ contains at most one state-action pair
$(s,\alpha)$ for each state $s$ in $\cE$. Thus, each 0-BSCC is a bottom strongly connected
component of an MD-scheduler.  A cycle $\cycle$ in $\cM$ is called
positive if $\wgt(\cycle)>0$, and negative if $\wgt(\cycle)<0$.

Recall characterizations of these notions for Markov chains:\vspace*{-1.4em}
\begin{lemma}[Folklore -- see, \eg \cite{KSBD15}] %
 \label{weight-div-MC}
 \label{gambling-MC}
  Let $\cC$ be a strongly connected finite Markov chain.
  \begin{enumerate}
  \item [(a)] 
     $\cC$ is pumping iff\ \ $\Exp{}{\cC}(\MP)>0$.
  \item [(b)]
     $\ExpRew{}{\cC}(\MP)=0$ iff  $\cC$ is a 0-BSCC
     or $\cC$ is gambling.
  \item [(c)]
     If\ \ $\ExpRew{}{\cC}(\MP)=0$
     then the following statements are equivalent:
  \begin{inparaenum}
   \item [(1)]  
      $\cC$ is gambling,
   \item [(2)]  
      $\cC$ is positively weight-divergent,
   \item [(3)]  
      $\cC$ is negatively weight-divergent,
   \item [(4)]
      $\cC$ has a positive cycle,
   \item [(5)]
      $\cC$ has a negative cycle.
  \end{inparaenum}
  \item [(d)]
     If\ \ $\ExpRew{}{\cC}(\MP)=0$
     then the following are equivalent:
  \begin{inparaenum}
   \item [(1)]  
      $\cC$ is a 0-BSCC,
   \item [(2)]
      $\cC$ is bounded from below,
   \item [(3)]
      the set of paths bounded from below has positive measure.
   \end{inparaenum}
  \end{enumerate}
\end{lemma}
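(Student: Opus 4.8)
The plan is to reduce every statement to the asymptotic behaviour of the accumulated weight $W_n \eqdef \wgt(\prefix{\infpath}{n})$ along a random path, governed by the gain $g \eqdef \ExpRew{}{\cC}(\MP)$. Since $\cC$ is finite and strongly connected, it is positive recurrent, and the strong law of large numbers for additive functionals of Markov chains yields $W_n/n \to g$ almost surely; in particular $\MP$ is a.s.\ constant and equal to $g$. The two non-zero regimes are then immediate: if $g>0$ then $W_n \to +\infty$, so $\liminf_n W_n = +\infty$ a.s.\ and $\cC$ is pumping, whereas if $g<0$ then $W_n \to -\infty$ a.s. This already gives the ``if'' direction of (a) and reduces all remaining work to the delicate zero-drift regime $g=0$.

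The core tool for $g=0$ is an \emph{excursion decomposition}. Fixing a state $s_0$ and letting $\tau_1<\tau_2<\dots$ be the successive return times to $s_0$, the increments $X_k \eqdef W_{\tau_k}-W_{\tau_{k-1}}$ are i.i.d.\ by the strong Markov property, integrable (since $|X_k|\le \wmax\cdot(\tau_k-\tau_{k-1})$ and return times have finite expectation), and satisfy $\Expected(X_1)=g\cdot\Expected(\tau_1)$ by the renewal--reward identity; hence $\Expected(X_1)=0$ when $g=0$. A short \emph{cycle-transport} argument then shows that $X_1$ is a.s.\ $0$ iff every cycle of $\cC$ has weight $0$: any closed walk decomposes into simple cycles, so if all cycles are $0$ then every excursion has weight $0$; conversely, a cycle $\cycle$ of nonzero weight at some state $u$ can be spliced into an excursion through $u$, yielding two excursions whose weights differ by $\wgt(\cycle)\neq 0$, so not both can vanish. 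Consequently, \textbf{if $\cC$ has a nonzero cycle} the partial sums $S_k \eqdef W_{\tau_k}=\sum_{i\le k}X_i$ form a non-degenerate mean-zero integer random walk, which by Chung--Fuchs is recurrent; combined with the Hewitt--Savage $0$--$1$ law (forcing $\limsup_k S_k$ and $\liminf_k S_k$ to be a.s.\ constants in $[-\infty,+\infty]$) this gives $\limsup_k S_k=+\infty$ and $\liminf_k S_k=-\infty$ a.s. Passing back to $W_n$ (whose per-step increments are bounded by $\wmax$) yields $\limsup_n W_n=+\infty$ and $\liminf_n W_n=-\infty$ a.s., i.e.\ $\cC$ is gambling. \textbf{If instead all cycles are $0$}, the accumulated weight depends only on the current state, so $W_n$ is a bounded function of $s_n$, $\cC$ is a $0$-BSCC, and $W_n$ stays in a compact interval a.s.

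With these two facts the remaining claims assemble formally. For (b): if $g=0$ then $\cC$ either has all cycles $0$ (a $0$-BSCC) or a nonzero cycle (gambling, by the above), and conversely both a $0$-BSCC and a gambling chain have $g=0$. For (c) (with $g=0$): (1)$\Rightarrow$(2) and (1)$\Rightarrow$(3) are the definition of gambling, (4)$\Rightarrow$(1) and (5)$\Rightarrow$(1) are the nonzero-cycle case above, and (2)$\Rightarrow$(4) is contrapositive --- if $\cC$ has no positive cycle then every excursion weight is $\le 0$ while $\Expected(X_1)=0$, forcing $X_1=0$ a.s., hence all cycles $0$ and $W_n$ bounded, contradicting positive weight-divergence; (3)$\Rightarrow$(5) is symmetric. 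For (d) (with $g=0$): (1)$\Rightarrow$(2)$\Rightarrow$(3) are immediate (boundedness of $W_n$, then a.s.\ boundedness gives positive measure), while (3)$\Rightarrow$(1) is the contrapositive of ``nonzero cycle $\Rightarrow$ gambling $\Rightarrow$ $\liminf_n W_n=-\infty$ a.s.'', which leaves no positive-measure set of paths bounded from below. Finally the ``only if'' of (a) follows since $g\le 0$ makes $\cC$ either downward weight-divergent ($g<0$), a $0$-BSCC, or gambling, none of which is pumping.

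I expect the zero-drift oscillation to be the main obstacle: setting up the i.i.d.\ excursion walk cleanly and invoking Chung--Fuchs recurrence together with Hewitt--Savage to upgrade recurrence to the full oscillation $\limsup_n W_n=+\infty$, $\liminf_n W_n=-\infty$. The cycle-transport equivalence (``nonzero cycle $\Leftrightarrow$ non-degenerate excursion'') is elementary but must be phrased carefully, since it feeds both directions of (c) and the boundedness characterisation of (d).
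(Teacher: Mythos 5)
Your proof is correct, but there is nothing in the paper to compare it against: the lemma is stated as folklore with a pointer to \cite{KSBD15}, and the appendix only develops adjacent tools (the quotient representation of the expected mean payoff, Lemma~\ref{lemma:scMC-exp-mp-quotient}, and the sibling criterion for 0-BSCCs, Lemma~\ref{lemma:BSCC-max-weight}); no proof of the present statement is given. What you supply is a legitimate, self-contained probabilistic proof, and its skeleton --- ergodic theorem for the drift, excursion decomposition at a reference state, renewal--reward to get mean-zero excursion increments, the observation that the excursion weight is degenerate iff all cycles have weight $0$, and Chung--Fuchs plus Hewitt--Savage to upgrade a non-degenerate mean-zero walk to full oscillation $\limsup=+\infty$, $\liminf=-\infty$ --- is the standard route for such statements. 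The logical assembly is complete: the implication cycles (1)$\Rightarrow$(2)$\Rightarrow$(4)$\Rightarrow$(1) and (1)$\Rightarrow$(3)$\Rightarrow$(5)$\Rightarrow$(1) close all equivalences in (c), and the two regimes (all cycles zero versus some nonzero cycle) feed (b) and (d) as you describe. Two small points to spell out in a full write-up. First, in the splicing step you only need \emph{one} positive-probability excursion of nonzero weight, since degeneracy of $X_1$ together with $\Expected(X_1)=0$ forces $X_1=0$ a.s.; you should treat separately the case where the nonzero cycle passes through the reference state $s_0$, re-rooting it at $s_0$ and noting that a closed walk of nonzero weight decomposes into excursions not all of weight $0$. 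Second, passing from the embedded walk $S_k=\wgt(\prefix{\infpath}{\tau_k})$ back to $W_n$ requires no control of within-excursion fluctuations at all, because $\limsup_n W_n \geqslant \limsup_k S_k$ and $\liminf_n W_n \leqslant \liminf_k S_k$ along the subsequence of return times; the remark about bounded per-step increments is unnecessary.
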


The goal of this section is to provide an analogous characterization
for strongly connected MDPs and efficient algorithms to decide whether
an MDP is of a given type. 

This is simple for the existential and universal 
pumping property, checkable in polynomial time
\onlyLong{(see Lemmas~\ref{appendix:pumping-ecs} 
and~\ref{universally-pumping} in the appendix)}:

\begin{restatable}{lemma}{lempumpingec}%
  \label{lemma:pumping-ecs}
  Let $\cM$ be a strongly connected MDP.
  Then, $\cM$ is pumping iff
  $\cM$ has a pumping MD-scheduler
  iff\ \ $\ExpRew{\max}{\cM}(\MP)>0$.
  Likewise, $\cM$ is universally pumping iff
  all MD-schedulers are pumping iff\ \
  $\ExpRew{\min}{\cM}(\MP)>0$. 
\end{restatable}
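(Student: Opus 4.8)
The plan is to prove the two equivalence chains
\[
  \text{$\cM$ pumping}
  \ \Longleftrightarrow\
  \text{$\cM$ has a pumping MD-scheduler}
  \ \Longleftrightarrow\
  \ExpRew{\max}{\cM}(\MP)>0
\]
and its universal counterpart with $\ExpRew{\min}{\cM}(\MP)>0$, by closing each as a cycle of implications. Two implications are immediate: a pumping MD-scheduler is in particular a pumping scheduler, so it witnesses that $\cM$ is pumping; and if every scheduler is pumping, then so is every MD-scheduler. Hence it remains to show, for the existential chain, that $\ExpRew{\max}{\cM}(\MP)>0$ yields a pumping MD-scheduler and that $\cM$ being pumping forces $\ExpRew{\max}{\cM}(\MP)>0$, and dually for the universal chain.

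The implications passing through single-BSCC MD-schedulers I would settle with Lemma~\ref{weight-div-MC}(a). If $\ExpRew{\max}{\cM}(\MP)>0$, take the MD-scheduler with a single BSCC that maximises the expected mean payoff (available for strongly connected MDPs by the remark recalled in the preliminaries). Its BSCC is a strongly connected Markov chain of mean payoff $\ExpRew{\max}{\cM}(\MP)>0$, hence pumping by Lemma~\ref{weight-div-MC}(a); since this BSCC is reached almost surely from every state and prepending a finite prefix leaves a $\liminf$ equal to $+\infty$ unchanged, the scheduler is pumping from every state. The corresponding direction of the universal chain is symmetric: if $\ExpRew{\min}{\cM}(\MP)\le 0$, the mean-payoff--minimising single-BSCC MD-scheduler (obtained by negating weights) has a BSCC of mean payoff $\le 0$, which by Lemma~\ref{weight-div-MC}(a) is not pumping, so not all MD-schedulers are pumping.

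The remaining two implications concern arbitrary, possibly infinite-memory schedulers and form the core of the argument; here I would invoke the average-reward optimality equation for communicating MDPs~\cite{Puterman} (strongly connected ones are communicating). For the maximum it provides a constant gain $\lambda=\ExpRew{\max}{\cM}(\MP)$ and a bias $h\colon S\to\Real$ with $\lambda+h(s)\ge \wgt(s,\alpha)+\sum_{t} P(s,\alpha,t)\,h(t)$ for all state-action pairs $(s,\alpha)$. Writing $\state{\infpath}{n}$ for the $n$th state of $\infpath$ and setting $M_n=\wgt(\prefix{\infpath}{n})+h(\state{\infpath}{n})-n\lambda$ turns $(M_n)$ into a supermartingale under every scheduler whose increments are bounded by $\wmax+2\max_s|h(s)|+|\lambda|$. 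Assuming $\lambda\le 0$ (the contrapositive of $\cM$ pumping $\Rightarrow\lambda>0$), every pumping path satisfies $M_n\ge \wgt(\prefix{\infpath}{n})-\max_s|h(s)|\to+\infty$. I would then use that a supermartingale with bounded increments tends to $+\infty$ with probability $0$: stopping at the first time $M_n\le -K$ gives a supermartingale bounded below, hence almost surely convergent, so $\{M_n\to+\infty\}$ is null on each $\{\inf_n M_n>-K\}$ and thus globally null. Hence no scheduler is pumping, establishing $\cM$ pumping $\Rightarrow\lambda>0$. For the universal chain I would use the dual equation, giving $\lambda=\ExpRew{\min}{\cM}(\MP)$ and a bias $h$ with $\lambda+h(s)\le \wgt(s,\alpha)+\sum_t P(s,\alpha,t)\,h(t)$, so that $W_n=\wgt(\prefix{\infpath}{n})+h(\state{\infpath}{n})$ has conditional drift $\Expected[W_{n+1}-W_n\mid\cF_n]\ge\lambda>0$ and bounded increments; its Doob decomposition $W_n=W_0+A_n+N_n$ has $A_n\ge n\lambda$ and a bounded-increment martingale $N_n$ with $N_n/n\to 0$ almost surely, whence $\liminf_n \wgt(\prefix{\infpath}{n})/n\ge\lambda>0$ and $\wgt(\prefix{\infpath}{n})\to+\infty$ almost surely under every scheduler.

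I expect the main obstacle to be precisely these last two implications, because \emph{pumping} is the pathwise tail condition $\liminf_n \wgt(\prefix{\infpath}{n})=+\infty$, which is strictly weaker than positive mean payoff along a path: the accumulated weight may diverge at a sub-linear rate while the mean payoff is $0$. The bias function is exactly what absorbs this slack, converting the accumulated weight into a process with bounded increments and controlled drift; afterwards the boundary case $\lambda=0$ — the only one in which divergence to $+\infty$ is not already excluded by a law-of-large-numbers estimate — is handled by the supermartingale convergence argument rather than by mean-payoff rates.
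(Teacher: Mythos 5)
Your proof is correct, and the two easy links of each chain (pumping MD-scheduler $\Rightarrow$ pumping; $\ExpRew{\max}{\cM}(\MP)>0$ $\Rightarrow$ pumping MD-scheduler via the single-BSCC mean-payoff-optimal MD-scheduler and the Markov-chain characterization) coincide with the paper's. Where you genuinely diverge is on the hard implications involving arbitrary, possibly infinite-memory schedulers. The paper shows ``$\cM$ pumping $\Rightarrow \ExpRew{\max}{\cM}(\MP)>0$'' by scheduler surgery: it extracts from a pumping scheduler an end component and a finite set $\Pi_n$ of return paths of weight at least $\Delta$ reached with probability more than $p$, builds a two-mode ``attempt/recover'' finite-memory scheduler, and reads off a positive mean payoff from the quotient representation of $\MP$ in the induced finite Markov chain; the universal direction is delegated to the folklore bound $\limsup_n\frac1n\wgt(\prefix{\infpath}{n})\leqslant\ExpRew{\max}{\cM}(\MP)$ a.s. You instead invoke the average-reward optimality equation for communicating MDPs to get a constant gain $\lambda$ and a bias $h$, and run (super)martingale arguments on $\wgt(\prefix{\infpath}{n})+h(\state{\infpath}{n})-n\lambda$. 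Your route is shorter and more uniform -- one supermartingale handles every scheduler simultaneously, and Doob convergence cleanly kills the delicate boundary case $\lambda=0$, where the accumulated weight could a priori diverge sublinearly -- but it imports two nontrivial external facts (solvability of the optimality equation with constant gain, and the supermartingale convergence theorem plus a martingale SLLN). The paper's route is more elementary probabilistically and produces an explicit finite-memory witness scheduler with positive mean payoff, which fits its constructive, algorithmic agenda; yours buys generality and brevity at the cost of heavier standard machinery. Both are sound.
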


The remainder of this section addresses the tasks to
check weight-divergence, the gambling property and the 
computation of all states belonging to a 0-EC.%
\footnote{%
  We focus here on results for (positive) weight-divergence.
The negative case can be obtained analogously
  by multiplying all weights with $-1$.}
We start with an observation on weight-divergence
\onlyLong{(see App.~\ref{appendix:pumping} for a proof)\vspace*{-1.4em}}:
\begin{restatable}{lemma}{poswgtdiv}%
\label{main:mp-pos-wgt-div}%
\label{wgt-div-implies-non-neg-mp} \
Let $\cM$ be a strongly connected MDP. %
 If $\cM$ is positively weight-divergent 
    then $\ExpRew{\max}{\cM}(\MP)\geqslant 0$. Conversely,    
 if\ \ $\ExpRew{\max}{\cM}(\MP)>0$, then $\cM$
    is positively weight-divergent.
\end{restatable}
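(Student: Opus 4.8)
The plan is to reduce both implications to a single elementary observation about individual paths, and then to combine it with the already-established characterization of pumping (Lemma~\ref{lemma:pumping-ecs}). The observation is the following. Fix an infinite path $\infpath$ and abbreviate $a_n = \wgt(\prefix{\infpath}{n})$. If $\limsup_n a_n = +\infty$, then $\MP(\infpath) = \limsup_n a_n/n \geqslant 0$. Indeed, $\limsup_n a_n = +\infty$ means $a_n > 0$ for infinitely many $n$; for each such $n$ we have $a_n/n > 0$, and a sequence having infinitely many positive terms has $\limsup \geqslant 0$. Note that the converse fails (mean payoff $0$ is compatible with bounded or oscillating weight), which is precisely why the statement is an inequality rather than an equivalence.

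For the first implication, suppose $\cM$ is positively weight-divergent, witnessed by a scheduler $\sched$ and a starting state $s$, so that $\Pr^{\sched}_{\cM,s}$-almost every path $\infpath$ satisfies $\limsup_n \wgt(\prefix{\infpath}{n}) = +\infty$. By the pathwise observation, $\MP(\infpath) \geqslant 0$ holds $\Pr^{\sched}_{\cM,s}$-almost surely. Since $|\MP|$ is bounded by $\max_{(t,\act)} |\wgt(t,\act)|$, the random variable $\MP$ is integrable, and a random variable that is nonnegative almost surely has nonnegative expectation; hence $\ExpRew{\sched}{\cM,s}(\MP) \geqslant 0$, and therefore $\ExpRew{\max}{\cM}(\MP) \geqslant \ExpRew{\sched}{\cM,s}(\MP) \geqslant 0$.

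For the second implication, if $\ExpRew{\max}{\cM}(\MP) > 0$ then by Lemma~\ref{lemma:pumping-ecs} the MDP $\cM$ is pumping, i.e.\ it admits a scheduler under which almost every path $\infpath$ satisfies $\liminf_n \wgt(\prefix{\infpath}{n}) = +\infty$. Since $\liminf = +\infty$ implies $\limsup = +\infty$ pathwise, the very same scheduler witnesses positive weight-divergence, so $\cM$ is positively weight-divergent.

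The only nonroutine point is the pathwise observation underlying the first implication: one must resist arguing via growth rates. A positively weight-divergent path may return to small or very negative weights arbitrarily often, so no lower bound on the growth rate is available; the argument uses only that the partial sums are positive along a single subsequence, which is exactly the information that survives in the $\limsup$ defining $\MP$. Getting the directions of the inequalities right (and recognizing that the converse genuinely fails) is the subtle part; the remainder is a standard monotonicity-of-expectation step together with the appeal to Lemma~\ref{lemma:pumping-ecs}.
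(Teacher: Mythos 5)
Your proof is correct, and for the converse direction it coincides with the paper's route: the paper also derives weight-divergence from $\ExpRew{\max}{\cM}(\MP)>0$ via the existence of a pumping MD-scheduler (Lemma~\ref{mp-pos-wgt-div}, i.e.\ the implication (iii)$\Rightarrow$(ii) of Lemma~\ref{lemma:pumping-ecs}), and this creates no circularity since that lemma is established independently. For the first implication, however, your argument is genuinely different and more elementary than the paper's. The paper proceeds by contraposition through Corollary~\ref{MP-pos-or-neg-implies-strict-wgt-div}(a): if $\ExpRew{\max}{\cM}(\MP)<0$ then \emph{every} scheduler is almost surely negatively pumping, which in turn rests on the folklore fact (Lemma~\ref{folklore-mean-payoff}) that under any scheduler the pathwise quantity $\limsup_n \frac{1}{n}\wgt(\prefix{\infpath}{n})$ almost surely does not exceed $\ExpRew{\max}{\cM}(\MP)$ --- a nontrivial statement requiring the end-component limit analysis. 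You instead observe pathwise that $\limsup_n \wgt(\prefix{\infpath}{n})=+\infty$ forces $\MP(\infpath)\geqslant 0$, and then only use that a bounded, almost surely nonnegative random variable has nonnegative expectation together with the definition of $\ExpRew{\max}{\cM}(\MP)$ as a supremum over schedulers. This buys a shorter, self-contained argument that needs nothing beyond the definitions; what it gives up is the stronger conclusion the paper's machinery yields for free, namely Lemma~\ref{mp-negative-not-wgt-div}, where weight-divergence with merely \emph{positive} probability (rather than probability one) already implies $\ExpRew{\max}{\cM}(\MP)\geqslant 0$ --- a strengthening the paper uses elsewhere but which is not needed for the statement at hand.
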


\subsection{Spider Construction for Flattening 0-ECs}
\begin{figure*}[hbtp]
	\begin{minipage}{.32\textwidth}
\pgfdeclarelayer{background}%
\pgfsetlayers{background,main}%
\begin{tikzpicture}[x=11mm,y=10mm,font=\scriptsize]
	\node (s) at (0,0) [state] {$s$};
	\node (stu) at (1,0) [bullet] {};
	\node (t) at (2,0) [state] {$t$};
	\node (tsrw) at (2,.7) [bullet] {};
	\node (twv) at (3,0) [bullet] {};
	\node (w) at (4,0) [state] {$w$};
	\node (r) at (4,.7) [state] {$r$};
	\node (u) at (2,-1) [state] {$u$};
	\node (v) at (3,-1) [state] {$v$};
	
    \draw [ntran] (s) -- node[above=-.05,pos=.5]{$\alpha$/\posw{1}} (stu);
    \draw [ptran] (stu) -- coordinate[pos=.4] (bt) (t);
    \draw [ptran] (stu) -- coordinate[pos=.25] (bu) (u);
    \draw (bt) to[bend left] (bu);
    \draw [ptran] (t) -- node[right,pos=.4]{$\alpha$/\nilw} (u);
    \draw [ntran] (t) -- node[above=-.05,pos=.5]{$\beta$/\posw{2}} (twv);
    \draw [ptran] (twv) -- coordinate[pos=.3] (bww) (w);
    \draw [ptran] (twv) -- coordinate[pos=.33] (bvv) (v);
    \draw (bvv) to[bend right=40] (bww);
    \draw [ptran] (v) -- node[below,pos=.5]{$\beta$/\negw{2}} (u);
    \draw [ptran] (u) -- node[below,pos=.4]{$\alpha$/\negw{1}} +(-1.6,0) -- (s);
    \draw [ntran] (t) -- node[left,pos=.4]{$\gamma$/\posw{4}} (tsrw);
    \draw [ptran] (tsrw) -- coordinate[pos=.1] (bs) +(-2,0) -- (s);
    \draw [ptran] (tsrw) -- coordinate[pos=.12] (bw) (w);
    \draw [ptran] (tsrw) -- (r);
    \draw (bs) to[bend left=80] (bw);
    \draw [ptran] (w) -- 
    	node[left,pos=.5]{$\beta$/\negw{3}} +(0,-1.5) -- (0,-1.5) -- (s);
    \begin{pgfonlayer}{background}%
    	\node[above left=.3 and 0 of s]{$\mathcal{M}\colon$};
        \draw[rounded corners=1em,line width=2em,black!10,fill=black!10]
        	(-.2,.05) -- (2.15, .05) -- (2.15,-1.05) -- (.6, -1.05) --
			node[right,pos=.6,white]{$\mathcal{E}$} cycle;
    \end{pgfonlayer}
\end{tikzpicture}

 	\end{minipage}
	\begin{minipage}{.32\textwidth}
		\vspace*{-.05em}
\pgfdeclarelayer{background}%
\pgfsetlayers{background,main}%
\begin{tikzpicture}[x=11mm,y=10mm,font=\scriptsize]
	\node (s) at (0,0) [state] {$s$};
	\node (t) at (2,0) [state] {$t$};
	\node (tsrw) at (2,.7) [bullet] {};
	\node (twv) at (3,0) [bullet] {};
	\node (w) at (4,0) [state] {$w$};
	\node (r) at (4,.7) [state] {$r$};
	\node (u) at (2,-1) [state] {$u$};
	\node (v) at (3,-1) [state] {$v$};
	
    \draw [ptran] (s) -- node[above=-.05,pos=.4]{$\tau$/\posw{1}} (t);
    \draw [ptran] (u) -- node[left,pos=.4]{$\tau$/\nilw} (t);
    \draw [ntran] (t) -- node[above=-.05,pos=.5]{$\beta$/\posw{2}} (twv);
    \draw [ptran] (twv) -- coordinate[pos=.3] (bww) (w);
    \draw [ptran] (twv) -- coordinate[pos=.33] (bvv) (v);
    \draw (bvv) to[bend right=40] (bww);
    \draw [ptran] (v) -- node[below,pos=.5]{$\beta$/\negw{2}} (u);
    \draw [ntran] (t) -- node[left,pos=.4]{$\gamma$/\posw{4}} (tsrw);
    \draw [ptran] (tsrw) -- coordinate[pos=.1] (bs) +(-2,0) -- (s);
    \draw [ptran] (tsrw) -- coordinate[pos=.12] (bw) (w);
    \draw [ptran] (tsrw) -- (r);
    \draw (bs) to[bend left=80] (bw);
    \draw [ptran] (w) -- 
    	node[left,pos=.5]{$\beta$/\negw{3}} +(0,-1.5) -- (0,-1.5) -- (s);
    \begin{pgfonlayer}{background}%
    	\node[above left=.3 and 0 of s]{$\mathcal{M}_1\colon$};
        \draw[rounded corners=1em,line width=.8em,black!10,fill=black!10]
        	(3.8,.25) -- (4.4,.1) --
			(4.1, -1.5) --  node[above,pos=.75,white]{$\mathcal{F}$}
			(-.1, -1.5) -- (-.3,.2) -- cycle;
    \end{pgfonlayer}
\end{tikzpicture}

 	\end{minipage}
	\begin{minipage}{.32\textwidth}
		\vspace*{-.76em}
\pgfdeclarelayer{background}%
\pgfsetlayers{background,main}%
\begin{tikzpicture}[x=11mm,y=10mm,font=\scriptsize]
	\node (s) at (0,0) [state] {$s$};
	\node (t) at (2,0) [state] {$t$};
	\node (tsrw) at (2,.7) [bullet] {};
	\node (w) at (4,0) [state] {$w$};
	\node (r) at (4,.7) [state] {$r$};
	\node (u) at (2,-1) [state] {$u$};
	\node (v) at (3,-1) [state] {$v$};
    \draw [ptran] (s) -- node[above=-.05,pos=.4]{$\tau$/\posw{1}} (t);
    \draw [ptran] (u) -- node[left,pos=.4]{$\tau$/\nilw} (t);
    \draw [ntran] (t) -- node[left,pos=.4]{$\gamma$/\posw{4}} (tsrw);
    \draw [ptran] (w) -- node[above=-.05,pos=.7]{$\tau$/\negw{2}} (t);
    \draw [ptran] (v) -- node[right,pos=.5]{$\tau$/\negw{2}} (t);
    \draw [ptran] (tsrw) -- coordinate[pos=.1] (bs) +(-2,0) -- (s);
    \draw [ptran] (tsrw) -- coordinate[pos=.12] (bw) (w);
    \draw [ptran] (tsrw) -- (r);
    \draw (bs) to[bend left=80] (bw);
    \node[above left=.3 and 0 of s]{$\mathcal{M}_2\colon$};
\end{tikzpicture}
 	\end{minipage}
	\caption{\label{fig:example-spider}Illustration of the spider construction:
$\cM_1=\spider{\cM}{\cE,t}$ and $\cM_2=\spider{\cM_1}{\cF,t}$. %
}
  \end{figure*}
  
\label{sec:spider-construction}

In this section, we present a method to eliminate a given 0-EC from an MDP 
by ``flattening'' it, crucial for our algorithms. 
This so-called \emph{spider construction} preserves the state space and
all properties of interest, in particular, those that are invariant 
by adding or removing path segments of weight~$0$.
It will be used for checking weight-divergence
(Section \ref{sec:wgt-div}) and for the stochastic 
shortest path algorithm (Section \ref{sec:min-exp-acc-weight}).
  
Let $\cM$ be an MDP and $\cE$ a 0-BSCC of $\cM$, \ie
for each state $s$ in $\cE$ there is a unique action
$\alpha_s\in \Act(s)$ such that $(s,\alpha_s)\in \cE$.
The spider construction for $\cM$ and $\cE$ works as follows.
As $\cE$ is a 0-EC, all paths in $\cE$ from $s$ to some state
$t$ in $\cE$ have the same weight, say $w(s,t)$.
Note that then each path from $t$ to $s$ has weight $w(t,s)=-w(s,t)$.

\begin{definition}%
  Let $\cM$ be an MDP, $\cE$ a $0$-BSCC of $\cM$, and $s_0$ a
  reference state in $\cE$.  The \emph{spider MDP}
  $\cN=\spider{\cM}{\cE,s_0}$ (or shortly $\spider{\cM}{\cE}$) results
  from $\cM$ by
  \\[.3em]
 (i) removing the
  state-action pairs $(s,\alpha_s)$ for all states $s$ in $\cE$;
  \\[0.5ex]
 (ii)
  adding state-action pairs $(s,\tau)$ for each state $s$ in $\cE$
  with $s\not= s_0$ where $P_{\cN}(s,\tau,s_0)=1$ and
  $\wgt_{\cN}(s,\tau)=w(s,s_0)$; and\\[0.5ex]
 (iii)  for each state $s\neq s_0$ in $\cE$ and action
  $\beta \in \Act_{\cM}(s){\setminus}\{\alpha_s\}$, replacing
  $(s,\beta)$ with $(s_0,\beta)$ s.t.
  $P_{\cN}(s_0,\beta,u) = P_{\cM}(s,\beta,u)$ for all states $u$ in
  $\cM$ and $\wgt_{\cN}(s_0,\beta)=w(s_0,s)+\wgt_{\cM}(s,\beta)$.
\end{definition}
\begin{example}
	\label{example:spider}
	We exemplify the spider construction in 
	Figure~\ref{fig:example-spider}: Starting with an
  MDP $\cM$, we apply the spider construction twice, 
  each with reference state $s_0=t$. First,
  the 0-BSCC $\cE = \{(s,\alpha), (t,\alpha), (u,\alpha)\}$ of $\cM$
  is chosen, obtaining $\cM_1=\spider{\cM}{\cE,t}$.
Then, taking the 0-BSCC $\cF=\{(s,\tau), (t,\beta), (u,\tau), (v,\beta),(w,\beta)\}$ of $\cM_1$,
we obtain $\cM_2=\spider{\cM_1}{\cF,t}$.
In each step, the chosen 0-EC turns into a sub-MDP where the reference
state is the only sink.
\Ende
\end{example}

To formally state the equivalence of $\cM$ and $\spider{\cM}{\cE}$,
we define the notion of \emph{$\cE$-invariant properties}.
Given a path 
$\infpath = t_0 \, \alpha_0 \, t_1 \, \alpha_1\ldots$, let
$\Purge{\cE}{\infpath}$ $\in$ 
$(S \times \Integer)^\omega \cup (S \times \Integer)^* \times S$
be obtained from $\infpath$ by 
(1) replacing each fragment
$t_i \, \alpha_i\, \ldots \, \alpha_j \, t_{j+1}$
of $\infpath$ such that
(a) either $i=0$ or $(t_{i-1},\alpha_{i-1})\notin \cE$,
(b) $(t_j,\alpha_j)\notin \cE$, and
(c) $(t_{\ell},\alpha_{\ell})\in \cE$ for $\ell=i,i{+}1,\ldots,j{-}1$
with $t_i \, w \, t_{j+1}$ where $w=w(t_i,t_j) +\wgt(t_j,\alpha_j)$ 
and 
(2) replacing each action $\alpha_i$ 
in the resulting sequence with $\wgt(t_i,\alpha_i)$.
A property $\varphi$ is called $\cE$-invariant if
for all maximal paths $\infpath$ we have:
\begin{inparaenum}
\item [(I1)] 
  if $\infpath$ has an infinite suffix of state-action pairs in
  $\cE$, then $\infpath \not\models \varphi$ and
\item [(I2)] 
  if $\infpath \models \varphi$ and
  $\infpath'$ is a maximal path with 
  $\Purge{\cE}{\infpath} = \Purge{\cE}{\infpath'}$ then
  $\infpath' \models \varphi$.
\end{inparaenum}
Weight-divergence and the pumping property
are $\cE$-invariant properties, and so are properties 
of the form $\Diamond (t \wedge (\accwgt \bowtie K))$
where  $t$ is a trap,
$\bowtie$ a comparison operator (\eg $=$ or $\geqslant$) 
and $K\in \Integer$.

\begin{lemma}%
  \label{spider-construction}
  The spider construction generates an
  MDP \linebreak $\spider{\cM}{\cE}$ that satisfies the following properties:
  \begin{enumerate}
  \item [(S1)]
     $\cM$ and $\spider{\cM}{\cE}$ have the same state space
     and $\|\spider{\cM}{\cE}\|=\|\cM\|{-}1$.
  \item [(S2)]
     If\ \ $\cE\not=\cM$ and $\cM$ is strongly connected then 
     $\spider{\cM}{\cE}$ 
     has a single MEC that is reachable from all states.
  \item [(S3)] 
     $\cM$ and $\spider{\cM}{\cE}$ are equivalent for 
     $\cE$-invariant properties in the following sense:
     \begin{enumerate}
     \item [(S3.1)]
        For each scheduler $\tsched$ for $\spider{\cM}{\cE}$ there is a 
        scheduler $\sched$ for $\cM$ with
        $\Pr^{\sched}_{\cM,s}(\varphi)=
         \Pr^{\tsched}_{\spider{\cM}{\cE},s}(\varphi)$
        for all states $s$ and all 
        $\cE$-invariant properties $\varphi$.
        If\ \ $\tsched$ is MD, then $\sched$ can be chosen MD.
     \item [(S3.2)]
        For each scheduler $\sched$ for $\cM$ 
        there exists a 
        scheduler $\tsched$ for $\spider{\cM}{\cE}$ such that
        \begin{center}
         $\Pr^{\sched}_{\cM,s}(\varphi) \leqslant
          \Pr^{\tsched}_{\spider{\cM}{\cE},s}(\varphi) \leqslant
          \Pr^{\sched}_{\cM,s}(\varphi) + p^{\sched}_{s}$
        \end{center}
        for all states $s$ and all 
        $\cE$-invariant properties $\varphi$.
       Here, $p^{\sched}_s=
         \Pr^{\sched}_{\cM,s}
           \{\infpath \in \InfPaths : \lim(\infpath)=\cE\}$.
     \end{enumerate}
  \item [(S4)]
     Suppose that $\cE$ is contained in an MEC
     $\cG$ of $\cM$ with $\Exp{\max}{\cG}(\MP)=0$. Then
     for each state $s$ with $s\notin \cE$:
     $s$ belongs to a 0-EC of $\cM$ iff\ \
     $s$ belongs to a 0-EC of $\spider{\cM}{\cE}$.
     Likewise, for each state-action pair $(s,\alpha)$ of $\cM$:
     $(s,\alpha)$ belongs to a 0-EC of $\cM$ iff
     $(s,\alpha)\in \cE$ or
     $(s_0,\alpha)$ belongs to a 0-EC of $\spider{\cM}{\cE}$.
  \end{enumerate}
\end{lemma}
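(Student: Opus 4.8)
The plan is to handle the four items from the cheapest to the most substantial, using throughout the weight map $w(\cdot,\cdot)$ on the state set $T$ of $\cE$, which is well defined because $\cE$ is a $0$-EC (any two $\cE$-paths between the same endpoints differ by cycles of weight~$0$) and additive with $w(x,y)=-w(y,x)$. Item (S1) is pure bookkeeping: the three construction steps touch only state-action pairs, never states, so the state spaces coincide; step~(i) deletes the $|T|$ pairs $(s,\alpha_s)$, step~(ii) inserts the $|T|-1$ pairs $(s,\tau)$ with $s\neq s_0$, and step~(iii) relabels each pair $(s,\beta)$ with $s\neq s_0$, $\beta\neq\alpha_s$ to $s_0$ without changing their number (keeping the relabelled actions at $s_0$ pairwise distinct so that no merging occurs). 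Hence $\|\spider{\cM}{\cE}\| = \|\cM\| - |T| + (|T|-1) = \|\cM\|-1$.

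For (S2) I would argue in the graph of $\cN:=\spider{\cM}{\cE}$. The key structural fact is that in $\cN$ every $s\in T\setminus\{s_0\}$ has the single enabled action $\tau$ leading deterministically to $s_0$, while every state outside $T$ keeps exactly its $\cM$-transitions. First I show $s_0$ is reachable from every state: given an $\cM$-path to $s_0$, cut it at its first visit to $T$; the prefix uses only unchanged transitions of non-$T$ states and the entering $T$-state reaches $s_0$ in one $\tau$-step. Since $\cM$ is strongly connected and, because $\cE\neq\cM$, has at least two states (hence no traps), $s_0$ lies on a cycle of $\cN$, so its strongly connected component $\cC$ is nonempty and is a MEC, and every state reaches $s_0\in\cC$. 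Uniqueness follows by showing every MEC contains $s_0$: a MEC meeting $T\setminus\{s_0\}$ must contain $s_0$ since those states only move to $s_0$, and a MEC disjoint from $T$ would be a proper closed, strongly connected sub-MDP of $\cM$ all of whose states still reach $s_0$, contradicting maximality. This last reachability bookkeeping is the one delicate point of (S2).

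Item (S3) is the heart of the lemma and, I expect, the main obstacle. The guiding idea is that $\cN$ replaces weight-$0$ travel inside $\cE$ by single jumps through $s_0$, which is precisely what $\Purge{\cE}{\cdot}$ erases, so corresponding paths have identical purged images and hence, by~(I2), identical truth values for every $\cE$-invariant $\varphi$. For (S3.1), from $\tsched$ on $\cN$ I build $\sched$ on $\cM$ by \emph{expanding} each $\cE$-touching $\cN$-step into the matching weight-$0$ detour: a $\tau$-step $s\!\to\!s_0$ becomes an $\cE$-walk from $s$ to $s_0$ (total weight $w(s,s_0)$), and a moved action $\beta$ at $s_0$ (encoding the original pair $(t,\beta)$) becomes an $\cE$-walk from $s_0$ to $t$ followed by $(t,\beta)$ (total weight $w(s_0,t)+\wgt(t,\beta)$). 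Each detour is almost surely finite because $\cE$ is a finite strongly connected $0$-BSCC, so almost no $\sched$-path has $\lim(\infpath)=\cE$; the purged images match pathwise and, using (I1) for the null set of $\cE$-trapped paths, the measures of $\varphi$ agree exactly. If $\tsched$ is MD it selects one action per state, hence one detour target per flattened step, and the expansion is positional (play $\alpha$ everywhere in $\cE$ except $\beta$ at the single target $t$), so $\sched$ is MD. For (S3.2) I reverse this, \emph{contracting} each maximal weight-$0$ $\cE$-segment of an $\cM$-scheduler $\sched$ into the corresponding $\cN$-jump; the purged images again match, giving $\Pr^{\sched}_{\cM,s}(\varphi)\leqslant\Pr^{\tsched}_{\spider{\cM}{\cE},s}(\varphi)$, while the only paths not faithfully reproduced are those with $\lim(\infpath)=\cE$, which fail every $\cE$-invariant $\varphi$ in $\cM$ by (I1) but whose contracted images may satisfy $\varphi$, accounting for the additive slack $p^{\sched}_{s}$. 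Making the expansion/contraction measurable and verifying the purge-equality along almost every path is the technically demanding step.

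Finally, (S4) I would prove by a weight-preserving correspondence between cycles of $\cM$ through $\cE$ and cycles of $\cN$ through $s_0$, driven by the additivity of $w$. The crucial closure observation is that any end component touching $\cE$ must, because the $\alpha$-actions keep the strongly connected set $\cE$ internally closed, contain all of $\cE$; so a $0$-EC of $\cM$ either avoids $T$ (and is literally unchanged in $\cN$) or contains $\cE$ and collapses to an end component of $\cN$ in which the $\cE$-part is replaced by $s_0$ together with the $\tau$- and moved pairs. In either direction the weight of a cycle is preserved under purging, since $w(x,s_0)+w(s_0,y)=w(x,y)$ exactly cancels the routing through $s_0$, so $0$-cycles map to $0$-cycles and lift back; tracking pairs rather than states then yields the stated equivalence, with the pairs of $\cE$ itself captured by the explicit clause ``$(s,\alpha)\in\cE$''. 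The hypothesis $\Exp{\max}{\cG}(\MP)=0$ fixes the zero-mean regime in which this identification is invoked and, via Lemma~\ref{main:mp-pos-wgt-div} and the Markov-chain characterization of Lemma~\ref{weight-div-MC}, rules out pumping inside $\cG$, so that the collapsed MEC is again zero-mean and the construction can be iterated.
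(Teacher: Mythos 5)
Your treatment of (S1)--(S3) follows essentially the same route as the paper: the counting argument for (S1), two-way reachability of $s_0$ for (S2), and the expansion/contraction of $\cE$-segments matched through the purge map for (S3), with the slack $p^{\sched}_s$ coming exactly from the paths trapped in $\cE$. Two execution points you flag but leave open deserve naming: in (S3.2) the contracted scheduler must in general be \emph{randomized}, defined by the conditional distribution of $\sched$'s exit choice given the purge-equivalence class of the history (a deterministic contraction cannot aggregate the many $\cM$-histories that collapse to one $\spider{\cM}{\cE}$-history), and the equality of measures is then pushed from cylinders to all $\cE$-invariant events by a measure-extension argument; and in (S2), ``contradicting maximality'' for a MEC disjoint from $\cE$ does not follow from reachability of $s_0$ alone --- one must exhibit a single strongly connected end component containing all surviving state-action pairs, since two MECs can be mutually reachable without contradiction.

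The genuine gap is in (S4). Your closure claim --- ``any end component touching $\cE$ must \dots contain all of $\cE$'' --- is false: an end component may contain a state $s$ of $\cE$ while using an action $\beta\neq\alpha_s$ at $s$, and then nothing forces any $\cE$-pair into it. The case that actually has to be handled is a 0-EC $\cZ$ of $\cM$ that overlaps $\cE$ without containing it (or without containing $s_0$). The spider construction reroutes $\cZ$'s exits through $s_0$ using the \emph{$\cE$-weights} $w(\cdot,\cdot)$, so the image of $\cZ$ is a 0-EC only if the $\cZ$-weight of every path between two shared states coincides with the corresponding $\cE$-weight. That is precisely what the hypothesis $\Exp{\max}{\cG}(\MP)=0$ buys, via the lemma that the union of two 0-ECs sharing a state is again a 0-EC (otherwise a scheduler alternating between the two would realize a cycle of strictly positive weight and hence positive expected mean payoff); one then replaces $\cZ$ by $\cZ\cup\cE$, which contains $s_0$, and concludes as in your ``contains all of $\cE$'' case. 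In your sketch the hypothesis on $\cG$ is invoked only for iterating the construction, so the forward direction of (S4) is unproved exactly for 0-ECs that overlap $\cE$ properly.
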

\onlyLong{The proof is given in Appendix~\ref{sec:prop-spider}.}
The main property of the spider construction 
is that it eliminates the given 0-BSCC while maintaining all other 0-EC, 
as stated in (S4). 
(S3) states an equivalence between
$\cM$ and $\spider{\cM}{\cE}$ with respect to $\cE$-invariant properties. 
While any scheduler for $\spider{\cM}{\cE}$ 
can be transformed to an equivalent scheduler for~$\cM$ (case (S3.1)), 
the converse direction (case (S3.2))
is more involved and requires restrictions, which are, however,
sufficient for our applications.

As a consequence of the equivalence stated in (S3) we obtain
that weight-divergent and pumping end components are preserved by the spider construction:

\begin{corollary}%
 \label{spider-preserves-wgt-div}
  If $\cM$ is strongly connected and $\cE$ is a 0-BSCC of $\cM$ then
  $\cM$ is weight-divergent (resp.~pumping) 
  iff\ \ $\spider{\cM}{\cE}$ is weight-divergent (resp.~pumping).
\end{corollary}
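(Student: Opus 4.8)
The plan is to obtain the corollary as a direct instantiation of the equivalence Lemma~\ref{spider-construction}(S3), applied to the property $\varphi$ that collects all weight-divergent (respectively, pumping) infinite paths. The essential prerequisite is that $\varphi$ is $\cE$-invariant, which was already recorded in the text preceding Lemma~\ref{spider-construction}; I would recall the underlying reason only briefly. Condition (I1) holds because, along any infinite suffix lying inside the $0$-BSCC $\cE$, the accumulated weight at a state $t$ equals its value on entering the suffix plus $w(\cdot,t)$, which ranges over the finitely many weights $w(\cdot,\cdot)$ of the finite $0$-EC and is therefore bounded; such a path has finite $\limsup$ and $\liminf$ and so is neither weight-divergent nor pumping. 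Condition (I2) holds because purging a weight-$0$ $\cE$-fragment $t_i\,\alpha_i\ldots\alpha_j\,t_{j+1}$ leaves the accumulated weight unchanged at the fragment endpoints and perturbs it by at most $\max_{s,s'\in\cE}|w(s,s')|$ in between; since, for a path not eventually remaining in $\cE$, the fragment endpoints form a cofinal set of positions, the $\limsup$ and $\liminf$ of the accumulated weight depend only on $\Purge{\cE}{\infpath}$. As both properties are already asserted to be $\cE$-invariant earlier, I would feel free to simply cite this.

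For the implication from $\spider{\cM}{\cE}$ to $\cM$, assume $\spider{\cM}{\cE}$ is weight-divergent (respectively pumping), witnessed by a scheduler $\tsched$ with $\Pr^{\tsched}_{\spider{\cM}{\cE},s}(\varphi)=1$ for every state $s$. Instantiating (S3.1) with this $\tsched$ produces a scheduler $\sched$ for $\cM$ satisfying $\Pr^{\sched}_{\cM,s}(\varphi)=\Pr^{\tsched}_{\spider{\cM}{\cE},s}(\varphi)=1$ for all $s$, so $\sched$ witnesses that $\cM$ is weight-divergent (respectively pumping).

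For the converse, assume $\cM$ is weight-divergent (respectively pumping), witnessed by $\sched$ with $\Pr^{\sched}_{\cM,s}(\varphi)=1$ for all $s$. Instantiating (S3.2) yields a single scheduler $\tsched$ for $\spider{\cM}{\cE}$ with
\[
 \Pr^{\sched}_{\cM,s}(\varphi)\ \leqslant\ \Pr^{\tsched}_{\spider{\cM}{\cE},s}(\varphi)\ \leqslant\ \Pr^{\sched}_{\cM,s}(\varphi)+p^{\sched}_{s}
\]
for all $s$. The lower bound forces $\Pr^{\tsched}_{\spider{\cM}{\cE},s}(\varphi)\geqslant 1$, and since a probability never exceeds $1$ the upper bound (despite the additive term $p^{\sched}_s$) gives equality with $1$; hence $\tsched$ witnesses weight-divergence (respectively pumping) of $\spider{\cM}{\cE}$. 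The only point requiring a moment of care is reconciling the MDP-level definition, which asks for a scheduler divergent from \emph{all} states, with the per-state statements in (S3); but both parts of (S3) are quantified over all states $s$, so this matches automatically. Consequently the $\cE$-invariance of the two properties is the sole nontrivial ingredient, and since it is already established the corollary follows with no further obstacle.
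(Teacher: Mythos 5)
Your proposal is correct and follows the paper's own route: the paper derives this corollary exactly by noting that weight-divergence and the pumping property are $\cE$-invariant and then invoking the equivalence (S3) of Lemma~\ref{spider-construction}, with (S3.1) giving the direction from $\spider{\cM}{\cE}$ to $\cM$ and the lower bound in (S3.2) giving the converse. Your additional justification of $\cE$-invariance (boundedness of the accumulated weight along $\cE$-suffixes for (I1), and agreement of accumulated weights at the cofinal fragment endpoints for (I2)) matches the reasons the paper takes for granted.
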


\subsection{Checking Weight-Divergence}
\label{sec:wgt-div}
We present an algorithm to check the weight-divergence 
of an end component~(see Algorithm~\ref{alg:check-wd}). 
Such end components will be useful, \eg
when solving weight-bounded reachability problems that require the accumulated weight
to be above a threshold.
\begin{algorithm}[h]
	\small
	\SetAlgoLined
	\DontPrintSemicolon
	\SetKwInOut{Input}{input}\SetKwInOut{Output}{output}
	\SetKwData{n}{n}\SetKwData{f}{f}\SetKwData{g}{g}
	\SetKwData{Low}{l}\SetKwData{x}{x}
	\Input{strongly connected MDP $\cM$} 
	\Output{``yes'' if $\cM$ is weight divergent and ``no'' otherwise}
	\BlankLine
	Compute $e:=\Exp{\max}{\cM}(\MP)$ and $\sched$ with $\Exp{\sched}{\cM}(\MP)=e$\;
	\lIf{$e<0$}{\Return ``no''}
	\lIf{$e>0$ or $\sched$ has a gambling BSCC}{\Return ``yes''}
	Pick a 0-BSCC $\cE$ of $\sched$\;
	\lIf{$\cM=\cE$}{\Return ``no''}
	Compute the MEC $\cF$ of $\spider{\cM}{\cE}$ that is reachable from all states and
	\Return $\textsf{Wgtdiv}(\cF)$
	\caption{$\textsf{Wgtdiv}(\cdot)$}
	\label{alg:check-wd}
\end{algorithm}
Given a strongly connected MDP $\cM$ 
we first compute $\Exp{\max}{\cM}(\MP)$ and  
an MD-scheduler $\sched$ maximizing the expected mean payoff. 
If $\Exp{\max}{\cM}(\MP)>0$ then $\cM$ is pumping
(Lemma \ref{lemma:pumping-ecs})
and therefore positively weight-divergent.
If $\Exp{\max}{\cM}(\MP)<0$ then  all schedulers for $\cM$ are
negatively weight-divergent
(Lemma \ref{lemma:pumping-ecs}
with weights multiplied by $-1$), 
and hence, $\cM$ is not positively weight-divergent.
If $\Exp{\max}{\cM}(\MP) = 0$ and $\sched$ has a gambling BSCC then
$\cM$ is gambling and therefore positively weight-divergent.
Otherwise, each BSCC of the Markov chain induced by $\sched$ is a 0-BSCC
(Lemma \ref{gambling-MC}) and we pick such a 0-BSCC $\cE$ of $\sched$. 
In case $\cM=\cE$ then $\cM$ is a 0-EC, hence not weight-divergent,
and the algorithm terminates.
If $\cM\not=\cE$, we apply the spider construction to generate
the MDP $\spider{\cM}{\cE}$ that contains a unique maximal end component
$\cF$ ((S2) in Lemma \ref{spider-construction}).
Repeating the procedure recursively on $\cF$ etc. thus generates
a sequence of MDPs $\cM_0=\cM$, $\cM_1,\ldots,\cM_{\ell}$
with $\cM_{i+1}=\spider{\cM_i}{\cE_i}$
for some 0-BSCC $\cE_i$ of $\cM_i$.
All $\cM_i$'s have the same state space and
the number of state-action pairs is strictly decreasing, \ie
we have $\|\cM_0\| > \|\cM_1\| > \ldots > \|\cM_{\ell}\|$
by property (S1) in Lemma \ref{spider-construction}.
Moreover, $\cM_i$ is weight-divergent iff 
$\cM$ is  weight-divergent
(Corollary \ref{spider-preserves-wgt-div}).

As each iteration takes polynomial time
and the size of each $\cM_i$ is polynomially bounded by the size of $\cM$
\onlyLong{~(see Lemma \ref{size-iterative-spider-construction})},
the algorithm runs in polynomial time.
Using an inductive argument and Lemma \ref{spider-construction}
\onlyLong{(see Appendix \ref{appendix:wgt-div})}, we obtain:

\begin{theorem}%
 \label{weight-divergence-algorithm}
  The algorithm for checking weight-divergence of a strongly connected
  MDP $\cM$ runs in polynomial time. %
  If $\cM$ is weight-divergent then it either finds a pumping or
  a gambling MD-scheduler.
  If $\cM$ is not weight-divergent, then 
  it generates  an
  MDP $\cN$ without 0-ECs
  on the same state space as~$\cM$,
  and is equivalent to~$\cM$ w.r.t. all  properties that are 
  $\cE$-invariant for
  all 0-ECs $\cE$ of $\cM$ in the sense of (S3) in
  Lemma~\ref{spider-construction}.
\end{theorem}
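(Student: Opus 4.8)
The plan is to prove the three assertions of Theorem~\ref{weight-divergence-algorithm} — polynomial running time, extraction of a pumping or gambling MD-scheduler in the divergent case, and construction of the 0-EC-free MDP $\cN$ otherwise — by induction on the number $\|\cM\|$ of state-action pairs, following the branches of Algorithm~\ref{alg:check-wd}. Throughout I exploit that, by the preliminaries, $\Exp{\max}{\cM}(\MP)$ is attained by an MD-scheduler $\sched$ with a single BSCC, so that ``the BSCC of $\sched$'' is well defined.

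First I would settle termination and the time bound. Each recursive call replaces $\cM$ by the single MEC $\cF$ of $\spider{\cM}{\cE}$; by (S1) in Lemma~\ref{spider-construction} we have $\|\spider{\cM}{\cE}\|=\|\cM\|-1$, and $\cF$ is a sub-MDP of it, so $\|\cF\|<\|\cM\|$ and the recursion depth is at most $\|\cM\|$. Every single step is polynomial: $\Exp{\max}{\cM}(\MP)$ and $\sched$ are computed by linear programming \cite{Puterman,Kallenberg}; the BSCC of $\sched$ is classified as gambling or as a 0-BSCC in polynomial time via Lemma~\ref{gambling-MC} (b) and (c), for which it suffices to test for a positive, equivalently negative, cycle; and both the spider construction and the MEC computation \cite{Alfaro98Thesis,ChatHen11} are polynomial. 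The only quantitative subtlety is that step (iii) of the spider construction introduces weights of the form $w(s_0,s)+\wgt(s,\beta)$, so I must invoke the size bound for the iterated construction (Lemma~\ref{size-iterative-spider-construction}) to guarantee that all $\cM_i$ stay polynomially large; granting this, the algorithm is polynomial.

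Correctness of the verdict and the scheduler extraction follow the case split. If $\Exp{\max}{\cM}(\MP)>0$, Lemma~\ref{lemma:pumping-ecs} shows $\sched$ is a pumping MD-scheduler and Lemma~\ref{wgt-div-implies-non-neg-mp} gives weight-divergence, so ``yes'' is correct. If $\Exp{\max}{\cM}(\MP)<0$, the contrapositive of Lemma~\ref{wgt-div-implies-non-neg-mp} shows no scheduler is positively weight-divergent, so ``no'' is correct (and $\cM$ then has no 0-EC, since a 0-EC realizes mean payoff $0$). If $\Exp{\max}{\cM}(\MP)=0$ and the BSCC of $\sched$ is gambling, prepending an MD-scheduler that reaches this BSCC almost surely — possible since $\cM$ is strongly connected — yields a gambling MD-scheduler, so $\cM$ is gambling, hence positively weight-divergent, and ``yes'' is correct. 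Otherwise the BSCC is a 0-BSCC $\cE$ (Lemma~\ref{gambling-MC}(b)); if $\cE=\cM$ then $\cM$ is a 0-EC and bounded, so ``no''; if $\cE\neq\cM$, Corollary~\ref{spider-preserves-wgt-div} reduces weight-divergence of $\cM$ to that of $\spider{\cM}{\cE}$, and since by (S2) the latter has the unique reachable MEC $\cF$ and positive weight-divergence is invariant under finite prefixes, $\spider{\cM}{\cE}$ is weight-divergent iff $\cF$ is, so the recursive call $\textsf{Wgtdiv}(\cF)$ is correct by the induction hypothesis. An MD-scheduler returned for $\cF$ is extended over the transient part of $\spider{\cM}{\cE}$ to reach $\cF$ almost surely and then transported to $\cM$ by (S3.1); this preserves MD-ness and the probabilities of the $\cE$-invariant path properties (positive, resp.\ negative, weight-divergence), and the mean-payoff-$0$ side condition of ``gambling'' is inherited because almost every path eventually stays in $\cF$.

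For the last assertion, assume $\cM$ is not weight-divergent. If $\Exp{\max}{\cM}(\MP)<0$ there is no 0-EC and I take $\cN=\cM$. Otherwise I iterate the spider construction on the \emph{full} state space: while the current MDP still contains a 0-EC, I pick a 0-BSCC $\cE$ inside it and pass to $\spider{\cdot}{\cE}$. The governing invariant is that any MEC $\cG$ carrying a 0-EC satisfies $\Exp{\max}{\cG}(\MP)=0$: it is $\geqslant 0$ since the 0-EC realizes mean payoff $0$, and it cannot be $>0$, for then $\cG$ — and through the composed equivalences also $\cM$ — would be pumping, contradicting non-divergence. Hence every flattening step meets the hypothesis of (S4), which guarantees that the states and state-action pairs lying in a 0-EC are tracked exactly, while the freshly created transient states lie in no EC, hence in no 0-EC. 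By (S1) each step preserves the state space and strictly decreases $\|\cdot\|$, so the process terminates in an MDP $\cN$ on the state space of $\cM$ with no 0-EC, and the claimed equivalence is obtained by composing the per-step guarantees (S3.1) and (S3.2). I expect this last assertion to be the main obstacle: I must carry the mean-payoff-$0$ hypothesis of (S4) and the 0-EC bookkeeping through a chain of intermediate MDPs that are no longer strongly connected, and compose the one-step equivalences into a single one between $\cM$ and $\cN$. The (S3.2) direction is the delicate part, since it holds only up to the trapping probability $p^{\sched}_s$, so I would verify that across the whole construction these error terms concern exactly the behaviors that the $\cE$-invariant properties disregard.
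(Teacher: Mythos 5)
Your proposal is correct and follows essentially the same route as the paper: the same case analysis on $\Exp{\max}{\cM}(\MP)$ driven by Algorithm~\ref{alg:check-wd}, termination and the time bound via (S1) together with the polynomial size bound for the iterated construction, MD-scheduler extraction via (S3.1), and elimination of all 0-ECs by carrying the mean-payoff-$0$ hypothesis of (S4) through the chain $\cM_0,\ldots,\cM_\ell$ and composing the per-step equivalences (the paper packages this composition as Lemma~\ref{lemma:equivalence-M-and-Mi}). The delicate points you flag at the end — the accumulation of the (S3.2) error terms $p^{\sched}_s$ over the intermediate MDPs and the bookkeeping of 0-ECs when the $\cM_i$ are no longer strongly connected — are exactly the ones the paper's appendix resolves.
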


Observe the following consequence of this theorem:

\begin{corollary}
  Let $\cM$ be a strongly connected MDP with 
  $\Exp{\max}{\cM}(\MP)=0$.
  Then, $\cM$ is weight-divergent iff $\cM$ is gambling  iff 
  $\cM$ has a gambling MD-scheduler.
\end{corollary}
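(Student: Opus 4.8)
The plan is to derive this corollary directly from Theorem~\ref{weight-divergence-algorithm} together with the Markov-chain folklore in Lemma~\ref{gambling-MC}, so that essentially no new machinery is required. The statement asserts three equivalences for a strongly connected MDP $\cM$ with $\Exp{\max}{\cM}(\MP)=0$: weight-divergent $\Leftrightarrow$ gambling $\Leftrightarrow$ has a gambling MD-scheduler. Since every gambling MD-scheduler certifies that $\cM$ is gambling, and every gambling MDP is by definition positively weight-divergent, the implications ``gambling MD-scheduler $\Rightarrow$ gambling'' and ``gambling $\Rightarrow$ weight-divergent'' are immediate from the definitions in Section~\ref{sec:classification}. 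Thus the entire content reduces to the single implication ``weight-divergent $\Rightarrow$ has a gambling MD-scheduler'' under the hypothesis $\Exp{\max}{\cM}(\MP)=0$.

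First I would invoke Theorem~\ref{weight-divergence-algorithm}. Running the weight-divergence algorithm on $\cM$, if $\cM$ is weight-divergent the theorem guarantees that the algorithm terminates by finding either a pumping or a gambling MD-scheduler. I would then rule out the pumping case: by Lemma~\ref{lemma:pumping-ecs}, a pumping MD-scheduler would force $\Exp{\max}{\cM}(\MP)>0$, contradicting the hypothesis $\Exp{\max}{\cM}(\MP)=0$. (Equivalently, one sees this directly in the algorithm: the branch returning a pumping scheduler is only reached when $e>0$.) Hence the scheduler produced must be gambling, which establishes the desired implication and closes the cycle of equivalences.

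The step I expect to carry the real weight is the appeal to Theorem~\ref{weight-divergence-algorithm}, specifically its guarantee that a weight-divergent strongly connected MDP always yields an \emph{MD}-scheduler witnessing divergence of the pumping-or-gambling type. This is where the spider construction does its work: the recursion terminates because $\|\cM_0\|>\|\cM_1\|>\cdots$ strictly decreases by (S1) of Lemma~\ref{spider-construction}, and weight-divergence is preserved along the recursion by Corollary~\ref{spider-preserves-wgt-div}. When the recursion stops without ever returning a pumping or gambling scheduler, it has reached some $\cM_\ell=\cE_\ell$ that is a genuine 0-EC, so the original $\cM$ is \emph{not} weight-divergent---the contrapositive of what we want. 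So under the weight-divergence hypothesis, the algorithm cannot bottom out in a 0-EC, and it must have returned a gambling MD-scheduler (the pumping alternative being excluded by the mean-payoff hypothesis as above). I would therefore present the proof as a short deduction, explicitly noting that the mean-payoff assumption is exactly what discards the pumping branch, leaving gambling as the only witness and giving all three equivalences at once.
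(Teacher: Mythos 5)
Your proof is correct and follows exactly the route the paper intends: the only nontrivial implication is ``weight-divergent $\Rightarrow$ gambling MD-scheduler,'' which you obtain from Theorem~\ref{weight-divergence-algorithm} by discarding the pumping branch via Lemma~\ref{lemma:pumping-ecs} and the hypothesis $\Exp{\max}{\cM}(\MP)=0$, while the remaining implications are definitional. This matches the paper, which states the corollary as a direct consequence of that theorem.
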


However, an MDP can have gambling schedulers, but no 
gambling MD-scheduler:
Consider the MDP with state-action
pairs $(s,\alpha)$, $(s,\beta)$,
where $P(s,\alpha,s) = 1, P(s,\beta,s) = 1$,
$\wgt(s,\alpha)=-\wgt(s,\beta) = 1$.
Then, $\Exp{\max}{\cM}(\MP)=+\infty$ and there is no gambling MD-scheduler, 
while the randomized memoryless scheduler $\sched$ 
with~$\sched(s)(\alpha)=\sched(s)(\beta)=\frac{1}{2}$ 
is gambling.

Given a strongly connected MDP $\cM$ with $\ExpRew{\max}{\cM}(\MP)=0$, 
$\cM$ is gambling 
iff $\cM$ is weight-divergent.
Thus, the gambling property for strongly connected MDPs with maximal expected
mean payoff 0 can be checked in polynomial time using Theorem~\ref{weight-divergence-algorithm},
which yields part (a) of the next theorem. \onlyLong{For the remaining part of the proof, see Appendix~\ref{sec:checking-gambling}.}

\begin{restatable}{theorem}{gambling}
  \label{thm:checking-gambling}
  Given a strongly connected MDP $\cM$,
  the existence
  of a gambling MD-scheduler
  is (a) decidable in polynomial time if $\ExpRew{\max}{\cM}(\MP)=0$,
  and (b) \NP-complete in general.
\end{restatable}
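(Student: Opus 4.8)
The plan is to treat (a) and (b) separately: (a) is essentially a restatement of results already proved, while (b) needs a polynomial verifier together with a reduction.

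For part (a), I would simply combine the corollary immediately preceding the theorem with Theorem~\ref{weight-divergence-algorithm}. When $\Exp{\max}{\cM}(\MP)=0$, that corollary says $\cM$ has a gambling MD-scheduler iff $\cM$ is weight-divergent, and Theorem~\ref{weight-divergence-algorithm} decides weight-divergence in polynomial time. Moreover, in the weight-divergent case with $\Exp{\max}{\cM}(\MP)=0$ the algorithm cannot return a pumping MD-scheduler, since a pumping scheduler forces $\Exp{\max}{\cM}(\MP)>0$ by Lemma~\ref{lemma:pumping-ecs}; hence it returns a gambling MD-scheduler, which both decides the question and exhibits an explicit witness. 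This yields (a).

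For the membership half of (b), I would show the problem lies in $\NP$ by guessing an MD-scheduler $\sched$ — an object of polynomial size, one enabled action per state — and verifying in polynomial time that $\sched$ is gambling. The verification is carried out on the finite chain $\MC{\sched}$: by a standard $0$-$1$ argument, almost all paths reach a BSCC, and a BSCC reached with positive probability must itself be gambling for $\sched$ to be gambling, which by Lemma~\ref{gambling-MC}(b) already forces expected mean payoff $0$. So $\sched$ is gambling iff \emph{every} BSCC of $\MC{\sched}$ is gambling. For each BSCC I would check the two criteria supplied by Lemma~\ref{gambling-MC}: that its mean payoff is $0$ (solve for the stationary distribution and take the weighted sum) and that it contains a positive cycle (a Bellman--Ford-style positive-cycle test); by Lemma~\ref{gambling-MC}(c) these two conditions together are equivalent to the BSCC being gambling. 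Both checks are polynomial, giving a polynomial-time verifier and hence membership in $\NP$.

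For $\NP$-hardness I would reduce from \textsc{Partition}: given positive integers $a_1,\dots,a_n$, decide whether there exist signs $\varepsilon_i\in\{+1,-1\}$ with $\sum_i\varepsilon_i a_i=0$. I would build a strongly connected MDP with a hub $c$ and states $b_1,\dots,b_n$: a single action at $c$ branches uniformly (probability $1/n$) to each $b_i$ with weight $0$, and at each $b_i$ the scheduler chooses between an action of weight $+a_i$ and one of weight $-a_i$, both returning to $c$. An MD-scheduler is then exactly a sign assignment, the induced chain is strongly connected (a single BSCC), and a one-line stationary-distribution computation ($\pi(c)=\tfrac12$, $\pi(b_i)=\tfrac1{2n}$) gives mean payoff $\tfrac1{2n}\sum_i\varepsilon_i a_i$. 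This vanishes iff the chosen signs solve \textsc{Partition}; and whenever it vanishes both signs must occur (since the $a_i$ are positive), so the chain carries a positive cycle $c\to b_i\to c$ and a negative one, hence is gambling by Lemma~\ref{gambling-MC}(c), exactly in the spirit of Fig.~\ref{fig:gambling-ec}. Conversely, any gambling MD-scheduler forces mean payoff $0$. Thus a gambling MD-scheduler exists iff the \textsc{Partition} instance is positive.

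The routine parts are the stationary-distribution and positive-cycle computations and the $0$-$1$ reduction of ``$\sched$ gambling'' to ``every BSCC gambling''. The only genuinely delicate point is designing the hardness gadget so that it \emph{simultaneously} encodes the equality $\sum_i\varepsilon_i a_i=0$ through the mean payoff and guarantees the gambling property whenever that equality holds; the probabilistic branching at $c$ is precisely what reconciles the two, converting a deterministic subset-sum choice into a mean-payoff-zero chain that still contains both a positive and a negative cycle.
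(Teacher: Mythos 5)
Your proof is correct, and for part (a) and the $\NP$-membership half of (b) it follows essentially the same route as the paper: (a) is exactly the corollary preceding the theorem combined with Theorem~\ref{weight-divergence-algorithm}, and membership is the same guess-an-MD-scheduler argument, with the verifier reduced via Lemma~\ref{gambling-MC} to checking mean payoff $0$ and the presence of a positive cycle in the BSCCs of the induced chain. (Your formulation of the verifier --- \emph{every} BSCC must be gambling --- is the right one given that ``gambling'' means gambling from all states.) The only genuine divergence is the hardness gadget: the paper reduces from subset sum ($\sum_{i\in I} a_i = b$) using a cycle of states $1,\dots,n{-}1,0$ where each state $i$ offers weight $a_i$ or $0$ and state $0$ closes the loop with weight $-b$, so that MD-schedulers correspond to subsets $I$ and the unique BSCC has mean payoff $0$ iff $\sum_{i\in I}a_i=b$; you reduce from \textsc{Partition} with a hub $c$ branching uniformly to the $b_i$'s, so that MD-schedulers correspond to sign vectors and the mean payoff is $\tfrac{1}{2n}\sum_i\varepsilon_i a_i$. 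Both reductions are valid and of comparable simplicity; yours trades the paper's deterministic chain (where the single probabilistic gadget supplies the positive/negative cycles) for a probabilistic star in which the positive and negative cycles come for free from the fact that a zero signed sum of positive integers must use both signs --- a small self-contained argument the paper gets instead from its explicit $\pm 1$ gadgets at states $0$ and $n{+}1$. Nothing is missing.
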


One can compute an MD-scheduler in polynomial time that maximizes the probability
of weight-divergence. In fact, one can compute weight-divergent MECs 
(and corresponding weight-divergent MD-schedulers)
and  maximize the probability of reaching one of these components.
Likewise,
the minimal probability of weight-divergence equals the 
maximal probability to reach the set $V$ of states of all
trap states and all states belonging to an
MEC $\cE$ where either $\Exp{\min}{\cE}(\MP)<0$ or
$\Exp{\min}{\cE}(\MP)=0$ and $\cE$ has a 0-EC.
Theorem~\ref{thm:checking-zero-EC} below shows
that set $V$ is computable in polynomial time.
This yields a polynomial-time algorithm for finding an MD-scheduler minimizing
the weight-divergence probability.
Previous work established the polynomial-time computability
of maximal weight-divergence probabilities 
in special cases. In fact,
\cite[Theorem 3.1]{BBEKW10} 
presents an algorithm to compute
an MD-scheduler maximizing the probability for weight-divergent paths
in a given MDP where the weights belong to~$\{-1,0,1\}$.
Thus, \cite{BBEKW10} yields a \emph{pseudo-polynomial} time bound for
deciding weight-divergence or computing the maximal
weight-divergence probabilities in MDPs with integer weights.
Theorem \ref{weight-divergence-algorithm} and the previous paragraph
improve this result by establishing a polynomial time bound.
Moreover, our algorithm is different; while \cite{BBEKW10} uses
transformations to incorporate accumulated weights in the state space
(up to some threshold), our algorithm uses the spider construction
and maintains the state space.

\subsection{Reasoning about 0-ECs}

\label{sec:0-ECs}

We are now interested in checking the existence of 0-ECs
and computing all state-action pairs inside some 0-EC, useful, \eg 
to deal with weight-bounded constraints (see Section \ref{sec:DWR}).

In MDPs without weight-divergent end components,
the weight-divergence algorithm can be used to determine all state-action
pairs belonging to a 0-EC in polynomial time.
However, this does not work in general
as the algorithm stops as soon as a
weight-divergent end component is found.

To check whether a given 
strongly connected MDP $\cM$ with $\ExpRew{\max}{\cM}(\MP)=0$
contains a 0-EC, we use an iterative approach:
we apply standard algorithms to compute an MD-scheduler
$\sched$ with a single BSCC $\cB$ maximizing the expected mean payoff
(in particular, $\Exp{}{\cB}(\MP)=0$) and checks whether $\cB$ is a 0-BSCC.
If yes, $\cB$ is a 0-EC of $\cM$. 
Otherwise, $\cB$ is gambling (see Lemma \ref{gambling-MC}). 
In this case, we give a transformation that modifies the transition probabilities in $\cB$ 
to obtain an MDP $\cM'$ with the same structure as $\cM$
(in particular, with the same 0-ECs) 
such that $\cM'$ has fewer gambling MD-schedulers than $\cM$.
Thus, if $\ExpRew{\max}{\cM'}(\MP)<0$ then $\cM$ has no 0-EC.
Otherwise, we repeat the procedure on $\cM'$.

This transformation is crucial in several results that follow. 
\onlyLong{Detailed construction
and the proof of the following theorem are given in Appendix 
Sections~\ref{sec:algo-checking-0-EC} and \ref{sec:complexity-existence-0-ECs}.}

\begin{restatable}{theorem}{thmcompZec}%
  \label{thm:checking-zero-EC}
  Given a strongly connected MDP $\cM$,
  the existence of 0-ECs 
  is (a) decidable in polynomial time if\ \ $\ExpRew{\max}{\cM}(\MP)=0$,
  and (b) \NP-complete in the general case.
\end{restatable}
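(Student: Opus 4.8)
The plan is to treat the two parts separately, establishing membership in \NP{} and the hardness for part~(b) first, since these are the more self-contained, and then giving the iterative algorithm behind part~(a).

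For \textbf{membership in \NP} (needed for part~(b)) I would guess a candidate set $C$ of state-action pairs and verify in polynomial time that it is a 0-EC. Checking that $C$ forms an end component (a nonempty action set per state, closure under the support of $P$, and strong connectivity of the induced sub-MDP) is routine. To certify that every cycle has weight~$0$, I would fix a reference state $s_0$, set $h(s_0)=0$, and compute a height function $h$ by propagating $h(t)=h(s)+\wgt(s,\alpha)$ along the strongly connected support; since all successors of a pair $(s,\alpha)$ share the same weight $\wgt(s,\alpha)$, these constraints are consistent iff every cycle has weight~$0$, which is exactly the 0-EC condition and is detectable in polynomial time. Hence the problem lies in \NP.

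For \textbf{\NP-hardness} of part~(b) I would reduce from \textsc{Subset-Sum}: given positive integers $a_1,\dots,a_n$ and a target $t$, build states $s_0,\dots,s_n$; at each $s_{i-1}$ add two deterministic actions to $s_i$, one of weight $a_i$ and one of weight $0$; and add a return action from $s_n$ to $s_0$ of weight $-t$. The resulting MDP is strongly connected. Any end component must contain all states and the return action, and including both actions at some $s_{i-1}$ would produce two cycles differing by $a_i\neq 0$, violating the 0-EC condition; so a 0-EC selects exactly one action per $s_{i-1}$ and its unique cycle has weight $\sum_{i\in I}a_i-t$. Thus a 0-EC exists iff some subset sums to $t$. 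Together with membership this yields \NP-completeness, and since the maximal mean payoff of this MDP is in general positive, the instance does not fall under the hypothesis of part~(a).

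For \textbf{part~(a)} I would implement the iterative scheme sketched before the statement. Starting from $\cM_0=\cM$ with $\Exp{\max}{\cM_0}(\MP)=0$, compute an MD-scheduler $\sched$ with a single BSCC $\cB$ attaining mean payoff~$0$. By Lemma~\ref{gambling-MC}, $\cB$ is either a 0-BSCC --- in which case $\cB$ is a 0-EC and, because every transformation changes only transition probabilities and leaves the state-action pairs and their weights intact, it is a 0-EC of the original $\cM$ as well, so I answer ``yes'' --- or $\cB$ is gambling. In the gambling case I apply the probability-reweighting transformation to obtain $\cM_{i+1}$, which has the same underlying weighted graph, hence \emph{the same collection of 0-ECs}, but strictly fewer gambling MD-schedulers. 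If $\Exp{\max}{\cM_{i+1}}(\MP)<0$ I answer ``no'': any 0-EC of $\cM$ would survive as a 0-EC of $\cM_{i+1}$, and the scheduler remaining inside it has mean payoff~$0$ (all its cycles weigh $0$ regardless of probabilities), contradicting the negative maximum. Otherwise the maximum is still $0$ and I recurse. Soundness thus rests on the single invariant that reweighting preserves exactly the set of 0-ECs.

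\textbf{The main obstacle} is to guarantee that this loop terminates after \emph{polynomially} many iterations rather than merely finitely many. The transformation must be constructed so that it (i)~never raises the maximal mean payoff above~$0$, (ii)~creates no new gambling MD-scheduler, and (iii)~removes at least the current gambling BSCC from the gambling family; the delicate point is to turn ``fewer gambling schedulers'' into a polynomial bound, which I would obtain by arguing that each reweighting step permanently eliminates a structural witness (\eg a state-action pair that can no longer lie on a mean-payoff-$0$ gambling BSCC), so that at most $\|\cM\|$ iterations occur and, since each iteration is polynomial, the whole algorithm runs in polynomial time. Verifying properties~(i)--(iii) for the concrete reweighting and carrying out this accounting is where the real work lies.
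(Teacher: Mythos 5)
Your treatment of part (b) is correct and essentially matches the paper's. The hardness reduction is the same construction the paper uses (a single cycle through the states with a choice of weight $a_i$ or $0$ at each step and a closing edge of weight $-t$, so that 0-ECs correspond exactly to subsets summing to $t$), and your \NP-certificate --- a candidate end component verified by propagating a potential function --- is an interchangeable variant of the paper's certificate, which guesses an MD-scheduler and checks that its induced Markov chain has a 0-BSCC; the two are equivalent because a 0-EC exists iff a 0-BSCC does.

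The gap is in part (a), and you have located it yourself: you never exhibit the reweighting transformation, you only list the properties (i)--(iii) it would need, and that construction is the entire technical content of part (a). Concretely, the paper proceeds as follows. If the BSCC $\cB$ of the mean-payoff-maximizing MD-scheduler $\sched$ is not a 0-BSCC, then by the criterion of Lemma~\ref{lemma:BSCC-max-weight} there must exist a state-action pair $(s,\alpha)$ in $\cB$ and two successors $t,u$ with $P(s,\alpha,t)>0$ and $P(s,\alpha,u)>0$ whose expected accumulated weights from $t$ (resp.\ $u$) until returning to $s$ under $\sched$ differ; the transformation shifts a small probability mass $\delta$ on this single pair from the successor with the larger expectation to the one with the smaller expectation. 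Establishing your properties (i) and (ii) then hinges on Lemma~\ref{lemma:wgt-exp-equals-zero}, which shows that $\sched$ already \emph{maximizes} the expected weight-until-$s$ among all MD-schedulers whose BSCC contains $(s,\alpha)$; from this, Lemma~\ref{soundness-0-EC-algo} concludes that after the shift every MD-scheduler using $(s,\alpha)$ in a BSCC has strictly negative mean payoff, while schedulers avoiding $(s,\alpha)$ in their BSCCs are unaffected. The polynomial termination bound then falls out for exactly the reason you conjecture: a perturbed pair can never again lie in a BSCC of a mean-payoff-$0$ MD-scheduler, so each state-action pair is perturbed at most once and the loop runs at most $\|\cM\|$ times. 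Without the sibling criterion, the maximality lemma, and the monotonicity computation that combines them, your claim that a transformation with properties (i)--(iii) exists remains an assertion rather than a proof.
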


Combining the above decision algorithm and the iterative elimination 
of~0-ECs,
we can also compute the set of all 0-ECs in polynomial time.
An important notion in our algorithms is the \emph{recurrence value} 
defined as follows.
For a state $s$ of a 0-EC in a strongly connected
MDP $\cM$ with $\Exp{\max}{\cM}(\MP)=0$,
$\rec(s)$ is the maximal integer $K$
s.t.
$\Pr^{\sched}_{\cM,s}\big(\Box (\accwgt \geqslant K) \wedge \Box \Diamond s\big)=1$
for some $\sched$ that only uses
actions belonging to some 0-EC.
In fact, to ensure that the accumulated weight stays above~0, 
it does not suffice
to enter a 0-EC with nonnegative weight,
as 0-ECs can contain state-action pairs with negative weight.

\begin{lemma}%
\label{mincredit-ZeroEC}
If $\cM$ is strongly connected and
$\ExpRew{\max}{\cM}(\MP)=0$ then the set $\ZeroEC$ consisting of all
states $s$ that belong to some $0$-EC,
as well as the recurrence values $\rec(s)$
for the states $s \in \ZeroEC$
are computable in polynomial time.
\end{lemma}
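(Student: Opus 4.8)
The plan is to obtain $\ZeroEC$ and the recurrence values by two successive uses of the spider construction together with the $0$-EC decision procedure of Theorem~\ref{thm:checking-zero-EC}. For the set $\ZeroEC$ (and, more generally, for the collection of all state-action pairs lying in some $0$-EC) I would iterate in the spirit of the weight-divergence algorithm. Compute an MD-scheduler maximizing the expected mean payoff; since $\Exp{\max}{\cM}(\MP)=0$, each of its BSCCs has mean payoff $0$ and, by Lemma~\ref{gambling-MC}, is either a $0$-BSCC or gambling. Applying the gambling-elimination transformation from the proof of Theorem~\ref{thm:checking-zero-EC} either certifies that no $0$-EC exists or exposes a genuine $0$-BSCC $\cE$. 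I would then record the pairs of $\cE$ as belonging to a $0$-EC and form $\spider{\cM}{\cE}$. Property~(S4) of Lemma~\ref{spider-construction} guarantees that a pair of $\cM$ outside $\cE$ lies in a $0$-EC of $\cM$ iff its image lies in a $0$-EC of $\spider{\cM}{\cE}$, so recursing on the unique MEC of $\spider{\cM}{\cE}$ and translating answers back along the explicitly recorded pair correspondence yields $\ZeroEC$ exactly. Property~(S1) makes $\|\cdot\|$ strictly decrease, bounding the number of iterations by $\|\cM\|$; each iteration being polynomial, the whole computation is polynomial.

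For the recurrence values I would restrict attention to the sub-MDP $\cZ$ formed by the pairs inside some $0$-EC (available from the first part). The decisive observation is that inside any $0$-EC the accumulated weight behaves like a \emph{potential}: all paths from a fixed state $s$ to a state $t$ have equal weight $w(s,t)$, and hence all successors of a $0$-EC pair carry the same relative potential; thus along any run that stays in a single $0$-EC the accumulated weight at $t$ equals $\phi(t)=w(s,t)$ irrespective of the probabilistic branching. Combining this with the Markov-chain characterization in Lemma~\ref{weight-div-MC}(d), I would argue that a scheduler witnessing $\Box(\accwgt\geqslant K)\wedge\Box\Diamond s$ with only $0$-EC actions must have its recurrent behavior concentrated in a single $0$-EC $\cE$ containing $s$ (a bounded-from-below recurrent component of mean payoff $0$ is a $0$-BSCC), and that the floor it realizes is exactly $\min_{t\in\cE}\phi(t)$; transient excursions can only lower the floor, and since $s\in\cE$ they can be dropped. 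This reduces the definition to the clean identity $\rec(s)=\max\{\min_{t\in\cE}\phi_{\cE}(t):\cE \text{ a } 0\text{-EC},\ s\in\cE\}$, which is automatically at most $0$ because $\phi_{\cE}(s)=0$.

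I would then evaluate this maximum in polynomial time, avoiding the pseudo-polynomial blow-up that tracking the accumulated weight as part of the state would cause. The key is that inside a fixed maximal $0$-EC containing $s$ the potentials $\phi(t)=w(s,t)$ are globally consistent and computable by a shortest-path computation (Bellman--Ford applies since $0$-ECs contain no nonzero cycle), and they take at most $|S|$ distinct values; the spider construction already surfaces this potential information through its reference-state edges. For each threshold $v$ among those values I would delete all states with potential below $v$, compute the maximal end components of the surviving $0$-EC pairs, and test whether one of them still contains $s$; the largest feasible $v$ is the contribution of that maximal $0$-EC, and maximizing over the maximal $0$-ECs through $s$ gives $\rec(s)$. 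Performing this for every $s\in\ZeroEC$ keeps the total cost polynomial.

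The step I expect to be the main obstacle is the structural reduction, namely rigorously showing that an optimal witnessing scheduler may be taken to settle into a single $0$-EC with a well-defined potential, so that the floor is captured by $\min_{t}\phi(t)$ and the potentials of distinct $0$-ECs never need to be reconciled along one run. This is also exactly what turns the optimization from pseudo-polynomial into polynomial: because weights are path-independent inside a $0$-EC, the energy constraint $\accwgt\geqslant K$ becomes the \emph{static} test $\phi(t)\geqslant K$ on states rather than a condition requiring the state space to be augmented by the accumulated weight. Establishing this path-independence cleanly, and confirming that the maximal $0$-ECs through $s$ together with their potentials are extractable in polynomial time, are the delicate points I would spend the most care on.
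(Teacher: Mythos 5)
Your proposal is correct and follows essentially the same route as the paper: the set $\ZeroEC$ is obtained by iterating the $0$-BSCC detection of Theorem~\ref{thm:checking-zero-EC} with the spider construction, using (S4) to transfer membership and (S1) to bound the number of iterations, and the recurrence values are obtained from the path-independent potentials $w(s,t)$ inside the maximal $0$-EC by sweeping over the at most $|S|$ candidate thresholds and testing whether $s$ survives in a maximal end component of the restricted sub-MDP. Your max--min identity $\rec(s)=\max_{\cE\ni s}\min_{t\in\cE}w(s,t)$ is exactly the characterization the paper establishes (its ``smallest index $i$ with $s\in\MEC_{s,i}$''), and the global consistency of the potentials that you flag as the delicate point is secured by the union lemma for $0$-ECs, just as in the paper.
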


\subsection{Universal Negative Weight-Divergence and Boundedness}

\label{sec:uni-neg-wgt-div}

We now show how to determine end components that are bounded from below
and those that are universally negatively weight-divergent.
Part (a) of the following theorem 
\onlyLong{(see Appendix~\ref{sec:boundedness})}
is the MDP-analogue of
part (d) of Lemma~\ref{gambling-MC}.

\begin{restatable}{theorem}{thmnegwdZec}%
\label{universal-neg-wgt-div}
  Let $\cM$ be a strongly connected MDP with
 $\Exp{\max}{\cM}(\MP)=0$.
  Then,
  \begin{inparaenum}
  \item [(a)]
    $\cM$ contains a 0-EC iff
    $\cM$ has a scheduler where the measure of infinite paths that are
    bounded from below is positive
    iff
    $\cM$ has a scheduler that is bounded from below;
  \item [(b)]
  $\cM$ has no 0-EC iff 
    each scheduler for $\cM$ is negatively weight-divergent.
  \end{inparaenum}
\end{restatable}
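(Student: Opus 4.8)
The plan is to reduce both parts to a single statement about the limiting behaviour of an \emph{arbitrary} scheduler and to establish that statement by a supermartingale argument built from the mean-payoff bias. Write $w_n := \wgt(\prefix{\infpath}{n})$. The pivotal claim is:

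\emph{Claim.} For every scheduler $\sched$ and every state $s$, $\Pr^{\sched}_{\cM,s}$-almost every infinite path $\infpath$ is either negatively weight-divergent, or satisfies that $\lim(\infpath)$ is a $0$-EC (and then $\infpath$ is bounded).

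Granting the Claim, both parts of Theorem~\ref{universal-neg-wgt-div} are routine. For (a): if $\cM$ contains a $0$-EC $\cE$, strong connectivity yields a scheduler that reaches $\cE$ almost surely and then stays inside $\cE$; since all paths from a fixed entry state inside a $0$-EC have accumulated weight in the finite set $\{w(t,t') : t'\in\cE\}$ shifted by a constant, this scheduler is almost surely bounded from below, giving the second and third conditions. The converse, from a positive measure of bounded-from-below paths back to a $0$-EC, is exactly the Claim: a bounded-from-below path is not negatively weight-divergent, so the second alternative must occur with positive probability, whence a $0$-EC exists. For (b): by the Claim, if $\cM$ has no $0$-EC then the second alternative never occurs, so under every scheduler almost every path is negatively weight-divergent; conversely, a $0$-EC produces as above a bounded-from-below (hence not negatively weight-divergent) scheduler. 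Thus (b) is the MDP-analogue of Lemma~\ref{gambling-MC}(d), as announced.

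The heart of the proof is the Claim. Since $\cM$ is strongly connected with $\Exp{\max}{\cM}(\MP)=0$, the mean-payoff optimality equations~\cite{Puterman,Kallenberg} provide a bias $h\colon S\to\Real$ with $\wgt(s,\act)+\sum_t P(s,\act,t)\,h(t)\leqslant h(s)$ for every enabled action $\act$. Setting $M_n := w_n + h(s_n)$, this inequality shows that $(M_n)$ is a supermartingale under \emph{any} scheduler, and its increments are bounded because weights and $h$ are bounded. I would then use the dichotomy for bounded-increment supermartingales: almost surely $M_n$ either converges to a finite limit or has $\liminf_n M_n = -\infty$. Its proof is a short localization: stopping $M_n$ at the first time it drops below $-K$ gives a supermartingale bounded below, hence a.s.\ convergent, and letting $K\to\infty$ yields convergence on $\{\inf_n M_n>-\infty\}=\{\liminf_n M_n>-\infty\}$. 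In the second case $\liminf_n w_n=-\infty$ (as $h$ is bounded), i.e.\ negative weight-divergence; note this dichotomy simultaneously rules out $\liminf_n w_n=+\infty$ on a positive-measure set, so the two alternatives of the Claim are genuinely exhaustive, recovering Lemma~\ref{main:mp-pos-wgt-div} in passing.

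The step I expect to be the main obstacle is the convergence case $M_n\to L$, where I must upgrade ``bounded weight'' to a genuine $0$-EC of $\cM$ rather than merely a bounded set of configurations. Here I would exploit that $w_n$ is \emph{integer-valued}: recalling that $\lim(\infpath)$ is almost surely an end component~\cite{Alfaro98Thesis}, fix a state $s\in\lim(\infpath)$; along the (infinitely many) visits to $s$ we have $M_n\to L$ and $M_n=w_n+h(s)$, so $w_n\to L-h(s)$, and integrality forces $w_n$ to be eventually equal to the constant $c(s):=L-h(s)\in\Integer$ at every visit to $s$. Thus each state of $\lim(\infpath)$ eventually carries a fixed accumulated weight, and since almost surely every transition of $\lim(\infpath)$ is taken infinitely often, any edge $s\overto{\act}t$ of $\lim(\infpath)$ satisfies $\wgt(s,\act)=c(t)-c(s)$; telescoping around any cycle then gives weight $0$, so $\lim(\infpath)$ is a $0$-EC and $\infpath$ is bounded. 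This completes the Claim and hence the theorem. The delicate point to get right in the full proof is precisely this passage from the real limit $L$ to state-indexed integer configurations $c(\cdot)$, as a naive projection of a bounded configuration component need not be a $0$-EC.
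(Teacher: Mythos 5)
Your proposal is correct, but it proves the theorem by a genuinely different route than the paper. The paper (Lemmas~\ref{from-bounded-to-L-bounded}, \ref{from-prob-bounded-to-surely-bounded}, \ref{lemma:0-EC} and \ref{appendix:universal-neg-wgt-div}) starts from a scheduler with a positive-measure set of $(L,+\infty)$-bounded paths, isolates an end component $\cE$ carrying that measure, extracts for each state $s\in\cE$ the minimal weight level $k_s$ at which $s$ recurs with positive probability together with an action $\alpha_s$ witnessing it, and hand-builds from these an MD-scheduler whose paths are bounded from below; its BSCC then has no negative cycle and mean payoff $\leqslant 0$, hence is a 0-BSCC by Lemma~\ref{gambling-MC}. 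You instead import the superharmonic bias $h$ from the average-reward optimality equation (valid for communicating MDPs with gain $0$), turn $\wgt+h$ into a bounded-increment supermartingale under \emph{every} scheduler, and read off the dichotomy ``negatively weight-divergent or the limit is a 0-EC'' from supermartingale convergence plus integrality of the accumulated weight; both halves of the theorem then follow from this single claim. Each approach has something to offer: yours is more uniform and actually yields a strictly stronger almost-sure pathwise classification under arbitrary schedulers (which subsumes both (a) and (b) at once), at the price of importing the existence of the bias and the martingale convergence theorem; the paper's construction is combinatorially self-contained and directly delivers the MD-scheduler with a 0-BSCC (statement (b) of Lemma~\ref{lemma:0-EC}), which is reused elsewhere, whereas in your setup that MD-scheduler has to be recovered afterwards from the 0-EC whose existence you establish. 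The one step you flagged as delicate --- passing from the real limit $L$ to integer configurations $c(s)=L-h(s)$ and then to zero cycle weights via conditional Borel--Cantelli on the edges of $\lim(\infpath)$ --- is handled correctly and is the right place to be careful.
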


Given a  strongly connected 
MDP $\cM$, universal (positive) weight-divergence of $\cM$
can be checked in polynomial time.
In fact, if $\Exp{\min}{\cE}(\MP)>0$, then 
$\cM$ is universally weight-divergent, 
and if $\Exp{\min}{\cE}(\MP)<0$, it is not.
If $\ExpRew{\max}{\cM}(\MP)=0$, we use Theorem \ref{universal-neg-wgt-div}
(by multiplying the weights by $-1$) and
check the nonexistence of 0-ECs 
by Theorem \ref{thm:checking-zero-EC}.
We get:

\begin{corollary}
  \label{checking-universal-wgt-div}
  Universal (positive) weight-divergence of an MDP
  can be checked in polynomial time.
\end{corollary}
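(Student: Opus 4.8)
The plan is to reduce the general case to the strongly connected one, which is already settled by the discussion preceding the corollary, via the maximal end-component (MEC) decomposition of $\cM$. First I would record the characterization that $\cM$ is universally (positively) weight-divergent if and only if (i) $\cM$ has no trap states and (ii) every MEC of $\cM$, viewed as a strongly connected sub-MDP, is universally weight-divergent. The forward implication is routine: a trap state gives a finite maximal path, so starting there the measure of weight-divergent infinite paths is $0$, and if some MEC $\cE$ admits a scheduler that is not weight-divergent, that scheduler can be realized inside $\cE$ without ever leaving it (an end component can be entered and never left), witnessing a non-weight-divergent scheduler of $\cM$ from a state of $\cE$.

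For the converse I would invoke the standard fact (recalled in Section~\ref{sec:prelim}) that for every scheduler $\sched$ and every state $s$, almost every infinite $\sched$-path $\infpath$ satisfies $\lim(\infpath)=\cE$ for some end component $\cE$, and that $\cE$ is contained in some MEC $\cM_j$. A key intermediate step is to observe that universal weight-divergence is inherited by sub-end-components: any scheduler of $\cE$ is also a scheduler of $\cM_j$ (one that never leaves $\cE$), so if $\cM_j$ is universally weight-divergent then so is $\cE$. Since the weight accumulated before entering $\cE$ is a finite constant, $\limsup_n \wgt(\prefix{\infpath}{n})=+\infty$ holds exactly when the tail inside $\cE$ is weight-divergent, which is the case almost surely because $\cE$ is universally weight-divergent; the absence of traps guarantees that almost every maximal path is infinite. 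Hence every scheduler of $\cM$ is weight-divergent from every state.

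It then remains to check (i) and (ii) in polynomial time. Condition (i) is immediate. For (ii), the MECs are computable in polynomial time~\cite{Alfaro98Thesis,ChatHen11}, and each MEC is handled exactly as in the paragraph preceding the corollary: compute $\Exp{\min}{\cM_j}(\MP)$ by linear programming; if it is positive the MEC is universally pumping, hence universally weight-divergent (Lemma~\ref{lemma:pumping-ecs}); if it is negative the mean-payoff-minimizing MD-scheduler is negatively weight-divergent, so the MEC is not universally weight-divergent; and if it equals $0$ I negate all weights and invoke Theorem~\ref{universal-neg-wgt-div}(b) together with the $0$-EC test of Theorem~\ref{thm:checking-zero-EC}(a), both of which apply because the negated MEC has maximal expected mean payoff $0$ and a cycle has weight $0$ before negation iff it does after. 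Each test is polynomial and there are polynomially many MECs, so the overall procedure runs in polynomial time.

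The main obstacle I anticipate is the converse direction of the characterization, specifically the inheritance of universal weight-divergence to sub-end-components combined with the argument that, conditioned on $\lim(\infpath)=\cE$, the tail is distributed like a scheduler of $\cE$ and the bounded prefix does not affect the $\limsup$. Once this structural step is in place, the sign analysis of $\Exp{\min}{}(\MP)$ and the reduction of the boundary case to the $0$-EC test follow directly from the results already established for strongly connected MDPs.
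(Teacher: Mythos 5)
Your proposal is correct and follows essentially the same route as the paper: the strongly connected case is settled by the sign of $\Exp{\min}{\cM}(\MP)$, with the boundary case $\Exp{\min}{\cM}(\MP)=0$ handled (after negating all weights) by testing for the absence of 0-ECs via Theorem~\ref{universal-neg-wgt-div}(b) and Theorem~\ref{thm:checking-zero-EC}(a). The paper leaves the reduction from general MDPs to their MECs implicit, and your explicit treatment (no traps, plus the conditioning argument turning a non-weight-divergent limit behaviour inside a MEC into a non-weight-divergent scheduler of that MEC) is exactly the intended completion of that step.
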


\begin{remark}%
\label{remark:computing-bounded-from-below}
{\rm
The set of states $s$ of an arbitrary
 MDP $\cM$ that belongs to an end component
bounded from below can be computed in polynomial time as follows.
We first determine the MECs of $\cM$
and their maximal expected mean payoff.
MECs $\cE$ with $\Exp{\max}{\cE}(\MP)>0$ are pumping and
therefore bounded from below.
MECs $\cE$ with either $\Exp{\max}{\cE}(\MP)<0$ or
$\Exp{\max}{\cE}(\MP)=0$ and $\cE$  has no 0-EC are
universally negatively weight-divergent 
(Theorem \ref{universal-neg-wgt-div}).
Hence, none of their states
belongs to an end component that is bounded from below.
Otherwise, \ie if $\Exp{\max}{\cE}(\MP)=0$ and $\cE$ has 0-ECs,
we compute the maximal 0-ECs 
using the techniques presented in Section \ref{sec:0-ECs}
(see Lemma \ref{mincredit-ZeroEC}).
 }
\end{remark}

\section{Stochastic Shortest Paths}
\setcounter{theorem}{0}
\label{sec:SSP}
\label{sec:min-exp-acc-weight}
We present an algorithm to solve the stochastic shortest path problem
that relies on the classification of end components presented above.
The classical shortest path problem for MDPs
is to compute the \emph{minimal expected accumulated weight} until
reaching a goal state $\goal$. Here, the infimum is taken over all
\emph{proper} schedulers. These are schedulers $\sched$ that reach
$\goal$ almost surely, \ie
$\Pr^{\sched}_{\cM,s}(\Diamond \goal)=1$
for all states $s\in S$.

We assume, w.l.o.g., that $\goal$ is a trap, and that all states $s$
are reachable from an initial state $\sinit$ and can reach $\goal$. We write
$\accdiaplus \goal$ for the random variable that represents the
accumulated weight until reaching $\goal$: it assigns to each path
reaching $\goal$ its accumulated weight, and is undefined otherwise.
Formally, $(\accdiaplus \goal)(\infpath) = \wgt(\infpath)$
if $\infpath \models \Diamond \goal$ 
and undefined if
$\infpath \not\models \Diamond \goal$. 
The \emph{stochastic
  shortest path problem} aims at computing the minimal expected
accumulated weight until reaching $\goal$:
\[
  \Exp{\inf}{\M,\sinit}(\accdiaplus \goal)\enskip = \enskip\inf\nolimits_{\sched
    \textrm{ proper}} \Expected^\sched_{\M,\sinit}(\accdiaplus \goal)
  \enspace.
\]
Although for each proper scheduler this quantity is finite, the infimum may be $-\infty$. We
describe a polynomial-time algorithm to check
whether $\Exp{\inf}{\M,\sinit}(\accdiaplus
\goal)$ is finite and to compute it, both
using our classification of end components.

It is well known (see, \eg \cite{Kallenberg})
that if $\M$ is \emph{contracting}, \ie if all schedulers are
proper, then
$\Exp{\inf}{\M,\sinit}(\accdiaplus \goal)  > -\infty$ and one
can compute $\Exp{\inf}{\M,\sinit}(\accdiaplus \goal)$ using
linear-programming techniques. To relax the assumption of
$\M$ being contracting, Bertsekas and Tsitsiklis \cite{BerTsi91} 
identified conditions
that guarantee the finiteness of the values
$\Exp{\inf}{\M,\sinit}(\accdiaplus \goal)$, the
existence of a minimizing MD-scheduler, and the computability
of the vector 
$(\Expected^{\inf}_{\M,s}(\accdiaplus \goal))_{s \in S}$ 
as the unique solution of a linear program 
(or using value and policy iteration).
The assumptions of \cite{BerTsi91},
written $\mathsf{(BT)}$ in the sequel, are: (i) existence of a
proper scheduler, and (ii) under each non-proper
scheduler the expected accumulated weight is $+\infty$ from at least one state. 
While these assumptions are sound, they are incomplete in the sense
that there are MDPs where
$\Exp{\inf}{\M,s}(\accdiaplus \goal)$ is finite for all states
$s$, but $\mathsf{(BT)}$ does not hold.

Orthogonally, De Alfaro~\cite{deAlf99} showed that in MDPs where the
weights are either all nonnegative or all nonpositive, one can
decide in polynomial time whether
$\Exp{\inf}{\M,\sinit}(\accdiaplus \goal)$ is finite. Moreover,
when this is the case, $\M$ can be transformed into another MDP
that has proper schedulers, 
satisfies $\mathsf{(BT)}$ and preserves the minimal expected
accumulated weight.
Using the classification of end components,
we generalize De Alfaro's
result and provide a characterization of finiteness of the minimal
expected accumulated weight.

\begin{restatable}{lemma}{ssp}
\label{th:finiteness-min-exp-accwgt}
  Let $\cM$ be an MDP with a distinguished initial state $\sinit$ and
  a trap state $\goal$ such that
  all states are reachable from $\sinit$ and can reach
  $\goal$. Then, $\Exp{\inf}{\cM,\sinit}(\accdiaplus \goal)$ is
  finite iff $\cM$ has no negatively weight-divergent end
  component.
  If so, then $\cM$ satisfies $\mathsf{(BT)}$ iff $\cM$ has no 0-EC.
\end{restatable}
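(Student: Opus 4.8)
The plan is to prove the two stated equivalences separately; within the first (finiteness $\iff$ no negatively weight-divergent EC) the two directions need quite different arguments, and I expect the lower-bound direction to be the crux.

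\emph{Easy direction of the first equivalence.} If $\cM$ has a negatively weight-divergent end component $\cE$, I would exhibit proper schedulers with arbitrarily negative expected weight, so that $\Exp{\inf}{\cM,\sinit}(\accdiaplus\goal)=-\infty$. Fix a proper MD-scheduler $\usched$ (one exists since every state reaches $\goal$) and set $W=\max_s\Exp{\usched}{\cM,s}(\accdiaplus\goal)<\infty$. Pick a state $s^\ast\in\cE$ together with a finite path from $\sinit$ to $s^\ast$, taken with some probability $q>0$. For a threshold $N$ let $\sched_N$ follow this path; upon reaching $s^\ast$ it enters a ``dive phase'' governed by a witnessing scheduler $\tsched$ on $\cE$ with $\Pr^{\tsched}\{\liminf_n \wgt(\prefix{\infpath}{n})=-\infty\}=1$, continuing inside $\cE$ until the accumulated weight drops below $-N$ (an almost surely finite time), and thereafter switching to $\usched$; if the prefix is ever left it switches to $\usched$ immediately. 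Then $\sched_N$ is proper, and conditioning on the probability-$q$ dive event gives $\Exp{\sched_N}{\cM,\sinit}(\accdiaplus\goal)\le q(-N+W)+(1-q)C$ for a constant $C$ independent of $N$, which tends to $-\infty$.

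\emph{Hard direction of the first equivalence.} If $\cM$ has no negatively weight-divergent EC I would establish a uniform lower bound $L$ with $\Exp{\sched}{\cM,s}(\accdiaplus\goal)\ge L$ for all proper $\sched$ and all $s$. The starting point is the dichotomy obtained from Lemmas~\ref{lemma:pumping-ecs} and~\ref{main:mp-pos-wgt-div} together with the gambling analysis: every EC $\cF$ satisfies $\Exp{\min}{\cF}(\MP)\ge 0$, and whenever $\Exp{\min}{\cF}(\MP)=0$ the $\MP$-minimizing MD-scheduler cannot have a gambling BSCC (this would make $\cF$ negatively weight-divergent), so its BSCC is a $0$-BSCC and $\cF$ contains a $0$-EC. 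I would then bound the weight accumulated inside each maximal end component before it is left: nonnegativity of the minimal mean payoff prevents unbounded negative drift, and in the case $\Exp{\min}{\cF}(\MP)=0$ the worst dip is controlled by the recurrence values $\rec(\cdot)$ of Lemma~\ref{mincredit-ZeroEC}. Since the MEC-quotient is acyclic and the transient weight between MECs is bounded, summing these bounds along the quotient yields $L$. I expect this to be the main obstacle: making the per-MEC bound precise in the mixed case $\Exp{\min}{\cF}(\MP)=0<\Exp{\max}{\cF}(\MP)$, which I would handle by iteratively flattening the internal $0$-ECs with the spider construction (Lemma~\ref{spider-construction}), as in the weight-divergence algorithm of Theorem~\ref{weight-divergence-algorithm}, so that the residual end components become universally pumping; this generalizes De~Alfaro's transformation~\cite{deAlf99} from sign-definite to arbitrary integer weights.

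\emph{Second equivalence.} Assume the value is finite, i.e.\ there is no negatively weight-divergent EC. If $\cM$ has a $0$-EC $\cE$, the scheduler that stays forever in a $0$-BSCC of $\cE$ from the states of that BSCC (and heads to $\goal$ elsewhere) is non-proper, yet along such trapped paths the weight is bounded, so its expected accumulated weight is finite from every state; hence $\mathsf{(BT)}$(ii) fails, proving $\mathsf{(BT)}\Rightarrow$ no $0$-EC. Conversely, suppose $\cM$ has no $0$-EC. Combined with the absence of negatively weight-divergent ECs, the dichotomy above forces $\Exp{\min}{\cF}(\MP)>0$ for every EC $\cF$ (the case $\Exp{\min}{\cF}(\MP)=0$ would produce a $0$-EC), so by Lemma~\ref{lemma:pumping-ecs} every EC is universally pumping. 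Now any non-proper scheduler $\sched$ has, from some state $s_0$, positive probability of never reaching $\goal$; almost every such trapped path $\infpath$ has $\lim(\infpath)$ equal to some EC $\cF$, along which the weight diverges to $+\infty$ since $\cF$ is universally pumping. Hence the expected accumulated weight from $s_0$ is $+\infty$ (no $-\infty$ contribution can arise, as the remaining paths either reach $\goal$ with finite weight or are likewise trapped with weight $\to+\infty$), establishing $\mathsf{(BT)}$(ii); and $\mathsf{(BT)}$(i) holds because $\goal$ is reachable from every state. This yields no $0$-EC $\Rightarrow \mathsf{(BT)}$.
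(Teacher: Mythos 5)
Your easy direction and your second equivalence coincide with the paper's own proof (Lemmas~\ref{fin-exp-acc-wgt-implies-no-wgt-div-EC} and~\ref{BT-hold-if-no-0-EC}): the dive-phase schedulers $\sched_N$ and the dichotomy ``no negatively weight-divergent EC and no 0-EC forces $\Exp{\min}{\cF}(\MP)>0$, hence every EC is universally pumping'' are exactly the arguments used there.

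For the hard direction you eventually name the paper's key move (iterated spider-flattening of 0-ECs), but your primary per-MEC argument has a genuine gap. The recurrence values $\rec(\cdot)$ of Lemma~\ref{mincredit-ZeroEC} control the pathwise dip only for schedulers confined to a 0-EC; a proper scheduler inside a MEC $\cF$ with $\Exp{\min}{\cF}(\MP)\geqslant 0$ may use all of $\cF$'s actions, and such a MEC can contain negative cycles even though it has no negatively weight-divergent sub-EC (take $(a,\alpha)$ of weight $-1$ returning to $a$ with probability $\tfrac12$ and otherwise moving to $b$, with $(b,\beta)$ of weight $+10$ back to $a$: the only EC is pumping, yet $a\,\alpha\,a$ is a negative cycle). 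So the accumulated weight before exit is \emph{not} bounded below pathwise, and ``nonnegativity of the minimal mean payoff prevents unbounded negative drift'' does not by itself bound the expected weight at the random exit time --- that needs a gain/bias (optional-stopping) argument, or the route the paper actually takes: flatten all 0-ECs to obtain $\cN$ whose end components all satisfy $\Exp{\min}{}(\MP)>0$, observe that $\cN$ then satisfies $\mathsf{(BT)}$ and invoke the Bertsekas--Tsitsiklis theorem for finiteness there, and finally transfer the value back to $\cM$ via the equivalence of $\cM$ and $\cN$ for the $\cE$-invariant properties $\Diamond(\goal\wedge(\accwgt = K))$ (property (S3) of Lemma~\ref{spider-construction} / Lemma~\ref{lemma:equivalence-M-and-Mi}). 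Your sketch omits this last transfer step, which is what lets you conclude finiteness of the infimum in $\cM$ itself rather than only in the flattened MDP.
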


The above lemma allows us to derive our algorithm by first determining if
$\Exp{\inf}{\M,\sinit}(\accdiaplus \goal)$ is finite, and then using
the iterative spider construction to transform $\cM$ into an equivalent
new MDP satisfying~$\mathsf{BT}$.

More precisely, one can check in polynomial time whether
$\Exp{\inf}{\M,\sinit}(\accdiaplus \goal) > -\infty$ by applying
Theorem~\ref{weight-divergence-algorithm} to the maximal end
components of $\cM$ (in fact, checking negative
weight-divergence reduces to checking positive weight-divergence after
multiplication of all weights by $-1$).
If so, by the iterative spider construction to
flatten 0-ECs\onlyShort{~(see~Section~\ref{sec:spider-construction})}\onlyLong{~(see Appendix~\ref{sec:iterative-application-spider})},
we obtain in polynomial time an MDP $\cN$ such that $\cN$ satisfies
condition $\mathsf{(BT)}$ and
$\Exp{\inf}{\cN,s}(\accdiaplus \goal) = \Exp{\inf}{\cM,s}(\accdiaplus
\goal)$ for each state $s$.  
To establish this result, we rely on the equivalence of $\cM$ and
$\cN$ w.r.t. properties that are $\cE$-invariant
for each 0-EC $\cE$ ((S3) in Lemma \ref{spider-construction}).
This yields:

\begin{theorem}
  \label{thm:shortest-paths}
  Given an arbitrary MDP~$\cM$, one can compute
  $\Exp{\inf}{\cM,\sinit}(\accdiaplus \goal)$ in
  polynomial time.
\end{theorem}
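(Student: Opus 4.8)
The plan is to combine Lemma~\ref{th:finiteness-min-exp-accwgt} with the iterative spider construction (Section~\ref{sec:spider-construction}) and the classical linear-programming solution under assumption $\mathsf{(BT)}$. First I would run the weight-divergence algorithm of Theorem~\ref{weight-divergence-algorithm} on each maximal end component of $\cM$, with all weights multiplied by $-1$, to decide in polynomial time whether any MEC is negatively weight-divergent. By Lemma~\ref{th:finiteness-min-exp-accwgt}, the value $\Exp{\inf}{\cM,\sinit}(\accdiaplus \goal)$ equals $-\infty$ precisely when such an end component exists; in that case the algorithm reports $-\infty$ and terminates. This step is polynomial because there are at most linearly many MECs, each is computable in polynomial time \cite{Alfaro98Thesis,ChatHen11}, and the weight-divergence check is itself polynomial.

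If the value is finite, the remaining task is to reduce to an MDP on which the Bertsekas--Tsitsiklis conditions hold, so that the linear program of \cite{BerTsi91,Kallenberg} applies. By the second part of Lemma~\ref{th:finiteness-min-exp-accwgt}, $\mathsf{(BT)}$ fails exactly when $\cM$ still contains a $0$-EC, so the goal is to eliminate all $0$-ECs. I would apply the spider construction iteratively: identify a $0$-BSCC $\cE$ inside some MEC with maximal expected mean payoff $0$ (using Lemma~\ref{mincredit-ZeroEC} to locate the states lying in $0$-ECs), and replace $\cM$ by $\spider{\cM}{\cE}$. Property~(S1) of Lemma~\ref{spider-construction} guarantees that the number of state-action pairs strictly decreases at each step, so the process terminates after linearly many iterations, producing an MDP $\cN$ with no $0$-EC; the size bound on each intermediate MDP keeps the whole procedure polynomial.

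The key correctness point is that flattening preserves the quantity of interest. The property $\Diamond(\goal \wedge (\accwgt = K))$ is $\cE$-invariant for every $0$-EC $\cE$, since adding or removing weight-$0$ path segments inside $\cE$ changes neither the event of reaching $\goal$ nor the accumulated weight upon reaching it (and by condition (I1) a scheduler trapped forever inside $\cE$ is improper, hence irrelevant to the infimum over proper schedulers). From the equivalence (S3) of Lemma~\ref{spider-construction}, applied to the whole family of such $\cE$-invariant events ranging over all integer thresholds $K$, I would conclude $\Exp{\inf}{\cN,s}(\accdiaplus \goal) = \Exp{\inf}{\cM,s}(\accdiaplus \goal)$ for every state $s$. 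Since $\cN$ satisfies $\mathsf{(BT)}$, the vector $(\Exp{\inf}{\cN,s}(\accdiaplus \goal))_{s}$ is the unique solution of the associated linear program and is computable in polynomial time, yielding the desired value for $\cM$.

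I expect the main obstacle to be the correctness of the value-preservation step rather than the complexity bookkeeping. The subtlety is that (S3.2) only bounds $\Pr^{\tsched}_{\spider{\cM}{\cE},s}(\varphi)$ from above by $\Pr^{\sched}_{\cM,s}(\varphi) + p^{\sched}_s$, where $p^{\sched}_s$ is the probability of remaining forever in $\cE$; one must argue that restricting to \emph{proper} schedulers forces the relevant $p^{\sched}_s$ contributions to vanish (a proper scheduler reaches $\goal$ almost surely, so it cannot get absorbed in $\cE$), and that the expectation of the weight random variable transfers correctly through the countable family of threshold events. Assembling these threshold-indexed equalities into a single equality of expectations, and checking that the infimum over proper schedulers is matched on both sides, is where the care is needed; the polynomial time bound then follows routinely from Lemma~\ref{spider-construction}(S1) and the polynomial solvability of the linear program.
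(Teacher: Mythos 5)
Your proposal is correct and follows essentially the same route as the paper: check finiteness via negative weight-divergence of the MECs (Lemma~\ref{th:finiteness-min-exp-accwgt} plus the weight-divergence algorithm with negated weights), then iteratively flatten $0$-ECs with the spider construction to obtain an MDP satisfying $\mathsf{(BT)}$, transfer the value using the $\cE$-invariance of the events $\Diamond(\goal \wedge (\accwgt = K))$ and the fact that proper schedulers make the $p^{\sched}_s$ correction in (S3.2) vanish, and finally solve the Bertsekas--Tsitsiklis linear program. The subtlety you flag at the end is exactly the one the paper addresses in its appendix, in the same way (summing $K \cdot \Pr(\psi_K)$ over integer thresholds and noting that proper schedulers assign probability $0$ to $\Limit{\cE}$ for every end component $\cE$).
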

Analogous results are obtained for maximal expected accumulated
weights $\Exp{\sup}{\cM,s}(\accdiaplus \goal)$ by multiplying all
weights in $\cM$ with $-1$. \onlyLong{Details of this section are given in 
Appendix \ref{sec:appendix-min-exp-wgt}.}

\section{Qualitative Weight-Bounded Properties}
\setcounter{theorem}{0}
\label{sec:DWR}

\subsection{Disjunctive Weight-Bounded Reachability}

\label{sec:eventually}
We consider properties that combine reachability objectives with
quantitative constraints on the accumulated weight when reaching the targets.

\begin{definition}%
A \emph{disjunctive weight-bounded reachability} property, 
\dwr-property for short, 
is defined by a set $T \subseteq S$ of target states, 
and for each $t \in T$ a weight 
threshold $K_t \in \Integer \cup \{-\infty\}$ as 
\( \varphi 
= 
\bigvee_{t\in T} \Diamond \big(t \wedge (\accwgt \geqslant K_t)\big).
\)
\end{definition}

Our objective is to study the following decision problems: 
Given an MDP $\cM$, a state $s$ in $\cM$ and
a \dwr-property $\varphi$
\begin{center}
\begin{tabular}{l@{\hspace*{0.2cm}}l}
  \Easdwr: &
 $\exists \sched$ s.t. $\Pr^{\sched}_{\cM,s}(\varphi)=1$?
  \\[1ex]
  \Eposdwr: &
 $\exists \sched$ s.t. $\Pr^{\sched}_{\cM,s}(\varphi)>0$?
\end{tabular}
\end{center}
as well as their variants \Uasdwr{} and \Uposdwr{} with universal
quantification over schedulers.
Let $T^*=\{t\in T : K_t =-\infty\}$ denote the set of states for which no
accumulated weight constraint is specified.
For corresponding optimization problems, we assume
$T{\setminus}T^*=\{\goal\}$ to be a singleton, write $\varphi_K$ for
$\varphi$ with $K=K_{\goal}$, and ask to compute
\begin{center}
 \begin{tabular}{l@{\hspace*{0.2cm}}c@{\hspace*{0.2cm}}l}
   $\valueEas{\M,s}$ & = & 
   $\sup \, 
    \{ \, K \in \Integer \, \mid \,
             \exists \sched \text{ s.t. } 
             \Pr^{\sched}_{\cM,s}(\varphi_K)=1  \, \}$,
   \\[0.5ex]
   $\valueEpos{\M,s}$ & = &
   $\sup \, \{  \, K \in \Integer \, \mid \,
             \exists \sched \text{ s.t. } 
             \Pr^{\sched}_{\cM,s}(\varphi_K)>0  \, \}$,
 \end{tabular}
\end{center}
and the analogous values $\valueUas{\cM,s}$ and $\valueUpos{\cM,s}$
where the supremum belongs to $\Integer \cup \{\pm \infty\}$.

Deciding \Eposdwr{} and computing $\valueEpos{\M,s}$ can be done using
standard shortest-path algorithms in weighted graphs.
Thus, \Eposdwr{} belongs to \PTIME\ and the value $\valueEpos{\cM,s}$ 
is computable in polynomial time.
\onlyLong{See Appendix \ref{Eposdwr}.}

In contrast, we do not know if \Uposdwr{} is in \PTIME, but show that 
it is as hard as mean-payoff games, and is polynomially reducible to
mean-payoff B\"uchi games\onlyLong{~(Appendix~\ref{sec:Uposdwr-incoNP})}.

\begin{restatable}{theorem}{dwrupos}
\label{thm:DWR-U0}
The decision problem \Uposdwr{} is in $\NP \cap \coNP$, and at least as hard as
(non-stochastic) mean-payoff games.  The value $\valueUpos{\cM,s}$ is
computable in pseudo-polynomial time.
\end{restatable}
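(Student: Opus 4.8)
The plan is to eliminate both the universal quantification over schedulers and the qualitative ``positive probability'' semantics in favour of a two-player game, and then to reduce the resulting weight-bounded reachability game to a mean-payoff B\"uchi game, whose complexity is known. First I would exploit that for a qualitative positive-probability property only the graph (support) structure of $\cM$ matters, not the precise transition probabilities. From $\cM$ I build a finite two-player game $\cG$ on the same weighted graph in which the scheduler's action choices are owned by one player --- call her Eve, the \emph{avoider} --- while the probabilistic branchings are owned by her opponent Adam, the \emph{reacher}, who may move to any successor of positive probability. The equivalence I would establish is: for every scheduler $\sched$ we have $\Pr^{\sched}_{\cM,s}(\varphi)>0$ iff Adam has a strategy from $s$ to enforce $\varphi$, i.e.\ to reach some $t\in T$ along a finite play of accumulated weight $\geqslant K_t$, against every strategy of Eve. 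One direction uses that $\Pr^{\sched}_{\cM,s}(\varphi)=0$ means \emph{every} positive-probability path avoids $\varphi$, so $\sched$ yields a winning avoider strategy; the other uses that a winning avoider strategy is itself a scheduler witnessing probability $0$. Determinacy of these (Borel) weight-bounded reachability games closes the argument. Hence \Uposdwr\ becomes: does the reacher win the game $\cG$?

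Next I would solve this game and, for the single-goal variant, compute $\valueUpos{\cM,s}$ as its value, by reducing to a mean-payoff B\"uchi game. Since each $K_t$ is a fixed finite integer, Adam wins $\varphi$ from a state iff \emph{either} he can reach some target directly with sufficient weight (an energy/shortest-path style subproblem) \emph{or} he can reach a sub-arena in which he forces the accumulated weight to diverge to $+\infty$ while returning to a target infinitely often, so that the weight eventually exceeds the threshold at a visit of the target. The latter is precisely the objective of a mean-payoff B\"uchi game with B\"uchi set $T$ (positive mean payoff to pump, B\"uchi on $T$ to cash in); here Adam's pumping regions are the two-player analogue of the pumping end components of Section~\ref{sec:classification}. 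Assembling the two cases gives a polynomial-time reduction of \Uposdwr\ to mean-payoff B\"uchi games. By the known membership of mean-payoff (B\"uchi/parity) games in $\NP\cap\coNP$ and their pseudo-polynomial solvability, I would conclude that \Uposdwr\ is in $\NP\cap\coNP$ and that $\valueUpos{\cM,s}$, being the value of the reachability game, is computable in pseudo-polynomial time.

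For the hardness I would give a polynomial-time reduction from (non-stochastic) two-player mean-payoff games to \Uposdwr. Given such a game, I turn the maximizer's vertices into probabilistic branchings (all successors with positive probability, so that the reacher controls them in $\cG$) and the minimizer's vertices into nondeterministic action choices (controlled by the adversarial scheduler, i.e.\ Eve), preserving all weights. Adding a goal that can be reached with arbitrarily large weight exactly when the maximizer can keep the mean payoff non-negative, I arrange that \Uposdwr\ holds iff the maximizer wins the mean-payoff game. This witnesses that \Uposdwr\ is at least as hard as mean-payoff games.

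The step I expect to be the main obstacle is the game analysis in the second paragraph: justifying rigorously that weight-bounded reachability with a \emph{finite} threshold coincides with the mean-payoff B\"uchi condition, in particular handling the interaction between the threshold and the pumping region (strict versus non-strict mean payoff, and the possible need for infinite memory as illustrated in Fig.~\ref{fig:gambling-ec}), and checking that the reduction remains polynomial while the computed value stays within a pseudo-polynomial range.
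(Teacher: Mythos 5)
Your overall architecture is the same as the paper's: replace the universally quantified scheduler and the probabilistic branching by two players of a non-stochastic game on the support graph, reduce the resulting weight-bounded reachability game to a mean-payoff B\"uchi game to get membership in $\NP\cap\coNP$ and pseudo-polynomial solvability, and prove hardness by encoding the two players of a mean-payoff game as scheduler choices and probabilistic branchings respectively (the paper reduces to the \emph{complement} of \Uposdwr{} and also inserts a probabilistic escape to $\goal$ at every step so that the MDP has no end components, but the idea is the same as yours). The step you flag as the main obstacle is, however, a genuine gap: your claimed dichotomy --- Adam wins iff he can reach a target ``directly with sufficient weight'' or can reach a sub-arena where he pumps the weight to $+\infty$ while visiting targets infinitely often --- is not justified, and it is unclear how to make it precise (the ``direct reach'' case is itself a game, not a shortest-path problem, and a winning strategy may have to interleave travel and accumulation across several regions rather than settle in one pumping sub-arena). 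The paper sidesteps this case analysis entirely: it adds a single back-edge from $\goal$ to $\sinit$ with weight $-K$ and proves that $\forall\stratone\,\exists\strattwo$ for $\Diamond(\goal\wedge(\accwgt\geqslant K))$ is equivalent to $\forall\stratone\,\exists\strattwo$ for $\Box\Diamond\goal\wedge\MP\geq 0$ in the modified arena. The nontrivial direction is an averaging argument: decomposing a play that reaches $\goal$ with weight $\geqslant K$ into factors $\pi_1,\dots,\pi_m$ between successive visits to $\goal$, the total weight $\sum_{i<m}(K_i-K)+K_m\geqslant K$ forces some $K_i\geqslant K$, so some factor already witnesses the reachability objective. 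Determinacy of mean-payoff B\"uchi games then swaps the quantifiers, exactly as you intend. You should replace your case split by this construction (or supply a proof of your dichotomy).

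A second, smaller issue is the value computation. The game reduction only decides the problem for a \emph{fixed} $K$; it does not directly hand you $\valueUpos{\cM,s}$ as ``the value of the game.'' The paper first computes the set $S_\infty$ of states with value $+\infty$ (characterized via the existence of positive cycles under every MD-scheduler, decided through mean-payoff games in pseudo-polynomial time), and then computes the finite values by a polynomial Bellman--Ford-style fixed point on the sub-MDP avoiding $S_\infty$, exploiting that on that part no MD-scheduler has a positive cycle. A binary search over $K$ in a pseudo-polynomially bounded range would also work, but you need to argue such a bound on the finite values and say so explicitly; as written, this step is asserted rather than proved.
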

We now give a polynomial-time algorithm for \Uasdwr.
In the case where all states of~$T$ are traps, we show that
$\Pr^{\sched}_{\cM,s}(\varphi)=1$ for all schedulers $\sched$
iff (i) $\Pr^{\min}_{\cM,s}(\Diamond T)=1$ and (ii)
$\wgt(\fpath) \geqslant K_t$ for each path $\fpath$ from $s$ to some
state $t \in T{\setminus}T^*$.  (In particular, (ii) implies that
the paths from $s$ to some state in $T{\setminus}T^*$ do not contain
negative cycles.) Thus, this case can be solved with standard MDP and
shortest-path algorithms in graphs. The general case requires an 
analysis of end components. 
  If each end component containing $t\in T{\setminus}T^*$
  is weight-divergent, then the weight-constraint is useless and we may
  set $K_t=+\infty$.
  Otherwise we show that $t$ can be treated as a trap.
  To check whether all end components containing $t$ are weight-divergent
  we consider the MECs $\cE$ containing $t$ and
  distinguish cases where $\Exp{\min}{\cM,\cE}(\MP)>0$
  or $\Exp{\min}{\cM,\cE}(\MP)=0$ and $\cE$ does not have a 0-EC
  containing $t$.
\begin{restatable}{theorem}{dwruas}
 \label{thm:DWRU1}
 The decision problem \Uasdwr{} belongs to \PTIME\ and the
 value $\valueUas{\cM,s}$ is computable in polynomial time.
\end{restatable}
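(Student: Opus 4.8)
The plan is to first settle the special case in which every target state is a trap, and then to reduce the general case to it by rewriting each weight-constrained target. When all $t\in T$ are traps, I would prove that $\Pr^{\sched}_{\cM,s}(\varphi)=1$ holds for \emph{every} scheduler $\sched$ iff (i) $\Pr^{\min}_{\cM,s}(\Diamond T)=1$ and (ii) $\wgt(\fpath)\geqslant K_t$ for every finite path $\fpath$ from $s$ to a state $t\in T\setminus T^*$. For the implication from right to left, note that since the states of $T$ are traps, almost every maximal path either reaches $T$ (and stops there) or avoids $T$ forever; condition (i) excludes the latter up to a null set under every scheduler, while (ii) guarantees that the unique visit to a weight-constrained target already carries enough accumulated weight, so every $T$-reaching path satisfies $\varphi$. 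For the converse I argue contrapositively: if (i) fails, a positive-measure set of paths avoids $T$ and hence falsifies $\varphi$; if (ii) fails, some finite path reaches a target $t$ with $\wgt<K_t$ (and, since the targets are traps, meets no other target before $t$), and a scheduler following it with positive probability yields $\neg\varphi$ on that cylinder. Both conditions are decidable in polynomial time: (i) by the standard minimal-reachability computation, and (ii) by single-source shortest-path and negative-cycle detection on the subgraph of states that can still reach $t$, so that (ii) fails exactly when a negative cycle is reachable on the way to $t$ or the minimal path weight falls below $K_t$.

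For the general case I would classify each weight-constrained target $t\in T\setminus T^*$ using the end-component theory of Section~\ref{sec:classification}. The decisive question is whether the adversary -- the scheduler trying to violate $\varphi$ -- can revisit $t$ infinitely often while holding the accumulated weight below $K_t$ forever. Since almost every path settles into a limit end component, this is possible precisely when there is an end component containing $t$ that is \emph{not} positively weight-divergent. I would test this by computing the MEC $\cE$ containing $t$ together with $\Exp{\min}{\cE}(\MP)$ and the set $\ZeroEC$ of states lying in a $0$-EC (Lemma~\ref{mincredit-ZeroEC} and Theorem~\ref{thm:checking-zero-EC}). If $\Exp{\min}{\cE}(\MP)>0$, then every sub-end-component containing $t$ again has strictly positive minimal mean payoff -- because its schedulers form a subset of those of $\cE$ -- and is therefore universally pumping by Lemma~\ref{lemma:pumping-ecs}; hence the bound at $t$ cannot be evaded once $t$ is recurrent, and the disjunct at $t$ collapses to the reachability question of whether the system is forced into such a divergent component. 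The borderline case $\Exp{\min}{\cE}(\MP)=0$ is resolved via the absence of a $0$-EC containing $t$, using Theorems~\ref{thm:checking-zero-EC} and~\ref{universal-neg-wgt-div}. In all remaining situations $t$ can be frozen into a trap, which brings the property back into the scope of the trap-case analysis above.

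The main obstacle is the soundness of this reduction, that is, proving that freezing $t$ into a trap (respectively declaring its constraint non-binding) preserves the universal almost-sure answer. This requires making the end-component argument precise in both directions: showing that whenever an end component containing $t$ fails to be positively weight-divergent the adversary can realize a positive-measure set of paths revisiting $t$ only with weight below $K_t$ -- so that only the first-visit weight, captured by the trap reduction, is relevant -- and, conversely, that when all such components are positively weight-divergent the recurrent visits to $t$ almost surely exceed any bound. I expect the delicate points to be the treatment of the \emph{transient} visits to $t$ (finitely many low-weight visits before the path escapes to another region, which the trap reduction must neither spuriously accept nor reject) and the interaction of several targets in the disjunction, where keeping one target low must remain compatible with avoiding all the others and with $T^*$.

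Once these equivalences are in place, every ingredient -- MEC decomposition, the mean-payoff linear programs, the $0$-EC tests of Theorems~\ref{thm:checking-zero-EC} and~\ref{universal-neg-wgt-div}, and the shortest-path computations of the trap case -- runs in polynomial time, yielding \PTIME\ membership of \Uasdwr{}. For the optimization variant with $T\setminus T^*=\{\goal\}$, the same analysis computes $\valueUas{\cM,s}$: it is $+\infty$ when every scheduler is forced into a weight-divergent end component containing $\goal$, and otherwise equals the minimal weight of a path from $s$ to $\goal$ in the frozen MDP (with value $-\infty$ when a negative cycle intervenes), again computable in polynomial time by Bellman--Ford.
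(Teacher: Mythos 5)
Your proposal is correct and follows essentially the same route as the paper: the same characterization of the trap case via $\Pr^{\min}_{\cM,s}(\Diamond T)=1$ plus shortest-path/negative-cycle conditions on the underlying weighted graph, and the same reduction of the general case by classifying, for each $t\in T\setminus T^*$, the MECs containing $t$ according to $\Exp{\min}{\cE}(\MP)>0$, or $\Exp{\min}{\cE}(\MP)=0$ with no $0$-EC (both forcing the constraint to be vacuous via universal pumping resp.\ Theorem~\ref{universal-neg-wgt-div}), and otherwise freezing $t$ into a trap. The ``delicate points'' you flag (soundness of the trap-freezing, in particular that an adversary in a non-universally-weight-divergent component can keep all revisits of $t$ below the threshold with positive probability) are exactly the steps the paper's appendix works out, so nothing in your plan would fail.
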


The remaining case~\Easdwr{} is perhaps the most interesting case; it is
also our main and most technical result. First, we observe that
infinite memory can be necessary.
\begin{example}
 \label{example:dwr}
Let $\cM$ be the MDP depicted left in Figure~\ref{fig:fixedpoint}.
\begin{figure}[h]%
	\begin{minipage}{.2\textwidth}%
		\hspace*{-2em}
\pgfdeclarelayer{background}%
\pgfsetlayers{background,main}%
\begin{tikzpicture}[x=13mm,y=11mm,font=\scriptsize]
	\node (s) at (0,0) [state] {$s$};
	\node (stg) at (1,0) [bullet] {};
	\node (t) at (2,.7) [state] {$t$};
	\node (g) at (2,0) [state] {$\mathit{goal}$};
	\node (tst) at (1,.7) [bullet] {};
	\node (u) at (0,-1) [state] {$u$};
	\node (uvu) at (.5,-.5) [bullet] {};
	\node (v) at (1,-1) [state] {$v$};
	\node (vuv) at (.5,-1.5) [bullet] {};
	\node (w) at (2,-1.3) [state] {$w$};
	\node (wgw) at (2,-.6) [bullet] {};

	\draw (s) edge[ptran,loop, min distance=5mm, out=160, in=200] node[above=-.05,pos=.3]{$\beta$/\posw{1}} (s);
    \draw [ntran] (s) -- node[above=-.05,pos=.5]{$\alpha$/\nilw} (stg);
    \draw (stg) edge[ptran, bend left, out=20, in=170] coordinate[pos=.15] (bst) (t);
    \draw (stg) edge[ptran, bend right, in=200] coordinate[pos=.15] (bsg) (g);
    \draw (bst) to[bend left] (bsg);
    \draw [ntran] (t) -- node[above=-.05,pos=.5]{$\gamma$/\negw{1}} (tst);
    \draw (tst) edge[ptran, bend right, out=-60] coordinate[pos=.15] (bts) (s);
    \draw (tst) edge[ptran, bend left, out=80, in=120] coordinate[pos=.15] (btt) (t);
    \draw (bts) to[bend left] (btt);
    \draw [ntran] (u) -- node[left=0,pos=.8]{$\alpha$/\negw{1}} (uvu);
    \draw (uvu) edge[ptran, bend left] coordinate[pos=.2] (buv) (v);
    \draw (uvu) edge[ptran, bend right,out=70, in=140] coordinate[pos=.2] (buu) (u);
    \draw (buu) to[bend right] (buv);
    \draw [ntran] (v) -- node[right=-.05,pos=.8]{$\alpha$/\posw{1}} (vuv);
    \draw (vuv) edge[ptran, bend left] coordinate[pos=.2] (bvu) (u);
    \draw (vuv) edge[ptran, bend left, out=70, in=140] coordinate[pos=.2] (bvv) (v);
    \draw (bvu) to[bend left] (bvv);
    \draw [ptran] (v) -- node[above=-.05,pos=.5]{$\beta$/\nilw} (w);
    \draw [ntran] (w) -- node[right=-.05,pos=.5]{$\gamma$/\negw{1}} (wgw);
    \draw (wgw) edge[ptran, bend left, in=170] coordinate[pos=.3] (bwg) (g);
    \draw (wgw) edge[ptran, bend right] coordinate[pos=.2] (bww) (w);
    \draw (bwg) to[bend right] (bww);

    \begin{pgfonlayer}{background}%
    	\node[above left=.6 and .2 of s]{$\mathcal{M}\colon$};
		\draw[rounded corners=.7em,line width=1em,black!10,fill=black!10]
        	(-.4,.7) -- (.4,-.2) -- (-.5, -.13) --
			node[right=-.25,pos=.8,white]{$\mathcal{E}$} cycle;
        \draw[rounded corners=.7em,line width=1em,black!10,fill=black!10]
        	(-.5,-.5) -- (1.1, -.65) -- (1.3,-1.4) -- (-.2, -1.4) --
			node[right=-.25,pos=.9,white]{$\mathcal{F}$} cycle;
    \end{pgfonlayer}
\end{tikzpicture}

 	\end{minipage}%
 	\begin{minipage}{.2\textwidth}%
\pgfdeclarelayer{background}%
\pgfsetlayers{background,main}%
\begin{tikzpicture}[x=13mm,y=11mm,font=\scriptsize]
	\node (s) at (-.2,0) [state] {$\mathcal{E}_\mathrm{in}$};
	\node (eout) at (.8,0) [state] {$\mathcal{E}_\mathrm{out}$};
	\node (stg) at (1.4,0) [bullet] {};
	\node (t) at (2,.7) [state] {$t$};
	\node (g) at (2,0) [state] {$\mathit{goal}$};
	\node (tst) at (1,.7) [bullet] {};
	\node (u) at (-.2,-1) [state] {$\mathcal{F}_\mathrm{in}$};
	\node (v) at (.9,-1) [state] {$\mathcal{F}_\mathrm{out}$};
	\node (w) at (2,-1.3) [state] {$w$};
	\node (wgw) at (2,-.6) [bullet] {};

	\draw [ptran] (s) -- node[above=-.05,pos=.5]{$\tau$/\posw{6}} (eout);
    \draw [ntran] (eout) -- node[above=-.05,pos=.5]{$\alpha$/\nilw} (stg);
    \draw (stg) edge[ptran, bend left, out=20, in=170] coordinate[pos=.2] (bst) (t);
    \draw (stg) edge[ptran, bend right, in=200] coordinate[pos=.3] (bsg) (g);
    \draw (bst) to[bend left] (bsg);
    \draw [ntran] (t) -- node[above=-.05,pos=.5]{$\gamma$/\negw{1}} (tst);
    \draw (tst) edge[ptran, bend right, out=-60] coordinate[pos=.13] (bts) (s);
    \draw (tst) edge[ptran, bend left, out=80, in=120] coordinate[pos=.15] (btt) (t);
    \draw (bts) to[bend left] (btt);
    \draw [ptran] (u) -- node[above=-.05,pos=.5]{$\tau$/\posw{6}} (v);
    \draw [ptran] (v) -- node[above=-.05,pos=.5]{$\beta$/\nilw} (w);
    \draw [ntran] (w) -- node[right=-.05,pos=.5]{$\gamma$/\negw{1}} (wgw);
    \draw (wgw) edge[ptran, bend left, in=170] coordinate[pos=.3] (bwg) (g);
    \draw (wgw) edge[ptran, bend right] coordinate[pos=.2] (bww) (w);
    \draw (bwg) to[bend right] (bww);

    \begin{pgfonlayer}{background}%
    	\node[above left=.6 and 0 of s]{$\mathcal{N}\colon$};
    \end{pgfonlayer}
\end{tikzpicture}

 	\end{minipage}
 	\caption{\label{fig:fixedpoint} Resolution of \Easdwr{} on an
          example.}
\end{figure}%
~\\Consider the weight-bounded reachability property 
$\varphi_K = \Diamond (\goal \wedge (\accwgt \geqslant K))$. 
Given $K\in \Integer$, a scheduler $\sched_K$ ensuring
$\Pr^{\sched_K}_{\cM,s}(\varphi_K)=1$ acts as follows: 
for a finite path $\fpath$ ending in state $s$ with accumulated weight $k$,
$\sched_K$ schedules $K{-}k$ times action $\beta$, followed by
$\alpha$. Thus, all $\sched_K$-paths from $s$ ending in state $t$ or
$\goal$ have weight at least $K$ and $K_{\cM,s}^{\exists,=1}=+\infty$. 
However, for every finite-memory scheduler $\sched$, there is no $K\in \Integer$ 
with $\Pr^{\sched}_{\cM,s}(\varphi_K)=1$.
\Ende
\end{example}

\begin{restatable}{theorem}{dwreas}
\label{thm:DWR-E1}
The decision problem \Easdwr{} is in $\textrm{NP}\cap \textrm{coNP}$,
and at least as hard as (non-stochastic) mean-payoff games. The value
$\valueEas{\cM,s}$ is computable in pseudo-polynomial time.
\end{restatable}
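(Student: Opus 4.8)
The plan is to show that \Easdwr{} is, up to polynomial reductions, equivalent to (non-stochastic) two-player mean-payoff games, and to obtain the pseudo-polynomial computation of $\valueEas{\cM,s}$ from the pseudo-polynomial algorithms for such games. For the hardness direction I would give a gadget reduction from mean-payoff games: Max-vertices become scheduler-controlled states, Min-vertices become states where every outgoing edge is taken with positive probability, and an extra ``cash-out'' transition to $\goal$ is added. Exploiting that an almost-sure requirement treats every positive-probability branch adversarially, a scheduler $\sched$ corresponds to a Max-strategy, while the probabilistic resolution plays the role of Min; choosing the weight threshold appropriately, so that reaching $\goal$ with $\accwgt \geqslant K$ encodes the sign of the mean-payoff value, yields the equivalence. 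Verifying this correspondence is a routine but careful argument about how the mean-payoff value controls whether the accumulated weight can be forced above the threshold before cashing out.

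For the upper bound I would first use the classification of end components to isolate the source of infinite memory and unbounded weights exhibited in Example~\ref{example:dwr}. Concretely, I would compute the MECs of $\cM$ and, for every MEC $\cE$ that contains a target $t\in T{\setminus}T^*$, decide whether $\cE$ is weight-divergent: if $\Exp{\max}{\cE}(\MP)>0$ it is pumping (Lemma~\ref{lemma:pumping-ecs}), and if $\Exp{\max}{\cE}(\MP)=0$ weight-divergence is decided in polynomial time via Theorem~\ref{weight-divergence-algorithm} (equivalently, the existence of a gambling MD-scheduler, Theorem~\ref{thm:checking-gambling}). In such a \emph{boosting} MEC the scheduler can almost surely visit $t$ with accumulated weight above \emph{any} threshold, first driving the weight up using the pumping or gambling behavior and then visiting $t$, exactly as in Figure~\ref{fig:gambling-ec}; hence the weight constraint at $t$ is vacuous and I would set $K_t=+\infty$, turning $t$ into an unconditional target. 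After this preprocessing every remaining finite-threshold target lies only in non-weight-divergent end components, so by Theorem~\ref{universal-neg-wgt-div} and Lemma~\ref{mincredit-ZeroEC} the accumulated weight that is relevant at such a target is bounded in terms of the recurrence values $\rec(\cdot)$ of the reachable $0$-ECs.

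With the $+\infty$ cases removed, I would reduce the residual problem to a finite two-player weight game. Using that, over the weight-augmented state space, almost-sure reachability in an MDP corresponds to sure winning in the game where the adversary resolves the supports of the distributions, \Easdwr{} becomes: can Max (the scheduler) guarantee reaching a good target $t$ with $\accwgt \geqslant K_t$ against this adversary? To keep this game finite and its weights bounded I would flatten the reachable $0$-ECs with the iterative spider construction, relying on Lemma~\ref{spider-construction}(S3) and on the fact that $\bigvee_t \Diamond(t\wedge(\accwgt\geqslant K_t))$ is $\cE$-invariant for every $0$-EC $\cE$, so that the transformation preserves the answer; the recurrence values from Lemma~\ref{mincredit-ZeroEC} supply the credit that each flattened $0$-EC contributes. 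The resulting game is a maximum-weight (energy) reachability game, which is polynomially interreducible with mean-payoff games; solving it places \Easdwr{} in $\NP\cap\coNP$ and yields $\valueEas{\cM,s}$ in pseudo-polynomial time (the value being $+\infty$ exactly when a boosting MEC is reachable, and the finite game value otherwise).

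The hard part will be the bookkeeping that makes the reduction to a finite game faithful despite the genuine need for infinite memory. In particular I expect the delicate points to be: (i) proving that in a boosting MEC containing $t$ one can \emph{simultaneously} pump the weight above a moving threshold and visit $t$ almost surely, so that such MECs may legitimately be collapsed to unconditional targets; (ii) establishing the exact bound on the accumulated weight relevant at non-boosting targets, for which the recurrence value $\rec$ and Lemma~\ref{mincredit-ZeroEC} are the essential tools; and (iii) checking that flattening $0$-ECs neither creates nor destroys almost-sure witnesses, which is where the $\cE$-invariance of the \dwr-property and Lemma~\ref{spider-construction}(S3) must be invoked carefully. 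Once these are in place, the $\NP\cap\coNP$ membership, the pseudo-polynomial bound, and the mean-payoff-game hardness all follow from the corresponding facts about two-player mean-payoff and energy games.
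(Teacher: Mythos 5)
There are two genuine gaps. First, your preprocessing only inspects MECs that \emph{contain} a target $t\in T\setminus T^*$ and sets $K_t=+\infty$ when such a MEC is weight-divergent. That is the right move for \Uasdwr, but it misses the central difficulty of \Easdwr: a weight-divergent end component that contains \emph{no} target can still be used by the scheduler to boost the accumulated weight arbitrarily before exiting towards $\goal$. In Example~\ref{example:dwr} (Figure~\ref{fig:fixedpoint}) the trap $\goal$ lies in no end component at all, yet $\valueEas{\cM,s}=+\infty$ because of the gambling component $\cE$; your preprocessing leaves the instance untouched and your residual finite game would report a finite or $-\infty$ value. Worse, ``some weight-divergent EC is reachable'' is not the right criterion either: in the same figure $\cF$ is weight-divergent but has value $-\infty$, because after leaving $\cF$ the weight needed at $\goal$ cannot be guaranteed. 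Deciding which weight-divergent components have value $+\infty$ is exactly where the paper needs its greatest-fixed-point computation of $\GoodEC$ via the operator $\Omega$ on the quotient MDP $\cN$ (entry/exit states with a large $\tau$-weight), together with the characterization $\valueEas{\cM,s}=+\infty$ iff $\Pr^{\max}_{\cM,s}(\Diamond(T^*\cup\Good))=1$. Nothing in your proposal supplies this fixed point, and by Lemma~\ref{same-value-wgt-div-EC} the values on these components are genuinely in $\{+\infty,-\infty\}$, so they cannot be recovered from a finite game afterwards.

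Second, the reduction you propose for the residual problem --- ``almost-sure reachability in the MDP corresponds to sure winning in the game where the adversary resolves the supports'' --- is false as stated, even after all weight-divergent components are gone. In that support game the adversary can simply refuse ever to route the play to $\goal$ (e.g.\ keep looping in states from which $\goal$ is only probabilistically reachable), so Max cannot surely win the reachability objective although the MDP reaches $\goal$ almost surely under every scheduler. The paper avoids this by using a \emph{mean-payoff} objective instead of reachability: it adds a back-edge from $\goal$ to $\sinit$ of weight $-K$ and proves (Theorem~\ref{complexity-DWR-E1-no-wgt-div-EC}) that almost-sure satisfaction of $\Diamond\good\vee\Diamond(\goal\wedge\accwgt\geqslant K)$ is equivalent to Max securing nonnegative mean payoff; the correctness of that equivalence in turn relies on the prior elimination of positively weight-divergent end components and on the MD-scheduler sufficiency result (Lemma~\ref{MD-sufficient-eventually-almostsure-max-equals-1}), neither of which appears in your argument. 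Your hardness reduction, by contrast, is essentially the paper's (Lemma~\ref{MPG-hardness-Easdwr}) and is fine.
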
  

\begin{proof}[Proof sketch]
  We sketch the proof for the upper bound.
  The general case easily reduces to the same problem for
  $T{\setminus}T^*=\{\goal\}$ is a singleton; so we make this assumption.

  First, in the case where $\cM$ has no positively
  weight-divergent end components,
  we give a polynomial-time reduction to mean payoff games which can be solved in $\NP\cap\coNP$.

  For the general case, let us write $\cE_1,\ldots,\cE_k$ for the
  maximal positively weight-divergent end components of $\cM$. They
  can be computed by first determining the MECs 
  and checking weight-divergence for each of them by Theorem~\ref{weight-divergence-algorithm}.
  We then show that
  there exists $K_i\in \{+\infty,-\infty\}$ such that for all states
  $s$ in $\cE_i$ we have $\valueEas{\cM,s}=K_i$.
  This observation follows from the fact that any scheduler can be modified
  to have a first phase where the weight is increased by a desired constant
  inside a weight-divergent end component.

  We compute the set $\GoodEC=\{ \cE_i : K_i=+\infty\}$ using the
  greatest fixed point of a monotonic operator
  $\Omega\colon 2^{\fE}\to 2^{\fE}$ where $\fE=\{\cE_1,\ldots,\cE_k\}$
  using the techniques for MDPs without positively weight-divergent
  end components. To define this operator $\Omega$, we switch from
  $\cM$ to a new MDP $\cN$ obtained from $\cM$ by replacing each
  $\cE\in \fE$ with two fresh states $\entrystate{\cE}$ and
  $\exitstate{\cE}$.  The actions enabled in $\exitstate{\cE}$ serve
  to mimic $\cM$'s state-action pairs $(s,\alpha)$ where $s$ is a
  state of $\cE$ and $P_{\cM}(s,\alpha,s')>0$ for at least one state
  $s'$ outside $\cE$.  A single action $\tau$ is enabled in
  $\entrystate{\cE}$ with
  $P_{\cN}(\entrystate{\cE},\tau,\exitstate{\cE})=1$ whose weight is
  chosen large enough to ensure that $\entrystate{\cE}$ and
  $\exitstate{\cE}$ do not belong to a negative simple cycle.  The
  construction is illustrated in Fig.~\ref{fig:fixedpoint}.  $\cN$ has
  no positively weight-divergent end components by construction.
  However, the values in~$\cN$ can be used as lower bounds of those in~$\cM$.
  In particular, we may have
  $\valueEas{\cN,r}={-}\infty$ and $\valueEas{\cM,r'}={+}\infty$ 
  where~$r$ and~$r'$ are corresponding states in~$\cM$ and~$\cN$
  (\eg state~{$s$} in
  Fig.~\ref{fig:fixedpoint} has value~$+\infty$ in~$\cM$ but
  $\exitstate{\cE}$ has value~$-\infty$ in $\cN$). 

  Despite this, we can identify end components in $\GoodEC$, \ie with
  value $+\infty$, using~$\cN$ via a fixed-point computation. Namely,
  we define the operator $\Omega$ that assigns to each
  $X \subseteq \fE$ the set of end components $\cE \in \fE$ for which
  there is $K\in \Integer$ with
  $\Pr^{\max}_{\cN,\exitstate{\cE}}\big(\varphi_K[X]\big)=1$ where
  \[
    \varphi_K[X]\ =\ \Diamond \big(T^* \cup \{\entrystate{\cE} :\cE\in X\}\big) \vee \Diamond
  \big(\goal \wedge (\accwgt \geqslant K)\big).
\]
  Intuitively, these are states from which almost surely we either satisfy~$\varphi$,
  or reach another weight-divergent end component that allows to increase the 
  weight and start again.
  This fixed-point computation applied to $\cN$ in Fig.~\ref{fig:fixedpoint} yields, \eg  
  $X_0 = \{\cE,\cF\}, \Omega(X_0)=\{\cE\}, \Omega(\Omega(X_0))=\{\cE\}$.
  In fact, from~$\cE$ one can either immediately reach~$\goal$ or go back to~$\cE$;
  while from~$\cF$ there is no bound on the accumulated weight towards reaching~$\goal$.

  The above computation yields the values of the states of weight-divergent end components; in fact, we show that $\valueEas{\cM,s}=+\infty$
  iff $\Pr^{\max}_{\cM,s}(\Diamond (T^*\cup \GoodEC))=1$.
  For other states, we show that the maximal~$K$ such that~$\Pr_{\cN,s}^{\max}(\phi_K[\GoodEC])=1$ 
  corresponds to $\valueEas{\cM,s'}$ where~$s$ and~$s'$ are corresponding states.
  Here, $\phi_K[\GoodEC]$ is an instance
  of \Easdwr and~$\cN$ has no weight-divergent end components, so we can use the $\NP\cap\coNP$ algorithm described at the beginning.
\onlyLong{Details are given in Appendix~\ref{sec:Easdwr-inNP}.}
\end{proof}

\subsection{Weight-Bounded Repeated Reachability}%

\label{sec:Buechi}

Beyond weight-bounded reachability, we address a
B\"uchi weight condition in conjunction with a standard 
B\"uchi condition. 
Given an MDP $\cM$ without traps, 
a set $F \cup \{s\}$ of states in $\cM$ 
and $K \in \Integer$, we consider the problems
\begin{center}
\begin{tabular}{ll}
  \EaswB: &
  $\exists \sched$ s.t. $\Pr^{\sched}_{\cM,s}
             (\Box \Diamond (\accwgt \geqslant K)\wedge \Box \Diamond
              F)=1$?
  \\[0.5ex]
  \EposwB: &
  $\exists \sched$ s.t. $\Pr^{\sched}_{\cM,s}
            (\Box \Diamond (\accwgt \geqslant K)\wedge \Box \Diamond
              F)>0$?
\end{tabular}
\end{center}
and the corresponding problems
\UaswB{} and \UposwB{} 
with universal quantification over schedulers.
The two existential
problems are polynomially reducible to the 
respective existential \dwr{} problems, maintaining the same complexity
classes. 
The universal problems can be solved using techniques to treat existential
problems for coB\"uchi weight constraints, which again are
polynomially reducible to  \Eposdwr{} and \Easdwr, respectively.
\onlyLong{For details see Appendix \ref{sec:appendix-buechi}.}

\begin{restatable}{theorem}{thmEBcoB}
\label{almost-sure-Buechi-and-coBuechi}
\EposwB{} and \UaswB{} are decidable in polynomial time.  \EaswB{} and
\UposwB{} are in $\NP\cap\coNP$, decidable in pseudo-polynomial time,
and at least as hard as mean-payoff games. %
\end{restatable}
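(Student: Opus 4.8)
The plan is to reduce all four problems to existential \dwr{} problems, so that the upper bounds follow from the results on \Eposdwr{} (in \PTIME, noted before Theorem~\ref{thm:DWR-U0}) and \Easdwr{} (Theorem~\ref{thm:DWR-E1}), and the matching lower bounds follow by reducing \Easdwr{} back to the B\"uchi variants. The key observation for the two existential problems is that, on almost every path, the set of state-action pairs visited infinitely often forms an end component, and that $\Box\Diamond(\accwgt\geqslant K)$ holds on $\infpath$ iff $\limsup_n\wgt(\prefix{\infpath}{n})\geqslant K$; hence the acceptance condition $\Box\Diamond(\accwgt\geqslant K)\wedge\Box\Diamond F$ depends only on the type of the limiting end component together with the weight with which it is entered. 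I would therefore first classify, using the algorithms of Section~\ref{sec:classification}, the maximal end components of $\cM$ into \emph{accepting} and \emph{non-accepting} ones. An EC $\cE$ with $\cE\cap F\neq\varnothing$ is accepting irrespective of the entry weight if it is weight-divergent, since then one can pump the weight above $K$ infinitely often (Theorem~\ref{weight-divergence-algorithm} supplies a pumping or gambling scheduler) while still revisiting an $F$-state infinitely often; an EC with $\Exp{\max}{\cE}(\MP)<0$ is never accepting, as then $\limsup_n\wgt=-\infty$ almost surely for every scheduler; and an EC with $\Exp{\max}{\cE}(\MP)=0$ that is not weight-divergent is a $0$-EC (Lemma~\ref{gambling-MC}), which is accepting only if entered with weight at least a threshold $K_\cE$ read off from the recurrence values of its $F$-containing sub-$0$-EC (Lemma~\ref{mincredit-ZeroEC}).

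Once the accepting ECs are identified, \EaswB{} becomes the question whether some scheduler almost surely reaches an accepting EC with sufficient weight, and \EposwB{} the analogous positive-probability question. To phrase these as \dwr{} instances I would add, for each accepting EC, a fresh trap reachable by a committing action from the appropriate states of $\cE$, with the committing weight chosen so that reaching the trap encodes ``entering $\cE$ with the required weight''; the target thresholds are $-\infty$ for weight-divergent ECs and $K_\cE$ for bounded ones. This reduces \EaswB{} to \Easdwr{} and \EposwB{} to \Eposdwr, inheriting membership in $\NP\cap\coNP$ with pseudo-polynomial time, and in \PTIME, respectively.

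For the universal variants I would pass to the complement. Since $\neg\big(\Box\Diamond(\accwgt\geqslant K)\wedge\Box\Diamond F\big)=\Diamond\Box(\accwgt<K)\vee\Diamond\Box\neg F$ is a disjunction of a weight-coB\"uchi and an ordinary coB\"uchi condition, \UaswB{} is equivalent to the \emph{non}-existence of a scheduler satisfying this disjunction with positive probability (problem \EposwcoB), and \UposwB{} to the non-existence of a scheduler satisfying it almost surely (problem \EaswcoB). The existential coB\"uchi problems are again solved by an EC analysis, now seeking ECs in which the weight can be kept below $K$ forever or $F$ avoided forever, and reduce polynomially to \Eposdwr{} and \Easdwr, respectively. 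As \PTIME{} and $\NP\cap\coNP$ are closed under complement, this places \UaswB{} in \PTIME{} and \UposwB{} in $\NP\cap\coNP$ with a pseudo-polynomial time bound.

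For the lower bounds I would reduce \Easdwr{} (equivalently, mean-payoff games) to \EaswB: given a target trap $\goal$ with threshold $K$, replace the trap by a $0$-weight self-loop and set $F=\{\goal\}$, so that the B\"uchi condition holds almost surely exactly when $\goal$ is reached almost surely with accumulated weight at least $K$; mean-payoff-game hardness of \EaswB{} then follows from Theorem~\ref{thm:DWR-E1}, and hardness of \UposwB{} follows by the same duality together with the closure of mean-payoff games under complementation. The main obstacle I anticipate lies in the accepting-EC analysis for \EaswB: one must prove that inside a weight-divergent EC a single, necessarily infinite-memory, scheduler can interleave weight-pumping with recurrent $F$-visits so that both $\Box\Diamond(\accwgt\geqslant K)$ and $\Box\Diamond F$ hold \emph{almost surely}, and dually that for a bounded EC the threshold $K_\cE$ extracted from the recurrence values of the $F$-containing sub-$0$-EC is exactly tight. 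Establishing that almost-sure satisfaction of the B\"uchi-weight condition is equivalent to almost-surely reaching such an EC with the prescribed weight, and that this equivalence is faithfully captured by the committing-trap construction, is the technical crux, after which all four complexity claims follow from the \dwr{} results of Section~\ref{sec:DWR}.
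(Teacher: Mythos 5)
Your overall strategy matches the paper's: reduce \EposwB{} and \EaswB{} to \Eposdwr{} and \Easdwr{} via a classification of the limiting end components (weight-divergent $F$-ECs get threshold $-\infty$, $0$-ECs meeting $F$ get a finite threshold), handle \UaswB{} and \UposwB{} by complementation through weight-coB\"uchi conditions that again reduce to the existential \dwr{} problems, and obtain hardness by turning the \dwr{} trap into a zero-weight self-loop with $F=\{\goal\}$. Two steps need repair, though. First, your claim that an EC with $\Exp{\max}{\cE}(\MP)=0$ that is not weight-divergent ``is a $0$-EC (Lemma~\ref{gambling-MC})'' cites a statement about strongly connected \emph{Markov chains}; for MDPs this is false in general (such an EC may merely \emph{contain} $0$-ECs, or contain none and be universally negatively weight-divergent). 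What the argument actually needs is the MDP-level fact that if the B\"uchi weight condition holds with positive probability on paths whose limit is such an EC, then that limiting EC is a $0$-EC — this is Lemma~\ref{Hilfslemma-Buechi-pos-max} in the appendix, proved via the spider construction and the equivalence for $0$-EC-invariant properties, not by the Markov-chain folklore.

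Second, you import the recurrence values $\rec(t)$ into the threshold for the B\"uchi case, but these are calibrated for $\Box(\accwgt\geqslant\cdot)$ (the coB\"uchi problems), not for $\Box\Diamond(\accwgt\geqslant K)$. Since $\rec(t)\leqslant 0$, a threshold of the form $K-\rec(t)$ is strictly stronger than necessary: inside a $0$-EC the weight at each recurrent visit to a fixed state is constant, so $\Box\Diamond(\accwgt\geqslant K)$ holds as soon as \emph{some} state of the $0$-EC is ever visited with weight at least $K$. The paper accordingly sets $K_t=K$ for every $t$ in a maximal $F$-intersecting $0$-EC and lets the disjunction over $t$ do the work (Lemmas~\ref{lemma:Buechi-max-as} and~\ref{lemma:Buechi-max-pos}); the recurrence values enter only in the reduction of \EposwcoB{} and \EaswcoB{} (Lemma~\ref{lemma:coBuechi}), where $K_t=K-\rec(t)$ is the right bound. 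Relatedly, for the universal variants your sketch folds $\Diamond\Box\neg F$ into the coB\"uchi weight condition without saying how; the paper checks $\Pr^{\min}_{\cM,s}(\Box\Diamond F)=1$ separately for \UaswB{} and, for \UposwB, collapses the $F$-free end components into a trap with a positive-weight self-loop so that the $\Diamond\Box\neg F$ disjunct is absorbed into $\Diamond\Box(\accwgt\geqslant L)$. With these corrections your argument coincides with the paper's.
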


The proof of Theorem \ref{almost-sure-Buechi-and-coBuechi}
heavily uses the concepts of Section \ref{sec:classification}.
Let us briefly describe the reduction of \EaswB{} and \EposwB{} to \Easdwr{}
and \Eposdwr{} for some \dwr{} formula
$\varphi = \bigvee_{t\in T} \Diamond \big(t \wedge (\accwgt \geqslant
K_t)\big)$. We define $T^*$ as the set of all states in maximal
weight-divergent end components containing at least one state in
$F$ and $T{\setminus} T^*$ as the set of states belonging to a
maximal 0-EC $\cZ$ of a maximal end component $\cE$ with
$\Exp{\max}{\cE}(\MP)=0$ and $\cZ{\cap}F\not= \varnothing$.  Note that
both $T^*$ and $T{\setminus}T^*$ are computable in polynomial time
(due to Theorem~\ref{weight-divergence-algorithm} and
Lemma~\ref{mincredit-ZeroEC}). For the states in $T{\setminus}T^*$, we
let $K_t{=}K$, where $K$ is taken from the input of \EaswB{} or \EposwB.

To solve problem \UaswB{} we rely on the observation that
\UaswB{} holds iff (i) $\Pr^{\min}_{\cM,s}(\Box \Diamond F)=1$
and (ii) there is no scheduler $\sched$ with
$\Pr^{\sched}_{\cM^-,s}\big(\Diamond \Box (\accwgt \geqslant L)\big)>0$
where $\cM^-$ results from $\cM$ by multiplying all weights with $-1$
and $L=-(K{-}1)$. While (i) can be checked in polynomial time, (ii) is equivalent to the complement of
\Eposdwr{} for $\cM^-$ and %
$\bigvee_{t \in T} \Diamond \big(t \wedge (\accwgt \geqslant K_t)\big)$
where $T^*$ denotes the set of states belonging to 
a pumping end component of $\cM^-$
and $T{\setminus}T^*$ is the set of states belonging to the set $\ZeroEC$ 
and $K_t=L{-}\rec(t)$. 
Here $\ZeroEC$ is the set of states that belong to a maximal 0-EC $\cZ$ 
of a maximal end component $\cE$ of $\cM$ or $\cM^-$ 
with $\Exp{\max}{\cE}(\MP)=0$
and moreover, $\rec(t)$ refers to this maximal end component $\cE$. 

For problem \UaswB{} we transform $\cM^-$ into a new MDP $\cN$ such that
\UaswB{} holds for $\cM$ iff there is no scheduler for $\cN$
where the coB\"uchi weight constraint $\Diamond \Box (\accwgt \geqslant L)$
holds almost surely, which
can be checked applying the algorithm for
\Easdwr{} for $\cN$ and the same DWR property as for \Uposdwr.
Here $L$ is as above and
$\cN$ arises from $\cM^-$ by identifying all states that belong to an
end component not containing an $F$-state and replacing their enabled actions
with a self-loop of weight 0.

The  optimization problems of \EaswB{} and \UposwB{}
are computable in pseudo-polynomial time, 
and optimal weight bounds for \EposwB{} and \UaswB
in polynomial time.

\subsection{Discussion on Related Work}

\label{sec:discussion}

To the best of our knowledge, problems \Easdwr, \Uposdwr{} and \Uasdwr{}
or the variants for B\"uchi weight constraints
have not been studied before for general integer-weighted MDPs.
Qualitative weight-bounded reachability
properties in MDPs with only nonnegative weights are
decidable in polynomial time~\cite{UB13}. 
This result %
relies on the monotonicity of accumulated weights along all paths.
The lack of 
monotonicity in the general case rules out analogous algorithms.

For Markov chains, 
qualitative weight-bounded reachability properties can be treated 
in polynomial time \cite{KSBD15}. 
This result uses
expected mean payoff in BSCCs,
variants of shortest-path algorithms and the continued-fraction
method.
In MDPs, however, optimal schedulers might need infinite memory (see
Example \ref{example:dwr}) so these algorithms cannot
be adapted. 
In fact, our algorithms crucially 
rely on the classification of end components.

Let us point out the similarities and differences between the problems
we considered and the ones for energy
MDPs~\cite{ChatDoy11,MaySchTozWoj17}.  Rephrased
for our notations, the energy-MDP problem is to check whether
$\Pr^{\max}_{\cM,s}\big(\Box (\accwgt \geqslant K) \wedge \phi\big)=1$ where
$\varphi$ is a parity condition and $K\in \Integer$. This problem is in
$\text{NP}\cap \text{coNP}$ and at least as hard as two-player mean-payoff games,
even if $\phi=\true$. The complement of the energy-MDP problem asks
whether $\Pr^{\min}_{\cM,s}\big(\Diamond (\accwgt < K)\vee \neg \phi\big)>0$,
which corresponds to
$\Pr^{\min}_{\cM,s}\big(\Diamond (\accwgt \geqslant K) \vee \neg \phi\big)>0$
when switching from $\wgt$ to $-\wgt$ and from $K$ to $-(K{-}1)$.
However, although in the spirit of this problem, \Uposdwr{} asks whether
$\Pr^{\min}_{\cM,s}\big(\Diamond (\goal \wedge (\accwgt \geqslant K))\big)>0$,
in the case  $T^*=\varnothing$ and 
$T\setminus T^*=\{\goal\}$.
Given the similarities of these questions, 
and our decision procedure that reduces \Uposdwr{}
to mean-payoff B\"uchi games,
it is no surprise that the problem \Uposdwr{}
is at least as hard as mean-payoff games.

Nevertheless, the instances \Easdwr{} and \Uasdwr{} are of different
nature than energy-MDPs. These
can rather be seen as variants of the
\emph{termination problem} for \emph{one-counter MDPs} \cite{EWY08,BBEKW10}.  
One-counter MDPs have their weights
in $\{-1,0,+1\}$, while we allow
arbitrary weights.
Moreover, a one-counter MDPs halts whenever the counter reaches 0,
but there is no lower bound on the accumulated weight in our setting.
Following \cite{BBEKW10}, we refer to these one-counter MDPs 
as \emph{one-counter MDP with boundary}
and to MDPs in our setting with weights in $\{-1,0,+1\}$
as \emph{boundaryless one-counter MDPs}.

We commented on \cite{BBEKW10} in the paragraph following Theorem~\ref{thm:checking-gambling}.
For one-counter MDPs $\ocM$ with boundary, \cite{BBEKW10} 
also provides an exponential-time algorithm for checking
$\Pr^{\max}_{\ocM,s}\big(\bigvee_{t\in T} \Diamond (t \wedge (\accwgt=0))\big)=1$
and shows PSPACE-hardness.
This contrasts with our $\textrm{NP}\cap \textrm{coNP}$ upper bound for
\Easdwr{} with arbitrary integer weights
(Theorem \ref{thm:DWR-E1}).
Besides the differences ``boundary vs boundaryless''
and ``integer vs unit weights'',
we consider
objectives imposing lower bounds on the accumulated weights.
Considering $\Diamond (t \wedge (\accwgt = K_t))$ would raise the complexity
in our setting at least to EXPTIME-hardness,
by
\cite{HaaseKiefer15} which shows that
for MDPs $\cM$ with non-negative integer weights 
and $\Pr^{\min}_{\cM,s}(\Diamond \goal)=1$,
checking whether
$\Pr^{\max}_{\cM,s}(\Diamond (\goal \wedge (\accwgt=K)))=1$
for some given $K\in \Nat$ is EXPTIME-complete.

Nondeterministic and probabilistic models for vector addition systems
(VASS-MDPs) can be seen as boundary MDPs with multiple weight
functions. Decidable results on VASS-MDPs include the existence of a
scheduler that almost surely ensures some property expressible in
$\mu$-calculus (with no constraint on the accumulated
weights)~\cite{ACMSS16}. The decision algorithms rely on the
termination of fixed-point computations thanks to well-quasi orderings,
thus yielding much higher complexity than our techniques.
\section{Conclusion}

\label{sec:conclusion}
We provided a classification of end components according to their behaviors with respect to the accumulated weight. This allowed us to solve the general stochastic shortest path problem and to derive algorithms for weight-bounded properties. We believe our classification helps
better understanding the accumulated weights in MDPs, and can be helpful for other problems and
perhaps simplify existing results.

An interesting future work is to
address analogous questions for quantitative probability
thresholds. %
This appears to be challenging as the
probabilities for weight-bounded properties can be irrational,
even in Markov chains \cite{EWY08,BBEKW10}.
\bibliographystyle{myalpha}  %
\bibliography{lit} 

\begin{thebibliography}{6}

\bibitem{BKKW17}
Christel Baier, Joachim Klein, Sascha Kl{\"u}ppelholz, and Sascha Wunderlich.
\newblock Maximizing the conditional expected reward for reaching the goal.
\newblock In {TACAS}'17, LNCS 10206, p. 269--285. Springer, 2017.

\bibitem{Brassard79}
Gilles Brassard.
\newblock A note on the complexity of cryptography (corresp.).
\newblock {\em Information Theory, IEEE Transactions on}, 25:232 -- 233, 04
  1979.

\bibitem{ChatDoy12}
Krishnendu Chatterjee and Laurent Doyen.
\newblock Energy parity games.
\newblock {\em Theoretical Computer Science}, 458:49--60, 2012.

\bibitem{Martin98}
D.A. Martin.
\newblock The determinacy of blackwell games.
\newblock {\em The Journal of Symbolic Logic}, 63(4):1565--1581, 1998.

\bibitem{MaySchTozWoj17}
Richard Mayr, Sven Schewe, Patrick Totzke, and Dominik Wojtczak.
\newblock {MDPs} with energy-parity objectives.
\newblock In {LICS}'17, IEEE Computer Society, IEEE Computer Society, p. 1--12,
  2017.

\bibitem{RanRasSan15}
Mickael Randour, Jean{-}Fran{\c{c}}ois Raskin, and Ocan Sankur.
\newblock Variations on the stochastic shortest path problem.
\newblock In {VMCAI}'15, LNCS 8931, p. 1--18. Springer, 2015.

\end{thebibliography}


\newcommand{\etalchar}[1]{$^{#1}$}
\begin{thebibliography}{18}

\bibitem{ACMSS16}
Parosh~Aziz Abdulla, Radu Ciobanu, Richard Mayr, Arnaud Sangnier, and Jeremy
  Sproston.
\newblock Qualitative analysis of {VASS}-induced {MDPs}.
\newblock In {FoSSaCS}'16, LNCS 9634, p. 319--334. Springer, 2016.

\bibitem{BDDKK14}
Christel Baier, Marcus Daum, Clemens Dubslaff, Joachim Klein, and Sascha
  Kl{\"u}ppelholz.
\newblock Energy-utility quantiles.
\newblock In {NFM}'14, LNCS 8430, p. 285--299. Springer, 2014.

\bibitem{BaierKatoen08}
Christel Baier and Joost-Pieter Katoen.
\newblock {\em Principles of Model Checking}.
\newblock MIT Press, 2008.

\bibitem{BerTsi91}
Dimitri~P. Bertsekas and John~N. Tsitsiklis.
\newblock An analysis of stochastic shortest path problems.
\newblock {\em Mathematics of Operations Research}, 16(3):580--595, 1991.

\bibitem{BBEKW10}
Tom{\'{a}}s Br{\'{a}}zdil, V{\'{a}}clav Brozek, Kousha Etessami, Anton{\'{\i}}n
  Kucera, and Dominik Wojtczak.
\newblock One-counter {Markov} decision processes.
\newblock In {SODA}'10, p. 863--874. {SIAM}, 2010.

\bibitem{BKN16}
Tom{\'{a}}s Br{\'{a}}zdil, Anton{\'{\i}}n Kucera, and Petr Novotn{\'{y}}.
\newblock Optimizing the expected mean payoff in energy {Markov} decision
  processes.
\newblock In {ATVA}'16, LNCS 9938, p. 32--49, 2016.

\bibitem{BFRR17}
V{\'{e}}ronique Bruy{\`{e}}re, Emmanuel Filiot, Mickael Randour, and
  Jean{-}Fran{\c{c}}ois Raskin.
\newblock Meet your expectations with guarantees: Beyond worst-case synthesis
  in quantitative games.
\newblock {\em Information and Computation}, 254:259--295, 2017.

\bibitem{ChatDoy11}
Krishnendu Chatterjee and Laurent Doyen.
\newblock Energy and mean-payoff parity {Markov} decision processes.
\newblock In {MFCS}'11, LNCS 6907, p. 206--218. Springer, 2011.

\bibitem{ChatHen11}
Krishnendu Chatterjee and Monika Henzinger.
\newblock Faster and dynamic algorithms for maximal end-component decomposition
  and related graph problems in probabilistic verification.
\newblock In {SODA}'11, p. 1318--1336. {SIAM}, 2011.

\bibitem{Alfaro98Thesis}
Luca de~Alfaro.
\newblock {\em Formal Verification of Probabilistic Systems}.
\newblock PhD thesis, Stanford University, Department of Computer Science,
  1997.

\bibitem{deAlf99}
Luca de~Alfaro.
\newblock Computing minimum and maximum reachability times in probabilistic
  systems.
\newblock In {CONCUR}'99, LNCS 1664, p. 66--81, 1999.

\bibitem{EWY08}
Kousha Etessami, Dominik Wojtczak, and Mihalis Yannakakis.
\newblock Quasi-birth-death processes, tree-like qbds, probabilistic 1-counter
  automata, and pushdown systems.
\newblock In {QEST}'08, p. 243--253. {IEEE} Computer Society, 2008.

\bibitem{HaaseKiefer15}
Christoph Haase and Stefan Kiefer.
\newblock The odds of staying on budget.
\newblock In {ICALP}'15, LNCS 9135, p. 234--246. Springer, 2015.

\bibitem{Kallenberg}
Lodewijk Kallenberg.
\newblock {\em Markov Decision Processes}.
\newblock Lecture Notes. University of Leiden, 2011.

\bibitem{KSBD15}
Daniel Kr{\"{a}}hmann, Jana Schubert, Christel Baier, and Clemens Dubslaff.
\newblock Ratio and weight quantiles.
\newblock In {MFCS}'15, LNCS 9234, p. 344--356. Springer, 2015.

\bibitem{MaySchTozWoj17}
Richard Mayr, Sven Schewe, Patrick Totzke, and Dominik Wojtczak.
\newblock {MDPs} with energy-parity objectives.
\newblock In {LICS}'17, IEEE Computer Society, IEEE Computer Society, p. 1--12,
  2017.

\bibitem{Puterman}
Martin~L. Puterman.
\newblock {\em Markov Decision Processes: Discrete Stochastic Dynamic
  Programming}.
\newblock John Wiley \& Sons, Inc., New York, NY, 1994.

\bibitem{UB13}
Michael Ummels and Christel Baier.
\newblock Computing quantiles in {Markov} reward models.
\newblock In {FoSSaCS}'13, LNCS 7794, p. 353--368. Springer, 2013.

\end{thebibliography}
\onecolumn
\appendix
~\\\textbf{\huge Appendix}\\[1em]
In this appendix, we provide the proofs of the results of
the main paper that had to be omitted due to space constraints.\\
~\\\textbf{\Large Contents - Overview of the Paper}\vspace*{-4em}
\setcounter{tocdepth}{2}
\renewcommand*{\contentsname}{}
\tableofcontents

\renewcommand{\thetheorem}{\Alph{section}.\arabic{theorem}}
\renewcommand{\thelemma}{\Alph{section}.\arabic{lemma}}
\renewcommand{\thecorollary}{\Alph{section}.\arabic{corollary}}
\renewcommand{\theremark}{\Alph{section}.\arabic{remark}}
\renewcommand{\theexample}{\Alph{section}.\arabic{example}}
\renewcommand{\thedefinition}{\Alph{section}.\arabic{definition}}

\bibliographystyleapx{myalpha}
\bibliographyapx{lit}

\section{Additional Notations}
\setcounter{theorem}{0}

\label{appendix-notations}

\tudparagraph{1ex}{Notations for paths (concatenation, length, fragments).}
Given a finite path $\fpath'=s_0\alpha_0\ldots \alpha_n s_n$ 
and a (finite or infinite) path $\fpath=t_0\beta_0 t_1 \beta_1  \ldots$ 
with $s_n=t_m$ then
$\fpath';\fpath$ denotes the path
$s_0\alpha_0\ldots \alpha_n s_n \beta_0 t_1 \beta_1  \ldots$.
The length of a path $\fpath$, denoted $|\fpath|$,
is defined as the number of state-action pairs in
$\fpath$, \ie if $\fpath'$ is a finite path as above then 
$|\fpath'|=n$, while $|\fpath|=\infty$ for each infinite path.
If $\fpath$ is a path as above and $i,n\in \Nat$ with
$i \leqslant n \leqslant |\fpath|$ then 
$\state{\fpath}{n}$ denotes $t_n$ (the $(n+1)$-st state of $\fpath$)
and $\fragment{\fpath}{i}{n}$
the finite path
$t_i \, \beta_i \, t_{i+1}\, \beta_{i+1} \, \ldots \, \beta_{n-1} \, t_n$.
Thus, $\fragment{\fpath}{0}{n}=\prefix{\fpath}{n}$,
$\fragment{\fpath}{n}{n}=\state{\fpath}{n}$ and
$\first(\fpath)=\state{\fpath}{0}$.
Similarly, $\last(\fpath)=\state{\fpath}{n}$ if $n=|\fpath|$ is finite.

\tudparagraph{1ex}{Residual schedulers.}
Let $\sched$ be a scheduler and $\fpath$ a finite path.
The residual scheduler $\sched{\uparrow}{\fpath}$ is defined by
$(\sched{\uparrow}{\fpath})(\fpath') = 
 \sched(\fpath; \fpath')$
if~$\first(\fpath') = last(\fpath)$, and
$(\sched{\uparrow}{\fpath})(\fpath') = \sched(\fpath')$ otherwise.

\tudparagraph{1ex}{Finite-memory scheduler.}
A finite memory scheduler can defined as follows. Given a finite set~$M$
of memory elements, $\sched_u(s,m) = m'$ is an \emph{update function} that determines
the new memory element given current state~$s$ and current memory element~$m$;
and~$\sched_n(s,m) = \alpha$ determines the action to be played at state~$s$ and if the memory contains~$m$.

\tudparagraph{1ex}{Markov chain.} A \emph{Markov chain} is an MDP $\cM=(S,\Act,P,\wgt)$
where $\Act$ is a singleton. We occasionally use the notation $\cC=(S,P',\wgt')$ for $\cM$,
where $P'\colon S\times S\rightarrow [0,1]$ and $\wgt'\colon S\rightarrow\Integer$ are
defined as $P$ and $\wgt$ but omitting the uniquely defined action.

\tudparagraph{1ex}{Limit of infinite paths.}
The limit of an infinite path $\infpath$, denoted $\lim(\infpath)$,
is the set of state-action pairs that occur infintely often in $\infpath$.
If $\cE$ is an end component then we often write $\Limit{\cE}$ 
for $\{\infpath \in \InfPaths : \lim(\infpath)=\cE\}$.
At various places, we rely on De Alfaro's result \cite{Alfaro98Thesis}
stating that for each scheduler $\sched$,
the limit of almost all infinite $\sched$-paths is an end component.
Formally, for each scheduler $\sched$ and each state $s$, we have
$\Pr^{\sched}_{\cM,s}\big(\bigcup_{\cE}\Limit{\cE}\big)=1$
when $\cE$ ranges over all (possibly nonmaximal) end components.

\tudparagraph{1ex}{Probabilities.}
Recall that we use the notation $\Pr^{\max}_{\cM,s}(\varphi)$ when
there exists a scheduler $\sched$ for $\cM$ such that
$\Pr^{\sched}_{\cM,s}(\varphi)=\Pr^{\sup}_{\cM,s}(\varphi)$.

If $\fpath$ is a finite path starting in $s$ and $\sched$ a scheduler
then $\Pr^{\sched}_{\cM,s}(\fpath)$ is used as a shortform notation
for $\Pr^{\sched}_{\cM,s}\big(\Cyl(\fpath)\big)$ where $\Cyl(\fpath)$ denotes
the cylinder set of $\fpath$, \ie the set of all maximal paths
$\infpath$ where $\fpath$ is a prefix of $\infpath$.

\tudparagraph{1ex}{Properties.}
Let $\cM$ be an MDP with state space $S$.
Define $\Lambda_\cM=(S \times \Integer)^{\omega} \cup (S \times \Integer)^*\times S$.
The set $\Lambda_\cM$ is equipped with the (standard) sigma-algebra
generated by the cylinder sets of the finite sequences in $(S \times \Integer)^*$.
A property is a measurable subset of $\Lambda_\cM$.
Of course, each path $t_0\, \alpha_0 \, t_1 \, \alpha_1 \ldots$ in $\cM$
naturally induces such a sequence in $\Lambda_\cM$ 
by replacing $\alpha_i$ with $\wgt(t_i,\alpha_i)$.
Denote the function mapping every $t_0\, \alpha_0 \, t_1 \, \alpha_1 \ldots$
to $t_0\, \wgt(t_0,\alpha_0) \, t_1 \, \wgt(t_1,\alpha_1) \ldots$ by $f$.
Hence, for each scheduler $\sched$ and each state $s$
the probability measure $\Pr^{\sched}_{\cM,s}$ on 
induces a probability measure $\Pr^{\sched}_{\cM,s,\sharp}$ on $\Lambda_\cM$.
Formally, for every property $\varphi$ we have 
$\Pr^{\sched}_{\cM,s,\sharp}(\varphi)=\Pr^{\sched}_{\cM,s}\big(\{\fpath \ : \ f(\fpath)\in\varphi\}\big)$.
To simplify notions, we identify the two probability measures 
$\Pr^{\sched}_{\cM,s}$ and $\Pr^{\sched}_{\cM,s,\sharp}$,
\ie for every property $\varphi$ 
we write $\Pr^{\sched}_{\cM,s}(\varphi)$ 
rather than $\Pr^{\sched}_{\cM,s,\sharp}(\varphi)$.

\section{Proofs and Complements for Section~\ref{sec:classification}}
\setcounter{theorem}{0}
\label{appendix:sec-classification}

\subsection{Illustration of the Notions on End Components}

\begin{figure}[ht]
	\begin{minipage}[b]{0.39\textwidth}
		\centering%

\begin{tikzpicture}
	\node[state] (s) {$s$};
	\node[state,right=15mm of s] (goal) {$\goal$};
	\path%
		(s) edge[ptran,loop, min distance=9mm, out=60, in=120] node[above]{$\alpha$/\posw{1}} (s)
		(s) edge[ptran] node[above]{$\beta$/\negw{2}} (goal);
\end{tikzpicture}
 		\caption{MDP with pumping EC $\cE = \{(s,\alpha)\}$.}
		\label{fig:pumping-ec}
	\end{minipage}
\end{figure}
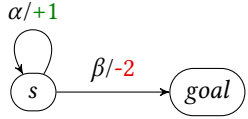

\begin{example}[Pumping EC] {\rm Let $\cM$ be the simple MDP
    depicted in Figure~\ref{fig:pumping-ec} consisting of the two
    states $s$ and $\goal$, probabilistic transitions $\Pr(s,\alpha,s)=1$ and
    $\Pr(s,\beta,\goal)=1$ with weights $\wgt(s,\alpha)=+1$ and $\wgt(s,\beta)=-2$. 
    The pair $(s,\alpha)$
    constitute a maximal end component of $\cM$ that is trivially pumping.

    This example also illustrates that no MD-scheduler can ensure
    $\Diamond (\goal \wedge (\accwgt \geqslant 0))$
    almost surely. Indeed, if $\sched$ is the scheduler that takes
    action $\alpha$ twice in $s$ and then action $\beta$ to move to $\goal$, then
    $\Pr^{\sched}_{\cM,s}\big(\Diamond (\goal \wedge (\accwgt
    \geqslant 0))\big)=1$ .  However, there is no MD-scheduler
    $\tsched$ satisfying 
    $\Pr^{\tsched}_{\cM,s}\big(\Diamond (\goal \wedge (\accwgt
    \geqslant 0))\big)=1$.
\Ende
}
\end{example}

\subsection{Mean Payoff in Strongly Connected Markov Chains}

\label{sec:appendix-meanpayoff-MC}

Let us start by recalling some simple observations on Markov chains.
\begin{lemma}[Folkore]
\label{MC-weight-div-expected-mp-pos-or-neg}
  For each finite strongly connected Markov chain $\cC$:
  \begin{enumerate}
  \item [(a)]
     If\ \ $\ExpRew{}{\cC}(\MP)>0$ then $\cC$ is positively pumping.
  \item [(b)]
     If\ \ $\ExpRew{}{\cC}(\MP)<0$ then $\cC$ is negatively pumping.
  \end{enumerate}
\end{lemma}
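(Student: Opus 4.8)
The plan is to reduce both statements to the strong law of large numbers (equivalently, the pointwise ergodic theorem) for finite irreducible Markov chains. Since $\cC$ is finite and strongly connected, the underlying chain is irreducible and positive recurrent, hence it admits a unique stationary distribution $\mu$, and the expected mean payoff satisfies $\ExpRew{}{\cC}(\MP) = \sum_{s} \mu(s)\,\wgt(s)$. The key ingredient is that the ergodic theorem guarantees that for almost every infinite path $\infpath = s_0\,s_1\,s_2\ldots$ the time-average of the per-step weights converges:
\[
  \lim_{n\to\infty} \frac{\wgt(\prefix{\infpath}{n})}{n}
  \;=\; \lim_{n\to\infty} \frac{1}{n}\sum_{i=0}^{n-1}\wgt(s_i)
  \;=\; \ExpRew{}{\cC}(\MP)
  \qquad\text{almost surely.}
\]
In particular the $\limsup$ in the definition of $\MP$ is an honest limit almost surely, equal to $\ExpRew{}{\cC}(\MP)$.

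For part (a), write $m = \ExpRew{}{\cC}(\MP) > 0$. By the convergence above, for almost every path $\infpath$ there is an index $N$ (depending on $\infpath$) such that $\frac{1}{n}\wgt(\prefix{\infpath}{n}) > \frac{m}{2}$ for all $n \geqslant N$, whence $\wgt(\prefix{\infpath}{n}) > \frac{m}{2}\,n$. As the right-hand side tends to $+\infty$, this forces $\liminf_{n\to\infty}\wgt(\prefix{\infpath}{n}) = +\infty$ for almost every $\infpath$. By definition this means that almost every path is pumping, i.e.\ $\cC$ is (positively) pumping.

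Part (b) follows by the negation symmetry already exploited elsewhere in the paper: let $\cC^{-}$ be the Markov chain obtained from $\cC$ by multiplying every weight by $-1$. Then $\ExpRew{}{\cC^{-}}(\MP) = -\ExpRew{}{\cC}(\MP) > 0$, so by part (a) almost every path of $\cC^{-}$ has accumulated weight with $\liminf = +\infty$; translating back to $\cC$, almost every path satisfies $\limsup_{n\to\infty} \wgt(\prefix{\infpath}{n}) = -\infty$, i.e.\ $\cC$ is negatively pumping.

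The only genuinely substantive ingredient is the almost-sure convergence of the normalized accumulated weight to $\ExpRew{}{\cC}(\MP)$; this is the standard SLLN for additive functionals of a finite irreducible Markov chain, and once it is in hand both parts are immediate. The point to be careful about is that the definition of $\MP$ uses $\limsup$, so the proof should explicitly record that the ergodic theorem upgrades this upper average to a genuine almost-sure limit, so that $\ExpRew{}{\cC}(\MP)$ really is the a.s.\ limit of the normalized accumulated weights rather than merely a bound on its fluctuations.
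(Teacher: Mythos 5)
Your proof is correct and follows essentially the same route the paper relies on: the paper states this lemma as folklore without an explicit proof, but the identical argument — almost-sure convergence of $\frac{1}{n}\wgt(\prefix{\infpath}{n})$ to the expected mean payoff via the ergodic theorem, hence divergence of the accumulated weight when that value is nonzero — is exactly what is used in the proof of Lemma~\ref{mp-pos-wgt-div}. Your explicit remark that the $\limsup$ in the definition of $\MP$ is almost surely a genuine limit, and the reduction of (b) to (a) by negating weights, are both sound.
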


In what follows, if $s\in S$ then we write ``$\wgt$ until $s$''
to denote the random variable that assigns to each infinite path
$\infpath$ the accumulated weight $\wgt(\fpath)$ of the shortest
prefix $\fpath$ of $\infpath$ with $\last(\fpath) = s$, provided that
$\infpath \models \Diamond s$.  Note that ``$\wgt$ until $s$'' agrees
with the random variable $\accdiaplus s$ defined in the core of the
paper; we use here ``$\wgt$ until $s$'' instead, and also ``steps
until $s$'' defined below.  For the infinite paths $\infpath$ with
$\infpath \not\models \Diamond s$, ``$\wgt$ until $s$'' is undefined.
Thus, if $s,t\in S$ and $\sched$ is a scheduler with
$\Pr^{\sched}_{\cM,t}(\Diamond s)=1$ then
$\ExpRew{\sched}{t}(\textrm{``$\wgt$ until $s$''})$ stands for the
expected accumulated weight from $t$ until reaching $s$.
Similarly, ``steps until $s$'' 
denotes the random variable 
counting the number of steps until reaching state $s$.
Thus, if $\Pr^{\sched}_{\cM,t}(\Diamond s)=1$ then
$\ExpRew{\sched}{t}(\textrm{``steps until $s$''})$ stands for the
expected number of steps from $t$ until reaching $s$ with respect to
scheduler $\sched$.

\begin{lemma}[Quotient representation of expected mean payoff in MCs]
\label{lemma:scMC-exp-mp-quotient}
  Let $\cC=(S,P,\wgt)$ 
  be a strongly connected Markov chain. Then, for each state
  $s$ in $\cC$ we have:
  $$
    \ExpRew{}{\cC}(\MP) \ \ = \ \ 
    \begin{array}{rcl}
        \ \wgt(s) & + &
          \sum\limits_{t\in S} 
             P(s,t) \cdot \ExpRew{}{\cC,t}(\textrm{``$\wgt$ until $s$''}) \
        \\[0.5ex]
        \hline
        \\[-2.2ex] 
        1 \ & + & 
          \sum\limits_{t\in S} 
             P(s,t) \cdot \ExpRew{}{\cC,t}(\textrm{``steps until $s$''}) \
    \end{array} 
  $$
\end{lemma}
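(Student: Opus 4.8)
The plan is to recognize the right-hand side as the ratio of the expected weight accrued during one \emph{excursion} from $s$ back to $s$ to the expected length of such an excursion, and then to identify the mean payoff with this ratio by a renewal-reward argument. First, since $\cC$ is finite and strongly connected it is irreducible and positive recurrent; hence, started from any state, $s$ is revisited infinitely often almost surely and the expected return time to $s$ is finite. In particular, all conditional expectations $\ExpRew{}{\cC,t}(\textrm{``$\wgt$ until $s$''})$ and $\ExpRew{}{\cC,t}(\textrm{``steps until $s$''})$ are finite. Let $R_1$ and $L_1$ denote, respectively, the accumulated weight and the number of steps of the first excursion from $s$, i.e.\ along the path from $s$ up to its first return to $s$ (weights are charged upon leaving a state, so re-entering $s$ contributes no further weight). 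Conditioning on the first transition $s \to t$ and using the definitions of ``$\wgt$ until $s$'' and ``steps until $s$'' shows that the numerator of the claimed formula equals $\Expected[R_1]$ and its denominator equals $\Expected[L_1]$:
\[
  \Expected[R_1] \;=\; \wgt(s) + \sum_{t\in S} P(s,t)\cdot \ExpRew{}{\cC,t}(\textrm{``$\wgt$ until $s$''}), \qquad \Expected[L_1] \;=\; 1 + \sum_{t\in S} P(s,t)\cdot \ExpRew{}{\cC,t}(\textrm{``steps until $s$''}).
\]

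Next I would decompose almost every infinite path into successive excursions between visits to $s$. Fixing the start state $s$ (a finite transient prefix for an arbitrary start does not affect the limit), let $0 = \theta_0 < \theta_1 < \theta_2 < \cdots$ be the times of consecutive visits to $s$. By the strong Markov property the excursions are i.i.d.; their lengths $L_k = \theta_k - \theta_{k-1}$ and accrued weights $R_k$ are i.i.d.\ copies of $L_1$ and $R_1$, with finite means. Since $\theta_m = \sum_{k=1}^m L_k$ and $\wgt(\prefix{\infpath}{\theta_m}) = \sum_{k=1}^m R_k$, the strong law of large numbers gives
\[
  \frac{\wgt(\prefix{\infpath}{\theta_m})}{\theta_m} \;=\; \frac{\tfrac{1}{m}\sum_{k=1}^m R_k}{\tfrac{1}{m}\sum_{k=1}^m L_k} \;\xrightarrow[m\to\infty]{}\; \frac{\Expected[R_1]}{\Expected[L_1]} \qquad \text{almost surely.}
\]

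Finally I would interpolate from the embedded subsequence $(\theta_m)_m$ to all indices $n$. For $\theta_m \leq n < \theta_{m+1}$ the accumulated weights differ by at most $\wmax\, L_{m+1}$ and the indices by at most $L_{m+1}$; since the $L_k$ have finite mean, $L_{m+1}/m \to 0$ and hence $L_{m+1}/\theta_m \to 0$ almost surely, so $\wgt(\prefix{\infpath}{n})/n$ converges to the same limit $\Expected[R_1]/\Expected[L_1]$. Thus $\MP$ equals this constant almost surely, and taking expectations yields $\ExpRew{}{\cC}(\MP) = \Expected[R_1]/\Expected[L_1]$, which is the claimed quotient. The steps requiring the most care are establishing the i.i.d.\ structure and finiteness of the excursion means (positive recurrence of a finite irreducible chain) and the interpolation bound passing from return times to arbitrary $n$; the first-step identification of the numerator and denominator is routine.
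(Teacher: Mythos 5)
Your proof is correct. The paper itself gives only a one-line justification, appealing to the folklore fact that long-run state frequencies in a finite strongly connected Markov chain converge to the steady-state probabilities; your renewal-reward argument (i.i.d.\ excursions at successive visits to $s$, first-step identification of the numerator and denominator as the expected excursion weight and length, the strong law of large numbers, and the interpolation between consecutive return times) is precisely the standard derivation underlying that folklore fact, so it is essentially the same approach carried out in full, self-contained detail.
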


\begin{proof}
  The statement is a consequence of 
  the well-known fact that for 
  (finite-state) strongly connected Markov chains,
  the long-run frequencies of almost all paths 
  converge to the steady-state probabilities.
  (We suppose here the definition
  of the steady-state probability of state $s$ 
  as the Ces\`aro limit
  $\lim_{n \to \infty} \frac{1}{n+1} \cdot 
  \sum_{i=0}^n \Pr_{\cC,\iota}(\neXt^i s)$
  where $\iota$ is an arbitrary initial distribution.)
\end{proof}

\subsection{The Pumping Property in MDPs}
\label{appendix:pumping}
We now provide the proofs for Lemma \ref{lemma:pumping-ecs} and 
Lemma \ref{main:mp-pos-wgt-div}.

\begin{lemma}
\label{mp-pos-wgt-div}
    Each strongly connected MDP $\cM$
    with $\ExpRew{\max}{\cM}(\MP)>0$ 
    has a pumping MD-scheduler.
\end{lemma}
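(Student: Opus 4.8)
The plan is to exhibit a single memoryless deterministic scheduler that pumps from \emph{every} state. First I would invoke the standard fact recalled in the preliminaries that, since $\cM$ is strongly connected, there is an MD-scheduler $\sched$ whose induced Markov chain $\MC{\sched}$ has a single BSCC $\cB$ and satisfies $\Exp{}{\cB}(\MP) = \ExpRew{\max}{\cM}(\MP)$. By hypothesis $\ExpRew{\max}{\cM}(\MP) > 0$, so in particular $\Exp{}{\cB}(\MP) > 0$.

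Next, since $\cB$ is a finite strongly connected Markov chain with $\Exp{}{\cB}(\MP) > 0$, the folklore characterization of Lemma~\ref{weight-div-MC}(a) (equivalently Lemma~\ref{MC-weight-div-expected-mp-pos-or-neg}(a)) tells me that $\cB$ is pumping: started from any state of $\cB$, almost every infinite path $\infpath$ satisfies $\liminf_{n\to\infty}\wgt(\prefix{\infpath}{n}) = +\infty$.

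It remains to lift this from $\cB$ to an arbitrary starting state $s\in S$. I would argue that, under $\sched$ from any $s$, almost every path reaches $\cB$: in the finite chain $\MC{\sched}$ every state reaches a BSCC with probability $1$, and $\cB$ is the only one. For such a path $\infpath$, let $m$ be the (almost surely finite) first time it enters $\cB$. Then for $n\geqslant m$ one has $\wgt(\prefix{\infpath}{n}) = \wgt(\prefix{\infpath}{m}) + (\text{weight accumulated inside } \cB \text{ after step } m)$, where the suffix of $\infpath$ from position $m$ evolves as a path of $\cB$ started at its entry state. Since $\wgt(\prefix{\infpath}{m})$ is a fixed finite offset and the suffix pumps almost surely, adding this bounded offset leaves the $\liminf$ equal to $+\infty$. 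Hence $\liminf_{n\to\infty}\wgt(\prefix{\infpath}{n}) = +\infty$ almost surely, so $\sched$ is pumping from $s$; as $s$ was arbitrary, $\sched$ is a pumping MD-scheduler, which is exactly what is claimed.

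The main obstacle is the bookkeeping in this last step: making the ``finite offset plus pumping suffix stays pumping'' argument rigorous while handling the random entry time $m$ and the conditioning on the entry state of $\cB$. This is dealt with cleanly by the strong Markov property together with the fact that the pumping property of a strongly connected chain is independent of the initial state, so the conditional law of the suffix depends only on where $\infpath$ first enters $\cB$, and there pumping holds almost surely regardless of that entry state.
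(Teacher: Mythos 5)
Your proposal is correct and follows essentially the same route as the paper: both take the MD-scheduler that maximizes the expected mean payoff and has a single BSCC, and both conclude pumping from the fact that the induced chain's positive mean payoff forces the accumulated weight to diverge to $+\infty$ almost surely. The paper compresses your BSCC-plus-finite-offset lifting into the single observation that $\frac{1}{n}\wgt(\prefix{\infpath}{n})\to\ExpRew{\sched}{\cM}(\MP)>0$ for almost all $\sched$-paths, but the underlying argument is the same.
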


\begin{proof}
  Let $\sched$ be an MD-scheduler for $\cM$
  that maximizes the expected mean payoff and where the induced
  Markov chain has a single BSCC.
  Then:
  $$
    \lim_{n \to \infty} \ \frac{1}{n} \wgt(\prefix{\infpath}{n}) \ \ = \ \ 
    \ExpRew{\sched}{\cM}(\MP) \ \ > \ \ 0
  $$
  for almost all infinite $\sched$-paths $\infpath$.
  Hence, $\sched$ is pumping.
\end{proof}

Recall that that no scheduler can exceed the maximum expected mean payoff in MDPs:
\begin{lemma}[Folklore -- see, \eg \cite{Puterman}]
\label{folklore-mean-payoff}
  Let $\cM$ be a strongly connected MDP. Then   for each scheduler $\sched$ and each state $s$
  \[
   \Pr^{\sched}_{\cM,s}
     \bigl\{ \ \infpath \in \InfPaths \ : \ 
               \limsup_{n \to \infty} \ \frac{1}{n} \wgt(\prefix{\infpath}{n})
               \ \leqslant \ \ExpRew{\max}{\cM}(\MP) \ 
     \bigr\} \ \ = \ \ 1 \enspace.
  \]
\end{lemma}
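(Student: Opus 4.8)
The plan is to exploit the average-reward optimality (Bellman) equations and convert them into a supermartingale whose drift controls the accumulated weight. Since $\cM$ is strongly connected, the maximal expected mean payoff does not depend on the starting state; write $g = \ExpRew{\max}{\cM}(\MP)$. Standard mean-payoff theory for communicating MDPs (see \cite{Puterman,Kallenberg}) provides a bias function $h\colon S \to \Real$ such that, for every state-action pair $(s,\alpha)$,
\[
  \wgt(s,\alpha) + \sum_{t\in S} P(s,\alpha,t)\,h(t) \ \leqslant \ g + h(s),
\]
with equality attained by gain-optimal actions. First I would fix an arbitrary scheduler $\sched$ and state $s$ and, on the induced Markov chain, define the real-valued process
\[
  X_n \ = \ \wgt(\prefix{\infpath}{n}) \ - \ n\,g \ + \ h(\state{\infpath}{n}),
\]
where $\state{\infpath}{n}$ denotes the $(n{+}1)$-st state of $\infpath$.

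Next I would verify that $(X_n)_n$ is a supermartingale with respect to the natural filtration $(\mathcal{F}_n)_n$ generated by the path prefixes. Writing $s_n = \state{\infpath}{n}$ and averaging over the (possibly randomized) action $\alpha_n$ chosen by $\sched$ after the prefix $\prefix{\infpath}{n}$, the increment satisfies
\[
  \Expected\big[X_{n+1}-X_n \mid \mathcal{F}_n\big]
  \ = \
  \Expected_{\alpha_n}\Big[\,\wgt(s_n,\alpha_n) + \textstyle\sum_{t} P(s_n,\alpha_n,t)\,h(t) - g - h(s_n)\,\Big]
  \ \leqslant \ 0,
\]
since each term in the expectation over $\alpha_n$ is non-positive by the displayed optimality inequality. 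Moreover, because $S$ and $\Act$ are finite, both $\wgt$ and $h$ are bounded, so the increments $X_{n+1}-X_n$ are uniformly bounded by some constant $C$.

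The final step is a strong law for supermartingales with bounded increments. Decomposing $X_n - X_0 = \sum_{k<n}(X_{k+1}-X_k - d_k) + \sum_{k<n} d_k$ with $d_k = \Expected[X_{k+1}-X_k\mid \mathcal{F}_k] \leqslant 0$, the first sum is a martingale whose increments are bounded by $2C$, so by Azuma--Hoeffding together with Borel--Cantelli it grows sublinearly, i.e.\ $\frac1n\sum_{k<n}(X_{k+1}-X_k-d_k)\to 0$ almost surely, while the second sum is non-positive. Hence $\limsup_{n}\frac1n X_n \leqslant 0$ almost surely. Since $h$ is bounded, $\frac1n h(\state{\infpath}{n})\to 0$, and therefore
\[
  \limsup_{n\to\infty}\ \frac1n\,\wgt(\prefix{\infpath}{n})
  \ = \ g + \limsup_{n\to\infty}\frac1n X_n \ \leqslant \ g \ = \ \ExpRew{\max}{\cM}(\MP)
\]
almost surely, which is the claim. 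I expect the main obstacle to be not the probabilistic estimate but securing the input to it: one must invoke that a strongly connected MDP is communicating, so that the gain $g$ is a state-independent constant and the average-reward optimality equations admit a solution $(g,h)$ with the required inequality for \emph{every} enabled action. Once this structural fact is in place, the supermartingale bookkeeping and the law of large numbers are routine.
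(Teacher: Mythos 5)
The paper offers no proof of this lemma at all: it is stated as folklore with a pointer to Puterman, and the surrounding text only relies on the (also standard) facts that the optimal gain of a strongly connected MDP is state-independent and attained by an MD-scheduler. Your proposal therefore cannot match "the paper's proof", but it is a correct and essentially canonical way to discharge the claim. The two inputs you need are both secured: for a finite communicating MDP the average-reward optimality equation admits a solution $(g,h)$ with constant gain $g=\ExpRew{\max}{\cM}(\MP)$ and the inequality $\wgt(s,\alpha)+\sum_t P(s,\alpha,t)h(t)\leqslant g+h(s)$ valid for \emph{every} enabled action (Puterman, Ch.~8--9), and this is exactly what makes $X_n$ a supermartingale under an arbitrary, possibly randomized and history-dependent, scheduler --- the conditional expectation of the increment is a convex combination of non-positive terms, so no restriction to MD-schedulers is needed. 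The increments are uniformly bounded because $S$, $\Act$ are finite and $h$ is a fixed bounded function, so the Azuma--Hoeffding/Borel--Cantelli step (or, equivalently, the martingale strong law via Kronecker's lemma) gives $M_n/n\to 0$ almost surely, and the boundedness of $h$ kills the $\frac1n h(\state{\infpath}{n})$ term. One could alternatively argue via de Alfaro's theorem that $\lim(\infpath)$ is almost surely an end component together with a frequency/occupation-measure argument, but your supermartingale route is cleaner and self-contained; I see no gap.
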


By the results of de Alfaro \cite{Alfaro98Thesis}, for each scheduler $\sched$,
the limit of almost all infinite $\sched$-paths is an end component.
Recall that $\lim(\infpath)$, the limit of an infinite path $\infpath$,
is the set of state-action pairs that occur infinitely often in $\infpath$.
Hence, we get (see, \eg \citeapx{RanRasSan15}):

\begin{lemma}
\label{mp-negative-not-wgt-div} \
     Let $\cM$ be a strongly connected MDP  such that for some state $s_0$ in $\cM$
     \[
       \Pr^{\max}_{\cM,s_0}
       \bigl\{ \, \infpath \in \InfPaths \, : \, 
                 \limsup_{n \to \infty} \,
                      \wgt(\prefix{\infpath}{n}) =+\infty \,
       \bigr\} \ >  \ 0 \enspace.
     \]
      Then $\ExpRew{\max}{\cM}(\MP)\geqslant 0$.
\end{lemma}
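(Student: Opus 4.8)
The plan is to prove the contrapositive: I would show that whenever $\ExpRew{\max}{\cM}(\MP)<0$, the accumulated weight almost surely tends to $-\infty$ under \emph{every} scheduler, so that the event in the hypothesis is null for every scheduler and hence cannot have positive probability. Assume therefore $e\eqdef\ExpRew{\max}{\cM}(\MP)<0$, and let $\sched$ be an arbitrary scheduler and $s$ an arbitrary state. The starting point is Lemma~\ref{folklore-mean-payoff}, which guarantees that almost all $\sched$-paths $\infpath$ satisfy $\limsup_{n\to\infty}\frac{1}{n}\wgt(\prefix{\infpath}{n})\leqslant e$.

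The one step that deserves care is the passage from this bound on the Ces\`aro averages to the divergence of the raw sums. Fix a real number $c$ with $e<c<0$, which is possible precisely because $e<0$. For almost every path $\infpath$ the limsup bound provides an index $N$ such that $\frac{1}{n}\wgt(\prefix{\infpath}{n})<c$ for all $n\geqslant N$, whence $\wgt(\prefix{\infpath}{n})<c\,n$. Since $c<0$, the right-hand side tends to $-\infty$, so $\wgt(\prefix{\infpath}{n})\to-\infty$ and in particular $\limsup_{n\to\infty}\wgt(\prefix{\infpath}{n})=-\infty$. Consequently the set $\{\infpath\in\InfPaths:\limsup_{n\to\infty}\wgt(\prefix{\infpath}{n})=+\infty\}$ is a $\Pr^{\sched}_{\cM,s}$-null set.

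Since this holds for every scheduler $\sched$ and every state, no scheduler assigns positive probability to the event of the hypothesis; by the reading of $\Pr^{\max}(\cdot)>0$ as the existence of a witnessing scheduler, this contradicts the assumption $\Pr^{\max}_{\cM,s_0}\{\infpath:\limsup_{n}\wgt(\prefix{\infpath}{n})=+\infty\}>0$. Hence $e\geqslant 0$, as claimed. The argument is essentially immediate given Lemma~\ref{folklore-mean-payoff}; the only nonautomatic point is the elementary estimate turning a strictly negative asymptotic average into $\lim_{n}\wgt(\prefix{\infpath}{n})=-\infty$, and that I expect to be the sole, minor obstacle.
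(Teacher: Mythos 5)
Your proof is correct and matches the paper's intended argument: the paper derives this lemma as an immediate consequence of Lemma~\ref{folklore-mean-payoff} (citing de Alfaro's end-component result and prior work rather than spelling out the details), and your contrapositive — a strictly negative maximal expected mean payoff forces $\wgt(\prefix{\infpath}{n})<c\,n\to-\infty$ almost surely under every scheduler — is exactly the elementary estimate being invoked. No gaps.
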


\begin{corollary}
  \label{MP-pos-or-neg-implies-strict-wgt-div}
    Let $\cM$ be a strongly connected MDP $\cM$.
    Then: 
    \begin{enumerate}
    \item [(a)] 
      If\ \ $\ExpRew{\max}{\cM}(\MP)<0$ then $\cM$ is universally
      negatively pumping, \ie for each scheduler
      $\sched$ and each state~$s$:
      $$
       \Pr^{\sched}_{\cM,s}
       \bigl\{ \ \infpath \in \InfPaths \ : \ 
             \limsup_{n \to \infty} \ \wgt(\prefix{\infpath}{n})
             \ = \ -\infty
       \bigr\} \ \ = \ \ 1
      $$
    \item [(b)] 
      If\ \ $\ExpRew{\min}{\cM}(\MP)>0$ then $\cM$ is universally pumping, \ie
      for each scheduler
      $\sched$ and each state~$s$:
      $$
       \Pr^{\sched}_{\cM,s}
       \bigl\{ \ \infpath \in \InfPaths \ : \ 
             \liminf_{n \to \infty} \ \wgt(\prefix{\infpath}{n})
             \ = \ +\infty
       \bigr\} \ \ = \ \ 1
      $$
    \end{enumerate}
\end{corollary}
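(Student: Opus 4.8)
The plan is to derive both parts from the folklore upper bound on the long-run average weight, Lemma~\ref{folklore-mean-payoff}, together with an elementary rescaling argument; part~(b) then follows from part~(a) by the standard min/max duality obtained by negating all weights.

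For part~(a), I would set $c = \ExpRew{\max}{\cM}(\MP) < 0$. Lemma~\ref{folklore-mean-payoff} guarantees that for every scheduler $\sched$ and every state $s$, almost every $\sched$-path $\infpath$ satisfies $\limsup_{n\to\infty} \tfrac1n \wgt(\prefix{\infpath}{n}) \le c$. On this almost-sure event I fix any $\varepsilon$ with $0 < \varepsilon < -c$; then there is $N$ with $\tfrac1n\wgt(\prefix{\infpath}{n}) \le c + \varepsilon$ for all $n \ge N$, and since $c + \varepsilon < 0$ this forces $\wgt(\prefix{\infpath}{n}) \le (c+\varepsilon)\, n \to -\infty$. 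Hence $\limsup_{n\to\infty}\wgt(\prefix{\infpath}{n}) = -\infty$ almost surely, which is exactly the claim of~(a).

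For part~(b), I would pass to the MDP $\cM^{-}$ obtained by replacing $\wgt$ with $-\wgt$. Since $\limsup_{n} \tfrac{-\wgt(\prefix{\infpath}{n})}{n} = -\liminf_{n} \tfrac{\wgt(\prefix{\infpath}{n})}{n}$, the relevant duality is $\ExpRew{\max}{\cM^{-}}(\MP) = -\ExpRew{\min}{\cM}(\MP)$; under the hypothesis $\ExpRew{\min}{\cM}(\MP) > 0$ this gives $\ExpRew{\max}{\cM^{-}}(\MP) < 0$. Applying part~(a) to $\cM^{-}$ yields, for every scheduler and every state, almost surely $\limsup_{n} (-\wgt(\prefix{\infpath}{n})) = -\infty$, \ie $\liminf_{n} \wgt(\prefix{\infpath}{n}) = +\infty$, as required.

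The only delicate point is the duality $\ExpRew{\max}{\cM^{-}}(\MP) = -\ExpRew{\min}{\cM}(\MP)$, since $\MP$ is defined with a $\limsup$ whereas negation turns a $\limsup$ into a $\liminf$. I would resolve this by invoking the fact recalled in the preliminaries that both extremal expected mean payoffs are attained by MD-schedulers: for the finite Markov chain induced by an MD-scheduler the running averages converge almost surely, so $\limsup$ and $\liminf$ coincide and the mean-payoff value simply changes sign under negation of weights; optimality by MD-schedulers then lets one interchange the supremum over $\cM^{-}$ with the infimum over $\cM$. This duality is the main (and essentially the only) obstacle; the remaining steps are the elementary rescaling used in~(a).
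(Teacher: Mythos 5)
Your proof is correct and is essentially the intended derivation: the paper states this as an immediate corollary of Lemma~\ref{folklore-mean-payoff}, and the rescaling argument in your part~(a) together with the weight-negation reduction for part~(b) is exactly how that gap is meant to be filled. Your care with the $\limsup$/$\liminf$ asymmetry in the duality $\ExpRew{\max}{\cM^{-}}(\MP)=-\ExpRew{\min}{\cM}(\MP)$, resolved via MD-optimality and almost-sure convergence of averages in the induced finite Markov chain, matches how the paper itself treats $\ExpRew{\min}{\cM}(\MP)$ elsewhere (e.g.\ in the proof of Lemma~\ref{universally-pumping}).
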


\poswgtdiv*

\begin{proof}
  Part (a) follows from Lemma~\ref{mp-pos-wgt-div} as each pumping
  scheduler is weight-divergent.
  Part (b) is an immediate consequence of 
  Corollary \ref{MP-pos-or-neg-implies-strict-wgt-div}.
\end{proof}

\lempumpingec*

For the proof, the statement of Lemma \ref{lemma:pumping-ecs} is split into the following two
Lemmas \ref{appendix:pumping-ecs} and \ref{universally-pumping}:

\begin{lemma}%
  \label{appendix:pumping-ecs}
  Let $\cM$ be a strongly connected MDP.
  Then, the following three statements are equivalent:
  \begin{enumerate}
  \item [(i)]
     $\cM$ is pumping.
  \item [(ii)] 
     $\cM$ has a pumping MD-scheduler.
  \item [(iii)]
     $\ExpRew{\max}{\cM}(\MP)>0$.
  \end{enumerate}
\end{lemma}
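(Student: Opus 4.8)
The plan is to prove the cycle of implications, with only one direction requiring real work. The implication (iii) $\Rightarrow$ (ii) is exactly Lemma~\ref{mp-pos-wgt-div}, which already produces a pumping MD-scheduler whenever $\ExpRew{\max}{\cM}(\MP)>0$. The implication (ii) $\Rightarrow$ (i) is immediate, since an MD-scheduler is in particular a scheduler, so a pumping MD-scheduler witnesses that $\cM$ is pumping. Thus the only substantial direction is (i) $\Rightarrow$ (iii), which I would establish in two steps: first the inequality $\ExpRew{\max}{\cM}(\MP)\geqslant 0$, and then strictness.

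For the first step, assume $\cM$ has a pumping scheduler $\sched$, so that almost all $\sched$-paths $\infpath$ satisfy $\liminf_{n} \wgt(\prefix{\infpath}{n})=+\infty$; in particular $\limsup_{n} \wgt(\prefix{\infpath}{n})=+\infty$ with probability~$1$ from every state. Lemma~\ref{mp-negative-not-wgt-div} then immediately yields $\ExpRew{\max}{\cM}(\MP)\geqslant 0$, so it only remains to exclude the boundary case $\ExpRew{\max}{\cM}(\MP)=0$. For that I would invoke the optimal bias (relative value) function of the communicating MDP $\cM$: by standard MDP theory (e.g.~\cite{Puterman}), since $\cM$ is strongly connected with optimal gain $g^{*}=\ExpRew{\max}{\cM}(\MP)$, there is a function $h\colon S\to\Real$ satisfying $\wgt(s,\alpha)+\sum_{t}P(s,\alpha,t)\,h(t)\leqslant g^{*}+h(s)$ for every state-action pair $(s,\alpha)$. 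Assuming $g^{*}=0$ and working in the Markov chain induced by $\sched$, consider the process $M_n=\wgt(\prefix{\infpath}{n})+h(\state{\infpath}{n})$. The displayed inequality shows that $M_n$ is a supermartingale, and since the weights and the values of $h$ range over finite sets, its increments are uniformly bounded.

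The contradiction then comes from the fact that a supermartingale with bounded increments cannot tend to $+\infty$ on a set of positive probability. Concretely, since $h$ is bounded, the pumping assumption gives $M_n\to+\infty$ almost surely; yet for each constant $c$ the process stopped at $\theta_c=\inf\{n:M_n\leqslant -c\}$ is a supermartingale bounded from below, hence converges almost surely to a finite limit, while $\Pr^{\sched}_{\cM,s}(\theta_c=\infty)\to 1$ as $c\to\infty$ because any path with $M_n\to+\infty$ has $\inf_n M_n>-\infty$ and so satisfies $\theta_c=\infty$ once $c$ is large enough. On the event $\{\theta_c=\infty\}$ the stopped and unstopped processes coincide, so $M_n$ would have to both converge to a finite limit and diverge to $+\infty$, which is absurd; hence $g^{*}>0$. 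The main obstacle is precisely this strictness argument: obtaining $g^{*}\geqslant 0$ is cheap via the already-stated mean-payoff bounds, but ruling out the gain-zero case genuinely requires the supermartingale/bias argument, since a merely sublinear upward drift is perfectly compatible with $\limsup=+\infty$ and can be ruled out as forcing $\liminf=+\infty$ only when the optimal gain is strictly positive.
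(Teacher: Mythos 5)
Your proposal is correct, and for the two easy implications ((iii)\,$\Rightarrow$\,(ii) via Lemma~\ref{mp-pos-wgt-div}, (ii)\,$\Rightarrow$\,(i) trivially) it coincides with the paper. For the substantial direction (i)\,$\Rightarrow$\,(iii) you take a genuinely different route. The paper, after obtaining $\ExpRew{\max}{\cM}(\MP)\geqslant 0$, argues \emph{constructively}: it extracts an end component $\cE$ on which pumping occurs with positive probability, fixes a return state $s$, and builds a finite-memory scheduler that alternates between an ``attempt'' phase (borrowed from a residual of the pumping scheduler, which returns to $s$ with weight at least $\Delta$ within $n$ steps with probability $>p$) and a ``recovery'' phase driven by a mean-payoff-optimal MD-scheduler; choosing $\Delta$ with $p\Delta+(1{-}p)E>0$ and applying the quotient representation of the mean payoff (Lemma~\ref{lemma:scMC-exp-mp-quotient}) yields a concrete scheduler with strictly positive expected mean payoff. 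You instead rule out the boundary case $g^{*}=0$ by contradiction: you invoke the average-reward optimality equation for communicating MDPs to obtain a superharmonic bias $h$, observe that $M_n=\wgt(\prefix{\infpath}{n})+h(\state{\infpath}{n})$ is then a supermartingale with uniformly bounded increments under \emph{any} scheduler, and show via the stopped-process/convergence argument that such a process cannot tend to $+\infty$ with positive probability, contradicting pumping. Your argument is shorter and cleaner, and it handles arbitrary (history-dependent, randomized) pumping schedulers directly without the two-mode construction; the price is that it imports heavier off-the-shelf machinery (existence of the bias for communicating MDPs, the supermartingale convergence theorem) and is non-constructive, whereas the paper's proof exhibits an explicit finite-memory witness with positive mean payoff, which is in the spirit of the algorithmic developments elsewhere in the paper. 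Both proofs are complete; the one genuinely delicate point in yours --- that the events $\{\theta_c=\infty\}$ exhaust an almost-sure set as $c\to\infty$ and that the stopped supermartingale is bounded below thanks to the bounded increments --- is handled correctly.
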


\begin{proof}
 The implication ``(ii) $\Longrightarrow$ (i)'' is trivial,
 while ``(iii) $\Longrightarrow$ (ii)'' has been shown
 in Lemma \ref{mp-pos-wgt-div}.
 It remains to prove the implication 
 ``(i) $\Longrightarrow$ (iii)''.
 Suppose $\cM$ is pumping and let $\sched$ be a pumping scheduler.
 Then Corollary~\ref{MP-pos-or-neg-implies-strict-wgt-div}~(a)
 implies $\ExpRew{\max}{\cM}(\MP)\geqslant 0$.
 Let $\Gamma$ denote the set of pumping paths in $\cM$, \ie
 \[
     \Gamma \ \ = \ \
     \bigl\{ \ \infpath \in \InfPaths \ : \
         \liminf_{n \to \infty} \ \wgt(\prefix{\infpath}{n}) = +\infty \
   \ \bigr\} \enspace.
 \]
 If $\cE$ is an end component then
 let $\Gamma_{\cE}=\{\infpath \in \Gamma : \lim(\infpath)=\cE\}$.
 For each end component $\cE$ where $\Pr^{\sched}_{\cM,s_0}(\Gamma_{\cE})>0$
 for some state $s_0$ we have:
 $
   \sup_{\sched'} \Pr^{\sched'}_{\cM,s}(\Gamma_{\cE}) = 1,
 $
 where $s$ is an arbitrary state in $\cE$ and 
 $\sched'$ ranges over all residual schedulers 
 $\residual{\sched}{\fpath}$ of $\sched$ with 
 $\first(\fpath)=s_0$ and $\last(\fpath)=s$.

 We pick an end component $\cE$ such that
 $\Pr^{\sched}_{\cM,s_0}(\Gamma_{\cE})>0$ for some state $s_0$ and
 an MD-scheduler $\usched$ for $\cE$ with a single BSCC $\cB$
 such that $\ExpRew{\usched}{\cE}(\MP)\geqslant 0$.
 (Note that $\ExpRew{\max}{\cE}(\MP)\geqslant 0$ 
  by Lemma \ref{folklore-mean-payoff}.)
 Let $s$ be a state of $\cB$ and
 $
    E = 
    \min_{t\in \cE} 
       \ExpRew{\usched}{\cE,t}(\text{``$\wgt$ until $s$''}).
 $
 We choose a positive integer $\Delta \in \Nat$, 
 $p\in ]0,1[$ and a residual scheduler $\sched'$ of $\sched$ 
 such that:
\[%
   p\Delta  +  (1{-}p)E \ > \ 0 
   \qquad \textrm{and} \qquad
   \Pr^{\sched'}_{\cM,s}
    \big(\, 
      \neXt (\Diamond (s \wedge (\accwgt \geqslant \Delta))) \, 
    \big)
   \ > \ 
   p \enspace.
\]
 Let $\Pi$ denote the set of $\sched'$-paths $\fpath$ from $s$ to $s$
 such that $\wgt(\fpath) \geqslant \Delta$. For $n\in \Nat$, 
 let $\Pi_n$ be the set of
 paths $\fpath \in \Pi$ such that $|\fpath|\leqslant n$.
 We pick some $n \in \Nat$ such that
 $\sum_{\fpath\in \Pi_n} \Pr^{\sched'}_{\cM,s}(\fpath) \ >  \ p$
 (possible due to the choice of $\Delta$, $p$, and $\sched'$).
 
 Let $\tsched$ be the following scheduler operating in two modes:
 normal mode and recovery mode.
 In its normal mode, $\tsched$ attempts to generate a path in $\Pi_n$
 by mimicking $\sched'$.
 If it fails, \ie if the path $\fpath$ that has been generated since the
 last switch from recovery to normal mode is not a prefix of some
 path $\fpath \in \Pi_n$, then $\tsched$ switches to recovery mode 
 where it behaves as $\usched$ until
 state $s$ has been reached. As soon as $s$ has been reached in recovery mode, 
 $\tsched$ switches back to normal mode and attempts to generate a path
 $\fpath \in \Pi_n$.
 If $\tsched$ in normal mode has generated a path $\fpath \in \Pi_n$ then it
 keeps in normal mode and restarts to attempt to generate a path in $\Pi_n$.

 This scheduler $\tsched$ is pumping and the memory requirements are 
 finite (as $\Pi_n$ is finite).
 According to Lemma \ref{lemma:scMC-exp-mp-quotient} applied to the
 Markov chain induced by $\tsched$, we obtain:
 $$
   \ExpRew{\tsched}{\cM,s}(\MP) 
    \ \ \geqslant \ \ \frac{p \Delta + (1{-}p)E}{c} \ \ > \ \ 0,
 $$
 where $c$ is the expected number of steps under $\tsched$ to return to $s$
 from $s$ in normal mode.
 Hence, $\ExpRew{\max}{\cM}(\MP) >0$.
\end{proof}

\begin{lemma}
\label{universally-pumping}
  Let $\cM$ be a strongly connected MDP.
  Then, $\cM$ is universally pumping
    iff\ \ $\ExpRew{\min}{\cM}(\MP) > 0$.
\end{lemma}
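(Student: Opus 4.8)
The plan is to prove the two implications separately, leaning on the maximal/minimal mean-payoff characterisations already established. The direction ``$\ExpRew{\min}{\cM}(\MP)>0 \Rightarrow \cM$ universally pumping'' is immediate: it is precisely the content of Corollary~\ref{MP-pos-or-neg-implies-strict-wgt-div}(b), which guarantees that every scheduler $\sched$ and every state $s$ satisfy $\Pr^{\sched}_{\cM,s}\{\infpath : \liminf_{n\to\infty}\wgt(\prefix{\infpath}{n})=+\infty\}=1$, i.e. every scheduler is pumping, which is exactly universal pumping.

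For the converse I would argue by contraposition and exhibit a single witness that fails to be pumping. Assume $\ExpRew{\min}{\cM}(\MP)\leqslant 0$. Using standard mean-payoff theory (the minimising analogue of the fact recalled in the preliminaries, obtained by negating all weights) I would pick an MD-scheduler $\sched$ whose induced Markov chain has a single BSCC $\cB$ and satisfies $\ExpRew{\sched}{\cM}(\MP)=\ExpRew{\min}{\cM}(\MP)\leqslant 0$. Since $\cB$ is the unique BSCC, the expected mean payoff of $\sched$ coincides with that of $\cB$, so $\ExpRew{}{\cB}(\MP)\leqslant 0$.

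Next I would transfer the Markov-chain folklore (Lemma~\ref{gambling-MC} together with Lemma~\ref{MC-weight-div-expected-mp-pos-or-neg}) to this BSCC. If $\ExpRew{}{\cB}(\MP)<0$ then $\cB$ is negatively pumping, so $\liminf_{n}\wgt(\prefix{\infpath}{n})=-\infty$ almost surely; if $\ExpRew{}{\cB}(\MP)=0$ then $\cB$ is either a $0$-BSCC (accumulated weight bounded) or gambling ($\liminf=-\infty$). In every case almost all paths of the induced chain eventually enter $\cB$ and thereafter have $\liminf_{n}\wgt(\prefix{\infpath}{n})\neq +\infty$; prepending the finite weight accumulated along the prefix before $\cB$ is entered cannot change this. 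Hence from any state $s$ we get $\Pr^{\sched}_{\cM,s}\{\infpath : \liminf_{n}\wgt(\prefix{\infpath}{n})=+\infty\}<1$, so $\sched$ is not pumping and $\cM$ is not universally pumping.

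The only delicate point is the very first step of the converse, namely securing an optimal MD-scheduler with a single BSCC attaining $\ExpRew{\min}{\cM}(\MP)$; this is where I rely on the standard mean-payoff results quoted in the preliminaries. Everything afterwards is a routine translation of the strongly-connected Markov-chain characterisations to that BSCC, together with the observation that adding a finite-weight prefix leaves the $\liminf$ of the accumulated weight unchanged.
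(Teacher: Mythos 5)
Your proof is correct and follows essentially the same route as the paper: the ``$\Longleftarrow$'' direction is delegated to Corollary~\ref{MP-pos-or-neg-implies-strict-wgt-div}(b), and the ``$\Longrightarrow$'' direction proceeds by contraposition via an MD-scheduler minimizing the expected mean payoff. If anything, your treatment is slightly more careful than the paper's one-line conclusion, since you explicitly handle the case $\ExpRew{}{\cB}(\MP)=0$ by invoking the Markov-chain dichotomy (0-BSCC vs.\ gambling) to rule out $\liminf_n\wgt(\prefix{\infpath}{n})=+\infty$.
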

    
\begin{proof}
The implication ``$\Longleftarrow$'' has been stated in
Corollary \ref{MP-pos-or-neg-implies-strict-wgt-div}~(b).

To prove ``$\Longrightarrow$'', we
assume that 
the minimal expected mean payoff in $\cM$ is nonpositive, 
(\ie $\ExpRew{\min}{\cM}(\MP) \leqslant 0$)
and
show that $\cM$ is not universally pumping.
Pick an MD-scheduler $\sched$ minimizing the expected mean payoff in $\cM$.
Then, $\ExpRew{\sched}{\cM}(\MP) \leqslant 0$.
But then, the limit
$\lim_{n \to \infty} \ \wgt(\prefix{\infpath}{n})$
exists for almost all $\sched$-paths and equals 
$\ExpRew{\sched}{\cM}(\MP)$.
Hence, $\cM$ is not universally pumping.
\end{proof}

\subsection{Properties of 0-ECs}

\label{sec:sero-end-components}
\label{sec:zero-end-components}

We establish some properties of 0-ECs necessary to prove
Theorems~\ref{thm:checking-zero-EC}
and to establish an algorithm for computing all states
belonging to some 0-EC (Lemma \ref{mincredit-ZeroEC}).

\subsubsection{Maximal 0-ECs}

\label{sec:max-0-EC}
Let $\cE_1$ and $\cE_2$ be two 0-ECs.
If $\cE_1$ and $\cE_2$ are weight-divergent, 
then so is $\cE_1 \cup \cE_2$.
Thus, any weight-divergent end component is contained 
in a maximal end-component that is weight-divergent.
The same holds for pumping end components or for gambling end components.

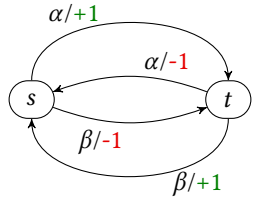
\begin{figure}[ht]
	\centering%

\begin{tikzpicture}
	\node[state] (s) {$s$};
	\node[state,right=20mm of s] (t) {$t$};
	\path%
		(s) edge[ptran, bend left=90] node[above, pos=0.3]{$\alpha$/\posw{1}} (t)
		(t) edge[ptran, bend right=20] node[above, pos=0.25]{$\alpha$/\negw{1}} (s)
		(s) edge[ptran, bend right=20] node[below, pos=0.3]{$\beta$/\negw{1}} (t)
		(t) edge[ptran, bend left=90] node[below, pos=0.25]{$\beta$/\posw{1}} (s);
\end{tikzpicture}
 	\caption{An example showing that the union of 0-ECs is not a 0-EC.}
\label{fig:union-of-zero-ecs}
\end{figure}

However, in general, the union of 0-ECs is not a 0-EC, and there are
MDPs with maximal end components that are not a 0-EC but contain
0-ECs. This is the case of the MDP $\cM$ depicted in
Figure~\ref{fig:union-of-zero-ecs}. %
The union of the 0-ECs $\cE_{\alpha}=\{(s,\alpha),(t,\alpha)\}$ and
 $\cE_{\beta}=\{(s,\beta),(t,\beta)\}$ is not a 0-EC.
However, we have:

\begin{lemma}[Union of 0-ECs]
 \label{lemma:union-of-0-EC}
  Let $\cM$ be a strongly connected MDP with $\ExpRew{\max}{\cM}(\MP)=0$,
  and let $\cE_1$ and $\cE_2$ be 0-ECs that have at least one common
  state. Then, $\cE_1 \cup \cE_2$ is a 0-EC too.
\end{lemma}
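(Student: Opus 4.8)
The plan is to reduce the claim to the agreement of two natural potential functions and to use the mean-payoff hypothesis only to force this agreement. First I would record the standard fact that a $0$-EC $\cE$ with state set $T$ carries a \emph{potential} $f_\cE\colon T\to\Integer$: fix a reference state $s_0\in T$ and set $f_\cE(v)$ to be the weight of any path from $s_0$ to $v$ inside $\cE$. This is well defined because any two such paths, completed by a fixed path back to $s_0$, yield cycles of weight $0$, and it satisfies $f_\cE(t)=f_\cE(s)+\wgt(s,\alpha)$ for every $(s,\alpha)\in\cE$ and every successor $t$ of $(s,\alpha)$. The decisive consequence is \emph{path-independence inside a $0$-EC}: the weight accumulated by \emph{any} path from $s$ to $v$ that stays in $\cE$ equals $f_\cE(v)-f_\cE(s)$, irrespective of how the probabilistic transitions branch.

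Next I would verify that $\mathcal{U}=\cE_1\cup\cE_2$ (as a set of state-action pairs, with enabled actions $\ActEC(s)=\ActEC_1(s)\cup\ActEC_2(s)$) is itself an end component: the action-closure condition is inherited pairwise, and strong connectivity holds since both $\cE_i$ are strongly connected and share the state $s^*$, so every state communicates with $s^*$ and hence with every other state. Normalising $f_1(s^*)=f_2(s^*)=0$, the whole lemma then reduces to showing $f_1=f_2$ on $T_1\cap T_2$: if this holds, the map $f$ that equals $f_i$ on $T_i$ is a well-defined potential for $\mathcal{U}$ (each transition of $\mathcal{U}$ lies in some $\cE_i$, on which $f$ restricts to $f_i$), so every cycle of $\mathcal{U}$ telescopes to weight $0$ and $\mathcal{U}$ is a $0$-EC.

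The heart of the argument---and the only place the hypothesis $\Exp{\max}{\cM}(\MP)=0$ enters---is proving this agreement. Suppose toward a contradiction that $f_1(u)\neq f_2(u)$ for some common state $u\in T_1\cap T_2$, say $f_1(u)>f_2(u)$, and set $W=f_1(u)-f_2(u)>0$. I would build a two-mode finite-memory scheduler $\sched$ on $\mathcal{U}$ that repeatedly loops between $s^*$ and $u$: from $s^*$ it uses a proper scheduler inside $\cE_1$ to reach $u$ almost surely, and from $u$ a proper scheduler inside $\cE_2$ to return to $s^*$ almost surely (both exist in the finite strongly connected $\cE_i$). By path-independence the first phase accumulates \emph{exactly} $f_1(u)$ and the second \emph{exactly} $-f_2(u)$, so each completed loop gains exactly $W$ while taking a finite expected number of steps. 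By the quotient representation of the mean payoff (Lemma~\ref{lemma:scMC-exp-mp-quotient}, applied at the positive-recurrent state $s^*$ of the induced Markov chain), $\Exp{\sched}{\mathcal{U},s^*}(\MP)=W/\bar L>0$, hence $\Exp{\max}{\mathcal{U}}(\MP)>0$. But $\mathcal{U}$ is an end component of $\cM$, so any scheduler staying in $\mathcal{U}$ is a scheduler of $\cM$ and, $\cM$ being strongly connected, $\Exp{\max}{\mathcal{U}}(\MP)\le\Exp{\max}{\cM}(\MP)=0$---a contradiction.

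The main obstacle is precisely this mean-payoff step: a generic positive cycle in a strongly connected MDP does \emph{not} guarantee positive maximal mean payoff, since branching off the cycle can incur unbounded cost relative to the rare successful traversals (indeed the union example of Fig.~\ref{fig:union-of-zero-ecs}, where $\Exp{\max}{\cM}(\MP)>0$, shows the lemma genuinely fails without the hypothesis). What rescues the construction is the $0$-EC structure: inside each $\cE_i$ the accumulated weight is path-independent, so reaching $u$ (resp.\ $s^*$) almost surely---rather than along one fragile path---still delivers the exact deterministic gain $f_1(u)$ (resp.\ $-f_2(u)$) per phase. I would take care to phrase the looping scheduler so that $s^*$ is positive-recurrent in the induced Markov chain, making the renewal-reward/quotient computation apply cleanly.
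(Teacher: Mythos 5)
Your proposal is correct and follows essentially the same route as the paper's proof: both reduce the claim to the agreement of the path-weight potentials of $\cE_1$ and $\cE_2$ on their common states, and derive a contradiction from a disagreement by building a two-mode finite-memory scheduler that alternates between the two 0-ECs, gaining a fixed positive amount per loop, so that the quotient representation of the mean payoff (Lemma~\ref{lemma:scMC-exp-mp-quotient}) forces $\Exp{\max}{\cM}(\MP)>0$. Your write-up is if anything slightly more careful than the paper's in spelling out why the union is an end component and why agreement of the normalised potentials yields the telescoping zero-weight of all cycles.
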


\begin{proof}
  Let $i\in \{1,2\}$. 
  As $\cE_i$ is a 0-EC, for each pair $(s,t)$ of states in $\cE_i$
  there exists an integer $w_i(s,t)$ such that all paths from $s$ to $t$ in
  $\cE_i$ have weight $w_i(s,t)$. Clearly, $w_i(t,s)=-w_i(s,t)$.

  The union $\cE = \cE_1 \cup \cE_2$ is an end component of $\cM$.
  Hence, $\ExpRew{\max}{\cE}(\MP)=0$.
  Suppose by contradiction that $\cE$ is not a 0-EC.
  Then, there exist two states $s$ and $t$ in $\cE$ such that
  $w_1(s,t) \not= w_2(s,t)$, say $w_1(s,t) > w_2(s,t)$.
  For $i=1,2$, let $\sched_i$ be an MD-scheduler for $\cE_i$ such that
  $s$ and $t$ belong to a BSCC $\cB_i$ of $\sched_i$.
  We now combine $\sched_1$ and $\sched_2$ to a scheduler $\sched$ for $\cE$.
  $\sched$ alternates between two modes. Starting in mode 1, $\sched$
  behaves as $\sched_1$ until state $t$ is reached. It then switches to 
  mode 2 where $\sched$ behaves as $\sched_2$ until state $s$ is reached,
  in which case it switches back to mode 1.

  $\sched$ is a finite-memory scheduler. Hence, it induces a finite 
  Markov chain $\cC_{\sched}$.
  We now apply Lemma \ref{lemma:scMC-exp-mp-quotient} 
  to $\cC_{\sched}$ and obtain:
  $$
    \ExpRew{\sched}{\cE}(\MP) \ \ = \ \ 
    \begin{array}{rcl}
        \ \wgt(s) & + &
          \sum\limits_{u\in S} 
             P(s,\alpha,u) \cdot 
             \ExpRew{\sched}{\cE,u}(\textrm{``$\wgt$ until $s$''}) \
        \\[0.5ex]
        \hline
        \\[-2.2ex] 
        1 \ & + & 
          \sum\limits_{u\in S} 
             P(s,\alpha,u) \cdot 
             \ExpRew{\sched}{\cE,u}(\textrm{``steps until $s$''}) \
    \end{array} 
  $$
  where $\alpha=\sched(s)$.
  By definition of $\sched$, for each state $u$ with
  $P(s,\alpha,u)>0$:
\[
    \ExpRew{\sched}{\cE,u}(\textrm{``$\wgt$ until $s$''}) 
     \ =  \
    \ExpRew{\sched_1}{\cE,u}(\textrm{``$\wgt$ until $t$''}) 
    \ + \ 
    \ExpRew{\sched_2}{\cE,t}(\textrm{``$\wgt$ until $s$''}) 
=
    w_1(s,t) \, + \, w_2(t,s) 
    \ \ =  \ \ 
    w_1(s,t) \, - \, w_2(s,t) 
    \ \  > \ \ 0 \enspace.
\]
  Hence, $\ExpRew{\sched}{\cE}(\MP) >0$, which contradicts 
$\ExpRew{\max}{\cM}(\MP) =0$.
\end{proof}

Thus, each MDP as in Lemma~\ref{lemma:union-of-0-EC} has finitely many
\emph{maximal 0-ECs}, \ie 0-ECs $\cE$ such that there is no
0-EC $\cE'$ with $\cE \subseteq \cE'$ and $\cE \not= \cE'$.
Maximal 0-ECs are non-overlapping in the sense that they do not share
any state.
A further consequence of the existence of maximal 0-ECs is that
whenever $\cE_1, \cE_2$ are 0-ECs that share two states $s$ and $t$
then the weights of all paths from $s$ to $t$ in $\cE_1$ and $\cE_2$
are the same.

\subsubsection{Criterion for 0-BSCCs}

\label{sec:criterion-0-BSCC}

We start by the following observation for strongly connected Markov chains.

\begin{lemma}[Criterion for 0-BSCCs]
\label{lemma:BSCC-max-weight}
  Let $\cC$ be a strongly connected Markov chain with
  $\ExpRew{}{\cC}(\MP)=0$ such that
  \begin{equation}
    \label{dagger}
    \ExpRew{}{\cC,t}(\textrm{``$\wgt$ until $s$''})
    \ \ = \ \
    \ExpRew{}{\cC,u}(\textrm{``$\wgt$ until $s$''})
    \tag{$\dagger$}
  \end{equation}
  for all states $s,t,u$ in $\cC$ with
  $P(s,t)>0$, $P(s,u)>0$.
  Then, $\wgt(\cycle)=0$ for all cycles $\cycle$ in $\cC$.
\end{lemma}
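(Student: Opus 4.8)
The plan is to introduce a \emph{bias} (potential) function on the states and to show that the hypothesis~$(\dagger)$ forces the one-step weights to be exact differences of this potential, so that every cycle telescopes to weight~$0$. Fix a reference state $s_0$ and set $h(v) = \ExpRew{}{\cC,v}(\textrm{``$\wgt$ until $s_0$''})$ for every state $v$; this is well defined and finite because $\cC$ is finite and strongly connected, so $s_0$ is reached almost surely with finite expected hitting time. A first-step analysis gives $h(v) = \wgt(v) + \sum_u P(v,u)\, h(u)$ for $v\neq s_0$, together with $h(s_0)=0$. Applying Lemma~\ref{lemma:scMC-exp-mp-quotient} at $s_0$ and using $\ExpRew{}{\cC}(\MP)=0$, the numerator of the quotient must vanish, i.e. $\wgt(s_0) + \sum_u P(s_0,u)\, h(u) = 0 = h(s_0)$. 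Hence the Poisson-type identity
\[
  \wgt(v) + \sum_u P(v,u)\, h(u) \ = \ h(v)
\]
holds for \emph{all} states $v$, including the reference state.

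Next I would translate~$(\dagger)$ into a statement about $h$. For an arbitrary target state $s$, write $g_s(v) = \ExpRew{}{\cC,v}(\textrm{``$\wgt$ until $s$''})$. The function $g_s$ is the unique solution of $g_s(s)=0$ and $g_s(v)=\wgt(v)+\sum_u P(v,u)\,g_s(u)$ for $v\neq s$; uniqueness holds because the substochastic matrix obtained by deleting the row and column of $s$ has spectral radius below~$1$ (the chain hits $s$ almost surely). A direct substitution using the Poisson identity shows that the map $v\mapsto h(v)-h(s)$ satisfies exactly this system, whence $g_s(v)=h(v)-h(s)$ for all $v$. Now~$(\dagger)$ asserts that $g_s(t)=g_s(u)$ whenever $t$ and $u$ are both successors of $s$; substituting the formula for $g_s$ yields $h(t)=h(u)$. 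In other words, all successors of any fixed state share a common $h$-value.

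Finally I would conclude by telescoping. Let $v$ be any state and let $c_v$ denote the common value $h(u)$ over all successors $u$ of $v$; then $\sum_u P(v,u)\,h(u)=c_v$, and the Poisson identity gives $\wgt(v)=h(v)-c_v=h(v)-h(u)$ for every successor $u$ of $v$. Thus along any cycle $\cycle = v_0\,v_1\cdots v_{n-1}\,v_0$ (with each $v_{i+1}$ a successor of $v_i$) we obtain $\wgt(\cycle)=\sum_{i=0}^{n-1}\wgt(v_i)=\sum_{i=0}^{n-1}\bigl(h(v_i)-h(v_{i+1})\bigr)=h(v_0)-h(v_0)=0$, as desired. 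I expect the main obstacle to be the middle step: extracting from~$(\dagger)$ --- which constrains hitting-weights towards \emph{many} different targets $s$ --- the single clean fact that the successors of a common state have equal bias. This hinges on expressing every $g_s$ through one global potential $h$, which in turn rests on the uniqueness of solutions to the hitting-weight equations and on the vanishing of the mean payoff; once this is in place, the telescoping is immediate.
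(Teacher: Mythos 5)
Your proof is correct, but it takes a genuinely different route from the paper's. The paper first uses $(\dagger)$ together with the quotient representation of the mean payoff to get $\ExpRew{}{\cC,t}(\textrm{``$\wgt$ until $s$''})=-\wgt(s)$ for successors $t$ of $s$, then proves the key strengthening --- that \emph{arbitrary} siblings $v_1,v_2$ have equal hitting-weight towards an \emph{arbitrary} target $s$ --- by a perturbation argument: if the values differed, shifting probability mass between $v_1$ and $v_2$ would produce a chain with the same graph but strictly positive mean payoff, a contradiction; a final induction on path length then yields $\wgt(\cycle)=0$. You instead work with a single potential $h=g_{s_0}$, observe that $\ExpRew{}{\cC}(\MP)=0$ closes the Poisson identity $h(v)=\wgt(v)+\sum_u P(v,u)h(u)$ at the reference state as well, and then deduce $g_s=h-h(s)$ for every target $s$ from the uniqueness of solutions to the hitting-weight system (the substochastic matrix with $s$ deleted has spectral radius below $1$). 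This converts $(\dagger)$ directly into ``all successors of a state share the same potential,'' whence $\wgt(v)=h(v)-h(u)$ for every edge $v\to u$ and every cycle telescopes to $0$. Your argument is shorter and avoids both the perturbation construction and the induction; what the paper's approach buys in exchange is that the perturbation lemma and the intermediate identities (its statements (C2)--(C5)) are reused elsewhere in its development, whereas your potential-function identity is self-contained. Both proofs rest on the same two pillars (the quotient representation of $\ExpRew{}{\cC}(\MP)$ and the fact that hitting-weight equations have unique solutions in a finite irreducible chain), so there is no gap to report.
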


\begin{proof}
Using Lemma \ref{lemma:scMC-exp-mp-quotient}, 
condition \eqref{dagger} together with $\ExpRew{}{\cC}(\MP)=0$ implies that
\begin{equation}
  \label{minus-wgt}
  \ExpRew{}{\cC,t}(\textrm{``$\wgt$ until $s$''}) \ \ =  \ \ -\wgt(s) 
  \tag{C1}
\end{equation}
for all states $s$, $t\in \cC$ with $P(s,t)>0$.

Let $\Post(v)=\{v'\in S : P(v,v')>0\}$ denote the set of
direct successors of a state $v$.
States $v_1,v_2$ are called \emph{siblings} if there is some state $v\in S$
such that $v_1,v_2 \in \Post(v)$.
We now show that \eqref{dagger} propagates to arbitrary siblings, \ie
whenever $s,v_1,v_2$ are states in $\cC$ then:
\begin{equation}
  \label{sibling}
  \ExpRew{}{\cC,v_1}(\textrm{``$\wgt$ until $s$''}) \ = \
  \ExpRew{}{\cC,v_2}(\textrm{``$\wgt$ until $s$''})
  \quad 
  \text{if $v_1,v_2$ are siblings}\enskip.
  \tag{C2}
\end{equation}
Suppose by contradiction that $s,v,v_1,v_2\in S$ are states
such that $v_1,v_2\in \Post(v)$ and 
$$
  \ExpRew{}{\cC,v_1}(\textrm{``$\wgt$ until $s$''}) \ > \
  \ExpRew{}{\cC,v_2}(\textrm{``$\wgt$ until $s$''})\enskip.
$$
By assumption \eqref{dagger}, we have $s\not= v$.

Let $\cC'$ be the Markov chain
 that has the same graph and weight structure
as $\cC$, but $P'(v,v_1) =P(v,v_1) +\delta$, 
$P'(v,v_2)=P(v,v_2)-\delta$ for some $\delta$ with
$0< \delta < \min \{ 1{-}P(v,v_1), P(v,v_2)\}$ and $P'(\cdot)=P(\cdot)$ 
in all other cases. 
As only the probabilities of the transitions from $v$ are modified, we get
for all states $t$
$$
 \ExpRew{}{\cC',t}(\textrm{``$\wgt$ until $v$''})
 \ \ = \ \ 
 \ExpRew{}{\cC,t}(\textrm{``$\wgt$ until $v$''})\enskip.
$$ 
In particular, if $t\in \Post(v)$ then
$\ExpRew{}{\cC',t}(\textrm{``$\wgt$ until $v$''}) = -\wgt(v)$
where we use \eqref{minus-wgt}.
Hence:
\begin{eqnarray*}
      \sum\limits_{t\in S} 
             P'(v,t) \cdot 
             \ExpRew{}{\cC',t}(\textrm{``$\wgt$ until $v$''}) 
\ \ \ =
      \sum\limits_{t\in \Post(v)} \!\!
        P'(v,t) \cdot 
        \underbrace{\ExpRew{}{\cC,t}(\textrm{``$\wgt$ until $v$''})}_{
            =-\wgt(v)} 
    \ \ \ = \ \ \ -\wgt(v) \enskip.
\end{eqnarray*}
But then the quotient representation of the mean payoff in $\cC'$
applied to state $v$ 
(Lemma \ref{lemma:scMC-exp-mp-quotient})
yields $\ExpRew{}{\cC'}(\MP)=0$. 

We now show that for all states $t\in S$:
\begin{equation}
  \label{plus}
  \ExpRew{}{\cC',t}(\textrm{``$\wgt$ until $s$''}) 
  \ \ \geqslant \ \ 
  \ExpRew{}{\cC,t}(\textrm{``$\wgt$ until $s$''}) \enskip.
  \tag{C3}
\end{equation}
To prove \eqref{plus}, we consider 
  the function $\Upsilon\colon\Real^{S}\to \Real^S$ defined as follows.
  If $f=(f_t)_{t\in S}$ is a vector then
  $
    \Upsilon(f) \ = \ \bigl(\Upsilon_t(f) \bigr)_{t\in S} 
  $
  where $\Upsilon_s(f)=0$ and for $t\in S \setminus \{s\}$:
  $
    \Upsilon_t(f) =
    \wgt(t) \ + \ \sum_{u\in S} P'(t,u)\cdot f_u.
  $
  Let $e=(e_t)_{t\in S} $ denote the vector
  with $e_t=\ExpRew{}{\cC,t}(\textrm{``$\wgt$ until $s$''})$,
  and $e' = (e'_t)_{t\in S}$ the corresponding vector for $\cC'$, \ie
  $e'_t= \ExpRew{}{\cC',t}(\textrm{``$\wgt$ until $s$''})$ for all states $t$.
  It is well known that: 
  \begin{center}
   \begin{tabular}{ll}
       (i) & $e'$ is the unique fixed point of $\Upsilon$ \\[1ex]
       (ii) \ \ & 
     $f \leqslant \Upsilon(f)$ implies $f \leqslant \Upsilon(f) \leqslant e'$ 
   \end{tabular}
  \end{center}
  where we use the 
  element-wise natural order on vectors, \ie 
  $(f_t)_{t\in S} \leqslant (g_t)_{t\in S}$ iff $f_t \leqslant g_t$ for all
  $t\in S$.
  Using the assumption $e_{v_1} > e_{v_2}$ we obtain: 
  \begin{eqnarray*}
     \Upsilon_v(e) & \ \ = \ \ & 
     \wgt(v) \ +  \sum_{t\in \Post(v)} \!\! P'(v,t) \cdot e_t
     \\
     \\[0ex]
     & = & 
     \wgt(v) \ +  \sum_{\stackrel{t\in \Post(v)}{t\notin \{v_1,v_2\}}} 
         \!\! P(v,t) \cdot e_t
     \ \ + \ \
     (P(v_1,t)+\delta)\cdot e_{v_1} \ + \ \ (P(v_2,t)-\delta)\cdot e_{v_2} 
     \\
     \\[0ex]
     & = & 
     \underbrace{\wgt(v) \ + 
                 \!\!\!\sum_{t\in \Post(v)}\!\!\! P(v,t) \cdot e_t}_{=e_v}
     \ + \ 
     \delta \cdot \underbrace{(e_{v_1}-e_{v_2})}_{>0} 
      \ \ \ > \ \ \ e_v
  \end{eqnarray*}
  and $\Upsilon_t(e)=e_t$ for $t\in S \setminus \{v\}$. 
  Thus, $e \leqslant \Upsilon(e)$ and therefore $e \leqslant e'$.
  This yields statement \eqref{plus}. 

  Moreover, the above calculation shows 
  $e_{v} < \Upsilon_{v}(e)$. This yields $e_v < e_v'$  by statement (ii).
  Hence, for all states $u \in S$:
\begin{equation}
  \label{plusplus}
  \ExpRew{}{\cC',u}(\textrm{``$\wgt$ until $s$''}) 
  \ \ > \ \ 
  \ExpRew{}{\cC,u}(\textrm{``$\wgt$ until $s$''})
  \quad
  \text{if $u \models \exists ((\neg s) \Until v)$}
  \tag{C4}
\end{equation}
where we use the CTL-notation $\exists ((\neg s) \Until v)$ to denote
the existence of a finite path to $v$ that does not traverse $s$.
In particular, \eqref{plusplus} holds for $u=v$.
As $\cC$ is strongly connected, state $v$ is accessible from $s$.
Let $\fpath = s_0 \, s_1 \ldots s_k$ be a shortest path from
$s=s_0$ to $s_k=v$, where ``shortest'' refers to the standard length
(number of transitions) rather than the accumulated weight. 
As $s\not= v$ we have $k \geqslant 1$.
Moreover, $k=1$ iff $v=s_1 \in \Post(s)$. Otherwise, \ie if $k\geqslant 2$,
then $s_1 \ldots s_k$ is a path from state $s_1=u\in \Post(s)$ to $v$
that does not traverse $s$. 
In both cases, $u \models \exists ((\neg s) \Until v)$ for some state
$u\in \Post(s)$.
As $s \not= v$ we have $P'(s,t)=P(s,t)$ for all states $t$.
By \eqref{plus} and \eqref{plusplus} we obtain:
\begin{eqnarray*}
  & \ \ &
  \wgt(s) \ + \ 
  \sum_{t\in S} P'(s,t)\cdot \ExpRew{}{\cC',t}(\textrm{``$\wgt$ until $s$''}) 
  \\
  \\[0ex]
  = & & 
  \wgt(s) \ +  \!
  \sum_{\stackrel{t\in \Post(s)}{t\not= u}} \!\!\!
     P(s,t)\cdot 
     \underbrace{\ExpRew{}{\cC',t}(\textrm{``$\wgt$ until $s$''})}_{
       \geqslant \ExpRew{}{\cC,t}(\textrm{``$\wgt$ until $s$''})}
  \ \ + \ \ 
     P(s,u)\cdot 
     \underbrace{\ExpRew{}{\cC',u}(\textrm{``$\wgt$ until $s$''})}_{
        > \ExpRew{}{\cC,u}(\textrm{``$\wgt$ until $s$''})} 
  \\
  \\[0ex]
  > & & 
  \wgt(s) \ + \ 
  \sum_{t\in S} P(s,t)\cdot 
    \underbrace{\ExpRew{}{\cC,t}(\textrm{``$\wgt$ until $s$''})}_{=-\wgt(s)}
  \ \ 
  \stackrel{\text{\eqref{minus-wgt}}}{=} \ \ 0
\end{eqnarray*}
The quotient representation of the mean payoff 
(Lemma \ref{lemma:scMC-exp-mp-quotient})
yields $\ExpRew{}{\cC'}(\MP)>0$. Contradiction.
This completes the proof of statement \eqref{sibling}.

We now fix a state $s$ and 
show by induction on the length (number of transitions) of 
paths $\cycle$ starting from $s$ that
\begin{equation}
  \label{exp-wgt-last-cycle}
  \ExpRew{}{\cC,\last(\cycle)}(\textrm{``$\wgt$ until $s$''})
  \ \ = \ \ 
  -\wgt(\cycle)
  \tag{C5}
\end{equation}
In the basis of induction we consider a path of length 1, \ie $\cycle$
consists of a single transition from $s$ to some state $t\in \Post(s)$.
But then $\wgt(\cycle)=\wgt(s)$ and the claim follows directly from 
\eqref{minus-wgt}.
In the step of induction $k \Longrightarrow k{+}1$,
we regard a path $\cycle = s_0 \, s_1 \ldots s_k \, s_{k+1}$ 
of length $k{+}1$ starting in
$s_0=s$. By induction hypothesis we have: 
$$
  \ExpRew{}{\cC,s_k}(\textrm{``$\wgt$ until $s$''})
  \ \ = \ \ 
  -\wgt(s_0 \, s_1 \ldots s_k) 
$$
Recall that $\wgt(s_0 \, s_1 \ldots s_k) = \sum_{i=0}^{k-1} \wgt(s_i)$. 
Suppose first $s_k=s$. Then,
$\ExpRew{}{\cC,s_k}(\textrm{``$\wgt$ until $s$''}) =0=\wgt(s_0\ldots s_k)$,
in which case $\wgt(\cycle)=\wgt(s_k)$ and the claim follows directly from
assumption \eqref{dagger}.
Suppose now $s\not= s_k$. By \eqref{sibling}, we get:
\begin{center}
   $\ExpRew{}{\cC,s_{k+1}}(\textrm{``$\wgt$ until $s$''})
   \ = \ 
   \ExpRew{}{\cC,t}(\textrm{``$\wgt$ until $s$''})$
   \ for all $t\in \Post(s_k)$.
\end{center}
Hence:
\begin{eqnarray*}
  -\sum_{i=0}^{k-1} \wgt(s_i) & \ \ = \ \ &
  \ExpRew{}{\cC,s_k}(\textrm{``$\wgt$ until $s$''})
  \\
  \\[0ex]
  & = &
  \wgt(s_k) \ + \!\! \sum_{t\in \Post(s_k)}\!\!\! 
      P(s_k,t) \cdot 
      \underbrace{\ExpRew{}{\cC,t}(\textrm{``$\wgt$ until $s$''})}_{
        = \ExpRew{}{\cC,s_{k+1}}(\textrm{``$\wgt$ until $s$''})}
  \\
  \\[0ex]
  & = &
  \wgt(s_k) \ + 
  \ExpRew{}{\cC,s_{k+1}}(\textrm{``$\wgt$ until $s$''})
    \cdot \!\!\!
    \underbrace{\sum_{t\in \Post(s_k)}\!\!\! P(s_k,t)}_{=1}
  \\
  \\[0ex]
  & = &
  \wgt(s_k) \ + \ \ExpRew{}{\cC,s_{k+1}}(\textrm{``$\wgt$ until $s$''})
\end{eqnarray*} 
We conclude:
$$
  -\wgt(\cycle) \ \ = \ \ 
  -\sum_{i=0}^{k} \wgt(s_i) \ \ = \ \ 
  \ExpRew{}{\cC,s_{k+1}}(\textrm{``$\wgt$ until $s$''})
$$
This completes the proof of the induction step.  

We finally use statement \eqref{exp-wgt-last-cycle} to show that
$\cC$ is a 0-BSCC. Let $\cycle$ be a cycle in $\cC$ and $s$ an arbitrary
state on $\cycle$. Statement \eqref{exp-wgt-last-cycle} yields:
$
  -\wgt(\cycle)
  \ \ = \ \
  \ExpRew{}{\cC,s}(\textrm{``$\wgt$ until $s$''})
  \ \ = \ \ 0,
$
which completes the proof.
\end{proof}

The goal is to apply the observation 
above on the relation between expected mean payoff and expected accumulated
weights until reaching a target in Markov chains
for checking the existence of 0-EC in strongly connected MDPs with
$\ExpRew{\max}{\cM}(\MP)=0$. 
We start with the following observation stating that MD-schedulers
with expected mean payoff 0 maximize the expected accumulated
weight until reaching any state $s$ of its BSCCs. 
Here, the maximum is taken over 
all MD-schedulers $\sched$ where $s$ belongs to a BSCC of $\sched$.%
\footnote{%
  The supremum of the expectations of 
  ``$\wgt$ until $s$'' under all schedulers is infinite if $\cM$ has gambling
  end components that do not contain $s$. This is, however, irrelevant for our
  purposes as we are only interested in the maximal expectation of
  ``$\wgt$ until $s$'' when ranging over MD-schedulers under which 
  the long-run frequency of $s$ is positive.}
More precisely:

\begin{lemma}
\label{lemma:wgt-exp-equals-zero}
  Let $\cM$ be a strongly connected MDP with $\ExpRew{\max}{\cM}(\MP)=0$ and
  let $\sched$ be an MD-scheduler with
  $\ExpRew{\sched}{\cM,s}(\MP)=0$ for all states $s$. 
  Furthermore, let $\cB$ be a BSCC
  of the Markov chain induced by $\sched$ and $(s,\alpha)$ a state-action pair
  in $\cB$ (\ie $s$ is a state of $\cB$ and $\alpha=\sched(s)$). 
  Then:
  \begin{equation}
       \label{ddagger}
       \wgt(s,\alpha) \ + \ 
          \sum\limits_{t\in \cB} P(s,\alpha,t) \cdot 
                \ExpRew{\sched}{\cM,t}(\textrm{``$\wgt$ until $s$''})
       \ \ = \ \ 0.
       \tag{$\ddagger$}
  \end{equation}
  Moreover, whenever $\tsched$ is an MD-scheduler
  for $\cM$ with  $\tsched(s)=\alpha$ where
  $s$ belongs to a BSCC of the Markov chain $\cC_{\tsched}$ 
  induced by $\tsched$, then for all states $t$ with $P(s,\alpha,t)>0$
  $$
    \ExpRew{\sched}{\cM,t}(\textrm{``$\wgt$ until $s$''})
    \ \ \geqslant \ \ \ExpRew{\tsched}{\cM,t}(\textrm{``$\wgt$ until $s$''})\enskip.
  $$
\end{lemma}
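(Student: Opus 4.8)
The plan is to treat the two assertions separately, both resting on the quotient representation of the mean payoff (Lemma~\ref{lemma:scMC-exp-mp-quotient}). For the identity $(\ddagger)$ I would apply that lemma to the strongly connected Markov chain $\cB$ induced by $\sched$ on its BSCC, evaluated at the reference state $s$. Since $\cB$ is a \emph{bottom} SCC, every $t$ with $P(s,\alpha,t)>0$ already lies in $\cB$, so summing over $t\in\cB$ ranges exactly over the successors of $s$, and the formula reads $\Exp{}{\cB}(\MP)=\big(\wgt(s,\alpha)+\sum_{t\in\cB}P(s,\alpha,t)\,\ExpRew{\sched}{\cM,t}(\textrm{``$\wgt$ until $s$''})\big)/D$ with a strictly positive denominator $D$. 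The hypothesis $\ExpRew{\sched}{\cM,s}(\MP)=0$ forces the numerator to vanish, which is precisely $(\ddagger)$.

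For the inequality I would read $h^{\sched}(t):=\ExpRew{\sched}{\cM,t}(\textrm{``$\wgt$ until $s$''})$, together with $h^{\sched}(s)=0$, as a first-passage potential. By the one-step decomposition of first passage it satisfies $h^{\sched}(t)=\wgt(t,\sched(t))+\sum_u P(t,\sched(t),u)\,h^{\sched}(u)$ for $t\in\cB\setminus\{s\}$, while $(\ddagger)$ supplies the same equation at $t=s$; thus $(0,h^{\sched})$ is a gain--bias pair for $\sched$ on $\cB$. To bring in an arbitrary competitor $\tsched$ I would invoke the average-reward optimality equation for the communicating MDP $\cM$: since $\ExpRew{\max}{\cM}(\MP)=0$, there is a bias $h$ with $h(t)\geq\wgt(t,\beta)+\sum_u P(t,\beta,u)\,h(u)$ for all states $t$ and all $\beta\in\Act(t)$ (see~\cite{Puterman}), which I normalize so that $h(s)=0$. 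Averaging the nonnegative slacks $h(t)-\wgt(t,\sched(t))-\sum_u P(t,\sched(t),u)\,h(u)$ against the stationary distribution of $\sched$ on $\cB$ and using $\Exp{}{\cB}(\MP)=0$ collapses to $0$, so every slack vanishes on $\cB$; hence $h$ solves the same evaluation equation as $h^{\sched}$ on the irreducible class $\cB$ and, being likewise $0$ at $s$, coincides with $h^{\sched}$ there.

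Finally I would unroll the optimality inequality for $h$ along a trajectory of $\tsched$ up to the first hitting time $\tau_s$ of $s$: telescoping $h(X_n)\geq \wgt(X_n,\tsched(X_n))+\Expected[h(X_{n+1})\mid X_n]$ over $n<\tau_s$ and using $h(s)=0$ gives $h(t)\geq \ExpRew{\tsched}{\cM,t}(\textrm{``$\wgt$ until $s$''})$ for every $t$ from which $\tsched$ reaches $s$ almost surely. Because $\tsched(s)=\alpha$ and $s$ is recurrent under $\tsched$, each successor $t$ of $s$ belongs to the BSCC of $\cC_{\tsched}$ containing $s$, so the right-hand side is well defined and the bound applies; combined with $h(t)=h^{\sched}(t)$ on $\cB$ this yields $\ExpRew{\sched}{\cM,t}(\textrm{``$\wgt$ until $s$''})\geq\ExpRew{\tsched}{\cM,t}(\textrm{``$\wgt$ until $s$''})$. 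The step I expect to be the main obstacle is the identification $h=h^{\sched}$ on $\cB$, i.e.\ showing that $\sched$ is conserving on its recurrent class, together with justifying the telescoping: finiteness of $\Expected^{\tsched}_t[\tau_s]$ (guaranteed by recurrence of $s$ under $\tsched$ and the boundedness of the finitely many weights) is what legitimizes the interchange of the infinite sum with the expectation.
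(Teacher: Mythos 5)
Your proposal is correct, and the first half coincides with the paper's: identity \eqref{ddagger} is obtained in both cases by evaluating the quotient representation of the mean payoff (Lemma~\ref{lemma:scMC-exp-mp-quotient}) on the BSCC $\cB$ at the reference state $s$ and using that the numerator must vanish when the gain is $0$. For the inequality, however, you take a genuinely different route. The paper argues by contradiction with a scheduler-surgery construction: assuming $\ExpRew{\sched}{\cM,t}(\textrm{``$\wgt$ until $s$''}) < \ExpRew{\tsched}{\cM,t}(\textrm{``$\wgt$ until $s$''})$ for some successor $t$, it builds a two-mode finite-memory scheduler $\vsched$ that follows $\sched$ until entering $t$ via the $\alpha$-transition from $s$, then mimics $\tsched$ until returning to $s$, and then applies the same quotient formula to the induced finite Markov chain to get $\ExpRew{\vsched}{\cM,s}(\MP)>0$, contradicting $\ExpRew{\max}{\cM}(\MP)=0$. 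You instead import the average-reward optimality equation for communicating MDPs from~\cite{Puterman}, identify the first-passage potential $h^{\sched}$ with the optimal bias on $\cB$ via complementary slackness against the stationary distribution, and conclude by optional stopping along $\tsched$-trajectories; the steps you flag as delicate (uniqueness of the Poisson solution up to a constant on an irreducible class, and finiteness of $\Expected^{\tsched}_t[\tau_s]$ to justify the telescoping) are indeed the right ones to worry about and are all standard. Your argument is heavier in external machinery but conceptually cleaner and slightly more general, since it yields $h(t)\geqslant \ExpRew{\tsched}{\cM,t}(\textrm{``$\wgt$ until $s$''})$ for every state $t$ in the BSCC of $\tsched$, not only for the successors of $s$; the paper's exchange argument stays entirely within the elementary quotient formula already established in the appendix, which is why it is preferred there.
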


\begin{proof}
  The first part (statement \eqref{ddagger})
  follows directly from Lemma \ref{lemma:scMC-exp-mp-quotient}
  applied to $\cB$ viewed as a strongly connected Markov chain.
  For the second part, we assume that
  $\tsched$ is some MD-scheduler
  with  $\tsched(s)=\alpha$  where
  $s$ belongs to a BSCC $\cB'$ of $\tsched$.
  Suppose by contradiction that there is some state $t$ with
  $P(s,\alpha,t)>0$ and
  $\ExpRew{\sched}{\cM,t}(\textrm{``$\wgt$ until $s$''})
   \ < \ \ExpRew{\tsched}{\cM,t}(\textrm{``$\wgt$ until $s$''})$.
  Let $\vsched$ denote the following scheduler operating in two modes:
  In its first mode, $\vsched$ behaves as $\sched$. 
  It switches to the second mode
  when entering $t$ via the $\alpha$-transition from $s$.
  In its second mode, $\vsched$ behaves as $\tsched$ until it visits $s$,
  in which case it switches back to its first mode 
  where it behaves as $\sched$.
  Note that $\vsched$ is not memoryless, but a finite-memory scheduler.
  Hence, the Markov chain $\cC_{\vsched}$ 
  induced by $\vsched$ is finite as well.
  Lemma \ref{lemma:scMC-exp-mp-quotient} applied to the $\cC_{\vsched}$
  yields:
  $$
     \ExpRew{\vsched}{\cM,s}(\MP)
     \ \ = \ \  
     \begin{array}{rclcl}
           \wgt(s,\alpha) & + & 
           P(s,\alpha,t)\cdot \ExpRew{\tsched}{\cE,t}(\textrm{``$\wgt$ until $s$''})
           & + & 
           \sum\limits_{u\not= t}     
            P(s,\alpha,u)\cdot \ExpRew{\sched}{\cE,u}(\textrm{``$\wgt$ until $s$''})
           \\[1ex]
           \hline 
           \\[-2.2ex]
           1 \ & + & 
           P(s,\alpha,t)\cdot \ExpRew{\tsched}{\cE,t}(\textrm{``steps until $s$''})
           & + &
           \sum\limits_{u\not= t}     
           P(s,\alpha,u)\cdot \ExpRew{\sched}{\cE,u}(\textrm{``steps until $s$''})
    \end{array}\enskip.
  $$
  Using $\ExpRew{\tsched}{\cE,t}(\textrm{``$\wgt$ until $s$''}) > 
  	\ExpRew{\sched}{\cE,t}(\textrm{``$\wgt$ until $s$''})$ and 
   \eqref{ddagger}, we get
  for the numerator of $\ExpRew{\vsched}{\cM,s}(\MP)$:
  \begin{eqnarray*}
    & \ \ \ &
    \wgt(s,\alpha) + 
    P(s,\alpha,t)\cdot \ExpRew{\tsched}{\cE,t}(\textrm{``$\wgt$ until $s$''})
    + 
           \sum_{u\not= t}     
            P(s,\alpha,u)\cdot \ExpRew{\sched}{\cE,u}(\textrm{``$\wgt$ until $s$''})
    \\
    \\[0ex]
    > & &
    \wgt(s,\alpha) + 
    P(s,\alpha,t)\cdot \ExpRew{\tsched}{\cE,t}(\textrm{``$\wgt$ until $s$''})
    + 
    \sum_{u\not= t}     
            P(s,\alpha,u)\cdot  \ExpRew{\sched}{\cE,u}(\textrm{``$\wgt$ until $s$''})
    \\
    \\[0ex]
    = & &
    \wgt(s,\alpha) + 
    \sum_{u\in S}     
            P(s,\alpha,u)\cdot  \ExpRew{\sched}{\cE,u}(\textrm{``$\wgt$ until $s$''})
    \ \ \ = \ \ \ 0
  \end{eqnarray*}
  This yields $\ExpRew{\vsched}{\cM,s}(\MP) >0$, which
  is impossible as
  $\ExpRew{\vsched}{\cM,s}(\MP)
    \leqslant \ExpRew{\max}{\cM}(\MP)=0$.
\end{proof}

\subsubsection{Algorithm to Check the Existence of 0-ECs}

\label{sec:algo-checking-0-EC}
\label{algo:existence-0-EC}

We show that given a strongly connected MDP $\cM$ with
$\ExpRew{\max}{\cM}(\MP)=0$, the task to 
decide the existence of 0-ECs is solvable by an algorithm 
that runs in time in polynomial in the size
of the given MDP. 

For this, we rely on the observation that the property of being a 0-EC does 
not depend on the precise transition probabilities, but only on the graph
structure and the weights.
The idea is now to modify the transition probabilities of a state-action pair
$(s,\alpha)$ in a gambling BSCC of $\cM$ such that
the transformed MDP $\cM'$ has the same graph structure and weights 
(thus, $\cM$ and $\cM'$ have the same 0-ECs) and
$\cM'$ enjoys the following property:
\begin{enumerate}
\item []
  Whenever $\sched$ is an MD-scheduler for $\cM$ and $\sched(s)=\alpha$
  such that $\sched$ is gambling in $\cM$, 
  then $\ExpRew{\sched}{\cM',s}(\MP) <0$.
\end{enumerate}
Thus, the gambling BSCCs of $\cM$ containing the state-action pair $(s,\alpha)$
are no longer gambling in $\cM'$.
Hence, the only end components in $\cM'$ 
with maximal mean payoff 0 are the 0-ECs.
This then ensures that
$\ExpRew{\max}{\cM'}(\MP) =0$ if and only if $\cM$ (and $\cM'$) have a 0-EC.

The algorithm for checking the existence of a 0-EC
in a strongly connected MDP $\cM$ with
$\ExpRew{\max}{\cM}(\MP)=0$  proceeds as follows. 
It first runs a standard polynomial-time 
algorithm to compute
an MD-scheduler $\sched$ with 
$\ExpRew{\sched}{\cM,s}(\MP)=0$ for all states $s$.
We may assume w.l.o.g. that 
the Markov chain $\cC_{\sched}$ induced by $\sched$ has a single BSCC
$\cB$.
We then check in polynomial time whether $\cB$ is a 0-BSCC.
(For this, we can rely on Lemma \ref{weight-div-MC}
and check the nonexistence of positive cycles
using standard graph algorithms.)
If so, then $\cB$ is a 0-EC of $\cM$ and 
the algorithm terminates by returning $\cB$.
Otherwise, there exist states $s, t, u$ in $\cB$ such that
$P(s,\alpha,t)>0$, $P(s,\alpha,u)>0$ where $\alpha = \sched(s)$
and (see Lemma \ref{lemma:BSCC-max-weight})
$$
  \ExpRew{\sched}{t}(\textrm{``$\wgt$ until $s$''})
  \ \ > \ \ 
  \ExpRew{\sched}{u}(\textrm{``$\wgt$ until $s$''})\enskip.
$$
Let now $\cM'$ be the MDP resulting from $\cM$ by changing the transition
probabilities for the state-action pair $(s,\alpha)$ as follows.
We pick a value $\delta >0$ such that $P(s,\alpha,t)-\delta >0$
and $P(s,\alpha,u)+\delta  < 1$ and define:
$$
   P'(s,\alpha,t) \ =  \ P(s,\alpha,t)-\delta, \qquad
   P'(s,\alpha,u) \ =  \ P(s,\alpha,u)+\delta
$$
and $P'(s,\alpha,s')=P(s,\alpha,s')$ for all other states
$s'\in S \setminus \{t,u\}$.
The transition probabilities for all other state-action pairs
as well as the weight function remain unchanged.
We then have $\ExpRew{\sched}{\cM',s'}(\MP) <0$ 
for all states $s'$ in $\cB$
and $\ExpRew{\tsched}{\cM',s'}(\MP) \leqslant \ExpRew{\sched}{\cM,s'}(\MP)$ 
for all states $s'$ and all MD-schedulers $\tsched$ for $\cM'$
(see Lemma~\ref{soundness-0-EC-algo} below).
In particular, $\ExpRew{\max}{\cM'}(\MP)\leqslant 0$.
We then call again an algorithm to compute an MD-scheduler
$\sched'$ for $\cM'$ that maximizes the expected mean payoff.
If $\ExpRew{\max}{\cM'}(\MP) < 0$ then the algorithm terminates with the
answer ``no, $\cM$ has no 0-EC''. 
Otherwise, the algorithm repeats to modify the transition probabilities
of a state-action pair $(s',\alpha')$ in some BSCC of $\sched'$, and so on.

The presented algorithm terminates after at most $|S|\cdot |\Act|$ steps 
as the transition probabilities of each state-action pair are perturbed 
at most once (see Lemma \ref{soundness-0-EC-algo} below). 
The cost of iteration are dominated by the cost for computing an MD-scheduler
$\sched$ maximizing the expected mean payoff and the values
$\Exp{\sched}{\cM,s'}(\text{``$\wgt$ until $s$'})$.
Thus, the time complexity is polynomial in the size of $\cM$.

\begin{lemma}[Soundness of the transformation]
  \label{soundness-0-EC-algo}
  Let $\cM$ be a strongly connected MDP with
  $\ExpRew{\max}{\cM}(\MP)=0$ and let $\sched$ and $\cM'$ be as above,
  modifying the transition probabilities for $(s,\alpha)$. Then,
  for all MD-schedulers $\tsched$ for $\cM$ (and $\cM'$):
  \begin{enumerate}
  \item [(a)]
     If $s$ belongs to a BSCC of $\cC_{\tsched}$ 
     and $\tsched(s) =\alpha$
     then $\ExpRew{\tsched}{\cM',s}(\MP) <0$.
  \item [(b)]
     If $\cB$ is a BSCC of $\tsched$ such that either 
     $s$ does not belong to $\cB$ 
     or $\tsched(s) \not= \alpha$
     then $\ExpRew{\tsched}{\cM',s'}(\MP) = \ExpRew{\tsched}{\cM,s'}(\MP)$
     for all states $s' \in \cB$.
  \end{enumerate}
  In particular, we have $\ExpRew{\max}{\cM'}(\MP)\leqslant 0$.
\end{lemma}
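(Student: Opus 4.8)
The plan is to prove (a) and (b) through the quotient representation of the expected mean payoff in strongly connected Markov chains (Lemma~\ref{lemma:scMC-exp-mp-quotient}), exploiting that the only transition distribution differing between $\cM$ and $\cM'$ is the one out of the pair $(s,\alpha)$. The key observation I would record first is that for \emph{every} MD-scheduler and \emph{every} state $v$, the value $\ExpRew{}{\cM,v}(\textrm{``$\wgt$ until $s$''})$ equals $\ExpRew{}{\cM',v}(\textrm{``$\wgt$ until $s$''})$: the weight accumulated from $v$ until the \emph{first} visit to $s$ never uses the transition distribution at $s$ itself, and $P,P'$ agree everywhere except on $(s,\alpha)$. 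The analogous statement holds for ``steps until $s$''. This immediately settles (b): if $\cB$ is a BSCC of $\tsched$ with $s\notin\cB$ or $\tsched(s)\neq\alpha$, then $\tsched$ never uses $(s,\alpha)$ within $\cB$, so the induced Markov chains on $\cB$ coincide in $\cM$ and $\cM'$, and hence so do their mean payoffs.

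For (a), I would apply Lemma~\ref{lemma:scMC-exp-mp-quotient} at state $s$ to the BSCC of $\cC_{\tsched}$ containing $s$; this BSCC is the same set of states in $\cM$ and $\cM'$, since $\delta$ is chosen so that the support of $P'(s,\alpha,\cdot)$ equals that of $P(s,\alpha,\cdot)$. Writing $e^{\tsched}_v=\ExpRew{\tsched}{\cM,v}(\textrm{``$\wgt$ until $s$''})$ and $e^{\sched}_v$ for the same value under $\sched$, the denominator is positive and, by the observation above, the numerator is
$$N' \;=\; \wgt(s,\alpha) + \sum_v P'(s,\alpha,v)\, e^{\tsched}_v .$$
I would split $N'=A+B$, with $A=\wgt(s,\alpha)+\sum_v P'(s,\alpha,v)\,e^{\sched}_v$ and $B=\sum_v P'(s,\alpha,v)\,(e^{\tsched}_v-e^{\sched}_v)$. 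Since $P'$ differs from $P$ only by subtracting $\delta$ at $t$ and adding $\delta$ at $u$, identity \eqref{ddagger} for $\sched$ (note $P(s,\alpha,v)>0$ forces $v$ into the BSCC $\cB$ of $\sched$) collapses $A$ to $A=-\delta\,(e^{\sched}_t-e^{\sched}_u)$, which is strictly negative because $t,u$ were selected via Lemma~\ref{lemma:BSCC-max-weight} so that $e^{\sched}_t>e^{\sched}_u$. For $B$, the support of $P'(s,\alpha,\cdot)$ coincides with that of $P(s,\alpha,\cdot)$, and the second part of Lemma~\ref{lemma:wgt-exp-equals-zero}—applicable exactly because $\tsched(s)=\alpha$ and $s$ lies in a BSCC of $\cC_{\tsched}$—gives $e^{\tsched}_v\leqslant e^{\sched}_v$ for every successor $v$, whence $B\leqslant 0$. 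Therefore $N'<0$, and since the denominator is positive, $\ExpRew{\tsched}{\cM',s}(\MP)<0$.

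Finally, for the ``in particular'' clause I would take an MD-scheduler $\tsched$ attaining $\ExpRew{\max}{\cM'}(\MP)$ (such a scheduler exists, and since $\cM'$ is strongly connected its value is state-independent). Decomposing the mean payoff over the BSCCs of $\cC_{\tsched}$, each BSCC is handled either by (a), giving mean payoff $<0$, or by (b), giving the value unchanged from $\cM$ and thus $\leqslant\ExpRew{\max}{\cM}(\MP)=0$; as the mean payoff from any state is the corresponding convex combination of these nonpositive per-BSCC values, it is nonpositive, yielding $\ExpRew{\max}{\cM'}(\MP)\leqslant 0$. The step I expect to require the most care is the bookkeeping in (a): checking that the perturbation in $P'$ affects only the $t$- and $u$-terms so that $A$ reduces precisely to $-\delta(e^{\sched}_t-e^{\sched}_u)$ via \eqref{ddagger}, and verifying that the hypotheses of Lemma~\ref{lemma:wgt-exp-equals-zero} are met so that $B\leqslant 0$.
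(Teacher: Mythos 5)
Your proposal is correct and follows essentially the same route as the paper's proof: both parts rest on the quotient representation of the mean payoff, the identity \eqref{ddagger}, and the second part of Lemma~\ref{lemma:wgt-exp-equals-zero}, with your $A+B$ split being just a reorganization of the paper's single chain of inequalities. The preliminary observation that ``$\wgt$ until $s$'' is unaffected by the perturbation is the same fact the paper uses as the first equality in its display, so nothing is missing.
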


\begin{proof}
 Statement (b) is obvious, as in this case, $\cB$ is not affected by the 
 switch from $\cM$ to $\cM'$.
 For the proof of statement (a) we rely on the second part of
 Lemma \ref{lemma:wgt-exp-equals-zero} yielding for all states $s'$ with $P(s,\alpha,s')>0$ that
 $$
    \ExpRew{\tsched}{\cM',s'}(\textrm{``$\wgt$ until $s$''})
    \ \ = \ \ 
    \ExpRew{\tsched}{\cM,s'}(\textrm{``$\wgt$ until $s$''})
    \ \ \leqslant \ \
    \ExpRew{\sched}{\cM,s'}(\textrm{``$\wgt$ until $s$''})\enskip.
 $$
 But then:
 \begin{eqnarray*}
    & \ \ &
    \wgt(s,\alpha) \ + \
    \sum_{s'\in S} \, P'(s,\alpha,s') \cdot
    \ExpRew{\tsched}{\cM',s'}(\textrm{``$\wgt$ until $s$''})
    \\
    \\[0ex]
    \leqslant & \ \ &
    \wgt(s,\alpha) \ + \!\!
     \sum_{s'\in S\setminus \{t,u\}} \!\!\!\!
        P(s,\alpha,s') \cdot
        \ExpRew{\sched}{\cM,s'}(\textrm{``$\wgt$ until $s$''})
    \\
    \\[0ex]
    & & 
    \ \ \ + \ \
    (P(s,\alpha,t) -\delta) \cdot 
       \ExpRew{\sched}{\cM,t}(\textrm{``$\wgt$ until $s$''})
    \\
    \\[0ex]
    & & 
    \ \ \ + \ \
    (P(s,\alpha,u) +\delta) \cdot 
       \ExpRew{\sched}{\cM,u}(\textrm{``$\wgt$ until $s$''})
    \\
    \\[0ex]
    = & &
     \wgt(s,\alpha) \ + \
    \sum_{s'\in S} P(s,\alpha,s') \cdot
    \ExpRew{\sched}{\cM',s'}(\textrm{``$\wgt$ until $s$''})
    \\
    \\[-0.5ex]
    & & 
    \ \ \ - \ \
    \delta \cdot \bigl(\ 
                  \ExpRew{\sched}{\cM,t}(\textrm{``$\wgt$ until $s$''})
                  -
                  \ExpRew{\sched}{\cM,u}(\textrm{``$\wgt$ until $s$''}) \ 
                 \bigr)
    \\
    \\[0ex]
    < & &
     \wgt(s,\alpha) \ + \
    \sum_{s'\in S} P(s,\alpha,s') \cdot
    \ExpRew{\sched}{\cM',s'}(\textrm{``$\wgt$ until $s$''})
    \ \ \ = \ \ \ 0\enskip.
\end{eqnarray*}
Here, we use statement \eqref{ddagger} of 
Lemma \ref{lemma:wgt-exp-equals-zero} and the facts that $\delta >0$ and
$\ExpRew{\sched}{\cM,t}(\textrm{``$\wgt$ until $s$''})
 > \ExpRew{\sched}{\cM,u}(\textrm{``$\wgt$ until $s$''})$.
Hence, by Lemma~\ref{lemma:scMC-exp-mp-quotient} we obtain
$$
  \ExpRew{\tsched}{\cM'}(\MP)
  \enskip = \enskip
  \begin{array}{rcl}
    \wgt(s,\alpha) & + &
    \sum\limits_{s'\in S} P'(s,\alpha,s') \cdot
       \ExpRew{\tsched}{\cM',s'}(\textrm{``$\wgt$ until $s$''})
    \\[1ex]
    \hline
    \\[-2.2ex]
     1 \ & + &
    \sum\limits_{s'\in S} P'(s,\alpha,s') \cdot
      \ExpRew{\tsched}{\cM',s'}(\textrm{``steps until $s$''})
  \end{array}
  \ \ < \ \ 0 \enskip.
$$
\end{proof}

\subsubsection{Complexity of Checking the Existence of 0-ECs}

\label{sec:complexity-existence-0-ECs}

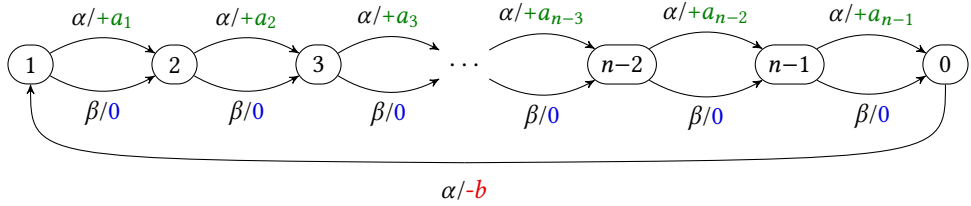
\begin{figure}[ht]
	\centering%

\begin{tikzpicture}
	\node[state] (s1) {$1$};
	\node[state, right=13mm of s1] (s2) {$2$};
	\node[state, right=13mm of s2] (s3) {$3$};
	\node[istate, right=13mm of s3] (dots) {$\ldots$};
	\node[state, right=13mm of dots] (sn2) {$n{-}2$};
	\node[state, right=13mm of sn2] (sn1) {$n{-}1$};
	\node[state, right=13mm of sn1] (s0) {$0$};
	\path%
		(s1) edge[ptran, bend left] node[above] {$\alpha$/\posw{$a_1$}} (s2)
		(s2) edge[ptran, bend left] node[above] {$\alpha$/\posw{$a_2$}} (s3)
		(s3) edge[ptran, bend left] node[above] {$\alpha$/\posw{$a_3$}} (dots)
		(dots) edge[ptran, bend left] node[above] {$\alpha$/\posw{$a_{n-3}$}} (sn2)
		(sn2) edge[ptran, bend left] node[above] {$\alpha$/\posw{$a_{n-2}$}} (sn1)
		(sn1) edge[ptran, bend left] node[above] {$\alpha$/\posw{$a_{n-1}$}} (s0)
		(s1) edge[ptran, bend right] node[below] {$\beta$/\nilw} (s2)
		(s2) edge[ptran, bend right] node[below] {$\beta$/\nilw} (s3)
		(s3) edge[ptran, bend right] node[below] {$\beta$/\nilw} (dots)
		(dots) edge[ptran, bend right] node[below] {$\beta$/\nilw} (sn2)
		(sn2) edge[ptran, bend right] node[below] {$\beta$/\nilw} (sn1)
		(sn1) edge[ptran, bend right] node[below] {$\beta$/\nilw} (s0);
	\coordinate[below=10mm of s1] (bs1);
	\coordinate[below=10mm of s0] (bs0);
	\coordinate[below=10mm of dots] (bdots);
	\node[below=1mm of bdots] {$\alpha$/\negw{$b$}};
	\draw[-] 
	  	(s0.south) .. controls (bs0) .. (bdots);
	\draw[ptran]	
		(bdots) .. controls (bs1) .. (s1.south);				
\end{tikzpicture}
 	\caption{Reduction from subset sum for the \NP-hardness of checking the existence of 0-ECs.}
\label{fig:complexity-checking-existence-zero-ecs}
\end{figure}

\thmcompZec*

\begin{proof}
  A polynomial-time procedure for
  checking the existence of a~0-EC in strongly connected MDPs 
  with $\ExpRew{\max}{\cM}(\MP)=0$
  has been presented in
  Section \ref{algo:existence-0-EC}.

  We now show the \NP-completeness of the general case.
  An \NP-algorithm is obtained by nondeterministically guessing 
  an MD-scheduler and checking whether the induced Markov chain has a 0-BSCC.
  \NP-hardness can be easily obtained via a reduction from the subset 
  sum problem:
  We are given a finite sequence of nonnegative 
  integers $a_1,\ldots a_{n-1},b$
  and the task is to find a subset $I$ of $\{1,\ldots,n{-}1\}$ 
  with $\sum_{i\in I} a_i=b$.
  For this, we regard the MDP $\cM$ with state space $\{0,1,\ldots,n{-}1\}$
  illustrated by Figure~\ref{fig:complexity-checking-existence-zero-ecs}.
  Each state $i\in \{1,\ldots,n{-}1\}$ has two actions $\alpha$ and $\beta$
  with $\wgt(i,\alpha)=a_i$, $\wgt(i,\beta)=0$,
  as well as $P(i,\alpha,(i{+}1)\mod n)=  P(i,\beta,(i{+}1)\mod n)= 1$.
  In state 0, only action $\alpha$ is enabled with $\wgt(0,\alpha)=-b$ and
  $P(0,\alpha,1)=1$. In all remaining cases, we have $P(\cdot)=0$.
  Then, $\cM$ is strongly connected and each subset $I$ of $\{1,\ldots,n{-}1\}$
  induces an end component $\cE_I$ 
  consisting of the state-action pairs $(0,\alpha)$ and
  $(i,\alpha)$ for $i\in I$ and $(i,\beta)$ 
  for $i\in \{1,\ldots,n{-}1\}\setminus I$. The BSCCs of MD-schedulers
  are exactly the end components $\cE_I$ for $I \subseteq \{1,\ldots,n{-}1\}$.
  Moreover, $\cE_I$ is a 0-EC iff $\sum_{i\in I} a_i=b$.
  Hence, $\cM$ has a 0-EC iff there exists a subset $I$ of $\{1,\ldots,n{-}1\}$
  with  $\sum_{i\in I} a_i=b$.
\end{proof}

\subsection{Spider Construction and Weight-Divergence Algorithm}

\label{sec:appendix-spider}

Recall that the purpose of the spider construction $\cM \leadsto \spider{\cM}{\cE}$
was to flatten a 0-BSCC $\cE$ in the MDP $\cM$. For a detailed presentation in this
appendix, we recall the construction from the main paper:

As $\cE$ is a 0-BSCC, for each state $s$ in $\cE$ there is a single
state-action pair $(s,\alpha_s)\in \cE$.
Given two states $s,t$ in $\cE$, recall that
$w(s,t)$ denotes the weight of every path from $s$ to $t$ in $\cE$.
The spider construction picks a reference state $s_0$ in $\cE$. 
Then, the MDP $\cN=\spider{\cM}{\cE,s_0}$ arises from $\cM$ by performing the
following steps for each state $s$ that appears in $\cE$:
\begin{description}
 \item [(i)]
   Remove the state-action pair $(s,\alpha_s)$
 \item [(ii)]
   In case $s\neq s_0$, add a new state-action pair $(s,\tau)$ with 
   $P_{\cN}(s,\tau,s_0)=1$ and $\wgt_{\cN}(s,\tau)=w(s,s_0)$
 \item [(iii)]
   In case $s\neq s_0$, replace every state-action pair $(s,\beta)\in \cM$ 
   with $\beta\neq\alpha_s$ by $(s_0,\beta)$. The transition probabilities and 
   weights of these state-action pairs are given by
    $P_{\cN}(s_0,\beta,u)=P_{\cM}(s,\beta,u)$ for all states $u$
    and $\wgt_{\cN}(s_0,\beta)=w(s_0,s) + \wgt_\cM(s,\beta)$
\end{description}
Recall that we often write $\spider{\cM}{\cE}$ rather than $\spider{\cM}{\cE,s_0}$ 
when the reference state $s_0$ is clear from the context or irrelevant, \eg 
in case the spider construction is used as a vehicle to reduce the MDP's
number of state-action pairs where the actual structure of the arising
graph is not of interest. 
As an example to illustrate how the choice of the reference state influences the
graph structure of the MDP, let us return to the MDP of Example~\ref{example:spider}. 
When taking state $s$ instead of state $t$ as first reference state and then state $u$
instead of $t$, we obtain the MDPs $\cM_1^s=\spider{\cM}{\cE,s}$ and
$\cM_2^u=\spider{\cM_1^s}{\cE,u}$ as depicted in Figure~\ref{fig:example-spider-s}. 
Note that by changing the reference state during an iterative application of the
spider construction, chains of $\tau$ transitions may arise.
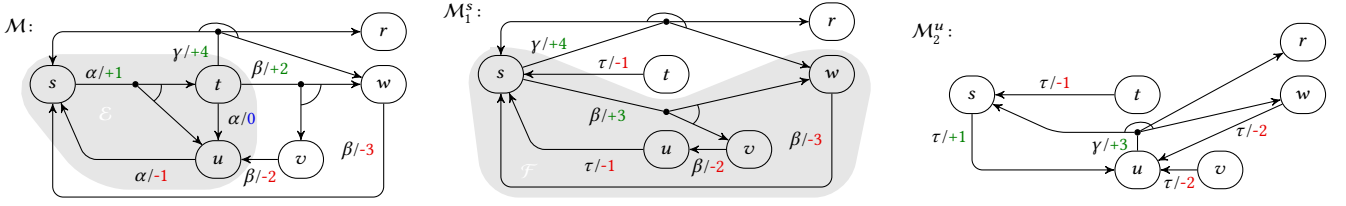
\begin{figure*}[h]
	\begin{tabular}{ccc}
\pgfdeclarelayer{background}%
\pgfsetlayers{background,main}%
\begin{tikzpicture}[x=11mm,y=10mm,font=\scriptsize]
	\node (s) at (0,0) [state] {$s$};
	\node (stu) at (1,0) [bullet] {};
	\node (t) at (2,0) [state] {$t$};
	\node (tsrw) at (2,.7) [bullet] {};
	\node (twv) at (3,0) [bullet] {};
	\node (w) at (4,0) [state] {$w$};
	\node (r) at (4,.7) [state] {$r$};
	\node (u) at (2,-1) [state] {$u$};
	\node (v) at (3,-1) [state] {$v$};
	
    \draw [ntran] (s) -- node[above=-.05,pos=.5]{$\alpha$/\posw{1}} (stu);
    \draw [ptran] (stu) -- coordinate[pos=.4] (bt) (t);
    \draw [ptran] (stu) -- coordinate[pos=.25] (bu) (u);
    \draw (bt) to[bend left] (bu);
    \draw [ptran] (t) -- node[right,pos=.4]{$\alpha$/\nilw} (u);
    \draw [ntran] (t) -- node[above=-.05,pos=.5]{$\beta$/\posw{2}} (twv);
    \draw [ptran] (twv) -- coordinate[pos=.3] (bww) (w);
    \draw [ptran] (twv) -- coordinate[pos=.33] (bvv) (v);
    \draw (bvv) to[bend right=40] (bww);
    \draw [ptran] (v) -- node[below,pos=.5]{$\beta$/\negw{2}} (u);
    \draw [ptran] (u) -- node[below,pos=.4]{$\alpha$/\negw{1}} +(-1.6,0) -- (s);
    \draw [ntran] (t) -- node[left,pos=.4]{$\gamma$/\posw{4}} (tsrw);
    \draw [ptran] (tsrw) -- coordinate[pos=.1] (bs) +(-2,0) -- (s);
    \draw [ptran] (tsrw) -- coordinate[pos=.12] (bw) (w);
    \draw [ptran] (tsrw) -- (r);
    \draw (bs) to[bend left=80] (bw);
    \draw [ptran] (w) -- 
    	node[left,pos=.5]{$\beta$/\negw{3}} +(0,-1.5) -- (0,-1.5) -- (s);
    \begin{pgfonlayer}{background}%
    	\node[above left=.3 and 0 of s]{$\mathcal{M}\colon$};
        \draw[rounded corners=1em,line width=2em,black!10,fill=black!10]
        	(-.2,.05) -- (2.15, .05) -- (2.15,-1.05) -- (.6, -1.05) --
			node[right,pos=.6,white]{$\mathcal{E}$} cycle;
    \end{pgfonlayer}
\end{tikzpicture}

 &
\pgfdeclarelayer{background}%
\pgfsetlayers{background,main}%
\begin{tikzpicture}[x=11mm,y=10mm,font=\scriptsize]
	\node (s) at (0,0) [state] {$s$};
	\node (t) at (2,0) [state] {$t$};
	\node (tsrw) at (2,.7) [bullet] {};
	\node (twv) at (2,-.5) [bullet] {};
	\node (w) at (4,0) [state] {$w$};
	\node (r) at (4,.7) [state] {$r$};
	\node (u) at (2,-1) [state] {$u$};
	\node (v) at (3,-1) [state] {$v$};
	
    \draw [ptran] (t) -- node[above=-.05,pos=.25]{$\tau$/\negw{1}} (s);
    \draw [ntran] (s) -- node[below,pos=.6]{$\beta$/\posw{3}} (twv);
    \draw [ptran] (twv) -- coordinate[pos=.2] (bww) (w);
    \draw [ptran] (twv) -- coordinate[pos=.4] (bvv) (v);
    \draw (bvv) to[bend right=40] (bww);
    \draw [ptran] (v) -- node[below,pos=.5]{$\beta$/\negw{2}} (u);
    \draw [ptran] (u) -- node[below,pos=.4]{$\tau$/\negw{1}} +(-1.6,0) -- (s);
    \draw [ntran] (s) -- node[left=.2,pos=.5]{$\gamma$/\posw{4}} (tsrw);
    \draw [ptran] (tsrw) -- coordinate[pos=.1] (bs) +(-2,0) -- (s);
    \draw [ptran] (tsrw) -- coordinate[pos=.12] (bw) (w);
    \draw [ptran] (tsrw) -- (r);
    \draw (bs) to[bend left=80] (bw);
    \draw [ptran] (w) -- 
    	node[left,pos=.5]{$\beta$/\negw{3}} +(0,-1.5) -- (0,-1.5) -- (s);
    \begin{pgfonlayer}{background}%
    	\node[above left=.3 and 0 of s]{$\mathcal{M}_1^s\colon$};
        \draw[rounded corners=1em,line width=.8em,black!10,fill=black!10]
        	(.2,.25) -- (2, -.55) -- (3.8,.25) -- (4.4,.1) -- (4.1, -1.5) -- 
			node[above=-.4em,pos=.9,white]{$\mathcal{F}$} (-.1, -1.5) -- (-.3,.2) -- cycle;
    \end{pgfonlayer}
\end{tikzpicture}

 &
\pgfdeclarelayer{background}%
\pgfsetlayers{background,main}%
\begin{tikzpicture}[x=11mm,y=10mm,font=\scriptsize]
	\node (s) at (0,0) [state] {$s$};
	\node (t) at (2,0) [state] {$t$};
	\node (tsrw) at (2,-.5) [bullet] {};
	\node (w) at (4,0) [state] {$w$};
	\node (r) at (4,.7) [state] {$r$};
	\node (u) at (2,-1) [state] {$u$};
	\node (v) at (3,-1) [state] {$v$};
	
    \draw [ptran] (t) -- node[above=-.05,pos=.5]{$\tau$/\negw{1}} (s);
    \draw [ptran] (w) -- node[below=-.05,pos=.25]{$\tau$/\negw{2}} (u);
    \draw [ptran] (v) -- node[below=-.05,pos=.5]{$\tau$/\negw{2}} (u);
    \draw [ptran] (s) -- node[left=-.05,pos=.4]{$\tau$/\posw{1}} (0,-1) -- (u);
    \draw [ntran] (u) -- node[left=0,pos=.3]{$\gamma$/\posw{3}} (tsrw);
    \draw [ptran] (tsrw) -- coordinate[pos=.12] (bs) (1,-.5) -- (s);
    \draw [ptran] (tsrw) -- coordinate[pos=.09] (bw) (w);
    \draw [ptran] (tsrw) -- (r);
    \draw (bs) to[bend left=80] (bw);
    	\node[above left=.3 and 0 of s]{$\mathcal{M}_2^u\colon$};
\end{tikzpicture}

 	\end{tabular}
	\caption{\label{fig:example-spider-s}Spider constructions with $s$ and $u$ as reference states.}
\end{figure*}

In the following, we suppose that in a preprocessing step the actions in $\cM$ are
renamed such that $\Act_{\cM}(s)\cap\Act_{\cM}(t)=\varnothing$
for all states $s,t$ in $\cM$ with $s \not= t$. This technical requirement
will be used in upcoming proofs and can be achieved by simply renaming actions.

\subsubsection{Properties of the Spider Construction}

\label{sec:prop-spider}

We now prove that the MDP $\spider{\cM}{\cE}$ enjoys the properties
as stated in Lemma \ref{spider-construction}.
The proofs for statements (S1), (S2) will be established in
Lemma \ref{spider-construction-S1S2}.
Property (S4) will be shown in Lemma \ref{spider-construction-S4}.
The equivalence of $\cM$ and
$\spider{\cM}{\cE}$ as stated in (S3) 
will be a consequence of
Lemma \ref{preservation-of-0-EC-invariant-properties}.

\begin{lemma}[See (S1) and (S2) in Lemma \ref{spider-construction}]
  \label{spider-construction-S1S2}
  Let $\cM$ be an MDP and $\cE$ a 0-BSCC of $\cM$.
  Then, the spider construction generates an
  MDP $\spider{\cM}{\cE,s_0}$ that satisfies the following properties:
  \begin{enumerate}
  \item [(S1)]
     $\cM$ and $\spider{\cM}{\cE,s_0}$ have the same state space
     and $\|\spider{\cM}{\cE,s_0}\|=\|\cM\|{-}1$ 
  \item [(S2)]
     If $\cM$ is strongly connected and $\cE\not= \cM$ then
     $\spider{\cM}{\cE,s_0}$ 
     has a single maximal end component
     $\cF$, reachable from all states and
    containing the reference state $s_0$.
  \end{enumerate}
\end{lemma}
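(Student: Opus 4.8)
The plan is to treat the two statements separately, with (S1) a short counting argument and (S2) the substantive part. For (S1), equality of the state spaces is immediate, since steps (i)--(iii) only delete, add, or relocate state-action pairs and never touch the set of states. To count pairs, let $k$ be the number of states of the $0$-BSCC $\cE$; as $\cE$ is a $0$-BSCC it has exactly one pair $(s,\alpha_s)$ per state, hence contributes exactly $k$ pairs. Step (i) deletes these $k$ pairs, step (ii) adds the $k-1$ pairs $(s,\tau)$ for $s\in\cE\setminus\{s_0\}$, and step (iii) relocates each pair $(s,\beta)$ with $s\in\cE\setminus\{s_0\}$ and $\beta\neq\alpha_s$ to $(s_0,\beta)$, leaving all other pairs untouched. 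The only point to verify is that (iii) neither merges two pairs nor collides with an action already at $s_0$; this is exactly what the preprocessing assumption $\Act_{\cM}(s)\cap\Act_{\cM}(t)=\varnothing$ for $s\neq t$ guarantees, so each relocated $\beta$ is a fresh action name at $s_0$. Thus (iii) is a bijection on the affected pairs, and the net change is $-k+(k-1)=-1$, giving $\|\spider{\cM}{\cE,s_0}\|=\|\cM\|-1$.

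For (S2), write $\cN=\spider{\cM}{\cE,s_0}$. The first step is to show that every state of $\cN$ reaches $s_0$. This is clear inside $\cE$: a state $s\in\cE\setminus\{s_0\}$ has the single enabled action $\tau$ leading to $s_0$, and $s_0$ reaches itself. For $v\notin\cE$ I take a path from $v$ to $s_0$ in the strongly connected $\cM$ and truncate it at its first $\cE$-state $u_j$; all preceding transitions emanate from non-$\cE$ states and are therefore untouched by the construction, so they survive in $\cN$, after which $u_j$ reaches $s_0$ via $\tau$. The second step is to produce the MEC: let $R$ be the set of states reachable from $s_0$ in $\cN$. Then $R$ is closed under all transitions (any successor of a state reachable from $s_0$ is again reachable from $s_0$), so every enabled action of every state of $R$ keeps all successors in $R$; together with the fact that each state of $R$ reaches $s_0$ and $s_0$ reaches each state of $R$, the sub-MDP on $R$ with all enabled actions is closed and strongly connected (and trap-free, since $\cE\neq\cM$ forces $|S|\ge2$ and leaves every $\cN$-state with an action). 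Hence it is an end component $\cF$ containing $s_0$; and $\cF$ is reachable from all states because every state reaches $s_0\in\cF$.

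The main obstacle is uniqueness of the MEC, i.e.\ ruling out a second MEC disjoint from $\cF$. Let $\cF'$ be an arbitrary MEC. If $s_0\in\cF'$, then $\cF'$ is strongly connected through $s_0$, so $\cF'\subseteq R=\cF$, and maximality gives $\cF'=\cF$. The delicate case is $s_0\notin\cF'$. Here I first argue $\cF'\cap\cE=\varnothing$: a state $s\in\cE\setminus\{s_0\}$ has $\tau$ as its only $\cN$-action, whose closure would drag $s_0$ into $\cF'$. Thus $\cF'\subseteq S\setminus\cE$, where $\cN$ and $\cM$ have identical transitions. Now I invoke strong connectivity of $\cM$: for $v\in\cF'$ take an $\cM$-path from $s_0$ to $v$ and consider its last $\cE$-state $u_j$; the transition $u_j\to u_{j+1}$ leaving $\cE$ must use an external action $\beta\neq\alpha_{u_j}$ (the $\cE$-action would keep it inside $\cE$), which (iii) relocates to $s_0$ with the same successors, or which already sits at $s_0$ if $u_j=s_0$. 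Consequently $s_0\overto{\beta}u_{j+1}\to\cdots\to v$ is a path in $\cN$, so $v\in R$; hence $\cF'\subseteq R=\cF$ with $s_0\in\cF\setminus\cF'$, contradicting the maximality of $\cF'$. I expect this routing-through-$s_0$ argument, and in particular the careful use of the relocation rule (iii) to recover the escaping transition inside $\cN$, to be the crux; the remaining steps are routine end-component bookkeeping.
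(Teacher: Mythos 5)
Your proof is correct and follows essentially the same route as the paper: the same counting argument for (S1) (including the correct appeal to the action-renaming convention to keep step (iii) injective), and for (S2) the same construction of the unique MEC as the strongly connected, closed region around $s_0$, with every other end component shown to be contained in it because the only way out of (or back into) the flattened $\cE$ passes through $s_0$. The only cosmetic difference is that the paper names the MEC's state set explicitly via the set $T$ of $\cE$-states with incoming edges from outside $\cE$, whereas you take the reachability closure of $s_0$; these coincide.
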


\begin{proof}
 The first statement of (S1) is obvious:
 If $m$ is the number of states in $\cE$ then
 step \textbf{(i)} removes $m$ state-action pairs, while step \textbf{(ii)} introduces
 $m{-}1$ new state-action pairs.
 Step \textbf{(iii)} has no effect on the number of state-action pairs.
 Hence, $\|\spider{\cM}{\cE}\|=\|\cM\|{-}1$.

 To prove statement (S2), we suppose that $\cM\not= \cE$
 and that $\cM$ is strongly connected.
 Let $T$ denote the set of states $s$ in $\cE$ such that
 $s\not= s_0$ and $P_{\cM}(u,\alpha,s)>0$ for some state-action
 pair $(u,\alpha)\in \cM \setminus \cE$.
 Let $\cF$ denote the sub-MDP of $\spider{\cM}{\cE}$
 obtained by removing all state-action pairs $(s,\tau)$ where
 $s$ is a state of $\cE$ that is not contained in $T$.
 Thus, the state space of $\cF$ consists of the
 states of $\cM$ that do not belong to $\cE$ and the states in
 $T \cup \{s_0\}$.

 Obviously, $\cF$ is reachable from all states and $\cF$ contains
 the reference state $s_0$.
 We show that $\cF$ is strongly connected.
 For this, we prove that all states $s$ in $\cF$ with $s \not= s_0$
 are reachable from $s_0$ and can reach $s_0$. 
 \begin{itemize}
 \item
   We first show that $s_0$ is reachable from each state $s$ in 
   $\spider{\cM}{\cE}$.
   Let $\fpath=t_0\, \beta_0\, t_1 \, \beta_1 \ldots \beta_{n-1} \, t_n$
   be a simple path in $\cM$ from $t_0=s$ to $t_n=s_0$,
   where ``simple'' means that
   $t_i \not= t_j$ for $0 \leqslant i < j \leqslant n$.
   We pick the smallest index $i\in \{0,1,\ldots,n\}$
   such that $t_i$ belongs to $\cE$.
   Then,  $(t_j,\beta_j)\in \cF$  for $0\leqslant j < i$.
   Hence, if $t_i\not= s_0$ then
   $t_0\, \beta_0\, t_1 \, \beta_1 \ldots \beta_{i-1} \, t_i \, \tau \, s_0$
   is a path in $\cF$ from $s=t_0$ to $s_0$.
   Likewise, if $t_i = s_0$ then
   $t_0\, \beta_0\, t_1 \, \beta_1 \ldots \beta_{i-1} \, t_i$
   is a path in $\cF$ from $s=t_0$ to $s_0=t_i$.
 \item
   We next show that each state $s$ that is not contained in $\cE$
   is reachable from $s_0$ in $\cF$.
   Let $\fpath=t_0\, \beta_0\, t_1 \, \beta_1 \ldots \beta_{n-1} \, t_n$
   be a simple path in $\cM$ from $t_0=s_0$ to $t_n=s$.
   If none of the states $t_1,\ldots,t_n$ belongs to $\cE$
   then $(t_i,\beta_i)\in \cF$ for all $0 \leqslant i < n$
   and $\fpath$ is a path in $\cF$ from $s_0$ to $s$.
   Suppose now that at least one of the states $t_1,\ldots,t_n$
   is contained in $\cE$.
   We pick the largest index $i\in \{1,\ldots,n\}$
   where $t_i$ is a state of $\cE$.
   By assumption $s=t_n$ is not contained in $\cE$.
   This yields $i < n$.
     Then, $(t_i,\beta_i)\notin \cE$ 
     and therefore $(s_0,\beta_i)\in \cF$ due to step \textbf{(iii)}.
     As the states $t_{i+1},\ldots,t_n$ do not belong to $\cE$,
     we have
     $(t_j,\beta_j)\in \cF$ for $i< j < n$.
     Thus,
     $s_0\, \beta_i\, t_{i+1} \, \beta_{i+1} \ldots \beta_{n-1} \, t_n$
     is a path from $s_0$ to $s$ in $\cF$.
   \item
    It remains to show that each state $s\in T$ 
    is reachable from $s_0$ in $\cF$.
    By definition of $T$ there is state-action pair $(u,\alpha)$
    such that $P_{\cM}(u,\alpha,s)>0$ and
    $(u,\alpha)\in \cM \setminus \cE$.
    If $u$ belongs to $\cE$ then $(s_0,\alpha)\in \cF$ and hence,
    $s_0\, \alpha \, s$ is a path in $\cF$.
    If $u$ does not belong to $\cE$ then $s$ is reachable from
    $s_0$ in $\cF$ via a path of the form $\fpath \, \alpha \, s$
    where $\fpath$ is a path from $s_0$ to $u$ in $\cF$ (see above).
 \end{itemize}
 Thus, $\cF$ is an end component of $\spider{\cM}{\cE}$.
 As the states in $\cE$ that are not contained in $T$ do not have
 incoming edges in $\spider{\cM}{\cE}$, these states are not contained
 in any end component of $\spider{\cM}{\cE}$.
 Thus, $\cF$ subsumes each other end component of $\spider{\cM}{\cE}$.
 This shows that $\cF$ is the unique maximal end component 
 of $\spider{\cM}{\cE}$.
\end{proof}

\begin{remark}[Result of the spider construction if $\cM$ is a 0-BSCC]
{\rm
Note that the requirement $\cM\not= \cE$ in (S2) is necessary
as if $\cM$ is a 0-BSCC then $\spider{\cM}{\cM}$ is acyclic
and consists of $\tau$-transitions from all states $s$ with $s\not= s_0$
to $s_0$.
 }
\end{remark}

\begin{lemma}[See (S4) in Lemma \ref{spider-construction}]
  \label{spider-construction-S4}
  Let $\cM$ be an MDP and $\cE$ a 0-BSCC of $\cM$
  that is contained in a maximal end component of $\cM$
  with maximal expected mean payoff 0. Then:
  \begin{enumerate}
  \item [(a)]
     For each state $s$ that is not contained in $\cE$:
     $s$ belongs to a 0-EC of $\cM$ iff
     $s$ belongs to a 0-EC of $\spider{\cM}{\cE}$.
  \item [(b)]
     For each state-action pair $(s,\alpha)$ of $\cM$:
     $(s,\alpha)$ belongs to a 0-EC of $\cM$ iff
     $(s,\alpha)\in \cE$ or
     $(s_0,\alpha)$ belongs to a 0-EC of $\spider{\cM}{\cE,s_0}$.
  \end{enumerate}
 \end{lemma}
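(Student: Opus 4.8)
The plan is to prove (b) first and then read off (a). Indeed, if $s\notin\cE$ then $s\neq s_0$, so the spider construction neither deletes, adds, nor relocates any action at $s$; hence the state-action pairs of $\cN=\spider{\cM}{\cE,s_0}$ at $s$ are exactly those of $\cM$. Since a state lies in a $0$-EC iff one of its enabled pairs does, and since (b) fixes every pair at a state outside $\cE$, part~(a) drops out of (b). I would begin by recording the simplified shape of $\cN$: step~(i) deletes $(s,\alpha_s)$ and step~(iii) moves every remaining action of a leg-state $s\in\cE\setminus\{s_0\}$ to $s_0$, so in $\cN$ the unique action at such an $s$ is $\tau$, leading deterministically to $s_0$ with weight $w(s,s_0)$; thus $\cE$ becomes a spider funnelling into $s_0$. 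For a pair $(s,\alpha)\notin\cE$ write $\Phi(s,\alpha)$ for its counterpart in $\cN$, namely $(s_0,\alpha)$ when $s\in\cE$ and $(s,\alpha)$ when $s\notin\cE$; the action-renaming convention makes $\Phi$ a bijection onto the non-$\tau$ pairs of $\cN$, and this is the $(s_0,\alpha)$ occurring in the statement.

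The computational core is the cocycle identity $w(s,s_0)+w(s_0,t)=w(s,t)$, which holds because concatenating an internal path from $s$ to $s_0$ with one from $s_0$ to $t$ is an internal path from $s$ to $t$ and $\cE$ is a $0$-EC. It yields the weight-preservation underpinning both directions: an excursion that enters $\cE$ at $s$, stays inside $\cE$ until $t$, and leaves via $(t,\beta)\notin\cE$ has weight $w(s,t)+\wgt_\cM(t,\beta)$, which equals the weight in $\cN$ of taking $(s,\tau)$ and then the relocated pair $(s_0,\beta)$ (just $(s_0,\beta)$ if $s=s_0$). Consequently, expanding the $\tau$- and relocated pairs of a closed $\cN$-walk into internal $\cE$-paths, and dually purging the internal $\cE$-excursions of a closed $\cM$-walk, is a weight-preserving correspondence between closed walks of the two MDPs.

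For the forward implication of (b), take a $0$-EC $\cD$ of $\cM$ with $(s,\alpha)\in\cD$ and $(s,\alpha)\notin\cE$. If $\cD$ meets no state of $\cE$ it is untouched and is already a $0$-EC of $\cN$ containing $\Phi(s,\alpha)=(s,\alpha)$. Otherwise $\cD\subseteq\cG$, and as $\Exp{\max}{\cG}(\MP)=0$, Lemma~\ref{lemma:union-of-0-EC} makes $\cD\cup\cE$ a $0$-EC, so I may assume $\cE\subseteq\cD$; I then set $\cD'=\Phi(\cD\setminus\cE)\cup\{(s',\tau):s'\in\cE\setminus\{s_0\}\}$ and verify that $\cD'$ is an end component of $\cN$ (successor-closedness is immediate from $P_\cN(s_0,\beta,\cdot)=P_\cM(t,\beta,\cdot)$ for relocated pairs, and strong connectivity follows from that of $\cD$ once all $\cE$-internal motion is rerouted through $s_0$) whose cycles all expand to weight-$0$ closed walks of $\cD$, so $\Phi(s,\alpha)$ lies in the $0$-EC $\cD'$. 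For the backward implication I reverse this: given a $0$-EC $\cD'$ of $\cN$ containing $\Phi(s,\alpha)$, either $\cD'$ avoids $\cE$ (and its pairs form a $0$-EC of $\cM$ directly), or I rebuild $\cD$ by replacing each $(s',\tau)\in\cD'$ by an internal $\cE$-path from $s'$ to $s_0$ and each relocated pair $(s_0,\beta)\in\cD'$, originating at some $t\in\cE$, by an internal $\cE$-path from $s_0$ to $t$ followed by $(t,\beta)$, adding all $\cE$-pairs so traversed; every cycle of $\cD$ then purges to a weight-$0$ closed walk of $\cD'$, giving the required $0$-EC of $\cM$ containing $(s,\alpha)$.

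The step I expect to be the main obstacle is the end-component bookkeeping in the backward direction: one must reinsert precisely the right internal $\cE$-paths in front of each relocated pair $(s_0,\beta)$ (which secretly comes from a state $t\in\cE$ recoverable through the action-renaming convention) and then argue that the reconstructed family is simultaneously closed under successors and strongly connected in $\cM$, while ensuring the purge map sends its cycles back inside $\cD'$ so that weight $0$ is inherited. The hypothesis $\Exp{\max}{\cG}(\MP)=0$ is used exactly at the two points above, through Lemma~\ref{lemma:union-of-0-EC}, to guarantee that closing $\cD$ up with $\cE$ keeps it a $0$-EC and that no positive internal cycle can corrupt the weight accounting.
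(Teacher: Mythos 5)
Your proposal is correct and follows essentially the same route as the paper's proof: both establish an explicit correspondence between 0-ECs of $\cM$ and of $\spider{\cM}{\cE}$ by case analysis on whether the 0-EC meets $\cE$, both invoke Lemma~\ref{lemma:union-of-0-EC} (via the hypothesis $\Exp{\max}{\cG}(\MP)=0$) to enlarge a partially overlapping 0-EC to one containing all of $\cE$, and both rely on the well-defined values $w(s,t)$ to see that cycles keep weight $0$ under the translation. Your ordering (proving (b) and reading off (a) from the fact that states outside $\cE$ keep their actions) is a cosmetic reorganization of the same argument.
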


\begin{proof}
 We first consider a 0-EC $\cZ$ of $\cM$.
 The claim is obvious for $\cZ = \cE$.
 Hence, we may suppose $\cZ \not= \cE$.
 We now show that there is a 0-EC $\cF$ of $\spider{\cM}{\cE}$
 that contains all states in $\cF$ that are not contained in $\cE$
 and all actions $\alpha$ such that $(s,\alpha)\in \cZ \setminus \cE$.
 
 If $\cZ$ does not contain a state of $\cE$, then $\cZ$ is clearly a 0-EC of 
 $\spider{\cM}{\cE}$.
 If $\cZ$ contains the reference state $s_0$, then
 the set $\cF$ is a 0-EC of $\spider{\cM}{\cE}$,
 consisting of the state-action pairs 
 \begin{itemize}
 \item
    $(s,\alpha)\in \cZ$ where $s$ is a state not contained in $\cE$,
 \item 
    $(s,\tau)$
    where $s$ is a state of $\cE$ with $s\not= s_0$ and $(s,\alpha_s)\in \cZ$,
 \item 
    $(s_0,\alpha)$ where $s$ is a state of $\cE$ (possibly $s=s_0$) with 
    $\alpha\not= \alpha_s$ and
    $(s,\alpha)\in \cZ$.
 \end{itemize}
 Otherwise, \ie $\cZ$ does not contain $s_0$ but some state of $\cE$,
 then $\cZ \cup \cE$ is a 0-EC of $\cM$ 
 (see Lemma \ref{lemma:union-of-0-EC})
 and we can apply the argument before.

 Vice versa, we show that each 0-EC $\cF$ of $\spider{\cM}{\cE}$
 induces a 0-EC $\cZ$ of $\cM$ such that $\cZ$ contains all states
 of $\cF$ and all actions $\alpha$ of $\cM$ where $(s_0,\alpha)\in \cF$.
 If $\cF$ is a 0-EC of $\spider{\cM}{\cE}$ 
 that does not contain $s_0$, then $\cF$ is a 0-EC of $\cM$.
 (We use here the fact that the states $s$ in $\cE$ with $s\not= s_0$
  have a single $\tau$-transition to $s_0$ in $\spider{\cM}{\cE}$,
  but no other transition. Hence, each end component of $\spider{\cM}{\cE}$
  that contains $s$ must also contain $s_0$.)
 If $\cF$ is a 0-EC of $\spider{\cM}{\cE}$ containing $s_0$,
 then the set consisting of all state-action pairs in $\cE$ plus
 the state-action pairs $(s,\alpha_s)$
 where $s$ is a state in $\cF$ and $(s_0,\alpha_s)\in \cF$
 is a 0-EC of $\cM$.
\end{proof}

\begin{remark}[The values $w(s,t)$]
 \label{remark-w(s,t)}
{\rm
Recall from Section \ref{sec:max-0-EC} 
that for each maximal end component $\cG$ of $\cM$
with $\Exp{\max}{\cG}(\MP)=0$,
each 0-EC of $\cG$ is
contained in a \emph{maximal 0-ECs}, where maximality is understood with
respect to the property ``being a 0-EC''.
This implies the existence of integers
$w(s,t)\in \Integer$ 
for all states $s,t$ that belong to the same maximal 0-EC 
of $\cM$ that is contained in some maximal end component $\cG$ of
$\cM$ with $\Exp{\max}{\cG}(\MP)=0$
such that the weight of all paths from $s$ to $t$ inside
some 0-EC equals $w(s,t)$, no matter which 0-EC is chosen.

The transformation 
``0-EC $\cF$ of $\spider{\cM}{\cE}$ $\leadsto$ 0-EC $\cZ$ of $\cM$''
explained in the proof of Lemma \ref{spider-construction-S4}
preserves the $w$-values.
More precisely, if $\cF$ is a 0-EC of $\spider{\cM}{\cE}$
then $w(s,t)$ is the weight of each path from $s$ to $t$ in $\cF$.
Vice versa, if $\cZ$ is a 0-EC of $\cM$ and $s$, $t$ are states in $\cZ$
then $w(s,t)$ is the weight of
each path from $s$ to $t$ in $\spider{\cM}{\cE}$ that is built
by $\tau$-transitions and actions belonging to $\cZ$.
\Ende
  }
\end{remark}

We now turn to the proof of the scheduler transformations stated
in (S3) of Lemma \ref{spider-construction}.
Before doing so, let us recall the definition of the purge-function
and of $\cE$-invariant
properties:
Given a (finite or infinite) path 
$\infpath = t_0 \, \alpha_0 \, t_1 \, \alpha_1 \ldots$ in $\cM$, let
$\Purge{\cE}{\infpath} \in (S \times \Integer)^\omega \cup
 (S \times \Integer)^* \times S$
denote the sequence arising from $\infpath$ by 
\begin{enumerate}
\item [(1)] replacing each fragment
$t_i \, \alpha_i\, \ldots \, \alpha_{j-1}\, t_j \, \alpha_j \, t_{j+1}$
of $\infpath$ such that
\begin{itemize}
\item
  either $i=0$ or $(t_{i-1},\alpha_{i-1})\notin \cE$,
\item
  $(t_j,\alpha_j)\notin \cE$
\item
  $(t_{\ell},\alpha_{\ell})\in \cE$ for $\ell=i,i{+}1,\ldots,j$
\end{itemize}
with $t_i \, w \, t_{j+1}$ where $w=w(t_i,t_j) +\wgt(t_j,\alpha_j)$ 
and 
\item [(2)] replacing each action $\alpha_i$ 
in the resulting sequence with $\wgt(s_i,\alpha_i)$.
\end{enumerate}
Note that step (1) yields a (finite or infinite) sequence
$t_0'\, c_0 \, t_1' \, c_1 \, \ldots$ where the $t_0',t_1',\ldots$ 
are states and
$c_i\in \Act \cup \Integer$. Moreover, $c_i \in \Integer$ implies
$t_i' \in \cE$, while $c_i \in \Act$ implies that 
$(t_i',c_i)$ is a state-action
pair of $\cM$ that is not contained in $\cE$ (but still $t_i' \in \cE$
is possible).
For example, if 
$$
 \fpath\ \ = \ \ 
 t_0\, \beta_0 \, t_1 \, \beta_1 \, t_2 \, \gamma_2 \, t_3 \, 
  \beta_3 \, t_4 \, \gamma_4 \, t_5 \, \gamma_5 \, t_6 \, \beta_6 \, t_7
$$
where all state-action pairs of the form $(t_k,\gamma_k)$ belong to $\cE$,
while the state-action pairs of the form $(t_l,\beta_l)$ do not,
then after step (1) we obtain the sequence
$$
 t_0\, \beta_0 \, t_1 \, \beta_1 \, t_2 \, w_2 \, t_4 \, w_4 \, t_7
$$
where $w_2=\underbrace{\wgt(t_2,\gamma_2)}_{=w(t_2,t_3)}+\wgt(t_3,\beta_3)$
and $w_4=
     \underbrace{\wgt(t_4,\gamma_4)+\wgt(t_5,\gamma_5)}_{=w(t_4,t_6)} 
      +\wgt(t_6,\beta_3)$.
This yields:
$$
 \Purge{\cE}{\fpath}
 \ \ = \ \
 t_0\, w_0 \, t_1 \, w_1 \, t_2 \, w_2 \, t_4 \, w_4 \, t_7
$$
where
$w_0=\wgt(t_0,\beta_0)$ and $w_1 =\wgt(t_1,\beta_1)$.
Note that this implies $t_2,t_3,t_4,t_5,t_6$ to be states in $\cE$, 
while states $t_0,t_1$ and $t_7$ might or might not be contained in $\cE$.

A property $\varphi$ is called $\cE$-invariant if
for all maximal paths $\infpath$ the following conditions (I1) and (I2) hold:
\begin{enumerate}
\item [(I1)] 
  If $\infpath$ has an infinite suffix consisting of state-action pairs in
  $\cE$ then $\infpath \not\models \varphi$.
\item [(I2)] 
  If $\infpath \models \varphi$ and
  $\infpath'$ is a maximal path with 
  $\Purge{\cE}{\infpath} = \Purge{\cE}{\infpath'}$ then
  $\infpath' \models \varphi$.
\end{enumerate}
Examples for $\cE$-invariant properties are 
(positive or negative) weight-divergence or the pumping property,
and so are properties
of the form $\Diamond (t \wedge (\accwgt \bowtie w))$
where $t$ is a state not contained in $\cE$,
$\bowtie$ a comparison operator (\eg $=$ or $\geqslant$),
and $w\in \Integer$.

   Recall that $\lim(\infpath)$ denotes the set of state-action pairs that
   appear infinitely often in $\infpath$.
   In what follows, we write 
   $\Limit{\cE}$ to denote the set 
   $\{\infpath \in \InfPaths : \lim(\infpath)=\cE\}$.

\begin{lemma}[See (S3) in Lemma \ref{spider-construction}]
\label{preservation-of-0-EC-invariant-properties}
 Let $\cM$ and $\cE$ be as before.
 Then:
 \begin{enumerate}    
 \item [(S3.1)]
   For each scheduler $\tsched$ for $\spider{\cM}{\cE}$ 
   there is a scheduler
   $\sched$ for $\cM$ such that
   $\Pr^{\tsched}_{\spider{\cM}{\cE},s}(\varphi)=
    \Pr^{\sched}_{\cM,s}(\varphi)$
   for all states $s$ in $\cM$ and all
   $\cE$-invariant properties $\varphi$.
   If $\tsched$ is an MD-scheduler then
   $\sched$ can be chosen as an MD-scheduler.
 \item [(S3.2)]
        For each scheduler $\sched$ for $\cM$ 
        there exists a 
        (randomized) scheduler $\tsched$ for $\spider{\cM}{\cE}$ such that
        $$
         \Pr^{\sched}_{\cM,s}(\varphi) \ \ \leqslant \ \
         \Pr^{\tsched}_{\spider{\cM}{\cE},s}(\varphi) \ \ \leqslant \ \
         \Pr^{\sched}_{\cM,s}(\varphi) + 
         \Pr^{\sched}_{\cM,s}(\Limit{\cE})
       $$
        for all states $s$ and all 
        $\cE$-invariant properties $\varphi$.
 \end{enumerate}
\end{lemma}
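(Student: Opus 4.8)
The plan is to prove both directions by exhibiting an explicit translation between schedulers together with a weight-preserving correspondence between paths, and then to argue that this correspondence preserves every $\cE$-invariant property. The guiding intuition is that a sojourn of $\cM$ inside $\cE$ — entering at some state $x$, wandering through $\cE$ via the forced actions $\alpha_t$, and finally leaving through a non-$\cE$ action $(s',\beta)$ landing at $u$ — is mirrored in $\spider{\cM}{\cE}$ by the two-step move $x \xrightarrow{\tau} s_0 \xrightarrow{\beta} u$, whose weights $w(x,s_0)$ and $w(s_0,s')+\wgt(s',\beta)$ sum to exactly $w(x,s')+\wgt(s',\beta)$, the weight accumulated by the $\cM$-sojourn. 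Since all paths inside $\cE$ between two fixed states carry the same weight and $\cE$ is reached internally with probability one, this correspondence is well-defined and the accumulated weight at every non-$\cE$ state is identical on both sides; the purged sequence $\Purge{\cE}{\cdot}$ is thereby preserved up to the hub state $s_0$, which is itself visited at a weight actually attained inside the $\cM$-sojourn. As $\cE$-invariant properties depend only on the purge (condition (I2)) and are insensitive to behaviour inside $\cE$, one then argues that corresponding maximal paths agree on every such $\varphi$.

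For (S3.1) I would turn a scheduler $\tsched$ for $\spider{\cM}{\cE}$ into $\sched$ for $\cM$ by letting $\sched$ copy $\tsched$ outside $\cE$ and, whenever the play is inside $\cE$, expand $\tsched$'s moves: the forced $\tau$-step is realised by navigating within $\cE$ to $s_0$ using the actions $\alpha_t$, and the relocated exit $(s_0,\beta)$ is realised by navigating from $s_0$ to the original source state $s'$ and firing $\beta$. The memory of $\sched$ records the $\spider{\cM}{\cE}$-history that its $\cM$-play has produced so far, and at each conceptual visit of $s_0$ it samples $\tsched$'s (possibly randomised) choice; a step-by-step check shows that the $\tsched$-probability of each $\spider{\cM}{\cE}$-transition equals the $\sched$-probability of the matching $\cM$-navigation segment, so the two induced measures push forward to the same distribution on purged paths and hence assign equal probability to every $\cE$-invariant $\varphi$. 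When $\tsched$ is MD the exit target $s'$ is fixed, so $\sched$ may be taken memoryless deterministic by playing ``$\beta$ at $s'$, and $\alpha_t$ at every other $\cE$-state $t$''.

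For (S3.2) I would go the other way, translating $\sched$ for $\cM$ into $\tsched$ for $\spider{\cM}{\cE}$. Again $\tsched$ copies $\sched$ outside $\cE$, but upon entering the $\cE$-region it internally simulates $\sched$'s sub-run inside $\cE$ until that sub-run leaves $\cE$ through some $(s',\beta)$, and then plays the corresponding $(s_0,\beta)$. The obstruction — and the reason the equality of (S3.1) weakens to a two-sided bound — is that $\sched$ may keep a play inside $\cE$ forever, an event of probability $\Pr^{\sched}_{\cM,s}(\Limit{\cE})$, whereas in $\spider{\cM}{\cE}$ the region funnels out through $s_0$ and no play can remain in it. On every play that is not eventually trapped in $\cE$ the simulation is faithful and the purge is preserved, so each $\varphi$-satisfying $\cM$-play maps to a $\varphi$-satisfying $\spider{\cM}{\cE}$-play, which gives the lower bound $\Pr^{\sched}_{\cM,s}(\varphi) \leqslant \Pr^{\tsched}_{\spider{\cM}{\cE},s}(\varphi)$; by (I1) the trapped plays never satisfy $\varphi$ in $\cM$, but $\tsched$ is forced to release them through an arbitrary default exit where they may now satisfy $\varphi$, and these are the only plays on which the two measures can disagree, yielding $\Pr^{\tsched}_{\spider{\cM}{\cE},s}(\varphi) \leqslant \Pr^{\sched}_{\cM,s}(\varphi) + \Pr^{\sched}_{\cM,s}(\Limit{\cE})$.

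I expect the main difficulty to be formalising the two simulations as genuine (infinite-memory, measurable) schedulers and proving that the induced path measures are related by the claimed coupling, controlling exactly the $\Limit{\cE}$ event in (S3.2). The bookkeeping showing that accumulated weights and non-$\cE$ state visits — hence the purge, hence the truth value of every $\cE$-invariant $\varphi$ — are preserved is conceptually routine but notationally heavy, and extra care is needed because an exit from $s_0$ may re-enter $\cE$, so a single play can interleave arbitrarily many sojourns before either escaping or being trapped.
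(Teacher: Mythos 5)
Your proposal is correct and follows essentially the same route as the paper: for (S3.1) the same weight-telescoping simulation that navigates inside $\cE$ via the forced actions $\alpha_t$ before firing the relocated exit action (with the same MD specialization), and for (S3.2) the same identification of $\Limit{\cE}$ as the sole source of discrepancy, with the trapped mass excluded from $\varphi$ by (I1) in $\cM$ but necessarily released through some exit of $\spider{\cM}{\cE}$. The measure-theoretic coupling you flag as the main difficulty is exactly what the paper carries out, by passing to an auxiliary MDP with a fresh trap absorbing the $\Limit{\cE}$ mass, defining the spider scheduler via conditional probabilities over $\cM$-paths with equal purge, and invoking Carath\'eodory's extension theorem to match the two measures on purged sequences.
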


Note that  (S3.2) implies that
$\Pr^{\sched}_{\cM,s}(\varphi)=\Pr^{\tsched}_{\spider{\cM}{\cE}}(\varphi)$
if $\Pr^{\sched}_{\cM,s}(\Limit{\cE})=0$.

\begin{proof}
 For statement (S3.1), we observe that
 $\cM$ can mimick $\spider{\cM}{\cE}$'s 
 $\tau$-transitions, followed by
 the state-action pair $(s_0,\beta)$ with $\beta \in \Act_{\cM}(s)$ for some
 state $s$ in $\cE$ with $s\not= s_0$: First, the
 MD-scheduler that realizes $\cE$ is simulated
 by choosing $\alpha_u$ for each state $u$ in $\cE$
 until state $s$ has been reached and then taking action $\beta$ in state $s$.
 Note that $w(s,s_0)=-w(s_0,s)$ and therefore
 $$
  \wgt_{\spider{\cM}{\cE}}(s,\tau)+\wgt_{\spider{\cM}{\cE}}(s_0,\beta) 
  \ \ = \ \
  w(s,s_0) + w(s_0,s)+\wgt_{\cM}(s,\beta) 
  \ \ = \ \ \wgt_{\cM}(s,\beta)\enskip.
 $$
 This yields a scheduler transformation
 ``scheduler $\tsched$ for $\spider{\cM}{\cE}$ $\leadsto$
   scheduler $\sched$ for $\cM$''
 preserving the probabilities of all $\cE$-invariant properties.
 Moreover, $\sched$ is MD if so is $\tsched$.

 The idea to provide scheduler transformations as stated in (S3.2)
 relies on the observation that
 $\spider{\cM}{\cE}$ 
 can mimic $\cM$'s behavior inside $\cE$, 
 followed by a state-action pair $(s,\beta)$, where $s$ is a state in $\cE$
 with $s\not= s_0$ and $\beta \not= \alpha_s$, 
 by taking the $\tau$-transition from $s$ to $s_0$, 
 followed by the state-action pair $(s_0,\beta)$.

 Statement (S3.2) is obvious if $\cM=\cE$ in which case
 $\Pr^{\sched}_{\cM,s}(\psi)=0$ for all
 $\cE$-invariant properties $\psi$. (Recall that all paths with an 
 infinite suffix in $\cE$ violate $\psi$; see (I1) in the definition of
 $\cE$-invariance.)
 Suppose now that $\cM\not= \cE$.

 Let $\cH$ be the MDP resulting from $\spider{\cM}{\cE,s_0}$ 
 by adding a fresh state
 $\final$, a fresh action name $\iota$ 
 and a deterministic transition from the reference state
 $s_0$ of the spider construction to $\final$
 and a deterministic self-loop at state $\final$, both with action label
 $\iota$ and weight 0.
 Let $S'=S \cup \{\final\}$.
 Obviously, $\cH$ and $\cM$ have the same traps.
 
 We introduce a purge-function, called $\Purge{\cH}{\cdot}$, 
 similar to the purge-function for paths in $\cM$.
 Recall that the idea to define
 $\Purge{\cE}{\fpath'}$ for a path $\fpath'$ in $\cM$
 was to abstract away from the behavior inside $\cE$ and
 just representing the state where $\cE$ is entered and the action
 where $\cE$ is left, and then replacing the action with the corresponding
 weight. Thus, for a path fragment of $\fpath'$ in which $\cE$ is entered
 via state $s$ and left via action $\alpha \in Act(t)$ 
 (where $t$ is a state of $\cE$, while the state-action pair $(t,\alpha)$
 does not belong to $\cE$) 
 leading to state $u$,
 the corresponding path fragment is replaced with
 $s w u$ where $w=w(s,t)+\wgt_{\cM}(t,\alpha)$.
 With the switch from $\cM$ to $\spider{\cM}{\cE,s_0}$ resp. $\cH$
 the corresponding behavior of such a path fragment from $s$ to $u$
 consists of two or one transition, namely
  $s \, \tau \, s_0 \, \alpha \, u$ if $s\not= s_0$
 or $s_0 \, \alpha \, u$ if $s=s_0$.
 The idea is now to replace these path fragments with $s \, w' \, u$ where
 $w'= w(s,s_0)+\wgt_{\cH}(s_0,\alpha)$.
 Note that $\wgt_{\cH}(s_0,\alpha)=\wgt_{\spider{\cM}{\cE,s_0}}(s_0,\alpha)=
  \wgt(s_0,t)+\wgt_{\cM}(t,\alpha)$. As $w(s,s_0)+w(s_0,t)=w(s,t)$
 we get $w=w'$.

 The formal definition of $\Purge{\cH}{\fpath}$ for the paths in $\cH$
 is as follows.
 Let
 $\fpath=t_0\alpha_0 t_1 \alpha_1 t_2 \alpha_2 \ldots$
 be a (finite or infinite) path in $\cH$.
 The sequence $\Purge{\cH}{\fpath}\in
 (S' \times \Integer)^{\omega} \cup (S' \times \Integer)^* \times S'$ 
 results from $\fpath$ as follows: First, 
 each $\alpha_i$ with $\wgt_{\cH}(t_i,\alpha_i)$ is replaced, 
 provided $t_i\not= s_0$ and $(t_i,\alpha_i)$ is a state-action pair
 of $\cM$. Then, each path fragment of the form 
 $t_i \, \tau \, t_{i+1} \, \alpha_{i+1} \, t_{i+2}$
 where $t_i$ belongs to $\cE$, $t_i\not= s_0$, and 
 $t_{i+1}$ is the reference state $s_0$ of the spider construction
 is replaced with $t_i \, w \, t_{i+2}$ where 
 $w=\wgt_{\cH}(t_i,\tau)+\wgt_{\cH}(s_0,\alpha_{i+1})$.
 Finally, each action $\alpha_i$ where $t_i=s_0$ is replaced
 with $\wgt_{\cH}(s_0,\alpha_i)$.
 Note that $\Purge{\cH}{\fpath}$ is finite iff $\fpath$ is finite.

 We now define the randomized scheduler $\usched$ for $\cH$ that mimics
 the given scheduler $\sched$ for $\cM$ as follows.
 Suppose $\fpath=t_0\alpha_0 t_1 \alpha_1  \ldots \alpha_{n-1} t_n$
 is a finite path in $\cH$ where $t_n$ is
 not a trap and different from the auxiliary state $\final$.
 \begin{itemize}
 \item
 If $t_n$ is not a state of $\cE$ and $(t_n,\alpha)$ a state-action pair
 of $\cM$ (and $\cH$) then $\usched(\fpath)(\alpha)$ equals the 
 conditional probability
 for $\sched$ to generate a 
 path of the form $\fpath' \alpha u$ where
 $\Purge{\cE}{\fpath'}=\Purge{\cH}{\fpath}$ 
 and $u$ is an arbitrary
 $\alpha$-successor of $t_n$ under the condition that $\sched$ indeed schedules
 such a path $\fpath'$.
 That is, if $t_0=s$ and $\Pi_{\cM,s}$ denotes the set of all finite 
 paths $\fpath'$ with $\first(\fpath')=s$ and
 $\Purge{\cE}{\fpath'}=\Purge{\cH}{\fpath}$ 
 then
 $$
  \usched(\fpath)(\alpha) \ \ = \ \ 
  \frac{\sum\limits_{\fpath'\in \Pi_{\cM,s}} 
        \sum\limits_{u\in S} \Pr^{\sched}_{\cM,s}(\fpath' \alpha u)}
       {\sum\limits_{\fpath'\in \Pi_{\cM,s}} \Pr^{\sched}_{\cM,s}(\fpath')}
 $$
 where we assume that $\Pi_{\cM,s}$ contains at least one $\sched$-path and use
 $\Pr^{\sched}_{\cM,s}(\fpath')$ as a short-form notation
 for the probability for the cylinder set of $\fpath'$ under $\sched$.%
 \footnote{%
  The cylinder set of a finite path $\fpath'$ denotes the
  set of maximal paths $\infpath$ where $\fpath'$ is a prefix of $\infpath$.}
 If there is no $\sched$-path $\fpath' \in \Pi_{\cM,s}$ then
 $\usched(\fpath)$ is irrelevant for our purposes.
 \item
 Suppose now $t_n$ is a state of $\cE$, $t_n\not= s_0$ 
 and either $n=0$ or
 $n \geqslant 1$ and $\alpha_{n-1}\not= \tau$.
 Then, $\usched(\fpath)(\tau)=1$.
 Moreover, if $\alpha$ is an action in $\Act_{\cH}(s_0)\setminus \{\iota\}$
 then
 $\usched(\fpath \tau s_0)(\alpha)$ equals the conditional probability
 for $\sched$ to generate a
 path of the form $\fpath'$ with
 $\Purge{\cE}{\fpath'}=\Purge{\cH}{\fpath \tau s_0 \alpha u}$ 
 where $u$ is an arbitrary
 $\alpha$-successor of $s_0$ in $\cH$.
 With the remaining probability, $\usched$ moves
 from $s_0$ to $\final$ via the fresh action $\iota$.
 That is, $\usched(\fpath \tau s_0)(\iota)=
   1-\sum_{\alpha} \usched(\fpath \tau s_0)(\alpha)$.
 This value equals the conditional 
 probability for $\sched$ to generate an infinite
 path $\infpath = \fpath''; \infpath'$ where 
 $\Purge{\cE}{\fpath''}=\Purge{\cH}{\fpath}$ and $\infpath'$ is an
 infinite path in $\cE$.
 In both cases, the condition is that $\sched$ generates at least one 
 path $\fpath''$ with $\Purge{\cE}{\fpath''}=\Purge{\cH}{\fpath}$.
 If no such path $\fpath''$ exists then $\usched(\fpath)$ is irrelevant
 for our purposes.
 \item
 The definition of $\usched(\fpath)$ is analogous when $\last(\fpath)=s_0$
 and either 
 $n=0$ or $(t_{n-1},\alpha_{n-1})$ is a state-action pair of $\cM$.
 \end{itemize}
 Note that $\usched(\fpath_1)=\usched(\fpath_2)$ whenever $\fpath_1,\fpath_2$
 are finite paths with $\Purge{\cE}{\fpath_1}=\Purge{\cE}{\fpath_2}$.

 For simplicity, let us assume that $\cM$ has no traps, in which case
 all maximal paths in $\cM$ (and $\cH$) are infinite.
 Given an $\cE$-invariant property $\psi$ for $\cM$,
 let $\purge{\psi}$ denote the set of all  words
 $\Purge{\cE}{\infpath}$ where $\infpath$ is an infinite path
 in $\cM$ with $\infpath \models \psi$.
 (Recall that $\infpath \not\models \psi$ for 
 all infinite paths $\infpath$ with
 $\lim(\infpath) =\cE$ by definition of $\cE$-invariance.) 
 Then, $\purge{\psi} \subseteq (S \times \Integer)^{\omega}$
 and $\purge{\psi}$ can be viewed as 
 a measurable subset of $(S' \times \Integer)^{\omega}$
 where measurability is understood with respect to the sigma-algebra 
 $\Sigma_{\cH}$
 generated by the cylinder sets spanned by the finite strings
 $(S' \times \Integer)^* \times S'$. 
 (Recall that $S'=S \cup \{\final\}$.)

 The probability measure $\Pr^{\sched}_{\cM,s}$  on the maximal paths
 of $\cM$ induces a probability measure $\mu^{\sched}_{\cM,s}$ 
  on the sigma-algebra $\Sigma_{\cH}$ over
 $(S' \times \Integer)^{\omega}$
 using the embedding
 $e\colon \InfPaths_{\cM}\to (S'\times \Integer)^{\omega}$ given by:
 \begin{itemize}
 \item
   $e(\infpath) = \Purge{\cE}{\infpath}$ if $\infpath$ contains
   infinitely many actions not contained in $\cE$ and
 \item  
   $e(\infpath) = \Purge{\cE}{\fpath'} \, 0 \, \final \, 0 \, \final \ldots$
   if $\infpath = \fpath';\infpath'$
   where $\infpath'$ is an infinite paths consisting of state-action pairs
   in $\cE$ and either $\fpath'$ consists of a single state or
   the last state-action pair of $\fpath'$ does not belong to $\cE$.
 \end{itemize}
 Then, $\mu^{\sched}_{\cM,s}$ is the unique probability measure 
 given by 
 $\mu^{\sched}_{\cM,s}(\Cyl(t_0\, w_0\, \ldots w_{n-1}\, t_n))
  = \sum_{\fpath'} \Pr^{\sched}_{\cM,s}(\Cyl(\fpath'))$
 where $\fpath'$ ranges over all finite paths in $\cM$
 with $\Purge{\cE}{\fpath'}= t_0\, w_0\, \ldots w_{n-1}\, t_n$.
 Given a 0-EC-invariant property $\psi$,
 the image of the set of infinite $\infpath$ in $\cM$ satisfying
 $\psi$ under $e$ equals $\purge{\psi}$.
 This yields
 $\Pr^{\sched}_{\cM,s}(\psi)=\mu^{\sched}_{\cM,s}(\purge{\psi})$.
        
 Likewise, scheduler $\usched$ for $\cH$ induces a probability
 measure $\mu^{\usched}_{\cH,s}$ over this sigma-algebra
 such that $\Pr^{\usched}_{\cH,s}(\psi)=\mu^{\usched}_{\cH,s}(\purge{\psi})$
 for each 0-EC-invariant property $\psi$.

 By construction, $\mu^{\sched}_{\cM,s}$ and $\mu^{\usched}_{\cH,s}$
 agree on the cylinder sets of the finite strings in
 $(S' \times \Integer)^* \times S'$. 
 \begin{enumerate}
 \item []
  This can be shown by induction on the length of strings in
  $(S' \times \Integer)^* \times S'$.
  In the step of induction, we consider a string of the form
  $t_0\, w_0\, t_1\, w_1 \ldots w_{n-1}\, t_{n}\, w_{n}\, t_{n+1}
   \in (S'\times \Integer)^*\times S'$.
  Let $s=t_0$.
  Suppose for simplicity that $t_n$ is a state in $\cM$ that does not belong
  to $\cE$. Let $A$ denote the set of actions $\alpha$ where
  $(t_n,\alpha)$ is a state-action pair in $\cM$ (and $\cH$) 
  and $\wgt(t_n,\alpha)=w_{n}$.
  Let $\Pi_{\cM,s}$ denote the set of finite paths $\fpath'$
  in $\cM$
  with $\Purge{\cE}{\fpath'}=t_0\, w_0\, t_1\, w_1 \ldots w_{n-1}\, t_{n}$.
  Likewise, we write 
  $\Pi_{\cH,s}$ to denote the set of finite paths $\fpath$ in $\cH$
  with $\Purge{\cH}{\fpath}=t_0\, w_0\, t_1\, w_1 \ldots w_{n-1}\, t_{n}$.
  By induction hypothesis we have
  $$
   \sum_{\fpath\in \Pi_{\cH,s}} \Pr^{\usched}_{\cH,s}(\fpath)
   \ \ = \ \
   \mu^{\usched}_{\cH,s}(t_0\, w_0\, t_1\, w_1 \ldots w_{n-1}\, t_{n})
   \ \ = \ \ 
   \mu^{\sched}_{\cM,s}(t_0\, w_0\, t_1\, w_1 \ldots w_{n-1}\, t_{n})
   \ \ = \ \ 
   \sum_{\fpath'\in \Pi_{\cM,s}} \Pr^{\usched}_{\cH,s}(\fpath')
  $$
  Let $p$ denote this value and suppose $p>0$.
  Then:
  \begin{eqnarray*}
    \mu^{\usched}_{\cH,s}
      (t_0\, w_0\, t_1\, w_1 \ldots w_{n-1}\, t_{n} \, w_{n} \, t_{n+1})
     & \ = \ &
     \sum_{\fpath \in \Pi_{\cH,s}} 
     \sum_{\beta \in A}
       \Pr^{\usched}_{\cH,s}(\fpath) \cdot
       \usched(\fpath)(\beta) \cdot P(t_n,\beta,t_{n+1})
     \\
     \\[-1ex]
     & = &
     \sum_{\fpath \in \Pi_{\cH,s}} 
     \sum_{\beta \in A}
       \Pr^{\usched}_{\cH,s}(\fpath) \cdot
       \frac{\sum\limits_{\fpath'\in \Pi_{\cM,s}} 
             \sum\limits_{u\in S} \Pr^{\sched}_{\cM,s}(\fpath' \beta u)}
            {\sum\limits_{\fpath'\in \Pi_{\cM,s}}
                                          \Pr^{\sched}_{\cM,s}(\fpath')}
       \cdot P(t_n,\beta,t_{n+1})
     \\
     \\[-1ex]
     & = &
     \sum_{\fpath \in \Pi_{\cH,s}} 
     \sum_{\beta \in A}
       \Pr^{\usched}_{\cH,s}(\fpath) \cdot
       \frac{1}{p} \cdot \sum\limits_{\fpath'\in \Pi_{\cM,s}} 
             \sum\limits_{u\in S} \Pr^{\sched}_{\cM,s}(\fpath' \beta u)
       \cdot P(t_n,\beta,t_{n+1})
     \\
     \\[-1ex]
     & = &
     \frac{1}{p} \cdot
     \sum_{\beta \in A}
       P(t_n,\beta,t_{n+1}) \cdot
       \sum\limits_{\fpath'\in \Pi_{\cM,s}} 
             \sum\limits_{u\in S} \Pr^{\sched}_{\cM,s}(\fpath' \beta u)
     \cdot
       \sum_{\fpath \in \Pi_{\cH,s}} \Pr^{\usched}_{\cH,s}(\fpath) 
     \\
     \\[-1ex]
     & = &
     \sum_{\beta \in A}
       P(t_n,\beta,t_{n+1}) \cdot
       \sum\limits_{\fpath'\in \Pi_{\cM,s}} 
           \Pr^{\sched}_{\cM,s}(\fpath')\cdot \sched(\fpath')(\beta)
           \cdot 
            \sum\limits_{u\in S} 
               P(t_n,\beta, u)
     \\
     \\[-1ex]
     & = &
     \sum\limits_{\fpath'\in \Pi_{\cM,s}} 
     \sum_{\beta \in A}  
           \Pr^{\sched}_{\cM,s}(\fpath')\cdot \sched(\fpath')(\beta)
           \cdot P(t_n,\beta,t_{n+1}) 
     \\
     \\[-1ex]
     & = &     
     \mu^{\sched}_{\cM,s}
      (t_0\, w_0\, t_1\, w_1 \ldots w_{n-1}\, t_{n} \, w_{n} \, t_{n+1})
  \end{eqnarray*}
  The calculation for the other cases is similar.
 \end{enumerate}
 By Caratheodory's measure-extension theorem,
 $\Pr^{\sched}_{\cM,s}$ and $\Pr^{\usched}_{\cH,s}$ agree 
 when viewed as measures over $(S' \times \Integer)^{\omega}$.
 For all $\cE$-invariant properties $\psi$
 it holds $\psi=e^{-1}(\Purge{\cE}{\psi})$
 and hence, we obtain
 $$
   \Pr^{\sched}_{\cM,s}(\psi)
   \ \ = \ \ 
   \mu^{\sched}_{\cM,s}(\Purge{\cE}{\psi})
   \ \ = \ \
   \mu^{\sched}_{\cH,s}(\Purge{\cE}{\psi})
   \ \ = \ \ 
   \Pr^{\usched}_{\cH,s}(\psi)\enskip.
 $$

 We finally switch from scheduler $\usched$ for $\cH$ to a scheduler
 $\tsched$ for $\spider{\cM}{\cE,s_0}$. For this, we pick arbitrary
 actions $\alpha_s$ enabled in state $s$ of $\cM$ and define
 $\tsched$ by $\tsched(\fpath)(\beta)=\usched(\fpath)(\beta)$ for
 each action $\beta \not= \alpha_s$ that is enabled in $s$ as a state of 
 $\spider{\cM}{\cE,s_0}$ and
 $\tsched(\fpath)(\alpha_s)=\usched(\fpath)(\alpha_s)+\usched(\fpath)(\iota)$.
 We then have:
 $$
   \Pr^{\tsched}_{\spider{\cM}{\cE,s_0},s}(\psi)\ \ \geqslant \ \
   \Pr^{\usched}_{\cH,s}(\psi)
 $$
 for all $\cE$-invariant properties $\psi$.
 Clearly, $\Pr^{\usched}_{\cH,s}(\psi) + \Pr^{\sched}_{\cM,s}(\Limit{\cE})$
 is an upper bound for $\Pr^{\tsched}_{\spider{\cM}{\cE,s_0},s}(\psi)$.
\end{proof}

As a consequence of Lemma \ref{preservation-of-0-EC-invariant-properties}
we get:

\begin{corollary}
  For each $\cE$-invariant property $\varphi$ and each state
  $s$ in $\cM$, $\Pr^{\sup}_{\cM,s}(\varphi)=\Pr^{\sup}_{\spider{\cM}{\cE},s}(\varphi)$.
  Furthermore, the existence of a scheduler $\sched$ for $\cM$ with
  $\Pr^{\sup}_{\cM,s}(\varphi)=\Pr^{\sched}_{\cM,s}(\varphi)$
  implies the existence of a scheduler $\tsched$ for $\spider{\cM}{\cE}$
  with
  $\Pr^{\sup}_{\cM,s}(\varphi)=\Pr^{\tsched}_{\spider{\cM}{\cE},s}(\varphi)$,
  and vice versa.
\end{corollary}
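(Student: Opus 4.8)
The plan is to derive everything from the two scheduler transformations (S3.1) and (S3.2) of Lemma~\ref{preservation-of-0-EC-invariant-properties}, which already carry all the measure-theoretic content; the corollary then becomes a matter of passing to suprema. Throughout I fix a state $s$ and an $\cE$-invariant property $\varphi$.

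First I would establish the equality of the optimal values. For $\Pr^{\sup}_{\cM,s}(\varphi)\leqslant\Pr^{\sup}_{\spider{\cM}{\cE},s}(\varphi)$, take an arbitrary scheduler $\sched$ for $\cM$; by (S3.2) there is a scheduler $\tsched$ for $\spider{\cM}{\cE}$ with $\Pr^{\sched}_{\cM,s}(\varphi)\leqslant\Pr^{\tsched}_{\spider{\cM}{\cE},s}(\varphi)\leqslant\Pr^{\sup}_{\spider{\cM}{\cE},s}(\varphi)$, and taking the supremum over $\sched$ yields the claim. For the reverse inequality, take an arbitrary $\tsched$ for $\spider{\cM}{\cE}$; by (S3.1) there is a scheduler $\sched$ for $\cM$ with $\Pr^{\tsched}_{\spider{\cM}{\cE},s}(\varphi)=\Pr^{\sched}_{\cM,s}(\varphi)\leqslant\Pr^{\sup}_{\cM,s}(\varphi)$, and the supremum over $\tsched$ gives the remaining inequality. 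Hence $\Pr^{\sup}_{\cM,s}(\varphi)=\Pr^{\sup}_{\spider{\cM}{\cE},s}(\varphi)$.

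Next I would treat the attainment statement, using the value equality just proved. For the ``vice versa'' direction, suppose some $\tsched$ attains the supremum in $\spider{\cM}{\cE}$; applying (S3.1) produces a scheduler $\sched$ for $\cM$ with $\Pr^{\sched}_{\cM,s}(\varphi)=\Pr^{\tsched}_{\spider{\cM}{\cE},s}(\varphi)=\Pr^{\sup}_{\spider{\cM}{\cE},s}(\varphi)=\Pr^{\sup}_{\cM,s}(\varphi)$, so $\sched$ is optimal in $\cM$. For the forward direction, suppose $\sched$ attains $\Pr^{\sup}_{\cM,s}(\varphi)=\Pr^{\sched}_{\cM,s}(\varphi)$; applying (S3.2) gives $\tsched$ with $\Pr^{\sched}_{\cM,s}(\varphi)\leqslant\Pr^{\tsched}_{\spider{\cM}{\cE},s}(\varphi)$.

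The one point needing care is that the upper bound supplied by (S3.2) contains the extra term $\Pr^{\sched}_{\cM,s}(\Limit{\cE})$, which may be strictly positive, so (S3.2) alone does not force equality in the forward direction. I would close this gap by sandwiching with the already-established value equality, writing $\Pr^{\tsched}_{\spider{\cM}{\cE},s}(\varphi)\leqslant\Pr^{\sup}_{\spider{\cM}{\cE},s}(\varphi)=\Pr^{\sup}_{\cM,s}(\varphi)=\Pr^{\sched}_{\cM,s}(\varphi)$. Combined with the lower bound $\Pr^{\sched}_{\cM,s}(\varphi)\leqslant\Pr^{\tsched}_{\spider{\cM}{\cE},s}(\varphi)$ from (S3.2), this squeezes $\Pr^{\tsched}_{\spider{\cM}{\cE},s}(\varphi)=\Pr^{\sched}_{\cM,s}(\varphi)=\Pr^{\sup}_{\cM,s}(\varphi)$, so $\tsched$ attains the supremum as required. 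Since the entire argument is purely order-theoretic bookkeeping layered on top of Lemma~\ref{preservation-of-0-EC-invariant-properties}, I expect no genuine obstacle beyond this sandwiching step, whose role is precisely to neutralize the slack introduced by the $\Limit{\cE}$ term.
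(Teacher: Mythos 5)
Your proof is correct and follows exactly the route the paper intends: the corollary is stated as an immediate consequence of Lemma~\ref{preservation-of-0-EC-invariant-properties}, and your derivation via (S3.1) and (S3.2) together with the sandwiching argument that neutralizes the $p^{\sched}_{s}$ slack in the upper bound of (S3.2) is precisely the bookkeeping the paper leaves implicit.
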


As weight-divergence and the gambling condition
are $\cE$-invariant properties, we obtain that the
spider construction preserves weight-divergence and
the pumping property as stated in Corollary \ref{spider-preserves-wgt-div}.
Moreover, we get:

\begin{corollary}
 \label{spider-MP-at-most-0}
   Suppose $\cM$ is strongly connected and
   $\Exp{\max}{\cM}(\MP)=0$.
   If $\cE$ is a 0-BSCC of $\cM$ then either
   $\spider{\cM}{\cE}$ has no maximal end component
   or $\Exp{\max}{\cF}(\MP)\leqslant 0$ for the unique
   maximal end component $\cF$ of $\spider{\cM}{\cE}$.
\end{corollary}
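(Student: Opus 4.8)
The plan is to argue by contradiction and reduce the statement to the pumping-preservation property of the spider construction. First I would dispose of the degenerate case: if $\cE=\cM$, then $\spider{\cM}{\cM}$ is acyclic (it consists only of the $\tau$-transitions to $s_0$), so it has no end component at all and the first disjunct of the claim holds vacuously. Thus I may assume $\cE\neq\cM$, in which case part (S2) of Lemma~\ref{spider-construction-S1S2} guarantees that $\spider{\cM}{\cE}$ has a single maximal end component $\cF$, reachable from all states and containing the reference state $s_0$. It then remains to show $\Exp{\max}{\cF}(\MP)\leqslant 0$.

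Suppose toward a contradiction that $\Exp{\max}{\cF}(\MP)>0$. Then, viewing $\cF$ as a strongly connected MDP, Lemma~\ref{lemma:pumping-ecs} provides a pumping MD-scheduler $\usched$ for $\cF$. I would lift $\usched$ to an MD-scheduler $\tsched$ for the whole MDP $\spider{\cM}{\cE}$ that agrees with $\usched$ on the states of $\cF$ and plays an arbitrary enabled action elsewhere. Because $\cF$ is the unique maximal end component of $\spider{\cM}{\cE}$, every end component is contained in $\cF$; hence, using de Alfaro's result that the limit of almost every path is an end component, under $\tsched$ almost every path from every state is eventually absorbed into the pumping BSCC of $\usched$ inside $\cF$. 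Since the bounded prefix before absorption does not affect the liminf of the accumulated weight, $\tsched$ is pumping from every state, so $\spider{\cM}{\cE}$ is a pumping MDP.

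Now I would invoke Corollary~\ref{spider-preserves-wgt-div}, which states that $\cM$ is pumping iff $\spider{\cM}{\cE}$ is pumping (its hypotheses, $\cM$ strongly connected and $\cE$ a $0$-BSCC, are exactly those assumed here). This yields that $\cM$ is pumping, and therefore $\Exp{\max}{\cM}(\MP)>0$ by Lemma~\ref{lemma:pumping-ecs}, contradicting the hypothesis $\Exp{\max}{\cM}(\MP)=0$. Hence $\Exp{\max}{\cF}(\MP)\leqslant 0$, completing the proof.

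The only step that needs genuine care — and the likely main obstacle — is the lifting of the pumping MD-scheduler $\usched$ of $\cF$ to a pumping scheduler $\tsched$ of the \emph{non}-strongly-connected MDP $\spider{\cM}{\cE}$, i.e.\ justifying almost-sure absorption into $\cF$ from every starting state. This hinges on the uniqueness of the MEC from (S2): since $\tsched$ is memoryless, it induces a finite Markov chain whose bottom strongly connected components are end components of $\spider{\cM}{\cE}$ and hence all contained in $\cF$; moreover on $\cF$ the scheduler $\tsched$ coincides with $\usched$, so the unique relevant BSCC is the pumping BSCC of $\usched$. Everything else is an immediate chaining of Lemma~\ref{lemma:pumping-ecs} and Corollary~\ref{spider-preserves-wgt-div}.
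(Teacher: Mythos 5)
Your proof is correct and follows essentially the same route as the paper's: assume $\Exp{\max}{\cF}(\MP)>0$, conclude via Lemma~\ref{lemma:pumping-ecs} that $\cF$ (hence $\spider{\cM}{\cE}$) is pumping, transfer this to $\cM$ by Corollary~\ref{spider-preserves-wgt-div}, and contradict $\Exp{\max}{\cM}(\MP)=0$. Your explicit treatment of the degenerate case $\cE=\cM$ and of lifting the pumping MD-scheduler from $\cF$ to all of $\spider{\cM}{\cE}$ only fills in details the paper leaves implicit.
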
 

\begin{proof}
 Suppose $\spider{\cM}{\cE}$ has end components.
 Let $\cF$ be the unique maximal end component of
 $\spider{\cM}{\cE}$.
 But then $\Exp{\max}{\cF}(\MP)\leqslant 0$ 
 as otherwise $\cF$ would be pumping (see Lemma \ref{lemma:pumping-ecs}), 
 in which case
 $\cM$ would be pumping (by Corollary \ref{spider-preserves-wgt-div}).
 This, however, is impossible (again by Lemma \ref{lemma:pumping-ecs})
 as $\Exp{\max}{\cM}(\MP)=0$.
\end{proof}

When the spider construction is applied to an
MDP $\cM$ that is not strongly connected,
then $\spider{\cM}{\cE}$ is obtained from $\cM$ by replacing $\cF$ with
$\spider{\cF}{\cE}$ where $\cF$ is the unique maximal end component
of $\cM$ that contains the given 0-BSCC $\cE$.
Moreover, state-action pairs $(s,\alpha)\in \cM\setminus \cF$
with $s$ being a state of $\cF$ that is different from the
reference state $s_0$ are replaced with $(s_0,\alpha)$
where $P_{\spider{\cM}{\cE}}(s_0,\alpha,u) =P_{\cM}(s,\alpha,u)$
for all states $u$ and 
$\wgt_{\spider{\cM}{\cE}}(s_0,\alpha)=w(s_0,s)+\wgt_{\cM}(s,\alpha)$.
Obviously, there is no end component $\cG$ of $\spider{\cM}{\cE}$ 
that subsumes $\spider{\cF}{\cE}$ 
and $\cG \not= \spider{\cF}{\cE}$.
Note that otherwise there would be a corresponding end component
of $\cM$ that strictly subsumes $\cF$, which is impossible by the
maximality of $\cF$.
Hence, by Corollary \ref{spider-preserves-wgt-div}:

\begin{corollary}[Generalization of Corollary \ref{spider-preserves-wgt-div}
     for possibly not strongly connected MDPs]
 \label{app:spider-preserves-wgt-div} 
   Let $\cM$ be a (possibly not strongly connected) MDP and $\cE$
   a 0-BSCC of $\cM$.
   Then, $\cM$ has a weight-divergent (resp.~pumping) end component iff
   $\spider{\cM}{\cE}$ has a weight-divergent (resp.~pumping) end component.
\end{corollary}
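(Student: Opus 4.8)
The plan is to reduce the statement to the already-established strongly connected case (Corollary~\ref{spider-preserves-wgt-div}) by exploiting the fact that the spider construction is \emph{local} to the unique maximal end component containing $\cE$. First I would fix $\cF$ to be the unique MEC of $\cM$ that contains the $0$-BSCC $\cE$; uniqueness is immediate since $\cE$ is an end component and two distinct MECs are state-disjoint (otherwise their union would be an end component strictly larger than both, contradicting maximality). As recalled in the paragraph preceding the statement, $\spider{\cM}{\cE}$ is obtained from $\cM$ by replacing $\cF$ with $\spider{\cF}{\cE}$ and redirecting to $s_0$ those state-action pairs that leave $\cF$ from a non-reference state of $\cE$; crucially, every state-action pair whose source lies outside $\cF$ is left untouched. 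Since $\cE$ is a $0$-BSCC of $\cM$ and $\cE\subseteq\cF$, it is also a $0$-BSCC of the strongly connected sub-MDP $\cF$, so Corollary~\ref{spider-preserves-wgt-div} applies to the pair $(\cF,\cE)$.

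The heart of the argument is a classification of end components on both sides. On the $\cM$ side, every end component is contained in a single MEC, hence is either contained in $\cF$ or state-disjoint from $\cF$. On the $\spider{\cM}{\cE}$ side, (S2) of Lemma~\ref{spider-construction} gives that $\spider{\cF}{\cE}$ has a unique MEC $\cF^{\ast}$ reachable from all its states, and the paragraph preceding the statement shows that no end component of $\spider{\cM}{\cE}$ strictly subsumes $\spider{\cF}{\cE}$; I would upgrade this to the claim that the MEC of $\spider{\cM}{\cE}$ containing any state of $\cF^{\ast}$ is exactly $\cF^{\ast}$, so that every end component of $\spider{\cM}{\cE}$ is either contained in $\cF^{\ast}$ or state-disjoint from $\cF$. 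The end components in the second class coincide on both sides (same states, actions, probabilities and weights, as those state-action pairs are untouched), so weight-divergence and the pumping property are trivially preserved for them. For the first class, I would invoke the elementary observation that a strongly connected MDP contains a weight-divergent (resp.\ pumping) end component iff it is itself weight-divergent (resp.\ pumping): one direction is trivial because the MDP is its own end component, the other follows by first reaching the diverging sub-end-component, a finite initial weight being irrelevant to divergence. Applying this to $\cF$ and to $\cF^{\ast}$ and chaining with Corollary~\ref{spider-preserves-wgt-div}, I obtain that $\cF$ has a weight-divergent end component iff $\cF$ is weight-divergent iff $\spider{\cF}{\cE}$ is weight-divergent iff $\cF^{\ast}$ has a weight-divergent end component.

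Combining the two classes yields: $\cM$ has a weight-divergent end component iff $\cF$ does or some other MEC does, iff $\spider{\cF}{\cE}$ does or that same other MEC does, iff $\spider{\cM}{\cE}$ has a weight-divergent end component; and verbatim the same chain with ``pumping'' in place of ``weight-divergent'', since Corollary~\ref{spider-preserves-wgt-div} and all structural facts above hold for both notions. The only degenerate case to dispatch separately is $\cE=\cF$: here the remark on the result of the spider construction when the whole MDP is a $0$-BSCC gives that $\spider{\cF}{\cE}$ is acyclic, so neither $\cF$ (a $0$-EC, hence with all cycles of weight $0$) nor $\spider{\cF}{\cE}$ contributes any weight-divergent or pumping end component, and the equivalence reduces to the untouched MECs.

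I expect the main obstacle to be the structural claim that the redirected edges leaving $\cF$ do not create any \emph{new} end component straddling $\cF^{\ast}$ and the rest of $\spider{\cM}{\cE}$. This is where the maximality of $\cF$ is essential: a redirected leaving action from $s_0$ lands outside $\cF$, and since there is no path back into $\cF$ inside any end component of $\cM$ (else $\cF$ would not be maximal) and the construction adds no such return, the MEC of $\spider{\cM}{\cE}$ containing $\cF^{\ast}$ cannot grow beyond $\cF^{\ast}$. Making this precise --- and carefully separating ``state of $\cF$'' from ``state of $\cE$'' in the redirection rule so that the localization claim is airtight --- is the step that needs the most care; everything else is bookkeeping on top of Corollary~\ref{spider-preserves-wgt-div}.
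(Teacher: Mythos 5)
Your proposal is correct and follows essentially the same route as the paper: the paper also observes that the spider construction only replaces the unique MEC $\cF$ containing $\cE$ by $\spider{\cF}{\cE}$, uses maximality of $\cF$ to rule out any end component of $\spider{\cM}{\cE}$ extending beyond $\spider{\cF}{\cE}$, and then invokes the strongly connected case (Corollary~\ref{spider-preserves-wgt-div}). Your write-up merely makes explicit some bookkeeping (the classification of end components into those inside $\cF$ and those disjoint from it, and the degenerate case $\cE=\cF$) that the paper leaves implicit.
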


\subsubsection{Iterative Application of the Spider Construction}

\label{sec:iterative-application-spider}

Let $\cM$ be a (possibly not strongly connected) MDP
such that $\Exp{\max}{\cF}(\MP)\leqslant 0$ for all maximal end components
$\cF$ of $\cM$.
The iterative application of the spider construction 
generates a sequence $\cM_0=\cM$, $\cM_1,\ldots,\cM_{\ell}=\cN$
of MDPs with the same state space and where 
$\cM_{i+1}=\spider{\cM_i}{\cE_i,s_{0,i}}$ 
arises from $\cM_i$ by flattening some 0-BSCC $\cE_i$
of $\cM_i$ through the spider construction.
More precisely, $\cE_i$ is a 0-BSCC contained in a maximal end component
$\cF_i$ of $\cM_i$ where $\Exp{\max}{\cF_i}(\MP)=0$
and $\cM_{i+1}$ is obtained from $\cM_i$ by 
\begin{itemize}
\item
  replacing $\cF_i$ with
  $\spider{\cF_i}{\cE_i,s_{0,i}}$ and
\item
  replacing all state-action pairs $(s,\alpha)\in \cM_i \setminus \cF_i$ 
  with $(s_{0,i},\alpha)$, provided
  $s$ is a state of $\cF_i$ different from the 
  reference state~$s_{0,i}$.
\end{itemize}
The transition probabilities of the new state-action pairs $(s_{0,i},\alpha)$
are the same as in $\cM_i$, \ie
$P_{\cM_i}(s,\alpha,u)=P_{\cM_{i+1}}(s_{0,i},\alpha,u)$
for all states $u$, and the weight is given by
$\wgt_{\cM_{i+1}}(s_{0,i},\alpha)=w(s_{0,i},s)+\wgt_{\cM_i}(s,\alpha)$.
Here, $w(s,t)$ is the weight of some/each path $s$ to $t$ inside some
0-EC of $\cF_i$ (see Remark \ref{remark-w(s,t)}).

The algorithm terminates with the final MDP $\cM_{\ell}$ 
if either there is no maximal end component $\cF$
of $\cM_{\ell}$ where $\Exp{\max}{\cF}(\MP)=0$ or 
for each maximal end component $\cF$ of $\cM_{\ell}$
with $\Exp{\max}{\cF}(\MP)=0$,
the constructed  MD-scheduler $\sched_{\cF}$ for $\cF$ maximizing
the expected mean payoff has a gambling BSCC.

\begin{figure}[ht]
	\centering%
\pgfdeclarelayer{background}%
\pgfsetlayers{background,main}%
\begin{tikzpicture}
	\node[state]  (t0) {$t$};
	\node[istate,above left=2mm and 3mm of t0] {$\cM$:};
	\node[state, right= of t0]  (u0) {$u$};
	\node[state, right= of u0]  (s0) {$s$};
	\path%
		(t0) edge[ptran, bend left] node[above]{$\beta$/\negw{-1}} (u0)
		(u0) edge[ptran, bend left] node[below]{$\alpha$/\posw{1}} (t0)
		(u0) edge[ptran, bend left] node[above]{$\gamma$/\negw{2}} (s0)
		(s0) edge[ptran, bend left] node[below]{$\delta$/\posw{2}} (u0);
	\coordinate[above left=3mm and 1mm of t0] (g1);
	\coordinate[above right=3mm and 1mm of u0] (g2);
	\coordinate[below right=4mm and 1mm of u0] (g3);
	\coordinate[below left=4mm and 1mm of t0] (g4);
	\begin{pgfonlayer}{background}%
        		\draw[rounded corners=1em,line width=2em,black!10,fill=black!10]
        			(g1) -- (g2) -- (g3) -- (g4) -- cycle;
		\node[istate, below=1mm of t0,white]{$\mathcal{E}_0$};
    	\end{pgfonlayer}
	\node[state,right=20mm of s0]  (t1) {$t$};
	\node[istate,above left=2mm and 3mm of t1] {$\cM_1$:};
	\node[state, right= of t1]  (u1) {$u$};
	\node[state, right= of u1]  (s1) {$s$};
	\path%
		(t1) edge[ptran, bend left] node[above]{$\tau$/\negw{1}} (u1)
		(u1) edge[ptran, bend left] node[above]{$\gamma$/\negw{2}} (s1)
		(s1) edge[ptran, bend left] node[below]{$\delta$/\posw{2}} (u1);
	\coordinate[above left=3mm and 1mm of u1] (gg1);
	\coordinate[above right=3mm and 1mm of s1] (gg2);
	\coordinate[below right=4mm and 1mm of s1] (gg3);
	\coordinate[below left=4mm and 1mm of u1] (gg4);
	\begin{pgfonlayer}{background}%
        		\draw[rounded corners=1em,line width=2em,black!10,fill=black!10]
        			(gg1) -- (gg2) -- (gg3) -- (gg4) -- cycle;
		\node[istate, below=1mm of u1,white]{$\mathcal{E}_1$};
    	\end{pgfonlayer}
	\node[state,right=20mm of s1]  (t2) {$t$};
	\node[istate,above left=2mm and 3mm of t2] {$\cM_2$:};
	\node[state, right= of t2]  (u2) {$u$};
	\node[state, right= of u2]  (s2) {$s$};
	\path%
		(t2) edge[ptran, bend left] node[above]{$\tau$/\negw{1}} (u2)
		(u2) edge[ptran, bend left] node[above]{$\tau$/\negw{2}} (s2);
\end{tikzpicture}
 	\caption{Iterative application of the spider construction.}
\label{fig:iterative-spider}
\end{figure}
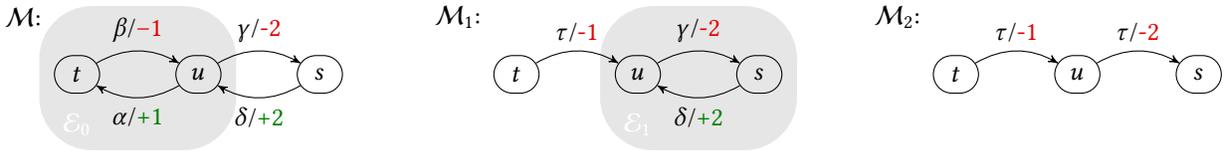

\begin{example}
  The MDPs occurring within the following example
  are illustrated by Figure~\ref{fig:iterative-spider}.
  Let $\cM_0=\cM$ be the strongly connected 
  MDP on the left of the figure.
  The iterative spider construction might first detect the 0-BSCC
  $\cE_0=\{(t,\beta), (u,\alpha)\}$.
  It then generates $\cM_1=\spider{\cM}{\cE_0,u}$ shown in the center,
  where~$u$ is the reference state.
  Then, $\cE_1=\{(u,\gamma), (s,\delta)\}$ is the unique maximal end component
  of $\cM_1$, and $\cE_1$ is even a 0-BSCC of $\cM_1$. 
  The next iteration is
   $\cM_2=\spider{\cM_1}{\cE_1,s}$ shown on the right.
  As $\cM_2$ does not have any end component, the iterative
  spider construction terminates with the MDP $\cM_2$.
\Ende
\end{example}

We get by Lemma \ref{spider-construction} and 
Corollary \ref{spider-preserves-wgt-div}:

\begin{lemma}[Maximal end components of $\cM_i$]
\label{max-EC-of-Mi}
 For each $i\in \{1,\ldots,\ell\}$:
\begin{enumerate}
\item [(a)]
  There is an injection $\iota$ that maps each maximal
  end component $\cF$ of $\cM_i$ to one of the maximal end components
  of the original MDP $\cM$
  such that the state space of $\cF$ is contained in the state space
  of $\iota(\cF)$ and $\cF$ is weight-divergent (resp.~pumping) 
  iff $\iota(\cF)$ is weight-divergent (resp.~pumping).
\item [(b)]
  The states and actions that are contained in a 0-EC of $\cM$
  are exactly the states and non-$\tau$ actions that 
  belong to one of the 0-BSCCs of $\cE_1,\ldots,\cE_{i-1}$
  or are contained in a 0-EC of $\cM_i$.
\end{enumerate}
\end{lemma}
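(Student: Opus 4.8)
The plan is to prove both parts simultaneously by induction on $i$, reducing each inductive step to the effect of a single spider step $\cM_{i+1}=\spider{\cM_i}{\cE_i}$ and invoking the structural guarantees of Lemma~\ref{spider-construction} together with Corollary~\ref{spider-preserves-wgt-div}. For the base case $i=0$ we have $\cM_0=\cM$, so in (a) we take $\iota$ to be the identity on maximal end components, and (b) is the tautology that the 0-EC states and actions of $\cM$ are exactly those of $\cM_0$. Throughout I shall use the standard fact that maximal end components are pairwise state-disjoint (two sharing a state would have an end component as union), and the construction's standing invariant that $\cE_i$ is a $0$-BSCC contained in a maximal end component $\cF_i$ of $\cM_i$ with $\Exp{\max}{\cF_i}(\MP)=0$, which is precisely the hypothesis needed to apply (S4).

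For the inductive step of (a), I first analyse how the maximal end components change under one spider step. Since the step only removes or alters state-action pairs of $\cF_i$ and redirects the leaving edges of $\cF_i$-states to the reference state $s_{0,i}$, every maximal end component of $\cM_i$ disjoint from $\cF_i$ is untouched and remains a maximal end component of $\cM_{i+1}$. By (S2) of Lemma~\ref{spider-construction} applied to the strongly connected $\cF_i$, the MDP $\spider{\cF_i}{\cE_i}$ has a single maximal end component $\cF_i^{+}$, reachable from all its states. I then claim that the maximal end components of $\cM_{i+1}$ are exactly $\cF_i^{+}$ together with the untouched ones. Granting this, I define an injection $\iota'\colon\MEC(\cM_{i+1})\to\MEC(\cM_i)$ sending $\cF_i^{+}\mapsto\cF_i$ and fixing every untouched component. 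Its state-space containment is immediate from (S1) (same state space, so the states of $\cF_i^{+}$ lie among those of $\cF_i$), and the weight-divergence/pumping equivalence between $\cF_i^{+}$ and $\cF_i$ is precisely Corollary~\ref{spider-preserves-wgt-div}; for untouched components both properties are trivially preserved. Composing $\iota'$ with the induction-hypothesis injection $\iota\colon\MEC(\cM_i)\to\MEC(\cM)$, and using transitivity of the containment and of the ``iff weight-divergent (resp.\ pumping)'' relation, yields the required injection for $\cM_{i+1}$.

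For the inductive step of (b), I apply the 0-EC-preservation statement (S4) of Lemma~\ref{spider-construction} to the step $\cM_i\leadsto\cM_{i+1}=\spider{\cM_i}{\cE_i}$, which is licensed by the invariant $\cE_i\subseteq\cF_i$ with $\Exp{\max}{\cF_i}(\MP)=0$. At the level of states, (S4) gives that a state $s\notin\cE_i$ lies in a 0-EC of $\cM_i$ iff it lies in a 0-EC of $\cM_{i+1}$, while the states of $\cE_i$ are exactly the extra 0-EC states contributed by the $i$-th step. At the level of actions, (S4) gives that $(s,\alpha)$ lies in a 0-EC of $\cM_i$ iff $(s,\alpha)\in\cE_i$ or the redirected pair $(s_{0,i},\alpha)$ lies in a 0-EC of $\cM_{i+1}$; here the preprocessing assumption $\Act(s)\cap\Act(t)=\varnothing$ lets the action name $\alpha$ identify its 0-EC membership unambiguously across the redirection, and the newly introduced $\tau$-actions are excluded by the ``non-$\tau$'' clause. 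Substituting these equivalences into the induction hypothesis---that the 0-EC states and non-$\tau$ actions of $\cM$ are those of $\cE_0,\dots,\cE_{i-1}$ together with those of $\cM_i$---absorbs $\cE_i$ into the list and replaces ``0-EC of $\cM_i$'' by ``0-EC of $\cM_{i+1}$'', giving exactly the claim for $i+1$.

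The main obstacle is the MEC-correspondence claim used in (a): that one spider step creates no spurious maximal end component merging $\cF_i^{+}$ with the surrounding graph, nor merges any of the untouched components. The delicate direction is ruling out a maximal end component $\cG\supsetneq\cF_i^{+}$ of $\cM_{i+1}$ containing some state $v\notin\cF_i$; such a $\cG$ would carry a cycle alternating between $v$ and $\cF_i^{+}$, and I would lift it through $\Purge{\cE_i}{\cdot}$ (equivalently, through the scheduler/path correspondence of (S3)) to a cycle of $\cM_i$ that visits $v$ and $\cF_i$, producing an end component of $\cM_i$ strictly containing $\cF_i$ and contradicting its maximality. Establishing this lifting at the purely graph-theoretic level---tracking how $\tau$-edges and reference-redirected edges correspond to paths inside $\cE_i$---is the one genuinely technical point; part (b) is then essentially bookkeeping once (S4) is in hand.
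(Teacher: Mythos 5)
Your proposal is correct and follows essentially the same route as the paper, which derives the lemma directly from (S1), (S2), (S4) of Lemma~\ref{spider-construction} and Corollary~\ref{spider-preserves-wgt-div} via the iterative construction; the one "delicate point" you isolate (no end component of $\cM_{i+1}$ strictly subsumes $\spider{\cF_i}{\cE_i}$) is settled in the paper by exactly your argument, namely lifting such a component back to an end component of $\cM_i$ strictly containing $\cF_i$, contradicting maximality. Your write-up merely makes explicit the induction and the degenerate case $\cF_i=\cE_i$ (where $\cF_i^{+}$ does not exist, which is harmless since $\iota$ need not be surjective).
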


As the spider construction preserves
the weight-divergence and pumping property
(see Corollary \ref{spider-preserves-wgt-div}),
Lemma \ref{max-EC-of-Mi} yields:

\begin{corollary}
  If $\cM$ has no weight-divergent end component
  then $\Exp{\max}{\cF}(\MP)<0$ for all end components $\cF$ 
  of the final MDP~$\cM_{\ell}$.
\end{corollary}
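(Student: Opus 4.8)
The plan is to combine two facts about the terminal MDP $\cM_\ell$: that it inherits the absence of weight-divergent end components from $\cM$, and that, by the stopping criterion of the iteration, it has no maximal end component of expected mean payoff exactly $0$. First I would observe that it suffices to prove the claim for \emph{maximal} end components. Indeed, any end component $\cG$ of $\cM_\ell$ is contained in some maximal end component $\cF$, and every scheduler of $\cG$ is in particular a scheduler of $\cF$, so $\Exp{\max}{\cG}(\MP) \leqslant \Exp{\max}{\cF}(\MP)$; hence $\Exp{\max}{\cF}(\MP) < 0$ for all maximal $\cF$ yields the corollary.

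Next I would transfer the hypothesis down the iteration. Each step $\cM_i \leadsto \cM_{i+1} = \spider{\cM_i}{\cE_i}$ flattens a $0$-BSCC, and by Corollary~\ref{app:spider-preserves-wgt-div} the spider construction preserves the existence of weight-divergent end components (in the possibly-not-strongly-connected setting). Applying this along $\cM = \cM_0, \cM_1, \ldots, \cM_\ell$ shows that $\cM_\ell$ has a weight-divergent end component iff $\cM$ does; since by assumption $\cM$ has none, neither does $\cM_\ell$. (Alternatively, Lemma~\ref{max-EC-of-Mi}(a) gives the same conclusion via the injection $\iota$ that preserves weight-divergence.) In particular, every maximal end component $\cF$ of $\cM_\ell$ is not weight-divergent.

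Fix such an $\cF$. I would now rule out the two ways $\Exp{\max}{\cF}(\MP)$ could fail to be negative. If $\Exp{\max}{\cF}(\MP) > 0$, then by Lemma~\ref{main:mp-pos-wgt-div} (equivalently Lemma~\ref{lemma:pumping-ecs}) $\cF$ is pumping, hence positively weight-divergent, contradicting the previous paragraph; so $\Exp{\max}{\cF}(\MP) \leqslant 0$. The delicate case, and the main obstacle, is excluding $\Exp{\max}{\cF}(\MP) = 0$: here I invoke the stopping criterion of the iterative construction. Since the algorithm halted at $\cM_\ell$, the terminating disjunction holds; were there a maximal end component with expected mean payoff $0$, the first disjunct (``no maximal end component of mean payoff $0$'') would fail, forcing the second, namely that the mean-payoff-maximizing MD-scheduler $\sched_\cF$ of $\cF$ has a gambling BSCC $\cB$. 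But $\cB$ is a gambling, hence weight-divergent, end component contained in $\cF$, so $\cF$ itself would be weight-divergent, once more a contradiction. Therefore no maximal end component of $\cM_\ell$ has expected mean payoff $0$.

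Combining the two cases gives $\Exp{\max}{\cF}(\MP) < 0$ for every maximal end component $\cF$ of $\cM_\ell$, and by the reduction in the first paragraph the same bound holds for every end component of $\cM_\ell$, as required.
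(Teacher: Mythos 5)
Your proof is correct and follows essentially the same route the paper intends: it transfers the absence of weight-divergent end components to $\cM_{\ell}$ via the preservation property of the spider construction (Corollary~\ref{app:spider-preserves-wgt-div} / Lemma~\ref{max-EC-of-Mi}), excludes positive maximal expected mean payoff via the pumping characterization, and excludes mean payoff $0$ via the termination criterion together with the fact that a gambling BSCC is a weight-divergent end component. The paper leaves these case distinctions implicit, so your write-up is simply a more explicit version of the same argument.
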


By property (S2) of Lemma \ref{spider-construction},
the number of state-action pairs is strictly decreasing, \ie
$\|\cM_0\| > \|\cM_1\| > \ldots > \|\cM_{\ell}\|$.
Hence, the number $\ell$ 
of recursive calls of the spider construction is bounded
by $\|\cM\|$.

Analogous to (S3) in Lemma~\ref{spider-construction} (see also
Lemma~\ref{preservation-of-0-EC-invariant-properties}) we obtain the
equivalence of $\cM$ and the MDPs $\cM_1,\ldots,\cM_{\ell}$ with
respect to the class of 0-EC-invariant properties.  These are
properties that are $\cE$-invariant for each 0-EC $\cE$ of $\cM$.

\begin{lemma}[Equivalence of $\cM$ and $\cM_i$ 
                   w.r.t. 0-EC-invariant properties]
 \label{lemma:equivalence-M-and-Mi}
   The original MDP $\cM$ and the MDP $\cM_i$ for
   $i\in \{1,\ldots,\ell\}$
   are equivalent in the following sense:
 \begin{enumerate}
 \item [(a)]
   For each scheduler $\tsched$ for $\cM_i$ 
   there is a scheduler
   $\sched$ for $\cM$ such that
   $\Pr^{\tsched}_{\cM_i,s}(\varphi)=
    \Pr^{\sched}_{\cM,s}(\varphi)$
   for all states $s$ in $\cM$ and all
   0-EC-invariant properties $\varphi$.
   If $\tsched$ is an MD-scheduler then
   $\sched$ can be chosen as an MD-scheduler.
 \item [(b)]
   For each scheduler $\sched$ for $\cM$ there is a scheduler $\tsched$ for
   $\cM_i$ such that
   $$
    \Pr^{\sched}_{\cM,s}(\varphi) \ \ \leqslant \ \
    \Pr^{\tsched}_{\cM_i,s}(\varphi) \ \ \leqslant \ \
    \Pr^{\sched}_{\cM,s}(\varphi) + 
    \Pr^{\sched}_{\cM,s}\bigl\{
          \infpath \in \InfPaths :
          \lim(\infpath) \in \{\cE_1,\ldots,\cE_{i-1}\} \bigr\}
   $$
   for all states $s$ in $\cM$ and all
   0-EC-invariant properties $\varphi$.
   In particular:
   \begin{center}
   $\Pr^{\sched}_{\cM,s}(\varphi) =
    \Pr^{\tsched}_{\cM_i,s}(\varphi)$
   \ \
   if \ $\Pr^{\sched}_{\cM,s}\{
          \infpath \in \InfPaths :
          \textrm{$\lim(\infpath)$ is a 0-EC}\}=0$.
   \end{center}
 \end{enumerate}
  Thus, $\Pr^{\sup}_{\cM,s}(\varphi)=\Pr^{\sup}_{\cM_i,s}(\varphi)$
  for each 0-EC-invariant property $\varphi$ and each state
  $s$ in $\cM$.
  Furthermore, the existence of a scheduler $\sched$ for $\cM$ with
  $\Pr^{\sup}_{\cM,s}(\varphi)=\Pr^{\sched}_{\cM,s}(\varphi)$
  implies the existence of a scheduler $\tsched$ for $\cM_i$
  with
  $\Pr^{\sup}_{\cM,s}(\varphi)=\Pr^{\tsched}_{\cM_,s}(\varphi)$,
  and vice versa.
\end{lemma}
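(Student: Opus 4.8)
The plan is to prove both parts by induction on $i$, using the single-step equivalence of Lemma~\ref{preservation-of-0-EC-invariant-properties} (that is, (S3.1) and (S3.2) of Lemma~\ref{spider-construction}) as the inductive engine, since $\cM_i=\spider{\cM_{i-1}}{\cE_{i-1}}$ for a $0$-BSCC $\cE_{i-1}$ of $\cM_{i-1}$. The base case $i=0$ is trivial. For the step, I would first check that a $0$-EC-invariant property $\varphi$ of $\cM$ is in particular $\cE_{i-1}$-invariant with respect to $\cM_{i-1}$, so that the single-step lemma may legitimately be applied at level $i{-}1$; this is where Lemma~\ref{max-EC-of-Mi}(b) and Remark~\ref{remark-w(s,t)} are needed, since they identify the state-action pairs of $\cE_{i-1}$ with those of a $0$-EC of $\cM$ and guarantee that the purge relation $\Purge{\cE_{i-1}}{\cdot}$ on $\cM_{i-1}$-paths is compatible with the $0$-EC-invariance of $\varphi$.

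For part (a) the composition is exact. Given a scheduler $\tsched$ for $\cM_i$, I would apply (S3.1) at level $i{-}1$ to obtain a scheduler for $\cM_{i-1}$ agreeing with $\tsched$ on all $\cE_{i-1}$-invariant (hence all $0$-EC-invariant) properties, and then invoke the induction hypothesis to pull it back to a scheduler $\sched$ for $\cM$ with $\Pr^{\sched}_{\cM,s}(\varphi)=\Pr^{\tsched}_{\cM_i,s}(\varphi)$ for every $0$-EC-invariant $\varphi$. Since both (S3.1) and the inductive transformation preserve memorylessness and determinism, $\sched$ is MD whenever $\tsched$ is.

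For part (b) I would build a scheduler chain $\sched=\sched^{(0)},\sched^{(1)},\dots,\sched^{(i)}=\tsched$, where $\sched^{(j)}$ on $\cM_j$ is obtained from $\sched^{(j-1)}$ on $\cM_{j-1}$ via (S3.2), giving at each step
\[
\Pr^{\sched^{(j-1)}}_{\cM_{j-1},s}(\varphi)\ \leqslant\ \Pr^{\sched^{(j)}}_{\cM_j,s}(\varphi)\ \leqslant\ \Pr^{\sched^{(j-1)}}_{\cM_{j-1},s}(\varphi)+\Pr^{\sched^{(j-1)}}_{\cM_{j-1},s}(\Limit{\cE_{j-1}}).
\]
Telescoping these inequalities yields the monotone lower bound and an upper bound of $\Pr^{\sched}_{\cM,s}(\varphi)+\sum_{j}\Pr^{\sched^{(j-1)}}_{\cM_{j-1},s}(\Limit{\cE_{j-1}})$. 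It then remains to identify this accumulated error with $\Pr^{\sched}_{\cM,s}\{\infpath:\lim(\infpath)\in\{\cE_1,\dots,\cE_{i-1}\}\}$: each $\cE_{j-1}$ corresponds, by Lemma~\ref{max-EC-of-Mi}(b), to a $0$-EC of $\cM$; the event of limiting in $\cE_{j-1}$ is left untouched by the earlier flattenings, which only modify disjoint $0$-ECs; and distinct $0$-ECs yield disjoint limiting events by de Alfaro's theorem, so the per-step error probabilities add up to the probability of limiting in their union. The ``in particular'' clause is then immediate, since that union is contained in the set of $0$-ECs.

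The two displayed conclusions follow formally: part (a) gives $\Pr^{\sup}_{\cM_i,s}(\varphi)\leqslant\Pr^{\sup}_{\cM,s}(\varphi)$ and part (b) the reverse inequality, whence equality, and optimal schedulers transfer in both directions because the matching schedulers realize the same probability. I expect the main obstacle to be the error accounting in part (b): justifying that $\Pr^{\sched^{(j-1)}}_{\cM_{j-1},s}(\Limit{\cE_{j-1}})$ equals the $\sched$-probability in $\cM$ of limiting in the corresponding $0$-EC, which is \emph{not} itself a $0$-EC-invariant property (it violates (I1)) and so cannot be transferred by the equivalence alone. The resolution I would pursue is to track, through the transformation chain, that each step acts as the identity outside the component it flattens and therefore preserves the mass absorbed into any fixed $0$-EC, and to combine this with the disjointness of the limiting events.
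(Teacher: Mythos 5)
Your overall strategy is exactly the paper's: the paper dispatches this lemma in one line, by induction on $i$ from the single-step equivalence (S3.1)/(S3.2) of Lemma~\ref{spider-construction}, and your part~(a) together with the telescoping skeleton of part~(b) fills in that induction correctly, including the genuinely necessary check (via Lemma~\ref{max-EC-of-Mi}(b) and the preservation of the $w$-values) that a $0$-EC-invariant property of $\cM$ remains $\cE_{j}$-invariant at each intermediate level so that the single-step lemma applies.

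The one step that would fail as written is the claim that the per-step error terms $\Pr^{\sched^{(j-1)}}_{\cM_{j-1},s}(\Limit{\cE_{j-1}})$ ``add up to the probability of limiting in their union'' by disjointness. These events live in different MDPs and are not images of pairwise disjoint events of $\cM$: the mass that the (S3.2) construction redirects after absorption into $\cE_{j-1}$ is placed on an arbitrary enabled action at the reference state and may subsequently be absorbed into $\cE_{j}$, so one unit of original $\sched$-mass can contribute to several of the per-step error terms and the telescoped sum can strictly exceed $\Pr^{\sched}_{\cM,s}\{\infpath:\lim(\infpath)\in\{\cE_1,\ldots,\cE_{i-1}\}\}$. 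The stated bound is nevertheless correct, but it must come from the global accounting you only hint at in your final sentence rather than from summing the step-wise inequalities: partition the $\sched$-paths of $\cM$ into those faithfully simulated through all $i$ flattenings, which contribute exactly $\Pr^{\sched}_{\cM,s}(\varphi)$ to $\Pr^{\tsched}_{\cM_i,s}(\varphi)$ because by (I1) no path limiting in a $0$-EC satisfies $\varphi$, and those absorbed at some step, which form a single event of measure $\Pr^{\sched}_{\cM,s}\{\infpath:\lim(\infpath)\in\{\cE_1,\ldots,\cE_{i-1}\}\}$ and, however often their mass is re-redirected, can add at most that measure to $\Pr^{\tsched}_{\cM_i,s}(\varphi)$. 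With that replacement your argument is complete and coincides with the paper's intended one.
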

The proof follows from Lemma~\ref{preservation-of-0-EC-invariant-properties}
  using an inductive argument.

\tudparagraph{1ex}{Graph structure of the MDPs $\cM_i$.}
  With the renaming of the actions -- prior to the
  application of the spider construction -- to ensure that the action sets
   for each pair of distinct states are disjoint, the action label
  $\tau$ of transitions in $\cM_{j-1}$ will be renamed
  when constructing $\cM_i$ by applying the spider construction towards 
  $\cM_{i-1}$. This, however, is irrelevant for the following
  arguments and we will simply refer to them as $\tau$-transitions.

\begin{lemma}[$\tau$-transitions in the MDPs $\cM_i$] 
 \label{tau-transitions}
 For $i\in \{1,\ldots,\ell\}$:
 \begin{enumerate}
 \item [(a)]
  The $\tau$-transitions are always deterministic, \ie
  have a single target state that will be reached with probability 1. 
  The weight of a $\tau$-transition
  from $s$ to $t$ in $\cM_i$ is $w(s,t)$.
 \item [(b)]
  Each state $s$ has at most one
  $\tau$-transition in $\cM_i$, 
  and if $s$ has a $\tau$-transition in $\cM_i$ then no other
  action is enabled in $s$ as a state of $\cM_i$.
 \item [(c)]
   The graph built by the $\tau$-transitions in  $\cM_i$ is acyclic.
 \end{enumerate}
  In particular, $n$ is an upper bound for the total number of
  $\tau$-transitions in each of the MDPs $\cM_i$.
\end{lemma}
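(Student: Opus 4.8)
The plan is to prove all three items simultaneously by induction on $i$, starting from the base case $\cM_0=\cM$, which contains no $\tau$-transitions and hence satisfies (a)--(c) vacuously. For the inductive step I would fix a spider step $\cM_{i+1}=\spider{\cM_i}{\cE_i,s_{0,i}}$, write $\cE=\cE_i$ and $s_0=s_{0,i}$, and classify every state of $\cM_{i+1}$ into three groups according to how steps (i)--(iii) touch it: states outside $\cE$, whose outgoing state-action pairs are copied verbatim from $\cM_i$; states in $\cE\setminus\{s_0\}$, which lose their pair $(s,\alpha_s)$ in step (i), acquire exactly one fresh deterministic transition $(s,\tau)$ with $P(s,\tau,s_0)=1$ and $\wgt(s,\tau)=w(s,s_0)$ in step (ii), and have all remaining actions moved to $s_0$ in step (iii); and the reference state $s_0$, which loses $(s_0,\alpha_{s_0})$, retains its other actions, and additionally collects the redirected actions.

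The key internal observation, which I would isolate first, is that a state $s\in\cE$ carrying a $\tau$-transition in $\cM_i$ must, by the inductive hypothesis (b), have $\tau$ as its only enabled action; since $\alpha_s$ is the unique action of $s$ inside the $0$-BSCC $\cE$, this forces $\alpha_s=\tau$. Consequently such a $\tau$-edge is deleted in step (i) and step (iii) finds nothing to redirect, so old $\tau$-transitions are never rerouted; they either survive unchanged (when they start outside $\cE$) or are removed (when they start in $\cE$). This immediately yields (b): states in $\cE\setminus\{s_0\}$ end up with exactly the single new $\tau$-transition and no other enabled action; $s_0$ ends up with no $\tau$-transition, since its only possible $\tau$-edge would be $\alpha_{s_0}$, deleted in step (i), and every redirected action is non-$\tau$; and states outside $\cE$ inherit (b) from $\cM_i$. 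Item (a) then follows because the new $\tau$-edges are deterministic with weight $w(s,s_0)$ by step (ii), the surviving $\tau$-edges keep their weight $w(s,t)$ by the inductive hypothesis, and Remark~\ref{remark-w(s,t)} guarantees that these $w$-values remain consistent across the iteration.

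For acyclicity (c) I would argue on the $\tau$-graph $G_{i+1}$. After the step, $s_0$ has no $\tau$-out-edge (a sink), every $s\in\cE\setminus\{s_0\}$ has its unique $\tau$-out-edge pointing to $s_0$, and the $\tau$-out-edges of states outside $\cE$ coincide with those of $G_i$. A hypothetical directed cycle in $G_{i+1}$ avoiding $\cE$ would consist entirely of edges inherited from $G_i$, contradicting the inductive acyclicity of $G_i$; a cycle meeting $\cE$ would have to enter $s_0$ (from $\cE\setminus\{s_0\}$ the only edge leads there) and could not leave it, again a contradiction. Hence $G_{i+1}$ is acyclic. The final ``in particular'' bound is then immediate: by (b) each of the $n$ states carries at most one $\tau$-transition, so there are at most $n$ of them.

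The bookkeeping is largely mechanical, so the main obstacle is the interaction between previously created $\tau$-states and the freshly chosen $0$-BSCC $\cE$: one must be sure that a $\tau$-state lying in $\cE$ contributes $\alpha_s=\tau$, so its edge is deleted rather than duplicated or redirected, and that the reference state $s_0$ emerges genuinely $\tau$-free even when $s_0$ itself was a $\tau$-state of $\cM_i$. Getting these corner cases right is what makes the simultaneous induction on (a), (b), (c) close cleanly, and it is precisely where the hypothesis (b) is used in an essential, non-vacuous way.
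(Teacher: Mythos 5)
Your proof is correct and follows essentially the same route as the paper: a simultaneous induction on $i$ whose crux is exactly the paper's observation that a state of $\cE_i$ carrying a $\tau$-transition must have $\alpha_s=\tau$ by the inductive hypothesis (b), so that old $\tau$-edges are discarded rather than redirected and the reference state acquires no $\tau$-transition. Your acyclicity argument is in fact slightly more direct than the paper's (which traces a hypothetical $\tau$-cycle back through earlier reference states), since you exploit that $s_{0,i}$ is a $\tau$-sink in $\cM_{i+1}$; this is a minor streamlining, not a different approach.
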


\begin{proof}
  Statement (a) is clear by the spider construction.
  The proof for statement (b) is by induction on $i$.
  For $i=1$ (basis of induction) this is clear by the spider construction.
  For the step of induction $i\Longrightarrow i{+}1$,
  we use the fact that if $s$ does not belong to the selected 0-BSCC 
  $\cE_i$ of the unique maximal end component $\cF_i$ of $\cM_i$
  then $(s,\alpha)\in \cM_i$ iff $(s,\alpha)\in \cM_{i+1}$.
  (Recall that $\cM_{i+1}$ arises from $\cM_i$ by
   replacing $\cF_i$ with $\spider{\cF_i}{\cE_i}$ for some 0-BSCC
   $\cE_i$ of $\cF_i$ and 
   replacing all state-action pairs $(s,\alpha)\in \cM_i \setminus \cE_i$ where
   $s$ is a state of $\cF_i$ different from the reference state
   $s_{0,i}$ with $(s_{0,i},\alpha)$.)
  If $s$ is contained in $\cE_i$, but not the reference state $s_{0,i}$ of the
  spider construction, then the spider construction replaces all
  state-action pairs $(s,\alpha)\in \cM_i\setminus \cE_i$ with
  $(s_{0,i},\alpha)$, 
  discards the unique state-action pair $(s,\alpha)\in \cE_i$
  and creates a $\tau$-transition from $s$ to $s_{0,i}$.
  Thus, if $s$ belongs to $\cE_i$ and $(s,\tau)\in \cM_i$ then 
  $(s,\tau)\in \cE_i$ (as $s$ has no other actions in $\cM_i$) and
  $(s,\tau)$ will be discarded. 
  In particular, the reference state $s_{0,i}$ 
  does not get ``new'' $\tau$-transitions, \ie
  \begin{enumerate}
  \item []
    State $t$ is a $\tau$-successor of $s_{0,i}$ in $\cM_i$ iff
    $t$ is a $\tau$-successor of $s_{0,i}$ in $\cM_{i+1}$.
    \hfill (*)
  \end{enumerate}
  This yields the claim for (b).

  Statement (c) follows by induction on $i$.
  The claim is obvious for $i=0$.
  In the step of induction, we assume
  that the MDPs $\cM_0,\ldots,\cM_i$ do not have any $\tau$-cycle,
  \ie a cycle built by $\tau$-transitions.
  Suppose by contradiction that $\cM_{i+1}=\spider{\cM_i}{\cE}$ 
  has a $\tau$-cycle
  $\cycle=t_0 \, \tau \, t_1 \, \tau \ldots \tau \, t_n$.
  At least one of the states $t_i$ must be contained in
  the 0-BSCC $\cE_i$ of $\cM_i$ as otherwise $\cycle$ would be
  a $\tau$-cycle in $\cM_i$.
  All states $s$ in $\cE_i$ that are different from the reference
  state $s_{0,i}$ have a $\tau$-transition to $s_{0,i}$ in $\cM_{i+1}$
  and no other action is enabled in $s$ as a state of $\cM_{i+1}$.
  Hence, with some state of $\cE_i$ also the reference state
   $s_{0,i}$ must be contained in $\cycle$. 
  Now let $s_{0,i}=t_0$ without loss of generality.
  But then the transition from $t_0=s_{0,i}$ to $t_1$ must have been
  generated in an earlier application of the spider construction
  (see (*) above).
  That is, $t_1=s_{0,i_1}$ for some $i_1 < i$.
  We repeat this argument and get natural numbers
  $i_1,\ldots,i_n$ with $0\leqslant i_n < i_{n-1} < \ldots < i_1$
  such that $t_m=s_{0,i_m}$.
  But then $\cycle$   is a $\tau$-cycle in $\cM_i$.
  Contradiction.
\end{proof}

Towards establishing the polynomial-time complexity of the iterative application
of the spider construction, let us discuss the sizes of the MDPs $\cM_i$. 
Although a single application $\cM_i\leadsto\cM_{i+1}$
of the spider construction can be done in polynomial time depending of the size
of $\cM_i$, this is not as obvious for the sequence $\cM_0\leadsto \ldots\leadsto \cM_{l}$:
As the number $l$ of applications required depends on the input $\cM_0$, the size of $\cM_l$
could be exponential in the size of $\cM_0$. Fortunately, this is not the
case, as the following lemma shows.
 
\begin{lemma}[Size of the MDPs $\cM_i$]
  \label{size-iterative-spider-construction}
  Notations as before. 
  The size of $\cM_i$ is polynomially bounded by the size of $\cM$.
\end{lemma}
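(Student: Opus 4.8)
The plan is to bound separately the three quantities that contribute to the size of $\cM_i$: the number $\|\cM_i\|$ of state-action pairs, the total bit-length of the transition probabilities, and the total bit-length of the weights. The first two are immediate. By property (S1) of Lemma~\ref{spider-construction}, each application of the spider construction strictly decreases the number of state-action pairs, so $\|\cM_i\| \leqslant \|\cM\|$ for all $i$, and moreover the number $\ell$ of iterations is bounded by $\|\cM\|$. For the probabilities, note that steps (ii) and (iii) of the construction either introduce a $\tau$-transition of probability $1$ or copy an existing transition probability verbatim ($P_{\cN}(s_0,\beta,u)=P_{\cM}(s,\beta,u)$); hence every probability occurring in any $\cM_i$ is either $1$ or a probability already present in $\cM$, so the total bit-length of the probabilities stays bounded by that of $\cM$ plus $O(n)$, where $n=|S|$.

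The only real danger is that the weights grow too fast, since the new weights created by the construction are sums of $w$-values, which are themselves accumulated weights of paths. To control this, I would let $W_i$ denote the maximum absolute value of a weight occurring in $\cM_i$ and establish the one-step bound $W_{i+1}\leqslant n\cdot W_i$. Indeed, the new weights in $\cM_{i+1}=\spider{\cM_i}{\cE_i,s_{0,i}}$ are of two kinds. A $\tau$-transition gets weight $w(s,s_{0,i})$, the common weight (inside the $0$-BSCC $\cE_i$) of all paths from $s$ to $s_{0,i}$; since this weight is independent of the chosen path, it can be read off a simple path of length at most $n-1$, each of whose edges has absolute weight at most $W_i$, giving $|w(s,s_{0,i})|\leqslant (n-1)W_i$. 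A reassigned state-action pair $(s_{0,i},\beta)$ gets weight $w(s_{0,i},s)+\wgt_{\cM_i}(s,\beta)$, bounded in absolute value by $(n-1)W_i+W_i=n\cdot W_i$. All other weights are inherited unchanged, so $W_{i+1}\leqslant n\cdot W_i$.

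Iterating this bound gives $W_i\leqslant n^i\cdot W_0$, hence the bit-length of any weight in $\cM_i$ is at most $\log_2 W_i\leqslant i\log_2 n + \log_2 W_0$. Since $i\leqslant \ell\leqslant \|\cM\|$ and both $\log_2 n$ and $\log_2 W_0$ are at most the size of $\cM$, this bit-length is polynomial in the size of $\cM$; multiplying by the number $\|\cM_i\|\leqslant\|\cM\|$ of weights shows that the total bit-length of the weights of $\cM_i$ is polynomial as well. Combining the three bounds yields the claim. The crux of the argument — and the step that requires the observations above rather than a naive estimate — is precisely this: although weights may grow by a multiplicative factor $n$ per iteration (so $W_\ell$ can itself be exponentially large), the number of iterations is only polynomially bounded, so the bit-lengths grow merely \emph{linearly} and therefore remain polynomial in the size of $\cM$.
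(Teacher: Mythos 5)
Your proof is correct, but the way you control the weights differs from the paper's. You prove the per-iteration recurrence $W_{i+1}\leqslant n\cdot W_i$ (each new weight is a simple-path weight inside the current $0$-BSCC plus possibly one old weight) and then exploit $\ell\leqslant\|\cM\|$ so that bit-lengths grow only linearly in the iteration index; this tolerates weights of magnitude up to $n^{\ell}\cdot W_0$. The paper instead shows that no growth accumulates at all: every weight appearing in any $\cM_i$ is of the form $w(s,t)$ or $w(s,t)+\wgt_{\cM}(t,\beta)$ where $w(s,t)$ is the weight of a path between $s$ and $t$ inside a maximal $0$-EC of the \emph{original} MDP $\cM$, hence $|w(s,t)|\leqslant (n{-}1)\cdot\wmax$ with $\wmax$ the original maximal absolute weight. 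That uniform bound rests on Lemma~\ref{lemma:union-of-0-EC} (intersecting $0$-ECs union into a $0$-EC, so the $w$-values are consistent across iterations, cf.\ Remark~\ref{remark-w(s,t)}), a structural fact your argument does not need. The trade-off: your route is more elementary and self-contained and fully suffices for the polynomial-size claim of this lemma, whereas the paper's yields the sharper invariant (weights of magnitude $O(n\cdot\wmax)$, and $\tau$-path weights equal to $w(s,t)$) that is restated and reused later, e.g.\ in Lemma~\ref{tau-transitions} and Lemma~\ref{properties-final-MDP-wgt-div-algo}. Your treatment of the probabilities and of $\|\cM_i\|$ matches the paper's.
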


\begin{proof}
 Recall that the size of an MDP has been defined as the number
 of states plus the total sum of the logarithmic lengths of the weights
 of all state-action pairs and the transition probabilities.
 Let $S$ be the state space of $\cM$ (and $\cM_i$), and let
 $n=|S|$ denote the number of states in $\cM$.
 Furthermore, let 
 $\wmax = \{|\wgt(s,\alpha)| : s\in S, \alpha \in \Act_{\cM}(s)\}$.

 The state-action pairs of $\cM_i$ have the form
 $(s,\beta)$ where $\beta$ is an action of $\cM$ 
 or stand for a $\tau$-transition.
 All $\tau$-actions are deterministic (\ie probability 1 for a single
 target state and 0 for all other states).
 The weight of a $\tau$-transition from $s$ to $t$ in $\cM_i$
 is $w(s,t)$, 
 which is the weight of all paths from $s$ to $t$
 in each 0-EC containing $s$ and $t$.
 The latter statement is a consequence of the fact that the union
  of 0-ECs in MDPs with maximal expected mean payoff 0 is again a 0-EC.
  See Lemma \ref{lemma:union-of-0-EC}.
 Moreover, $|w(s,t)| \leqslant (n{-}1) \cdot \wmax$.

 The total number of $\tau$-transitions in $\cM_i$ 
 is bounded by $n$ (see Lemma \ref{tau-transitions}).
 Hence, the total logarithmic length of the weights of all
 $\tau$-transitions in $\cM_i$
 is polynomially bounded by
 the size of the original MDP $\cM$.
 Likewise, for the state-action pairs
 $(s,\beta)$ where $\beta$ is an action of $\cM$, 
 say $\beta \in \Act_{\cM}(t)$,
 the logarithmic length of the 
 transition probabilities in $\cM_i$ are the same as in
 $\cM$ as we have $P_{\cM_i}(s,\beta,u)=P_{\cM}(s,\beta,u)$.
 Moreover, we have $\wgt_{\cM_i}(s,\beta)=w(s,t) + \wgt_{\cM}(t,\beta)$.
 But then again the total logarithmic length of the
 weights for these state-action pairs in $\cM_i$ is 
 polynomially bounded by
 the size of $\cM$.
\end{proof}

\begin{lemma}
\label{spider-construction-preserves-proper-actions}
   If $\alpha$ is an action of $\cM$ that does not belong to a 0-EC
   then $\alpha$ is an action of each of the MDPs $\cM_i$.
\end{lemma}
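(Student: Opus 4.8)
The plan is to prove the claim by induction on $i$, exploiting the fact that a single application of the spider construction discards \emph{only} the actions of the flattened $0$-BSCC. Concretely, I would show that for every $i\in\{0,1,\ldots,\ell\}$, if $\alpha$ is an action of $\cM$ that lies in no $0$-EC of $\cM$, then $\alpha$ is enabled at some state of $\cM_i$. The base case $i=0$ is immediate, since $\cM_0=\cM$.

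For the inductive step, assume $\alpha$ is enabled at some state of $\cM_i$, and consider $\cM_{i+1}=\spider{\cM_i}{\cE_i,s_{0,i}}$. Inspecting the three steps of the spider construction, step (i) removes exactly the state-action pairs $(s,\alpha_s)$ belonging to $\cE_i$; step (iii) keeps every other action $\beta$ enabled at a state of $\cE_i$, merely relocating the pair $(s,\beta)$ to $(s_{0,i},\beta)$ while preserving the label $\beta$; step (ii) introduces fresh $\tau$-transitions; and every state-action pair outside $\cE_i$ is left untouched. Hence the \emph{only} way $\alpha$ can fail to be an action of $\cM_{i+1}$ is if the pair carrying $\alpha$ in $\cM_i$ belongs to $\cE_i$. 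Here I rely on the fact that $\alpha$ is one of $\cM$'s original, non-$\tau$ action labels, which the preprocessing renaming keeps disjoint across states and which the construction never assigns to a $\tau$-transition.

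It therefore remains to rule out $\alpha\in\cE_i$, which is the crux of the argument. Since $\cE_i$ is a $0$-BSCC, it is in particular a $0$-EC of $\cM_i$, and by Lemma~\ref{max-EC-of-Mi}(b) every non-$\tau$ action contained in a $0$-EC of $\cM_i$ is itself contained in a $0$-EC of $\cM$. Contrapositively, since $\alpha$ is an original non-$\tau$ action lying in no $0$-EC of $\cM$, it cannot belong to $\cE_i$; hence $\alpha$ is not discarded by step (i) and remains an action of $\cM_{i+1}$, completing the induction.

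I expect the only delicate point to be the bookkeeping in the inductive step: one must confirm that $\alpha$ retains its identity as a fixed non-$\tau$ action label as it is possibly relocated to successive reference states, so that Lemma~\ref{max-EC-of-Mi}(b) applies at each stage. This is exactly what the renaming convention guarantees, since it touches only the freshly created $\tau$-labels and leaves the original actions of $\cM$ intact throughout the construction; consequently the hypothesis ``$\alpha$ lies in no $0$-EC of $\cM$'' can be invoked verbatim against the $0$-EC $\cE_i$ of $\cM_i$ at every step.
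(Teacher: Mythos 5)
Your proof is correct and follows the same route as the paper, which proves this lemma simply ``by induction on $i$''; you have filled in the details the paper leaves implicit, and your use of Lemma~\ref{max-EC-of-Mi}(b) (equivalently, iterating (S4) of Lemma~\ref{spider-construction}) to rule out $\alpha\in\cE_i$ is exactly the right ingredient and is available without circularity at that point in the appendix.
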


\begin{proof}
  By induction on $i$.
\end{proof}

As a consequence of part (b) of Lemma \ref{max-EC-of-Mi} and
Lemma \ref{spider-construction-preserves-proper-actions} we get:

\begin{remark}[Actions in the final MDP $\cN$]
\label{action-in-N-wgt-div-algo}
{\rm
   Let $\cN=\cM_{\ell}$ be the MDP that has been generated by the
   iterative application of the spider construction to an MDP $\cM$
   that has no weight-divergent end component.
   Recall from Lemma \ref{tau-transitions} 
   that each state $s$  has at most one
   $\tau$-transition in $\cN$, and if so, then no other action
   is enabled in $s$ as a state of $\cN$.

   The actions in $\cN$ are either actions of $\cM$
   that do not belong to a 0-EC of $\cM$ or $\tau$.
   In more detail, this means the following.
   \begin{itemize} 
   \item
     If $s$ is a state that does not belong to a 0-EC of $\cM$
     then $(s,\alpha)\in \cM$ iff $(s,\alpha)\in \cM_i$ 
     with the same weight and the same transition probabilities.
   \item
     Suppose now that $s$ is a state that belongs to some 0-EC of $\cM$.
     Then, for each state-action pair $(s,\alpha)\in \cN$:
     \begin{itemize}
     \item
        Either $\alpha=\tau$, 
        in which case $P_{\cN}(s,\alpha,t)=1$ and $\wgt(s,\alpha)=w(s,t)$
        for some state $t$ that belongs to the same maximal 0-EC as $s$,
     \item
        or there is a state $t$ that belongs 
        to the same maximal 0-EC $\cE$ of $\cM$ 
        and $(t,\alpha)\in \cM$.
        In this case, $P_{\cM}(t,\alpha,u)=P_{\cN}(s,\alpha,u)$ for all states
        $u$ and $\wgt_{\cN}(s,\alpha)=w(s,t)+\wgt_{\cM}(t,\alpha)$
        and $(t,\alpha)$ does not belong to any 0-EC of $\cM$.
     \end{itemize}
  \end{itemize}
  In particular, $\cN$ does not contain any action of $\cM$ that belongs to
  a 0-EC of $\cM$.
\Ende
 }
\end{remark}

\begin{lemma}
\label{proper-actions-in-M_i}
  Let $\cZ$ be a maximal 0-EC of $\cM$ and $i\in \{0,1,\ldots,\ell\}$.
  Let $T_i$ denote the set states $t$ that belong to 
  $\cZ$ such that either $t=s_{0,j}$ for the largest index
  $j\in \{0,1,\ldots,i{-}1\}$ where all states of $\cE_j$ are contained in
  $\cZ$ or
  $\cM_i$ contains a state-action pair $(t,\alpha)$ where $\alpha$ is an
  action of $\cM$. 
  Then, for each state $s$ in $\cZ$ and each state $t\in T$
  there is a path 
  $\fpath = t_0\, \alpha_0 \, t_1 \, \alpha_1 \ldots \alpha_{m-1} \, t_m$ 
  from $t_0=s$ to $t=t_m$ in $\cM_i$ 
  such that for each $j\in \{0,1,\ldots,m{-}1\}$
  either $\alpha_j=\tau$ or $\alpha_j$ is an action of $\cZ$.
  Moreover, the weight of each such path is $w(s,t)$.
\end{lemma}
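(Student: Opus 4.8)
The plan is to prove both assertions by induction on $i$, handling the weight-uniqueness part (``the weight of each such path is $w(s,t)$'') by a global potential argument that needs no induction. Fix the maximal 0-EC $\cZ$ and a reference state $r$ of $\cZ$, and set $h(s) = w(r,s)$ for every state $s$ of $\cZ$; this is well defined by Remark~\ref{remark-w(s,t)}, and additivity of $w$ inside 0-ECs gives $w(s,t) = h(t) - h(s)$. I would then check that every $\tau$- or $\cZ$-action transition of $\cM_i$ issuing from a state $u$ of $\cZ$ stays inside $\cZ$ and, for each of its successors $u'$, carries weight $h(u') - h(u)$. For a $\tau$-transition this is Lemma~\ref{tau-transitions}(a) together with $w(u,u')=h(u')-h(u)$. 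For a $\cZ$-action $\alpha$, the action renaming makes $\alpha$ determine a unique origin pair $(u',\alpha)$ of $\cZ$ in $\cM$, and the spider construction (Remark~\ref{remark-w(s,t)}, Remark~\ref{action-in-N-wgt-div-algo}) guarantees $P_{\cM_i}(u,\alpha,v)=P_{\cM}(u',\alpha,v)$ and $\wgt_{\cM_i}(u,\alpha)=w(u,u')+\wgt_{\cM}(u',\alpha)$, whence the weight is $(h(u')-h(u))+(h(v)-h(u'))=h(v)-h(u)$. Summing along any $\tau$/$\cZ$-path telescopes to $h(t)-h(s)=w(s,t)$, settling weight-uniqueness for all $i$ at once, independently of existence.

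For existence I would induct on $i$. When $i=0$ we have $\cM_0=\cM$ and $T_0$ is the whole vertex set of $\cZ$ (every state of the end component $\cZ$ carries a $\cZ$-action of $\cM$), so any path inside $\cZ$ works. For the step $\cM_i\leadsto\cM_{i+1}=\spider{\cM_i}{\cE_i,s_{0,i}}$ I would first record the dichotomy that the states of the flattened 0-BSCC $\cE_i$ either all lie in $\cZ$ or are all disjoint from $\cZ$: being a 0-EC of $\cM_i$, $\cE_i$ corresponds (Lemma~\ref{max-EC-of-Mi}(b), Remark~\ref{remark-w(s,t)}) to a connected 0-EC of $\cM$, which by Lemma~\ref{lemma:union-of-0-EC} sits inside a unique maximal 0-EC of $\cM$, and distinct maximal 0-ECs are vertex-disjoint.

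In the disjoint case $\cE_i\cap\cZ=\varnothing$, flattening touches neither the $\tau$-transitions nor the $\cZ$-actions incident to states of $\cZ$ (the spider only rewrites actions of $\cE_i$-states, and any external action it relocates is not a $\cZ$-action), and one checks $T_{i+1}\cap\cZ=T_i\cap\cZ$; hence the paths supplied by the induction hypothesis survive verbatim. The substantial case is $\cE_i\subseteq\cZ$. Here the engine is a projection $\Pi$ taking any $\tau$/$\cZ$-path $P$ of $\cM_i$ to one of $\cM_{i+1}$: transitions disjoint from $\cE_i$ are kept; a maximal excursion of $P$ inside $\cE_i$ that enters at $a$ and leaves through a relocated action $\beta$ to $c\notin\cE_i$ is replaced by $a\xrightarrow{\tau}s_{0,i}\xrightarrow{\beta}c$ (using the new $\tau$-edge and the moved pair $(s_{0,i},\beta)$), whereas an excursion that ends inside $\cE_i$ is replaced by $a\xrightarrow{\tau}s_{0,i}$. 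Thus $\Pi(P)$ keeps the start of $P$ and ends where $P$ did, except that it ends at $s_{0,i}$ when $P$ ended inside $\cE_i$. (Lemma~\ref{tau-transitions}(b) is what forces a $\cE_i$-state that leaves $\cE_i$ to do so through a genuine, relocated $\cZ$-action and not through a $\tau$.)

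Using $s_{0,i}$ as a hub, I would then establish (I) every $s\in\cZ$ reaches $s_{0,i}$ and (II) $s_{0,i}$ reaches every $t\in T_{i+1}$, both by $\tau$/$\cZ$-paths in $\cM_{i+1}$; concatenation gives the claim. For (I), the induction hypothesis yields a $\tau$/$\cZ$-path in $\cM_i$ from $s$ to an active state, whose projection lands either at $s_{0,i}$ (when that path targeted a state of $\cE_i$) or at some $t_0\in T_i\setminus\cE_i$, from which a second projected path reaches $s_{0,i}$; here one uses that $\cE_i$ always contains a state that is active in $\cM_i$ (a single-state $\cE_i$ forces $s_{0,i}\in T_i$ by $\tau$-acyclicity, and a larger $\cE_i$ must carry a non-$\tau$ cycle). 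For (II), any $t\in T_{i+1}\setminus\{s_{0,i}\}$ necessarily lies outside $\cE_i$ and remains in $T_i$, so the hypothesis gives a path $s_{0,i}\to t$ in $\cM_i$ whose projection still ends at $t$. I expect the main obstacle to be precisely this Case-$\cE_i\subseteq\cZ$ bookkeeping: arranging $\Pi$ so that it only ever emits legal $\tau$/$\cZ$-transitions of $\cM_{i+1}$, correctly relating $T_i$ and $T_{i+1}$ (collapsed states drop out while the fresh reference $s_{0,i}$ is retained by the reference-state clause), and confirming reachability of the hub $s_{0,i}$ even when it is $\tau$-only in $\cM_i$; the weight claim, by contrast, is essentially free once $h$ is in place.
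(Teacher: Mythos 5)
Your proof is correct, but it is organized quite differently from the paper's. The paper's own argument is very short: it first observes, via (S4) of Lemma~\ref{spider-construction} (see Lemma~\ref{spider-construction-S4}), that the spider iteration restricted to $\cZ$ mirrors the iteration one would perform on $\cZ$ viewed as a standalone 0-EC, so the whole statement reduces to the case $\cZ=\cM$; in that case every action of every $\cM_i$ is automatically a $\tau$- or $\cZ$-action, the existence claim collapses to plain reachability and follows by induction on $i$ from (S2) of Lemma~\ref{spider-construction-S1S2}, and the weight claim is immediate from the 0-EC property. You instead work in the general MDP and do two independent things: a potential $h(s)=w(r,s)$ under which every admissible transition from $u$ to $u'$ has weight $h(u')-h(u)$, so the weight claim telescopes once and for all, independently of the induction; and an induction on $i$ for existence, with an explicit projection of $\tau$/$\cZ$-paths through a spider step, split on whether $\cE_i\subseteq\cZ$ or $\cE_i\cap\cZ=\varnothing$ (a dichotomy you correctly extract from the disjointness of maximal 0-ECs, Lemma~\ref{lemma:union-of-0-EC}), using $s_{0,i}$ as a hub. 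Your route is longer but more self-contained: it makes explicit several facts the paper's reduction leaves implicit, namely the invariance of $T_i$ when the flattened 0-BSCC is disjoint from $\cZ$, the fact that an $\cE_i$-state can only leave $\cE_i$ through a relocated non-$\tau$ action (Lemma~\ref{tau-transitions}), and that $\cE_i\cap T_i\neq\varnothing$ by $\tau$-acyclicity. The paper's reduction buys brevity at the price of asking the reader to accept that the restriction-to-$\cZ$ correspondence is exact; your version pays that cost up front in bookkeeping. Both arguments are sound.
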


\begin{proof}
  We first observe that if $s$ is a state of $\cZ$ and
  $\alpha$  an action of $\cM$ such that
  $(s,\alpha)\in \cM_i \setminus \cZ$ then 
  either $s$ is the reference state $s_{0,j}$ of $\cE_j$ for some $j < i$
  such that $s$ is not contained in $\cE_{j+1}\cup \ldots \cE_{i-1}$
  or there is some action of $\cZ$ such that 
  $(s,\beta) \in \cM_i$.
  This is a consequence of (S4) in Lemma \ref{spider-construction}
  (see also Lemma \ref{spider-construction-S4}).
  Hence, it suffices to consider for the case where $\cM$ is a 0-EC,
  in which case $\cZ=\cM$.
  The claim then follows by induction on $i$. 
  The basis of induction $i=0$ is trivial and
  the step of induction follows from 
  statement (S2) of Lemma~\ref{spider-construction-S1S2}.
\end{proof}

\begin{corollary}
\label{M-0-EC-spider}
  If $\cM$ is a 0-EC then the final MDP $\cN=\cM_{\ell}$ 
  generated by the weight-divergence algorithm
  can be viewed as an acyclic graph built by $\tau$-transitions
  with a single trap state that is reachable from all other states.
\end{corollary}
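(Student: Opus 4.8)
The plan is to show that when $\cM$ is a $0$-EC, the final MDP $\cN=\cM_\ell$ produced by the iterative spider construction consists entirely of $\tau$-transitions that form an acyclic graph converging to one trap. First I would record the consequences of $\cM$ being a $0$-EC: it is strongly connected with $\Exp{\max}{\cM}(\MP)=0$, it has neither positive nor negative cycles and hence no gambling BSCC, and it is not weight-divergent. Therefore the weight-divergence algorithm never answers ``yes''; at each step it flattens a genuine $0$-BSCC, producing the sequence $\cM_0=\cM,\ldots,\cM_\ell=\cN$. Since $\cM$ is a strongly connected end component it contains a cycle, so the termination test on $\cM_0$ fails and at least one spider step is performed, i.e.\ $\ell\geqslant 1$. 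Moreover $\cM$ is its own unique maximal $0$-EC, so the maximal $0$-EC $\cZ=\cM$ in Lemma~\ref{proper-actions-in-M_i} carries the full state space.

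Next I would argue that $\cN$ contains no action of $\cM$. Every state-action pair of $\cM$ belongs to the $0$-EC $\cM$, so by Remark~\ref{action-in-N-wgt-div-algo}, which lists the surviving actions of $\cN$ as either $\tau$ or actions of $\cM$ lying outside every $0$-EC, the only state-action pairs of $\cN$ are $\tau$-transitions. Combining this with Lemma~\ref{tau-transitions} finishes the structural part: each $\tau$-transition is deterministic of weight $w(s,t)$, each state carries at most one $\tau$-transition (and no other action when it does), and the $\tau$-graph is acyclic. Hence $\cN$ is exactly an acyclic graph in which every state is either a trap or has a single outgoing $\tau$-edge.

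It remains to isolate a single trap reachable from all states. Here I would invoke Lemma~\ref{proper-actions-in-M_i} with $\cZ=\cM$ and $i=\ell$: because $\cN$ has no action of $\cM$, the second clause defining $T_\ell$ is empty, so $T_\ell$ collapses to the singleton $\{s_{0,\ell-1}\}$, the reference state of the last spider step (the largest index $j\in\{0,\ldots,\ell-1\}$ for which all states of $\cE_j$ lie in $\cZ$ is $j=\ell-1$, as $\cZ=\cM$). The lemma then yields, for every state $s$, a $\tau$-path from $s$ to $s_{0,\ell-1}$. Since the $\tau$-graph is deterministic and acyclic, $s_{0,\ell-1}$ can have no outgoing $\tau$-edge, for otherwise reachability of $s_{0,\ell-1}$ from its successor would close a cycle; thus $s_{0,\ell-1}$ is a trap. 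And any trap $t$ admits only the length-$0$ $\tau$-path to $s_{0,\ell-1}$, forcing $t=s_{0,\ell-1}$. Hence $s_{0,\ell-1}$ is the unique trap and is reachable from every state, as required.

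The step I expect to be the main obstacle is this last one, extracting a \emph{single} trap: mere reachability of every state to a reference state is not enough, and the clean reduction hinges on correctly reading off $T_\ell=\{s_{0,\ell-1}\}$ from the definition in Lemma~\ref{proper-actions-in-M_i} together with the absence of $\cM$-actions in $\cN$, after which acyclicity (Lemma~\ref{tau-transitions}(c)) does the rest. Should the index bookkeeping prove delicate, an alternative is to track the single maximal end component guaranteed by (S2) of Lemma~\ref{spider-construction} as it strictly shrinks, observing that the terminal step must flatten the entire remaining MEC $\cE_{\ell-1}=\cF_{\ell-1}$ (as in the remark following Lemma~\ref{spider-construction-S1S2}), which leaves its reference state as the sole sink.
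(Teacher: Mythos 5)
Your proposal is correct and follows essentially the route the paper intends: the corollary is stated without explicit proof as an immediate consequence of Remark~\ref{action-in-N-wgt-div-algo} (no $\cM$-actions survive since every state-action pair lies in the 0-EC $\cM$), Lemma~\ref{tau-transitions} (determinism, uniqueness and acyclicity of the $\tau$-graph), and Lemma~\ref{proper-actions-in-M_i} with $\cZ=\cM$ and $i=\ell$, which is exactly the combination you use. Your extra care in reading off $T_\ell=\{s_{0,\ell-1}\}$ and deducing that this reference state is the unique trap is a correct and welcome elaboration of what the paper leaves implicit.
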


\begin{lemma}[Properties of the final MDP]
 \label{properties-final-MDP-wgt-div-algo}
   If  $\cM=\cM_0$ is not weight-divergent then
   the final MDP $\cN=\cM_{\ell}$ 
   generated through the iterative application of the
   spider construction
   enjoys the following properties:
   \begin{enumerate}
   \item [(a)]
     Let $\fpath$ be a path in $\cN$ from $s$ to $t$ 
     built by $\tau$-transitions.
     Then, $\wgt_{\cN}(\fpath)=w(s,t)$.
   \item [(b)]

     For each maximal 0-EC $\cZ$ of $\cM$ there is a state
     $t_{\cZ}$ such that: 
     \begin{itemize}
     \item
       For each state $s$ in $\cZ$ there is a path $\fpath_s$ from
       $s$ to $t_{\cZ}$ in $\cN$ built by $\tau$-transitions.
       Moreover, $\fpath_s$ is a prefix of 
       each maximal path $\infpath$ in $\cN$ with
       $s=\first(\infpath)$.
     \item 
       If $(s,\alpha)$ is a state-action pair in $\cN$ where $s$ belongs
       to $\cZ$ and $\alpha$ is an action of $\cM$ then
       $s=t_{\cZ}$.
     \item
       Whenever $s$ is a state of $\cZ$ and $(s,\alpha)$ a state-action
       pair of $\cM$ that does not belong any 0-EC then 
       $(t_{\cZ},\alpha)$ is a state-action pair of $\cN$.
     \end{itemize}
   \item [(c)]
     If $\cZ$ is a maximal 0-EC in $\cM$ and
     $\cN$ does not contain a state-action pair $(t,\alpha)$
     where $t$ belongs to $\cZ$ and $\alpha$ is an action of $\cM$
     then $\cZ=\cM$.
  \end{enumerate} 
\end{lemma}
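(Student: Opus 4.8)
The plan is to handle the three parts in the order (a), (b), (c): part (a) follows from the weight bookkeeping already recorded for $\tau$-transitions, part (b) is the technical core, and part (c) is then a short consequence of the third item of (b). Throughout I use the identification of end components with their sets of state-action pairs and the fact (from the weight-divergence algorithm) that $\cM$ is strongly connected.

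For part (a), I would first observe that every $\tau$-transition of $\cN$ joins two states lying in one common maximal $0$-EC of $\cM$: each such transition is created when some $0$-BSCC $\cE_j$ is flattened and links a non-reference state of $\cE_j$ to its reference state $s_{0,j}$, and $\cE_j$ is contained in a maximal $0$-EC (these being non-overlapping, cf.\ Lemma~\ref{lemma:union-of-0-EC}). Hence along a $\tau$-path $\fpath=t_0\,\tau\,t_1\,\tau\cdots\tau\,t_m$ from $s=t_0$ to $t=t_m$ all states lie in a single maximal $0$-EC, on which the weights $w(\cdot,\cdot)$ are defined and additive, $w(u,v)=w(u,x)+w(x,v)$ (Remark~\ref{remark-w(s,t)}). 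Since each step carries weight $w(t_i,t_{i+1})$ by Lemma~\ref{tau-transitions}(a), we get $\wgt_\cN(\fpath)=\sum_i w(t_i,t_{i+1})=w(t_0,t_m)=w(s,t)$.

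For part (b), fix a maximal $0$-EC $\cZ$. Since $\cM$ is not weight-divergent, $\cN$ has no $0$-ECs (Theorem~\ref{weight-divergence-algorithm}); as flattening $0$-BSCCs outside $\cZ$ leaves $\cZ$ untouched, some $\cE_j\subseteq\cZ$ must have been flattened, so the index set $T_\ell$ of Lemma~\ref{proper-actions-in-M_i} (taken at $i=\ell$) is non-empty. In $\cN$ no action of $\cZ$ survives (these are $0$-EC actions, eliminated by Remark~\ref{action-in-N-wgt-div-algo}), so the paths furnished by Lemma~\ref{proper-actions-in-M_i} consist of $\tau$-transitions only. If $T_\ell$ held two distinct states $t\ne t'$, the lemma would give $\tau$-paths both from $t$ to $t'$ and from $t'$ to $t$, contradicting acyclicity of the $\tau$-graph (Lemma~\ref{tau-transitions}(c)); hence $T_\ell=\{t_\cZ\}$, and I take this $t_\cZ$. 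For every $s\in\cZ$, Lemma~\ref{proper-actions-in-M_i} yields a $\tau$-path $\fpath_s$ from $s$ to $t_\cZ$; each state strictly before $t_\cZ$ on it has a $\tau$-transition, hence by Lemma~\ref{tau-transitions}(b) no other enabled action, so $\fpath_s$ is deterministic and thus a prefix of every maximal path of $\cN$ starting in $s$, giving the first item; moreover $t_\cZ$ has no outgoing $\tau$ (such a $\tau$ to some $v\in\cZ$ would, with the $\tau$-path $v\to t_\cZ$, form a cycle). For the second item, any $(s,\alpha)\in\cN$ with $s\in\cZ$ and $\alpha$ an action of $\cM$ places $s$ in $T_\ell$, so $s=t_\cZ$. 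For the third item, let $(s,\alpha)\in\cM$ with $s\in\cZ$ and $\alpha$ in no $0$-EC; by Lemma~\ref{spider-construction-preserves-proper-actions} $\alpha$ is still an action of $\cN$, and a short induction on the spider steps shows its enabling state stays inside $\cZ$ (each relocation of step (iii) moves $\alpha$ from a non-reference state of some $\cE_j\subseteq\cZ$ to $s_{0,j}\in\cZ$, while $\alpha=\alpha_s$ is impossible since $\alpha$ is not a $0$-EC action); so $(t,\alpha)\in\cN$ for some $t\in\cZ$, and the second item forces $t=t_\cZ$.

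For part (c), I would reason contrapositively through the third item of (b): if $\cN$ has no pair $(t,\alpha)$ with $t\in\cZ$ and $\alpha$ an action of $\cM$, then $\cZ$ admits no exit action, \ie there is no $(s,\alpha)\in\cM$ with $s\in\cZ$ and $\alpha$ outside every $0$-EC (such a pair would yield $(t_\cZ,\alpha)\in\cN$). Thus every state-action pair of $\cM$ issuing from a state of $\cZ$ has its action in a $0$-EC, hence in the maximal $0$-EC $\cZ$ itself, so all successors remain in $\cZ$; therefore $\cZ$ is closed under transitions. As $\cM$ is strongly connected and $\cZ$ is a non-empty closed set, strong connectivity forces $\cZ$ to cover all states together with all their actions, \ie $\cZ=\cM$.

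The main obstacle I anticipate is the uniqueness of $t_\cZ$ together with the claim that every exit action migrates to precisely this state: both rest on combining acyclicity and determinism of $\tau$-transitions (Lemma~\ref{tau-transitions}) with the reachability statement of Lemma~\ref{proper-actions-in-M_i} and on tracking step (iii) of the spider construction across all iterations. Some care is also needed to confirm $T_\ell\ne\varnothing$, which uses that $\cN$ is $0$-EC-free so that $\cZ$ is genuinely flattened.
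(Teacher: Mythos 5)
Your proof is correct and follows essentially the same route as the paper's (which simply cites Lemma~\ref{tau-transitions} for (a), Lemma~\ref{proper-actions-in-M_i} together with Remark~\ref{action-in-N-wgt-div-algo} for (b), and Lemma~\ref{spider-construction-preserves-proper-actions} for (c)); you merely spell out the details the paper leaves implicit. The only cosmetic difference is that you derive the uniqueness of $t_{\cZ}$ from acyclicity of the $\tau$-graph, whereas the paper derives it from $\cN$ having no 0-EC --- two interchangeable ways of excluding the same two-way $\tau$-path configuration.
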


\begin{proof}
Statement (a) is clear from Lemma \ref{tau-transitions}.
Statement (b) follows from Lemma \ref{proper-actions-in-M_i}
and the observation that all actions in $\cN$ are either actions
of $\cM$ or $\tau$ (see Remark \ref{action-in-N-wgt-div-algo}).
This yields that
whenever $s$ and $t$ belong to the same maximal 0-EC $\cZ$ of $\cM$
     and $\cN$ contains a state-action pair $(t,\alpha)$ where
     $\alpha$ is an action of $\cM$
     then $t$ is reachable from $s$ in $\cN$ via $\tau$-transitions.
As $\cN$ has no 0-EC, each maximal 0-EC of $\cM$ can contain only
one such a state $t$.
Statement (c) follows from
Lemma \ref{spider-construction-preserves-proper-actions}.
\end{proof}

\subsubsection{Soundness of the Weight-Divergence Algorithm}

\label{appendix:wgt-div}

We are now ready to prove the soundness of the weight-divergence
algorithm for a given strongly connected MDP $\cM$ presented in
Section \ref{sec:wgt-div}. We rephrase now
Theorem~\ref{weight-divergence-algorithm} to make the connection between $\cM$ 
and $\cN$ obtained from the algorithm explicit to check weight-divergence.

\begin{theorem}%
\label{appendix:wgt-div-algo}
  The algorithm for checking weight-divergence of a strongly connected
  MDP $\cM$ runs in time polynomial in the size of $\cM$.
  If $\cM$ is weight-divergent then it either finds a pumping or
  gambling MD-scheduler. If $\cM$ is not weight-divergent, then 
  it generates  a new MDP $\cN$ such that
  \begin{enumerate}
  \item [(W1)]
     $\cN$ and $\cM$ have the same state space
     and $\|\cN\|\leqslant \|\cM\|$.
  \item [(W2)]
     $\cN$ has at most one maximal end component, and if so,
     $\Exp{\max}{\cF}(\MP)<0$ for the unique maximal end component
     $\cF$ of $\cN$ and $\cF$ is reachable from all states in $\cN$.
  \item [(W3)]
     $\cN$ and $\cM$ are equivalent with respect to the class of
     0-EC-invariant properties in the sense that the statement 
     of Lemma~\ref{lemma:equivalence-M-and-Mi} holds.
 \end{enumerate}
\end{theorem}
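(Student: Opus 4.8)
The plan is to analyze the recursive procedure $\textsf{Wgtdiv}$ of Algorithm~\ref{alg:check-wd} by identifying its successive calls with the iterative spider construction $\cM = \cM_0, \cM_1, \ldots, \cM_\ell = \cN$ of Section~\ref{sec:iterative-application-spider}, where $\cM_{i+1} = \spider{\cM_i}{\cE_i}$. Here the calls decide the verdict on the chain of \emph{maximal end components}, while the object $\cN$ in (W1)--(W3) is the full-state-space iterate $\cM_\ell$, whose reachable MEC coincides with the final recursive argument. I would then establish termination and polynomial runtime, correctness of the yes/no verdict, and the structural guarantees, in that order.

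\emph{Termination and complexity.} First I would observe that each recursive call replaces the current strongly connected MDP $\cM_i$ by the unique MEC $\cF$ of $\spider{\cM_i}{\cE_i}$, and by (S1) of Lemma~\ref{spider-construction} we have $\|\spider{\cM_i}{\cE_i}\| = \|\cM_i\| - 1$, so $\|\cF\| \leqslant \|\cM_i\| - 1 < \|\cM_i\|$. Hence the recursion depth is at most $\|\cM\|$. Since Lemma~\ref{size-iterative-spider-construction} bounds the size of every $\cM_i$ polynomially in the size of $\cM$, and each iteration only computes $\Exp{\max}{\cM_i}(\MP)$ with a maximizing MD-scheduler by linear programming, tests for a gambling BSCC (equivalently, a positive or negative cycle, by Lemma~\ref{gambling-MC}), and performs one spider step — all polynomial — the overall running time is polynomial.

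\emph{Correctness of the verdict.} I would argue by induction on $\|\cM\|$, inspecting each return. If $e < 0$, then by Corollary~\ref{MP-pos-or-neg-implies-strict-wgt-div}(a) (equivalently Lemma~\ref{main:mp-pos-wgt-div}) $\cM$ is universally negatively pumping, hence not positively weight-divergent, so ``no'' is correct. If $e > 0$, then $\cM$ is pumping by Lemma~\ref{lemma:pumping-ecs}, hence weight-divergent, and the maximizing MD-scheduler is pumping. If $e = 0$ and $\sched$ has a gambling BSCC $\cB$, then (since $\cM$ is strongly connected) routing an MD-scheduler into $\cB$ and following $\sched$ there yields a gambling MD-scheduler, so ``yes'' is correct. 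If $e = 0$ without a gambling BSCC, then Lemma~\ref{gambling-MC}(b) makes every BSCC of $\sched$ a $0$-BSCC; if $\cM = \cE$ then $\cM$ is a $0$-EC and not weight-divergent (``no'' correct), and otherwise Corollary~\ref{spider-preserves-wgt-div} (and its generalization Corollary~\ref{app:spider-preserves-wgt-div}) give that $\cM$ is weight-divergent iff the unique MEC $\cF$ of $\spider{\cM}{\cE}$ (from (S2)) is, so the inductive answer on $\cF$ transfers.

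\emph{Scheduler lifting and the structural guarantees.} In the ``yes'' case the recursion exhibits a pumping or gambling MD-scheduler on some $\cF_m$; since weight-divergence, the pumping property, and the gambling property are all $\cE$-invariant, I would first extend it to the whole spider MDP by routing to $\cF_m$ and then transport it back through each spider step using (S3.1) of Lemma~\ref{spider-construction}, which preserves MD-schedulers and the probabilities of $\cE$-invariant properties. For the ``no'' case, (W1) is immediate from (S1); (W2) follows by iterating (S2) to obtain a single MEC $\cF$ reachable from all states, together with the corollary following Lemma~\ref{max-EC-of-Mi} stating that when $\cM$ has no weight-divergent EC every EC of $\cN$ has $\Exp{\max}{\cF}(\MP) < 0$; and (W3) is exactly the equivalence of $\cM$ and $\cN = \cM_\ell$ with respect to $0$-EC-invariant properties established in Lemma~\ref{lemma:equivalence-M-and-Mi}. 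I expect the main obstacle to be the scheduler lifting: one must confirm that (S3.1) transports not merely weight-divergence but the full \emph{gambling} property back to $\cM$, i.e.\ that both the almost-sure two-sided oscillation and the vanishing mean payoff survive — the latter being forced because $\Exp{\max}{}(\MP) = 0$ is preserved along the $e=0$ branch (Corollary~\ref{spider-MP-at-most-0}) and a negative mean payoff would contradict a.s.\ positive weight-divergence.
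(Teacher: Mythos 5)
Your proposal is correct and follows essentially the same route as the paper's proof: it identifies the recursive calls with the iterative spider construction $\cM_0,\ldots,\cM_\ell$, derives termination and the polynomial bound from (S1) and Lemma~\ref{size-iterative-spider-construction}, handles the verdict by the same case analysis on $\Exp{\max}{}(\MP)$ via Lemmas~\ref{lemma:pumping-ecs}, \ref{gambling-MC} and Corollary~\ref{spider-preserves-wgt-div}, and obtains (W1)--(W3) and the scheduler lifting from (S2) and Lemma~\ref{lemma:equivalence-M-and-Mi} (the iterated form of (S3.1)), exactly as the paper does. Your closing remark on transporting the gambling property is also consistent with the paper, which notes that the gambling condition is $\cE$-invariant and that the mean-payoff-zero requirement is preserved along the $e=0$ branch.
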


\begin{proof}
The weight-divergence algorithm generates a sequence 
$\cM_0=\cM,\cM_1,\ldots,\cM_{\ell}=\cN$ of MDPs as stated
at the beginning of Section \ref{sec:iterative-application-spider}.
Hence, each of the $\cM_i$'s for $i<\ell$ 
has a unique maximal end component $\cF_i$, 
$\Exp{\max}{\cF_i}(\MP)=0$ and
$\cM_{i+1}=\spider{\cM_i}{\cE_i}$
where $\cE_i$ is a 0-BSCC of $\cF_i$.
  Then, $\cM_0,\ldots,\cM_{\ell}$ have the same state space
  and $\|\cM_{i+1}\|=\|\cM_i\|{-}1$.
  This yields (W1) and $\ell \leqslant \|\cM\|$.
  By Lemma \ref{max-EC-of-Mi}, for each $i\in \{0,1,\ldots,\ell\}$
  we have that
  $\cM$ is weight-divergent iff
  $\cM$ is weight-divergent.
Lemma \ref{lemma:equivalence-M-and-Mi} 
yields the equivalence of $\cM$ and $\cM_i$
as stated in (W3).

In case that $\cN=\cM_{\ell}$ 
has end components, the reachability of the
unique maximal end component $\cF$ of $\cN$ follows from the observation that
the reference state $s_{0,i}$ is accessible from all states
in $\cM_{i+1}=\spider{\cM_i}{\cE_i,s_{0,i}}$ 
(by induction on $i$).

For $i=0,1,\ldots,\ell$,
the weight-divergence algorithm computes $\Exp{\max}{\cF_i}(\MP)$ and  
an MD-scheduler $\tsched$ maximizing the expected mean payoff. 
The case $\Exp{\max}{\cF_i}(\MP)\not=0$ is only possible if $i=\ell$
as we then have:
\begin{itemize}
\item
  If $\Exp{\max}{\cF_i}(\MP)<0$ then all schedulers for $\cM_i$ are
  negatively pumping
  (Lemma \ref{universally-pumping}
  with all weights multiplied by $-1$), 
  and hence, $\cM_i$ and $\cM$ are not positively weight-divergent.
  In this case, the final MDP $\cM_i$ enjoys the properties
  (W1), (W2) and (W3).
\item
  If $\Exp{\max}{\cF_i}(\MP)>0$ then $\cM$ is pumping
  (Lemma \ref{lemma:pumping-ecs})
  and therefore positively weight-divergent.
  In this case, $\tsched$ is a pumping MD-scheduler for $\cM_i$.
  We now can 
  rely on the scheduler transformation presented in
  part (a) of Lemma \ref{lemma:equivalence-M-and-Mi}
  to obtain a pumping MD-scheduler $\sched$ for $\cM$.
\end{itemize}
Suppose now $\Exp{\max}{\cF_i}(\MP) = 0$. 
If $\tsched$ has a gambling BSCC then
$\cM$ is gambling and therefore positively weight-divergent.
In this case, we can again
  rely on the scheduler transformation presented in
  part (a) of Lemma \ref{lemma:equivalence-M-and-Mi}
  to obtain a gambling MD-scheduler $\sched$ for $\cM$.

Otherwise, each BSCC of the Markov chain induced by $\tsched$ is a 0-BSCC
(Lemma \ref{gambling-MC}).
In this case, $i<\ell$ and $\cE_i$ is one of the  0-BSCCs  of $\tsched$.
The weight-divergence algorithm generates the MDP
$\cM_{i+1}=\spider{\cM_i}{\cE_i}$.
If $\cF_i=\cE_i$ then $\ell=i{+}1$ and 
$\cM_i$ (and therefore $\cM$) are not weight-divergent
and the final MDP $\cM_{i+1}$ has no end components.
Thus, the conditions (W1), (W2) and (W3)
are fulfilled.
Suppose now that $\cF_i \not= \cE_i$. In this case, the
procedure will be repeated with the MDP $\cM_{i+1}$.

Lemma \ref{size-iterative-spider-construction} yields that the size
of the MDPs $\cM_i$ is polynomially bounded in the size of 
the original MDP $\cM$.
Thus, the cost per iteration (the computation of the 
maximal end component $\cF_i$ of $\cM_i$
and its maximal expected mean payoff as well as the new MDP
$\cM_{i+1}=\spider{\cM_i}{\cE_i}$)
are polynomially bounded in the size of $\cM$.
This yields a polynomial-time bound for the weight-divergence algorithm.
\end{proof}

  Statement (W2) in Theorem \ref{appendix:wgt-div-algo}
  implies that if the final MDP 
  $\cN$ has end components then
  $\cN$ is \emph{universally negatively pumping} in the sense that
  $\limsup\limits_{n \to \infty} \ \wgt(\prefix{\infpath}{n}) = -\infty$
  holds for almost all paths $\infpath$ under each scheduler
  for $\cN$.

\subsection{Universal Negative Weight-Divergence and Boundedness}

\label{sec:univ-neg-wgt-div}
\label{sec:boundedness}

This section provides the proof for Theorem \ref{universal-neg-wgt-div}
stating that for strongly connected MDPs with maximal mean payoff 0,
the absence of 0-ECs is equivalent to the universal negative weight-divergence
property: 
\thmnegwdZec*

The proof for Theorem \ref{universal-neg-wgt-div} is split in the
proof of each statement (a) and (b). The essential argument for the
proof is that if $\cM$ has schedulers where the set of paths that are
bounded from below then $\cM$ has positive measure then $\cM$ must
have a 0-EC.

Let us first recall the definition of boundedness from below and introduce
related notions.
  Let $L,U\in \Integer \cup\{\pm\infty\}$ with $L \leqslant U$. 
  An infinite path $\infpath$ is said to be $(L,U)$-bounded
  iff
  $$
     \stackrel{\infty}{\forall}n \in \Nat. \ 
     L \ \leqslant \ \wgt(\prefix{\infpath}{n}) \ \leqslant \ U
  $$
  where $\stackrel{\infty}{\forall}$ means ``for all, but finitely many''.
  Thus: 
  \begin{center}
    $\infpath$ is $(L,+\infty)$-bounded \ \ iff \ \
    $\liminf\limits_{n \to \infty} \wgt(\prefix{\infpath}{n})\geqslant L$\enskip.
  \end{center}
  An infinite path $\infpath$ is 
  \emph{bounded from below} if there is some integer $L$ such that
  $\infpath$ is $(L,+\infty)$-bounded.  
  Clearly, this is equivalent to the requirement that
  $\liminf\limits_{n \to \infty} \wgt(\prefix{\infpath}{n})\in \Integer$.

  Scheduler $\sched$ for a strongly connected MDP $\cM$ 
  is said to be \emph{almost-surely $(L,U)$-bounded} from state $s$,
  or briefly \emph{$(L,U)$-bounded} from $s$,
  if for almost all infinite $\sched$-path $\infpath$ starting in $s$
  are $(L,U)$-bounded.
  Scheduler $\sched$ is said to be \emph{probably $(L,U)$-bounded} 
  from state $s$
  if
  $$
     \Pr^{\sched}_{\cM,s}\bigl\{ \
         \infpath \in \InfPaths \ : \ 
         \stackrel{\infty}{\forall} n \in \Nat. \
         L \leqslant \wgt(\prefix{\infpath}{n}) \leqslant U \
     \bigr\} \ > \ 0
  $$
  We say $\sched$ is \emph{probably bounded} (resp.~\emph{almost-surely bounded})
  from $s$ if there 
  exist $L,U\in \Integer$ with $L \leqslant U$ such that 
  $\sched$ is probably (resp.~almost-surely) $(L,U)$-bounded from $s$.
  Likewise, $\sched$ is called
  \emph{probably bounded from below} 
  (resp.~\emph{almost-surely bounded from below}
  or briefly \emph{bounded from below})
  from state $s$ if there
  exists an integer $L$ such that
  $\sched$ is probably (resp.~almost-surely) $(L,+\infty)$-bounded from $s_0$.

\begin{lemma}
\label{from-bounded-to-L-bounded}
   If $\sched$ is a scheduler that is
   probably bounded from below from some state $s$
   then there is some integer $L$ such that
   $\sched$ is probably $(L,+\infty)$-bounded
   from $s$. 
\end{lemma}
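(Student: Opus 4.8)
The plan is to produce the required uniform threshold $L$ by a standard continuity-of-measure argument, reading the hypothesis in the same sense as in Theorem~\ref{universal-neg-wgt-div}(a): that the set of $\sched$-paths from $s$ which are bounded from below carries positive probability. First I would introduce, for each integer $L$, the event
\[
  B_L \ = \ \bigl\{\, \infpath \in \InfPaths \ : \
     \textstyle\stackrel{\infty}{\forall} n \in \Nat.\
     \wgt(\prefix{\infpath}{n}) \geqslant L \,\bigr\}
\]
of $(L,+\infty)$-bounded paths, and record that each $B_L$ is measurable, since $B_L = \bigcup_{N}\bigcap_{n\geqslant N} \{\infpath : \wgt(\prefix{\infpath}{n}) \geqslant L\}$ exhibits it as a countable combination of cylinder events.

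Next I would observe the monotonicity $B_{L'} \supseteq B_L$ whenever $L' \leqslant L$, so that the sequence $(B_{-m})_{m \geqslant 0}$ is increasing, together with the identity $\bigcup_{L \in \Integer} B_L = \{\infpath : \infpath \text{ is bounded from below}\}$. The latter is immediate from the characterisation of boundedness from below as $\liminf_{n\to\infty}\wgt(\prefix{\infpath}{n}) \in \Integer$, equivalently as the existence of some integer $L$ with $\liminf_{n\to\infty}\wgt(\prefix{\infpath}{n}) \geqslant L$. Thus the hypothesis states precisely that $\Pr^{\sched}_{\cM,s}\bigl(\bigcup_{L} B_L\bigr) > 0$.

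Then I would apply continuity of measure from below to the increasing family $(B_{-m})_m$, which gives
\[
  \Pr^{\sched}_{\cM,s}\Bigl(\textstyle\bigcup_{L} B_L\Bigr)
  \ = \ \lim_{m\to\infty} \Pr^{\sched}_{\cM,s}(B_{-m})
  \ = \ \sup_{L \in \Integer}\Pr^{\sched}_{\cM,s}(B_L).
\]
Since the left-hand side is positive by hypothesis, the supremum is positive, so $\Pr^{\sched}_{\cM,s}(B_L) > 0$ for some integer $L$. By definition this $L$ witnesses that $\sched$ is probably $(L,+\infty)$-bounded from $s$, which is the conclusion.

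There is no genuine obstacle here: the argument is just the monotone-limit property of a probability measure, and the only points requiring a line of care are the measurability of each $B_L$ (handled by the cylinder description above) and the correct reading of the hypothesis — namely that it asserts a positive-measure set of paths are \emph{each} bounded from below, each by its own constant, rather than already fixing a single uniform bound. The entire content of the lemma is exactly this uniformisation of the per-path lower bound across a positive-measure set of paths, and the increasing-union structure of the $B_L$ is what makes it go through.
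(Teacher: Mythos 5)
Your proof is correct and follows essentially the same route as the paper's: both write the set of bounded-from-below paths as the countable union $\bigcup_{L\in\Integer} B_L$ of the $(L,+\infty)$-bounded events and conclude that positive measure of the union forces positive measure of some member (the paper via countable subadditivity, you via continuity from below on the increasing family — the same fact in different clothing). Your added remarks on measurability of each $B_L$ are a harmless extra precision that the paper leaves implicit.
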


\begin{proof}
 Let $\Pi$ denote the set of all $\sched$-paths $\infpath$ from $s$
 that are bounded from below.
 Likewise, for $L\in \Integer$, let
 $\Pi_L$ denote the set of all $\sched$-paths $\infpath$ from $s$
 that are $(L,+\infty)$-bounded.
 Then, $\Pi=\bigcup_{L\in \Integer} \Pi_L$.
 Hence, $\Pr^{\sched}_{\cM,s}(\Pi)>0$ iff
 there exists $L\in \Integer$ with $\Pr^{\sched}_{\cM,s}(\Pi_L)>0$.
\end{proof}

\begin{figure}[ht]
	\centering%

\begin{tikzpicture}
	\node[state] (s) {$s$};
	\node[state, right=25mm of s] (t) {$t$};
	\node[bullet, above right=5mm and 15mm of s] (b) {};
	\path%
		(t) edge[ptran,loop, min distance=9mm, out=-30, in=30] node[right]{$\alpha$/\nilw} (t)
		(t) edge[ptran, bend left] node[below]{$\beta$/\nilw} (s)
		(s) edge[ntran,bend left] node[above] {$\alpha$/\negw{1}} (b)
		(b) edge[ptran,bend left] coordinate[pos=0.3] (bt) (t)
		(b) edge[ptran, bend left] coordinate[pos=0.3] (bs) (s);
	\draw%
		(bt) to[bend left] (bs);				
\end{tikzpicture}
 	\caption{Lemma \ref{from-bounded-to-L-bounded} does not hold for almost-sure boundedness}
\label{fig:probably-bounded-scheduler}
\end{figure}
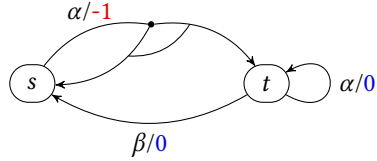

Note that the analogous statement of Lemma \ref{from-bounded-to-L-bounded}
does not hold for almost-sure boundedness.
An example is the strongly connected MDP $\cM$
depicted in Figure~\ref{fig:probably-bounded-scheduler}
consisting of the state-action pairs
$(s,\alpha)$, $(t,\alpha)$ and $(t,\beta)$.
Then, the MD-scheduler $\sched$ that choses $\alpha$ for states $s$ and $t$
is almost-surely bounded from below, but there is no integer $L$
such that $\cM$ has an almost-surely $(L,+\infty)$-bounded scheduler.

Obviously, if $\cM$ is strongly connected then 
the existence of a probably bounded scheduler 
does not depend on the starting state $s$. 
To see this, we suppose that
$\cM$ has a probably $(L,U)$-bounded scheduler from $s$. 
For each state $t$ in $\cM$ we pick a finite path
$\fpath_t$ from $t$ to $s$.
Then, for each state $t$, $\cM$ has a probably $(L_t,U_t)$-bounded scheduler
from $t$ where $L_t=L+\wgt(\fpath_t)$ and $U_t=U+\wgt(\fpath_t)$.
The analogous statement does not hold for almost-surely bounded schedulers.
For example, considering again the MDP $\cM$ 
illustrated by Figure~\ref{fig:probably-bounded-scheduler}.
Then $\cM$ has an almost-surely $(0,0)$-bounded scheduler from $t$, but 
there is no scheduler that is almost-surely bounded from below 
from state $s$.

\begin{lemma}[From probably to almost-surely boundedness]
\label{from-prob-bounded-to-surely-bounded}
   Let $\cM$ be a strongly connected MDP with $\ExpRew{\max}{\cM}(\MP)=0$
   and let $s_0$ be a state in $\cM$, 
   $L\in \Integer$ with $U\in \Integer \cup \{+\infty\}$,
   and $\sched$ a scheduler that is probably $(L,U)$-bounded from $s_0$.
   Then, there exists an MD-scheduler $\tsched$ such that
   the Markov chain $\cC_{\tsched}$ induced by $\tsched$ contains a 0-BSCC
   $\cB$. In particular, 
   $\tsched$ is bounded from below.
\end{lemma}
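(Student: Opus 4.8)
The plan is to prove the stronger combinatorial fact that, under the hypotheses, $\cM$ already contains a $0$-EC; the desired MD-scheduler then follows by a routine attractor construction. The driving observation is that $\Exp{\max}{\cM}(\MP)=0$ forces $\cM$ to have \emph{no positive cycle}: a simple cycle $\xi$ with $\wgt(\xi)>0$ would yield an MD-scheduler whose single BSCC is $\xi$, hence expected mean payoff $\wgt(\xi)/|\xi|>0$, contradicting the hypothesis (cf.\ Lemma~\ref{lemma:pumping-ecs}). Consequently every finite path has weight at most $B:=(|S|{-}1)\cdot\wmax$, since deleting cycles (each of weight $\leqslant 0$) from it leaves a simple path; so \emph{every} infinite path satisfies $\limsup_n \wgt(\prefix{\infpath}{n})\leqslant B$.

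First I would extract a suitable path. By hypothesis the set of $(L,U)$-bounded $\sched$-paths from $s_0$ has positive measure, and each such path has $\liminf_n \wgt(\prefix{\infpath}{n})\geqslant L$, i.e.\ is bounded from below; combined with the uniform upper bound $B$ this means that along a tail the accumulated weight stays in the finite window $[L,B]$. Hence, passing to configurations $(\text{state},\text{weight})\in S\times[L,B]$ (a finite set), the walk induced by such a path eventually lives in this finite configuration graph, and its set $C_\infty$ of configurations visited infinitely often is nonempty, strongly connected and closed --- a ``bottom SCC'' of the walk. Using that for almost every path each infinitely-often-used state-action pair reaches \emph{all} its probabilistic successors infinitely often (a conditional Borel--Cantelli argument), I may fix one bounded-below path whose set $C_\infty$ is closed under all enabled transitions of the state-action pairs it uses.

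Next I would project $C_\infty$ back to $\cM$ and argue it is a $0$-EC. Let $\cZ$ consist of the states occurring in $C_\infty$ together with the actions used infinitely often from them. Closure and strong connectivity of $C_\infty$ make $\cZ$ an end component of $\cM$. The key point is that \emph{each state $t$ of $\cZ$ occurs at a unique weight in $C_\infty$}: if $(t,w_1),(t,w_2)\in C_\infty$ with $w_1<w_2$, then strong connectivity of $C_\infty$ gives a configuration path from $(t,w_1)$ to $(t,w_2)$, which projects to a closed walk at $t$ of positive weight $w_2-w_1$ and hence contains a positive simple cycle --- impossible. Writing $w_t$ for this unique weight, every state-action pair $(t,\alpha)\in\cZ$ with successor $u$ satisfies $\wgt(t,\alpha)=w_u-w_t$, so $\wgt$ is a potential difference along $\cZ$ and every cycle of $\cZ$ has weight $0$; thus $\cZ$ is a $0$-EC.

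Finally, from the $0$-EC $\cZ$ I would build $\tsched$: choose one action per state inside $\cZ$, take a BSCC $\cB$ of the resulting finite Markov chain (its cycles are cycles of $\cZ$, hence of weight $0$, so $\cB$ is a $0$-BSCC), and extend this memoryless choice to all of $S$ by an attractor strategy driving every state into $\cB$ (possible as $\cM$ is strongly connected). Then $\cB$ is the unique BSCC of $\cC_{\tsched}$, reached almost surely, and being a $0$-BSCC it is bounded from below by Lemma~\ref{gambling-MC}(d); hence $\tsched$ is bounded from below. The step I expect to be the main obstacle is establishing that the recurrent configuration set yields a genuine (closed) end component after projection --- this is exactly where the ``no positive cycle'' consequence of $\Exp{\max}{\cM}(\MP)=0$ is essential, both to bound the weight from above (making the configuration graph finite) and to force the one-weight-per-state property that upgrades the projected end component to a $0$-EC.
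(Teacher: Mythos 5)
Your argument breaks at the very first step. From $\ExpRew{\max}{\cM}(\MP)=0$ you cannot conclude that $\cM$ has no positive cycle: a simple cycle $\xi$ uses state-action pairs whose probabilistic branching the scheduler does not control, so choosing the actions of $\xi$ memorylessly does \emph{not} produce a Markov chain whose single BSCC is $\xi$ (and $\xi$ may even visit a state twice with different actions, so no MD-scheduler follows it at all). The gambling end component of Figure~\ref{fig:gambling-ec} is a concrete counterexample: the unique scheduler of that EC has expected mean payoff $0$, yet $s\,\alpha\,t\,\tau\,s$ is a positive cycle and the paths $(s\,\alpha\,t\,\tau)^n s$ have unbounded weight. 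Consequently everything you build on this observation collapses: there is no uniform upper bound $B$ on the weights of finite paths, so for $U=+\infty$ your configuration graph $S\times[L,B]$ is not finite and the recurrent set $C_\infty$ need not exist; and the ``one weight per state'' step, which rules out $(t,w_1),(t,w_2)\in C_\infty$ with $w_1<w_2$ by appealing to the impossibility of a positive simple cycle, is exactly the false claim again. Since the whole reduction to a $0$-EC rests on these three consequences, the proof as written does not go through. (Your last paragraph — extracting a $0$-BSCC from a $0$-EC and extending by an attractor — is fine, but it is the easy half.)

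The paper's proof works with recurrent state--weight configurations in a related spirit but avoids your trap: for each state $s$ of a suitable end component $\cE$ it takes the \emph{minimal} weight $k_s$ at which $s$ recurs with positive probability among the bounded paths, picks an action $\alpha_s$ used at such minimal-weight visits, and shows only the inequality $k_t\leqslant k_s+\wgt(s,\alpha_s)$ for successors $t$ — not an equality turning $\wgt$ into a potential difference. This inequality already implies by induction that the MD-scheduler $\tsched$ choosing $\alpha_s$ is bounded from below, hence its BSCC $\cB$ has no negative cycle, hence $\ExpRew{}{\cB}(\MP)\geqslant 0$; combined with $\ExpRew{\max}{\cM}(\MP)=0$ this forces $\ExpRew{}{\cB}(\MP)=0$, and only then does Lemma~\ref{weight-div-MC} exclude positive cycles \emph{inside $\cB$} and yield a $0$-BSCC. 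If you want to repair your argument, you must replace the global ``no positive cycle'' claim by a statement local to the scheduler you construct, derived from boundedness rather than from the mean-payoff hypothesis alone.
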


\begin{proof}
   Recall that for an infinite paths 
   $\infpath=s_0\, \alpha_0\, s_1\, \alpha_1 \ldots$, 
   the limit of $\infpath$, denoted $\lim(\infpath)$,
   is the set of all state-action pairs $(s,\alpha)$ 
   such that $(s,\alpha)=(s_n,\alpha_n)$ for infinitely many 
   indices $n\in \Nat$.
   Given an end component $\cE$ of $\cM$, we define:
   $$
     \Pi_{\cE} \ \ = \ \ 
     \bigl\{ \ \infpath \in \InfPaths \ : \ 
         \lim(\infpath)=\cE \ \text{ and } \ 
         \stackrel{\infty}{\forall} n \in \Nat. \
         L \leqslant \wgt(\prefix{\infpath}{n}) \leqslant U \
     \bigr\}\enskip.
   $$
   As $\sched$ is probably $(L,U)$-bounded, 
   there exists an end component $\cE$ of $\cM$ such that
   $\Pr^{\sched}_{\cM,s_0}(\Pi_{\cE})$ is positive.

   For each state $s$ in $\cE$ and each integer 
   $k\in \{\ell \in \Integer: L \leqslant \ell \leqslant U\}$,
   let $\Pi_{s,k}$ denote the following set:
   $$
     \Pi_{s,k} \ \ = \ \
     \bigl\{ \ \infpath \in \Pi_{\cE} \ : \ 
               \stackrel{\infty}{\exists} n \in \Nat. \ 
               (\ \state{\infpath}{n}=s \ \wedge \ 
                  \wgt(\prefix{\infpath}{n}) = k \ ) \
     \bigr\}
   $$
   where $\state{\infpath}{n}=s_n$ denotes the $(n{+}1)$-st state of
   $\infpath$.

   For each state $s$ in $\cE$, let $K_s$ denote the set of integers 
   $k \in \{\ell \in \Integer: L \leqslant \ell \leqslant U\}$
   such that $\Pr^{\sched}_{\cM,s_0}(\Pi_{s,k})>0$.
   Then, $K_s$ is nonempty, as $\Pi_{\cE}$ agrees with the union of the sets
   $\Pi_{s,k}$ when $k$ ranges over all integers $\ell$ with
   $L \leqslant \ell \leqslant U$.
   Let $k_s=\min K_s$ and $\Pi_s=\Pi_{s,k_s}$.
   Thus, $\Pr^{\sched}_{\cM,s_0}(\Pi_s)>0$.

   For each state-action pair $(s,\alpha)$ in $\cE$, let
   $\Pi_{s,\alpha}$ denote the set
   of all paths $\infpath \in \Pi_s$ such that the following
   condition holds:
   $$
               \stackrel{\infty}{\exists} n \in \Nat. \ 
               (\ \state{\infpath}{n}=s \ \wedge \ 
                  \wgt(\prefix{\infpath}{n}) = k_s \ \wedge \  
                  \sched(\prefix{\infpath}{n}) = \alpha \ )\enskip.
   $$
   The above condition assumes that $\sched$ is deterministic.
   If $\sched$ is randomized then we replace the condition
   ``$\sched(\prefix{\infpath}{n}) = \alpha$'' with
   ``$\sched(\prefix{\infpath}{n})(\alpha) \geqslant 1/|\Act(s)|$''.
   As there are only finitely many actions $\alpha$ with
   $(s,\alpha)\in \cE$ 
   and $\Pi_s$ is the union of the sets $\Pi_{s,\alpha}$
   there is some action $\alpha_s \in \Act_{\cE}(s)$ with
   $\Pr^{\sched}_{\cM,s_0}(\Pi_{s,\alpha_s})>0$.
   But then 
   $$
     L \ \ \leqslant \ \ k_s+\wgt(s,\alpha_s) \ \ \leqslant \ \ U
   $$
   and for each state $t$ with $P(s,\alpha_s,t)>0$ we have that
   $t$ belongs to $\cE$ and 
   $$
      k_t \ \ \leqslant \ \ k_s+\wgt(s,\alpha_s)\enskip.
   $$
   Let $R = L +\max_{s\in \cE} k_s$.
   We now consider the MD-scheduler $\tsched$ that schedules $\alpha_s$
   for each state $s$ in $\cE$ and 
   satisfies $\Pr^{\tsched}_{\cM,u}(\Diamond \cE)=1$
   for all states $u$ in $\cM$ that do not belong to $\cE$.
   (Such a scheduler exists as $\cM$ is strongly connected.)

   By induction on the length $|\fpath|$ of 
   finite $\tsched$-paths starting in some state of $\cE$, we obtain
   $\wgt(\fpath) \geqslant L-R+k_{\last(\fpath)}$ 
   if $\fpath$ is a finite $\tsched$-path with
   $\first(\fpath)\in \cE$.
   \begin{itemize}
   \item
     Basis of induction:
     If $|\fpath|=0$, say $\fpath=s\in \cE$, then
     $\wgt(\fpath)=0 \geqslant L-R+k_s$ as $R \geqslant L+k_s$ by 
     the choice of $R$.
   \item
     Step of induction:
     If $\fpath$ is a path of length $n{+}1$ and its last transition
     is $s \move{\alpha_s} t$ then we apply the induction hypothesis to
     its prefix of length $n$ and obtain:
   \begin{eqnarray*}
     \wgt(\fpath) & \ = \ & 
     \wgt(\prefix{\fpath}{n}) \ + \ \wgt(s,\alpha_s)
     \\[1ex]
     & \ \geqslant \ & (L-R+k_s) \ + \ \wgt(s,\alpha_s)
     \\[1ex]
     & \ \geqslant \ & 
     L-R \ + \ \underbrace{(k_s + \wgt(s,\alpha_s))}_{\geqslant k_t}
     \ \ \geqslant \ \ 
     L-R +k_t
   \end{eqnarray*}
   \end{itemize}

   In particular, $\wgt(\fpath) \geqslant 
                      L-R +\min_{s\in \cE} k_s \eqdef L^*$
   for all finite $\tsched$-paths starting in some state of $\cE$.
   
   Let now $\cB$ be a BSCC of $\tsched$. 
   Then, $\cB$ is a sub-component of $\cE$.
   As the weight of all finite
   paths in $\cB$ is bounded by $L^*$ from below, $\cB$ does not have
   negative cycles. Hence, $\ExpRew{}{\cB}(\MP)\geqslant 0$.
   On the other hand,
   $\ExpRew{}{\cB}(\MP)\leqslant \ExpRew{\max}{\cM}(\MP)=0$.
   This yields that $\ExpRew{}{\cB}(\MP)=0$ and that $\cB$ is a 0-BSCC
   (Lemma \ref{weight-div-MC}).
\end{proof}

The following lemma restates part (a) of Theorem~\ref{universal-neg-wgt-div}.

\begin{lemma}%
 \label{lemma:0-EC}
  Let $\cM$ be a strongly connected MDP with $\ExpRew{\max}{\cM}(\MP) =0$. 
  Then, the following statements are equivalent
\begin{enumerate}
\item [(a)] 
    $\cM$ has a 0-EC.
\item [(b)]
    There exists an MD-scheduler $\sched$ such that 
    the Markov chain induced by $\sched$ contains a 0-BSCC.
\item [(c)]
   $\cM$ has a scheduler that is almost-surely bounded from below 
  from each state.
\item [(d)]
   There exist integers $L, U$ such that
   $\cM$ has a probably $(L,U)$-bounded scheduler
   from some state.
\item [(e)]
   There exists an integer $L$ such that
   $\cM$ has a probably $(L,+\infty)$-bounded scheduler
   from some state.
\end{enumerate}
\end{lemma}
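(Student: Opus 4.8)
The plan is to prove the cyclic chain of implications
$(a)\Rightarrow(b)\Rightarrow(c)\Rightarrow(d)\Rightarrow(e)\Rightarrow(a)$,
which establishes the equivalence of all five statements. Several of these
implications are nearly immediate, so I would dispatch them first and reserve
the real work for the single hard direction.

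First I would handle the easy implications. For $(a)\Rightarrow(b)$: if $\cM$
has a 0-EC $\cZ$, then $\cZ$ contains an MD-scheduler (choosing one enabled
action per state inside $\cZ$) whose unique BSCC is a 0-BSCC; extend this to
all of $\cM$ arbitrarily. For $(b)\Rightarrow(c)$: given an MD-scheduler
$\sched$ whose induced Markov chain has a 0-BSCC $\cB$, note that inside $\cB$
every cycle has weight $0$, so the accumulated weight of any path that settles
in $\cB$ stays within a fixed window determined by the finitely many
$w(s,t)$-values; since almost every $\sched$-path eventually reaches and stays
in some BSCC, and we may arrange (by strong connectedness) that $\cB$ is
reached almost surely, the scheduler is almost-surely bounded from below from
each state. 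The implications $(c)\Rightarrow(d)$ and $(d)\Rightarrow(e)$ are
formal: almost-sure $(L,+\infty)$-boundedness trivially gives probable
$(L,U)$-boundedness for a suitable $U$ (taking $U=+\infty$ and weakening
``almost all'' to ``positive measure''), and $(d)\Rightarrow(e)$ just observes
that $(L,U)$-boundedness implies $(L,+\infty)$-boundedness.

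The main obstacle is the implication $(e)\Rightarrow(a)$, i.e.\ recovering an
honest 0-EC from the mere existence of a scheduler some of whose paths stay
bounded from below. Here I would invoke
Lemma~\ref{from-prob-bounded-to-surely-bounded}, which is precisely tailored to
this: from a probably $(L,+\infty)$-bounded scheduler $\sched$ it produces an
MD-scheduler $\tsched$ whose induced Markov chain contains a 0-BSCC $\cB$. That
0-BSCC is, by definition, a 0-EC of $\cM$, so $(a)$ follows. The content of
Lemma~\ref{from-prob-bounded-to-surely-bounded} is exactly the delicate
argument: one picks an end component $\cE$ carrying positive measure of bounded
paths, and for each state $s\in\cE$ selects the minimal recurring accumulated
weight $k_s$ together with a witnessing action $\alpha_s$; the key inequality
$k_t\le k_s+\wgt(s,\alpha_s)$ for successors $t$ forces the memoryless scheduler
$\tsched:=(\alpha_s)_s$ to have all path-weights bounded below, whence its
BSCCs have no negative cycles, hence nonnegative---and therefore, since
$\Exp{\max}{\cM}(\MP)=0$, exactly zero---mean payoff, making them 0-BSCCs by
Lemma~\ref{weight-div-MC}.

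Thus the real mathematical weight sits entirely in
Lemma~\ref{from-prob-bounded-to-surely-bounded}, which I would state and prove
as a separate lemma (as the excerpt already does) and then cite. With that in
hand, the proof of Lemma~\ref{lemma:0-EC} reduces to assembling the easy
implications plus this one citation. I would also remark, once the equivalences
are established, that $(c)$ is stated ``from each state'' while $(d)$ and $(e)$
are ``from some state''; the passage between these uses strong connectedness of
$\cM$ (prepending a fixed finite path $\fpath_t$ from any $t$ to the witness
state shifts the bounds by $\wgt(\fpath_t)$), so no generality is lost and the
cycle of implications is legitimate.
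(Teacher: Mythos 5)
Your proof is essentially the paper's: all the genuine content is delegated to Lemma~\ref{from-prob-bounded-to-surely-bounded} (whose proof you summarize accurately), and the remaining implications are routine. The only difference is organizational: the paper proves (a)$\Leftrightarrow$(b) directly, gets (e)$\Rightarrow$(c) and (c)$\Rightarrow$(b) from that key lemma, and closes the loop with the easy implications (b)$\Rightarrow$(d)$\Rightarrow$(e), whereas you arrange a single cycle (a)$\Rightarrow$(b)$\Rightarrow$(c)$\Rightarrow$(d)$\Rightarrow$(e)$\Rightarrow$(a).

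One step of your cycle as written does not go through: (c)$\Rightarrow$(d). Statement (d) asks for \emph{integers} $L$ and $U$, i.e.\ a two-sided bound, and ``taking $U=+\infty$'' does not deliver that; (d) is not a formal weakening of (c). Indeed a scheduler that is almost-surely bounded from below need not obviously confine a positive-measure set of paths to a bounded window without first extracting a 0-EC (consider that weight-divergent behaviour is only excluded here via the hypothesis $\ExpRew{\max}{\cM}(\MP)=0$ and the key lemma). Since (d) sits strictly inside your cycle, this leaves (d) unproven. The repair is immediate from material you already have: route (d) through (b) instead. Once you have an MD-scheduler whose induced chain has a 0-BSCC $\cB$ reached almost surely, every path that has entered $\cB$ keeps its accumulated weight within the finite window spanned by the values $w(s,t)$ for $s,t\in\cB$ (shifted by the weight at entry), so a positive-measure set of paths is eventually $(L,U)$-bounded for suitable integers $L,U$; this gives (b)$\Rightarrow$(d), and your argument for (c) already yields (e) directly via Lemma~\ref{from-bounded-to-L-bounded}. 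With that rerouting the equivalence is complete.
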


\begin{proof}
 We first show the equivalence of (a) and (b).
 The implication ``(b) $\Longrightarrow$ (a)'' is trivial.
 For the proof of ``(a) $\Longrightarrow$ (b)'', we suppose we are
 given a 0-EC $\cE$ of $\cM$. We pick an MD-scheduler $\sched$ for $\cM$
 such that the state-action pairs $(s,\sched(s))$ belong to $\cE$ whenever $s$
 is a state of $\cE$ and $\Pr^{\sched}_{\cM,s}(\Diamond \cE)=1$ for all
 states $s$ in $\cM$ with $s \notin \cE$.
 Then, each BSCC of the Markov chain induced for $\sched$ is a 0-BSCC.
 The equivalence of statements (c), (d) and (e) is a consequence of
 Lemma \ref{from-prob-bounded-to-surely-bounded}, which shows
 ``(e) $\Longrightarrow$ (c)'', while
 ``(c) $\Longrightarrow$ (d)'' and ``(d) $\Longrightarrow$ (e)'' 
 are trivial.
 We finally check the equivalence of statements (a)/(b) and (c)/(d).
 The implication ``(b) $\Longrightarrow$ (d)'' is obvious,
 while ``(c) $\Longrightarrow$ (b)'' has been shown in the proof of
 Lemma \ref{from-prob-bounded-to-surely-bounded}.
\end{proof}

The next lemma is part (b) in Theorem~\ref{universal-neg-wgt-div}.

\begin{lemma}%
 \label{appendix:universal-neg-wgt-div}
  Let $\cM$ be a strongly connected MDP with
  $\Exp{\max}{\cM}(\MP)=0$.
  Then, the following statements are equivalent:
  \begin{enumerate}
  \item [(a)]
     $\cM$ has no 0-EC.
  \item [(b)]
     $\cM$ has no scheduler that is bounded from below.
  \item [(c)]
     Each scheduler for $\cM$ is negatively weight-divergent.
  \end{enumerate}
\end{lemma}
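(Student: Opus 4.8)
The plan is to derive all three equivalences from the characterization already established in Lemma~\ref{lemma:0-EC} (part~(a) of the theorem), which links the existence of a $0$-EC to several boundedness conditions, together with Lemma~\ref{from-bounded-to-L-bounded}. I would prove (a)$\Leftrightarrow$(b) and (a)$\Leftrightarrow$(c) separately, each by contraposition, so that the negations read: $\neg$(a) ``$\cM$ has a $0$-EC''; $\neg$(b) ``$\cM$ has a scheduler that is almost-surely bounded from below from some state''; and $\neg$(c) ``some scheduler $\sched$ is not negatively weight-divergent, i.e.\ from some state $s$ we have $\Pr^{\sched}_{\cM,s}\{\infpath : \liminf_n \wgt(\prefix{\infpath}{n}) > -\infty\} > 0$''.

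For (a)$\Leftrightarrow$(b): in the direction $\neg$(b)$\Rightarrow\neg$(a), a scheduler almost-surely $(L,+\infty)$-bounded from some state $s$ is in particular probably $(L,+\infty)$-bounded from $s$, so Lemma~\ref{lemma:0-EC}, implication (e)$\Rightarrow$(a), yields a $0$-EC. Conversely, for $\neg$(a)$\Rightarrow\neg$(b), if $\cM$ has a $0$-EC then Lemma~\ref{lemma:0-EC}, implication (a)$\Rightarrow$(c), provides a scheduler that is almost-surely bounded from below from each state, which witnesses $\neg$(b). This equivalence needs no new ingredient beyond Lemma~\ref{lemma:0-EC}.

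For (a)$\Leftrightarrow$(c): the direction $\neg$(a)$\Rightarrow\neg$(c) again uses Lemma~\ref{lemma:0-EC}, since a scheduler that is almost-surely bounded from below from each state satisfies $\Pr^{\sched}_{\cM,s}\{\liminf \wgt \in \Integer\}=1$, hence $\Pr^{\sched}_{\cM,s}\{\liminf \wgt = -\infty\}=0$, and so is not negatively weight-divergent. The substantive direction is $\neg$(c)$\Rightarrow\neg$(a). Here I would decompose the event $\{\liminf \wgt > -\infty\}$ as the disjoint union of $\{\liminf \wgt \in \Integer\}$ (paths bounded from below) and $\{\liminf \wgt = +\infty\}$ (pumping paths). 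The key step is to show that under the hypothesis $\Exp{\max}{\cM}(\MP)=0$ no scheduler places positive probability on pumping paths; then $\Pr^{\sched}_{\cM,s}\{\text{bounded from below}\}>0$, Lemma~\ref{from-bounded-to-L-bounded} upgrades this to probable $(L,+\infty)$-boundedness for some $L$, and Lemma~\ref{lemma:0-EC}, (e)$\Rightarrow$(a), produces a $0$-EC.

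The main obstacle is exactly this pumping observation: that $\Exp{\max}{\cM}(\MP)=0$ forces $\Pr^{\sched}_{\cM,s}\{\infpath : \liminf_n \wgt(\prefix{\infpath}{n}) = +\infty\}=0$ for every scheduler $\sched$ and state $s$. The weaker fact from Lemma~\ref{mp-negative-not-wgt-div}, namely that positive probability of $\limsup \wgt = +\infty$ only yields $\Exp{\max}{\cM}(\MP)\geqslant 0$, is not enough. The argument I would invoke is precisely the one carried out in the proof of Lemma~\ref{appendix:pumping-ecs} for ``pumping $\Rightarrow \Exp{\max}{\cM}(\MP)>0$'': that proof uses only the existence of an end component $\cE$ with $\Pr^{\sched}_{\cM,s_0}(\Gamma_{\cE})>0$ (with $\Gamma_{\cE}$ the set of pumping paths whose limit is $\cE$) and then constructs a finite-memory scheduler of strictly positive expected mean payoff. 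Hence positive-measure pumping, not merely an almost-surely pumping scheduler, already entails $\Exp{\max}{\cM}(\MP)>0$, and its contrapositive rules out the pumping case above. I would isolate this as a standalone observation preceding the lemma, or cite the corresponding step of Lemma~\ref{appendix:pumping-ecs} directly; the remaining bookkeeping (translating ``not negatively weight-divergent'' and ``bounded from below'' between the $\liminf$ formulation and the $(L,U)$-boundedness formulation, and the passage ``from some state'' justified by strong connectivity) is routine.
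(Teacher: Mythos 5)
Your proof is correct and follows essentially the same skeleton as the paper's: the easy implications are dispatched exactly as in the paper, and the substantive direction reduces, via Lemma~\ref{from-bounded-to-L-bounded} and Lemma~\ref{lemma:0-EC}, to the existence of a 0-EC. The one place where you diverge is in handling the event $\{\liminf_n \wgt(\prefix{\infpath}{n}) > -\infty\}$: you split off the pumping paths $\{\liminf = +\infty\}$ and kill them with a separate observation (that positive measure of pumping paths already forces $\Exp{\max}{\cM}(\MP)>0$, which is indeed exactly what the proof of Lemma~\ref{appendix:pumping-ecs} establishes, since it only uses $\Pr^{\sched}_{\cM,s_0}(\Gamma_{\cE})>0$ for some end component $\cE$). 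The paper skips this step entirely, because a pumping path is itself $(L,+\infty)$-bounded for every $L$, so the countable-union argument behind Lemma~\ref{from-bounded-to-L-bounded} applies directly to the whole event $\{\liminf > -\infty\}$ and the machinery of Lemma~\ref{from-prob-bounded-to-surely-bounded} does not care whether the positive-measure set consists of pumping or merely bounded paths. Your detour is therefore not needed, but it is sound, and it has the small virtue of letting you invoke Lemma~\ref{from-bounded-to-L-bounded} exactly as stated (its hypothesis is $\Pr\{\liminf \in \Integer\}>0$, whereas the paper applies it, slightly loosely, to the larger event $\{\liminf > -\infty\}$); the cost is that you must justify the strengthened reading of Lemma~\ref{appendix:pumping-ecs}, which you correctly identify as the main obstacle and correctly resolve.
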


\begin{proof}
The implications ``(c) $\Longrightarrow$ (b)'' 
and ``(b) $\Longrightarrow$ (c)'' 
are trivial as 
schedulers realizing a 0-EC $\cE$ are bounded from below
and as no negatively weight-divergent scheduler is bounded from below.
To prove ``(a) $\Longrightarrow$ (c)'' 
we suppose that $\cM$ has no 0-EC.
We suppose by contraction that $\cM$ has a scheduler $\sched$ that is
not negatively weight-divergent. That is, there is a state $s$
such that
  \[ 
   \Pr^{\sched}_{\cE,s}
   \bigl\{ \ \infpath \in \InfPaths \ : \ 
         \liminf_{n \to \infty} \ \wgt(\prefix{\infpath}{n}) \ > \ -\infty
   \bigr\} \ \ > \ \ 0\enspace.
  \]
  But then, there exists $L\in \Integer$ such that $\sched$ is
  probably $(L,+\infty)$-bounded from $s$ (see
  Lemma~\ref{from-bounded-to-L-bounded}). We derive, by
  Lemma~\ref{lemma:0-EC} that $\cM$ has a 0-EC, a contradiction.
\end{proof}

\subsection{Checking the Gambling Property}

\label{sec:checking-gambling}

\begin{figure}[ht]
	\centering%

\begin{tikzpicture}
	\node[state] (s0) {$0$};
	\node[state, right=16mm of s0] (s1) {$1$};
	\node[state, right=16mm of s1] (s2) {$2$};
	\node[istate, right=16mm of s2] (dots) {$\ldots$};
	\node[state, right=16mm of dots] (sn1) {$n{-}1$};
	\node[state, right=16mm of sn1] (sn) {$n$};
	\node[state, right=16mm of sn] (snp1) {$n{+}1$};
	\node[bullet, above=10mm of s0] (b1) {};
	\node[bullet, below=10mm of snp1] (b2) {};
	\path%
		(s0) edge[ntran, bend left] node[left]{$\alpha$/\posw{1}} (b1)
		(b1) edge[ptran, bend left] coordinate[pos=0.3](b1s0) (s0)
		(b1) edge[ptran, bend left] coordinate[pos=0.3](b1s1) (s1)
		(snp1) edge[ntran, bend left] node[right]{$\alpha$/\negw{1}} (b2)
		(b2) edge[ptran, bend left] coordinate[pos=0.3](b2snp1) (snp1)
		(s1) edge[ptran, bend left] node[above] {$\alpha$/\nilw} (s2)
		(s2) edge[ptran, bend left] node[above] {$\alpha$/\nilw} (dots)
		(dots) edge[ptran, bend left] node[above] {$\alpha$/\nilw} (sn1) 
		(sn1) edge[ptran, bend left] node[above] {$\alpha$/\nilw} (sn)
		(sn) edge[ptran, bend left] node[above] {$\alpha$/\negw{$b$}} (snp1)
		(s1) edge[ptran, bend right] node[below] {$\beta$/\posw{$a_1$}} (s2)
		(s2) edge[ptran, bend right] node[below] {$\beta$/\posw{$a_2$}} (dots)
		(dots) edge[ptran, bend right] node[below] {$\beta$/\posw{$a_{n-2}$}} (sn1) 
		(sn1) edge[ptran, bend right] node[below] {$\beta$/\posw{$a_{n-1}$}} (sn);
	\coordinate[below=12mm of s0] (bs0);
	\coordinate[below=12mm of sn] (bsn);
	\coordinate[below=12mm of sn1] (bsn1);
	\draw[-] 
	  	(b2) .. coordinate[pos=0.1](b2snp2) controls (bsn) .. (bsn1);
	\draw[ptran]	
		(bsn1) .. controls (bs0) .. (s0.south);
	\draw%
		(b1s0) to[bend right] (b1s1);				
	\draw%
		(b2snp1) to[bend right] (b2snp2);
\end{tikzpicture}
 	\caption{Reduction from subset sum for the \NP-hardness of checking the gambling property.}
\label{fig:complexity-checking-existence-gampling-md-scheduler}
\end{figure}
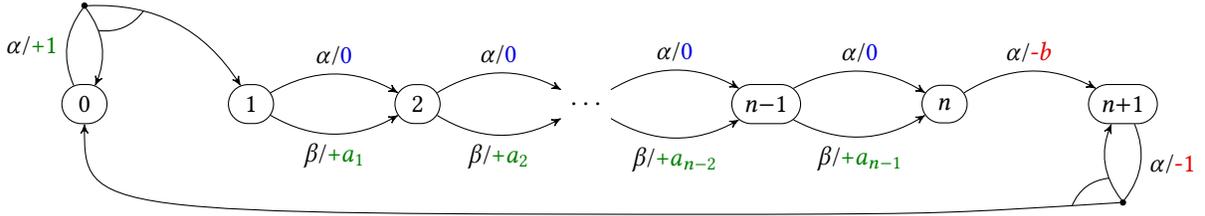

\gambling*

\begin{proof}
  We first observe that if
  $\ExpRew{\max}{\cM}(\MP) = 0$,
  then an end component is gambling iff it is weight-divergent.
  Hence, 
  statement (a) follows from Theorem~\ref{weight-divergence-algorithm}.

  We show the \NP-completeness in the general case (statement (b)).
    A nondeterministic polynomially time-bounded algorithm 
  is obtained by first guessing nondeterministically an
  MD-scheduler $\sched$ and then checking deterministically whether
  $\ExpRew{\sched}{\cM}(\MP)=0$ and 
  the Markov chain induced by $\sched$ has a positive cycle in some its BSCCs.
  \NP-hardness is achieved by a polynomial reduction from the 
  subset sum problem, which takes as input a finite nonempty sequence
  $a_1,a_2,\ldots,a_{n-1},b$ of positive integers 
  and asks for a subset
  $I$ of $\{1,\ldots,n{-}1\}$ such that $\sum_{i\in I} a_i=b$.
  Let $\cM$ be the MDP 
  illustrated by Figure~\ref{fig:complexity-checking-existence-gampling-md-scheduler}.
  Clearly, $\cM$ is strongly connected.
  There is a one-to-one correspondence between the MD-schedulers
  for $\cM$ and the subsets $I$ of $\{1,\ldots,n{-}1\}$.
  Given $I \subseteq \{1,\ldots,n{-}1\}$, we define
  $\sched_I$ as the MD-scheduler that picks $\beta$ for the states
  $i\in I$ and action $\alpha$ for all other states.
  Then, the Markov chain $\cC_I$ induced by $\sched_I$ consists of a single
  BSCC, and $\sched_I$ is gambling iff 
  $\ExpRew{\sched_I}{\cM}(\MP)=0$ iff $\sum_{i\in I} a_i=b$.
  Thus, $\cM$ has a gambling MD-scheduler iff there exists 
  $I\subseteq \{1,\ldots,n{-}1\}$ with $\sum_{i\in I} a_i=b$.
\end{proof}

\subsection{Computing the set $\ZeroEC$ and the Recurrence Values}
 
\label{section:app-minimal-credits}

We now provide the proof for Lemma~\ref{mincredit-ZeroEC}.
So, we suppose that we are given a strongly connected MDP $\cM$
with $\ExpRew{\max}{\cM}(\MP)=0$.
Section \ref{sec:compute-max-0-EC} explains 
how to compute the maximal 0-ECs. 
In Section \ref{sec:min-credit}, we will explain
how to compute the recurrence values of the states belonging
to a 0-EC.

\subsubsection{Computing the Maximal 0-ECs}

\label{sec:compute-max-0-EC}

We now turn to the computation of the maximal 0-ECs.
(Recall the notion of maximal 0-ECs from Section \ref{sec:max-0-EC}.)
Let $\ZeroEC$ denote the set of all states that belong to some 0-EC. 
Thus, $\ZeroEC$ is the union of the state spaces of all maximal 0-ECs.

\begin{lemma}[First part of Lemma~\ref{mincredit-ZeroEC}]
\label{appendix:ZeroEC}
If $\cM$ is strongly connected and
$\ExpRew{\max}{\cM}(\MP)=0$ 
then the maximal 0-ECs (and the set $\ZeroEC$) 
are computable in polynomial time.
\end{lemma}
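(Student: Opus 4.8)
The plan is to reduce the computation of the maximal 0-ECs to the \emph{computation} (rather than mere detection) of the set $Z$ of all state-action pairs lying in some 0-EC, and then to read off the maximal 0-ECs from $Z$. For the first task I would iterate the decision procedure of Theorem~\ref{thm:checking-zero-EC}(a) together with the spider construction, as announced in the discussion opening Section~\ref{sec:0-ECs}. The crucial point is that this decision procedure does not only answer yes/no: whenever a 0-EC exists it outputs a concrete 0-BSCC $\cE$ of the current MDP (the BSCC of a mean-payoff-maximizing MD-scheduler that has been verified to be a 0-BSCC, after perturbing probabilities to kill gambling BSCCs). Since that perturbation changes only transition probabilities and leaves the graph and the weights untouched, the returned $\cE$ is a genuine 0-EC of the original strongly connected $\cM$.

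The main loop maintains a strongly connected MDP with maximal expected mean payoff $0$ (initially $\cM$). In each round it calls the decision procedure. If no 0-EC exists, it stops. Otherwise it obtains a 0-BSCC $\cE$, records the state-action pairs of $\cE$ as belonging to $Z$, and replaces the current MDP by the unique maximal end component $\cF$ of $\spider{\cdot}{\cE}$, which is strongly connected and reachable from all states by (S2) of Lemma~\ref{spider-construction} and satisfies $\ExpRew{\max}{\cF}(\MP)\leqslant 0$ by Corollary~\ref{spider-MP-at-most-0}; if this value is negative we stop (no further 0-EC), otherwise we recurse (if $\cE$ equals the whole current MDP we stop as well). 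That $Z$ is exactly the set of 0-EC state-action pairs of $\cM$ follows from the preservation property (S4) of Lemma~\ref{spider-construction}, iterated along the chain of spider constructions: a state-action pair outside the flattened $\cE$ lies in a 0-EC before flattening iff its image does afterwards. This is precisely packaged by Lemma~\ref{max-EC-of-Mi}(b), so the accumulated record of flattened 0-BSCCs, read as states and non-$\tau$ actions and translated back to $\cM$ via the $w$-values (Remark~\ref{action-in-N-wgt-div-algo}), yields $Z$ and hence $\ZeroEC=\{s:(s,\alpha)\in Z\text{ for some }\alpha\}$. Termination and the polynomial time bound are immediate: each round removes one state-action pair by (S1), so there are at most $\|\cM\|$ rounds, the intermediate MDPs stay polynomially sized by Lemma~\ref{size-iterative-spider-construction}, and every round is dominated by a polynomial-time mean-payoff computation, the poly-time decision procedure, and one spider/MEC step.

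It then remains to recover the maximal 0-ECs from $Z$. I would compute the maximal end components of the sub-MDP $(S,Z)$ and claim these are exactly the maximal 0-ECs. One inclusion is clear, since every maximal 0-EC is a strongly connected set of state-action pairs contained in $Z$. For the converse, let $\cH$ be a maximal end component of $(S,Z)$ and $(u,\alpha)\in\cH$; being in $Z$, it lies in some 0-EC, hence (by Lemma~\ref{lemma:union-of-0-EC}, which merges overlapping 0-ECs because $\ExpRew{\max}{\cM}(\MP)=0$) in a unique maximal 0-EC, and all $P$-successors of $(u,\alpha)$ lie there too. Thus the map sending a state to its maximal 0-EC is constant along the edges of the strongly connected $\cH$, forcing $\cH$ to sit inside a single maximal 0-EC; combined with the first inclusion this gives equality. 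As the maximal-end-component decomposition is computable in polynomial time, the whole procedure runs in polynomial time.

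The step I expect to be the main obstacle is the bookkeeping in the second paragraph: proving that the union of the flattened 0-BSCCs, after translating states and actions back through the iterated spider constructions (some actions having been reattributed to reference states and some having become $\tau$), recovers precisely the original 0-EC state-action pairs. This is where (S4) and Lemma~\ref{max-EC-of-Mi}(b) must be applied inductively, and where one must check that interleaving the probability perturbations of the decision procedure with the flattening does not disturb the 0-EC structure---which is safe because perturbations preserve the graph and the weights, on which the notion of a 0-EC depends.
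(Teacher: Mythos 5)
Your proposal is correct and follows essentially the same route as the paper's proof: iterate the 0-BSCC detection procedure with the spider construction, use (S4) to argue that the recorded flattened 0-BSCCs together with the surviving 0-ECs account for all 0-EC state-action pairs, and bound the number of iterations by the decrease in $\|\cdot\|$ guaranteed by (S1) together with the polynomial size bound of Lemma~\ref{size-iterative-spider-construction}. Your final step of extracting the maximal 0-ECs as the maximal end components of the sub-MDP induced by the recorded pairs, justified via Lemma~\ref{lemma:union-of-0-EC}, is a correct and slightly more explicit rendering of what the paper leaves implicit.
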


\begin{proof}
  To compute the maximal 0-ECs we present an algorithm that identifies all
  state-action pairs $(s,\alpha)$ that are contained in some 0-EC.
  For this, we combine the polynomial-time algorithm
  to check the existence of 0-BSCCs (Section \ref{sec:algo-checking-0-EC})
  and the polynomial-time algorithm
  to flatten 0-BSCCs (spider construction).

  Obviously, all state-action pairs $(s,\alpha)$ with
  $P(s,\alpha,s)=1$ and $\wgt(s,\alpha)=0$ constitute a 0-EC.
  In the sequel, we suppose that such ``trivial'' state-action pairs have been
  removed from $\cM$.
  We first rename the actions in $\cM$ to ensure that
  $\Act_{\cM}(s)\not= \Act_{\cM}(s')$
  for all states
  $s$, $s'$ with $s\not= s'$. That is, for each action name $\alpha$ 
  in $\cM$ there is a unique state-action pair $(s,\alpha)\in \cM$.
  Thus, it suffices to identify all actions that belong to some 0-EC.

  The algorithm works as follows.
  We first run the polynomial-time algorithm 
  to check the existence of a 0-BSCC
  (see Section \ref{sec:algo-checking-0-EC}).
  If a 0-BSCC $\cB$ of $\cM$ is found then we apply the spider construction
  to transform the original
  $\cM$ into an equivalent MDP $\cN=\spider{\cM}{\cB}$ 
  with the same state space and 
  where $\cB$ has been flattened.
  Recall that $\cN$ contains fewer state-action
  pairs than $\cM$ (see (S1) in Lemma \ref{spider-construction}).
  By property (S4) stated in Lemma \ref{spider-construction}
  (see also Lemma \ref{spider-construction-S4})
  we get that $\ZeroEC$ equals the union of the state space of
  $\cB$ and the set of states belonging to some 0-EC of $\cN$,
  and the analogous statement for the actions that are contained
  in some 0-EC of $\cM$ resp.~$\cN$.
  (Of course, the extra $\tau$-actions of $\cN$ from all states
  $s$ in $\cB \setminus \{s_0\}$ to $s_0$ have to be ignored.)
  Hence, we can repeat the same procedure to the new MDP $\cN$.
  In this way we encounter all states and actions that belong to a 0-EC.
  The number of iterations is bounded by the number of state-action pairs
  that belong to some 0-EC of $\cM$. Thus, due to Lemma~\ref{size-iterative-spider-construction},
  the time complexity is polynomial in the size of $\cM$.
\end{proof}

\subsubsection{Recurrence Values in Maximal 0-ECs}

\label{sec:min-credit}

Given a strongly connected MDP $\cM$ with $\ExpRew{\max}{\cM}(\MP) =0$
and a state $s$ that belongs to some maximal 0-EC $\cZ$ of $\cM$, the
long-run weight and the recurrence value 
of state $s$ are defined by:
$$
\begin{array}{lcr@{\hspace*{0.1cm}}l}
  \lgr(s) & =  & 
  \max & 
  \bigl\{ \, 
      w \in \Integer \, : \, 
      \exists \sched. \ 
         \Pr^{\sched}_{\cZ,s}
           \big(\, \Diamond \Box (\accwgt  \geqslant w) \, \big) =1 \, 
  \bigr\}
  \\[2ex]

  \rec(s)
  & =  &
  \max & 
  \bigl\{ \, 
      w \in \Integer \, : \, 
      \exists \sched. \ 
      \Pr^{\sched}_{\cZ,s}
           \big(\, \Box (\accwgt  \geqslant w) \, \wedge \, \Box \Diamond s \, \big)
       =1 \, 
  \bigr\}
\end{array}
$$
where the existential quantifier
$\exists \sched$ ranges over the schedulers for $\cZ$ (rather than $\cM$).

Whenever $s$ and $t$ are states that belong to the same maximal 0-EC
then $\lgr(s)=w(s,t)+\lgr(t)$.
On the other hand,
$\rec(s) \leqslant 0$ for all states $s$ in a 0-EC
and $\rec(s)\not= w(s,t)+\rec(t)$ is possible if $s$ and $t$ belong to the
same maximal 0-EC.

\begin{example}
\label{example:values-in-ZeroEC}
{\rm 
Consider the MDP $\cM$ 
with the following deterministic (\ie with probability 1) transitions:
$$
  s \stackrel{3}{\longrightarrow} t
    \stackrel{-2}{\longrightarrow} u
    \stackrel{-1}{\longrightarrow} s
 \qquad \text{and} \qquad
 v \stackrel{4}{\longrightarrow} s
   \stackrel{-4}{\longrightarrow} v
$$
For simplicity, we dropped here the action names and attached the weights to 
the transitions. Then, $\cM$ constitutes a maximal 0-EC with
$\rec(s)=\lgr(s)=0$, $\rec(t)=\lgr(t)=-3$, $\rec(u)=\lgr(u)=-1$, while
$\rec(v)=0$ and $\lgr(v)=4$.
\Ende
 }
\end{example}

\begin{lemma}[Second part of Lemma~\ref{mincredit-ZeroEC}]
\label{appendix:values-in-ZeroEC}
If $\cM$ is strongly connected and $\ExpRew{\max}{\cM}(\MP)=0$ 
then the long-run weights $\lgr(s)$ and the recurrence values $\rec(s)$
for the states $s \in \ZeroEC$
are computable in polynomial time.
\end{lemma}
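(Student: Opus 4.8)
The plan is to exploit the defining feature of a $0$-EC: since every cycle has weight $0$, the accumulated weight along a path is a function of its current state alone. Concretely, fix a maximal $0$-EC $\cZ$ of $\cM$ (computable in polynomial time by Lemma~\ref{appendix:ZeroEC}) and a state $s$ of $\cZ$. By the existence of the values $w(s,t)$ (Remark~\ref{remark-w(s,t)} and Section~\ref{sec:max-0-EC}), every finite path inside $\cZ$ from $s$ to $t$ has weight $w(s,t)$; hence along every infinite path $\infpath$ inside $\cZ$ with $\first(\infpath)=s$ one has $\wgt(\prefix{\infpath}{n}) = w(s,\state{\infpath}{n})$ for all $n$, so the whole weight profile is governed by the sequence of visited states and not by how often the path loops. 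First I would compute, for the reference state $s$, all values $w(s,t)$ by a single weight-accumulating traversal of $\cZ$; the set of candidate thresholds is then $W_s = \{w(s,t) : t \in \cZ\}$, of size at most $|\cZ|$.

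Next I would reformulate both quantities as end-component problems inside threshold-restricted sub-MDPs. For $\theta \in \Integer$ write $U_\theta = \{t \in \cZ : w(s,t) \geq \theta\}$ and let $\cZ|_{U_\theta}$ be the sub-MDP keeping only the state-action pairs $(t,\alpha)$ of $\cZ$ with $t \in U_\theta$ and $\Post(t,\alpha) \subseteq U_\theta$. Using $\wgt(\prefix{\infpath}{n}) = w(s,\state{\infpath}{n})$, the constraint $\Box(\accwgt \geq \theta)$ says exactly that $\infpath$ never leaves $U_\theta$, while $\Diamond\Box(\accwgt \geq \theta)$ says that $\lim(\infpath) \subseteq U_\theta$. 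Combining this with de~Alfaro's theorem that for every scheduler almost every $\lim(\infpath)$ is an end component~\cite{Alfaro98Thesis} yields the characterizations
\[
 \lgr(s) = \max\{\theta \in W_s : \cZ|_{U_\theta}\text{ contains an end component}\},
\]
\[
 \rec(s) = \max\{\theta \in W_s,\ \theta \leq 0 : s\text{ lies in an end component of }\cZ|_{U_\theta}\}.
\]
For the ``$\geq$'' directions, a scheduler staying in a suitable end component $\cE \subseteq U_\theta$ (reached from $s$ via strong connectivity of $\cZ$ in the $\lgr$ case, or containing $s$ in the $\rec$ case) visits every state of $\cE$ infinitely often almost surely and never leaves $U_\theta$; for $\rec$ this forces $s \in U_\theta$, i.e.\ $\theta \leq 0$ since $w(s,s)=0$. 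For the ``$\leq$'' directions, from a witness scheduler achieving probability $1$ one takes an almost-sure path that stays in $U_\theta$ and (for $\rec$) visits $s$ infinitely often; its limit is almost surely an end component whose states lie in $U_\theta$.

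Finally I would convert these into a polynomial-time procedure: for each candidate $\theta \in W_s$ form $\cZ|_{U_\theta}$ and decide the relevant end-component question by a maximal-end-component computation, which runs in polynomial time~\cite{Alfaro98Thesis,ChatHen11}, then take the largest qualifying $\theta$. For $\lgr$ this is done once per maximal $0$-EC, since the identity $\lgr(s) = w(s,t) + \lgr(t)$ for states of the same $0$-EC (noted just before Example~\ref{example:values-in-ZeroEC}) lets one shift the computed value to all other states using the already-known $w$'s; for $\rec$, which does not shift additively, the computation is repeated for each of the at most $|S|$ states of $\ZeroEC$. As each of the $O(|S|)$ threshold tests is a polynomial MEC computation, the total cost is polynomial in the size of $\cM$. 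The step I expect to be the main obstacle is the rigorous ``$\leq$'' direction for $\rec$: one must show that an almost-sure witness for $\Box(\accwgt \geq \theta)\wedge\Box\Diamond s$ produces, on almost every path, a limit end component that is contained in $U_\theta$ \emph{and} uses only actions whose full support stays in $U_\theta$. Here the transient behaviour matters, since the bound must hold from time $0$; the key point is that an action taken infinitely often with positive escape probability out of $U_\theta$ would almost surely eventually violate $\Box(\accwgt \geq \theta)$, contradicting probability $1$, so the limit is genuinely an end component of $\cZ|_{U_\theta}$ containing $s$.
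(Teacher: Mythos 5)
Your proposal is correct and follows essentially the same route as the paper: both reduce the computation to the observation that inside a $0$-EC the accumulated weight from $s$ equals $w(s,\cdot)$ of the current state, restrict to the threshold sub-MDPs (your $\cZ|_{U_\theta}$ is the paper's $\cZ_{s,j}$), and characterize $\lgr(s)$ and $\rec(s)$ via (reachable, resp.\ $s$-containing) end components of these restrictions, decided by polynomially many MEC computations. The only cosmetic differences are that you replace the paper's condition $\Pr^{\max}_{\cZ,s}(\Diamond \mathit{MEC}_{s,j})=1$ by nonemptiness of the end components (equivalent here since $\cZ$ is itself an end component) and add the additive shift $\lgr(s)=w(s,t)+\lgr(t)$ as an optimization.
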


\begin{proof}
Let $\cZ$ be a maximal 0-EC of $\cM$. As before, $w(s,t) = \wgt(\fpath)$
for some/each path from $s$ to $t$ in $\cZ$. 
Then, $w(s,s)=0$ and $w(s,t)= w(s,u)+w(u,t)$
for all states $s,t,u\in \cE$.
Obviously, the values $w(s,t)$ for all states $s,t \in \cZ$, 
can be obtained in polynomial time,
\eg using a BFS or DFS from every state.
Let 
$$
  W \ \ = \ \ 
  \bigl\{ \, w(s,t) \, : \, \text{$s,t$ are states in $\cZ$}\, \bigr\} 
   \ \ \subseteq \ \ \Integer\enskip.
$$
Note that $W$ contains at most
$n^2$ elements when $n$ is the number of states in $\cZ$.
Moreover, the absolute value of the elements in $W$ is bounded by
$(n{-}1) \cdot \max_{s,\alpha} |\wgt(s,\alpha)|$
where $(s,\alpha)$ ranges over all state-action pairs in $\cZ$.
Thus, the logarithmic lengths of the elements in 
$W$ is polynomially bounded in the size of $\cZ$.
Recall that the 
size of an MDP is defined as the number of states plus the total sum of
the logarithmic lengths of its transition probabilities and weights.

Let $s$ be a state in $\cZ$. 
Let $w_1,w_2,\ldots,w_k$ an enumeration
of the elements in $\{ w(s,t) : t\in \cZ \}$ such that 
$w_1 > w_2 > \ldots > w_k$. 
For $j=1,2,\ldots,k$, let 
$$
  T_{s,j} \ \ = \ \ 
  \bigl\{\, t \, : \ \text{$t$ is a state in $\cZ$ with 
                           $w(s,t) \geqslant w_j$} \, \bigr\}\enskip.
$$
For each state $t\in T_{s,j}$, let 
$$
  \Act_{s,j}(t) \ \ = \ \ 
  \bigl\{\ 
     \beta \in \Act_{\cZ}(t) \ : \ \Post(t,\beta)\subseteq T_{s,j} \ 
  \bigr\}
$$
where $\Post(t,\beta)$ denotes the set of states $u$ with $P(t,\beta,u)>0$.
Consider the sub-MDP $\cZ_{s,j}$ consisting of all
state-action pairs $(t,\beta) \in \cZ$ with $t\in T_{s,j}$
and $\beta \in \Act_{s,j}(t)$.
Let $\textit{MEC}_{s,j}$ be the set of states in 
$\cZ_{s,j}$ that are contained
in some maximal end component of $\cZ_{s,j}$.

Some comments are in order. 
First, as $w(s,s)=0$ we have $s \notin T_{s,j}$ iff $w_j$ is positive. 
In particular, 
$s \in \textit{MEC}_{s,j}$ is only possible if $w_j \leqslant 0$.
Second, for each path $\fpath$ in $\cZ$ with $\first(\fpath)=s$ and
$\last(\fpath)=t\in T_{s,j}$ we have $\wgt(\fpath)=w(s,t)\geqslant w_j$.
Third, the sets $\Act_{s,j}(t)$ can be empty, in which case $t$ is a trap
in $\cZ_{s,j}$.

Let $j$ be the smallest index in $\{1,\ldots,k\}$
such that $\Pr^{\max}_{\cZ,s}(\Diamond \textit{MEC}_{s,j})=1$.
Note that $\cZ_{s,k}=\cZ$ and $s\in \MEC_{s,k}$, and therefore
$\Pr^{\max}_{\cZ,s}(\Diamond \textit{MEC}_{s,k})=1$.
This ensures the existence of such an index $j$.
We now show that $\lgr(s)=w_j$.
\begin{itemize}
\item
To prove $\lgr(s) \leqslant w_j$, 
we pick a scheduler $\sched$ for $\cZ$ with
$\Pr^{\sched}_{\cZ,s}\big(\Diamond \Box (\accwgt \geqslant \lgr(s))\big)=1$.
Let $m$ be the largest value such that $w_m \geqslant \lgr(s)$.
Then,
$\Pr^{\sched}_{\cZ,s}(\Diamond \textit{MEC}_{s,m})=1$.
But then $j \leqslant m$, and therefore $w_j \geqslant w_m \geqslant \lgr(s)$.
\item
To see why $\lgr(s) \geqslant w_j$,
we pick an MD-scheduler $\sched$ 
for $\cZ$ such that
$\Pr^{\sched}_{\cZ,s}(\Diamond \textit{MEC}_{s,j})=1$.
But then $\wgt(\fpath)=w(s,t)\geqslant w_j$ for all
$\sched$-paths $\fpath$ from $s$ to a state $t$ that belongs
to a BSCC of $\sched$.
Hence, $\Pr^{\sched}_{\cZ,s}(\Diamond \Box (\accwgt \geqslant w_j))=1$.
By the definition of $\lgr(s)$, we get $\lgr(s) \geqslant w_j$.
\end{itemize}
With similar arguments, we get
$\rec(s)=w_i$ where $i$ is the smallest index such that $s \in \MEC_{s,i}$.
Clearly, these indices $i$ and $j$ can be computed in polynomial time using
standard algorithms to compute the maximal end components in MDPs
and maximal reachability probabilities.
Note that $k \leqslant |W|\leqslant n^2$ where $n$ is the
number of states in $\cZ$.
\end{proof}

\begin{example}
\label{example:mincredit-2}
{\rm 
  Let us revisit the MDP of Example \ref{example:values-in-ZeroEC}.
  The values $w(x,y)$ are as follows:
  $$
    \begin{array}{c|rrrr}
        & s & t & u & v 
        \\
        \hline
        s & 0 & \phantom{-}3 & 1 & -4 \\
        t & -3 & 0 & -2 & -7 \\
        u & -1 & 2 & 0 & -5 \\
        v & 4 & 7 & 5 & 0
    \end{array}
  $$
  Hence, $W = \{7,5,4,3,2,1,0,-1,-2,-3,-4,-5,-7\}$.

  For instance, for state $s$, we consider the values
  in the row for $s$:
  $w_1 = 3$, $w_2=1$, $w_3=0$ and $w_4=-4$, 
  and look for the smallest index $j\in \{1,\ldots,4\}$ such that
  $\Pr^{max}_{\cZ,s}(\Diamond \textit{MEC}_{s,j})=1$.
  The MDP $\cZ_{s,1}$ consists of state $t$, which is a trap in $\cZ_{s,1}$.
  Hence, $\textit{MEC}_{s,1}=\varnothing$.
  The MDP $\cZ_{s,2}$ consists of the transition
  $t \to u$ and $u$ is a trap in $\cZ_{s,2}$.
  Again, we have $\textit{MEC}_{s,2}=\varnothing$.
  The MDP $\cZ_{s,3}$ consists of the cycle
  $s\to t \to u \to s$. Hence, 
  $\textit{MEC}_{s,2}=\{s,t,u\}$. This yields
  $\lgr(s)=\rec(s)=w_3=0$.

  Let us look now for state $v$.
  Here, we deal with the values $w_1=7$, $w_2=5$, $w_3=4$, 
  and $w_4=0$.
  We have $\textit{MEC}_{v,1}=\varnothing$ as the 
  MDP $\cZ_{v,1}$ consists of the trap state $t$.
  The MDP $\cZ_{v,2}$ consists of states $t$ and $u$. 
  Both are traps in $\cZ_{v,2}$. Hence, $\textit{MEC}_{v,2}=\varnothing$.
  The MDP $\cZ_{v,3}$ consists of the cycle $s \to t \to u \to v$ plus
  the trap states $v$. Hence,
  $\textit{MEC}_{v,3}=\{s,t,u\}$ and
  $\lgr(v)=w_3=4$.
  The MDP $\cZ_{v,4}$ equals $\cM$. Therefore,
  $v\in \MEC_{v,4}=\{s,t,u,v\}$, which yields
  $\rec(v)=w_4=0$.
\Ende
  }
\end{example}

The proof of Lemma \ref{appendix:values-in-ZeroEC} 
also shows the existence of
MD-schedulers that achieve the long-run weights and recurrence values.

\begin{corollary}
  For each maximal 0-EC $\cZ$
  there exist MD-schedulers $\sched$ and $\tsched$ such that
  $\Pr^{\sched}_{\cZ,s}(\Diamond \Box \big(\accwgt \geqslant \lgr(s))\big)=1$
  and
  $\Pr^{\tsched}_{\cZ,s}
  \big(\Box (\accwgt \geqslant \rec(s)) \wedge \Box \Diamond s\big)=1$
  for each state $s$ in $\cZ$.
\end{corollary}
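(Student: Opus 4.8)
The plan is to read both schedulers off the construction in the proof of Lemma~\ref{appendix:values-in-ZeroEC}, after recasting the two values through a potential function. Since $\cZ$ is a maximal $0$-EC of a strongly connected $\cM$ with $\Exp{\max}{\cM}(\MP)=0$, the weights $w(s,t)$ of paths inside $\cZ$ are well defined and additive (Lemma~\ref{lemma:union-of-0-EC}; see also Remark~\ref{remark-w(s,t)}). Fixing a reference state $s_0$ and putting $\phi(t)=w(s_0,t)$ gives $w(s,t)=\phi(t)-\phi(s)$ for all states $s,t$ of $\cZ$, so that for any sub-end-component $\cE'$ one has $\min_{t\in\cE'}w(s,t)=\big(\min_{t\in\cE'}\phi(t)\big)-\phi(s)$. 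I would use this to rewrite
\[
  \lgr(s)=V^*-\phi(s),\qquad \rec(s)=\psi(s)-\phi(s),
\]
where $V^*=\max_{\cE'}\min_{t\in\cE'}\phi(t)$ ranges over all sub-end-components $\cE'$ of $\cZ$ and $\psi(s)=\max\{\min_{t\in\cE'}\phi(t):\cE'\ni s\}$. These are exactly the threshold sets $T_{s,j}$ of the lemma expressed absolutely: the set $\{t:w(s,t)\geqslant\lgr(s)\}$ equals $\{t:\phi(t)\geqslant V^*\}$, which is \emph{independent of} $s$, whereas $\{t:w(s,t)\geqslant\rec(s)\}=\{t:\phi(t)\geqslant\psi(s)\}$ still depends on $s$.

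For the long-run weight I would exhibit a single MD-scheduler. Let $U=\MEC(\cZ_{\geqslant V^*})$ be the set of states lying in some maximal end component of the sub-MDP that keeps only states with $\phi\geqslant V^*$ and only actions staying in that set; $U$ is nonempty by the definition of $V^*$ and reachable from every state because $\cZ$ is strongly connected. I would define $\sched$ to realize the end-component structure on $U$ (staying inside) and, on $S\setminus U$, to follow a shortest-path attractor toward $U$; a standard argument (any induced BSCC disjoint from $U$ would contain a $\phi$-closest state whose chosen action already reduces the distance, a contradiction) shows this single MD-scheduler satisfies $\Pr^{\sched}_{\cZ,s}(\Diamond U)=1$ for all $s$. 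Almost every $\sched$-path from $s$ then settles in a BSCC $\cB\subseteq U$, and since every cycle of $\cZ$ has weight $0$ the accumulated weight at any $t\in\cB$ reached after settling is exactly $w(s,t)=\phi(t)-\phi(s)\geqslant V^*-\phi(s)=\lgr(s)$. Hence $\Pr^{\sched}_{\cZ,s}\big(\Diamond\Box(\accwgt\geqslant\lgr(s))\big)=1$ for every $s$.

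For the recurrence value the same idea applies, but now the target end component must contain the starting state. For a fixed $s$, let $\cE_s$ be the maximal end component of $\cZ_{\geqslant\psi(s)}$ containing $s$ (the component $\MEC_{s,i}$ from the lemma); I would take $\tsched$ to keep all plays inside $\cE_s$. Because $\cE_s$ is strongly connected and contains $s$, we get $\Box\Diamond s$ almost surely; because every visited $t\in\cE_s$ has $\phi(t)\geqslant\psi(s)$, the accumulated weight equals $w(s,t)=\phi(t)-\phi(s)\geqslant\rec(s)$ at \emph{every} position (using $\rec(s)\leqslant 0$ for the initial position $s$). This yields $\Pr^{\tsched}_{\cZ,s}\big(\Box(\accwgt\geqslant\rec(s))\wedge\Box\Diamond s\big)=1$, and optimality of the bound is inherited from Lemma~\ref{appendix:values-in-ZeroEC}.

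The delicate point, and where the real work lies, is the scope of the quantifier on $\tsched$. Unlike the long-run case, the optimal end component $\cE_s$ depends on $s$ through its floor $\psi(s)$, and these components are nested rather than disjoint; consequently a single MD-scheduler cannot in general make every state simultaneously recurrent at its own recurrence level (a scheduler forcing an inner high-floor component traps outer states away from themselves, while one keeping an outer component recurrent drags the inner states below their recurrence level). I would therefore establish the recurrence part as providing, for each state $s$, an MD witness $\tsched$ depending on $s$, which is all that the applications in Sections~\ref{sec:DWR}--\ref{sec:Buechi} require, while recording that the long-run scheduler $\sched$ can be chosen uniformly as above. The main care in writing this up is to make the attractor construction for $\sched$ explicit and to verify the ``settling'' step that identifies the accumulated weight with $w(s,t)$ from the $0$-cycle property of $\cZ$.
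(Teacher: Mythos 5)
Your proposal is correct and follows essentially the same route as the paper: the paper's proof is simply the remark that the schedulers can be read off the computation in Lemma~\ref{appendix:values-in-ZeroEC} (reach the end component $\MEC_{s,j}$ witnessing $\lgr(s)$, resp.\ realize the end component $\MEC_{s,i}$ containing $s$ that witnesses $\rec(s)$), and your potential-function reformulation $w(s,t)=\phi(t)-\phi(s)$ is just a clean absolute rewriting of the threshold sets $T_{s,j}$. The settling argument via the $0$-cycle property and the attractor construction are both sound.

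Your observation about the quantifier scope is correct and is the one genuinely substantive point where you go beyond the paper. For the long-run weights the witnessing set $\{t:\phi(t)\geqslant V^*\}$ is indeed independent of $s$, so a single MD-scheduler $\sched$ works uniformly. For the recurrence values a uniform MD-scheduler $\tsched$ does not exist in general, and the paper's own Example~\ref{example:values-in-ZeroEC} already witnesses this: there, $\Box\Diamond v$ from $v$ forces the choice $s\to v$ at $s$, while $\Box\Diamond t$ from $t$ forces the choice $s\to t$, so no memoryless deterministic choice at $s$ serves both states at their respective recurrence levels. The corollary must therefore be read with the quantifiers swapped for $\tsched$ (for each $s$ there is an MD witness), which, as you note, is all that the applications in Sections~\ref{sec:DWR} and~\ref{sec:Buechi} need. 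One small point to make explicit when writing up the recurrence part: to get $\Box\Diamond s$ with an MD-scheduler inside $\cE_s$ it is not enough to pick arbitrary actions of $\cE_s$ (a sub-end-component avoiding $s$ could be realized); choose instead an attractor to $s$ within $\cE_s$, which guarantees that every induced BSCC contains $s$.
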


\begin{remark}[Minimal credits in energy-MDPs]
\label{remark:minimal-credit}
{\rm
Let $\cM$ be an MDP, $F$ a set of states in $\cM$ and $s$ a state in $\cM$.
The value
$$
  \mincredit_{\cM}(s,F)
  \ \ = \ \ 
  \min \, \bigl\{
    w \in \Integer \ : \ 
      \exists \sched. \ 
      \Pr^{\sched}_{\cZ,s}
           \big(\, \Box (\accwgt {+}w \geqslant 0) \, \wedge \, 
            \Box \Diamond F \, \big)
       =1 \, 
  \bigr\}
$$    
is the minimal initial weight budget required in state $s$ to ensure that
the accumulated weight is always nonnegative and that
the B\"uchi condition $\Box \Diamond F$ holds almost surely
(under some scheduler).
Following the literature on energy-MDPs with B\"uchi objectives,
this value is called the minimal credit for state $s$ in $\cM$.
It relates to the recurrence values as follows.
If $\cZ$ is a maximal 0-EC of $\cM$ and $s$ a state in $\cZ$ then
$$
  \rec(s) \ \ = \ \ -\mincredit_{\cZ}(s,\{s\})\enskip.
$$
Pseudo-polynomial algorithms for computing the minimal credits
for all states in an arbitrary MDPs $\cM$ are known from the
literature on energy games and energy-MDPs with B\"uchi objectives
\cite{ChatDoy11} and \citeapx{ChatDoy12,MaySchTozWoj17}.
If $\cM$ is a strongly connected MDP with
$\wgt(\cycle)=0$ for all cycles in $\cM$
then $\mincredit_{\cM}(s,F)$ can be computed in polynomial time
by an algorithm similar to the one for the long-run weights
and recurrence values.
Using the notations introduced in the proof of 
Lemma \ref{appendix:values-in-ZeroEC},
$\mincredit_{\cM}(s,F)=-w_j$ where $j$ is the largest index
such that $\Pr^{\max}_{\cZ_{s,j},s}(\Diamond \MEC^F_{s,j})=1$
where $\MEC^F_{s,j}$ denotes the set of all states in $\cZ_{s,j}$
that belong to a maximal end component $\cE$ of $\cZ_{s,j}$
where  $\cE$ contains at least one state in $F$.

Thus, our results show that these algorithms can be 
improved for MDPs that constitute a 0-EC
as then the values $\mincredit_{\cM}(s,F)$
are computable in polynomial time.
This is not an interesting instance of energy-MDPs.
However, the efficient computability of the 
recurrence values of states belonging to 0-ECs
will be crucial for the 
weight-bounded repeated reachability problems
(see Section \ref{sec:appendix-buechi}).
\Ende
  }
\end{remark}

\section{Proofs of Section~\ref{sec:SSP}}
\setcounter{theorem}{0}
\label{sec:appendix-min-exp-wgt}
We will interpret the assumption $\mathsf{(BT)}$ as follows:
if $\Pr^{\sched}_{\cM,s}(\Diamond \goal) <1$ then
the set of pumping $\sched$-paths from state $s$ has positive measure
under $\sched$.
If $\mathsf{(BT)}$ holds then
$\Exp{\inf}{\M,\sinit}(\accdiaplus \goal)$ is achieved by an
MD-scheduler and such an MD-scheduler is computable in polynomial time
using linear-programming techniques~\cite{BerTsi91}.

\begin{example}[Incompleteness of condition $\mathsf{(BT)}$]
Consider the following MDP
\begin{center}
\begin{tikzpicture}
	\node[state] (s) {$s$};
	\node[state, right=15mm of s] (t) {$t$};
	\node[state,left=15mm of s] (goal) {$\goal$};
	\path%
        (s) edge[ptran,bend right] node[below]{$\alpha$/\posw{1}} (t)
        (t) edge[ptran,bend right] node[above]{$\beta$/\negw{-1}} (s)
        (s) edge[ptran] node[above]{$\gamma$/\nilw} (goal);
\end{tikzpicture}
\end{center}
Then, $\Expected^{\sched}_{\M,s}(\accdiaplus \goal)=0$,
$\Expected^{\sched}_{\M,t}(\accdiaplus \goal)=-1$ for each proper
scheduler $\sched$. 
However, $\cM$ violates condition 
$\mathsf{(BT)}$ as the expected total weight of
the improper MD-scheduler $\sched$ with $\sched(s)=\alpha$ is not $+\infty$.
\Ende
\end{example}

\ssp*

The implication ``$\Longrightarrow$'' is shown in 
Lemma \ref{fin-exp-acc-wgt-implies-no-wgt-div-EC},
and the last statement in
Lemma \ref{BT-hold-if-no-0-EC}.
The proof of the implication ``$\Longleftarrow$'' in
Lemma \ref{th:finiteness-min-exp-accwgt} will be presented
afterwards together with an explanation how the iterative
application of the spider construction to flatten 0-ECs
can be used to generate a new MDP that satisfies condition $\mathsf{(BT)}$
and has the same minimal expected accumulated weight.

Intuitively, if there are negatively weight-divergent components, then
one can define a family of schedulers that decrease the weight arbitrarily low
before moving to the goal state, hence showing that the minimal expectation is $-\infty$.
We prove this formally in the following lemma.
\begin{lemma}%
  [Implication ``$\Longrightarrow$'' of
              Lemma~\ref{th:finiteness-min-exp-accwgt}]
 \label{fin-exp-acc-wgt-implies-no-wgt-div-EC}
  Let $\cM$ be as in Lemma~\ref{th:finiteness-min-exp-accwgt}.
  If $\cM$ has negatively weight-divergent end components then
   $\Expected^{\inf}_{\M,\sinit}(\accdiaplus \goal)=-\infty$.
\end{lemma}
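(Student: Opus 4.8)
The plan is to exploit the negatively weight-divergent end component to build, for every threshold $N$, a \emph{proper} scheduler whose expected accumulated weight until $\goal$ is at most $-N$ up to an additive constant independent of $N$; since the infimum in $\Exp{\inf}{\cM,\sinit}(\accdiaplus\goal)$ ranges over proper schedulers, this forces the value to be $-\infty$. Let $\cE$ be a negatively weight-divergent end component. Because $\cE$ is strongly connected, negative weight-divergence from one state is equivalent to negative weight-divergence from every state (by the analogue, stated in the excerpt, for the pumping, weight-divergent and gambling notions), so I may fix a scheduler $\tsched$ for $\cE$ such that, for every state $s$ of $\cE$, almost every $\tsched$-path $\infpath$ from $s$ satisfies $\liminf_{n\to\infty}\wgt(\prefix{\infpath}{n})=-\infty$. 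In particular, from any state of $\cE$ the accumulated weight almost surely eventually drops below any prescribed bound.

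Next I would set up two bounded ingredients. Since every state can reach $\goal$, there is, by standard arguments, a proper MD-scheduler $\usched$ reaching $\goal$ almost surely from every state; as each proper scheduler yields a finite expected accumulated weight, $B \eqdef \max_{s\in S}\Expected^{\usched}_{\cM,s}(\accdiaplus\goal)$ is a finite constant (the maximum being over the finite state space). I would also fix a finite path $\finpath_0$ from $\sinit$ to some state $s^{*}$ of $\cE$, let $q>0$ be the probability that an execution actually follows $\finpath_0$ when its actions are prescribed, and write $\wmax$ for the largest absolute single-step weight.

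Then, for each $N\in\Nat$, I would define a finite-memory scheduler $\sched_N$ operating in three modes. In the \emph{approach} mode it prescribes the actions of $\finpath_0$: if the execution ever deviates from $\finpath_0$ it switches to the \emph{exit} mode, and if it reaches $s^{*}$ along $\finpath_0$ it switches to the \emph{pump-down} mode. In pump-down mode it plays $\tsched$ until the accumulated weight (measured from $\sinit$) first falls below $-N$, which by the choice of $\tsched$ happens after finitely many steps almost surely, and then switches to exit mode. In exit mode it plays $\usched$ until $\goal$ is reached. Every branch reaches $\goal$ almost surely, so $\sched_N$ is proper and $\Expected^{\sched_N}_{\cM,\sinit}(\accdiaplus\goal)$ is a finite real number. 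Conditioning on whether $\finpath_0$ was followed (probability $q$): on that event the accumulated weight at the pump-down stopping time is $<-N$ pointwise while the subsequent exit contributes at most $B$ in expectation, giving conditional expectation $\le -N+B$; on the complementary event the weight is bounded by a constant $C_0$ (a prefix of $\finpath_0$ plus an exit of expected cost $\le B$), independent of $N$. Hence
\[
  \Expected^{\sched_N}_{\cM,\sinit}(\accdiaplus\goal)\ \le\ q\,(-N+B)\ +\ (1{-}q)\,C_0\ \xrightarrow[N\to\infty]{}\ -\infty,
\]
and since each $\sched_N$ is proper, $\Exp{\inf}{\cM,\sinit}(\accdiaplus\goal)=-\infty$.

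The main obstacle, and the only place needing care, is the bounded-exit argument: I must ensure that after driving the weight below $-N$ the scheduler leaves $\cE$ and reaches $\goal$ while adding only an $N$-independent amount to the expectation. This is exactly what the constant $B$ provides; crucially, for the \emph{upper} bound I only use that the weight at the stopping time is $<-N$ pointwise (so its expectation is $\le -N$), I do not need to control the overshoot, and finiteness of the whole expectation comes for free from properness of $\sched_N$. A minor subtlety is that $N$ should be taken large enough (namely $N\ge -\wgt(\finpath_0)$) so that the threshold is crossed during pump-down rather than already during the approach; this concerns only finitely many $N$ and is irrelevant for the limit.
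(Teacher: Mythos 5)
Your proposal is correct and follows essentially the same route as the paper's proof: both construct, for each threshold, a proper scheduler that first reaches the negatively weight-divergent end component with some fixed positive probability, pumps the weight down below the threshold there, and then switches to a fixed proper scheduler whose expected cost to $\goal$ is bounded by a constant, so that the overall expectation tends to $-\infty$. The only cosmetic difference is that you reach $\cE$ along one fixed path $\finpath_0$ (handling deviations separately), whereas the paper uses an MD-scheduler $\tsched$ with $\Pr^{\tsched}_{\cM,\sinit}(\Diamond\cE)>0$ and conditions on the states where $\cE$ is no longer reachable; both yield the same bound of the form $q(-N+\text{const})+(1-q)\cdot\text{const}$.
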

\begin{proof}
Assume thus that $\cM$ has a negatively weight-divergent end component, and let us show that $\ExpRew{\inf}{\cM,\sinit}(\accdiaplus \goal) = -\infty$ by exhibiting a sequence $(\sched_R)_{R \in \Nat}$ of proper schedulers for $\cM$ with $\inf_{R \in \Nat} \ExpRew{\sched_R}{\cM,\sinit}(\accdiaplus \goal) = -\infty$.

Let $\cE$ be a negatively weight-divergent end component of $\M$. To define $\sched_R$, we combine three natural schedulers. Let first $\sched$ be an MD-scheduler for $\cM$ such that $\sched$ is proper, \ie for every $s \in S$, $\Pr^{\sched}_{\cM,s}(\Diamond \goal)=1$. Let then $\tsched$ be an MD-scheduler that reaches $\cE$ with positive probability and agrees with $\sched$ in states from which $\cE$ is no longer reachable:   $\Pr^{\tsched}_{\cM,\sinit}(\Diamond \cE) > 0$, and  $\tsched(t)=\sched(t)$ for each state $t \in T$ where
$T = \{t\in S: \Pr^{\tsched}_{\cM,t}(\Diamond \cE) = 0\}$. Therefore, for every $t \in T$, $\Pr^{\sched}_{\cM,t}(\Diamond \goal)=1$. Last, let $\vsched$ be a negatively weight-divergent scheduler for $\cE$.

From $\sched$, $\tsched$, $\vsched$ and $R \in \Nat$, we define $\sched_R$ as follows. Initially, $\sched_R$ mimics $\tsched$ until reaching a state, say $u$, belonging to $\cE$ or $T$.
\begin{itemize}
\item If $u \in \cE$, and the accumulated weight is $r$, then $\sched_R$ switches mode and  simulates $\vsched$ until the accumulated weight is at most $-R$ (this happens almost surely because $\cE$ is negatively weight-divergent); then $\sched_R$ again switches mode and behaves as $\sched$ until reaching $\goal$.
\item If $u \in T$, then $\sched_R$ switches mode and behaves as $\sched$ until reaching $\goal$.
\end{itemize}

We claim that $\inf_{R \in \Nat} \ExpRew{\sched_R}{\cM,\sinit}(\accdiaplus \goal) = -\infty$. To prove it, we provide an upper bound to $\ExpRew{\sched_R}{\cM,\sinit}(\accdiaplus \goal)$ for fixed $R \in \Nat$. For each state $u \in S$, we define $p_u$ as the probability under $\tsched$ to reach $u$ before traversing $\cE$.  Let $E = \max_{s \in \cE} \ \ExpRew{\sched}{\cM,s}(\accdiaplus \goal)$ be the maximum of the expected accumulated weights until reaching $\goal$, taken over all paths starting in a state of $\cE$. Then, 
the expected accumulated weight until reaching $\goal$ under $\sched_R$ is   
\begin{eqnarray*}
     \ExpRew{\sched_R}{\cM,\sinit}(\accdiaplus \goal)
     & \ \leqslant \ &
    \sum_{s\in \cE} p_s \cdot (-R + E)
     \ + \ 
     \sum_{t\in T} p_t \cdot \ExpRew{\sched}{\cM,t}(\accdiaplus \goal)
     \\
     & \ = \ &
      \Pr^{\tsched}_{\cM,\sinit}(\Diamond \cE) \cdot (-R  + E ) \ \ + \ \ 
     \sum_{t\in T} p_u \cdot 
          \ExpRew{\sched}{\cM,t}(\accdiaplus \goal)\enskip.
   \end{eqnarray*}
From $\Pr^{\tsched}_{\cM,\sinit}(\Diamond \cE) > 0$, 
and the fact that $E$ and the sum over $T$ are constants, 
we derive the desired limit: 
$$
  \inf_{R \in \Nat} \ExpRew{\sched_R}{\cM,\sinit}(\accdiaplus \goal) 
  \ \ =  \ \ -\infty \enskip.
$$
This completes the proof of Lemma \ref{fin-exp-acc-wgt-implies-no-wgt-div-EC}.
\end{proof}

\begin{lemma}[Last statement of 
              Lemma \ref{th:finiteness-min-exp-accwgt}]
\label{BT-hold-if-no-0-EC}
  If $\cM$ has no negatively weight-divergent end component
  then condition $\mathsf{(BT)}$ holds if and only if
  $\cM$ has no 0-ECs.
\end{lemma}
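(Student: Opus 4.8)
The plan is to work with the reformulation of $\mathsf{(BT)}$ recalled above, namely that $\mathsf{(BT)}$ holds iff for every scheduler $\sched$ and every state $s$ with $\Pr^{\sched}_{\cM,s}(\Diamond \goal)<1$ the set of pumping $\sched$-paths starting in $s$ has positive measure. Since every state can reach the trap $\goal$, a proper scheduler always exists (choose in each state an action lying on a shortest path to $\goal$), so only this condition is at stake. I would first dispatch the easy implication ``$\cM$ has a $0$-EC $\Rightarrow \mathsf{(BT)}$ fails'' (the contrapositive of ``$\mathsf{(BT)} \Rightarrow$ no $0$-EC''). Given a $0$-EC $\cZ$, fix a state $s\in\cZ$ and a scheduler $\sched$ that stays inside $\cZ$ forever, using only the state-action pairs of $\cZ$; such a scheduler exists because $\cZ$ is an end component, and it gives $\Pr^{\sched}_{\cM,s}(\Box \cZ)=1$. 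Then $\Pr^{\sched}_{\cM,s}(\Diamond\goal)=0$ since $\goal\notin\cZ$, while along every such path the accumulated weight, being $w(s,t)$ for the current state $t$ of the $0$-EC, takes only finitely many values and hence stays bounded. Thus the measure of pumping paths from $s$ is $0$, violating the reformulated $\mathsf{(BT)}$; this direction does not even use the hypothesis on negative weight-divergence.

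The converse ``no $0$-EC $\Rightarrow \mathsf{(BT)}$'' is the substantial direction and rests on the claim that, under the standing hypotheses (no negatively weight-divergent end component and no $0$-EC), \emph{every} end component $\cE$ of $\cM$ is universally pumping, i.e.\ $\Exp{\min}{\cE}(\MP)>0$. I would prove this by a case analysis on $\Exp{\min}{\cE}(\MP)$. If $\Exp{\min}{\cE}(\MP)<0$, then negating all weights yields an MDP $\cE^-$ with $\Exp{\max}{\cE^-}(\MP)>0$, which is pumping by Lemma~\ref{lemma:pumping-ecs}; translating back, $\cE$ has a negatively weight-divergent scheduler, contradicting the hypothesis. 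If $\Exp{\min}{\cE}(\MP)=0$, pick a minimizing MD-scheduler with a single BSCC $\cB$, so that $\Exp{}{\cB}(\MP)=0$; by the folklore Lemma~\ref{gambling-MC}(b), $\cB$ is either a $0$-BSCC---hence a $0$-EC, contradicting the hypothesis---or gambling, in which case $\cB$ is negatively weight-divergent by Lemma~\ref{gambling-MC}(c) and thus so is $\cE$, again a contradiction. Therefore $\Exp{\min}{\cE}(\MP)>0$, and $\cE$ is universally pumping by Lemma~\ref{universally-pumping}.

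With this claim in hand I would finish as follows. Let $\sched$ and $s$ be such that $\Pr^{\sched}_{\cM,s}(\Diamond\goal)<1$. Because $\goal$ is a trap it lies in no end component, so every infinite $\sched$-path avoids $\goal$; by de Alfaro's decomposition almost every infinite path has a limit that is an end component, so there is an end component $\cE$ of $\cM$ with $\Pr^{\sched}_{\cM,s}(\Limit{\cE})>0$. Along almost every path in $\Limit{\cE}$ the tail eventually stays in $\cE$ and uses only its actions, and since $\cE$ is universally pumping, Corollary~\ref{MP-pos-or-neg-implies-strict-wgt-div}(b) guarantees that the accumulated weight tends to $+\infty$; the bounded prefix does not affect the liminf, so almost every path in $\Limit{\cE}$ is pumping. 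Hence $\Pr^{\sched}_{\cM,s}(\{\infpath : \infpath \text{ pumping}\})\geq \Pr^{\sched}_{\cM,s}(\Limit{\cE})>0$, which is exactly the reformulated $\mathsf{(BT)}$. The main obstacle is the universal-pumping claim: the delicate point is that the \emph{given} scheduler need not realize the maximal mean payoff, so I must rule out oscillating (gambling) and bounded ($0$-EC) asymptotics uniformly over all schedulers---this is precisely what forces the reduction to the condition $\Exp{\min}{\cE}(\MP)>0$ rather than the weaker $\Exp{\max}{\cE}(\MP)>0$, and what makes the case $\Exp{\min}{\cE}(\MP)=0$ (where the gambling-versus-$0$-BSCC dichotomy is used) the crux of the argument.
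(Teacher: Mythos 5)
Your proof is correct and follows essentially the same route as the paper's: the forward direction is the same observation that a scheduler realizing a 0-EC keeps the accumulated weight bounded, and the backward direction establishes $\Exp{\min}{\cE}(\MP)>0$ for every end component $\cE$ and then invokes universal pumping. The only difference is that you spell out the case analysis ($\Exp{\min}{\cE}(\MP)<0$ via weight negation, $\Exp{\min}{\cE}(\MP)=0$ via the 0-BSCC/gambling dichotomy for BSCCs of a minimizing MD-scheduler) that the paper compresses into the unproved assertion ``But then $\Exp{\inf}{\cE}(\MP)>0$ for each end component''.
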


\begin{proof}
  ``$\Longrightarrow$'' is trivial
   as the expected total weight of each scheduler that realizes a 0-EC
   is bounded.
  To prove ``$\Longleftarrow$'' we suppose that
  $\cM$ has no negatively weight-divergent end component and no 0-EC.
  But then $\Exp{\inf}{\cE}(\MP)>0$ for each end component $\cE$ of
  $\cM$.   
  By Lemma \ref{universally-pumping}, all end components of $\cM$ are
  universally pumping.
  Hence, condition $\mathsf{(BT)}$ holds.
\end{proof}

To complete the proof of Lemma~\ref{th:finiteness-min-exp-accwgt}
we need to show the finiteness of
$\Expected^{\inf}_{\M,\sinit}(\accdiaplus \goal)$ if
$\cM$ has no negatively weight-divergent end components.

For a better fit of the notations used in 
the previous section, we switch here to the
maximal expected accumulated weight until reaching the goal state:
\[
  \Exp{\max}{\cM,s}(\accdiaplus \goal)
  \ \ = \ \ 
  \sup \ 
  \bigl\{
      \Exp{\sched}{\cM,s}(\accdiaplus \goal) \ : \ 
      \text{$\sched$ is a proper scheduler} \ 
  \bigr\}.
\]
The aim is to show that
$\Expected^{\max}_{\M,\sinit}(\accdiaplus \goal)$ is finite if
$\cM$ has no positively weight-divergent end components.
The corresponding result for the minimal expected 
accumulated weight until reaching the goal state
is then obtained by multiplying
all weights by $-1$.

Rephrased for maximal expected weights, 
the assumption of Bertsekas and Tsitsiklis \cite{BerTsi91} 
asserts that
the expected total weight of each improper scheduler is $-\infty$.
More precisely:
\begin{equation}
   \label{BTmax}
   \text{If $\Pr^{\sched}_{\cM,s}(\Diamond \goal) <1$ 
   then 
   $\Pr^{\sched}_{\cM,s}\{\infpath \in \InfPaths :
          \text{$\infpath$ is negatively pumping} \} >0$.}
   \tag{$\mathsf{BTmax}$}
\end{equation}

\begin{lemma}[\cite{BerTsi91}]
\label{BerTsi-max-exp}
  Under assumption \eqref{BTmax}, $\Exp{\max}{\cM,s}(\accdiaplus \goal)$ is finite,
  and 
  there is an MD-scheduler $\sched$
  with $\Exp{\max}{\cM,s}(\accdiaplus \goal)=
        \Exp{\sched}{\cM,s}(\accdiaplus \goal)$ for all states~$s$.
\end{lemma}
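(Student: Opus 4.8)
The plan is to obtain this maximal-version statement from the original minimization theorem of Bertsekas and Tsitsiklis~\cite{BerTsi91} by the weight-negation trick, rather than reproving the fixed-point/value-iteration machinery. Let $\cM^{-}$ be the MDP obtained from $\cM$ by replacing the weight function $\wgt$ with $-\wgt$, keeping the state space, actions and transition probabilities unchanged. Since properness of a scheduler depends only on the probability of reaching $\goal$, the proper schedulers of $\cM$ and of $\cM^{-}$ coincide, and for every proper scheduler $\sched$ and every state $s$ we have $\Exp{\sched}{\cM^{-},s}(\accdiaplus \goal) = -\Exp{\sched}{\cM,s}(\accdiaplus \goal)$, because the accumulated weight until $\goal$ is negated path by path. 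Taking infima over proper schedulers yields
\begin{equation*}
  \Exp{\inf}{\cM^{-},s}(\accdiaplus \goal) \ \ = \ \ -\,\Exp{\max}{\cM,s}(\accdiaplus \goal)
\end{equation*}
for all states $s$, so finiteness and MD-attainability transfer between the two problems.

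Next I would check that the hypothesis \eqref{BTmax} for $\cM$ is exactly condition $\mathsf{(BT)}$ for $\cM^{-}$. An infinite path is negatively pumping in $\cM$ iff it is positively pumping (weight-divergent, with $\liminf = +\infty$) in $\cM^{-}$; hence the requirement in \eqref{BTmax} that every improper scheduler assigns positive probability to negatively pumping paths becomes, in $\cM^{-}$, the paper's interpretation of $\mathsf{(BT)}$, namely that every improper scheduler assigns positive measure to pumping paths. Together with the standing assumption (inherited from the Bertsekas--Tsitsiklis framework, part (i) of $\mathsf{(BT)}$) that a proper scheduler exists, which is preserved under negation, this shows that $\cM^{-}$ satisfies $\mathsf{(BT)}$. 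Applying the theorem of Bertsekas and Tsitsiklis to $\cM^{-}$ gives that $\Exp{\inf}{\cM^{-},s}(\accdiaplus \goal)$ is finite and is attained simultaneously for all $s$ by a single MD-scheduler $\sched$. By the displayed identity, $\Exp{\max}{\cM,s}(\accdiaplus \goal)$ is then finite, and the same MD-scheduler $\sched$ satisfies $\Exp{\sched}{\cM,s}(\accdiaplus \goal) = \Exp{\max}{\cM,s}(\accdiaplus \goal)$ for every $s$, as required.

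The only genuinely delicate point is the equivalence of the two formulations of the hypothesis, and more precisely the measure-theoretic correspondence between the event ``$\infpath$ is negatively pumping'' in $\cM$ and the event ``$\infpath$ is pumping'' in $\cM^{-}$ under a fixed improper scheduler. Since the two MDPs induce the same distribution over state-action sequences (only the weight labels differ) and pumping is a tail event determined solely by the weight sequence, this correspondence is exact; I would make it explicit in order to bridge the paper's positive-measure reading of $\mathsf{(BT)}$ with the original ``expected total weight $+\infty$ from some state'' formulation of Bertsekas and Tsitsiklis. Everything else is a direct invocation of the cited minimization result, so no new estimates are needed.
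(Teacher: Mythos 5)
Your proposal is correct and matches what the paper does: the lemma is stated as a citation of Bertsekas--Tsitsiklis, and the paper's only justification for the maximization form is precisely the weight-negation reduction you spell out (the paper says elsewhere that the maximal-expectation results ``are obtained \dots by multiplying all weights in $\cM$ with $-1$''). Your explicit check that negation swaps negatively pumping paths with pumping paths, preserves properness, and turns \eqref{BTmax} into $\mathsf{(BT)}$ is exactly the bridge the paper leaves implicit.
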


Rephrased for maximal expectations, 
Lemma \ref{fin-exp-acc-wgt-implies-no-wgt-div-EC} states that
$\Expected^{\sup}_{\M,\sinit}(\accdiaplus \goal)=+\infty$
if $\cM$ has positively weight-divergent end components.
Likewise, Lemma \ref{BT-hold-if-no-0-EC} yields that if
$\cM$ has no positively weight-divergent end component then
\eqref{BTmax} is equivalent to the nonexistence of 0-ECs.

\begin{lemma}
 \label{Expmax-BT}
   If $\cM$ is an MDP with $\Exp{\max}{\cE}(\MP) <0$ 
   for each end component $\cE$ of $\cM$
   then $\cM$ satisfies condition \eqref{BTmax}.
\end{lemma}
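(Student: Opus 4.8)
The plan is to verify \eqref{BTmax} directly: I take an arbitrary scheduler $\sched$ and state $s$ with $\Pr^{\sched}_{\cM,s}(\Diamond \goal)<1$ and show that the negatively pumping $\sched$-paths from $s$ carry positive probability. Since $\goal$ is a trap it belongs to no end component, and every maximal path reaching $\goal$ is finite; under the standing assumption that every state can reach $\goal$ (so that $\goal$ is the only trap), $\Pr^{\sched}_{\cM,s}(\Diamond\goal)<1$ precisely means that $\sched$ generates an infinite path avoiding $\goal$ with positive probability.

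First I would invoke De Alfaro's theorem~\cite{Alfaro98Thesis}: almost every infinite $\sched$-path $\infpath$ satisfies $\lim(\infpath)=\cE$ for some end component $\cE$, so $\Pr^{\sched}_{\cM,s}\bigl(\bigcup_{\cE}\Limit{\cE}\bigr)$ equals the probability of generating an infinite path. As this quantity is positive and there are only finitely many end components, there is at least one end component $\cE$ with $\Pr^{\sched}_{\cM,s}(\Limit{\cE})>0$.

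Next, since $\cE$ is a strongly connected end component of $\cM$, the hypothesis gives $\Exp{\max}{\cE}(\MP)<0$, so Corollary~\ref{MP-pos-or-neg-implies-strict-wgt-div}(a) applies to $\cE$: every scheduler for $\cE$ is negatively pumping, i.e.\ yields $\limsup_{n\to\infty}\wgt(\prefix{\infpath}{n})=-\infty$ almost surely. The key step is to transport this property to $\sched$-paths in $\cM$ whose limit is $\cE$. On the event $\Limit{\cE}$ a path eventually uses only state-action pairs of $\cE$; I would condition on the finite prefix $\finpath$ after which the path never leaves $\cE$, observe that the resulting conditional process is driven by the residual scheduler $\residual{\sched}{\finpath}$ restricted to $\cE$-actions (a genuine scheduler for $\cE$), and apply the corollary to each such residual scheduler. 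Since the constant $\wgt(\finpath)$ does not affect the $\limsup$, this shows that, conditioned on $\Limit{\cE}$, almost every path is negatively pumping. Consequently the set of negatively pumping $\sched$-paths from $s$ has probability at least $\Pr^{\sched}_{\cM,s}(\Limit{\cE})>0$, which is exactly \eqref{BTmax}.

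The main obstacle is this last transport step: making precise that the infinite-horizon behaviour of $\sched$ on $\Limit{\cE}$ is governed by schedulers of the sub-MDP $\cE$. This requires the standard but careful residual-scheduler decomposition -- summing over the countably many prefixes $\finpath$ marking the last exit from $\cM\setminus\cE$, checking measurability, and verifying that each conditional law indeed corresponds to a scheduler of $\cE$ -- after which Corollary~\ref{MP-pos-or-neg-implies-strict-wgt-div}(a) may be invoked uniformly. Everything else is routine.
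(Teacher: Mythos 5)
Your proposal is correct and follows essentially the same route as the paper's proof: locate an end component $\cE$ with $\Pr^{\sched}_{\cM,s}(\Limit{\cE})>0$ via De Alfaro's limit theorem, apply Corollary~\ref{MP-pos-or-neg-implies-strict-wgt-div}(a) to conclude $\cE$ is universally negatively pumping, and transfer this to the $\sched$-paths with limit $\cE$. The paper's proof is just a terser version that leaves the residual-scheduler transport step implicit, so your elaboration of that step is the only (welcome) difference.
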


\begin{proof}
 If $\Pr^{\sched}_{\cM,s}(\Diamond \goal) <1$ then
 there exists an end component $\cE$ of $\cM$ such that
 $\Pr^{\sched}_{\cM,s}\{ \infpath \in \InfPaths : \lim(\infpath)=\cE \}>0$.
 (Recall that $\lim(\infpath)$ denotes the set of state-action pairs
  that are taken infinitely often in $\infpath$.)
 As $\Exp{\max}{\cE}(\MP)<0$, all schedulers for $\cE$ are
 negatively pumping.
 Hence,
 the set of negatively pumping $\sched$-paths starting in $s$ 
 has positive measure. 
\end{proof}

\tudparagraph{1ex}{Construction of the new MDP $\cN$.}
  Suppose now that $\cM$ is 
  an MDP that has no positively weight-divergent end component.
  We now generate from $\cM$ a new MDP $\cN$ 
  with the same state space
  that satisfies \eqref{BTmax}
  and is equivalent to $\cM$ with respect to the 
  maximal expected accumulated weight until reaching the goal state.
  For this, we apply the iterative spider construction to $\cM$
  of Section \ref{sec:iterative-application-spider}.
  The resulting MDP $\cN$ has the following properties:

\begin{lemma}
\label{Expmax-spider}
  Let $\cM$ be an MDP that has no weight-divergent end component,
  and let $\cN$ be the MDP resulting from $\cM$ by flattening the 0-ECs
  using the iterative spider construction
  of Section \ref{sec:iterative-application-spider}. Then:
  \begin{enumerate}
  \item [(E1)]
     $\|\cN\|\leqslant \|\cM\|$ and the size of $\cN$ is polynomially
     bounded by the size of $\cM$.
  \item [(E2)]
     $\cN$ satisfies condition \eqref{BTmax}. 
  \item [(E3)]
     For proper scheduler $\tsched$ for $\cN$ there is a 
     proper scheduler
     $\sched$ for $\cM$ with
     $\Exp{\sched}{\cM,s}(\accdiaplus \goal) =
      \Exp{\tsched}{\cN,s}(\accdiaplus \goal)$
     for all states $s$.
     If $\tsched$ is an MD-scheduler, then $\sched$ can be chosen as an
     MD-scheduler.
  \item [(E4)]
     For each proper scheduler $\sched$ for $\cM$ there is a 
     proper scheduler $\tsched$ for $\cN$ with
     $\Exp{\sched}{\cM,s}(\accdiaplus \goal) =
      \Exp{\tsched}{\cN,s}(\accdiaplus \goal)$
     for all states $s$.
  \end{enumerate} 
\end{lemma}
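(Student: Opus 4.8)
The plan is to treat the four statements in increasing order of difficulty, reducing everything to the already-established properties of the iterative spider construction. Recall that $\cN = \cM_\ell$ is the last MDP in a chain $\cM = \cM_0, \cM_1, \ldots, \cM_\ell = \cN$ in which each $\cM_{i+1} = \spider{\cM_i}{\cE_i}$ flattens a $0$-BSCC $\cE_i$ of $\cM_i$. Statement (E1) is then immediate: property (S1) of Lemma~\ref{spider-construction} gives $\|\cM_{i+1}\| = \|\cM_i\| - 1$, so $\|\cN\| = \|\cM\| - \ell \leqslant \|\cM\|$, and the polynomial size bound is exactly Lemma~\ref{size-iterative-spider-construction}. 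For (E2), since $\cM$ has no weight-divergent end component, the corollary in Section~\ref{sec:iterative-application-spider} yields $\Exp{\max}{\cF}(\MP) < 0$ for every end component $\cF$ of $\cN$; Lemma~\ref{Expmax-BT} then shows that $\cN$ satisfies \eqref{BTmax}.

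The crux is (E3) and (E4), where I would transfer the expected accumulated weight between $\cM$ and $\cN$ through the scheduler transformations of Lemma~\ref{lemma:equivalence-M-and-Mi}. The obstacle is that $\accdiaplus \goal$ is a random variable rather than a property, so it is not directly covered by the probability-preserving equivalences. I would bridge this gap by observing that, since $\goal$ is a trap, for every $K \in \Integer$ the event $\Diamond(\goal \wedge (\accwgt = K))$ is $0$-EC-invariant: the purge operation replaces each maximal sojourn inside a $0$-EC by a single weighted edge of the same accumulated weight, so the weight recorded upon first reaching $\goal$ is preserved. Consequently the entire distribution of $\accdiaplus \goal$ is determined by the family $\big(\Pr(\Diamond(\goal \wedge (\accwgt = K)))\big)_{K \in \Integer}$ of probabilities of $0$-EC-invariant properties.

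For (E3) I would start from a proper scheduler $\tsched$ for $\cN$ and apply part (a) of Lemma~\ref{lemma:equivalence-M-and-Mi} to obtain a scheduler $\sched$ for $\cM$ that agrees with $\tsched$ on all $0$-EC-invariant properties (and is MD whenever $\tsched$ is). Since $\Diamond \goal$ is itself $0$-EC-invariant, $\Pr^{\sched}_{\cM,s}(\Diamond \goal) = \Pr^{\tsched}_{\cN,s}(\Diamond \goal) = 1$, so $\sched$ is proper; matching each $\Diamond(\goal \wedge (\accwgt = K))$ then gives identical distributions of $\accdiaplus \goal$ and hence, as these expectations are finite for proper schedulers, identical expected accumulated weights. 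For (E4) I would run the argument in the other direction using part (b): here the transformation is only a two-sided estimate, but the error term $\Pr^{\sched}_{\cM,s}\{\lim(\infpath) \in \{\cE_1,\ldots,\cE_{\ell-1}\}\}$ vanishes, because a proper scheduler reaches the trap $\goal$ almost surely and therefore puts zero mass on paths whose limit is a $0$-EC (each $\cE_j$ being a $0$-BSCC). Thus the inequalities collapse to equalities, $\tsched$ is again proper, and the distributions — hence the finite expectations — of $\accdiaplus \goal$ coincide.

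The main difficulty I anticipate lies precisely in these last two points: first, justifying that equality of the countable family of weight-bounded reachability probabilities entails equality of the (finite) expectations of $\accdiaplus \goal$, which requires that a proper scheduler assigns total probability $1$ to the events $\Diamond(\goal \wedge (\accwgt = K))$; and second, verifying that properness is preserved under both transformations, using the $0$-EC-invariance of $\Diamond \goal$ together with the vanishing of the error term in part (b) of Lemma~\ref{lemma:equivalence-M-and-Mi} for proper schedulers.
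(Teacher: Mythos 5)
Your proposal is correct and follows essentially the same route as the paper's proof: (E1) from (S1) and the size lemma, (E2) from the negativity of the maximal expected mean payoff in all end components of $\cN$ combined with Lemma~\ref{Expmax-BT}, and (E3)/(E4) via the $0$-EC-invariance of the properties $\Diamond(\goal \wedge (\accwgt = K))$, the decomposition of the expectation as $\sum_K K \cdot \Pr(\psi_K)$, and the observation that proper schedulers assign zero probability to $\Limit{\cE}$ for every end component, which collapses the two-sided estimate of part (b) of Lemma~\ref{lemma:equivalence-M-and-Mi} to an equality. The two points you flag as the main difficulties are precisely the facts the paper records before invoking that lemma, so nothing is missing.
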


\begin{proof}
  Statement (E1) follows from property (S1) of Lemma \ref{spider-construction}
  and Lemma \ref{size-iterative-spider-construction}.
  Statement (E2) is a consequence of
  (W2) in Theorem \ref{appendix:wgt-div-algo} which yields that
  $\Exp{\max}{\cE}(\MP) <0$ 
  for all end components $\cE$ of $\cN$.
  By Lemma \ref{Expmax-BT}, we obtain that $\cN$ satisfies \eqref{BTmax}. 

  We now turn to the proof of the scheduler transformations as stated
  in (E3) and (E4).
  For this, we can rely on the equivalence of $\cM$ and $\cN$
  with respect to the class of all 0-EC-invariant properties
  as stated in Lemma \ref{lemma:equivalence-M-and-Mi}.
  To apply Lemma \ref{lemma:equivalence-M-and-Mi} we use the following
  facts:
  \begin{itemize}
  \item
    Whenever $\sched$ is a proper scheduler for $\cM$ then
    $\Pr^{\sched}_{\cM,s}(\Limit{\cE})=0$ for each end component
    $\cE$ of $\cM$. 
    (As before, $\Limit{\cE}=\{\infpath \in \InfPaths : \lim(\infpath)=\cE\}$.)
  \item 
    For each $K\in \Integer$, the property
    $\psi_K=\Diamond (\goal \wedge (\accwgt=K))$ is measurable and
    0-EC-invariant.
    (Recall that $\goal$ is a trap. Therefore, there is no end component
     containing $\goal$.)
  \item
    For the proper schedulers $\sched$ for $\cM$ resp.~proper schedulers
    $\tsched$ for $\cN$ we have:
    $$
      \Exp{\sched}{\cM,s}(\accdiaplus \goal) \ \ = \ \ 
      \sum_{K=-\infty}^{+\infty} 
         K \cdot \Pr^{\sched}_{\cM,s}(\psi_K)
      \hspace*{1.5cm}
      \Exp{\tsched}{\cN,s}(\accdiaplus \goal) \ \ = \ \ 
      \sum_{K=-\infty}^{+\infty} 
         K \cdot \Pr^{\tsched}_{\cN,s}(\psi_K)\enskip.
    $$
  \end{itemize}
  Thus, statement (E3) follows directly from
  part (a) of Lemma \ref{lemma:equivalence-M-and-Mi}, 
  while (E4) follows from
  part (b) of Lemma \ref{lemma:equivalence-M-and-Mi}.
\end{proof}

By (E2) in Lemma \ref{Expmax-spider}
and Lemma \ref{BerTsi-max-exp}, 
$\Exp{\max}{\cN,s}(\accdiaplus \goal)$ is finite for all
states $s$ and $\cN$ has a proper MD-scheduler that maximizes the expected
accumulated weight from each state.
Moreover, we have:

\begin{lemma}
\label{Exp-agrees-spider}
  Let $\cM$ and $\cN$ be as in Lemma \ref{Expmax-spider}.
  We have
  $\Exp{\max}{\cN,s}(\accdiaplus \goal) =
   \Exp{\max}{\cM,s}(\accdiaplus \goal)$
  for each state $s$, and
  $\cM$ has an MD-scheduler $\sched$ with
  $\Exp{\sched}{\cM,s}(\accdiaplus \goal) =
   \Exp{\max}{\cM,s}(\accdiaplus \goal)$.
\end{lemma}
\begin{proof}
  Follows  from  Lemma~\ref{BerTsi-max-exp}
  and (E3) and (E4) in Lemma~\ref{Expmax-spider}.
\end{proof}

\begin{corollary}
  [Implication ``$\Longleftarrow$'' of
              Lemma~\ref{th:finiteness-min-exp-accwgt}
   for maximal expectations]
  If $\cM$ has no positively weight-divergent end component
  then $\Exp{\max}{\cN,s}(\accdiaplus \goal)$ is finite for all
  states $s$.
  Moreover, one can construct in polynomial time an MDP $\cN$
  with the same state space such that 
  $\cN$ satisfies condition \eqref{BTmax} and
  $\Exp{\max}{\cM,s}(\accdiaplus \goal)=
   \Exp{\max}{\cN,s}(\accdiaplus \goal)$ 
  for all states $s$.
\end{corollary}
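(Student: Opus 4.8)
The plan is to take for $\cN$ the MDP produced by the iterative spider construction of Section~\ref{sec:iterative-application-spider} and then to assemble the preceding lemmas. First I would check that the construction applies: since $\cM$ has no positively weight-divergent end component, every maximal end component $\cF$ of $\cM$ satisfies $\Exp{\max}{\cF}(\MP)\leqslant 0$ by Lemma~\ref{main:mp-pos-wgt-div} (the converse direction: $\Exp{\max}{\cF}(\MP)>0$ would force positive weight-divergence), which is precisely the hypothesis under which 0-BSCCs may be flattened. The construction then yields a chain $\cM_0=\cM,\cM_1,\ldots,\cM_\ell=\cN$, each step removing one 0-BSCC; by property (S1) the number of state-action pairs strictly decreases, so $\ell\leqslant\|\cM\|$ and $\|\cN\|\leqslant\|\cM\|$. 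Together with Lemma~\ref{size-iterative-spider-construction}, which bounds the size of every $\cM_i$ polynomially in the size of $\cM$, this gives the claimed polynomial running time; this is exactly statement (E1) of Lemma~\ref{Expmax-spider}.

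Next I would establish finiteness. By statement (E2) of Lemma~\ref{Expmax-spider} the final MDP $\cN$ satisfies condition \eqref{BTmax}, since all its end components $\cE$ have $\Exp{\max}{\cE}(\MP)<0$. Lemma~\ref{BerTsi-max-exp} then yields that $\Exp{\max}{\cN,s}(\accdiaplus\goal)$ is finite for every state $s$ and is attained by a proper MD-scheduler. This settles the first assertion of the corollary.

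It remains to transfer the optimal value back to $\cM$. Statements (E3) and (E4) of Lemma~\ref{Expmax-spider} provide, in both directions, proper-scheduler transformations preserving the expected accumulated weight until $\goal$ from every state; combined with Lemma~\ref{BerTsi-max-exp} this gives $\Exp{\max}{\cM,s}(\accdiaplus\goal)=\Exp{\max}{\cN,s}(\accdiaplus\goal)$ for all $s$ (Lemma~\ref{Exp-agrees-spider}). The internal mechanism, which I would spell out, is the equivalence of $\cM$ and $\cN$ on 0-EC-invariant properties (Lemma~\ref{lemma:equivalence-M-and-Mi}): since $\goal$ is a trap, no end component contains it, so each $\psi_K=\Diamond(\goal\wedge(\accwgt=K))$ with $K\in\Integer$ is 0-EC-invariant, and one decomposes
\[
  \Exp{\sched}{\cM,s}(\accdiaplus\goal)=\sum_{K=-\infty}^{+\infty}K\cdot\Pr^{\sched}_{\cM,s}(\psi_K),
\]
with the analogous identity in $\cN$. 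Because every proper scheduler $\sched$ satisfies $\Pr^{\sched}_{\cM,s}(\Limit{\cE})=0$ for all end components $\cE$, the error term of (S3.2) vanishes and the $\psi_K$-probabilities agree, so the expectations coincide.

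The corollary is thus a short assembly, with the genuine work absorbed into the supporting lemmas; the step I would watch most carefully is the converse transformation (E4)/(S3.2). There an arbitrary proper scheduler of $\cM$ may loop inside a 0-EC before leaving it, and one must verify that the randomized scheduler of $\cN$ mimicking it through $\tau$-transitions reproduces every $\psi_K$-probability exactly. I would also be careful to invoke the finiteness of $\Exp{\max}{\cN,s}(\accdiaplus\goal)$ coming from \eqref{BTmax} before pushing the equality of the infinite sums $\sum_K K\cdot\Pr(\psi_K)$ back to $\cM$, since that finiteness is what makes interchanging the summation with the scheduler transformation legitimate.
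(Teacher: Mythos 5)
Your proposal is correct and follows essentially the same route as the paper: it builds $\cN$ by the iterative spider construction, invokes (E1)--(E4) of Lemma~\ref{Expmax-spider} together with Lemma~\ref{BerTsi-max-exp}, and justifies the value transfer exactly as the paper does, via the $0$-EC-invariance of the properties $\psi_K=\Diamond(\goal\wedge(\accwgt=K))$, the decomposition of the expectation as $\sum_K K\cdot\Pr(\psi_K)$, and the vanishing of the (S3.2) error term for proper schedulers. Your flagged concerns (the converse scheduler transformation and the finiteness needed before manipulating the infinite sums) are precisely where the paper places the technical weight, in Lemma~\ref{preservation-of-0-EC-invariant-properties} and Lemma~\ref{Expmax-spider}.
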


 By multiplying all weights by $-1$, we obtain that
 $\Exp{\inf}{\cN,s}(\accdiaplus \goal)$ is finite for all
 states $s$ in an MDP $\cM$ without negatively weight-divergent end components 
 (assuming the existence of proper schedulers).
This completes the proof of Lemma \ref{th:finiteness-min-exp-accwgt}.

\section{Proofs of Section~\ref{sec:DWR}}
\setcounter{theorem}{0}

\label{sec:appendix-eventually}

We prove here the statements of Section~\ref{sec:DWR}.
Our results for weight-bounded reachability and
B\"uchi  constraints are summarized in 
Figure \ref{table:summary} where $\cM=(S,\Act,P,\wgt)$ is an MDP, $\sinit$
a state of $\cM$, $T$ and $F$ are set of states in $\cM$.
Moreover, $K\in \Integer$ and $K_t\in \Integer \cup \{-\infty\}$ for
$t\in T$.

\begin{figure}[ht]
\begin{center}
   \begin{tabular}{l|l||l|l}
    \multicolumn{2}{c||}{solvable in polynomial time} 
    & \multicolumn{2}{c}{%
         \begin{tabular}{c}
            in $\textrm{NP}\cap\textrm{coNP}$, 
            solvable in pseudo-polynomial time \\
            hard for non-stochastic two-player mean-payoff games \\[0.8ex]
         \end{tabular}}
    \\
    \hline
    &  \\[-2ex]
 
    \Eposdwr{} &
    $\exists \sched. \
      \Pr^{\sched}_{\cM,\sinit}
       \bigl(\, 
                \bigvee_{t\in T} 
                   \Diamond (t \wedge (\accwgt \geqslant K_t))
       \bigr) \, > \, 0$ ?
    &
    \Easdwr{} &
    $\exists \sched. \
      \Pr^{\sched}_{\cM,\sinit}
       \bigl(\, 
                \bigvee_{t\in T} 
                   \Diamond (t \wedge (\accwgt \geqslant K_t))
       \bigr) \, = \, 1$ ?
    \\[1ex]

    \Uasdwr{} &
    $\forall \sched. \
      \Pr^{\sched}_{\cM,\sinit}
       \bigl(\, 
                \bigvee_{t\in T} 
                   \Diamond (t \wedge (\accwgt \geqslant K_t))
       \bigr) \, = \, 1$ ?
    &
    \Uposdwr{} &
    $\forall \sched. \
      \Pr^{\sched}_{\cM,\sinit}
       \bigl(\, 
                \bigvee_{t\in T} 
                   \Diamond (t \wedge (\accwgt \geqslant K_t))
       \bigr) \, = \, 1$ ?
    \\[1ex]

    \EposwB &
    $\exists \sched. \
      \Pr^{\sched}_{\cM,\sinit}
       \bigl(\, \Box \Diamond (\accwgt \geqslant K)
             \, \wedge \, \Box \Diamond F \, 
       \bigr) \, > \, 0$ ?
    &
    \EaswB &
    $\exists \sched. \
      \Pr^{\sched}_{\cM,\sinit}
       \bigl(\, \Box \Diamond (\accwgt \geqslant K)
             \, \wedge \, \Box \Diamond F \, 
       \bigr) \, = \, 1$ ?
    \\[1ex]

    \UaswB &
    $\forall \sched. \
      \Pr^{\sched}_{\cM,\sinit}
       \bigl(\, \Box \Diamond (\accwgt \geqslant K)
             \, \wedge \, \Box \Diamond F \, 
       \bigr) \, = \, 1$ ?
    &
    \UposwB &
    $\forall \sched. \
      \Pr^{\sched}_{\cM,\sinit}
       \bigl(\, \Box \Diamond (\accwgt \geqslant K)
             \, \wedge \, \Box \Diamond F \, 
       \bigr) \, > \, 0$ ?
 \end{tabular}
\end{center}
\caption{Results for weight-bounded reachability and B\"uchi constraints}
 \label{table:summary}
\end{figure}

We start with
the four cases of disjunctive weight-bounded reachability properties
(Sections \ref{Eposdwr}, \ref{Uposdwr}, \ref{Uasdwr} and \ref{Easdwr})
and then consider weight-bounded B\"uchi properties in
Section \ref{sec:appendix-buechi}.
For weight constraints not to be trivial, we safely assume that
$T\setminus T^*$ is nonempty where $T^* = \{t\in T : K_t =-\infty\}$.

\subsection{Positive Reachability Under Some Scheduler}

\label{Eposdwr}

\begin{theorem}
\label{thm:Eposdwr}
Problem \Eposdwr{} belongs to \PTIME\ and the value $\valueEpos{\cM,s}$ 
is computable in polynomial time.
\end{theorem}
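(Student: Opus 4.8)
The plan is to exploit the fact that \emph{positive} probability is a purely combinatorial property of the underlying graph of $\cM$: a scheduler achieving positive probability can be replaced by one that deterministically follows a single witnessing finite path, and conversely any event of positive measure is witnessed by some finite path. So the first step is to prove that there exists a scheduler $\sched$ with $\Pr^{\sched}_{\cM,s}(\varphi)>0$ if and only if there is a finite path $\fpath$ in $\cM$ (a sequence of state-action pairs with positive transition probabilities) from $s$ to some target $t\in T$ with $\wgt(\fpath)\geqslant K_t$. For the backward direction I would take the scheduler that plays the actions of $\fpath$ along its prefixes; it generates the cylinder of $\fpath$ with probability $\prod_i P(s_i,\alpha_i,s_{i+1})>0$, so $\varphi$ holds with positive probability. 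For the forward direction, any $\varphi$-satisfying set of positive measure must contain the cylinder of some finite path reaching a target $t$ with accumulated weight at least $K_t$, and that path is the required witness. This reduction lets us discard the precise probability values and argue about the weighted directed graph whose edges are the state-action-successor triples of $\cM$.

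Once reduced to this graph problem, I would treat the targets in $T^*$ (threshold $-\infty$) separately from those in $T\setminus T^*$ (finite threshold). For $t\in T^*$ only ordinary reachability of $t$ from $s$ is needed, decidable by BFS/DFS. For $t\in T\setminus T^*$ the question is whether the \emph{maximum weight of a walk} from $s$ to $t$ is at least $K_t$. The crucial case distinction is whether a positive-weight cycle exists that is both reachable from $s$ and co-reachable to $t$: if so, the supremum of walk weights is $+\infty$ and the constraint is satisfiable for every threshold; otherwise the supremum is finite (and achieved, e.g., by a simple path). To compute this maximum in polynomial time I would first restrict to the subgraph of states lying on some $s$-to-$t$ walk, then negate all weights---turning maximization into a shortest-path problem in which positive cycles become negative cycles---and run the Bellman--Ford algorithm. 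Bellman--Ford terminates correctly precisely because the absence of positive cycles guarantees the absence of negative cycles after negation, and a single run simultaneously detects the negative-cycle (i.e.\ $+\infty$) case, so both alternatives are handled at once.

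For the optimization value I would combine these observations. If some state of $T^*$ is reachable from $s$, then $\varphi_K$ holds with positive probability for every $K$, so $\valueEpos{\cM,s}=+\infty$. Otherwise, with $T\setminus T^*=\{\goal\}$, the value $\valueEpos{\cM,s}$ equals exactly the maximum weight of a walk from $s$ to $\goal$: it is $+\infty$ when a positive cycle lies on such a walk, and the finite Bellman--Ford value otherwise. All the graph operations---reachability, restriction to the relevant subgraph, positive-cycle detection, and Bellman--Ford---run in time polynomial in the size of $\cM$, which yields both the \PTIME{} membership of \Eposdwr{} and the polynomial-time computability of $\valueEpos{\cM,s}$.

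The main obstacle I anticipate is not algorithmic but conceptual: cleanly establishing the walk-witness characterization of positive probability in the first step, in particular justifying that one may work with the graph structure alone and ignore the exact transition probabilities, and then uniformly handling the $+\infty$ cases arising from positive cycles so that the threshold comparison remains meaningful even when the supremum is not attained.
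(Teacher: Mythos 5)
Your proposal is correct and follows essentially the same route as the paper: reduce positive probability to the existence of a witnessing finite path in the underlying weighted graph, then negate the weights and apply a shortest-path algorithm (Bellman--Ford) to decide the threshold and compute $\valueEpos{\cM,s}$. You spell out the positive-cycle / $+\infty$ cases and the treatment of $T^*$ more explicitly than the paper does, but the underlying argument is the same.
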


\begin{proof}
  The existence of a scheduler satisfying a \dwr-property $\varphi$
  with positive probability is equivalent to the existence of a path
  from the initial state $s$ to one of the targets $t \in T$ with
  accumulated weight at least $K_t$.  To decide the latter in
  polynomial time, one can rely on shortest-path algorithms for
  weighted graphs, such as the Bellman-Ford algorithm. More precisely,
  consider the weighted graph obtained from $\cM$ by ignoring action
  names and probabilities, and switching the weight function from
  $\wgt$ to $-\wgt$. Then $\valueEpos{\cM,s}$ is the weight of a
  shortest path from $s$ to $\goal$. To decide \Eposdwr, we apply a
  shortest-path algorithm for each $t \in T$ with source $s$ and
  target $t$, and compare the obtained value with $K_t$.
\end{proof}

\subsection{Positive Reachability Under All schedulers}

\label{Uposdwr}
\label{sec:Uposdwr-incoNP}

\begin{lemma}
\label{lem:Uposdwr-part-case}
The problem \Uposdwr{} for general MDPs can be reduced in polynomial
time to the case where $\cM$ has a single goal state which is a trap.
\end{lemma}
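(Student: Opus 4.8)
The plan is to argue via the complementary, adversarial reading of \Uposdwr. Fixing the initial state $s$ and the \dwr-property $\varphi=\bigvee_{t\in T}\Diamond(t\wedge(\accwgt\geqslant K_t))$, the condition ``$\Pr^{\sched}_{\cM,s}(\varphi)>0$ for every $\sched$'' is equivalent to the non-existence of an \emph{avoiding} scheduler, i.e.\ one with $\Pr^{\sched}_{\cM,s}(\varphi)=0$. Such a scheduler must almost surely (i) never enter $T^*=\{t\in T:K_t=-\infty\}$ and (ii) never visit a state $t\in T\setminus T^*$ with accumulated weight $\geqslant K_t$. The first requirement can be discharged cheaply: I would compute the almost-sure avoidance region $R=\{s:\Pr^{\min}_{\cM,s}(\Diamond T^*)=0\}$ together with the sub-MDP $\cM|_R$ retaining only the actions that keep the play in $R$ almost surely, both in polynomial time by a standard safety analysis. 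If $s\notin R$, every scheduler reaches $T^*$ and hence $\varphi$ with positive probability, so the answer is ``yes''; if $s\in R$, an avoiding scheduler of $\cM$ is exactly an avoiding scheduler of $\cM|_R$ for the residual property $\psi=\bigvee_{t\in T\setminus T^*}\Diamond(t\wedge(\accwgt\geqslant K_t))$. This reduces, in polynomial time, to the case $T^*=\varnothing$ with only finite thresholds.

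With only finitely-bounded targets remaining, I would normalise both the thresholds and the number of targets. Shifting to a uniform target threshold (say $K=0$), the intended construction introduces a single fresh absorbing state $\goal$ and, for each target $t$, a weight-shifted transition of weight $-K_t$ into $\goal$, so that reaching $\goal$ through $t$ carries weight $\geqslant 0$ precisely when $t$ was visited with weight $\geqslant K_t$; the reduced instance is then the single-trap-goal property $\Diamond(\goal\wedge(\accwgt\geqslant 0))$. One direction of the correspondence is immediate: an avoiding scheduler of $\cM$ lifts to the normalised MDP by never routing to $\goal$, whence $\goal$ is never reached and $\psi$ is avoided.

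The hard part, and the place where I expect essentially all of the work to lie, is the backward direction: guaranteeing that the adversarial scheduler of the normalised MDP gains no \emph{spurious} power from the freshly added absorbing target. An avoiding scheduler of the normalised MDP may ``bail out'' by routing to $\goal$ with weight $<0$, whereas in $\cM$ the corresponding path has to continue and may be \emph{forced} to satisfy a later disjunct—either by being driven through a downstream chain of further targets, or by being trapped inside a positively weight-divergent end component, where repeated visits push the weight of some $t$ above $K_t$ and thereby force satisfaction. Ruling out these spurious bail-outs is exactly where the classification of Section~\ref{sec:classification} is needed: using the polynomial-time computation of maximal end components together with the weight-divergence test of Theorem~\ref{weight-divergence-algorithm}, I would first pre-resolve every positively weight-divergent end component containing a target (collapsing it to an explicit ``satisfied'' gadget while recording that the adversary must leave it almost surely to avoid $\psi$), and then reshape the remaining MDP so that no visit to a target with insufficient weight forces future satisfaction. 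Only after this preprocessing can I hope to show that each avoiding scheduler of the single-trap-goal MDP lifts back to an avoiding scheduler of $\cM$, making the routing-to-$\goal$ construction sound in both directions; proving this faithful two-way scheduler correspondence is the delicate technical core of the lemma.
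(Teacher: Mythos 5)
Your treatment of $T^*$ is a legitimate alternative to the paper's: instead of forcing a probabilistic escape from each $K_t=-\infty$ trap into a finitely-bounded target (which is what the paper does), you restrict to the almost-sure avoidance region of $T^*$ and its safe sub-MDP. That part is sound, since an avoiding scheduler must confine the play to that region with probability $1$.

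The genuine gap is in the core step for non-trap targets with finite thresholds. You add, for each target $t$, a \emph{scheduler-controlled} transition of weight $-K_t$ into the fresh trap $\goal$. Under universal quantification over schedulers this is unsound: an avoiding scheduler of the normalised MDP simply never selects the new action, so $\goal$ is never reached and the reduced instance is a ``no''-instance regardless of whether every scheduler of $\cM$ was forced to hit some $t$ with weight $\geqslant K_t$. (Your other failure mode --- bailing out to $\goal$ with insufficient weight and thereby truncating a play that in $\cM$ would have been forced to satisfy a later disjunct --- is also real, but secondary.) You correctly sense that a ``faithful two-way scheduler correspondence'' is the crux, but the repair you sketch (pre-resolving positively weight-divergent end components via Theorem~\ref{weight-divergence-algorithm}) does not address it: even after that preprocessing the adversary retains the option of never routing to $\goal$, and no end-component analysis is needed for this lemma at all. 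The missing idea is to take the diversion out of the scheduler's hands: for every state-action pair $(t,\alpha)$ at a non-trap target $t$, halve all original transition probabilities and send the remaining probability $\tfrac{1}{2}$ into a fresh trap $g_t$ via an intermediate state whose outgoing weight is $-\wgt(t,\alpha)$, so that $g_t$ is reached carrying exactly the accumulated weight the play had at $t$. Every visit to $t$ (followed by any action) then reaches $g_t$ with probability $\tfrac{1}{2}$ with the same weight, and with probability $\tfrac{1}{2}$ the play continues as in $\cM$; since only positivity of probabilities matters, this yields the two-way correspondence for \emph{every} scheduler. Once all targets are traps, your final merge into a single $\goal$ via forced transitions of weight $-K_t$ is fine, because at a trap the added transition is the only choice.
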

\begin{proof}
  Let $\varphi(T,(K_t)_{t \in T})$ denote the weight-bounded reachability constraint
  $\bigvee_{t\in T} \Diamond (t \wedge (\accwgt \geqslant K_t))$ and
  let us first show that one can assume that~$K_t > -\infty$ for all~$t \in T$.
  In fact, for all states~$t \in T^* = \{ t \mid K_t =-\infty\}$ (which can be assumed to be trap states), one can 
  add one action~$\alpha$ with~$P(t,\alpha,t) = \frac{1}{2}$ and~$P(t,\alpha,t') = \frac{1}{2}$ for some arbitrary
  $t' \in T\setminus T^*$ with~$\wgt(t,\alpha) = +1$.
  Then, for any scheduler, if state~$t$ is reached, then with positive probability so will be~$t'$ with weight at least~$K_{t'}$.
  The converse is also true: 
  if~$\varphi(T\setminus T^*, (K_t)_{t \in T\setminus T^*})$ 
  holds with positive probability for all schedulers for the new MDP,
  so does~$\varphi(T,(K_t)_{t \in T})$ on the original one.

  Let us now show how to make sure all target 
  states are trap states by defining~$\cM'$.
  For each non-trap~$t \in T$, we add a new trap goal state~$g_t$ in~$\cM'$.
  Moreover, for each state-action pair $(t,\alpha)$ in $\cM$,
  the new MDP $\cM'$ has a fresh state $g_{t,\alpha}$.
  Let $G$ denote the set consisting 
  of all states $t\in T$ that are traps in $\cM$
  and all states $g_t$ where $t \in T$ is not a trap in $\cM$
  and~$K_{g_t} = K_t$.
  The new MDP $\cM'$ is obtained from $\cM$ 
  by adding deterministic transitions from $g_{t,\alpha}$ to $g_t$
  with action label $\tau$ and weight $-\wgt(t,\alpha)$ and
  by modifying the transition probabilities of 
  each state-action pair~$(t,\alpha)$ where $t\in T$ is not a trap in $\cM$
  is as follows:
  $P_{\cM'}(t,\alpha,s) = \frac{1}{2}P_{\cM}(t,\alpha,s)$ for all~$s\in S$,
  and~$P_{\cM'}(t,\alpha,g_t) = \frac{1}{2}$. 
  Thus, whenever $\cM'$ visits $t$ it moves to $g_t$   
  with probability~$\frac{1}{2}$, otherwise $\cM'$ continues as in~$\cM$.
  Now, any scheduler~$\sched$ for $\cM'$ 
  with $\Pr^{\sched}_{\cM',\sinit}\big(\varphi( G,(K_g)_{g \in G})\big)>0$
  also satisfies~$\Pr^{\sched}_{\cM,\sinit}\big(\varphi(T,(K_t)_{t\in T})\big)>0$ 
  as any $\sched$-path reaching~$g_t$ in $\cM'$
  has a prefix ending in $t$
  with the same accumulated weight.
  Conversely, if scheduler 
  $\sched$ for $\cM$ 
  satisfies $\Pr^{\sched}_{\cM}\big(\varphi(T,(K_t)_{t\in T})\big)>0$,
  then there exists a $\sched$-path~$\fpath$ from $\sinit$
  that ends in some state~$t \in T$ with accumulated weight at least~$K_t$.
  If $t$ is a trap in $\cM$ then $t$ is also a goal state in $\cM'$.
  Otherwise, \ie if $t$ is not a trap in $\cM$, then
  there is a $\sched$-path
  $\fpath'=\fpath \, \alpha \, g_{t,\alpha} \, \tau \, g_t$
  in $\cM'$ with $\wgt(\fpath')=\wgt(\fpath)+\wgt(t,\alpha)-\wgt(t,\alpha)=
  \wgt(\fpath)\geqslant K_t = K_{g_t}$. 
  Here, $\alpha$ is any action that $\sched$ schedules with
  positive probability for the input path $\fpath$.
  This shows $\Pr^{\sched}_{\cM',\sinit}\big(\varphi( G,(K_g)_{g \in G})\big)>0$.

  Suppose now that all goal states~$t\in T$ are trap states in $\cM$. 
  It is now easy to reduce them to a single trap state. 
  In fact, the MDP~$\cM$ can be modified by adding 
  a fresh goal state~$g$, and from each~$t \in T$, 
  a single action that deterministically
  leads to~$g$ with weight~$-K_t$. 
  If~$\cM''$ denotes this new MDP, then satisfying
  $\varphi(T,(K_t)_{t \in T})$ in~$\cM$ 
  is equivalent to satisfying~$\varphi(g,0)$ in~$\cM''$.
\end{proof}

The following two lemmas establish the complexity of the problem \Uposdwr{} 
(the first part of Theorem~\ref{thm:DWR-U0}).
The algorithm for the computation of the values will be given afterwards.

\begin{figure}[ht]
	\centering%

\begin{tikzpicture}
	\node[state] (s) {$s$};
	\node[istate, above left=7mm and 1mm of s] {$\cM\colon$};%
	\node[bullet, right=13mm of s] (b) {};
	\node[state, above right=5mm and 7mm of b] (t) {$t$};
	\node[state, below right=5mm and 7mm of b] (u) {$u$};
	\node[state, below right=10mm and 10mm of s] (init) {$s_\init$};
	\node[state, right=20mm of b] (goal) {$\goal$};
	\path%
		(s) edge[ntran] node[above] {$\alpha$/$\neuw{w}$} (b)
		(b) edge[ptran, bend left] coordinate[pos=0.3](bt) (t)
		(b) edge[ptran, bend right] coordinate[pos=0.3](bu) (u);	
	\draw%
		(bt) to[bend left] (bu);	
	\node[state, right=80mm of s] (gs) {$s$};
	\node[istate, above left=7mm and 1mm of gs] {$\cG\colon$};%
	\node[rstate, right=13mm of gs] (gb) {$s,\alpha$};
	\node[state, above right=5mm and 7mm of gb] (gt) {$t$};
	\node[state, below right=5mm and 7mm of gb] (gu) {$u$};
	\node[state, below right=10mm and 10mm of gs] (ginit){$s_\init$};
	\node[state, right=20mm of gb] (ggoal) {$\goal$};
	\path%
		(gs) edge[ptran] node[above] {$\neuw{w}$} (gb)
		(gb) edge[ptran, bend left] node[above]{\nilw} (gt)
		(gb) edge[ptran, bend right] node[below]{\nilw} (gu)
		(ggoal) edge[ptran, bend left=40] node[below]{\negw{K}} (ginit);	
	\coordinate[below right=2mm and 26mm of s] (c);
	\node[cloud, cloud puffs=23,cloud puff arc=120, cloud ignores aspect, draw=gray,
			minimum width=70mm, minimum height=43mm](cloud) at (c) {};
	\coordinate[below right=2mm and 26mm of gs] (gc);
	\node[cloud, cloud puffs=23,cloud puff arc=120, cloud ignores aspect, draw=gray,
			minimum width=70mm, minimum height=43mm](cloud) at (gc) {};
\end{tikzpicture}
 	\caption{Construction of a two-player game $\cG$ from an MPD $\cM$.}
\label{fig:DWRU0toMPBG}
\end{figure}

\begin{lemma}
\label{lem:Uposdwr-reduc}
  Let $\cM$ be an MDP, $\goal \in \cM$ a trap state, and
  $K \in \Integer$. Checking whether for all schedulers $\sched$,
  $\Pr^{\sched}_{\cM,\sinit}\big(\Diamond (\goal \wedge (\accwgt \geqslant K) )\big)>0$ reduces to the
  resolution of a two-player mean-payoff Büchi game.
  The problem \Uposdwr{} is thus in $\NP \cap \coNP$.
\end{lemma}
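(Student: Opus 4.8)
\textbf{Plan for Lemma~\ref{lem:Uposdwr-reduc}.}
The plan is to build a two-player turn-based game $\cG$ from $\cM$ by letting one player (the \emph{scheduler player}) resolve the nondeterministic choices of actions and letting a second player (the \emph{adversary}) resolve the probabilistic branching. Concretely, as illustrated in Fig.~\ref{fig:DWRU0toMPBG}, each state $s$ of $\cM$ becomes a scheduler-player vertex where the player picks an enabled action $\alpha$, moving to a fresh adversary vertex $(s,\alpha)$; from $(s,\alpha)$ the adversary picks any successor $t$ with $P(s,\alpha,t)>0$. The key observation justifying this discretization is that \Uposdwr{} -- the property ``under every scheduler, $\Diamond(\goal\wedge(\accwgt\geqslant K))$ holds with positive probability'' -- fails iff there is a scheduler $\sched$ that \emph{almost surely avoids} reaching $\goal$ with accumulated weight $\geqslant K$. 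Since positivity of a reachability-style event is a purely graph-theoretic (support-based) property, it does not depend on the precise transition probabilities, only on which transitions have positive probability. This lets me replace the probabilistic player by an antagonistic one without changing the answer.

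The second ingredient is to encode the weight threshold and the ``always avoid'' condition as a mean-payoff Büchi objective. First I would add, as in the figure, a transition from $\goal$ back to $\sinit$ carrying weight $-K$, and I would mark $\goal$ as the (sole) Büchi state. Then an infinite play that visits $\goal$ infinitely often and keeps a nonnegative mean payoff corresponds to repeatedly reaching $\goal$ while the accumulated weight between consecutive visits stays at least $K$ (the $-K$ edge ``resets'' the budget); conversely, a winning scheduler-player strategy for the \emph{complement} objective witnesses a scheduler in $\cM$ that almost surely avoids the target event. I would make precise the correspondence: the adversary has a strategy forcing $\neg(\text{mean-payoff}\geqslant 0 \wedge \Box\Diamond\goal)$ iff every scheduler for $\cM$ reaches $\goal$ with weight $\geqslant K$ with positive probability. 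The direction from a game strategy to a scheduler uses memoryless determinacy of mean-payoff (and mean-payoff parity/Büchi) games so that the derived scheduler needs only finite memory; the reverse direction uses the support-invariance argument above to turn a positive-probability-avoiding scheduler into an adversary strategy in $\cG$.

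For the complexity claim, I would invoke that two-player mean-payoff Büchi games are determined with strategies of finite memory and are decidable in $\NP\cap\coNP$ (indeed solvable in pseudo-polynomial time). Since the construction of $\cG$ from $\cM$ is polynomial and the winning condition of $\cG$ (or its complement, which is again of mean-payoff Büchi/coBüchi type) falls in this class, membership of \Uposdwr{} in $\NP\cap\coNP$ follows. Together with Lemma~\ref{lem:Uposdwr-part-case}, which reduces the general case to a single trap goal state, this yields the full claim. The main obstacle I anticipate is the careful bookkeeping in the two-directional correspondence between schedulers and game strategies: I must argue that restricting the adversary to pure positional choices loses no generality (support-only dependence), and that the Büchi condition combined with the nonnegative-mean-payoff condition exactly captures ``infinitely many resets with weight $\geqslant K$ in between'' rather than some weaker limit behaviour, so that the $\omega$-regular abstraction does not over- or under-count the finite-horizon reachability-with-threshold event. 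Getting the interaction between the $\limsup$ mean-payoff semantics and the reset edge right is the delicate point.
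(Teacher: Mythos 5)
Your plan is essentially the paper's own proof: the same turn-based game in which the scheduler player picks actions and the adversary resolves the probabilistic branching, the same reset edge from $\goal$ to $\sinit$ of weight $-K$, the same reformulation as the mean-payoff B\"uchi objective $\Box\Diamond\goal\wedge\MP\geqslant 0$, the same use of determinacy to swap the quantifiers, and the same appeal to the $\NP\cap\coNP$ complexity of mean-payoff B\"uchi games. One sentence has the orientation of the objective inverted: the adversary wins for $\Box\Diamond\goal\wedge\MP\geqslant 0$ itself, not for its negation, exactly when every scheduler reaches $\goal$ with weight at least $K$ with positive probability -- your preceding sentence, which gives the scheduler player the \emph{complement} objective, already forces this by determinacy. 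The delicate point you flag at the end is resolved in the paper by decomposing a play that eventually reaches $\goal$ with accumulated weight at least $K$ into segments separated by the reset edge and extracting a single segment of weight at least $K$, which shows that the one-shot reachability and the repeated-reachability-with-nonnegative-mean-payoff formulations coincide.
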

\begin{proof}
  From $\cM$, we construct a two-player game $\cG$ intuitively
  as follows: player~1 is responsible for choosing the actions, and
  player~2 resolves the probabilistic choices; moreover, from state
  $\goal$, controlled by player~1, we add an action leading back to
  the initial state with weight $-K$.  Formally, $\cG$ has set of
  vertices $V = S \cup S \times \Act$, partitionned into $V_1 = S$ and
  $V_2 = S \times \Act$, for each player. For every state $s \in S$ in
  the MDP $\cM$ and every action $\alpha$ enabled in $s$ there
  exists a transition in $\cG$ from $s \in V_1$ to
  $(s,\alpha) \in V_2$ with weight $\wgt(s,\alpha)$. Now, for all
  states $s,t \in S$ in the MDP and actions $\alpha$ satisfying
  $P(s,\alpha,t) >0$ there exists a transition in $\cG$ from
  $(s,\alpha) \in V_2$ to $t \in V_1$ with weight $0$. Finally, there
  is a transition from $\goal$ to the initial state $\sinit$ with
  weight $-K$. This transformation is represented in
  Figure~\ref{fig:DWRU0toMPBG}.

  In the sequel, $\stratone$ denotes a (pure) strategy for player~1
  and $\strattwo$ a (pure) strategy for player~2, and we write
  $\play{\cG}{\stratone}{\strattwo}$ for the play in $\cG$ yield by
  $\stratone$ and $\strattwo$.  The above transformation satifies
\begin{align*}
  \forall \sched,\ \Pr^{\sched}_{\cM,\sinit}(\Diamond (\goal \wedge (\accwgt \geqslant K)) )>0 & \Longleftrightarrow \forall \stratone \exists \strattwo,\ \play{\cG,\sinit}{\stratone}{\strattwo} \models (\neg \goal) \Until (\goal \wedge (\accwgt \geqslant K))  \\
& \Longleftrightarrow \forall \stratone \exists \strattwo,\ \play{\cG,\sinit}{\stratone}{\strattwo} \models \Diamond  (\goal \wedge (\accwgt \geqslant K))  \\
                                                                                               & \Longleftrightarrow \forall \stratone \exists \strattwo,\ \play{\cG,\sinit}{\stratone}{\strattwo} \models (\Box \Diamond \goal \wedge \MP \geq 0)\\
& \Longleftrightarrow  \exists \strattwo \forall \stratone,\ \play{\cG,\sinit}{\stratone}{\strattwo} \models (\Box \Diamond \goal \wedge \MP \geq 0)
\end{align*}
The first equivalence is immediate from the transformation since the
positive probability of the eventually property corresponds to the
existence of a path in the MDP. 

\noindent For the second equivalence, the left-to-right implication is
obvious; let us prove the right-to-left one. Let $\stratone$ be a
strategy for player~1, and fix $\strattwo$ a strategy for player~2,
such that
$\play{\cG,\sinit}{\stratone}{\strattwo} \models \Diamond (\goal \wedge
(\accwgt \geqslant K))$. The play $\play{\cG,\sinit}{\stratone}{\strattwo}$
until $\goal$ is reached with accumulated weight at least $K$, can be
decomposed into factors from $\sinit$ to $\goal$, alternated with
transitions from $\goal$ to $\sinit$:
$\pi_1 (\goal, \alpha , \sinit) \pi_2 \cdots (\goal, \alpha, \sinit)
\pi_m$ where the $\pi_i$'s do not visit $\goal$. We let $K_i$ be the
accumulated weight along $\pi_i$. Then the accumulated weight of this
prefix play is $\sum_{i=1}^{m-1} (K_i -K) + K_m$, and by assumption,
it is greater than $K$. We derive that
$\sum_{i=1}^m K_i \geq m \cdot K$, and thus there exists $i$ with
$K_i \geq K$. This fragment thus satisfies the property
$(\neg \goal) \Until (\goal \wedge (\accwgt \geqslant K))$. To
conclude, it suffices to observe that the strategy $\stratone$ can be
arbitrary on each of these fragments.

\noindent The third equivalence is relatively simple. First of all,
from left to right, given a strategy $\stratone$ for player~1, we aim
at building a strategy $\strattwo'$ for player~2 ensuring
$(\Box \Diamond \goal \wedge \MP \geq 0)$. To do so, the idea is to
apply the counterstrategy $\strattwo$ until $\goal$ is reached with
accumulated weight at least $K$; then $\strattwo'$ takes the $\alpha$
transition from $\goal$ to $\sinit$ with weight $-K$, so that the
accumulated weight is nonnegative; and we iterate the reasoning from
$\sinit$ again. Doing so, $\strattwo'$ guarantees infinitely many
visits to $\goal$ with accumulated weight at least $K$, and infinitely
many visits to $\sinit$ with nonnegative accumulated weight. The
mean-payoff of $\play{\cG,\sinit}{\stratone}{\strattwo}$ is thus
nonnegative.

\noindent The last equivalence is a consequence of the determinacy of
two-player turn-based games with mean-payoff and Büchi objectives, a
consequence of Martin's general determinacy theorem~\citeapx{Martin98}. Mean-payoff Büchi games are even finite-memory determined~\cite{ChatDoy11}.

\medskip
The complexity of the problem \Uposdwr{} then follows directly,
as determining the winner in a turn-based game with mean-payoff Büchi winning condition
is in $\NP \cap \coNP$~\cite{ChatDoy11}.
\end{proof}

\begin{figure}[ht]
	\centering%

\begin{tikzpicture}
	\node[state] (gs) {$s$};
	\node[istate, above left=5mm and 5mm of gs] {$\cG\colon$};%
	\node[state, below left=7mm and 2mm] (gs0) {$s_0$};
	\node[rstate,right=10mm of gs] (gt) {$t$};
	\node[state,below=10mm of gt] (gu) {$u$};
	\path%
		(gs) edge[ptran] node[above] {$\neuw{w}$} (gt)
		(gt) edge[ptran] node[right] {$\neuw{v}$} (gu);
	\node[state,right=60mm of gs] (sinit) {$s_\init$};
	\node[istate, above left=7mm and 5mm of sinit] {$\cM\colon$};%
	\node[state, right=33mm of sinit] (s0) {$s_0$};
	\node[state, below left=15mm and 5mm of sinit] (s) {$s$};
	\node[bullet, right=15mm of s] (b1) {}; 
	\node[state, above right=5mm and 10mm of b1] (t) {$t$};
	\node[state, below right=5mm and 10mm of b1] (goal) {$\goal$}; 
	\node[state, right= 10mm of t] (tu) {$t_u$};
	\node[bullet, right=15mm of tu] (b2) {}; 
	\node[state, right=10mm of b2] (u) {$u$};
	\path%
		(sinit) edge[ptran] node[above] {$\tau$/$\neuw{-(n{-}1)\wgt^{\max} -1}$} (s0)
		(s) edge[ntran] node[above]{$\alpha_t$/$\neuw{w}$} (b1)
		(b1) edge[ptran, bend left] coordinate[pos=0.3](b1t) (t)
		(b1) edge[ptran, bend right] coordinate[pos=0.3](b1goal) (goal)
		(t) edge[ptran] node[above] {$\tau$/$\neuw{\frac{1}{l_t}}$} (tu)
		(tu) edge[ntran] node[above] {$\tau$/$\neuw{v}$} (b2)
		(b2) edge[ptran, bend left] coordinate[pos=0.3](b2u) (u)
		(b2) edge[ptran, bend left] coordinate[pos=0.1](b2goal) (goal);
	\draw%
		(b1t) to[bend left] (b1goal);	
	\draw%
		(b2u) to[bend left] (b2goal);	
	\coordinate[below right=4mm and 9mm of gs] (gc);
	\node[cloud, cloud puffs=23,cloud puff arc=120, cloud ignores aspect, draw=gray,
			minimum width=55mm, minimum height=35mm](cloud) at (gc) {};
	\coordinate[below right=8mm and 30mm of sinit] (c);
	\node[cloud, cloud puffs=23,cloud puff arc=120, cloud ignores aspect, draw=gray,
			minimum width=100mm, minimum height=55mm](cloud) at (c) {};
\end{tikzpicture}
 	\caption{Construction of an MDP $\cM$ from a two-player game $\cG$.}
\label{fig:uposdwr-hard-mpg}
\end{figure}

\begin{lemma}
 \label{MPG-hardness-Uposdwr}
The problem \Uposdwr{} is hard for
(non-stochastic) two-player mean-payoff games.
\end{lemma}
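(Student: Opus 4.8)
The plan is to reduce the decision problem for (non-stochastic) two-player mean-payoff games to \Uposdwr, following the construction sketched in Figure~\ref{fig:uposdwr-hard-mpg}. Fix a mean-payoff game $\cG$ with vertex set $V=V_1\uplus V_2$, where player~1 (the minimizer) owns $V_1$ and player~2 (the maximizer) owns $V_2$, integer edge weights bounded by $\wmax$ in absolute value, $n=|V|$ vertices, and a starting vertex $s_0$; the task is to decide whether player~2 can guarantee mean payoff $>0$ (this strict variant is polynomially interreducible with the standard mean-payoff game problem). I would build $\cM$ as follows: each player-1 vertex $s$ becomes a state whose enabled actions are the outgoing edges of $s$ (so the scheduler plays player~1), and taking the edge $s\to t$ of weight $w$ leads with positive probability to $t$ and with positive probability to a fresh trap $\goal$; each player-2 vertex $t$ becomes a state with a single action that branches probabilistically over the successors of $t$, and each such branch again has a positive-probability ``leak'' to $\goal$ carrying the corresponding edge weight. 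Finally I add an initial state $\sinit$ with one transition of weight $-(n{-}1)\wmax-1$ to $s_0$, and take the property $\varphi=\Diamond(\goal\wedge(\accwgt\geqslant 0))$.

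The heart of the argument is the following reformulation. Since $\goal$ is a trap reachable only through a leak (of weight~$0$), for any scheduler $\sched$ we have $\Pr^{\sched}_{\cM,\sinit}(\varphi)>0$ if and only if there is a finite $\sched$-path with positive probability that leaks into $\goal$ with nonnegative accumulated weight. Such a path is exactly a finite play prefix of $\cG$ in which player~1's moves are chosen by $\sched$ and player~2's moves (and the moment of leaking) are resolved existentially by the positive-probability branches; its weight at $\goal$ equals $-(n{-}1)\wmax-1$ plus the sum of the game edge weights along the prefix. Hence $\Pr^{\sched}_{\cM,\sinit}(\varphi)>0$ means that, against the player-1 strategy encoded by $\sched$, player~2 can steer the play so that the accumulated game weight exceeds $(n{-}1)\wmax$ at some point. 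As every simple path in $\cG$ has weight at most $(n{-}1)\wmax$, crossing this threshold forces the traversal of a strictly positive cycle. Consequently \Uposdwr{} holds (i.e.\ the above is true for every scheduler) exactly when, against every player-1 strategy, player~2 can force the accumulated weight above the threshold, which is the mean-payoff winning condition for player~2.

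I would then make the two implications precise using the memoryless determinacy of mean-payoff games. If player~2 wins, fix a memoryless optimal strategy $\strattwo$ guaranteeing mean payoff $\geqslant\nu>0$ against all player-1 strategies; for an arbitrary scheduler $\sched$, resolve the player-2 branches according to $\strattwo$ and, at player-1 states, follow any action played by $\sched$ with positive probability. The resulting play has positive mean payoff, so its accumulated weight tends to $+\infty$; leaking to $\goal$ once the threshold is passed yields a positive-probability path satisfying $\varphi$, whence $\Pr^{\sched}_{\cM,\sinit}(\varphi)>0$ for every $\sched$. Conversely, if player~2 loses, a memoryless optimal strategy $\stratone$ for player~1 makes every cycle of the strategy graph (with player~2 moving freely) nonpositive, so under the MD-scheduler mimicking $\stratone$ the accumulated game weight never exceeds $(n{-}1)\wmax$; every leak then reaches $\goal$ with weight at most $-1<0$, giving $\Pr^{\sched}_{\cM,\sinit}(\varphi)=0$ and refuting \Uposdwr.

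The main obstacle is calibrating the construction so that the qualitative weight-bounded reachability threshold coincides with the quantitative mean-payoff condition at the boundary. The offset $-(n{-}1)\wmax-1$ must be chosen so that threshold-crossing is equivalent to the existence of a positive cycle, and the small weights $1/l_t$ attached to the player-2 gadgets (cleared of denominators by rescaling all weights to integers) serve to separate the game value from $0$, so that the reduction captures precisely the intended (strict) mean-payoff decision. A further point requiring care is that \Uposdwr{} quantifies over all, possibly randomized and infinite-memory, schedulers; this causes no difficulty because the value of a mean-payoff game is guaranteed against arbitrary opponent strategies, so the winning direction survives unrestricted scheduler quantification, while the losing direction only needs the single MD-scheduler induced by $\stratone$.
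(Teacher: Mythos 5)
Your proof is correct and follows essentially the same route as the paper's: the same gadget with the initial offset $-(n{-}1)\wgt^{\max}-1$ and the probability-$\tfrac{1}{2}$ leaks to $\goal$, the same observation that exceeding the threshold $(n{-}1)\wgt^{\max}$ forces a positive cycle while the absence of positive cycles caps every path at $(n{-}1)\wgt^{\max}$, and the same appeal to memoryless determinacy — the only (harmless) presentational differences are that you reduce the complementary winning condition (player~2 forcing mean payoff $>0$, rather than player~1 forcing $\leqslant 0$ to the complement of \Uposdwr) and that you discharge the quantification over arbitrary randomized, history-dependent schedulers via player~2's memoryless optimal strategy, where the paper instead invokes Lemma~\ref{MD-sufficient-eventually-almostsure-max-equals-1} to pass to an MD-scheduler. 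One small misreading: the label $1/l_t$ in Figure~\ref{fig:uposdwr-hard-mpg} is the transition probability of the player-2 gadget, not a weight, so no rescaling is needed and your argument does not in fact depend on it.
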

\begin{proof}
   We now prove the lower bound, that is,
   checking whether player 1 of a (non-stochastic) mean-payoff game
   has a winning strategy
   is polynomially reducible to the complement of~\Uposdwr.

   More precisely, we provide a polynomial reduction to the problem to decide
   whether
   $\Pr^{\sched}_{\cM,\sinit}
               \big(\Diamond (\goal \wedge (\accwgt \geqslant 0))\big)=0$ holds
   for all schedulers $\sched$ for a given MDP $\cM$ with 
   distinguished states $\goal$ and $\sinit$.

   Consider a mean-payoff game $\cG$ with
   starting state $s_0$. Let $\cM$ be the MDP obtained from $\cG$ by
   performing the following steps (see also Figure~\ref{fig:uposdwr-hard-mpg}).

   \begin{itemize}
   \item 
     Add a new initial state $\sinit$ and a trap state
     $\goal$. 
   \item
     For each player-1 state~$s$
     and edge
     $s \stackrel{w}{\longrightarrow} t$ in $\cG$,
     state $s$ in $\cM$ has an enabled action $\alpha_t$
     with $P(s,\alpha_t,t)=P(s,\alpha_t,\goal)=\frac{1}{2}$ and
     $\wgt(s,\alpha_t) = w$.
   \item
     For each player-2 state~$s$ in~$\cG$,
     we add states~$s$ and
     $s_t$ for all successors $t$ of $s$ to~$\cM$.
     State $s$ in $\cM$ has a single enabled action $\tau$
     with $P(s,\tau,s_t)=\frac{1}{\ell_s}$ 
     where $\ell_s$ denotes the number of successors of $s$ in $\cG$
     and where $t$ ranges over all successors of $s$ in $\cG$.
     The states $s_t$ have a single enabled action $\tau$
     with $P(s_t,\tau,\goal)=P(s_t,\tau,t)=\frac{1}{2}$
     and $\wgt(s_t,\tau)$ equals the weight of the edge from $s$ to $t$
     in $\cG$. 

   \item
     State $\sinit$ has a single action 
     with $P(\sinit,\tau,s_0)=1$ and 
     $\wgt(\sinit,\tau)= -(n{-}1)\wgt^{\max} -1$ where $\wgt^{\max}$ is 
     the maximal
     weight attached to the edges in $\cG$ and $n$ is the number of
     states in $\cG$.
     (We suppose $\wgt^{\max} > 0$. If this is not the case we put
     $\wgt(\sinit,\tau)=-1$.)
   \end{itemize}
   Then, $\cM$ is contracting in the sense
   $\Pr^{\min}_{\cM,s}(\Diamond \goal)=1$. 
   In particular, $\cM$ has no end components.
   We have for all schedulers $\sched$ for $\cM$ that
     \(\Pr^{\sched}_{\cM,\sinit}\big(\Diamond \goal \wedge (\accwgt <0)\big)=1
     \text{ iff }
     \Pr^{\sched}_{\cM,\sinit}
          \big(\Diamond \goal \wedge (\accwgt \geqslant 0)\big)=0\).
   Moreover, there is a one-to-one
   correspondence between the schedulers for $\cM$ and the strategies
   for player 1 in $\cG$.

   If $\sched$ is an MD-strategy for player 1 in $\cG$ such that 
   the mean payoff of all
   $\sched$-plays is nonpositive, then 
   $\sched$ has no positive cycles and
   $$
      \Pr^{\sched}_{\cM,\sinit}
          \bigl(\, \Diamond \goal \wedge (\accwgt <0)\, \bigr) 
      \ \ = \ \ 1.
   $$
   In fact $\wgt(\fpath) < -\wgt(\sinit,s_0)$
    for all simple $\sched$-paths $\fpath$ starting in state $s_0$,
    and since there are no positive cycles under~$\sched$, any non-simple
    path has also negative weight.
    That is,
    $\wgt(\fpath) < 0$ for all $\sched$-paths starting in $\sinit$.

   Conversely, if $\sched$ is a scheduler for $\cM$ with
   $\Pr^{\sched}_{\cM,\sinit}\big(\Diamond (\goal \wedge (\accwgt <0))\big)=1$
   then there is an MD-scheduler $\tsched$ for $\cM$ with
   $\Pr^{\tsched}_{\cM,\sinit}\big(\Diamond (\goal \wedge (\accwgt <0))\big)=1$
   (see Lemma \ref{MD-sufficient-eventually-almostsure-max-equals-1})
   and the Markov chain induced by $\tsched$ has no positive cycles.
   Thus, the mean payoff of all $\tsched$-plays starting is nonpositive.
\end{proof}

Finally, we explain how to compute 
$\valueUpos{\cM,s}$ in 
pseudo-polynomial time.
Note that this implies that the decision problem is solvable in pseudo-polynomial time as well.

We may assume w.l.o.g. that $T\setminus T^*$ is a singleton 
(following the argumentation provided in Lemma~\ref{lem:Uposdwr-part-case}
later on). That is,
$T$ contains a single trap state with finite~$K_t$.
Additionally, we make the following assumption \textbf{(A)}:
\begin{description}
\item [(A)]
  $\Pr^{\min}_{\cM,\sinit}(\Diamond T)>0$.
\end{description}
  This assumption is justified as $\Pr^{\min}_{\cM,\sinit}(\Diamond T)=0$
  implies
  $\Pr^{\sched}_{\cM,\sinit}(\varphi)=0$ for some scheduler $\sched$
  regardless the value of~$K_t$.

\tudparagraph{1ex}{Preprocessing.} 
In what follows, let $T =\{\goal\}$ and suppose that 
$\cM$ satisfies assumption \textbf{(A)}.
We now define a new MDP $\cN$ that arises from $\cM$
by adding %
  a fresh action symbol $\tau$ 
and a new trap state %
$\fail$ and by performing the following steps:
\begin{enumerate}
\item [1.]
  Collapse all states $s$ with
  $\Pr^{\min}_{\cM,s}(\Diamond T)=0$ to the single trap state $\fail$.
\item [2.]
  Remove all states $s$ with $\sinit \not\models \exists \Diamond s$.
\end{enumerate}
As all states $s$ that belong to some end component $\cE$ of
$\cM$ %
are collapsed to $\fail$ (see step 1), the MDP $\cN$ has no end
component. %
Hence, under all schedulers for $\cN$, almost surely one of its two
trap states $\fail$ or $\goal$ will be reached:
\[
  \Pr^{\min}_{\cN,s}
    \bigl(\, \Diamond (\goal \vee \fail )\, \bigr) \ =\ 1
  \quad
  \text{for all states $s$ of $\cN$}\enskip.
\]
Assumption \textbf{(A)} yields
$\Pr^{\min}_{\cN,\sinit}\bigl(\, \Diamond (\goal)\, \bigr) 
  \, > \, 0$.

For $K\in \Integer$, let
\begin{equation*}
  \label{phi-prime}
  \psi_K \ \ = \ \ 
  \Diamond (\goal \wedge (\accwgt \geqslant K))\enskip.
\end{equation*}
Problem \Uposdwr{} rephrased for $\cN$ asks whether
$\Pr^{\sched}_{\cN,\sinit}(\psi_K) >0$ for all schedulers
$\sched$ for $\cN$ where $K\in \Integer$ is fixed. 
The corresponding optimization problem
asks to compute for the states $s$ in $\cN$ the values
\[
  \valueUpos{\cN,s}
 \ \ = \ \ 
  \sup \, \bigl\{ \,
     K\mid \forall \sched . \, 
       \Pr^{\sched}_{\cN,s}(\psi_K) >0 \ 
    \bigr\}\enskip.
\]
We have
$\valueUpos{\cM,s} = \valueUpos{\cN,s}$ %
for all states $s$ in $\cM$ with
$\Pr^{\min}_{\cM,s}(\Diamond \goal)>0$,
while $\valueUpos{\cM,s} = -\infty$ if
$\Pr^{\min}_{\cM,s}(\Diamond T)=0$.

\tudparagraph{1ex}{Assumptions after the preprocessing.}
We now have the following assumptions
\begin{description}
\item [(C1)]
   $\cM$ has no end components and two traps states
   $\goal$ and $\fail$. 
\item [(C2)]
   $\Pr^{\min}_{\cM,s}\big(\Diamond (\goal \vee \fail)\big)=1$
   for all states $s$ of $\cM$.
\item [(C3)]
   $\Pr^{\min}_{\cM,s}\big(\Diamond (\goal)\big) >0$
   for all states $s$ of $\cM$ with $s \not= \fail$.
\item [(C4)]
   All states in $\cM$ are reachable from $\sinit$.
\end{description}
The values for the trap states are trivial as we have
$\valueUpos{\cM,\fail} =-\infty$ and
$\valueUpos{\cM,\goal}=0$.

\begin{lemma}
  If $\cM$ satisfies the above assumptions \textbf{(C1)} to \textbf{(C3)}, then
  $\valueUpos{\cM,s}
  \in \Integer \cup\{+\infty\}$
  for all non-trap states $s$ in $\cM$.
\end{lemma}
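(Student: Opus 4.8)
The plan is to reduce the statement to showing that, for every non-trap state $s$, the defining set $\{K \in \Integer \mid \forall \sched.\ \Pr^{\sched}_{\cM,s}(\psi_K) > 0\}$ is \emph{nonempty}. This set is downward closed in $\Integer$: if $K' \leqslant K$ then $\psi_{K'} \supseteq \psi_K$, so $\Pr^{\sched}_{\cM,s}(\psi_{K'}) \geqslant \Pr^{\sched}_{\cM,s}(\psi_K)$ for every $\sched$. A nonempty downward-closed subset of $\Integer$ is either all of $\Integer$ (supremum $+\infty$) or of the form $\{K : K \leqslant M\}$ with maximum $M \in \Integer$; in both cases its supremum lies in $\Integer \cup \{+\infty\}$ and is strictly above $-\infty$. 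Writing $n = |S|$ and $\wmax = \max_{(s,\alpha)} |\wgt(s,\alpha)|$, I would exhibit the explicit witness $K^* = -(n{-}1)\cdot \wmax$.

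To produce the witness I would run the reachability attractor toward $\goal$: set $R_0 = \{\goal\}$ and, excluding traps, $R_{i+1} = R_i \cup \{s : s \text{ is not a trap and } \forall \alpha \in \Act(s).\ \Post(s,\alpha) \cap R_i \neq \varnothing\}$. Since the $R_i$ form an increasing chain in the finite set $S$, the sequence stabilizes after at most $n{-}1$ steps to a fixed point $R_\infty$. The core of the argument is two claims. First (boundedness), if $s \in R_i$ then for every scheduler $\sched$ there is a finite path $\finpath$ from $s$ to $\goal$ of length at most $i$ with $\Pr^{\sched}_{\cM,s}(\finpath) > 0$; I prove this by induction on $i$, using that any action $\alpha$ to which $\sched$ assigns positive probability in $s \in R_i \setminus R_{i-1}$ necessarily has some successor $t \in \Post(s,\alpha) \cap R_{i-1}$, and then applying the hypothesis to the residual scheduler $\residual{\sched}{s\,\alpha\,t}$ from $t$. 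Second (coverage), every non-$\fail$ state lies in $R_\infty$: otherwise $A = S \setminus R_\infty$ contains some non-$\fail$ state, contains neither $\goal$ nor $\goal$-reaching behaviour, and by negating the fixed-point equation each non-trap state of $A$ has an action whose successors all lie in $A$ (here I use that by \textbf{(C1)} the only traps are $\goal$ and $\fail$, so non-trap states have enabled actions); the memoryless scheduler choosing such actions stays in $A$ forever, witnessing $\Pr^{\min}_{\cM,s}(\Diamond \goal) = 0$ and contradicting \textbf{(C3)}.

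Combining the two claims, for every non-trap $s$ and every scheduler $\sched$ there is a positive-probability path to $\goal$ of length at most $n{-}1$, hence of accumulated weight at least $-(n{-}1)\wmax = K^*$; therefore $\Pr^{\sched}_{\cM,s}(\psi_{K^*}) > 0$ for all $\sched$, so $K^*$ belongs to the defining set and $\valueUpos{\cM,s} \geqslant K^* > -\infty$, which finishes the proof. The main obstacle I anticipate is the careful bookkeeping in the boundedness induction for history-dependent, possibly randomized schedulers: one must argue that a single positive-probability witness path suffices and that the residual scheduler is handled correctly, and one must make the correspondence between \textbf{(C3)} and non-membership in $A$ in the coverage step fully rigorous. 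Everything else is elementary; in particular only \textbf{(C3)} and the finiteness of $\cM$ are really needed, while \textbf{(C2)} and the remaining force of \textbf{(C1)} play no role here.
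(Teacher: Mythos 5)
Your proof is correct, but it takes a genuinely different route from the paper's. The paper argues via the \emph{maximal conditional expected number of steps} until reaching $\goal$ (conditioned on $\Diamond\goal$): it cites an external result to conclude that this quantity $E_s$ is finite under \textbf{(C3)}, and then observes that every scheduler must admit a positive-probability path to $\goal$ of length at most $\lceil E_s\rceil$, hence of weight bounded from below. You reach the same intermediate fact — a uniform bound on the length of some positive-probability path to $\goal$ under every scheduler — by an elementary universal-attractor computation: the backward fixed point $R_0=\{\goal\}$, $R_{i+1}$ adding states all of whose actions hit $R_i$, together with the complement argument showing that a non-$\fail$ state outside the fixed point would admit a memoryless scheduler trapped in the complement, contradicting \textbf{(C3)}. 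What your approach buys is self-containedness (no appeal to results on conditional expectations), an explicit polynomial bound $n-1$ on the path length that is independent of the transition probabilities, and hence the concrete witness $K^*=-(n{-}1)\cdot\wmax$ in the defining set; what the paper's approach buys is brevity, at the cost of invoking nontrivial cited machinery. Your closing observation that only \textbf{(C3)} and finiteness are actually used is accurate — the paper's proof likewise relies only on \textbf{(C3)}. One small point of presentation: your induction for the boundedness claim is stated for possibly randomized, history-dependent schedulers via residual schedulers, which is exactly the right level of care, and the downward-closedness argument for why a nonempty defining set forces $\valueUpos{\cM,s}\in\Integer\cup\{+\infty\}$ is correct as stated.
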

\begin{proof}
 Let $s$ be a non-trap state in $\cM$.
 Let $E_s$ denote the maximal conditional expected number of steps for reaching
 $\goal$ from $s$ in $\cM$,
 under the condition $\Diamond (\goal)$.
 By assumption \textbf{(C3)} and
 the results of~\citeapx{BKKW17}, $E_s$ is finite for all states $s$ in $\cM$.
 Let $k_s = \lceil E_s \rceil$. Then, for each scheduler $\sched$ there is
 at least one path from $s$ to $\goal$ of length at most $k_s$,
 which yields $\valueUpos{\cM,s} \in \Integer \cup \{\infty\}$.
\end{proof}

\begin{lemma}
Assumptions and notations as before. 
For each $K\in \Integer$ and each state $s$ in $\cN$:
\begin{center}
  $\Pr^{\sched}_{\cM,s}(\varphi_K)>0$ for all schedulers
  $\sched$ for $\cM$\enskip
  iff \enskip
  $\Pr^{\sched}_{\cN,s}(\varphi_K)>0$
  for all schedulers
  $\sched$ for $\cN$\enskip.
\end{center}
\end{lemma}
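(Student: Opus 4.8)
The plan is to prove the negated equivalence: there is a scheduler $\sched$ for $\cM$ with $\Pr^{\sched}_{\cM,s}(\varphi_K)=0$ if and only if there is a scheduler $\tsched$ for $\cN$ with $\Pr^{\tsched}_{\cN,s}(\varphi_K)=0$. First I would fix notation. Let $D=\{s'\in S:\Pr^{\min}_{\cM,s'}(\Diamond \goal)=0\}$ be the set of states collapsed to $\fail$ in step~1, and note $\goal\notin D$ since $\goal$ is a trap and $\Pr^{\min}_{\cM,\goal}(\Diamond \goal)=1$. The non-$\fail$ states of $\cN$ are exactly the states of $\cM$ reachable from $\sinit$ that lie outside $D$; for each such state the enabled actions in $\cN$ coincide with those in $\cM$, and the only change to the transition function is that all probability mass originally leading into $D$ is redirected to the trap $\fail$, while weights are unchanged. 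As $s$ is a state of $\cN$ it is reachable from $\sinit$, hence every state reachable from $s$ is also reachable from $\sinit$ and therefore untouched by the removal of unreachable states in step~2; thus step~2 is immaterial for the dynamics started in $s$.

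The crucial observation is a path correspondence: finite paths from $s$ that stay within the surviving states are literally the same objects in $\cM$ and in $\cN$, and carry the same probabilities and accumulated weights, because transition probabilities and weights agree on surviving state-action pairs. Moreover, in $\cN$ the event $\varphi_K$ can only be realised by reaching $\goal$ with accumulated weight $\geqslant K$ \emph{before} ever entering $\fail$, since once $\fail$ is entered it is a trap from which $\goal$ is unreachable. Consequently, for any scheduler the $\cN$-probability of $\varphi_K$ equals the probability of the event ``$\goal$ is reached with weight $\geqslant K$ via a prefix staying in surviving states''.

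For the direction $\cM\Rightarrow\cN$ I would take $\sched$ with $\Pr^{\sched}_{\cM,s}(\varphi_K)=0$ and let $\tsched$ be its restriction to surviving paths, i.e.\ $\tsched(\fpath)=\sched(\fpath)$ for every finite $\cN$-path $\fpath$ from $s$ avoiding $\fail$ (and arbitrary elsewhere); this is well defined as enabled actions coincide on surviving states. By the correspondence, $\Pr^{\tsched}_{\cN,s}(\varphi_K)$ equals the $\sched$-probability of reaching $\goal$ with weight $\geqslant K$ through surviving states, which is at most $\Pr^{\sched}_{\cM,s}(\varphi_K)=0$. For the converse $\cN\Rightarrow\cM$, I would take $\tsched$ with $\Pr^{\tsched}_{\cN,s}(\varphi_K)=0$ and define a finite-memory scheduler $\sched$ for $\cM$ that mimics $\tsched$ as long as the path stays in surviving states and that switches, upon first entering a state $s'\in D$, to an MD-scheduler witnessing $\Pr^{\min}_{\cM,s'}(\Diamond \goal)=0$ (such a scheduler exists because minimal reachability probabilities are attained, so from $s'$ it reaches $\goal$ with probability $0$). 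Any $\sched$-path satisfying $\varphi_K$ must reach $\goal$ either through surviving states only, or after having entered $D$; the first case has probability equal to the corresponding $\cN$-event, namely $0$, and the second is impossible since after entering $D$ the scheduler never visits $\goal$. Hence $\Pr^{\sched}_{\cM,s}(\varphi_K)=0$.

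I expect the main obstacle to be the bookkeeping in the converse construction: one must check that the switch to $\goal$-avoiding mode does not retroactively create a $\varphi_K$-satisfying path, i.e.\ that on the sub-event where $D$ is entered the prefix before entering $D$ almost surely already fails to hit $\goal$ with weight $\geqslant K$. This follows because that prefix is a surviving-state path whose $\sched$-measure of hitting $\goal$ with weight $\geqslant K$ coincides with the corresponding $\tsched$-measure in $\cN$, which is zero; but making this decomposition precise, by splitting the path space according to the first visit to $D$ (respectively $\fail$) and summing over cylinder prefixes, is the step requiring the most care.
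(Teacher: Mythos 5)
Your proof is correct. The paper states this lemma without proof (it is treated as an immediate consequence of the preprocessing construction), and your argument --- passing to the negated equivalence, using the path correspondence on surviving states, and in the converse direction composing $\tsched$ with an MD-scheduler that witnesses $\Pr^{\min}_{\cM,s'}(\Diamond \goal)=0$ from the first entry into the collapsed set $D$, then decomposing the event $\varphi_K$ according to whether $D$ is entered before $\goal$ --- is exactly the routine verification the authors leave implicit.
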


Thus, the value $\valueUpos{\cM,s}$
is the maximal value $K\in \Integer \cup \{\infty\}$ such that
$\Pr^{\sched}_{\cN,s}(\varphi_K)>0$
for all schedulers $\sched$ for $\cN$.

\begin{corollary}
$\valueUpos{\cM,s}= \valueUpos{\cN,s}$
  for all states $s$ in $\cN$.
\end{corollary}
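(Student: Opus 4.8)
The plan is to obtain the corollary as an immediate order-theoretic consequence of the preceding lemma, which for every fixed threshold $K\in\Integer$ and every state $s$ of $\cN$ asserts the equivalence
\[
  \forall \sched.\ \Pr^{\sched}_{\cM,s}(\varphi_K)>0
  \quad\Longleftrightarrow\quad
  \forall \sched.\ \Pr^{\sched}_{\cN,s}(\varphi_K)>0 .
\]
First I would recall that, by definition, $\valueUpos{\cM,s}=\sup\{K\in\Integer : \forall \sched.\ \Pr^{\sched}_{\cM,s}(\varphi_K)>0\}$ and likewise for $\cN$, where both suprema are taken in $\Integer\cup\{\pm\infty\}$. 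Writing $A_{\cM,s}$ and $A_{\cN,s}$ for these two index sets, the displayed equivalence states precisely that $K\in A_{\cM,s}$ iff $K\in A_{\cN,s}$, so $A_{\cM,s}=A_{\cN,s}$; taking suprema of equal subsets of $\Integer\cup\{\pm\infty\}$ then yields $\valueUpos{\cM,s}=\valueUpos{\cN,s}$ directly.

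Before invoking this, I would record the monotonicity of the defining condition in $K$: whenever $K'\leqslant K$, any prefix witnessing $\goal\wedge(\accwgt\geqslant K)$ also witnesses $\goal\wedge(\accwgt\geqslant K')$, so $\varphi_K$ entails $\varphi_{K'}$ and consequently $\Pr^{\sched}_{\cM,s}(\varphi_{K'})\geqslant \Pr^{\sched}_{\cM,s}(\varphi_K)$ for every $\sched$. Hence each of $A_{\cM,s}$ and $A_{\cN,s}$ is downward closed, which confirms that the supremum is a genuine threshold (attained, or equal to $+\infty$) as stated in the text just before the corollary, and that the two values describe the very same quantity rather than merely sharing a numerical supremum.

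Finally I would treat the range of the values and the boundary states. The state $\fail$ of $\cN$, into which all states $s$ of $\cM$ with $\Pr^{\min}_{\cM,s}(\Diamond T)=0$ are collapsed, is a trap distinct from $\goal$, so $\Pr^{\sched}_{\cN,\fail}(\varphi_K)=0$ for all $\sched$ and all $K$; thus $A_{\cN,\fail}=\varnothing$ and $\valueUpos{\cN,\fail}=\sup\varnothing=-\infty$, matching $\valueUpos{\cM,s}=-\infty$ recorded earlier for those original states. Every other state of $\cN$ is common to $\cM$ and $\cN$, and for it the equality is exactly the argument above, so no separate case analysis is required. The only genuine difficulty in this derivation is the lemma itself, which transfers positive-probability satisfaction of $\varphi_K$ between $\cM$ and $\cN$ (via the collapse of states from which $\goal$ is not reachable under some scheduler); once that is granted, the corollary is the trivial fact that equal sets have equal suprema.
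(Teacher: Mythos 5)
Your proposal is correct and matches the paper's (implicit) argument: the corollary is stated as an immediate consequence of the preceding lemma, which asserts for each fixed $K$ the equivalence of the universal positive-probability condition in $\cM$ and $\cN$, so the two sets of admissible thresholds coincide and hence so do their suprema. Your additional remarks on downward closedness in $K$ and on the collapsed trap $\fail$ are consistent with the preprocessing discussion in the paper and do not change the route.
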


Given a scheduler $\sched$ for $\cN$ we define
\[
  K_{\sched,s}^{0} \ \ = \ \ 
   \sup \, 
   \bigl\{ \, K \in \Integer \, : \,
                 \Pr^{\sched}_{\cN,s}(\varphi_K)>0 \, 
   \bigr\}\enskip.
\]
If $\sched$ is an MD-scheduler then let
$\cN_{\sched}$ denote the Markov chain induced by $\sched$.

\begin{lemma}
 \label{AllPos-Reach-MC}
  Let $\sched$ be an MD-scheduler for $\cN$. 
  Then, for all non-trap states $s$ in
  $\cN$:
  \begin{enumerate}
  \item [(a)]
    $K_{\sched,s}^0 =+\infty$ iff $\cN_{\sched}$ has a positive cycle 
    that is reachable from $s$.
  \item [(b)]
    If $\cN_{\sched}$ 
    does not contain any positive cycle that is reachable from $s$,
    then 
    \[
       K_{\sched,s}^0 \ \ = \ \ 
       \max \ 
       \bigl\{ \ \wgt(\fpath) \ : \ 
           \text{$\fpath$ is a path from $s$ to $\goal$ in $\cN_{\sched}$} \ 
       \bigr\}\enskip.
    \]
  \end{enumerate}
  The values $K_{\sched,s}^{0}$ are computable in polynomial time.
\end{lemma}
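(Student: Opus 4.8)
The plan is to first collapse the probabilistic statement into a purely combinatorial one about weighted paths in the finite graph underlying the Markov chain $\cN_{\sched}$. Since $\sched$ is an MD-scheduler, $\cN_{\sched}$ is a finite Markov chain, and the cylinder set of any finite path built from positive-probability transitions has positive probability. Because $\goal$ is a trap, the event $\varphi_K = \Diamond(\goal \wedge (\accwgt \geqslant K))$ holds on a maximal path exactly when the path has a prefix ending in $\goal$ whose accumulated weight is at least $K$. I would therefore establish the key equivalence that $\Pr^{\sched}_{\cN,s}(\varphi_K) > 0$ holds if and only if there is a graph path $\fpath$ from $s$ to $\goal$ in $\cN_{\sched}$ with $\wgt(\fpath) \geqslant K$. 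This yields at once
\[
  K_{\sched,s}^{0} \ = \ \sup\,\bigl\{\,\wgt(\fpath) : \fpath \text{ is a path from } s \text{ to } \goal \text{ in } \cN_{\sched}\,\bigr\},
\]
reducing both parts to a statement about the longest $s$-to-$\goal$ path weight.

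For part~(a) I would argue both directions through this supremum. For the ``if'' direction, assume $\cN_{\sched}$ has a positive cycle $\cycle$ reachable from $s$. Every state on $\cycle$ is a non-trap state, and assumption \textbf{(C3)} guarantees that $\goal$ is graph-reachable in $\cN_{\sched}$ from every non-trap state; so picking a state $c$ on $\cycle$ and concatenating a path $s \to c$, then $m$ traversals of $\cycle$, then a path $c \to \goal$ gives paths whose weight tends to $+\infty$ as $m$ grows, whence $K_{\sched,s}^{0} = +\infty$. For the ``only if'' direction I would use a cycle-decomposition argument: if $K_{\sched,s}^{0}=+\infty$ then there are $s$-to-$\goal$ paths whose weight exceeds $(n{-}1)\cdot\wmax$ (an upper bound on the weight of any simple path, where $n$ is the number of states and $\wmax$ the maximal absolute transition weight); decomposing such a heavy path into a simple path plus cycles shows that at least one of the exposed cycles, necessarily reachable from $s$, must be positive.

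For part~(b), assuming no positive cycle is reachable from $s$, part~(a) already gives finiteness of $K_{\sched,s}^{0}$. I would then show the supremum is attained by a simple path: removing a cycle from an $s$-to-$\goal$ path changes its weight by the negative of that cycle's weight, and since every reachable cycle is non-positive, such a removal never decreases the weight. Iterating produces a simple path of at least the same weight, so $K_{\sched,s}^{0}$ equals the maximum weight over the finitely many simple $s$-to-$\goal$ paths, which is exactly the claimed $\max$. For polynomial-time computability I would restrict to the subgraph of $\cN_{\sched}$ reachable from $s$ and run Bellman--Ford on negated weights: a negative cycle after negation is a positive cycle before it, signalling $K_{\sched,s}^{0}=+\infty$ via part~(a), and otherwise the shortest-path distance to $\goal$ (re-negated) delivers the longest-path value of part~(b).

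The main obstacle I expect is the careful bookkeeping in the ``only if'' direction of part~(a): one must make the decomposition of an arbitrarily heavy path into a simple path and cycles fully rigorous, argue that the positive cycle it exposes is genuinely reachable from $s$ rather than merely present somewhere in $\cN_{\sched}$, and invoke \textbf{(C3)} at precisely the right point so that pumping a positive cycle can always be completed into a path that actually terminates in $\goal$. The remaining steps are routine once the combinatorial reformulation and the equivalence ``positive probability $=$ graph reachability'' are in place.
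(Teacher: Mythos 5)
Your proposal is correct and follows essentially the same route as the paper: the paper declares (a) and (b) obvious and only spells out the algorithmic part, namely running standard shortest-path algorithms on the graph of $\cN_{\sched}$ with all weights multiplied by $-1$, which is exactly your Bellman--Ford argument. The details you supply (reduction of positive probability to graph reachability via cylinder sets, pumping a reachable positive cycle using \textbf{(C3)} to complete it into a path to $\goal$, and cycle removal to reduce to simple paths) are the intended justification of the "obvious" claims and are sound.
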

\begin{proof}
Statements (a) and (b) are obvious. To check the existence of
positive cycles and to compute the values $K_{\sched,s}^{0}$ 
in (b) we can apply
standard shortest-path algorithms to the weighted graph that arises
from $\cN_{\sched}$ by ignoring the transition probabilities and multiplying
all weights with $-1$.
\end{proof}

Let $S$ be the state space of $\cN$ without state $\fail$.
If $(s,\alpha)$ is a state-action pair in $\cN$ then
$\Post(s,\alpha)=\{t\in S \cup \{\fail\}: P(s,\alpha,t)>0\}$.

\begin{lemma}
\label{K-U0-fin-value-no-pos-cycle}
  For each state $s\in S$ we have that
  $\valueUpos{\cN,s} \in \Integer$ iff
  there is at least one MD-scheduler $\sched$ for $\cN$
  such that the Markov chain $\cN_{\sched}$ induced by $\sched$ has no
  positive cycle that is reachable from $s$.
  In this case, $\valueUpos{\cN,s} = \min_{\sched} K_{\sched,s}^0$
  where the minimum ranges over all MD-schedulers $\sched$ 
  for $\cN$.
\end{lemma}
\begin{proof}
``$\Longleftarrow$'': 
If there is an MD-scheduler $\sched$ without positive
cycles, then %
$\valueUpos{\cN,s}$
is bounded from above by the maximal weight of the
$\sched$-paths from $s$ to $\goal$. This value is finite.

``$\Longrightarrow$'': 
Suppose $K\eqdef \valueUpos{\cN,s} \in \Integer$.
Because~$K$ is a maximum,
there is some scheduler $\sched$ such that $\wgt(\fpath) \leqslant K$
for all $\sched$-paths from $s$ to $\goal$.
But then:
\[
  \Pr^{\sched}_{\cN,s}
   \bigl(\ \Diamond \fail \ \vee \ 
           \Diamond (\goal \wedge (\accwgt \leqslant K)) \ \bigr)
  \ \ = \ \ 1\enskip.
\]
As $\cN$ has no end components, $\cN$ has no (positively or negatively)
weight-divergent scheduler.
Hence, we may apply 
Lemma \ref{MD-sufficient-eventually-almostsure-max-equals-1}
to obtain the existence of an MD-scheduler $\tsched$ such that
\[
  \Pr^{\tsched}_{\cN,s}
   \bigl(\ \Diamond \fail \ \vee \ 
           \Diamond (\goal \wedge (\accwgt \leqslant K)) \ \bigr)
  \ \ = \ \ 1\enskip.
\]
But then the weight of all $\tsched$-paths from $s$ to $\goal$ is bounded
by $K$. Lemma \ref{AllPos-Reach-MC} yields that $\tsched$ has no
positive cycle that is reachable from $s$.
The last part is obvious from Lemma \ref{AllPos-Reach-MC}.
\end{proof}

Note that the previous lemma is sufficient to derive an exponential-time algorithm
to compute the values: one can enumerate all MD-schedulers and pick the one with
the best value. In the remaining of this section, we will show how to compute
these values in pseudo-polynomial time.

\begin{lemma}
\label{S-infty}
   Let $S_{\infty}=\{s\in S : \valueUpos{\cN,s}=\infty\}$. 
   Then for each state $s\in S$ the following statements are equivalent:
   \begin{enumerate}
   \item [(a)]
      $s\in S_{\infty}$
   \item [(b)]
      For each $w\in \Integer$ and each scheduler $\sched$ there is
      an $\sched$-path $\fpath$ from $s$ to $\goal$ with
      $\wgt(\fpath) \geqslant w$.
   \item [(c)]
      $\Pr^{\min}_{\cN,s}(\Diamond S_{\infty}) >0$
   \end{enumerate}
\end{lemma}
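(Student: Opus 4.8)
The plan is to prove the two equivalences (a) $\Leftrightarrow$ (b) and (a) $\Leftrightarrow$ (c), each by unfolding what $\valueUpos{\cN,s}=\infty$ means at the level of positive-probability finite paths reaching $\goal$. Throughout I write $\varphi_K=\Diamond(\goal\wedge(\accwgt\geqslant K))$ and use the definition $\valueUpos{\cN,s}=\sup\{K\in\Integer : \forall\sched.\ \Pr^{\sched}_{\cN,s}(\varphi_K)>0\}$. The role assumptions \textbf{(C1)}--\textbf{(C4)} play here is minor; the argument is essentially combinatorial on the support of schedulers.

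First I would settle (a) $\Leftrightarrow$ (b) by a definitional unfolding. Since $\goal$ is a trap, a maximal path satisfies $\varphi_K$ exactly when it has a finite prefix ending in $\goal$ of weight at least $K$; hence $\Pr^{\sched}_{\cN,s}(\varphi_K)>0$ holds iff there is a positive-probability finite $\sched$-path from $s$ to $\goal$ of weight $\geqslant K$. Because $\varphi_{K'}$ implies $\varphi_K$ for $K'\geqslant K$, the set $\{K:\forall\sched.\ \Pr^{\sched}_{\cN,s}(\varphi_K)>0\}$ is downward closed, so its supremum equals $+\infty$ iff the set is all of $\Integer$, i.e. iff for every $K\in\Integer$ and every scheduler $\sched$ such a path exists. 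Renaming $K$ to $w$, this is precisely (b), giving (a) $\Leftrightarrow$ (b).

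Next, (a) $\Rightarrow$ (c) is immediate: if $s\in S_\infty$ then $s\models\Diamond S_\infty$ along every path, so $\Pr^{\sched}_{\cN,s}(\Diamond S_\infty)=1$ for every $\sched$ and thus $\Pr^{\min}_{\cN,s}(\Diamond S_\infty)=1>0$. For (c) $\Rightarrow$ (a) I would argue via residual schedulers. Assume $\Pr^{\min}_{\cN,s}(\Diamond S_\infty)>0$; then for every scheduler $\sched$ we have $\Pr^{\sched}_{\cN,s}(\Diamond S_\infty)\geqslant\Pr^{\min}_{\cN,s}(\Diamond S_\infty)>0$. Fix an arbitrary $K\in\Integer$ and scheduler $\sched$. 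By the previous line there is a positive-probability finite $\sched$-path $\fpath_0$ from $s$ to some $t\in S_\infty$; set $w_0=\wgt(\fpath_0)$. Since $t\in S_\infty$, applying the characterisation (b) at $t$ to the residual scheduler $\residual{\sched}{\fpath_0}$ and threshold $K-w_0$ produces a positive-probability $\residual{\sched}{\fpath_0}$-path $\fpath_1$ from $t$ to $\goal$ with $\wgt(\fpath_1)\geqslant K-w_0$. Then $\fpath_0;\fpath_1$ is a positive-probability $\sched$-path from $s$ to $\goal$ of weight $w_0+\wgt(\fpath_1)\geqslant K$, so $\Pr^{\sched}_{\cN,s}(\varphi_K)>0$. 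As $K$ and $\sched$ were arbitrary, $s\in S_\infty$.

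The definitional unfolding and the trivial direction carry no difficulty; the only step demanding care is the concatenation in (c) $\Rightarrow$ (a). The main obstacle will be the bookkeeping there: one must verify that a positive-probability $\residual{\sched}{\fpath_0}$-path out of $t$, once prepended with $\fpath_0$, yields a genuine positive-probability $\sched$-path from $s$ — which is exactly what the definition $(\residual{\sched}{\fpath_0})(\fpath')=\sched(\fpath_0;\fpath')$ for $\first(\fpath')=\last(\fpath_0)$ guarantees — and that accumulated weights add along the concatenation. Once these two points are made explicit, the equivalence follows.
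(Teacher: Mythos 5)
Your proof is correct and follows essentially the same route as the paper: the paper also treats (a)\,$\Leftrightarrow$\,(b) and (a)\,$\Rightarrow$\,(c) as immediate, and establishes the remaining implication by picking a positive-probability $\sched$-path $\fpath'$ from $s$ into $S_\infty$, invoking the characterization (b) at the reached state for the residual scheduler $\residual{\sched}{\fpath'}$ with the shifted threshold, and concatenating. Your added remarks on the downward closure of the set of achievable thresholds and on the additivity of weights under concatenation are correct elaborations of steps the paper leaves implicit.
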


\begin{proof}
``(a) $\Longleftrightarrow$ (b)'' 
and ``(a) $\Longrightarrow$ (c)'' are trivial.
To prove ``(c) $\Longrightarrow$ (b)'' we suppose 
$\Pr^{\min}_{\cN,s}(\Diamond S_{\infty}) >0$.
Let $w\in \Integer$ and $\sched$ be a scheduler.
As $\Pr^{\sched}_{\cN,s}(\Diamond S_{\infty}) >0$ there is
a state $t\in S_{\infty}$ and an $\sched$-path $\fpath'$ from $s$ to $t$.
Let $w'=\wgt(\fpath')$.
We now consider the residual scheduler $\sched'=\residual{\sched}{\fpath'}$.
As (a) and (b) are equivalent and $t\in S_{\infty}$, 
there is an $\sched'$-path $\fpath''$ from $t$ to $\goal$ with
$\wgt(\fpath'')\geqslant w-w'$.
But then $\fpath \eqdef \fpath';\fpath''$ is an $\sched$-path from
$s$ to $\goal$ with 
\[
  \wgt(\fpath)\ \ = \ \ \wgt(\fpath') + \wgt(\fpath'') \ \ \geqslant \ \ 
  w' \ + \ (w-w') \ \ = \ \ w\enskip.
\]
This completes the proof of Lemma \ref{S-infty}.
\end{proof}

\begin{remark}[Reduction to mean-payoff games]
\label{infty-via-MP-games}
{\rm
   Checking whether $\valueUpos{\cM,s} =+\infty$
   is polynomially reducible to non-stochastic two-player mean-payoff games.
   For this, we regard MDPs as non-stochastic
   two-player games (action player against probabilistic player). 
   The objective of the action player is
   to ensure that the mean payoff is nonpositive.
   Then,
   ``all MD-schedulers of an MDP have a positive cycle''
   is equivalent to
   ``there is no winning strategy for the action player''.
   Thus, Lemma \ref{K-U0-fin-value-no-pos-cycle} 
   yields a polynomial reduction 
   to the complement of non-stochastic mean-payoff games with threshold 0.
 }
\end{remark}

The previous remark allows us to compute $S_\infty$ in pseudo-polynomial time
since mean-payoff games can be solved in pseudo-polynomial time.
In the rest of this section, we will assume that~$S_\infty$ is given, and show how to compute the values in polynomial time. The overall complexity will thus be pseudo-polynomial time.

\tudparagraph{1ex}{Computing the values in $\cN$.}
Suppose we have an oracle to compute $S_{\infty}$.
Let
$$
   S_{\text{fin}} \ \ \eqdef \ \ \ 
   S\setminus S_{\infty}
   \ \ = \ \
   \bigl\{ \ s \in S \ : \
      \valueUpos{\cN,v} \in \Integer \ \bigr\}\enskip.
$$
For each state $s\in S_{\text{fin}}$ we define
$\Act_{\text{fin}}(s)$ as the set of actions $\alpha \in \Act(s)$
such that $P(s,\alpha,s')>0$ implies $s'\in S_{\text{fin}}$.
 Note that $\Act_{\text{fin}}(s)$ is nonempty if
 $s \in S_{\text{fin}}\setminus \{\goal\}$.
We have $\Pr^{\min}_{\cN,s}(\Diamond S_{\infty})=0$ for all
states $s \in S_{\text{fin}}$.

The values $\valueUpos{\cN,s}$ for
the states $s \in S_{\text{fin}}\setminus \{\goal\}$
satisfy the following equation:
 \[
   \valueUpos{\cN,s}\ \ = \ \
   \min \ 
   \bigl\{ \ K_{s,\alpha} \ : \ \alpha \in \Act_{\text{fin}}(s) \
   \bigr\}
 \]
 where for $(s,\alpha) \in \cN$
 \[
      K_{s,\alpha} \ \ = \ \
      \wgt(s,\alpha)  \ + \
      \max \ 
        \bigl\{ \valueUpos{\cN,v} \ : \ 
                  v\in \Post(s,\alpha) \setminus \{\fail\} \ \bigr\}\enskip.
\]
Recall that $\valueUpos{\cN,\goal}=0$.

We now provide a polynomial-time algorithm for the computation of the
values $\valueUpos{\cN,s}$ for 
$s \in S_{\text{fin}}$.
Let $n = |S_{\text{fin}}|$ denote the number of states in $S_{\text{fin}}$.
\begin{description}
\item [{\it Initialization.}] 
  Let $K_{\goal}^{(j)}=0$ for $j=0,1,\ldots,n{-}1$. 
  For all states $s\in S_{\text{fin}} \setminus \{\goal\}$ we start with
  $K_s^{(0)} \ = \ -\infty$.
\item [{\it Iteration.}] 
  For $j=1,\ldots,n{-}1$ we compute the following values
  for all states $s\in S\setminus \{\goal\}$ 
  and all actions $\alpha \in \Act_{\text{fin}}(s)$:
  \[
     K_{s,\alpha}^{(j)} \ \ = \ \ 
     \wgt(s,\alpha) \ + \!\!
     \max_{t\in \Post(s,\alpha)} \!\! K_t^{(j-1)}
   \qquad \text{and} 
   \qquad
     K_s^{(j)} \ = \ 
     \min \, 
        \bigl\{ \, K_{s,\alpha}^{(j)} \, : \, 
                                \alpha \in \Act_{\text{fin}}(s) \, \bigr\}.
  \]
\end{description}

\begin{lemma}[Soundness]
    The above algorithm correctly computes the values 
    $K_s^{(n-1)}=\valueUpos{\cN,s}$ for all states 
    $s\in S_{\text{fin}}$.
\end{lemma}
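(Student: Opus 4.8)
The plan is to read the procedure as a Bellman--Ford computation for the fixed-point equation for $\valueUpos{\cN,\cdot}$ recalled just above, and to prove that value iteration started at $-\infty$ reaches this fixed point after $n{-}1$ rounds. Write $V^\ast_s=\valueUpos{\cN,s}$ for $s\in S_{\text{fin}}$ (with $V^\ast_\goal=0$), and let $\Phi$ be the operator defined by $(\Phi V)(\goal)=0$ and, for $s\in S_{\text{fin}}\setminus\{\goal\}$, $(\Phi V)(s)=\min_{\alpha\in\Act_{\text{fin}}(s)}\bigl[\wgt(s,\alpha)+\max_{t\in\Post(s,\alpha)}V(t)\bigr]$, so that $K^{(j)}=\Phi^{j}(K^{(0)})$ and, by the stated equation, $\Phi V^\ast=V^\ast$. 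It then suffices to show (i) $K^{(j)}\le V^\ast$ for all $j$ and (ii) $K^{(n-1)}\ge V^\ast$; together these give $K^{(n-1)}=V^\ast$.

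Part (i) is a routine monotonicity argument. The operator $\Phi$ is monotone for the pointwise order on $\Integer\cup\{-\infty\}$; moreover $K^{(0)}\le V^\ast$ (since $K^{(0)}_s=-\infty$ for $s\ne\goal$ and $K^{(0)}_\goal=0=V^\ast_\goal$) and $K^{(0)}\le\Phi K^{(0)}=K^{(1)}$. Applying $\Phi^{j}$ and using monotonicity propagates both facts, yielding that the iterates are non-decreasing, $K^{(0)}\le K^{(1)}\le\cdots$, and that they stay below the fixed point, $K^{(j)}=\Phi^{j}K^{(0)}\le\Phi^{j}V^\ast=V^\ast$.

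For part (ii) I would first invoke Lemma~\ref{K-U0-fin-value-no-pos-cycle} to fix a memoryless scheduler $\sched^\ast$ attaining the minimum, whose induced chain $\cN_{\sched^\ast}$ has no positive cycle reachable from any $s\in S_{\text{fin}}$; by Lemma~\ref{AllPos-Reach-MC} this makes $V^\ast_s$ the maximal weight of a path from $s$ to $\goal$ in $\cN_{\sched^\ast}$, a maximum attained on a \emph{simple} path of length at most $n{-}1$. The matching lower bound would then follow by an induction on the length of these optimal simple paths, using the Bellman inequality $V^\ast_{s}\le\wgt(s,\alpha)+\max_t V^\ast_t$ to telescope the weight along the path, and the absence of end components of $\cN$ (property \textbf{(C1)}) to ensure that $\goal$ is genuinely reached within the available budget of $n{-}1$ steps.

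The step I expect to be the main obstacle is exactly this length bound. A naive length-indexed induction breaks, because at an intermediate round $j<n{-}1$ the minimiser may momentarily prefer an action lying on a positive cycle of the $\Act_{\text{fin}}$-subgraph whose iterate still lies strictly below its fixed-point value, thereby corrupting the values of upstream states (one can build a three-state instance where a positive self-loop action keeps an iterate at $1,2,3,\ldots$ while its true value is large). The clean way around this is to observe that a minimising scheduler never benefits from a positive cycle: by the Bellman equation every action $\alpha$ satisfies $\wgt(s,\alpha)+\max_t V^\ast_t\ge V^\ast_s$, and Lemma~\ref{K-U0-fin-value-no-pos-cycle} guarantees a positive-cycle-free minimiser, so no optimal choice lies on a positive cycle and the positive-cycle actions may be discarded from $\Act_{\text{fin}}$ without changing $V^\ast$ and without stranding any state of $S_{\text{fin}}$. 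On the resulting positive-cycle-free sub-MDP every optimal value is realised by a simple path of length at most $n{-}1$, the values of intermediate states along such paths are no longer depressed by growing positive-cycle iterates, and the length-indexed induction goes through, giving $K^{(n-1)}_s=\valueUpos{\cN,s}$ for all $s\in S_{\text{fin}}$.
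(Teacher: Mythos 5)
Your overall skeleton---monotone value iteration bounded above by the fixed point, plus a matching lower bound at horizon $n{-}1$---is the same as the paper's, and your part (i) coincides with the paper's first induction ($K_s^{(j)}\leqslant K_s^{(j+1)}\leqslant\valueUpos{\cN,s}$). You have also correctly located the crux in part (ii): bounding the convergence time by $n{-}1$ despite positive cycles in the $\Act_{\text{fin}}$-subgraph. (The paper instead identifies $K_s^{(j)}$ with the finite-horizon value $\min_\sched\maxwgt{(j)}{s}{\sched}$ by a second induction and then asserts that the sub-MDP $\cN_{\text{fin}}$ contains no positive cycle at all, so that removing cycles from a long witness path cannot decrease its weight.)

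The genuine gap is your repair step. The claim that ``no optimal choice lies on a positive cycle, so the positive-cycle actions may be discarded from $\Act_{\text{fin}}$ without changing $V^\ast$ and without stranding any state of $S_{\text{fin}}$'' is false. Take states $s,t,\goal$ (plus an unused $\fail$), a single action $\alpha$ at $s$ with $P(s,\alpha,t)=P(s,\alpha,\goal)=\tfrac12$ and $\wgt(s,\alpha)=5$, and two actions at $t$: $\gamma$ with $P(t,\gamma,s)=P(t,\gamma,\goal)=\tfrac12$, $\wgt(t,\gamma)=0$, and $\beta$ with $P(t,\beta,\goal)=1$, $\wgt(t,\beta)=0$. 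Assumptions \textbf{(C1)}--\textbf{(C4)} hold, $\valueUpos{\cN,s}=5$ and $\valueUpos{\cN,t}=0$ (both attained by the always-$\beta$ scheduler), so $S_\infty=\varnothing$ and every action belongs to $\Act_{\text{fin}}$; yet $s\xrightarrow{\alpha}t\xrightarrow{\gamma}s$ is a positive cycle whose edge $(s,\alpha)$ is the \emph{unique}---hence optimal---action at $s$, and discarding it strands $s$. The intended justification is a non sequitur: the inequality $\wgt(s,\alpha)+\max_t V^\ast_t\geqslant V^\ast_s$ holds for \emph{every} action and says nothing about cycles, because the $\max$ over successors need not be attained at the cycle's next state (here $\max(V^\ast_t,V^\ast_\goal)=0$ is attained at $\goal$, not at $t$). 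Consequently the positive-cycle-free sub-MDP on which your length-indexed induction is supposed to run does not exist, and the lower bound $K_s^{(n-1)}\geqslant\valueUpos{\cN,s}$ remains unproved; any correct argument has to engage with exactly those positive cycles of $\Act_{\text{fin}}$ that survive, e.g.\ by arguing at the level of the finite-horizon game values $\min_\sched\maxwgt{(j)}{s}{\sched}$ rather than by pruning the action sets.
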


\begin{proof}
 Let $\cN_{\text{fin}}$ 
 be the largest sub-MDP of $\cN$ that does not contain
 any state of $S_{\infty}$. 
 That is, the state space of $\cN_{\text{fin}}$ is
 $S_{\text{fin}}\cup\{\fail\}$ and
 $\cN_{\text{fin}}$ results from $\cN$
 by removing the states $t\in S_{\infty}$ and all state-action pairs
 $(s,\alpha)$ with $P(s,\alpha,t)>0$ for some $t\in S_{\infty}$.
 Thus, the action set of each state $s \in S_{\text{fin}}$ is
 $\Act_{\text{fin}}(s)$.
 Then, $\cN_{\text{fin}}$ has no positive cycle
 (Lemma \ref{K-U0-fin-value-no-pos-cycle}).

  By induction on $j$, we get
  for all states $s\in S_{\text{fin}}$ 
  and all actions $\alpha \in \Act_{\text{fin}}(s)$:
    $$
      K_{s,\alpha}^{(j)} \ \leqslant \ 
      K_{s,\alpha}^{(j+1)} \ \leqslant \ 
      K_{s,\alpha}
      \ \ \ \text{and} \ \ \
      K_s^{(j)} \ \leqslant \ 
      K_s^{(j+1)} \ \leqslant \ \valueUpos{\cN,s}\enskip.
    $$
  Given a scheduler $\sched$ for $\cN_{\text{fin}}$, let
  \[
    \maxwgt{(j)}{s}{\sched}
    \ \ = \ \
    \max
      \bigl\{ \ \wgt(\fpath) \ :
                \text{$\fpath$ is a $\sched$-path from $s$ to $\goal$
                      with $|\fpath|\leqslant j$} \ 
      \bigr\}\enskip.
  \]
  Then, by induction on $j$ we get:
  \[
    K_s^{(j)} \ \ = \ \ 
    \min_{\sched} \ \maxwgt{(j)}{s}{\sched}
  \]
  where $\sched$ ranges over all schedulers for  $\cN_{\text{fin}}$.
  Moreover, there exists a scheduler $\sched_j$ for  $\cN_{\text{fin}}$
  such that
  $K_s^{(j)} = \maxwgt{(j)}{s}{\sched_j}$.
  
  Let now $\sched=\sched_{n-1}$.
  As $\cN_{\text{fin}}$ has no positive cycles, we have: 
  If $\fpath$ is a finite path of length at least $n$, then 
  $\wgt(\fpath) \leqslant \wgt(\fpath')$ where $\fpath'$ results from
  $\fpath$ by removing all cycles.
  Thus, $\maxwgt{(n-1)}{s}{\sched}$ 
  is the maximal weight of a $\sched$-path from
  $s$ to $\goal$. But then $K_s^{(n-1)} =  \valueUpos{\cN,v}$ 
  for all states $s \in S_{\text{fin}}$.
\end{proof}

\begin{corollary}
  The values $\valueUpos{\cM,s}$ for the states 
  $s \in S_{\text{fin}}$ are
  computable in polynomial time, 
  assuming an oracle to compute the set~$S_{\infty}$.
\end{corollary}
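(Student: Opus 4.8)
The plan is to assemble the corollary directly from the correctness of the iterative algorithm together with the previously established equivalence $\valueUpos{\cM,s}=\valueUpos{\cN,s}$, and then to verify that each phase of the procedure runs in polynomial time once the set $S_{\infty}$ is supplied by the oracle. Correctness is essentially inherited from the Soundness Lemma, so the only real content of this corollary is the complexity bookkeeping.

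First I would invoke the Soundness Lemma, which guarantees that the iteration produces $K_s^{(n-1)}=\valueUpos{\cN,s}$ for every $s\in S_{\text{fin}}$, where $n=|S_{\text{fin}}|$. Composing this with the preceding corollary $\valueUpos{\cM,s}=\valueUpos{\cN,s}$ yields the sought values for $\cM$. Finiteness on $S_{\text{fin}}$ is not an issue, since by construction $S_{\text{fin}}=\{s:\valueUpos{\cN,s}\in\Integer\}$.

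For the time bound I would treat the three phases separately. The preprocessing that builds $\cN$ from $\cM$ — collapsing all states $s$ with $\Pr^{\min}_{\cM,s}(\Diamond T)=0$ into $\fail$ and discarding the states unreachable from $\sinit$ — uses only reachability and minimal-reachability computations, hence runs in polynomial time, and $\cN$ has size polynomial in that of $\cM$. Given the oracle output $S_{\infty}$, the set $S_{\text{fin}}=S\setminus S_{\infty}$ and the restricted action sets $\Act_{\text{fin}}(s)$ are read off in polynomial time. Finally the main loop performs $n{-}1\leqslant|S|$ rounds; in each round, for every pair $(s,\alpha)$ with $\alpha\in\Act_{\text{fin}}(s)$ it computes $K_{s,\alpha}^{(j)}=\wgt(s,\alpha)+\max_{t\in\Post(s,\alpha)}K_t^{(j-1)}$ and then $K_s^{(j)}=\min_{\alpha}K_{s,\alpha}^{(j)}$, which is a single scan over the transitions. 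Thus the loop is polynomial, and every finite value $K_s^{(j)}$ stays polynomially bounded because it is the weight of a path of length at most $j\leqslant n{-}1$.

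The point that genuinely carries the argument lies not in this corollary but in the Soundness Lemma it rests on, namely the justification that $n{-}1$ rounds suffice. This in turn relies on Lemma~\ref{K-U0-fin-value-no-pos-cycle}, which forces $\cN_{\text{fin}}$ to be free of positive cycles, so that any path of length $\geqslant n$ to $\goal$ can be shortened by deleting cycles without decreasing its weight; the maximal-weight path from each $s\in S_{\text{fin}}$ to $\goal$ is therefore realized by a simple path, and the Bellman--Ford-style recurrence stabilizes after $n{-}1$ iterations. For the corollary proper, the only obstacle is confirming that the oracle call for $S_{\infty}$ is the sole non-polynomial ingredient, which holds by construction.
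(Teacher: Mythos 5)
Your proposal is correct and follows essentially the same route as the paper: the corollary is stated there without a separate proof precisely because it is the combination of the soundness lemma ($K_s^{(n-1)}=\valueUpos{\cN,s}$), the identity $\valueUpos{\cM,s}=\valueUpos{\cN,s}$, and the evident polynomial cost of the preprocessing and the $n{-}1$ Bellman--Ford-style rounds, which is exactly what you spell out. Your added remark that the values remain polynomially bounded in representation size is a harmless but welcome piece of bookkeeping the paper leaves implicit.
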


By Lemma \ref{K-U0-fin-value-no-pos-cycle},
if $\cM$ has no positive cycles then $S_{\infty}$ is empty.
Hence:

\begin{corollary}[Complexity of \Uposdwr{} for MDPs without positive cycles]
   If $\cM$ has no positive cycles then
   problem \Uposdwr{} is in P and the values
   $\valueUpos{\cM,s}$ are computable in polynomial time.
\end{corollary}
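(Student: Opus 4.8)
The plan is to show that, under the hypothesis that $\cM$ has no positive cycle, the only obstruction to a purely polynomial-time procedure — namely the computation of the set $S_\infty$ of states with value $+\infty$, which in general requires solving mean-payoff games (Remark~\ref{infty-via-MP-games}) — disappears, so that the value-iteration algorithm established above runs directly in polynomial time. Following Lemma~\ref{lem:Uposdwr-part-case} I would first reduce to the optimization form in which $T\setminus T^*=\{\goal\}$ with $\goal$ a trap, and then to the preprocessed MDP $\cN$ (no end components, two traps $\goal$ and $\fail$). The decision problem \Uposdwr{} is then answered by computing $\valueUpos{\cN,\sinit}$ and comparing it against the given threshold.

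The crux is to argue that $S_\infty$ is computable in polynomial time. I would split $S_\infty$ into two parts. First, any state $s$ from which $T^*$ is reached with positive probability under \emph{every} scheduler, i.e. $\Pr^{\min}_{\cM,s}(\Diamond T^*)>0$, already satisfies $\valueUpos{\cM,s}=+\infty$, since the $T^*$-disjuncts of $\varphi_K$ carry the threshold $-\infty$ and hence hold whenever $T^*$ is reached, regardless of $K$; positivity of the minimal reachability probability is decidable in polynomial time by standard MDP graph analysis. These states must be peeled off \emph{before} the single-goal reduction, since the $T^*$-handling in Lemma~\ref{lem:Uposdwr-part-case} introduces positive self-loops and would spuriously create positive cycles. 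Second, on the remaining part no further states have value $+\infty$: because $\cM$ has no positive cycle, and the preprocessing only collapses non-$\goal$-reaching states into the trap $\fail$ and deletes unreachable states — operations that create no new cycle, in particular no positive cycle — the MDP $\cN$ inherits the absence of positive cycles. By Lemma~\ref{K-U0-fin-value-no-pos-cycle} every MD-scheduler of $\cN$ then induces a Markov chain without a positive cycle reachable from any state, whence $\valueUpos{\cN,s}\in\Integer$ for all $s$ and, by Lemma~\ref{S-infty}, $S_\infty=\varnothing$ on this part.

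Consequently the oracle for $S_\infty$ required by the value-iteration algorithm above is replaced by a polynomial-time routine, so $S_{\text{fin}}=S$ and all finite values $\valueUpos{\cN,s}$ are computed by the iteration in polynomial time; together with the poly-time-identified states of value $+\infty$ this yields the whole vector $(\valueUpos{\cM,s})_s$ in polynomial time, and hence membership of \Uposdwr{} in \PTIME. The main — and essentially only — obstacle is the verification just sketched that the no-positive-cycle hypothesis survives the reductions: precisely, that $T^*$ must be treated separately by reachability rather than through the self-loop gadget of Lemma~\ref{lem:Uposdwr-part-case}, and that collapsing to the trap $\fail$ together with pruning unreachable states cannot manufacture a positive cycle. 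Once this is in place, the remaining steps are a direct appeal to the already-established algorithm.
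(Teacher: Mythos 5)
Your proposal is correct and rests on the same two pillars as the paper's own (one-line) argument: Lemma~\ref{K-U0-fin-value-no-pos-cycle} gives $S_\infty=\varnothing$ in the absence of positive cycles, and the soundness of the value iteration on $S_{\text{fin}}$ then yields polynomial time. Where you go beyond the paper is in worrying about whether the hypothesis survives the reductions: you correctly observe that the $T^*$-gadget of Lemma~\ref{lem:Uposdwr-part-case} manufactures positive self-loops, so you peel off the states with $\Pr^{\min}_{\cM,s}(\Diamond T^*)>0$ (value $+\infty$) by a plain graph analysis first, and you check that collapsing into $\fail$ and pruning cannot create cycles -- points the paper silently glosses over by applying its corollary to the already-normalized instance. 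The only step you leave implicit is why, for the surviving states $s$, the remaining computation is a $T^*$-free instance of the same shape: one should note that any scheduler reaching $T^*$ with positive probability trivially satisfies $\varphi_K$, so the binding schedulers from such $s$ are exactly those avoiding $T^*$ (equivalently, avoiding the peeled-off set) almost surely, and these are precisely the schedulers of the sub-MDP obtained by deleting that set together with all state-action pairs entering it -- a sub-MDP that again has no positive cycles. With that one sentence added, your argument is complete and, if anything, more careful than the paper's.
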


For the general case we have:

\begin{theorem}[Complexity of \Uposdwr]
  The decision problem \Uposdwr{} is in coNP and
  the values $\valueUpos{\cM,s}$ for the states
  $s$ in $\cM$ are computable in pseudo-polynomial time.
\end{theorem}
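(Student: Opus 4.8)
The plan is to assemble the final statement directly from the lemmas already proved in this subsection, since all the substantive work has been done; the theorem is a packaging of those results. For the $\coNP$ membership of the decision problem, I would simply invoke Lemma~\ref{lem:Uposdwr-reduc}, which reduces \Uposdwr{} (after the preprocessing of Lemma~\ref{lem:Uposdwr-part-case} to a single trap goal state) to the resolution of a two-player mean-payoff B\"uchi game and thereby places the problem in $\NP\cap\coNP$; in particular it is in $\coNP$.

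For the pseudo-polynomial computation of the values $\valueUpos{\cM,s}$, I would proceed in three stages. First, apply Lemma~\ref{lem:Uposdwr-part-case} to reduce to the case $T{=}\{\goal\}$ with $\goal$ a trap, and carry out the preprocessing described above to obtain the MDP $\cN$ satisfying assumptions \textbf{(C1)}--\textbf{(C4)}, so that $\cN$ has no end components, two traps $\goal$ and $\fail$, and $\valueUpos{\cM,s}=\valueUpos{\cN,s}$ for all states with $\Pr^{\min}_{\cM,s}(\Diamond \goal)>0$ (the remaining states having value $-\infty$). Second, compute the set $S_{\infty}=\{s: \valueUpos{\cN,s}=+\infty\}$ using the characterisation of Lemma~\ref{S-infty} together with the reduction of Remark~\ref{infty-via-MP-games} to non-stochastic two-player mean-payoff games; since such games are solvable in pseudo-polynomial time, this stage runs in pseudo-polynomial time. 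Third, on the remaining states $S_{\text{fin}}=S\setminus S_{\infty}$, run the fixed-point iteration computing the values $K_s^{(j)}$ via the $\min/\max$ recurrence over $\Act_{\text{fin}}$, whose soundness has already been established, yielding the finite values $\valueUpos{\cN,s}=K_s^{(n-1)}$ exactly in polynomial time.

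The correctness of this last stage rests on the dichotomy supplied by Lemma~\ref{K-U0-fin-value-no-pos-cycle}: a state has finite value iff some MD-scheduler induces a Markov chain with no reachable positive cycle, in which case the value equals the minimum over such schedulers of the maximal weight of a path to $\goal$. The iteration implements precisely this optimisation on the positive-cycle-free sub-MDP $\cN_{\text{fin}}$, and terminates after $|S_{\text{fin}}|{-}1$ rounds because, in the absence of positive cycles, any witnessing path may be shortened to length below $|S_{\text{fin}}|$ without decreasing its weight. Composing the three stages gives a pseudo-polynomial overall bound, dominated by the second stage.

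The main obstacle, and the sole source of the pseudo-polynomial factor, is the computation of $S_{\infty}$: deciding whether the value is $+\infty$ is exactly the point where \Uposdwr{} meets the complexity of mean-payoff games, so this step cannot be made polynomial without resolving that barrier. The one subtlety I would verify carefully is that the reduction of Remark~\ref{infty-via-MP-games} faithfully expresses ``every MD-scheduler has a reachable positive cycle'' as the negation of a mean-payoff winning condition for the action player, and that restricting to $\Act_{\text{fin}}$ in the finite-value iteration is justified because $\Pr^{\min}_{\cN,s}(\Diamond S_{\infty})=0$ for $s\in S_{\text{fin}}$ (by Lemma~\ref{S-infty}), so no optimal finite-value strategy ever needs to pass through an $S_{\infty}$-state; this is what makes the equations for $\valueUpos{\cN,s}$ on $S_{\text{fin}}$ self-contained.
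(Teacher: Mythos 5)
Your proposal is correct, and the pseudo-polynomial value computation is exactly the paper's argument: preprocess to a single trap goal state via Lemma~\ref{lem:Uposdwr-part-case}, compute $S_{\infty}$ in pseudo-polynomial time through the mean-payoff-game reduction of Remark~\ref{infty-via-MP-games}, and finish with the polynomial fixed-point iteration on $S_{\text{fin}}$, justified by Lemma~\ref{K-U0-fin-value-no-pos-cycle} and the observation that $\Pr^{\min}_{\cN,s}(\Diamond S_{\infty})=0$ on $S_{\text{fin}}$. The only divergence is in the $\coNP$ half: you route through Lemma~\ref{lem:Uposdwr-reduc} (the reduction to mean-payoff B\"uchi games, which yields $\NP\cap\coNP$), whereas the paper's proof of this theorem gives a direct $\coNP$ certificate --- by Lemma~\ref{K-U0-fin-value-no-pos-cycle} the answer is ``no'' iff some MD-scheduler $\sched$ for $\cN$ induces a Markov chain with no positive cycle reachable from $\sinit$ and $K_{\sched,\sinit}^0<K$, so one guesses an MD-scheduler and verifies this in polynomial time via Lemma~\ref{AllPos-Reach-MC}. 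Both arguments are sound; the paper's makes the witness for a negative instance explicit and is self-contained, while yours inherits the (stronger) $\NP\cap\coNP$ bound from the game reduction at the cost of relying on the determinacy and complexity of mean-payoff B\"uchi games.
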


\begin{proof}
 To prove membership to coNP we rely on the statements of
 Lemma \ref{K-U0-fin-value-no-pos-cycle}, which yields that the answer
 to question \Uposdwr{} is ``no'' iff there is an MD-scheduler
 $\sched$ for $\cN$ such that the Markov chain induced by $\sched$ contains
 no positive cycle and $K_{\sched,\sinit}^0 < K$.
 So, a nondeterministic polynomially time-bounded algorithm for the
 complement of \Uposdwr{} is obtained by guessing an MD-scheduler for $\cN$,
 computing the value $K_{\sched,\sinit}^0$ in polynomial time
 (see Lemma \ref{AllPos-Reach-MC}) and finally checking whether
 $K_{\sched,\sinit}^0 < K$.

 To compute the values $\valueUpos{\cM,s}$ in pseudo-polynomial time,
 we compute $S_{\infty}$ in pseudo-polynomial time by Remark~\ref{infty-via-MP-games},
 and apply the above algorithm to compute
 the values $\valueUpos{\cM,s}$ for the states
 $s\in S_{\text{fin}}$.
\end{proof}

\subsection{Almost-Sure Reachability Under All Schedulers}

\label{Uasdwr}
\label{sec:Uasdwr-appendix}

In this section, we prove the following theorem.

\dwruas*

From $\cM$ we construct a weighted directed graph $G  = (V,\to,\wgt)$. The set of vertices is $V = \{s \in S : \Pr_{\cM,s}^{\min}(\Diamond T^*)<1\}$. There is an edge in $G$ from $s$ to $s'$ iff there exists an action $\alpha \in \Act$  with $\mdpP(s,\alpha,s')>0$. The weight associated with edge $s \to s'$ is the minimum among actions that can lead from $s$ to $s'$: $\wgt(s \to s') = \min\{\wgt(s,\alpha) \mid P(s,\alpha,s') >0\}$. Finally, for $s \in V$, $G_s$ denotes the subgraph of $G$ reachable from $s$.

In the case where all states in $T$ are traps, Theorem \ref{thm:DWRU1}
derives from the following characterization of positive instances of \Uasdwr:

\begin{lemma}
\label{lm:charac-Uasdwr-traps} 
Let $\varphi$ be a \dwr-property with all states in $T$ being traps. Then $\Pr_{\cM,s}^{\min}(\varphi)=1$ iff the following two conditions hold:
\begin{enumerate}[(i)]
\item $\Pr_{\cM,s}^{\min}(\Diamond T)=1$, and
\item if\ \ $\Pr_{\cM,s}^{\min}(\Diamond T^*)<1$, then the weighted graph $G_s$ does not contain any negative cycle, and for each path $\fpath$ in $G_s$ that starts in $s$ and ends in some $t \in T\setminus T^*$ we have 
$\wgt(\fpath) \geqslant K_t$.
\end{enumerate}
In this case and if\ \ $T \setminus T^*=\{\goal\}$ is a singleton, then
$K^{\forall,=1}_{\cM,s}$ is the minimal weight of a path from $s$ to $\goal$
in $G_s$.
\end{lemma}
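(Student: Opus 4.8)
The plan is to prove the characterization in Lemma~\ref{lm:charac-Uasdwr-traps} by establishing each direction of the ``iff'' separately, treating condition~(i) (the purely qualitative reachability of $T$) and condition~(ii) (the weight constraints) as largely orthogonal. First I would observe that $\Pr^{\min}_{\cM,s}(\varphi)=1$ trivially entails $\Pr^{\min}_{\cM,s}(\Diamond T)=1$, since $\varphi \Rightarrow \Diamond T$ as a path property; this gives condition~(i) for free in the forward direction. For condition~(ii), the key point is that when $\Pr^{\min}_{\cM,s}(\Diamond T^*)<1$ there is a scheduler that avoids $T^*$ with positive probability, so on that positive-measure set of runs the disjunction can only be satisfied through some $t\in T\setminus T^*$ with accumulated weight $\geqslant K_t$. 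Because the states in $T$ are traps, any run reaching $t\in T\setminus T^*$ does so along a finite path whose weight is the sum of its transition weights, and the minimality in the edge-weight definition $\wgt(s\to s')=\min\{\wgt(s,\alpha):P(s,\alpha,s')>0\}$ means a path in $G_s$ corresponds to a schedulable path in $\cM$ of exactly that weight.

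The forward direction of~(ii) then splits into two claims: the graph $G_s$ has no negative cycle, and every $s$-to-$(T\setminus T^*)$ path has weight $\geqslant K_t$. For the second claim, I would argue contrapositively: if some path $\fpath$ in $G_s$ from $s$ to $t\in T\setminus T^*$ has $\wgt(\fpath)<K_t$, then I can build a scheduler following $\fpath$ (choosing at each step an action realizing the minimal edge weight) that with positive probability reaches $t$ with accumulated weight $<K_t$ and, crucially, avoids reaching any other target with sufficient weight; since we are in the regime $\Pr^{\min}_{\cM,s}(\Diamond T^*)<1$ this can be arranged to violate $\varphi$ on a positive-measure set. The no-negative-cycle claim is the subtler one: a negative cycle reachable in $G_s$ together with a route to some $t\in T\setminus T^*$ lets an adversarial scheduler pump the weight arbitrarily low before proceeding to $t$, so for any fixed finite $K_t$ the weight can be driven below it with positive probability, again contradicting $\Pr^{\min}_{\cM,s}(\varphi)=1$. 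I would phrase this via residual schedulers as in the appendix's notation.

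For the converse direction, assuming~(i) and~(ii), I would show $\Pr^{\sched}_{\cM,s}(\varphi)=1$ for every scheduler~$\sched$. The structure is: under~(i), almost every $\sched$-run reaches~$T$. A run reaching $T^*$ immediately satisfies a disjunct with threshold $-\infty$, hence satisfies $\varphi$. A run reaching some $t\in T\setminus T^*$ does so along a finite path which (after recording it in $G_s$) is a path from $s$ to $t$; by~(ii) every such graph path has weight $\geqslant K_t$, so the run satisfies the disjunct $\Diamond(t\wedge(\accwgt\geqslant K_t))$. The mild technical care needed is that a concrete $\sched$-path may use non-minimal actions, so its weight is at least that of the corresponding minimal-weight $G_s$-path, which only helps the lower bound $\geqslant K_t$. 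Combining, $\varphi$ holds on almost all runs.

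Finally, for the value statement when $T\setminus T^*=\{\goal\}$ is a singleton: $K^{\forall,=1}_{\cM,s}$ is the supremum of thresholds $K$ for which condition~(ii) still holds with $K_{\goal}=K$, and since~(ii) requires every $s$-to-$\goal$ path in $G_s$ to have weight $\geqslant K$, the largest admissible $K$ is exactly $\min_{\fpath}\wgt(\fpath)$ over paths from $s$ to $\goal$ in $G_s$ (which is finite and well-defined precisely because $G_s$ has no negative cycle, so a shortest-path value exists). I would note this minimum is computable by a Bellman--Ford-style shortest-path computation on $G_s$, matching the polynomial-time claim of Theorem~\ref{thm:DWRU1}. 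The main obstacle I anticipate is the rigorous positive-measure argument in the forward direction of~(ii): one must carefully construct, from a ``bad'' graph path or negative cycle, an actual scheduler for $\cM$ witnessing $\Pr^{\sched}_{\cM,s}(\varphi)<1$, keeping track of the fact that the same scheduler must simultaneously avoid inadvertently satisfying some \emph{other} disjunct, which is exactly where the hypothesis $\Pr^{\min}_{\cM,s}(\Diamond T^*)<1$ is indispensable.
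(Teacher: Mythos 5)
Your proposal is correct and takes essentially the same route as the paper's proof: both directions rest on the correspondence between $\cM$-paths ending in the trap states of $T$ and paths in the weighted graph $G_s$, with a negative-cycle pumping argument (via residual schedulers) for the necessity of condition~(ii) and the shortest-path value in $G_s$ for the optimization claim. One minor remark: since all states of $T$ are traps, a run that reaches some $t\in T\setminus T^*$ with accumulated weight $<K_t$ automatically falsifies \emph{every} disjunct of $\varphi$ (it can never have visited, and will never visit, another target), so the hypothesis $\Pr^{\min}_{\cM,s}(\Diamond T^*)<1$ is not needed to ``avoid inadvertently satisfying another disjunct'' as you suggest, but only to guarantee that $s$ and the relevant states lie in the vertex set of $G$ so that condition~(ii) is non-vacuous.
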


\begin{proof}
``$\Longrightarrow$'': $\Pr_{\cM,s}^{\min}(\Diamond T)=1$ is clearly a necessary condition for $\Pr_{\cM,s}^{\min}(\varphi)=1$. Assume now that $\Pr_{\cM,s}^{\min}(\Diamond T^*) < 1$ and that either $G_s$ contains a negative cycle, or there is a path $\fpath$ from $s$ to some $t \in T\setminus T^*$ with $\wgt(\fpath) < K_t$. In both cases there is a scheduler $\sched$ such that $\Pr_{\cM,s}^\sched(\Diamond (t \wedge \accwgt < K_t)) >0$ and hence $\Pr_{\cM,s}^{\min}(\varphi)<1$.

``$\Longleftarrow$'': $\Pr_{\cM,s}^{\min}(\Diamond T^*)=1$ clearly implies $\Pr_{\cM,s}^{\min}(\varphi)=1$. Assume now that  $\Pr_{\cM,s}^{\min}(\Diamond T)=1$, $\Pr_{\cM,s}^{\min}(\Diamond T^*)<1$, $G_s$ does not contain any negative cycle, and for each path $\fpath$ in $G_s$ that starts in $s$ and ends in some $t \in T\setminus T^*$ we have $\wgt(\fpath) \geq K_t$. Then under all schedulers and for every path from $s$ to a target state $t \in T\setminus T^*$ the accumulated weight necessarily is at least $K_t$. We thus derive $\Pr_{\cM,s}^{\min}(\varphi)=1$.
\end{proof}

Condition (i) from the characterization of Lemma~\ref{lm:charac-Uasdwr-traps} is a classical verification question for MDP and can be solved in \PTIME. For condition (ii), the weighted graph can be constructed in polynomial time, and using standard shortest-path algorithms in weighted graphs one can check for the nonexistence of a negative cycle and compute the minimal weight of paths
from $s$ to a target state $t\in T \setminus T^*$.
Thus, in case all $T$-states are traps, \Uasdwr{} can be solved in polynomial time.

\medskip

We now address the general case. Intuitively, this case is harder since a target state $t \in T \setminus T^*$ might be visited several times before $t$ is actually visited with the constraint $\accwgt \geqslant K_t$. However, let us explain how to reduce the general case to the case where all states in $T$ are traps. To ease the presentation, we consider a simple  \dwr-property of the form $\Diamond (\goal \wedge (\accwgt \geqslant K))$.

Clearly, $\Pr_{\cM,s}^{\min}(\Diamond \goal)=1$ is a necessary condition for $\Pr_{\cM,s}^{\min}(\varphi)=1$. We thus check first whether $\Pr_{\cM,s}^{\min}(\Diamond \goal)=1$ holds. Then, without loss of generality, we assume that all states are reachable from the initial state $s$, and that $\Pr_{\cM,t}^{\min}(\Diamond \goal)=1$ for all states $t$.
Under these assumptions, all end components of $\cM$ must contain $\goal$. 
If $\cM$ has no end components then $\goal$ is a trap, and we are back to the special case (Lemma \ref{lm:charac-Uasdwr-traps}). Suppose now that $\goal$ is not a trap. Then, $\cM$ has a unique maximal end component $\cE$ and $\cE$ contains $\goal$.
\begin{itemize}
\item 
  If $\Exp{\min}{\cE}(\MP) >0$ then
  all end components of $\cM$ are pumping.
  Hence, $\Pr_{\cM,s}^{\min}\big(\Diamond (\goal \wedge (\accwgt \geqslant K))\big)=1$
  for all $K \in \Integer$, and therefore
  $K^{\forall,=1}_{\cM,s}=+\infty$.
\item
  If $\Exp{\min}{\cE}(\MP) <0$ then $\cM$ has negatively weight-divergent 
  end components.
  In this case, 
  $\Pr^{\sched}_{\cE,\goal}\big(\neXt \Diamond (\goal \wedge (\accwgt <0))\big)>0$
  where $\sched$ is an MD-scheduler for $\cE$ with
  $\Exp{\sched}{\cE}(\MP) = \Exp{\min}{\cE}(\MP) <0$.
  (This follows from the quotient representation of
   the expected mean payoff in the Markov chain induced by $\sched$,
   see Lemma \ref{lemma:scMC-exp-mp-quotient}.)
  Hence,
  $K^{\forall,=1}_{\cM,s}=K^{\forall,=1}_{\cM',s}$
  where $\cM'$ is the MDP resulting from $\cM$ when turning $\goal$ into 
  a trap (\ie removing all state-action pairs $(\goal,\alpha)$).
  For $\cM'$ we can then rely on Lemma~\ref{lm:charac-Uasdwr-traps}.
\item 
  Suppose now that $\ExpRew{\min}{\cE}(\MP) = 0$.
  \begin{itemize}
  \item
    If $\cE$ does not contain any 0-EC then
    $\cE$ is universally weight-divergent, \ie
    all end components of $\cE$ are positively weight-divergent.
    This is a consequence of 
    Theorem \ref{universal-neg-wgt-div}
    applied to the
    MDP resulting from $\cM$ by multiplying all weights with $-1$.
    Hence, in this case we have $K^{\forall,=1}_{\cM,s}=+\infty$.
 \item
    Suppose now that $\cE$ contains at least one 0-EC.
    In this case, we can treat $\goal$ as a trap and rely on
    Lemma~\ref{lm:charac-Uasdwr-traps}.
    In fact, let~$\cF$ be a 0-EC. If~$\goal \not \in \cF$, then
    from $\goal$ under some scheduler, $\cF$ is reached with positive probability
    and the run remains almost surely in~$\cF$. So if the first visit to~$\goal$
    does not satisfy the weight constraint, no further visits might be possible under some schedulers.
    If~$\goal \in \cF$, for some schedulers that remain in~$\cF$ the accumulated 
    weight will be identical at each visit to~$\goal$ since~$\cF$ is a 0-EC.
 \end{itemize}
\end{itemize}
To check which of the above cases applies we can use standard polynomial-time
algorithms to compute the minimal expected mean payoff in strongly connected
MDPs and the algorithms to check the existence of 0-ECs
presented in Section \ref{sec:0-ECs} 
(see also Section \ref{sec:algo-checking-0-EC}).
Recall from part (b) of Theorem \ref{universal-neg-wgt-div} 
and Corollary \ref{checking-universal-wgt-div}
that the latter can also be used to check 
universal weight-divergence. Putting things together, this proves Theorem~\ref{thm:DWRU1}.

\medskip

We finish this section with a very similar result applied to the particular special case of Markov chains, which will be useful in the next section.

\begin{lemma}
\label{lem:no-neg-cycle-MC-as}
Let $\varphi$ be a \dwr{}-property and $s$ a state of a Markov chain
$\cC$. Then 
$\Pr_{\cC,s}(\varphi)=1$ if and only if:
\begin{enumerate}
\item [(i)]
  $\Pr^{}_{\cC,s}(\Diamond T)=1$ and
\item [(ii)]
  for each $t\in T \setminus T^*$, 
  there is no path $\fpath$ from $s$ to $t$ 
  containing a state that belongs to a negative cycle; moreover
   the minimal weight of a path from $s$ to $t$ is
  at least $K_t$. 
\end{enumerate}
\end{lemma}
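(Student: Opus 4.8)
The plan is to prove the two implications separately, after first recording a convenient reformulation of condition (ii). I claim (ii) is equivalent to the single assertion: \emph{for every $t \in T \setminus T^*$ and every finite path $\rho$ from $s$ to $t$ in $\cC$, $\wgt(\rho) \geqslant K_t$.} Indeed, if some path from $s$ to $t$ met a state on a negative cycle, that cycle could be pumped arbitrarily often to yield paths from $s$ to $t$ of arbitrarily small weight, so the two clauses of (ii) together say exactly that no path to $t$ drops below $K_t$; conversely, that assertion rules out negative cycles on routes to $t$ and bounds the minimal weight below by $K_t$. This reformulation makes both clauses of (ii) uniform and will streamline the rest.

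First I would prove ``$\Longleftarrow$''. Assuming (i) and (ii), by (i) almost every path from $s$ reaches $T$. Fix such a path, let $t$ be the first $T$-state it visits, and let $\rho$ be the finite prefix reaching $t$. If $t \in T^*$ then $K_t = -\infty$ and $\rho \models t \wedge (\accwgt \geqslant K_t)$ trivially; if $t \in T \setminus T^*$, then $\rho$ is a path from $s$ to $t$, so by the reformulation $\wgt(\rho) \geqslant K_t$. In both cases $\varphi$ holds, hence $\Pr_{\cC,s}(\varphi) \geqslant \Pr_{\cC,s}(\Diamond T) = 1$. This direction is routine first-passage bookkeeping. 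For ``$\Longrightarrow$'', condition (i) is immediate: any path satisfying $\varphi$ reaches $T$, so $\Pr_{\cC,s}(\Diamond T) \geqslant \Pr_{\cC,s}(\varphi) = 1$. The substance is (ii), which I would establish by contraposition: if (ii) fails for some $t \in T \setminus T^*$, I would exhibit a positive-measure set of paths violating $\varphi$.

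The hard part will be that a single low-weight path $\rho$ to $t$ does not by itself refute $\varphi$, since a continuation of $\rho$ may still satisfy $\varphi$ later --- by revisiting $t$ (or another target) with larger accumulated weight, or by reaching $T^*$. To control this I would first reduce to states from which $T^*$ is not reached almost surely (from a state $s$ with $\Pr_{\cC,s}(\Diamond T^*)=1$ the weight constraints are immaterial and $\varphi$ holds anyway; this is precisely the guard present in the MDP counterpart, Lemma~\ref{lm:charac-Uasdwr-traps}). Then I would analyse the BSCC $\cB$ in which almost all continuations of $\rho$ become trapped, invoking the folklore classification of strongly connected Markov chains (Lemma~\ref{weight-div-MC}): if $\Exp{}{\cB}(\MP) > 0$ or $\cB$ is gambling, the accumulated weight is positively weight-divergent, so such continuations would \emph{satisfy} $\varphi$, forcing these BSCCs to be avoidable with positive probability under the failure of (ii); whereas if $\Exp{}{\cB}(\MP) < 0$ or $\cB$ is a $0$-BSCC, then $\limsup_n \wgt$ is finite along almost every continuation and the value at each target visit is eventually capped, so prolonging $\rho$ keeps the weight below $K_t$ forever with positive probability, yielding the desired positive-measure family of $\varphi$-violating paths.

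The main obstacle is thus exactly this last step: turning the low-weight route to $t$ into a positive-measure family of paths on which \emph{no} disjunct of $\varphi$ is ever satisfied. This is where the BSCC classification (Lemma~\ref{weight-div-MC}) and the reachability reduction isolating the states from which $T^*$ is reached almost surely do the real work; everything else is the first-passage accounting of the easy direction. I would also note that the argument is the Markov-chain specialization of problem \Uasdwr{} (a Markov chain being an MDP with trivial nondeterminism), so it can alternatively be read off from the non-trap analysis in the proof of Theorem~\ref{thm:DWRU1}, with strongly connected components playing the role of end components.
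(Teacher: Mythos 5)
Your reformulation of (ii) and the ``$\Longleftarrow$'' direction are correct and routine. The genuine gap is in the ``$\Longrightarrow$'' direction: the sketch does not close, and it cannot, because the lemma at the generality you are reading it (arbitrary, possibly non-trap, target states) is false. Take $T=\{t_1,t_2\}$ with $K_{t_1}=K_{t_2}=0$ and the deterministic chain $s \move{+5} t_1 \move{-10} t_2$ with a zero-weight self-loop at $t_2$: the unique maximal path visits $t_1$ with accumulated weight $5$, so $\Pr_{\cC,s}(\varphi)=1$, yet the path from $s$ to $t_2$ has weight $-5<K_{t_2}$ and meets no negative cycle, so (ii) fails. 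This is exactly the failure mode your BSCC analysis does not control: the low-weight route to $t$ may already have satisfied another disjunct on the way, or its continuations may almost surely satisfy one later; and a phrase like ``forcing these BSCCs to be avoidable with positive probability'' has no content in a Markov chain, where the reachable BSCCs are hit with fixed probabilities. Your own remark about states with $\Pr_{\cC,s}(\Diamond T^*)=1$ is a second instance of the same problem: you cannot ``reduce'' to the complementary case, because on the excluded case the right-hand side of the equivalence genuinely fails while the left-hand side holds.

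What rescues the statement is the hypothesis carried by its MDP counterpart, Lemma~\ref{lm:charac-Uasdwr-traps}, and satisfied in every application the paper makes of it (assumption \textbf{(A1)} in Section~\ref{sec:Easdwr-inNP}): all states of $T$ are traps. The paper gives no separate proof --- the lemma is introduced only as ``a very similar result'' to the trap-case characterization --- and under the trap reading the hard direction is a two-line argument needing none of your machinery. If some path from $s$ to $t\in T\setminus T^*$ meets a state on a negative cycle, pump that cycle to obtain a path $\fpath$ from $s$ to $t$ with $\wgt(\fpath)<K_t$; otherwise take a minimal-weight path of weight below $K_t$ directly. Since every target is a trap, $\fpath$ is itself a maximal path, it visits no target other than its endpoint $t$, and it carries positive probability; hence it falsifies every disjunct of $\varphi$ and $\Pr_{\cC,s}(\varphi)\leqslant 1-\Pr_{\cC,s}(\fpath)<1$. (The $T^*$ guard also evaporates: if $\Pr_{\cC,s}(\Diamond T^*)=1$ there is no path from $s$ to any trap in $T\setminus T^*$, so (ii) is vacuous.) In short: as written, you are proving a statement that is false in the generality you chose, and in the generality the paper actually needs, the weight-divergence and BSCC classification you invoke is a detour rather than the proof.
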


Observe that conditions (i) and (ii) in
Lemma~\ref{lem:no-neg-cycle-MC-as} can be checked in polynomial time:
in particular (ii) reduces to standard shortest-path algorithms in
weighted graphs.  As a consequence, for Markov chains we obtain that
$\Pr_{\cC,s}\big(\Diamond T^* \vee \bigvee_{t\in T \setminus T^*} t \wedge
(\accwgt \geqslant K_t)\big) =1$ can be decided in polynomial time.

\subsection{Almost-Sure Reachability Under Some Scheduler}

\label{Easdwr}
\label{sec:Easdwr-inNP}

\dwreas*

To establish the upper complexity bound of Theorem~\ref{thm:DWR-E1},
we first justify that we can assume without loss of generality that $\cM$ has certain
properties.

\begin{lemma}
\label{lem:Uposdwr-part-case}
The problem \Easdwr{} for general MDPs can be reduced in polynomial time
to the case where instances of \Easdwr{} enjoy the following properties:
\begin{description}
\item [(A1)]  $T^*=\{\good\}$ and
  $T\setminus T^*=\{\goal\}$ with $\good$ and $\goal$ are both traps.
\item [(A2)]
  For all states $s\in S \setminus T^*$, $\Pr^{\max}_{\cM,s}(\Diamond T^*)<1$.
\item [(A3)]
  For all states $s\in S$, $\Pr^{\max}_{\cM,s}(\Diamond T)=1$.
\end{description}
\end{lemma}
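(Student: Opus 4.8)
The plan is to establish (A1), (A2), (A3) by three successive polynomial-time transformations of the input instance $(\cM,\sinit,\varphi)$, each preserving the answer to \Easdwr{} and increasing the size of $\cM$ by only an additive linear amount. Throughout, $T^*=\{t\in T: K_t=-\infty\}$ and I write $\varphi=\bigvee_{t\in T^*}\Diamond t \vee \bigvee_{t\in T\setminus T^*}\Diamond(t\wedge(\accwgt\geqslant K_t))$.

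First I would enforce (A1). Introduce two fresh trap states $\good$ and $\goal$. For every $t\in T^*$ add a new deterministic action from $t$ to $\good$ of weight $0$, and for every $t\in T\setminus T^*$ add a new deterministic action from $t$ to $\goal$ of weight $-K_t$. The new target sets become $\{\good\}$ and $\{\goal\}$ with threshold $K_\goal=0$, so the new property is $\varphi'=\Diamond\good\vee\Diamond(\goal\wedge(\accwgt\geqslant 0))$ (if $T^*=\varnothing$ I add an unreachable $\good$). This mirrors the single-trap reduction used for \Uposdwr{}. The equivalence is argued by scheduler transformation: given $\sched$ with $\Pr^{\sched}_{\cM,\sinit}(\varphi)=1$, define $\sched'$ to mimic $\sched$ but, at the first prefix witnessing a disjunct---the first visit to some $t\in T^*$, or the first visit to some $t\in T\setminus T^*$ with $\accwgt\geqslant K_t$---take the new action to $\good$, respectively to $\goal$ (arriving with weight $\accwgt-K_t\geqslant 0$). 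Since almost every $\sched$-run has such a first finite witnessing prefix, $\sched'$ satisfies $\varphi'$ almost surely. Conversely, a run of a scheduler $\sched'$ for the new MDP can only reach $\good$, or reach $\goal$ with nonnegative weight, through one of the added transitions, and at the state preceding that transition the original run already satisfies the corresponding disjunct of $\varphi$; deleting the artificial transitions yields a scheduler $\sched$ for $\cM$ with $\Pr^{\sched}_{\cM,\sinit}(\varphi)=1$.

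Next I would enforce (A2) and (A3) by two collapses, computing the relevant sets via standard polynomial-time maximal-reachability analysis. For (A2), let $A=\{s:\Pr^{\max}_{\cM,s}(\Diamond\good)=1\}$ (so $\good\in A$ and $\goal\notin A$, as $\goal$ is a trap). Every state in $A$ is a yes-instance, so if $\sinit\in A$ we are done; otherwise I collapse all of $A$ into the single trap $\good$. Because maximal reachability composes---$\Pr^{\max}_{\cM,s}(\Diamond A)=1$ together with a.s.\ reachability of $\good$ from $A$ would place $s$ in $A$---every surviving state $s\ne\good$ satisfies $\Pr^{\max}(\Diamond\good)<1$, giving (A2); and since $\good$ is itself satisfying, merging $A$ into $\good$ preserves a.s.\ satisfiability in both directions. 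For (A3), let $B=\{s:\Pr^{\max}_{\cM,s}(\Diamond T)=1\}$; states outside $B$ cannot reach $T$ almost surely, hence are no-instances, so I collapse $S\setminus B$ into a fresh non-target trap $\fail$. A scheduler witnessing a.s.\ reachability of $T$ from $s\in B$ stays inside $B$ almost surely (a positive-probability excursion into $S\setminus B$ would contradict the definition of $B$ on the residual scheduler), so (A3) holds for the survivors and the \Easdwr{} answer is again preserved. I would apply the transformations in the order (A1), (A2), (A3) and verify that each later step leaves the earlier properties intact: collapsing into $\good$ keeps $\good,\goal$ traps, and collapsing into $\fail$ only removes paths, hence cannot raise $\Pr^{\max}(\Diamond\good)$ above the bound secured for (A2).

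The main obstacle I expect is the two-directional correctness of these collapses in the almost-sure, existential-scheduler setting, rather than their construction. Concretely, the delicate points are: in (A1), showing that committing to $\good$ or $\goal$ at the first witnessing prefix loses no probability mass and that the weight shift $-K_t$ exactly encodes the threshold constraint at the moment of commitment; and in (A2) and (A3), justifying that the almost-sure-reaching schedulers may be assumed to avoid the collapsed regions, which is what makes the merges answer-preserving. Each of these rests on the compositionality of maximal reachability and on the stopping-time nature of the witnessing events, but must be checked carefully to rule out a subtle interaction between the weight constraint on $\goal$ and the collapses, so I would isolate them as separate claims before combining them.
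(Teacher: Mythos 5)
Your proposal is correct and follows essentially the same route as the paper: (A1) by adding weight-shifted deterministic transitions into fresh traps, (A2) by absorbing the set of states that can almost surely reach $T^*$ under some scheduler (the paper enlarges $T^*$ to this set, you collapse it into $\good$ -- the same idea, and your version even handles the interaction with (A1) slightly more cleanly), and (A3) by discarding the states from which $T$ is not almost surely reachable. The only cosmetic deviation is that collapsing $S\setminus B$ into a non-target trap $\fail$ leaves one state that literally violates (A3); the paper instead passes to the largest sub-MDP over $B=\{s : \Pr^{\max}_{\cM,s}(\Diamond T)=1\}$ (well defined, since the MD-witness for $\Pr^{\max}_{\cM,s}(\Diamond T)=1$ supplies a surviving action at every $s\in B$), and your argument goes through verbatim with that substitution.
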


\begin{proof}
  Let $\cM$ be an MDP, and
  $\varphi = \bigvee_{t\in T} \Diamond (t \wedge (\accwgt \geqslant
  K_t))$.
  We start by proving that~\textbf{(A1)} is not a real restriction. %
From $\cM$, we build $\cM'$ that extends $\cM$ 
with copies $t'$ of the states $t \in T^*$, an additional state
$\goal$, and new state-action pairs $(t,\tau)$ such that: in case
$t \in T^*$, $P_{\cM'}(t,\tau,t')=1$ and $\wgt_{\cM'}(t,\tau)=0$; and
for $t \in T \setminus T^*$, $P_{\cM'}(t,\tau,\goal)=1$ and
$\wgt_{\cM'}(t,\tau)=-K_t$. The new states ($\goal$ and each $t'$ for
$t \in T^*$) are traps.  We then let
\[
  \varphi' \ \ = \ \  \Diamond (\goal \wedge (\accwgt \geqslant 0) \vee 
  \bigvee_{t\in T^*} \Diamond (t' \wedge (\accwgt \geqslant K_t)) \enspace.
\]
This construction in particular ensures
$\Pr^{\max}_{\cM,\sinit}(\varphi)=1$ iff
$\Pr^{\max}_{\cM',\sinit}(\varphi')=1$.

To justify assumption \textbf{(A2)}, observe that
$\valueEas{\cM,s}=+\infty$ for all states $s$ with
$\Pr^{\max}_{\cM,s}(\Diamond T^*)=1$, so that \Easdwr{} is trivial for
such states.  Moreover, setting
$\widetilde{T}^*=\{s\in S : \Pr^{\max}_{\cM,s}(\Diamond T^*)=1\}$ and
$$
  \widetilde{\varphi}_K \ \ = \ \ 
  \Diamond \widetilde{T}^* \vee \Diamond (\goal \wedge (\accwgt \geqslant K))
$$
we have that for all states $s$ in $\cM$,
$\Pr^{\max}_{\cM,s}\bigl(\varphi_K \bigr) \ \ = \ \
\Pr^{\max}_{\cM,s}\bigl(\widetilde{\varphi}_K \bigr)$. We can thus
safely assume $\widetilde{T}^*=T^*$.

Finally, it is no loss of generality to assume \textbf{(A3)} because
$\Pr^{\max}_{\cM,s}(\Diamond T)<1$ implies
$\Pr^{\max}_{\cM,s}(\varphi_K) <1$ for all $K\in \Integer$ and
therefore $\valueEas{\cM,s}=-\infty$.  Transforming  
$\cM$ into the largest sub-MDP $\widetilde{\cM}$ where the state space is
$\widetilde{S}=\{s\in S : \Pr^{\max}_{\cM,s}(\Diamond T)=1\}$ we get
\[
  \forall K\in \Integer,\ \forall s\in \widetilde{S},\
  \Pr^{\max}_{\cM,s}(\varphi_K) =1 \Longleftrightarrow
  \Pr^{\max}_{\widetilde{\cM},s}(\varphi_K) =1\enspace.
\] 
Therefore, for all states $s\in \widetilde{S}$,
$\valueEas{\cM,s}=\valueEas{\widetilde{\cM},s}$.
\end{proof}
In the rest of this section, we hence assume
$\varphi = \Diamond \good \vee \Diamond (\goal \wedge \accwgt
\geqslant K)$ for trap states $\good$ and $\goal$.

\subsubsection*{Case of MDPs Without Positively Weight-Divergent End Components}
We first address the special case where $\cM$ has no positively
weight-divergent end component.  Thanks to the spider construction
from Section~\ref{sec:spider-construction}, we may assume that $\cM$
has no 0-EC. As a consequence, for all end components $\cE$ of $\cM$,
$\ExpRew{\max}{\cE}(\MP)\leqslant 0$. Let us prove that MD-schedulers
are sufficient for \Easdwr, assuming $\cM$ has no positively
weight-divergent end component.

\begin{lemma}[MD-scheduler suffice if no weight-divergent EC]
 \label{MD-sufficient-eventually-almostsure-max-equals-1}
 Let $\cM$ be an MDP such that $\cM$ has no positively
 weight-divergent end component. Let
 $\varphi = \Diamond \good \vee \Diamond (\goal \wedge \accwgt
 \geqslant K)$ where $\good$ and $\goal$ are traps. If there exists a
 scheduler $\sched$ with $\Pr^{\sched}_{\cM,\sinit}(\varphi)=1$, then
 there exists an MD-scheduler $\tsched$ with
 $\Pr^{\tsched}_{\cM,\sinit}(\varphi)=1$.
    \end{lemma}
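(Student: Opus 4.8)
The plan is to lift the objective to the weight-extended MDP, where it becomes a plain almost-sure reachability objective, and then to solve a weight-independent threshold fixed point that yields an MD-scheduler directly on $\cM$. First I would form the (countable) MDP $\hat\cM$ whose states are pairs $(s,w)\in S\times\Integer$, with $w$ recording the weight accumulated so far: a state-action pair $(s,\alpha)$ of $\cM$ induces transitions $(s,w)\to(t,w+\wgt(s,\alpha))$ for $t\in\Post(s,\alpha)$. Since $\good$ and $\goal$ are traps, a path of $\cM$ satisfies $\varphi$ iff the corresponding path of $\hat\cM$ reaches the target set $\Goal=(\{\good\}\times\Integer)\cup(\{\goal\}\times[K,+\infty))$. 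Hence $\Pr^{\sched}_{\cM,\sinit}(\varphi)=\Pr^{\sched}_{\hat\cM,(\sinit,0)}(\Diamond\Goal)$ for every scheduler, and the hypothesis places $(\sinit,0)$ in the almost-sure winning region $\hat W=\{(s,w):\exists\sched.\ \Pr^{\sched}_{\hat\cM,(s,w)}(\Diamond\Goal)=1\}$.

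Next I would exploit that raising the accumulated weight can only help: if $(s,w)\in\hat W$ and $w'\geq w$ then $(s,w')\in\hat W$, so $\hat W=\{(s,w):w\geq\theta(s)\}$ for a threshold $\theta(s)\in\Integer\cup\{\pm\infty\}$ (equivalently $\theta(s)=K-\valueEas{\cM,s}$). I would then show that on its winning region $\theta$ obeys the fixed-point equation $\theta(s)=\min_{\alpha\in\Act(s)}\big(\max_{t\in\Post(s,\alpha)}\theta(t)-\wgt(s,\alpha)\big)$ for $s\notin\{\good,\goal\}$, with $\theta(\good)=-\infty$ and $\theta(\goal)=K$, where the minimizing action can additionally be required to guarantee progress toward $\Goal$ (the usual attractor condition for almost-sure reachability). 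Because no end component is positively weight-divergent, Lemma~\ref{main:mp-pos-wgt-div} gives $\Exp{\max}{\cE}(\MP)\leq0$ for every end component $\cE$; this forbids weight from being pumped up along cycles, so the weight gained along any weight-maximizing path is bounded by the diameter times the largest absolute weight, whence $\theta(s)$ is either a bounded integer or $+\infty$, with $\theta(s)=-\infty$ only when $\good$ is reachable almost surely from $s$. The decisive point is that the minimizing action depends only on the values $\theta(\cdot)$ and not on the current weight $w$; this uniform-shift invariance is exactly what lets a positional strategy on $\hat\cM$ collapse to an MD-scheduler on $\cM$.

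I would then define the MD-scheduler $\tsched$ by choosing, at each $s$, a threshold-optimal action that also makes progress toward $\Goal$ inside the winning region; by the shift invariance this choice is weight-independent and hence well-defined on $\cM$. To verify $\tsched$ wins from $(s,\theta(s))$, and thus from $(\sinit,0)$, I separate safety and liveness. Safety is immediate from the fixed-point equation: under $\tsched$ the invariant $w\geq\theta(\text{current state})$ is preserved, and since reaching $\goal$ under this invariant forces $w\geq\theta(\goal)=K$, reaching $\{\good,\goal\}$ in $\cM$ already witnesses $\Goal$. It therefore suffices to show $\tsched$ reaches $\{\good,\goal\}$ almost surely. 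If it did not, then with positive probability the induced finite Markov chain $\cM_{\tsched}$ would settle in a bottom end component $\cB\subseteq S\setminus\{\good,\goal\}$ all of whose states have finite $\theta$; there $\Exp{}{\cB}(\MP)\leq\Exp{\max}{\cB}(\MP)\leq0$. If this is strictly negative the weight tends to $-\infty$ on these paths, contradicting the invariant $w\geq\min_s\theta(s)$; if it equals $0$ then $\cB$ is a $0$-BSCC (by Lemma~\ref{gambling-MC} it has no positive cycle, else it would be gambling and hence positively weight-divergent), which the progress refinement prevents from being a trap of $\tsched$. In either case such a $\cB$ has measure zero, so $\Pr^{\tsched}_{\cM,\sinit}(\varphi)=1$.

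The main obstacle is the liveness half, and specifically ruling out that $\tsched$ becomes trapped in a $0$-end component (the mean-payoff-$0$ case): this requires the weight-independent progress refinement to be compatible with threshold-optimality. This is clean in the situation in which the lemma is applied, namely after the spider construction (Lemma~\ref{spider-construction}), where $\cM$ has no $0$-EC; there every end component has $\Exp{\max}{\cE}(\MP)<0$ by Theorem~\ref{universal-neg-wgt-div}, so the safety invariant alone kills every trapping component and liveness is immediate. I would present the argument in that form and indicate that a general $\cM$ is handled by additionally routing, within each maximal $0$-EC, to its (weight-independent) optimal exit, the choice of exit being invariant under a uniform shift of the entry weight.
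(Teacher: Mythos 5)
Your route is genuinely different from the paper's: you work with the scheduler-independent threshold function $\theta(s)=K-\valueEas{\cM,s}$ on the weight-extended MDP and try to extract a weight-independent, threshold-optimal action per state, whereas the paper constructs the MD-scheduler directly from the given winning scheduler $\sched$, using the \emph{minimal weight $w_s$ of $\sched$-paths to $s$} (whose existence is exactly where the absence of positively weight-divergent end components enters) together with a progress-level decomposition $Y_0\subseteq Y_1\subseteq\cdots$ that forces almost-sure arrival at the targets. The safety half of your argument (the invariant $w\geq\theta(\text{current state})$ is preserved by any action attaining $\min_\alpha(\max_t\theta(t)-\wgt(s,\alpha))$) is sound. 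But the liveness half contains a genuine gap that you yourself flag and do not close: when the induced chain can settle in a $0$-BSCC, you need a weight-independent tie-breaking rule among threshold-optimal actions that still exits the $0$-EC, and you only sketch this (``routing to the optimal exit''), ultimately retreating to proving the statement under the extra hypothesis that $\cM$ has no $0$-EC. That is a strictly weaker lemma than the one stated, and the weaker form does not suffice for all of its uses in the paper (e.g., it is applied to the MDP $\cN$ in the proof of Lemma~\ref{Good-WDMEC-infty}, whose end components are exactly the non-weight-divergent end components of $\cM$ and may well be $0$-ECs). The clean way to close this within your framework is to first flatten all $0$-ECs by the iterative spider construction and transfer the resulting MD-scheduler back via (S3.1)/Lemma~\ref{lemma:equivalence-M-and-Mi}(a); you allude to the spider construction but do not actually perform this reduction.

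Two further steps are asserted rather than proved and one of them is wrong as stated. First, your justification that $\theta(s)$ is bounded below (``the weight gained along any weight-maximizing path is bounded by the diameter times the largest absolute weight'') does not hold: an MDP without positively weight-divergent end components can still contain positive cycles (a positive cycle need not be an end component, since its actions may have successors outside the cycle), so simple paths do not bound the achievable gain. The correct argument is the one the paper uses in its Claim~1: if $(s,w)$ were winning for arbitrarily negative $w$, the corresponding family of schedulers would yield a positively weight-divergent end component in an auxiliary MDP, contradicting the hypothesis; for the disjunct $\Diamond\good$ this even requires adding a sufficiently negative return edge from $\good$. Second, your bottom-component analysis only treats components on which $\theta$ is finite; without the assumptions \textbf{(A2)}/\textbf{(A3)} there may be non-trap states with $\theta(s)=-\infty$ (those from which $\good$ is almost surely reachable), where the min-max equation does not select an action and where ``$w\geq\min_{s\in\cB}\theta(s)$'' is vacuous; these states need a separate (standard, but explicit) treatment. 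As written, the proposal therefore establishes the lemma only under additional hypotheses and leaves the key combinatorial step --- a weight-independent action choice that is simultaneously safe and live --- unproved in general.
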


\begin{proof}
  Suppose we are given a scheduler $\sched$ with
  $\Pr^{\sched}_{\cM,\sinit}(\varphi)=1$.  
First, we consider the case where
  $\varphi = \Diamond (\goal \wedge \accwgt \geqslant K)$,
  \ie $T^*=\varnothing$. Later we will explain how to adapt the
  proof for the case where $T = \{\goal,\good\}$ with $K_\good=-\infty$.

 Let $\sim$ denote the following equivalence relation on 
 the state space of $\cM$:
 \begin{center}
     $s \sim t$ \qquad iff \qquad
        $s$ and $t$ belong to the same %
        maximal end component of $\cM$ \enspace.
 \end{center}
 A state-action pair $(s,\alpha)$ is called a \emph{progress move} if
 $\alpha \in \Act(s)$ and $P(s,\alpha,t)>0$ for at least one state
 $t$ with $s\not\sim t$. Note that if state $s$ does not belong to any
 end component then all state-action pairs $(s,\alpha)$ with
 $\alpha \in \Act(s)$ are progress moves.
 Moreover, if $(s,\alpha)$ is a progress move then
 there is at least one state $t$ such that
 $\Pr^{\max}_{\cM,t}(\Diamond s) <1$.

 Let $X$ denote the set of state-weight pairs
 $(s,w)\in (S \setminus \{\goal\}) \times \Integer$ such that there is
 some $\sched$-path $\fpath$ from $\sinit$ to $s$ with
 $\wgt(\fpath)=w$ and $s\notin T$.  Let
 $X(s) = \{w\in \Integer : (s,w)\in X\}$. 

 \tudparagraph{1ex}{{\it Claim 1:}} 
   If $X(s) \not= \varnothing$ then $\min X(s)$ exists.
 
   To prove the Claim, suppose by contradiction that
   $\inf X(s)=-\infty$.  By assumption, for each $R\in \Nat$ there
   exists an $\sched$-path $\fpath_R$ from $\sinit$ to $s$ with
   $\wgt(\fpath_R) \leqslant -R$. Let $\cN$ be the MDP that extends
   $\cM$ by a trap state $\goal'$ and the state-action pair
   $(\goal,\tau)$ with $\wgt(\goal,\tau)=-K$ and
   $P(\goal,\tau,\goal')=1$.
 Obviously, the residual scheduler $\residual{\sched}{\fpath_R}$
 can be extended to a scheduler $\vsched_R$ for $\cN$ 
 such that
 \[
   \Pr^{\vsched_R}_{\cN,s}
     \bigl(\, \Diamond (\goal' \wedge (\accwgt \geqslant R)) \, \bigr)
   \ \ = \ \ 1 \enspace.
 \]
 This holds for every $R \in \Nat$,
 thus %
 $\cN$ has a positively weight-divergent end component $\cE$; a
 contradiction. %
 This completes the proof of Claim~1.

\smallskip

 Let $U=\{s\in S : X(s)\not= \varnothing\}$.  For $s\in U$ we define
 $w_s =\min X(s) \in \Integer$.  Let $A(s)$ denote the set of
 actions $\alpha \in \Act(s)$ such that $\sched(\fpath) = \alpha$ for
 at least one $\sched$-path $\fpath$ from $\sinit$ to $s$ with
 $\wgt(\fpath)=w_s$.
  
 Let now $Y_0$ be the set of all states $s\in U$ such that
 $A(s)$ contains at least one action $\alpha_s$ where 
 $(s,\alpha_s)$ is a progress move.
 We then inductively define $Y_{i+1}$ as the set of states
 $s\in U \setminus (Y_0\cup \ldots \cup Y_i)$ where
 $A(s)$ contains at least one action $\alpha_s$ such that
 $P(s,\alpha_s,t)>0$ for at least one state $t\in Y_i$.
 Clearly, there is some $j$ such that $Y_{j+1}$ is empty.
 Let 
 $Y  \ =  \ Y_0 \cup Y_1 \cup \ldots \cup Y_j$.
 Then, $Y \subseteq U$.

 \tudparagraph{1ex}{{\it Claim 2:}} $Y = U$.
 
 To prove this claim, we again suppose by contradiction that $Y$ is a
 proper subset of $U$.  Consider the sub-MDP $\widetilde{\cM}$ of
 $\cM$ induced by the state-action pairs $(s,\alpha)$ with
 $s\in U\setminus Y$ and $\alpha \in A(s)$.  Note that
 $s\in U\setminus Y$ and $\alpha \in A(s)$ implies $t\in U\setminus Y$
 for all states $t$ with $P(s,\alpha,t)>0$.  The residual schedulers
 $\residual{\sched}{\fpath}$ for the paths $\fpath$ from $\sinit$ to
 $s$ with $\wgt(\fpath)=w_s$ and $s\in U\setminus Y$ can be viewed as
 schedulers for $\widetilde{\cM}$.  Hence:
  $$
    \Pr^{\residual{\sched}{\fpath}}_{\widetilde{\cM},s}
       \bigl(\ \Box (U\setminus Y) \ \bigr)
    \ \ = \ \ 1\enskip.
  $$
  As $\goal \notin U$ (recall that $U$ consists of non-trap states,
  whereas $\goal$ is a trap) we get
  $\Pr^{\sched}_{\cM,\sinit}(\varphi)<1$, a contradiction, so that
  $U = Y$.  

\smallskip

To conclude, for the states $s\in U$ we can now pick some action
$\alpha_s \in A(s)$ such that either $s\in Y_0$ and $(s,\alpha_s)$ is
a progress move or $s\in Y_{i+1}$ and $P(s,\alpha_s,t)>0$ for some
state $t\in Y_i$.  Let $\tsched$ be a memoryless scheduler such that
$\tsched(s)=\alpha_s$ for $s\in U$ (and defined arbitrary from other
states).  Then, $ \Pr^{\tsched}_{\cM,\sinit}(\varphi) = 1$ and hence, 
for the case $T^* = \emptyset$ we are done with the proof of
Lemma~\ref{MD-sufficient-eventually-almostsure-max-equals-1}.

\bigskip

Suppose now that $T^*=\{\good\}$, and
$\varphi = \Diamond \good \vee \Diamond (\goal \wedge \accwgt \geq
K)$.  We pick an MD-scheduler $\usched$ such that
$\Pr^{\usched}_{\cM,s}(\Diamond \good)=p_s$ where for each state $s$,
$p_s=\Pr^{\max}_{\cM,s}(\Diamond \good)$.  We may assume w.l.o.g.~that
$\sched$ behaves as $\usched$ whenever a state $s$ with $p_s=1$ has
been reached.

The definition of the sets $X$, $X(s)$ is as before.  In Claim 1, we
show that $\min X(s)$ exists for all states $s$ where $p_s <1$.  The
argument is again by contradiction using paths $\fpath_R$ as above.
As
$\Pr^{\residual{\sched}{\fpath_R}}_{\cM,s}(\Diamond \good)\leqslant
p_s <1$ we have:
  \[
    \Pr^{\residual{\sched}{\fpath_R}}_{\cM,s} \Bigl( \bigvee_{t\in T
      \setminus T^*} (\Diamond (t \wedge (\accwgt \geqslant K_t))) \
    \Bigr) \ \ \geqslant \ \ 1-p_s \enspace.
  \] 
  The MDP $\cN=\cN_s$ is defined as above, with a $\tau$-transition
  from $\goal$ to the fresh state $\goal'$, while $\good$ has a
  $\tau$-transition to $s$, with weight small enough to ensure that
  $\cN$ has no positively weight-divergent end component.
  \begin{enumerate}
  \item [] Let us briefly explain how to find the value
    $\wgt(\good,\tau) \in \Integer$ so that the constructed MDP $\cN$
    has no positively weight-divergent end component.  Suppose $\cN$
    has an end component that contains $\good$.  Let $\cE$ be the
    maximal end component of $\cN$ that contains $\good$, and view it as an MDP.
    Obviously, we have $\Pr^{\max}_{\cE,s}(\Diamond \good)=1$ and
    $\cE$ is a sub-MDP of $\cM$.  (Notice that none of the new
    state-action pairs $(t,\tau)$ for $t\in \{\goal,\good\}$ is
    contained in $\cE$.)  As $\cM$ has no positively weight-divergent
    end component, so does $\cE$. Thus, 
    the maximal expected accumulated weight until reaching $\good$ is
    finite.  Let $E= \ExpRew{\max}{\cE,s}(\text{``$\wgt$ until
      $\good$''})$.  We then may define $\wgt(t,\tau)$ as any value
    that is smaller than $-E$.  This ensures that
    $\ExpRew{\usched}{\cE,s}(\MP) <0$ for each MD-scheduler $\usched$
    for $\cE$ with a single BSCC $\cB$ and $(\good,\tau)\in \cB$
    (Lemma \ref{lemma:scMC-exp-mp-quotient}).  But then
    $\ExpRew{\max}{\cE}(\MP)<0$.  Hence, $\cE$ is not positively
    weight-divergent (Corollary \ref{wgt-div-implies-non-neg-mp}).
  \end{enumerate} 
  Scheduler $\vsched_R$ for $\cN$ is now defined as follows.  Starting
  in state $s$, $\vsched_R$ first behaves as
  $\residual{\sched}{\fpath_R}$ until reaching a state
  $t\in \{\good,\goal\}$.
  \begin{itemize}
  \item If $t = \goal $ then $\vsched_R$ schedules $\tau$ and moves to
    $\goal'$.
  \item If $t = \good$ then $\vsched_R$ takes the $\tau$-transition
    back to state $s$. If $R'$ is the weight of the (complete) path
    that $\vsched_R$  has generated then $\vsched_R$ behaves now
    as $\residual{\sched}{\fpath_{R'}}$ until reaching again a state
    $t\in \{\good,\goal\}$.
  \end{itemize}
  As the probablities to generate a path from $s$ to $\good$
under all residual schedulers 
  $\residual{\sched}{\fpath_R}$ is at least $1{-}p_s$
  we obtain:
\[
  \Pr^{\vsched_R}_{\cM,s} \bigl(\, \Diamond (\goal' \wedge (\accwgt
  \geqslant R)) \, \bigr) \ \ = \ \ 1 \enspace.
\]
This is impossible as $\cN$ has no positively weight-divergent end
component.  The remaining argument and proof of Claim 2 is the same as
for the case $T^*=\varnothing$. This concludes the proof of
Lemma~\ref{MD-sufficient-eventually-almostsure-max-equals-1} in the
general case.
\end{proof}

As a consequence of
Lemma~\ref{MD-sufficient-eventually-almostsure-max-equals-1}, one can
decide in \NP\ the problem \Easdwr{} by guessing an MD-scheduler
$\sched$ and checking that it ensures
$\Pr^\sched(\Diamond \good \vee \Diamond (\goal \wedge \accwgt
\geqslant K)) =1$. This would also yield an \EXPTIME-algorithm to
compute the values $\valueEas{\cM,s}$. We now give a better
alternative in terms of complexity, both to answer the decision
problem \Easdwr{} and to compute the values $\valueEas{\cM,s}$, by using a
reduction to mean-payoff games.

\begin{theorem}
 \label{complexity-DWR-E1-no-wgt-div-EC}
 If $\cM$ has no weight-divergent end components then the problem
 \Easdwr{} is in $\NP \cap \coNP$.  The values $\valueEas{\cM,s}$ can be
 computed in pseudo-polynomial time.
\end{theorem}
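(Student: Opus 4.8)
The plan is to reduce \Easdwr{} for such an MDP to a two-player mean-payoff game, in the spirit of the reduction for \Uposdwr{} in Lemma~\ref{lem:Uposdwr-reduc}, but on the ``$\exists$ scheduler / almost surely'' side. First I would put the instance in the normal form guaranteed by assumptions \textbf{(A1)}--\textbf{(A3)}: $T^*=\{\good\}$, $T\setminus T^*=\{\goal\}$ with $\good,\goal$ traps, $\Pr^{\max}_{\cM,s}(\Diamond \good)<1$ for every non-target $s$, and $\Pr^{\max}_{\cM,s}(\Diamond T)=1$ everywhere. Next, using the spider construction (Lemma~\ref{spider-construction}) and the fact that each $\varphi_K$ is $\cE$-invariant for every $0$-EC $\cE$, I would flatten all $0$-ECs, preserving $\valueEas{\cM,s}$; after this step $\cM$ has no $0$-EC and, by assumption, no positively weight-divergent end component. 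By Lemma~\ref{lemma:pumping-ecs} this means $\cM$ contains neither positive nor zero cycles (a positive cycle yields a pumping, hence weight-divergent, end component, and a zero cycle yields a $0$-EC), so the only cycles left in $\cM$ are negative ones. Finally, by Lemma~\ref{MD-sufficient-eventually-almostsure-max-equals-1}, it suffices to search for an \emph{MD}-scheduler realizing $\varphi_K$ almost surely.

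The central step is to build a mean-payoff game $\cG$ from $\cM$ exactly as in Lemma~\ref{lem:Uposdwr-reduc}: player~$1$ (the controller, maximizer) chooses the actions, player~$2$ (the resolver) chooses the probabilistic successors, $\good$ is turned into a winning sink for player~$1$, and a back-edge from $\goal$ to the reference state $\sinit$ with weight $-K$ is added. The objective for player~$1$ is $\Diamond \good \vee (\MP \geqslant 0)$. I then want to prove the correspondence
\[
  \exists \sched.\ \Pr^{\sched}_{\cM,\sinit}(\varphi_K)=1
  \quad\Longleftrightarrow\quad
  \text{player~$1$ wins $\cG$ from $\sinit$.}
\]
The decisive observation, which makes the fully adversarial resolver sound for an \emph{almost-sure} objective, is that assumption~\textbf{(A2)} forces every reachable non-target state to lie on a path to $\goal$: since no scheduler reaches $\good$ almost surely from a non-target state, every such state reaches $\goal$ with positive probability. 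Combined with Lemma~\ref{lem:no-neg-cycle-MC-as}, this rules out the only loophole of the reduction, namely a reachable negative cycle sitting in the ``basin'' of $\good$, which an adversary could pump but the probabilistic dynamics would almost surely escape.

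For the direction ``$\Rightarrow$'', given an MD-scheduler $\sched$ with $\Pr^{\sched}_{\cM,\sinit}(\varphi_K)=1$, Lemma~\ref{lem:no-neg-cycle-MC-as} applied to the induced Markov chain yields that $T$ is reached almost surely, no reachable state lies on a negative cycle, and every path from $\sinit$ to $\goal$ has weight $\geqslant K$. Together with the absence of positive and zero cycles, this makes the reachable subgraph of the induced chain acyclic, so playing $\sched$'s actions lets player~$1$ guarantee that every infinite play either reaches $\good$ or traverses the $-K$ back-edge infinitely often along prefixes of weight $\geqslant K$, whence $\MP \geqslant 0$. For ``$\Leftarrow$'', a positional winning strategy for player~$1$ induces an MD-scheduler; because the strategy secures $\MP \geqslant 0$ against \emph{every} resolver, the adversary can pump no negative cycle, so the induced chain has no reachable negative cycle and, via the back-edge, every $\sinit$-to-$\goal$ path has weight $\geqslant K$; acyclicity of the reachable subgraph then gives almost-sure reachability of $T$, and Lemma~\ref{lem:no-neg-cycle-MC-as} closes the argument.

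The complexity claims follow: $\cG$ has polynomial size, and mean-payoff (Büchi) games are in $\NP\cap\coNP$ and solvable in pseudo-polynomial time~\cite{ChatDoy11}, so \Easdwr{} is in $\NP\cap\coNP$. For the value, $\valueEas{\cM,\sinit}$ is the largest $K$ for which player~$1$ wins, which coincides with the value of the underlying mean-payoff game at $\sinit$ (the back-edge weight $-K$ trading off against the threshold $0$); this is computable in pseudo-polynomial time. I expect the main obstacle to be the soundness of turning the probabilistic resolver into a fully adversarial player for an almost-sure objective: precisely the point where \textbf{(A2)} and the no-positive/no-zero-cycle structure must be combined to align Lemma~\ref{lem:no-neg-cycle-MC-as} with the mean-payoff game, and where the $\Diamond\good$ disjunct must be handled by a winning sink rather than by the mean-payoff condition.
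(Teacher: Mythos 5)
Your overall architecture is the same as the paper's: normal form \textbf{(A1)}--\textbf{(A3)}, flattening of $0$-ECs by the spider construction, MD-sufficiency via Lemma~\ref{MD-sufficient-eventually-almostsure-max-equals-1}, and the identical mean-payoff game (player~1 picks actions, player~2 resolves probabilities, a $0$-weight sink at $\good$ and a back-edge $\goal \to \sinit$ of weight $-K$). However, there is a genuine gap in the structural claim you use to justify both directions. It is \emph{not} true that an MDP without weight-divergent end components and without $0$-ECs ``contains neither positive nor zero cycles.'' A cycle in the graph of $\cM$ is not an end component: the closure condition $\{t : P(s,\alpha,t)>0\}\subseteq T$ can fail, so a positive cycle need not sit inside any EC at all (take $(s,\alpha)$ with $\wgt(s,\alpha)=+1$, $P(s,\alpha,s)=P(s,\alpha,\goal)=\tfrac12$: positive cycle, no EC). Consequently the induced Markov chain of a witnessing MD-scheduler need not be acyclic, and everything you hang on acyclicity breaks. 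In the forward direction this is repairable: what you actually need is that every cycle reachable under the strategy in $\cG$ has nonnegative weight (no negative cycles by Lemma~\ref{lem:no-neg-cycle-MC-as} plus \textbf{(A2)}, and cycles through the back-edge have weight at least $K-K=0$), whence $\MP\geqslant 0$ by cycle decomposition --- this is exactly how the paper argues, with no acyclicity. In the backward direction the gap is more serious: ``acyclicity gives almost-sure reachability of $T$'' must be replaced by the paper's argument that after flattening $0$-ECs every end component $\cE$ of $\cM$ satisfies $\Exp{\max}{\cE}(\MP)<0$, so if the induced chain had a BSCC avoiding $\good$ and $\goal$, player~2 could follow a play of negative mean payoff inside it, contradicting that the strategy is winning. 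You need this EC-level argument; your cycle-level shortcut does not supply it.

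A secondary issue: your claim that $\valueEas{\cM,\sinit}$ ``coincides with the value of the underlying mean-payoff game at $\sinit$'' conflates an accumulated-weight threshold with a per-step average; the game $\cG$ itself depends on $K$ through the back-edge weight $-K$. The paper instead bounds the finite values in $[|S|\cdot W_{\min},\,|S|\cdot W_{\max}]$ and runs a binary search on $K$, invoking a pseudo-polynomial mean-payoff solver $O(\log(|S|\cdot W))$ times; you should do the same (or give a separate argument relating the two quantities).
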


\begin{proof}
  The complexity upper bound for the decision problem is established
  through a polynomial-time reduction to mean-payoff games,
  illustrated on Figure~\ref{fig:Easdwr-partcase-red-mp}. 
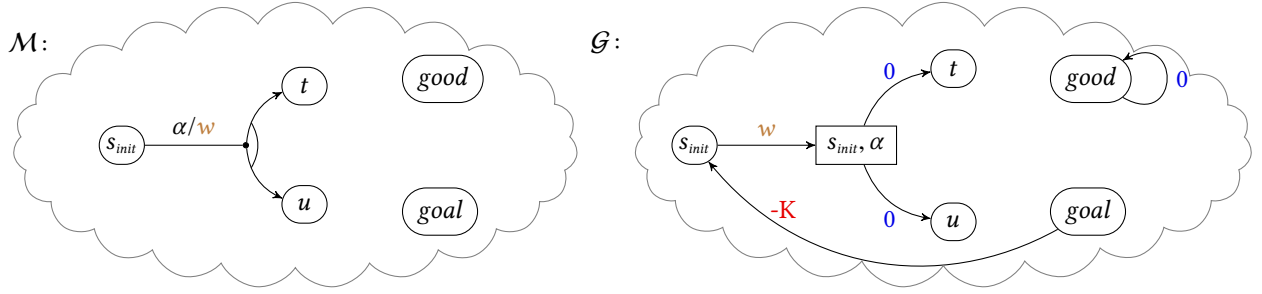
\begin{figure}

\begin{tikzpicture}
	\node[state] (sinit) {$s_\init$};
	\node[istate, above left=8mm and 7mm of sinit] {$\cM\colon$};%
	\node[bullet, right=13mm of sinit] (b) {};
	\node[state, above right=5mm and 7mm of b] (t) {$t$};
	\node[state, below right=5mm and 7mm of b] (u) {$u$};

	\node[state, above right=3mm and 40mm of sinit] (good) {$\good$};
	\node[state, below right=3mm and 40mm of sinit] (goal) {$\goal$};

	\path%
		(sinit) edge[ntran] node[above] {$\alpha$/$\neuw{w}$} (b)
		(b) edge[ptran, bend left] coordinate[pos=0.3](bt) (t)
		(b) edge[ptran, bend right] coordinate[pos=0.3](bu) (u);	
	\draw%
		(bt) to[bend left] (bu);

	\node[state, right=70mm of sinit] (gsinit) {$\sinit$};
	\node[rstate, right=13mm of gsinit] (gb) {$\sinit,\alpha$};
	\node[state, above right=5mm and 7mm of gb] (gt) {$t$};
	\node[state, below right=5mm and 7mm of gb] (gu) {$u$};

	\node[istate, above left=8mm and 7mm of gsinit] {$\cG\colon$};%
	\node[state, above right=3mm and 50mm of gsinit] (ggood) {$\good$};
	\node[state, below right=3mm and 50mm of gsinit] (ggoal) {$\goal$};

	\path%
		(ggood) edge[ptran,loop, min distance=9mm, out=-30, in=30] node[right]{\nilw} (ggood)
		(ggoal) edge[ptran, bend left=40] node[above,pos=.75]{\negw{K}} (gsinit)	
		(gsinit) edge[ptran] node[above] {$\neuw{w}$} (gb)
		(gb) edge[ptran, bend left] node[above]{\nilw} (gt)
		(gb) edge[ptran, bend right] node[below]{\nilw} (gu);

	\coordinate[right=20mm of sinit] (c);
	\node[cloud, cloud puffs=23,cloud puff arc=120, cloud ignores aspect, draw=gray,
			minimum width=75mm, minimum height=38mm](cloud) at (c) {};
	\coordinate[right=30mm of gsinit] (gc);
	\node[cloud, cloud puffs=23,cloud puff arc=120, cloud ignores aspect, draw=gray,
			minimum width=82mm, minimum height=38mm](cloud) at (gc) {};
\end{tikzpicture}
 \caption{Reduction of \Easdwr{} to mean-payoff games, assuming no
  weight-divergent EC.}
\label{fig:Easdwr-partcase-red-mp}
\end{figure}

From $\cM$, we build a two-player mean-payoff game $\cG$, in which
player~1 simulates the choices of the scheduler, player~2 is
responsible for the probabilistic choices, and the weight of a
transition by player~1 coincides with the weight in $\cM$. Moreover,
$\cG$ is extended with two transitions: a self-loop on $\good$ with
weight $0$, and an edge from $\goal$ to $\sinit$ with weight $-K$.

The construction ensures
\[
  \exists \sched,\ \Pr^\sched_{\cM,\sinit}(\Diamond \good \vee
  \Diamond \goal \wedge \accwgt \geqslant K) = 1 \ \
  \Longleftrightarrow \ \ \exists \stratone,\ \forall \strattwo,\
  \play{\cG}{\stratone}{\strattwo} \models \MP \geq 0 \enspace.
\]

$(\Longrightarrow)$ To prove the left-to-right direction, we pick
 a scheduler $\sched$ that satisfies
$\Pr^\sched_{\cM,\sinit}(\Diamond \good \vee \Diamond \goal \wedge
\accwgt \geqslant K) =1$. By
Lemma~\ref{MD-sufficient-eventually-almostsure-max-equals-1}, we may
assume that $\sched$ is an MD-scheduler. It trivially induces an MD-strategy 
$\stratone$ in $\cG$, which mimics $\sched$ in all states
except $\goal$, and moves back to $\sinit$ from $\goal$.

\noindent Thanks to Lemma~\ref{lem:no-neg-cycle-MC-as}, along all paths from
$\sinit$ to $\goal$ in the Markov chain $\cC$ induced by $\sched$, no
state belongs to a negative cycle of $\cC$. From the hypothesis that
for all $s \neq \good$, $\Pr^{\max}_{\cM,s}(\Diamond \good) <1$, we
derive that all states (except $\good$) have a path to $\goal$ in
$\cC$. Therefore, $\cC$ has no negative cycle on the way to $\goal$.
Moreover, the weight $-K$ of the transition from $\goal$ to $\sinit$
was chosen so that all cycles around $\sinit$ have nonnegative
weight.  Thus, for all strategy $\strattwo$ of player~2, the play
induced by $\stratone$ and $\strattwo$ has nonnegative mean payoff.

$(\Longleftarrow)$ For the other direction, we let $\stratone$ be an
MD-strategy winning for player~1 in $\cG$. Here it also trivially
defines a scheduler $\sched$ for $\cM$. We first show that $\sched$
ensures
$\Pr^\sched_{\cM,\sinit}(\Diamond \good \vee \Diamond \goal)=1$, by
contradiction. Then, there exists an EC $\cE$ such that
$\Pr^\sched_{\cM,\sinit}(\Diamond \Box \cE )>0$. Since all EC of $\cM$
have negative expected mean payoff, this holds in particular for
$\cE$, and thus $\sched$ induces at least a path with negative
mean-payoff. Now, assume that the weight constraint in $\goal$ is not
met, \ie there is an $\sched$-path reaching $\goal$ with
accumulated weight at most $K{-}1$. Iterations of this path followed
by the transition from $\goal$ to $\sinit$ yields a play in $\cG$
under $\stratone$ which had negative mean-payoff, a contradiction with
the fact that $\stratone$ is winning. Thus, all $\sched$-paths to
$\goal$ reach it with accumulated weight at least $K$. All in all,
$\sched$ ensures that $\varphi$ holds almost surely.

For the computation of the values, $\valueEas{\cM,s} = \infty$ if
$\Pr_{\cM,s}^{\max}(\Diamond \good) = 1$, 
and $\valueEas{\cM,s} = -\infty$ if
$\Pr_{\cM,s}^{\max}(\Diamond (\good \lor \goal)) < 1$.
Otherwise,
for each state~$s\in S$ we have that 
$\valueEas{\cM,s} \in \{-\infty\} \cup [|S|\cdot W_{\min}, |S|\cdot W_{\max}]$,
where~$W_{\min}$ is the minimal weight and~$W_{\max}$ is the maximal weight
that appears in~$\cM$.
In fact, MD-schedulers suffice to achieve a given threshold~$K$, and
an MD-scheduler~$\sched$ satisfying
$\Pr^\sched_{\cM,s}\big(\Diamond \good \vee
  \Diamond (\goal \wedge \accwgt \geqslant K)\big) = 1$ cannot induce
any negative cycle in~$\cM$. Thus, any threshold~$K$ that can be ensured
by~$\sched$ must be equal to or greater than the weight of a simple shortest path
from~$s$ to~$\goal$, which is at least~$|S|\cdot W_{\min}$.
Conversely, under any MD-scheduler~$\sched$, $\goal$ (which is a trap state) is reachable following a simple path
so a threshold~$K$ with $\Pr^\sched_{\cM,s}\big(\Diamond \good \vee
\Diamond (\goal \wedge \accwgt \geqslant K)\big) = 1$ cannot be larger than $|S|\cdot W_{\max}$.
To compute the values, we can thus run a binary search in this interval 
with $\log(|S|\cdot W)$ calls to
a pseudo-polynomial mean-payoff solver, where~$W=W_{\max} - W_{\min}$.
The binary search either determines
a finite value, or it returns that the value must be less than~$|S|\cdot W_{\min}$ in which case
the value is~$-\infty$.
\end{proof}

To prepare the proof of the general case (when $\cM$ may have
weight-divergent end components), we provide a characterization of the
different cases that arise for $\valueEas{\cM,s}$: whether the value
is $-\infty$, finite, or $+\infty$. To do so, we introduce a weighted
directed graph associated with $\cM$ and an MD-scheduler $\sched$.
Let $G^{\sched}_s$ denote the weighted directed graph where the
vertex set consists of all states $u$ in $\cM$ that belong to
$\sched$-path from $s$ to $\goal$.  The edge relation in
$G^{\sched}_s$ is given by $u \to u'$ if $P_{\cM}(u,\sched(u),u')>0$.
The weight of the edge $u \to u'$ is $\wgt_{\cM}(u,\sched(u))$.

\begin{lemma}
 \label{KsE1-pm-infty-or-integer}
 Let $\cM$ be an MDP with no positively weight-divergent end
 components that satisfies assumptions \textbf{(A1)}, \textbf{(A2)}
 and \textbf{(A3)} and let $s$ be a state of $\cM$. Then:
  \begin{enumerate}
  \item [(a)] $\valueEas{\cM,s}=+\infty$ iff $s = \good$
  \item [(b)] $\valueEas{\cM,s} \in \Integer$ iff $s\neq \good$ and
    there exists an MD-scheduler $\sched$ such that the graph
    $G^{\sched}_s$ does not contain any negative cycle.
  \item [(c)] $\valueEas{\cM,s}=-\infty$ iff $s\neq \good$ and there
    is no MD-scheduler $\sched$ satisfying the condition stated in
    (b).
  \end{enumerate}
\end{lemma}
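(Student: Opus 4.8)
The \emph{plan} is to reduce everything to memoryless deterministic (MD) schedulers and read the trichotomy off the induced Markov chains. Using the spider construction together with the value preservation in (S3) of Lemma~\ref{spider-construction}, I would first assume w.l.o.g.\ that $\cM$ has no $0$-EC, so that $\Exp{\max}{\cE}(\MP)<0$ for every end component $\cE$. By Lemma~\ref{MD-sufficient-eventually-almostsure-max-equals-1} the supremum defining $\valueEas{\cM,s}$ may be taken over MD-schedulers only. For an MD-scheduler $\sched$, Lemma~\ref{lem:no-neg-cycle-MC-as} applied to the induced Markov chain $\cC_{\sched}$ gives that $\Pr^{\sched}_{\cM,s}(\varphi_K)=1$ holds iff (i) $\Pr_{\cC_{\sched},s}(\Diamond T)=1$ and (ii) no $s$-to-$\goal$ path in $\cC_{\sched}$ traverses a negative cycle while every such path has weight $\geqslant K$. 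A short reachability/co-reachability argument shows that the negative-cycle condition in (ii) is literally ``$G^{\sched}_s$ contains no negative cycle'': every state of a negative cycle met by an $s$-to-$\goal$ path is both reachable from $s$ and co-reachable from $\goal$, hence lies in $G^{\sched}_s$, and conversely. Thus $\valueEas{\cM,s}$ equals the maximum, over MD-schedulers $\sched$ satisfying (i) and having $G^{\sched}_s$ negative-cycle-free, of the minimal weight of an $s$-to-$\goal$ path, with $\max\varnothing=-\infty$.

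For part (a) and the upper bound: if $s=\good$ then $\good\in T^*$ is a trap, $\Diamond\good$ holds surely, and $\valueEas{\cM,\good}=+\infty$. If $s\neq\good$, assumption \textbf{(A2)} yields $\Pr^{\max}_{\cM,s}(\Diamond\good)<1$, so under any $\sched$ meeting (i) a positive fraction of paths reaches $\goal$; hence $\goal$ is reachable from $s$ in $\cC_{\sched}$ and $G^{\sched}_s$ contains both $s$ and $\goal$. When $G^{\sched}_s$ has no negative cycle the minimal weight of an $s$-to-$\goal$ path is attained on a simple path, so it is bounded by $(|S|-1)\wmax$ in absolute value; therefore every achievable $K$ satisfies $K\leqslant(|S|-1)\wmax$ and $\valueEas{\cM,s}<+\infty$. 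Being a supremum of integers bounded from above, it lies in $\Integer\cup\{-\infty\}$. This proves the ``$\Leftarrow$'' of (a) and shows that for $s\neq\good$ the value is finite or $-\infty$, so (b) and (c) are complementary and it suffices to establish (b).

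The forward implication of (b) is immediate from the formula above: if $\valueEas{\cM,s}\in\Integer$ the supremum is attained by some MD-scheduler $\sched$, which then satisfies (ii) and so has $G^{\sched}_s$ negative-cycle-free (and $s\neq\good$, else the value is $+\infty$). The converse is the crux. One cannot simply keep a witnessing scheduler $\sigma$ with $G^{\sigma}_s$ negative-cycle-free, because $\sigma$ may violate (i), leaking with positive probability into states from which, under $\sigma$, neither $\good$ nor $\goal$ is reached. The remedy is assumption \textbf{(A3)}: every state reaches $T$ almost surely under some scheduler, which forbids $T$-avoiding sinks and is exactly what makes the negative-cycle criterion sufficient. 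I would therefore prove the \emph{global} statement that $F=\{u : \valueEas{\cM,u}>-\infty\}$ equals $\{u\neq\good : \exists\,\text{MD-scheduler with }G^{\cdot}_u\text{ negative-cycle-free}\}$ by producing a single MD-scheduler good for all of $F$ at once. Concretely, using the reduction of Theorem~\ref{complexity-DWR-E1-no-wgt-div-EC} from \Easdwr{} to mean-payoff games, $\valueEas{\cM,u}\geqslant K$ iff player~1 wins the mean-payoff game $\cG_K$ carrying the back-edge $\goal\to u$ of weight $-K$; as $K\to-\infty$ the player-1 winning region grows and stabilizes, and positional determinacy yields a uniform MD-scheduler $\sched^{*}$ witnessing $\valueEas{\cM,u}>-\infty$ for every $u\in F$. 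It then remains to check the inclusion of the criterion set into $F$: given a witness $\sigma$ with $W=G^{\sigma}_s$ negative-cycle-free, every state of $W$ already carries such a witness (its graph is a subgraph of $W$), and \textbf{(A3)} forces each leak out of $W$ to funnel almost surely to $T$ without negative cycles, i.e.\ to belong to $F$; routing the leaks through $\sched^{*}$ then yields a proper scheduler whose $\goal$-graph is negative-cycle-free, hence a finite value.

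The \emph{main obstacle} is precisely this last, global, step: upgrading a local negative-cycle-free scheduler into a proper one (reaching $T$ almost surely) without introducing negative cycles. Purely local surgery fails, and the argument genuinely needs the interplay of \textbf{(A2)} (only $\good$ reaches $\good$ surely), \textbf{(A3)} (everything reaches $T$ surely), and the absence of positively weight-divergent end components (the weight towards $\goal$ cannot be driven to $+\infty$), which together pin $F$ down as a game winning region admitting a single positional strategy.
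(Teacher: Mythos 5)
Your use of Lemma~\ref{MD-sufficient-eventually-almostsure-max-equals-1} and Lemma~\ref{lem:no-neg-cycle-MC-as} for part (a) and for the forward implications of (b) and (c) is exactly the paper's route (its proof is a one-line pointer to these two lemmas). The issue is the converse of (b), which you correctly single out as the crux but do not actually close.

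The concrete gap is the claim that ``\textbf{(A3)} forces each leak out of $W$ to funnel almost surely to $T$ without negative cycles, i.e.\ to belong to $F$.'' Assumption \textbf{(A3)} only guarantees that every state can reach $T$ almost surely under \emph{some} scheduler; it says nothing about the weights accumulated on the way, so a state reachable from $W$ under the witness $\sigma$ may still have value $-\infty$. Concretely, take a state $v$ with one action looping on $v$ with probability $1$ and weight $-1$ (an end component with negative mean payoff, hence not positively weight-divergent and not a 0-EC) and a second action moving to $\goal$ with probability $\tfrac12$ and back to $v$ with probability $\tfrac12$, weight $-1$: then $v$ satisfies \textbf{(A3)}, yet every scheduler reaching $\goal$ almost surely from $v$ arrives with unbounded negative weight, so $\valueEas{\cM,v}=-\infty$ and there is nothing for your uniform scheduler $\sched^{*}$ to route to. (Prefixing this gadget with a deterministic zero-weight edge $s\to v$ also shows that the criterion in (b) must be read as ranging over MD-schedulers $\sched$ with $\Pr^{\sched}_{\cM,s}(\Diamond T)=1$: the MD-scheduler that loops forever in $v$ makes $\goal$ unreachable, so $G^{\sched}_{s}=\varnothing$ is vacuously negative-cycle-free while $\valueEas{\cM,s}=-\infty$.) Under that reading --- the one the paper implicitly uses, since the witnesses produced by Lemma~\ref{MD-sufficient-eventually-almostsure-max-equals-1} are proper --- conditions (i) and (ii) of Lemma~\ref{lem:no-neg-cycle-MC-as} give both directions of (b) at once, and the entire mean-payoff-game detour (whose ``winning region stabilizing as $K\to-\infty$'' is in any case under-specified, since the back-edge target changes with the state $u$) is unnecessary.
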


\begin{proof}
   Statement (a) is trivial.  
   Statements (b) and (c) are easy consequences
   of Lemma \ref{MD-sufficient-eventually-almostsure-max-equals-1}
   and Lemma \ref{lem:no-neg-cycle-MC-as}.
\end{proof}

\subsubsection*{General Case}
We now consider the general case where $\cM$ might have positively
weight-divergent end components, still assuming \textbf{(A1)},
\textbf{(A2)} and \textbf{(A3)} to hold.

We first observe that states that belong to the same positively
weight-divergent end component have the same truth values for \Easdwr{}
and the same values for the corresponding optimization problem. More
precisely:

\begin{lemma}[Same values for states in weight-divergent ECs]
 \label{same-value-wgt-div-EC}
  Let $\cE$ be a positively weight-divergent end component of $\cM$.
  Then:
  \begin{enumerate}
  \item [(a)]
    For all states $s,s'\in \cE$, 
    $\Pr^{\max}_{\cM,s}(\varphi)=1$ iff\ \  $\Pr^{\max}_{\cM,s'}(\varphi)=1$ 
    and
    $\Pr^{\max}_{\cM,s}(\varphi)>0$ iff\ \  $\Pr^{\max}_{\cM,s'}(\varphi)>0$.
  \item [(b)] There exists
    $\valueEas{\cM,\cE} \in \{+\infty,-\infty\}$ such that for all
    states $s \in \cE$ we have $\valueEas{\cM,s} = \valueEas{\cM,\cE}$.
  \item [(c)] If $\valueEas{\cM,\cE}=+\infty$ and $s$ is a state with
    $\Pr^{\max}_{\cM,s}(\Diamond \cE)=1$, then
    $\valueEas{\cM,s}=+\infty$.
  \end{enumerate}
\end{lemma}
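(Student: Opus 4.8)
The entire lemma rests on one pumping tool, which makes precise the phrase ``any scheduler can be modified to have a first phase where the weight is increased by a desired constant inside a weight-divergent end component'' from the proof sketch of Theorem~\ref{thm:DWR-E1}. Stated as I intend to use it: for every state $r$ of a positively weight-divergent end component $\cE$ and every $c\in\Integer$ there is a finite-memory scheduler $\usched$ using only state-action pairs of $\cE$ such that $\Pr^{\usched}_{\cM,r}\big(\Diamond(r\wedge(\accwgt\geqslant c))\big)=1$; that is, from $r$ one almost surely returns to $r$ with accumulated weight at least $c$. The crucial feature is that $r$ is both the start and the target, so this lemma can later absorb any uncontrolled weight offset incurred while merely navigating inside $\cE$.

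To prove this pumping tool I would proceed as follows. By Theorem~\ref{weight-divergence-algorithm} there is a pumping or gambling MD-scheduler $\divsched$ for $\cE$; fix a BSCC $\cB$ of $\divsched$ and a state $b\in\cB$ (so $\cB$ itself is pumping or gambling), and fix simple paths $\rho$ from $b$ to $r$ and $\rho'$ from $r$ to $b$ inside $\cE$, with constant total weight $D=\wgt(\rho)+\wgt(\rho')\in\Integer$. Under $\divsched$ the weights accumulated at successive returns to $b$ form a random walk $S_1,S_2,\dots$ with i.i.d.\ increments (the excursion weights), whose mean is $\Exp{}{\cB}(\MP)$ times the expected excursion length, hence strictly positive in the pumping case and zero in the gambling case; in either case $\limsup_k S_k=+\infty$ almost surely (for gambling, this is the recurrence of a non-degenerate mean-zero integer walk, the increments being non-degenerate since gambling components have both positive and negative cycles, cf.\ Lemma~\ref{gambling-MC}). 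I would then let $\usched$ work in rounds: from $b$ it follows $\divsched$ until the first return to $b$ at which the weight gained in the round exceeds $|D|+\max(c,0)$ (an almost surely finite stopping time by the $\limsup$), then it traverses the fixed detour $\rho\rho'$ and repeats. Each round thus has strictly positive net gain, the weight at $r$ (visited once per round) tends to $+\infty$, and $\usched$ almost surely visits $r$ with weight $\geqslant c$. This round construction, analogous to the normal/recovery-mode scheduler in the proof of Lemma~\ref{appendix:pumping-ecs}, is the only delicate step, and I expect the main obstacle to be exactly this point: making the per-round gain dominate the possibly negative detour weight $D$.

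Granting the pumping tool, part~(b) follows by combining it with strong connectivity of $\cE$ (any $s\in\cE$ almost surely reaches any other state of $\cE$, with no control on the accumulated weight) and an optimal restart scheduler. First I would show the value is never a finite integer: if $\Pr^{\max}_{\cM,s}(\varphi_K)=1$ is witnessed by $\sched$, then the scheduler that pumps at $s$ to gain weight $\geqslant 1$ (the tool with $r=s$, $c=1$) and afterwards runs $\sched$ witnesses $\Pr^{\max}_{\cM,s}(\varphi_{K+1})=1$; hence the set of achievable thresholds is upward closed, so it is either all of $\Integer$, giving $\valueEas{\cM,s}=+\infty$, or empty, giving $\valueEas{\cM,s}=-\infty$. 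To see the value is constant on $\cE$, suppose $\valueEas{\cM,s'}=+\infty$ and let $s\in\cE$ be arbitrary; given $K$, the scheduler that navigates from $s$ to $s'$ almost surely (ignoring the random arrival weight), then pumps at $s'$ to global weight $\geqslant 0$ (the tool with $r=s'$, $c=0$, which works whatever the arrival weight is), then runs a $+\infty$-witness for $\varphi_K$ started at $s'$, achieves $\varphi_K$ almost surely, so $\valueEas{\cM,s}=+\infty$. The mirror-image construction (exchanging $s$ and $s'$, pumping at $s$) gives the converse, so all states of $\cE$ share a common value $\valueEas{\cM,\cE}\in\{+\infty,-\infty\}$.

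Parts~(a) and~(c) use the same navigate--pump--restart idea. For the ``$=1$'' equivalence in~(a) (here $\varphi$ with its fixed thresholds, not the family) I would, from $s$, reach $s'$ almost surely, pump at $s'$ to global weight $\geqslant 0$, and then run a witness of $\Pr^{\max}_{\cM,s'}(\varphi)=1$, and symmetrically; the ``$>0$'' equivalence is an easier variant in which positive probability suffices, so one only needs a single finite prefix from $s$ to $s'$ followed by a finite pump and the positive-probability continuation (equivalently, invoke the shortest-path characterization of Theorem~\ref{thm:Eposdwr}). For part~(c), if $\valueEas{\cM,\cE}=+\infty$ then by~(b) every state of $\cE$ has value $+\infty$, so from a state $s$ with $\Pr^{\max}_{\cM,s}(\Diamond\cE)=1$ one reaches $\cE$ almost surely and, upon entering $\cE$ at whichever state $r$, switches to ``pump at $r$ to global weight $\geqslant 0$, then run a $+\infty$-witness for $\varphi_K$ at $r$''; this achieves $\varphi_K$ almost surely for every $K$, whence $\valueEas{\cM,s}=+\infty$. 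Throughout, the navigation and pumping phases stay inside $\cE$, so the traps $\good$ and $\goal$ are reached only by the final witness, which keeps all the probability bookkeeping trivial; Lemma~\ref{MD-sufficient-eventually-almostsure-max-equals-1} is available to keep the witnesses memoryless when $\cM$ has no weight-divergent end component, but is not needed for the present statement.
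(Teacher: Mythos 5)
Your proof is correct and follows essentially the same route as the paper: the paper's argument also reduces everything to the existence, for any $R$, of a scheduler inside $\cE$ that almost surely reaches a designated state of $\cE$ with accumulated weight at least $R$, and then composes it with witnesses exactly as you do for (a), (b) and (c). The only difference is that the paper simply asserts this pumping fact as a consequence of positive weight-divergence, whereas you supply an explicit round-based construction for it (which is sound, given that the strict per-round gain threshold makes the weight at the reference state diverge).
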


\begin{proof}
 Given two states $s,s'\in \cE$ and a natural number $R\in \Nat$,
  since $\cE$ is positively weight-divergent 
  there exists a scheduler $\vsched_R$ for $\cE$ such that
  $\Pr^{\vsched_R}_{\cE,s}\big(\Diamond (s' \wedge (\accwgt \geqslant
  R))\big)=1$.  Hence, for each scheduler $\sched$ for $\cM$ and each
  $K\in \Integer$ there is a scheduler $\sched_R$ for $\cM$ such that
  $\Pr^{\sched}_{\cM,s}(\varphi_K)=\Pr^{\sched_R}_{\cM,s'}(\varphi_{K+R})$.
  With $R=0$, we obtain statement (a) and
  $\valueEas{\cM,s}=\valueEas{\cM,s'}$. %
  With $s=s'$ and letting $R$ tend to $\infty$, we obtain that the
  value $\valueEas{\cM,s}$ %
  cannot be finite.  Thus,
  $\valueEas{\cM,s}, \valueEpos{\cM,s} \in \{\pm \infty\}$.  This
  yields statement (b).

  To prove statement (c),
  we pick a family of schedulers 
  $(\sched_K)_{K\in \Integer}$ such that
  $\Pr^{\sched_K}_{\cM,u}(\varphi_K)=1$ for all 
  states $u$ in $\cE$ and all integers $K$.
  Let now $s$ be a state such that
  $\Pr^{\sched}_{\cM,s}(\Diamond \cE)=1$.
  Given $R\in \Integer$, we
  now compose $\sched$ and the schedulers $\sched_K$ to obtain
  a scheduler $\tsched_R$ with $\Pr^{\tsched_R}_{\cM,s}(\varphi_R)=1$.
  $\tsched_R$ first behaves as $\sched$ until reaching a state in $\cE$.
  As soon as some state $u$ of $\cE$ is reached, say along a path
  $\fpath$ from $s$ to $u$ with $\wgt(\fpath)=w$, $\tsched$ switches
  mode and behaves as $\sched_{R-w}$ from then on.
\end{proof}

For solving the general case, let us now explain how to use the particular case of MDPs 
without positively weight-divergent end components. From $\cM$ we construct another $\cN$ by intuitively
replacing each maximal weight-divergent end component $\cE$ with an
entry state $\entrystate{\cE}$ and an exit state $\exitstate{\cE}$
and a transition from $\entrystate{\cE}$ to $\exitstate{\cE}$ with
``sufficiently high'' weight. $\cN$ has no positively weight-divergent
end components, however the values $\valueEas{\cN,s}$ are only a lower
bound on $\valueEas{\cM,s}$. In particular, it may happen that
$\valueEas{\cN,s} = -\infty$ and $\valueEas{\cM,s} = +\infty$. To
remedy this problem, we will see how to identify end components with
value $+\infty$. This is performed by a fixed-point computation of the
``good'' end components. All details of these steps are provided in
the remainder of this section.

\bigskip

To formally define $\cN$ we introduce some notations.  Let
$\cE_1,\ldots,\cE_k$ be the maximal end components of $\cM$ that are
positively weight-divergent. Recall that these can be computed in
polynomial time by first computing the maximal end components of
$\cM$ and then checking whether each of them is weight-divergent
thanks to Theorem~\ref{weight-divergence-algorithm} from
Section~\ref{sec:wgt-div}.

Let $\WDMEC$ consist of all states in $\cM$ that are contained in one
of the weight-divergent maximal end components $\cE_1,\ldots,\cE_k$.
Since $\good$ and $\goal$ are traps, the maximal end components of
$\cM$ do not contain them, and
$\{\good,\goal\} \cap \WDMEC = \varnothing$. For simplicity, we assume
that the action sets $\Act_{\cM}(s)$ are pairwise disjoint, and %
we write $\wgt^{\min}_{\cM} \in \Integer$ for the minimal weight
assigned to some state-action pair in $\M$. %
We have now all ingredients to precisely define $\cN$.
\begin{description}
\item[state space]
  $S_{\cN}  = 
  \bigl(S_{\cM} \setminus \WDMEC \bigr)
  \ \cup \ 
  \bigl\{\, \entrystate{\cE},\exitstate{\cE} \ : \ 
            \cE \in \{\cE_1,\ldots,\cE_k\} \ 
  \bigr\}$;
\item[action set] $\Act_{\cN} = \Act \, \cup \, \bigl\{\tau \bigr\} $,
  with $\tau \notin \Act$ a fresh action symbol;
\item[transitions]~
\begin{itemize}
\item If $(s,\alpha)$ is a state-action pair in $\cM$ where
  $s\in S_\cM \setminus \WDMEC$, then $(s,\alpha)$ is a state-action
  pair of $\cN$ with the same weight; moreover for each state
  $s'\in S_{\cM} \setminus \WDMEC$,
  $P_{\cN}(s,\alpha,s')=P_{\cM}(s,\alpha,s')$ and for each end
  component $\cE \in \{\cE_1,\ldots,\cE_k\}$ we define
  $P_{\cN}(s,\alpha,\entrystate{\cE})
  =P_{\cM}(s,\alpha,\cE)$.\footnote{The notation
    $P_{\cM}(s,\alpha,\cE)$ stands for the probability from $s$ to
    reach any state of $\cE$.}
\item Each state-action pair $(s,\alpha)$ in $\cM$ where
  $s \in \cE \subseteq \WDMEC$%
  and such that for some state $s'$ outside $\cE$,
  $P_{\cM}(s,\alpha,s')>0$ is turned into a state-action pair
  $(\exitstate{\cE},\alpha)$ in $\cN$; the weight of the state-action
  pair $(\cE,\alpha)$ in $\cN$ coincides with the weight of
  $(s,\alpha)$ in $\cM$; moreover the transition probabilities are
  defined as follows: for each state $s'\in S_{\cM} \setminus \WDMEC$ we set
  $ P_{\cN}(\exitstate{\cE},\alpha,s') =
  \frac{P_{\cM}(s,\alpha,s')}{1{-}P_{\cM}(s,\alpha,\cE)}$, for each
  maximal weight-divergent end component $\cF\neq \cE$ we set
  $ P_{\cN}(\exitstate{\cE},\alpha,\entrystate{\cF}) = \sum_{s'\in
    \cF} \frac{P_{\cM}(s,\alpha,s')}{1{-}P_{\cM}(s,\alpha,\cE)} $, and
  $P_{\cN}(\cE,\alpha,\cE)=0$.
\item Last, for each maximal weight-divergent end component $\cE$ in
  $\cM$, $\cN$ contains a state-action pair $(\entrystate{\cE},\tau)$
  for some fresh action symbol $\tau$ with
  $P_{\cN}(\entrystate{\cE},\tau,\exitstate{\cE})=1$ and
  $\wgt_{\cN}(\entrystate{\cE},\tau) = \omega$, with
  $ \omega = \max\, \{ \, 0, \, - (|S_{\cN}|{-}1
  ) \cdot \wgt^{\min}_{\cM} \, \} $.
\end{itemize}
\end{description}

An example of transformation from $\cM$ to $\cN$ is provided by 
Example~\ref{example:dwr} in the core of the paper.

The representation of all states belonging to the same maximal
weight-divergent end component $\cE$ by the states $\entrystate{\cE}$
and $\exitstate{\cE}$ is motivated by part (a) of Lemma
\ref{same-value-wgt-div-EC}, which expresses that states in the same
WDMEC have the same truth value for \Easdwr. Intuitively, the
$\tau$-transition from $\entrystate{\cE}$ to $\exitstate{\cE}$ serves
to mimic all state-action pairs $(s,\alpha)$ with $s\in \cE$ and
$P_{\cM}(s,\alpha,\cE)=1$.

\begin{lemma}[Simple properties of the new MDP]
\label{simple-properties-MDP-N} 
$\cN$ satisfies assumptions \textbf{(A1)}, \textbf{(A2)} and
\textbf{(A3)}. %
Moreover:
  \begin{enumerate}
  \item [(a)] None of the states $\entrystate{\cE}$, $\exitstate{\cE}$
    for $\cE\in \{\cE_1,\ldots,\cE_k\}$ belong to an end component of
    $\cN$.
  \item [(b)] 
     $\cN$ has no positively weight-divergent end component. 
   \item [(c)] None of the states $\entrystate{\cE}$,
     $\exitstate{\cE}$ for $\cE \in \{\cE_1,\ldots,\cE_k\}$ belongs to
     a simple negative cycle in $\cN$.
  \end{enumerate} 
\end{lemma}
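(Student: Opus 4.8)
The plan is to check the three labelled assumptions and then prove (a)--(c), treating (a) as the conceptual core from which (b) is immediate and isolating the arithmetic content in (c). Assumption \textbf{(A1)} is direct: since $\good,\goal\notin\WDMEC$ and both are traps in $\cM$, and the construction only deletes or redirects state-action pairs attached to $\WDMEC$-states, neither state gains an enabled action in $\cN$; hence they remain the two trap targets with $T^*=\{\good\}$ and $T\setminus T^*=\{\goal\}$. For \textbf{(A2)} and \textbf{(A3)} I would argue through a reachability correspondence between $\cM$ and $\cN$ relative to the traps $\good$ and $\goal$. The subtle point is that $\good$- (resp.\ $T$-) avoidance in $\cM$ may rely on staying forever inside a WDMEC, which is impossible in $\cN$; but every exit action of a WDMEC $\cE_i$ is preserved via $\exitstate{\cE_i}$ and leads, in $\cM$, to states $t\notin\cE_i$ that again enjoy $\good$-avoidance by \textbf{(A2)} for $\cM$. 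Because the maximal end components of $\cM$ form a DAG under reachability, following exit actions moves strictly forward in this DAG, so after finitely many WDMEC-traversals one reaches either $\goal$ or a non-weight-divergent $\good$-avoiding end component, which \emph{can} be retained in $\cN$. This yields positive $\good$-avoidance probability from every $s\ne\good$ in $\cN$, giving \textbf{(A2)}; a symmetric bookkeeping of the schedulers witnessing $\Pr^{\max}_{\cM,s}(\Diamond T)=1$, using that $\cN$ forces forward progress out of super-nodes, gives \textbf{(A3)}.

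For (a) the plan is a \emph{lifting} argument. Suppose, for contradiction, that some end component $\cC$ of $\cN$ contains $\entrystate{\cE}$ or $\exitstate{\cE}$. Since $\tau$ is the only action at $\entrystate{\cE}$ with $P_{\cN}(\entrystate{\cE},\tau,\exitstate{\cE})=1$, and since the only in-edge of $\exitstate{\cE}$ comes from $\entrystate{\cE}$, the component $\cC$ contains both states and, being strongly connected with no direct edge $\exitstate{\cE}\to\entrystate{\cE}$, must contain further states. I would then build an end component $\cC'$ of $\cM$ by replacing every super-node pair $\{\entrystate{\cF},\exitstate{\cF}\}$ occurring in $\cC$ with the whole strongly connected WDMEC $\cF$, and by lifting each closed exit pair $(\exitstate{\cF},\alpha)\in\cC$ to the unique original $\cM$-pair $(s,\alpha)$ with $s\in\cF$ (unique by the pairwise disjointness of action sets). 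The renormalisation at exit states is exactly what removes the back-transitions into $\cF$, so in $\cM$ the lifted pair $(s,\alpha)$ sends all its probability either into $\cF\subseteq\cC'$ or to the $\cC$-successors, hence into $\cC'$; together with closedness of the non-super-node pairs of $\cC$ this shows $\cC'$ is closed, and strong connectivity is inherited. Thus $\cC'$ is an end component of $\cM$ strictly containing $\cE$, contradicting the maximality of $\cE$ as a MEC.

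Claim (b) follows quickly: by (a) every end component $\cC$ of $\cN$ avoids all super-nodes and all $\tau$-pairs, so it lies in $S_{\cM}\setminus\WDMEC$, where $P_{\cN}$ coincides with $P_{\cM}$ and where closedness forbids positive-probability transitions into any $\cE_i$; hence $\cC$ is literally an end component of $\cM$ with identical weights, disjoint from every $\cE_i$. Were $\cC$ positively weight-divergent, it would be contained in a maximal weight-divergent end component, i.e.\ in some $\cE_i$ (Section~\ref{sec:max-0-EC}), contradicting disjointness; so $\cN$ has no positively weight-divergent end component. For (c) I would exploit the choice of $\omega$: a simple cycle $\xi$ through a super-node contains the $\tau$-edge $\entrystate{\cF}\to\exitstate{\cF}$ of weight $\omega$ for each of the $p\ge 1$ super-nodes it visits, while its remaining $\ell-p$ edges (with $\ell=|\xi|\le|S_{\cN}|$) are non-$\tau$ edges of weight at least $\wgt^{\min}_{\cM}$. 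If $\wgt^{\min}_{\cM}\ge 0$ the cycle is non-negative; otherwise $\omega=-(|S_{\cN}|-1)\wgt^{\min}_{\cM}>0$ and, using $p\ge 1$ and $\ell-p\le|S_{\cN}|-1$,
\[
  \wgt(\xi)\ \ge\ p\,\omega+(\ell-p)\,\wgt^{\min}_{\cM}\ \ge\ \omega+(|S_{\cN}|-1)\,\wgt^{\min}_{\cM}\ =\ 0 .
\]
Hence no super-node lies on a simple negative cycle. The main obstacle is the reachability bookkeeping for \textbf{(A2)}/\textbf{(A3)} and, within (a), the careful verification that the lifted set $\cC'$ is genuinely closed under the \emph{original} (non-renormalised) transition probabilities; the remaining steps are routine.
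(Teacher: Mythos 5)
Your treatment of the three enumerated claims is sound and matches the paper's proof in substance. For (c) your computation is the paper's computation, just organised with the bound $\ell-p\leqslant|S_{\cN}|-1$ instead of the paper's $m\leqslant|S_{\cN}|-2$; both give a non-negative total weight for any simple cycle through a super-node. For (a) and (b) the paper merely asserts that every end component of $\cN$ is an end component of $\cM$ that is not positively weight-divergent; your lifting argument (replace each pair $\{\entrystate{\cF},\exitstate{\cF}\}$ occurring in the hypothetical end component by the whole MEC $\cF$, lift each exit pair via the disjointness of action sets, and contradict maximality of $\cE$ as a MEC) is exactly the verification that underlies that assertion, and your observations that $\entrystate{\cE}$ is the unique predecessor of $\exitstate{\cE}$ and that the renormalisation kills the edge $\exitstate{\cE}\to\entrystate{\cE}$ are the right ingredients.

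The one genuine flaw is in your sketch of \textbf{(A2)}/\textbf{(A3)}: the claim that ``the maximal end components of $\cM$ form a DAG under reachability'' is false in general (two distinct MECs can each reach the other with positive probability through non-MEC states; they only fail to be \emph{mutually} reachable inside a common end component), so the termination argument ``following exit actions moves strictly forward in this DAG'' does not stand as written. Moreover, \textbf{(A2)} is a statement about \emph{all} schedulers of $\cN$, whereas your argument constructs \emph{one} $\good$-avoiding behaviour, which proves the wrong quantifier. The standard repair, and what the paper's ``easy to see'' implicitly relies on, is the scheduler-transfer direction used later in Lemma~\ref{K_N-2-K-M}: every scheduler $\tsched$ for $\cN$ induces a scheduler $\sched$ for $\cM$ (mimic $\tsched$ outside $\WDMEC$ and, inside each $\cE_i$, steer to the state carrying the chosen exit action and retry until the exit succeeds) with $\Pr^{\tsched}_{\cN,\stateN{s}{\cN}}(\Diamond\good)\leqslant\Pr^{\sched}_{\cM,s}(\Diamond\good)$ and $\Pr^{\tsched}_{\cN,\stateN{s}{\cN}}(\Diamond T)\leqslant\Pr^{\sched}_{\cM,s}(\Diamond T)$, together with the converse transfer for \textbf{(A3)}; this yields $\Pr^{\max}_{\cN,u}(\Diamond\good)<1$ and $\Pr^{\max}_{\cN,u}(\Diamond T)=1$ directly from \textbf{(A2)} and \textbf{(A3)} for $\cM$, with no appeal to any acyclicity of the MEC structure.
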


\begin{proof}
  Given that $\cM$ satisfies assumptions \textbf{(A1)}, \textbf{(A2)}
  and \textbf{(A3)}, it is easy to see that $\cN$ satisfies these assumptions, too.

  Statements (a) and (b) follow from the fact that each end component
  $\cE$ of $\cN$ is an end component of $\cM$ that is not positively
  weight-divergent.
  Under the assumption that $\wgt^{\min}_{\cM}\geqslant 0$, statement
  (c) is trivial, since in this case all weights in $\cM$ %
  are nonnegative, and also in $\cN$ because all $\tau$ transitions
  then have weight $\omega = 0$. 

  \noindent Assume now that $\wgt^{\min}_{\cM} < 0$, in which case
  the weight of $\tau$-transitions satisfies
  $\omega= -(|S_{\cN}|{-}1) \cdot \wgt^{\min}_{\cM} <0$. Let $\cycle$
  be a simple cycle in $\cN$ that contains one of the states
  $\entrystate{\cE}$ or $\exitstate{\cE}$ for some
  $\cE \in \{\cE_1,\ldots,\cE_k\}$. Necessarily, $\cycle$ then also
  contains the $\tau$-transition from $\entrystate{\cE}$ to
  $\exitstate{\cE}$ (otherwise $\cycle$ could not enter
  $\exitstate{\cE}$ and could not leave $\entrystate{\cE}$).

  Regarding $\cycle$ as a sequence of state-action pairs, we let
  $\rho$ denote the sequence $(s_1,\alpha_1)\ldots (s_m,\alpha_m)$ of
  state-action pairs in $\cN$ that results from $\cycle$ by removing
  the state-action pairs $(\entrystate{\cF},\tau)$ for
  $\cF\in \{\cE_1,\ldots,\cE_k\}$. %
  In particular, $\{s_1,\ldots,s_m\}\subseteq S_{\cM}\setminus \WDMEC$
  and all state-action pairs $(s_i,\alpha_i)$ in $\rho$ belong to
  $\cM$.  Since $\cycle$ is a simple cycle, we derive
  $m \leqslant |S_{\cN}|{-}2$.  As a consequence, we get
  \[
   \wgt_{\cN}(\cycle) 
    \geqslant 
   \sum_{i=1}^m \wgt_{\cM}(s_i,\alpha_i) \ + \ 
   \wgt_{\cN}(\entrystate{\cE},\tau)
 \geqslant 
   m \cdot \wgt^{\min}_{\cM} \ + \ 
   \wgt_{\cN}(\entrystate{\cE},\tau)
    \geqslant 
    (N{-}2) \cdot \wgt^{\min}_{\cM} \ - \ 
    (N{-}1) \cdot \wgt^{\min}_{\cM} 
    \ \ = \ \  - \wgt^{\min}_{\cM} \ \ > \ \ 0 \enspace.
\]
Thus $\cycle$ is not a negative cycle, and the proof of statement (c) is complete.
\end{proof}

To establish a relation between values in $\cM$ and in $\cN$, we first
assign states of $\cM$ a corresponding state in $\cN$. For
$s \in \cM$, $s_\cN$ is defined as $s$ if
$s \in S_\cM \setminus \WDMEC$, and otherwise $s = \exitstate{\cE}$ if
$s$ belongs to the MEC $\cE$ that is positively weight-divergent. %
Furthermore, we define
$ \WDMEC_{\cN} = \bigl\{\entrystate{\cE},\exitstate{\cE} : \cE \in
\{\cE_1,\ldots,\cE_k\}\bigr\}$, as the set of entry and exit states in $\cN$.

The values for states in $\cN$ is defined analogously to values in
$\cM$: for $u \in \cN$, $\valueEas{\cN,u}$ is the supremum over all
integers $K$ such that $\Pr^{\max}_{\cN,u}(\varphi_K)=1$. On the one
hand recall from Lemma~\ref{same-value-wgt-div-EC} that for each
$\cE \in \{\cE_1,\ldots,\cE_k\}$ all states in $\cE$ share the same
value $\valueEas{\cM,\cE} \in \{+\infty,-\infty\}$. On the other hand,
as $\cN$ satisfies assumptions \textbf{(A1)}, \textbf{(A2)} and
\textbf{(A3)}, has no positively weight-divergent end components,
and (see Lemma \ref{simple-properties-MDP-N}) we can apply
Lemma~\ref{KsE1-pm-infty-or-integer}, we obtain that
$\valueEas{\cN,\exitstate{\cE}} \in \{-\infty\} \cup \Integer$. Thus,
we cannot expect that $\valueEas{\cM,s}$ and
$\valueEas{\cN,\stateN{s}{\cN}}$ agree.  Nevertheless, we will prove
that the values $\valueEas{\cM,s}$ can be derived from the values
$\valueEas{\cN,\stateN{s}{\cN}}$.

We start with the following observation that relates the values
in $\cM$ and $\cN$ 
where we switch to a different objective for $\cN$.

\begin{lemma}
\label{Tstar-N}
Let $s$ and $K \in \Integer$ be such that
$\Pr^{\max}_{\cM,s}(\varphi_K)=1$. Then, for
$T^*_{\cN}\ = \ T^* \, \cup \, \bigl\{\, \exitstate{\cE} \, : \,
 \valueEas{\cM,\cE}=+\infty \, \bigr\}$ it holds
  \[
   \Pr^{\max}_{\cN,\stateN{s}{\cN}}
     \bigl(\, 
       \Diamond T^*_{\cN} \, \vee \, 
       \Diamond (\goal \wedge (\accwgt \geqslant K))\,
     \bigr)\ = \ 1 \enspace.
  \]
\end{lemma}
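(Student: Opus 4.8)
The plan is to lift a scheduler witnessing $\Pr^{\max}_{\cM,s}(\varphi_K)=1$ to a scheduler for $\cN$ achieving the weaker objective $\Diamond T^*_{\cN}\vee\Diamond(\goal\wedge(\accwgt\geqslant K))$. Fix an $\sched$ with $\Pr^{\sched}_{\cM,s}(\varphi_K)=1$. First I would dispose of the case where $s$ lies in a maximal weight-divergent end component $\cE$: by Lemma~\ref{same-value-wgt-div-EC}(b) we have $\valueEas{\cM,\cE}\in\{+\infty,-\infty\}$, and since $\Pr^{\max}_{\cM,s}(\varphi_K)=1$ forces $\valueEas{\cM,s}\geqslant K>-\infty$, only $\valueEas{\cM,\cE}=+\infty$ is possible; then $\stateN{s}{\cN}=\exitstate{\cE}\in T^*_{\cN}$ and the claim is immediate. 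Hence I may assume $s\in S_{\cM}\setminus\WDMEC$, so that $\stateN{s}{\cN}=s$, and I only have to track what $\sched$ does before it first enters a weight-divergent end component.

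The main step — and the part I expect to be the real obstacle — is to show that almost no $\sched$-path ever visits a state of a weight-divergent MEC $\cE$ with $\valueEas{\cM,\cE}=-\infty$. I would argue by contradiction: if some finite $\sched$-path $\varrho$ from $s$ to a state $u\in\cE$ had positive measure, then conditioning $\Pr^{\sched}_{\cM,s}$ on the cylinder $\Cyl(\varrho)$ keeps the probability of $\varphi_K$ equal to $1$, so the residual scheduler satisfies $\Pr^{\residual{\sched}{\varrho}}_{\cM,u}\big(\Diamond\good\vee\Diamond(\goal\wedge(\accwgt\geqslant K-\wgt(\varrho)))\big)=1$. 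This witnesses $\valueEas{\cM,u}\geqslant K-\wgt(\varrho)\in\Integer$, contradicting $\valueEas{\cM,u}=\valueEas{\cM,\cE}=-\infty$ (Lemma~\ref{same-value-wgt-div-EC}(b)). Summing over the countably many first-entry paths, the probability that an $\sched$-path ever reaches a state of a $-\infty$ weight-divergent MEC is $0$. Since moreover $\Pr^{\sched}_{\cM,s}(\varphi_K)=1$ forces almost every path to reach $\good$ or $\goal$, I conclude that almost surely the path stays in $S_{\cM}\setminus\WDMEC$ until it either hits $\{\good,\goal\}$ or first enters a weight-divergent MEC of value $+\infty$.

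Finally I would define $\tsched$ on $\cN$ to copy $\sched$ verbatim on the common sub-state-space $S_{\cM}\setminus\WDMEC$, where $\cM$ and $\cN$ have identical enabled actions, transition probabilities and weights; entries into a MEC $\cE$ are routed, as in $\cN$, to $\entrystate{\cE}$ and forced on to $\exitstate{\cE}$. This yields a probability-preserving correspondence between $\cM$- and $\cN$-paths up to the first entry into a weight-divergent MEC or the first hit of $\{\good,\goal\}$: since $P_{\cN}(s',\alpha,\entrystate{\cE})=P_{\cM}(s',\alpha,\cE)$ and weights agree on $S_{\cM}\setminus\WDMEC$, matching events have equal measure. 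On this event of full measure, either the $\cM$-path reaches $\good\in T^*\subseteq T^*_{\cN}$, or it reaches $\goal$ with the same accumulated weight $\geqslant K$ (the weights coincide), or it first enters some $+\infty$ MEC $\cE$, whose image reaches $\exitstate{\cE}\in T^*_{\cN}$; the measure-zero $-\infty$ entries are irrelevant, and $\tsched$ may act arbitrarily there and after reaching $T^*_{\cN}$. Hence $\Pr^{\tsched}_{\cN,\stateN{s}{\cN}}\big(\Diamond T^*_{\cN}\vee\Diamond(\goal\wedge(\accwgt\geqslant K))\big)=\Pr^{\sched}_{\cM,s}(\varphi_K)=1$, which is the desired $\Pr^{\max}_{\cN,\stateN{s}{\cN}}$ statement. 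Note that the weight $\omega$ of the $\tau$-transitions never enters this direction of the argument, since we win as soon as a $+\infty$ MEC is entered and avoid $-\infty$ MECs almost surely; the only delicate bookkeeping is making the path correspondence and its measure-preservation precise, but this is the standard coupling between an MDP and the collapse of its end components at the divergence point.
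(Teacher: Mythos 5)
Your proof is correct and follows essentially the same route as the paper's: the same case split on whether $s$ lies in a weight-divergent MEC, the same residual-scheduler argument showing that any WDMEC reachable under $\sched$ must have value $+\infty$ (so its exit state lands in $T^*_{\cN}$), and the same lifting of $\sched$ to $\cN$ via the exact path correspondence on $S_{\cM}\setminus\WDMEC$. The only cosmetic difference is that the paper additionally reroutes its scheduler through an MD-scheduler maximizing the probability of $\Diamond T^*$ on the set where that probability equals $1$, which under assumption \textbf{(A2)} is essentially vacuous and which you rightly omit.
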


\begin{proof}
  If the given state $s$ of $\cM$ belongs to a maximal
  weight-divergent end component of $\cM$, then
  $\stateN{s}{\cN}\in T^*_{\cN}$ and the claim is obvious.  Suppose
  now that $s \notin \WDMEC$, in which case $s$ is also a state of
  $\cN$.

  We pick an MD-scheduler $\usched$ for $\cN$ that maximizes the
  probabilities to reach $T^*$ from every state in $\cN$.  Let
  $V=\{v\in S_{\cN}:\Pr^{\max}_{\cN,v}(\Diamond T^*)=1\}$.  Clearly,
  we have $\Pr^{\sched}_{\cN,v}(\Diamond T^*)=1$ for every state
  $v\in V$.  Consider now any scheduler $\tsched$ for $\cN$ such that:
 \begin{itemize}
 \item
   $\tsched(\fpath)=\sched(\fpath)$
   for each finite path $\fpath$ in $\cN$ with
   $\last(\fpath)\notin V$ consisting of states that do not belong to
   $\WDMEC_{\cN}$.
 \item
   $\tsched(\fpath)=\usched(\last(\fpath))$ if
   $\last(\fpath) \in V$.
 \end{itemize}
(The behavior of $\tsched$ for the paths $\fpath$ that contain
 a state in $\WDMEC_{\cN}$ is irrelevant.)

 Obviously,
 $\Pr^{\tsched}_{\cN,\stateN{s}{\cN}}(\Diamond \entrystate{\cE})>0$
 implies $\Pr^{\sched}_{\cM,s}(\Diamond \cE)>0$.  But then there is an
 $\sched$-path $\fpath$ from $s$ to some state $u$ in $\cE$.  For the
 residual scheduler, we have
 $ \Pr^{\residual{\sched}{\fpath}}_{\cM,u} \bigl(\,
 \varphi_{K-\wgt(\fpath)} \, \bigr) = 1 $.  In particular,
 $\Pr^{\max}_{\cM,u}(\varphi_{K-\wgt(\fpath)})=1$.  Therefore,
 $\valueEas{\cM,\cE}=+\infty$ by part (b) of
 Lemma~\ref{same-value-wgt-div-EC}.  But then
 $\exitstate{\cE}\in T^*_{\cN}$.
 Moreover, each $\tsched$-path $\fpath$ from $\stateN{s}{\cN}$ to $\goal$
 that does not enter a state in $\WDMEC_{\cN}$
 is a $\sched$-path from $s$ to $\goal$.
 Hence, $\wgt(\fpath)\geqslant K$.
\end{proof} 

To simplify our notations, we write
$\valueEas{\cN,\cE}$ rather than $\valueEas{\cN,\exitstate{\cE}}$
for a given $\cE\in \{\cE_1,\ldots,\cE_k\}$.
Recall that by assumption \textbf{(A2)} we have
$\valueEas{\cN,\exitstate{\cE}} \in \Integer \cup \{-\infty\}$.  The
idea is now to identify the end components
$\cE\in \{\cE_1,\ldots,\cE_k\}$ of $\cM$ where
$\valueEas{\cM,\cE}=+\infty$.  For this, we observe:

\begin{lemma}
 \label{K_N-2-K-M}
 For each scheduler $\tsched$ for $\cN$ and each $K\in \Integer$,
 there is a scheduler $\sched$ for $\cM$ such that for all states $s$
 in $\cM$:
  \[
    \Pr^{\tsched}_{\cN,\stateN{s}{\cN}}(\varphi_K)
    \ \ \leqslant \ \ 
    \Pr^{\sched}_{\cM,s}(\varphi_K) \enspace.
  \]
\end{lemma}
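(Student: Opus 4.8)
The plan is to construct, from $\tsched$, a scheduler $\sched$ for $\cM$ that simulates $\tsched$ through a coupling of the two induced stochastic processes. I will maintain two invariants at every \emph{synchronization point} of the simulation (namely, whenever the simulated $\cN$-process occupies a state of $S_\cM\setminus\WDMEC$ or an entry/exit state $\entrystate{\cE},\exitstate{\cE}$): the current $\cM$-state equals the $\cN$-state under the identification $s\leftrightarrow \stateN{s}{\cN}$, and the weight accumulated so far in $\cM$ is at least the weight accumulated so far in $\cN$. The two ingredients I rely on are that outside the positively weight-divergent maximal end components $\cM$ and $\cN$ coincide (same state-action pairs, weights, and transition probabilities), and that for a positively weight-divergent end component $\cE$, from any state $u\in\cE$, any target $t\in\cE$ and any threshold $R$ there is a scheduler for $\cE$ with $\Pr^{}_{\cE,u}\big(\Diamond(t\wedge(\accwgt\geqslant R))\big)=1$, exactly as used in the proof of Lemma~\ref{same-value-wgt-div-EC}.

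Outside the weight-divergent end components the simulation is immediate: when $\tsched$ plays an action $\alpha\in\Act$ at a state $s\in S_\cM\setminus\WDMEC$, $\sched$ plays the same $\alpha$. By construction of $\cN$ the successor distributions agree, where a move of $\cN$ into $\entrystate{\cE}$ corresponds to a move of $\cM$ into some state $u\in\cE$, and the contributed weight is identical; hence both invariants survive such a step, and a move into a weight-divergent end component lands the two processes at corresponding points.

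The crux is the treatment of a visit to a weight-divergent end component $\cE$. Suppose the $\cN$-process is at $\exitstate{\cE}$ with accumulated weight $w$ (already including the gain of the $\tau$-transition from $\entrystate{\cE}$ taken upon entering $\cE$) and $\tsched$ selects an action $\alpha$ there; by construction $\alpha$ comes from a unique state-action pair $(s_\alpha,\alpha)$ of $\cM$ with $s_\alpha\in\cE$ and $P_\cM(s_\alpha,\alpha,\cE)<1$, and $\wgt_\cN(\exitstate{\cE},\alpha)=\wgt_\cM(s_\alpha,\alpha)$. I let $\sched$ sample $\alpha$ with the probability $\tsched$ assigns it, then use a positively-weight-divergent scheduler of $\cE$ to move from its current state $u\in\cE$ to $s_\alpha$ while driving the accumulated weight up to at least $w$ (possible almost surely by the ingredient above), and then play $\alpha$. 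With probability $1-P_\cM(s_\alpha,\alpha,\cE)$ the outcome leaves $\cE$, with conditional distribution precisely $P_\cN(\exitstate{\cE},\alpha,\cdot)$ by definition of the renormalized probabilities; otherwise it stays in $\cE$ and $\sched$ repeats the navigate-boost-and-replay step with the same $\alpha$. Since each attempt leaves $\cE$ with positive probability $1-P_\cM(s_\alpha,\alpha,\cE)$ and each navigation phase terminates almost surely, $\cE$ is left after an almost-surely finite (geometric) number of attempts, the exit distribution matches that of $\cN$, and because the weight is boosted to at least $w$ before every replay, the accumulated weight upon leaving is at least $w+\wgt_\cM(s_\alpha,\alpha)$, i.e. at least the $\cN$-weight at the corresponding external state. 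Thus both invariants are restored at the next synchronization point.

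This defines a measure-preserving coupling between $\tsched$-paths of $\cN$ and $\sched$-paths of $\cM$ under which corresponding paths visit the same states of $S_\cM\setminus\WDMEC$ and the $\cM$-weight dominates the $\cN$-weight at every synchronization point. Since $\good$ and $\goal$ are traps lying outside every weight-divergent end component, any $\tsched$-path satisfying $\varphi_K$ is matched with a $\sched$-path that reaches $\good$, or reaches $\goal$ with weight at least the $\cN$-weight there, hence at least $K$; so the coupled $\sched$-path also satisfies $\varphi_K$. Taking probabilities yields $\Pr^{\tsched}_{\cN,\stateN{s}{\cN}}(\varphi_K)\leqslant\Pr^{\sched}_{\cM,s}(\varphi_K)$ for every starting state $s$. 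I expect the main obstacle to be the measure-theoretic formalization of this coupling: making precise the unbounded retrying inside weight-divergent end components, verifying that the conditional exit distributions equal the renormalized transition probabilities of $\cN$ while the weight invariant survives arbitrarily many retries, and letting $\sched$ reconstruct the relevant $\cN$-history from the $\cM$-history in its memory so as to imitate a history-dependent, randomized $\tsched$.
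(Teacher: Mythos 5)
Your proposal is correct and follows essentially the same route as the paper: the paper's (one-line) proof likewise builds $\sched$ by mimicking $\tsched$ outside the weight-divergent MECs and deploying a weight-divergent scheduler inside each $\cE_i$ to guarantee a weight gain of at least $\omega$ before exiting. The additional care you take with the renormalized exit probabilities (the commit-and-retry loop) and with the weight-domination invariant is precisely the ``easy verification'' that the paper leaves implicit.
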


\begin{proof}
  The proof is an easy verification. It relies on the fact that any
  scheduler $\tsched$ for $\cN$ naturally induces a scheduler $\sched$
  for $\cM$ that mimics the behavior of $\tsched$ and uses a
  weight-divergent scheduler for the behavior inside the end
  components $\cE\in \{\cE_1,\ldots,\cE_k\}$ to ensure that the
  accumulated weight inside $\cE$ is at least $\omega$.
\end{proof}

As a consequence, values in $\cM$ are at least as large as values in $\cN$.
\begin{corollary}[Values in $\cN$ are lower bounds for the values in $\cM$]
 \label{values-N-lower-bounds}
~
  \begin{enumerate}
  \item [(a)]
     For every  state $s$ of $\cM$, $\valueEas{\cN,s}\leqslant \valueEas{\cM,s}$.
      
    \item [(b)] If $\cE\in \{\cE_1,\ldots,\cE_k\}$ and
      $\valueEas{\cN,\exitstate{\cE}}\in \Integer$ then
      $\valueEas{\cM,\cE}=+\infty$.
  \end{enumerate}
\end{corollary}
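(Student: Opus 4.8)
The plan is to derive both parts directly from Lemma~\ref{K_N-2-K-M}, invoking the dichotomy of Lemma~\ref{same-value-wgt-div-EC}(b) to upgrade the second part from a finite lower bound to the value $+\infty$. Since Lemma~\ref{K_N-2-K-M} already transfers each $\cN$-scheduler into an $\cM$-scheduler that is at least as good for every objective $\varphi_K$, the corollary is essentially a bookkeeping argument over the state correspondence $\stateN{\cdot}{\cN}$.

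For part (a) I would fix a state $s$ belonging to both $\cM$ and $\cN$, that is $s \in S_{\cM} \setminus \WDMEC$, for which $\stateN{s}{\cN}=s$, and show that every threshold witnessed in $\cN$ is witnessed in $\cM$. Concretely, let $K \in \Integer$ be such that $\Pr^{\max}_{\cN,s}(\varphi_K) = 1$, and pick a scheduler $\tsched$ for $\cN$ realizing this. Lemma~\ref{K_N-2-K-M} supplies a scheduler $\sched$ for $\cM$ with $\Pr^{\tsched}_{\cN,s}(\varphi_K) \leqslant \Pr^{\sched}_{\cM,s}(\varphi_K)$, whence $\Pr^{\sched}_{\cM,s}(\varphi_K) = 1$ and therefore $K \leqslant \valueEas{\cM,s}$. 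Taking the supremum over all such $K$ yields $\valueEas{\cN,s} \leqslant \valueEas{\cM,s}$.

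For part (b) suppose $\valueEas{\cN,\exitstate{\cE}} = K_0 \in \Integer$, so there is a scheduler $\tsched$ for $\cN$ with $\Pr^{\tsched}_{\cN,\exitstate{\cE}}(\varphi_{K_0}) = 1$. Choosing any state $s \in \cE$ we have $\stateN{s}{\cN} = \exitstate{\cE}$, so Lemma~\ref{K_N-2-K-M} produces a scheduler $\sched$ for $\cM$ with $1 = \Pr^{\tsched}_{\cN,\exitstate{\cE}}(\varphi_{K_0}) \leqslant \Pr^{\sched}_{\cM,s}(\varphi_{K_0})$, i.e.\ $\Pr^{\max}_{\cM,s}(\varphi_{K_0}) = 1$, and hence $\valueEas{\cM,s} \geqslant K_0 > -\infty$. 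Since $\cE$ is a positively weight-divergent end component, Lemma~\ref{same-value-wgt-div-EC}(b) forces $\valueEas{\cM,\cE} = \valueEas{\cM,s} \in \{+\infty,-\infty\}$; combined with $\valueEas{\cM,s} > -\infty$, this gives $\valueEas{\cM,\cE} = +\infty$.

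The only steps requiring genuine care are the two small bridges: checking that a finite value at $\exitstate{\cE}$ really does pin down a single achievable threshold $K_0$ at \emph{every} $s \in \cE$ through the correspondence $\stateN{\cdot}{\cN}$, and recognizing that the jump from ``finite threshold achievable in $\cN$'' to ``value $+\infty$ in $\cM$'' is carried entirely by the $\{+\infty,-\infty\}$ dichotomy of Lemma~\ref{same-value-wgt-div-EC}(b). That dichotomy is the real engine of part (b): without it, Lemma~\ref{K_N-2-K-M} would only furnish a finite lower bound in $\cM$ rather than the desired $+\infty$. No new estimates or constructions are needed beyond these two cited lemmas.
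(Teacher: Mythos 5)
Your proposal is correct and matches the paper's intent exactly: the paper derives this corollary directly from Lemma~\ref{K_N-2-K-M} (the scheduler transfer from $\cN$ to $\cM$), with the $\{+\infty,-\infty\}$ dichotomy of Lemma~\ref{same-value-wgt-div-EC}(b) supplying the upgrade to $+\infty$ in part~(b), which is precisely the two-step argument you spell out. The only cosmetic remark is that in part~(a) the inequality for states $s\in\WDMEC$ is to be read via the correspondence $\stateN{s}{\cN}$, and your argument applies verbatim there as well.
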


Still there can be end components $\cE \in \{\cE_1,\ldots,\cE_k\}$
with $\valueEas{\cM,\cE}=+\infty$ while
$\valueEas{\cN,\exitstate{\cE}}=-\infty$.  The following example
illustrates this phenomenon.

\begin{figure}
\pgfdeclarelayer{background}%
\pgfsetlayers{background,main}%
\begin{tikzpicture}
	\node[state] (u) {$u$};
	\node[bullet, below=15mm of u] (b1) {};
	\node[state, left=13mm of b1] (s1) {$s_1$};
	\node[state, right=13mm of b1] (s2) {$s_2$};
	\node[state, below=25mm of b1] (goal) {$\goal$};
	\node[bullet, left=15mm of s1] (b2) {};
	\node[bullet, right=15mm of s2] (b3) {};
	\node[istate, left=30mm of u] {$\cM\colon$};%
	\path%
		(u) edge[ntran, bend right] node[left,pos=0.2] {$\gamma$/\negw{1}} (b1)
		(b1) edge[ptran, bend right] coordinate[pos=0.3](b1u) (u)
		(b1) edge[ptran, bend left] coordinate[pos=0.3](b1s1) (s1)
		(b1) edge[ptran, bend right] (s2)
		(s1) edge[ptran,loop, min distance=9mm, out=-60, in=-120] node[below]{$\beta$/\posw{1}} (s1)
		(s1) edge[ntran] node[above] {$\alpha$/\nilw} (b2)
		(b2) edge[ptran, bend left=50] coordinate[pos=0.1](b2u) (u)
		(b2) edge[ptran, bend right=50] coordinate[pos=0.1](b2goal) (goal)
		(s2) edge[ptran,loop, min distance=9mm, out=-60, in=-120] node[below]{$\beta$/\posw{1}} (s2)
		(s2) edge[ntran] node[above] {$\alpha$/\nilw} (b3)
		(b3) edge[ptran, bend right=50] coordinate[pos=0.1](b3u) (u)
		(b3) edge[ptran, bend left=50] coordinate[pos=0.1](b3goal) (goal);
	\draw%
		(b1s1) to[bend right=90] (b1u)
		(b2u) to[bend right=80] (b2goal)
		(b3u) to[bend left=80] (b3goal);	
	\coordinate[above left=4mm and 3mm of s1] (e1s1);
	\coordinate[above right=4mm and 3mm of s1] (e2s1);
	\coordinate[below right=10mm and 3mm of s1] (e3s1);
	\coordinate[below left=10mm and 3mm of s1] (e4s1);
	\coordinate[above left=4mm and 3mm of s2] (e1s2);
	\coordinate[above right=4mm and 3mm of s2] (e2s2);
	\coordinate[below right=10mm and 3mm of s2] (e3s2);
	\coordinate[below left=10mm and 3mm of s2] (e4s2);
	\begin{pgfonlayer}{background}%
        		\draw[rounded corners=1em,line width=2em,black!10,fill=black!10]
	        		(e1s1) -- (e2s1) -- (e3s1) -- (e4s1) -- cycle;
		\node[above=1mm of s1,white]{$\mathcal{E}_1$};
        		\draw[rounded corners=1em,line width=2em,black!10,fill=black!10]
	        		(e1s2) -- (e2s2) -- (e3s2) -- (e4s2) -- cycle;
		\node[above=1mm of s2,white]{$\mathcal{E}_2$};
	\end{pgfonlayer}
	\node[state,right=50mm of g] (gu) {$u$};
	\node[bullet, below=15mm of gu] (gb1) {};
	\node[state, left=13mm of gb1] (gs1) {$\entrystate{\cE_1}$};
	\node[state, below= of gs1] (gs1ex) {$\exitstate{\cE_1}$};
	\node[state, right=13mm of gb1] (gs2) {$\entrystate{\cE_2}$};
	\node[state, below= of gs2] (gs2ex) {$\exitstate{\cE_2}$};
	\node[state, below=25mm of gb1] (ggoal) {$\goal$};
	\node[bullet, left=15mm of gs1] (gb2) {};
	\node[bullet, right=15mm of gs2] (gb3) {};
	\node[istate, left=30mm of gu] {$\cN\colon$};%
	\path%
		(gu) edge[ntran, bend right] node[left,pos=.2] {$\gamma$/\negw{1}} (gb1)
		(gb1) edge[ptran, bend right] coordinate[pos=0.3](gb1gu) (gu)
		(gb1) edge[ptran, bend left] coordinate[pos=0.3](gb1gs1) (gs1)
		(gb1) edge[ptran, bend right] (gs2)
		(gs1) edge[ptran] node[right]{$\tau$/$\neuw{\omega}$} (gs1ex)
		(gs1ex) edge[ntran] node[right=0.1,pos=0.6] {$\alpha$/\nilw} (gb2)
		(gb2) edge[ptran, bend left=50] coordinate[pos=0.1](gb2gu) (gu)
		(gb2) edge[ptran, bend right=50] coordinate[pos=0.1](gb2ggoal) (ggoal)
		(gs2) edge[ptran] node[left]{$\tau$/$\neuw{\omega}$} (gs2ex)
		(gs2ex) edge[ntran] node[left=0.1,pos=0.6] {$\alpha$/\nilw} (gb3)
		(gb3) edge[ptran, bend right=50] coordinate[pos=0.1](gb3gu) (gu)
		(gb3) edge[ptran, bend left=50] coordinate[pos=0.1](gb3ggoal) (ggoal);
	\draw%
		(gb1gs1) to[bend right=90] (gb1gu)
		(gb2gu) to[bend right=80] (gb2ggoal)
		(gb3gu) to[bend left=80] (gb3ggoal);	
\end{tikzpicture}
 \caption{We have $\valueEas{\cM,s_1}=\valueEas{\cM,s_1}=+\infty$
while $\valueEas{\cN,\exitstate{\cE_1}}=\valueEas{\cN,\exitstate{\cE_2}}=-\infty$.}
\label{fig:valueeas-infty-vs-minus-infty}
\end{figure}

\begin{example}
{\rm
Let $\cM$ be the MDP depicted in Figure~\ref{fig:valueeas-infty-vs-minus-infty} on the left.
Then, $\cM$ has two maximal end components $\cE_1$, $\cE_2$ where
$\cE_1$ consists of the state-action pair $(s_1,\beta)$
and $\cE_2$  of the state-action pair $(s_2,\beta)$.
State $u$ does not belong to any end component.
Hence, $\WDMEC=\{s_1,s_2\}$.
We have $\valueEas{\cM,s_1}=\valueEas{\cM,s_1}=+\infty$.
The new MDP $\cN$ 
illustrated by Figure~\ref{fig:valueeas-infty-vs-minus-infty}
on the right can be seen as a Markov chain.
As $\wgt_{\cN}(u,\gamma)=-1$, the five 
states $\entrystate{\cE_1},\exitstate{\cE_1}$,
$\entrystate{\cE_2},\exitstate{\cE_2}$ and $u$ consistute
a strongly connected component of $\cN$ that contains a negative
cycle. Hence, 
$\Pr_{\cN,\exitstate{\cE_1}}(\varphi_K)=
 \Pr_{\cN,\exitstate{\cE_2}}(\varphi_K)=0$
for each $K$, and therefore
$\valueEas{\cN,\exitstate{\cE_1}}=
 \valueEas{\cN,\exitstate{\cE_2}}=-\infty$.
  }

\end{example}

To detect end components $\cE$ with with $\valueEas{\cM,\cE}=+\infty$
and $\valueEas{\cN,\exitstate{\cE}}=-\infty$, we introduce the
notation of \emph{good states} and \emph{good end components}.

\begin{definition}[Good states and end components]
\label{def:good-states}
If  $X \subseteq \{\cE_1,\ldots,\cE_k\}$ then we define
$\entrystate{X}= \{\entrystate{\cE} : \cE \in X\}$ and
\[
  \varphi_K[X] \ \ = \ \ 
   \Diamond ( T^* \cup \entrystate{X} ) \ \vee \ 
   \Diamond (\goal \wedge (\accwgt \geqslant K)) \enspace.
\]
The set of \emph{good end components} $\GoodEC$ is the largest subset $X$ of
$\{\cE_1,\ldots,\cE_k\}$ such that:
\begin{equation}
      \label{Good}
        \forall \cE \in X \ \exists K \in \Integer
        \text{ s.t. }
        \Pr^{\max}_{\cN,\exitstate{\cE}}
                \bigl( \, \varphi_K[X] \, \bigr) \ = \ 1 \enspace.
      \tag{Good}
\end{equation}
The set of \emph{good states} is defined as $\Good =\bigcup_{\cE \in \GoodEC} S_{\cE}$
where $S_{\cE}$ denotes the set of states
of end component $\cE$.
\end{definition}

\begin{remark}[Greatest fixed-point characterization of the $\GoodEC$]
  By definition, $\GoodEC$ is the greatest fixed point of the operator
  $ \Omega\colon 2^{\{\cE_1,\ldots,\cE_k\}} \to
  2^{\{\cE_1,\ldots,\cE_k\}}$ that
  \label{fixed-point-Good}
maps a given $X \subseteq \{\cE_1,\ldots,\cE_k\}$ to
\[
  \Omega(X)
  \ \ = \ \ 
  \bigl\{ \ \cE \in \{\cE_1,\ldots,\cE_k\} \ : \ 
              \exists K\in \Integer \text{ s.t. }
               \Pr^{\max}_{\cN,\exitstate{\cN}}
                  \bigl(\, \varphi_K[X]\, \bigr)=1 \ 
  \bigr\} \enspace.
\]
The operator $\Omega$ is monotonic, %
thus Tarski's fixed-point theorem ensures the existence of a greatest
fixed point
that can be obtained
as the limit of the sequence $X_0=\{\cE_1,\ldots,\cE_k\}$,
$X_{i+1}=\Omega(X_i)$ for $i \geqslant 0$.  
\end{remark}

First we prove that good states have value $+\infty$ in $\cM$.
\begin{lemma}
 \label{Good-WDMEC-infty}
   If $s \in \Good$ then $\valueEas{\cM,s}=+\infty$.
\end{lemma}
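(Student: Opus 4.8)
The plan is to prove that $s$ lies in some good end component $\cE_0\in\GoodEC$ forces $\valueEas{\cM,s}=+\infty$, i.e.\ that for \emph{every} threshold $K\in\Integer$ there is a scheduler $\sched$ for $\cM$ with $\Pr^{\sched}_{\cM,s}(\varphi_K)=1$, where $\varphi_K=\Diamond\good\vee\Diamond(\goal\wedge(\accwgt\geqslant K))$. First I would fix the local witnesses supplied by the fixed point. By Definition~\ref{def:good-states}, for each $\cE\in\GoodEC$ there is $K_\cE$ with $\Pr^{\max}_{\cN,\exitstate{\cE}}(\varphi_{K_\cE}[\GoodEC])=1$; setting $K^{\dagger}=\min_{\cE\in\GoodEC}K_\cE$ (finite, as $\GoodEC$ is finite) and noting that lowering the bound only enlarges the event, we get $\Pr^{\max}_{\cN,\exitstate{\cE}}(\varphi_{K^{\dagger}}[\GoodEC])=1$ for all $\cE\in\GoodEC$. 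Treating the entry states $\entrystate{\GoodEC}$ as traps turns $\varphi_{K^{\dagger}}[\GoodEC]$ into an objective of the form handled by Lemma~\ref{MD-sufficient-eventually-almostsure-max-equals-1}; since $\cN$ and hence this modification have no positively weight-divergent end component (Lemma~\ref{simple-properties-MDP-N}, and making states traps cannot create end components), that lemma yields, for each $\cE$, an MD-scheduler $\tsched_\cE$ with $\Pr^{\tsched_\cE}_{\exitstate{\cE}}(\varphi_{K^{\dagger}}[\GoodEC])=1$. Thus almost every $\tsched_\cE$-path from $\exitstate{\cE}$ reaches $\good$, or some entry state $\entrystate{\cF}$ with $\cF\in\GoodEC$, or $\goal$; and on the paths reaching $\goal$ without first meeting $\good\cup\entrystate{\GoodEC}$ the accumulated weight is $\geqslant K^{\dagger}$.

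Next I would build the (infinite-memory) $\cM$-scheduler $\sched$ operating in macro-steps. Whenever a state of a good end component $\cE$ is entered, $\sched$ first uses a positively weight-divergent scheduler inside $\cE$ to raise the accumulated weight to at least $K-K^{\dagger}$ and to reach the state at which the exit action chosen by $\tsched_\cE$ is enabled (possible because $\cE$ is positively weight-divergent; Lemma~\ref{same-value-wgt-div-EC}(a)), and then mimics $\tsched_\cE$, resolving the correspondence between $\cN$ and $\cM$ exactly as in the proof of Lemma~\ref{K_N-2-K-M}: a failed exit that stays inside $\cE$ triggers re-pumping and another attempt, so that the conditional exit distribution matches that of $\exitstate{\cE}$ in $\cN$. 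A macro-step ends when the mimicked path reaches $\good$ (then $\varphi_K$ holds), reaches $\goal$ (then, being a direct run to $\goal$ that did not pass through a good end component, its weight since the start of the step is $\geqslant K^{\dagger}$, so with the pre-pumped weight $\geqslant K-K^{\dagger}$ the total is $\geqslant K$ and $\varphi_K$ holds), or enters another good end component $\cF$, which begins the next macro-step.

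The remaining and decisive step is termination: I must show that under $\sched$ the path almost surely performs only finitely many macro-steps, i.e.\ the event of entering good end components infinitely often has probability $0$. Here I would invoke de Alfaro's theorem~\cite{Alfaro98Thesis}: under any scheduler the limit $\lim(\infpath)$ of almost every infinite path is an end component of $\cM$. If a path entered good end components infinitely often it would take exit transitions of good end components infinitely often, so $\lim(\infpath)$ would be an end component containing states of some maximal weight-divergent end component $\cE$ \emph{together with} at least one state outside $\cE$ (a state of another maximal end component visited infinitely often, or a state on a return route leaving and re-entering $\cE$). Since maximal end components are pairwise state-disjoint, such a limit end component is contained in no maximal end component, contradicting the maximality of $\cE$. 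Hence almost every path is absorbed in $\good$ or $\goal$ after finitely many macro-steps, and by the bookkeeping above it satisfies $\varphi_K$; as $K$ is arbitrary, $\valueEas{\cM,s}=+\infty$.

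The hard part will be precisely this termination argument. The definition of $\GoodEC$ permits each witness $\tsched_\cE$ to route all its probability mass back to entry states of good end components, so one cannot bound the number of good-end-component visits by any local escape probability. The resolution—that infinitely many visits force the path's limit to be an end component straddling a maximal end component, which is impossible—is the crux, and it rests on de Alfaro's characterization of path limits together with the disjointness of maximal end components; the weight accounting and the $\cN$-to-$\cM$ mimicking are then routine given Lemmas~\ref{same-value-wgt-div-EC} and~\ref{K_N-2-K-M}.
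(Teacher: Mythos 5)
Your proof is correct, and its skeleton (extract finite thresholds from the fixed point, obtain MD witnesses in $\cN$ via Lemma~\ref{MD-sufficient-eventually-almostsure-max-equals-1}, lift to $\cM$ by interleaving pumping phases with mimicking of the $\cN$-scheduler as in Lemma~\ref{K_N-2-K-M}) matches the paper's. Where you genuinely diverge is in the step you correctly identify as the crux. The paper works with a \emph{single} global MD-scheduler $\tsched$ for $\cN$ and establishes three structural properties of it: the entry/exit states lie in no end component of $\cN$ (so they are visited only finitely often), $\tsched$ never reaches the entry state of a non-good end component before revisiting good states (proved by showing that otherwise $\GoodEC\cup\{\cF\}$ would be a fixed point of $\Omega$, contradicting that $\GoodEC$ is the \emph{greatest} one), and $\Diamond(T\cup\Good_{\cN})$ holds almost surely; termination and correctness of the lifted scheduler then follow because its behavior at non-good entry states can be left undefined. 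You instead argue termination directly in $\cM$ via de Alfaro's path-limit theorem together with disjointness/maximality of MECs: infinitely many macro-steps would force the limit to be an end component straddling a maximal weight-divergent end component, which cannot exist. This buys you two things: you never need the paper's property that non-good entries are unreachable (your scheduler is defined everywhere and simply pumps through any weight-divergent component it meets, with the weight only improving relative to the $\omega$-weighted $\tau$-edge of $\cN$), and your argument does not exploit the maximality of the fixed point at this stage — it would work verbatim for any fixed point of $\Omega$. The paper's route, in exchange, keeps all the combinatorics inside the finite Markov chain induced by one MD-scheduler on $\cN$, which makes the measure-theoretic bookkeeping of the lifted infinite-memory scheduler slightly lighter. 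Both are sound; only make sure, when writing this up, to state explicitly that the almost-sure weight bound $\geqslant K^{\dagger}$ on macro-steps ending in $\goal$ uses that $\goal$ is a trap (so a path reaching $\goal$ with too little weight and never touching $T^*\cup\entrystate{\GoodEC}$ violates both disjuncts of $\varphi_{K^{\dagger}}[\GoodEC]$ and hence has measure zero).
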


\begin{proof}
Obviously, there exists $K\in \Integer$ such that
$\Pr^{\max}_{\cN,\exitstate{\cE}}\big(\varphi_K[\GoodEC]\big)=1$
for all $\cE \in \GoodEC$.
Using Lemma \ref{MD-sufficient-eventually-almostsure-max-equals-1}
one can show that 
there is an MD-scheduler $\tsched$ for $\cN$ such that
\begin{center}
  $\Pr^{\tsched}_{\cN,\exitstate{\cE}}\big(\varphi_K[\GoodEC]\big)=1$  
  \ \ for all $\cE \in \GoodEC$ \ .
\end{center}
Scheduler $\tsched$ enjoys the following properties:
\begin{enumerate}
\item [(1)]
  $\Pr^{\tsched}_{\cN,\exitstate{\cE}}(\Diamond \entrystate{\cE})
    \ <  \ 1$ for all $\cE \in \{\cE_1,\ldots,\cE_k\}$.
  
  This follows from the observation that 
  the exit and entry states of the end components 
  $\cE \in \{\cE_1,\ldots,\cE_k\}$ do not belong to any end component
  of $\cN$ (Lemma \ref{simple-properties-MDP-N}).

\item [(2)]
  If $\cE,\cF \in \{\cE_1,\ldots,\cE_k\}$ and
  $\cE$ is good then
  $\Pr^{\tsched}_{\cN,\exitstate{\cE}}
    ((\neg \Good_{\cN}) \Until \entrystate{\cF})
    >  0$
  implies $\cF$ is good.

  Suppose by contradiction that $\cF$ is not good.
  Let $X=\GoodEC \cup \{\cF\}$.
  We pick a $\tsched$-path $\fpath$ from
  $\exitstate{\cE}$ to $\entrystate{\cF}$
  that does not contain a state in $\Good_{\cN}$, the set of good states of $\cN$.
  Then, $\fpath' = \fpath \, \tau \, \exitstate{\cF}$
  is a $\tsched$-path, too, and
  $\Pr^{\tsched}_{\cN,\exitstate{\cF}}\big(\varphi_L[\GoodEC]\big)=1$
  where $L=K-\wgt_{\cN}(\fpath')$.
  With $H=\min\{K,L\}$ we get
  $\Pr^{\tsched}_{\cN,\exitstate{\cE}}\big(\varphi_H[X]\big)=1$
  for all $\cE \in X$.
  Thus, $X$ %
  is a fixed point of $\Omega$.  This contradicts that $\GoodEC$ is
  the greatest fixed point of $\Omega$.

\item [(3)]
 $\Pr^{\tsched}_{\cN,\exitstate{\cE}}
   \bigl(\, \Diamond (T \cup \Good_{\cN}) \, \bigr) \ =  \ 1$
 as $\Pr^{\tsched}_{\cN,\exitstate{\cE}}\big(\varphi_K[\GoodEC]\big)=1$. 
\end{enumerate}
To prove $\valueEas{\cM,s}=+\infty$ we pick
some $R\in \Integer$ and design a scheduler $\sched=\sched_R$ for $\cM$
such that
$\Pr^{\sched}_{\cM,s}(\varphi_R)=1$ for all good states $s$.
The behavior of $\sched$ is as follows.
\begin{itemize}
\item
  For all input paths $\fpath$ ending in a state $u$ of 
  $S_{\cM}\setminus \WDMEC_{\cN} \subseteq S_{\cN} \setminus \Good_{\cN}$, 
  scheduler $\sched$ behaves as $\tsched$, \ie
  $\sched(\fpath)=\tsched(u)$.
\item For all input paths $\fpath$ that end in the entry state
  $\entrystate{\cE}$ of some end component $\cE \in \GoodEC$,
  scheduler $\sched$ uses a weight-divergent scheduler for $\cE$ until
  it has generated a path $\fpath'$ where the total weight is at least
  $R{-}K$ and where $\tsched(\exitstate{\cE})$ is an action of state
  $\last(\fpath')$.  Scheduler $\sched$ then schedules action
  $\tsched(\exitstate{\cE})$ for $\fpath'$.
\item
  The behavior of $\sched$ for input paths $\fpath$ where
  $\last(\fpath)$ is the entry state $\entrystate{\cE}$ of some
  non-good end component is irrelevant.
\end{itemize}
Using properties (1), (2) and (3) of $\tsched$, we get that
none of the $\sched$-paths starting in a good state will visit
a non-good end component and that
$\Pr^{\sched}_{\cM,s}(\varphi_R)=1$ for all good states $s$.
\end{proof}

Conversely, we establish that states belonging to a weight-divergent
end component and with value $+\infty$ are good states.
\begin{lemma}
 \label{WDMEC-infty-Good}
   If $s\in \WDMEC$ and 
   $\valueEas{\cM,s}=+\infty$ then $s\in \Good$.
\end{lemma}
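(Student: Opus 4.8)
The plan is to prove the stronger set inclusion $\fE^{+}\subseteq\GoodEC$, where $\fE^{+}=\{\cE\in\fE : \valueEas{\cM,\cE}=+\infty\}$. This suffices: if $s\in\WDMEC$ with $\valueEas{\cM,s}=+\infty$, then $s$ lies in some $\cE\in\fE$, and by Lemma~\ref{same-value-wgt-div-EC}(b) we have $\valueEas{\cM,\cE}=\valueEas{\cM,s}=+\infty$, so $\cE\in\fE^{+}$; then $\fE^{+}\subseteq\GoodEC$ gives $s\in S_{\cE}\subseteq\Good$. Since $\GoodEC$ is the greatest fixed point of the monotone operator $\Omega$ on the finite lattice $2^{\fE}$ (Remark~\ref{fixed-point-Good}), Knaster--Tarski identifies it with the union of all post-fixed points; hence it is enough to show that $\fE^{+}$ is a post-fixed point, i.e.\ $\fE^{+}\subseteq\Omega(\fE^{+})$. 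Concretely, for every $\cE\in\fE^{+}$ I must produce some $K\in\Integer$ with $\Pr^{\max}_{\cN,\exitstate{\cE}}(\varphi_K[\fE^{+}])=1$.

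The supporting observation is that a scheduler witnessing $\valueEas{\cM,\cE}=+\infty$ can only ever pass through weight-divergent end components that \emph{also} have value $+\infty$. Fix $\cE\in\fE^{+}$, a state $s\in\cE$, a large $K'\in\Integer$, and a scheduler $\sched$ with $\Pr^{\sched}_{\cM,s}(\varphi_{K'})=1$. If $\sched$ reaches a weight-divergent maximal end component $\cF$ with positive probability, along a finite path $\fpath$ ending in $u\in\cF$ with $\wgt(\fpath)=w$, then $\residual{\sched}{\fpath}$ satisfies $\Pr^{\residual{\sched}{\fpath}}_{\cM,u}(\varphi_{K'-w})=1$, whence $\valueEas{\cM,u}\geqslant K'-w>-\infty$; Lemma~\ref{same-value-wgt-div-EC}(b) then forces $\valueEas{\cM,\cF}=+\infty$, i.e.\ $\cF\in\fE^{+}$. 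In particular $\sched$ almost surely avoids every weight-divergent end component lying outside $\fE^{+}$.

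Next I would transfer $\sched$ to a scheduler $\tsched$ for $\cN$ started in $\exitstate{\cE}$, following the scheduler constructions behind Lemmas~\ref{Tstar-N}, \ref{K_N-2-K-M} and~\ref{Good-WDMEC-infty}: the first move out of $\exitstate{\cE}$ mimics $\cM$'s first exit from $\cE$ (with transition probabilities conditioned on leaving $\cE$), and thereafter $\tsched$ follows $\sched$, stopping as soon as it reaches an entry state $\entrystate{\cF}$ with $\cF\in\fE^{+}$ (reaching such a state already certifies $\varphi_K[\fE^{+}]$, so the subsequent behaviour is irrelevant). Since $\sched$ almost surely reaches $\good$ or $\goal$ and almost surely avoids the weight-divergent end components outside $\fE^{+}$ (previous paragraph), the induced $\cN$-run from $\exitstate{\cE}$ almost surely reaches $\good\in T^{*}$, or an entry state $\entrystate{\cF}\in\entrystate{\fE^{+}}$ (this covers re-entering $\cE$, because $\entrystate{\cE}\in\entrystate{\fE^{+}}$), or $\goal$. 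All three outcomes are captured by $\varphi_K[\fE^{+}]$, which establishes $\cE\in\Omega(\fE^{+})$ and, with the first paragraph, closes the argument.

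The delicate point, and the main obstacle, is the weight bookkeeping at $\goal$: I must exhibit a \emph{single} finite $K$ for which the $\goal$-reaching runs almost surely satisfy $\accwgt\geqslant K$. The $\cN$-runs that reach $\goal$ without first hitting $T^{*}\cup\entrystate{\fE^{+}}$ traverse only the collapse-free part of $\cN$ (they avoid all entry states $\entrystate{\cF}$, $\cF\in\fE$, by the second paragraph), which carries no positively weight-divergent end component (Lemma~\ref{simple-properties-MDP-N}). On this part one is exactly in the situation of Lemma~\ref{KsE1-pm-infty-or-integer}: the value at $\exitstate{\cE}$ is finite rather than $-\infty$ precisely when some MD-scheduler induces no negative cycle on the paths to $\goal$, and such a scheduler is obtained from $\tsched$ via Lemma~\ref{MD-sufficient-eventually-almostsure-max-equals-1} together with Lemma~\ref{lem:no-neg-cycle-MC-as}, the absence of negative cycles being inherited from the fact that $\sched$ achieves $\varphi_{K'}$ almost surely in $\cM$. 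Making this transfer precise, and matching the accumulated weights across the $\exitstate{\cE}$-collapse, is the technical heart of the proof.
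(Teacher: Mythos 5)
Your reduction to showing that $\fE^{+}=\{\cE : \valueEas{\cM,\cE}=+\infty\}$ is a post-fixed point of $\Omega$ is exactly the paper's strategy (the paper shows the same set, called $X$ there, is a fixed point), and your second paragraph correctly establishes the analogue of the paper's property (P4): a witness for value $+\infty$ almost surely avoids weight-divergent MECs outside $\fE^{+}$. The gap is precisely where you place it, and your sketched repair does not close it. Transferring $\sched$ to $\cN$ by ``mimicking the first exit from $\cE$, conditioned on leaving'' yields no uniform threshold $K$: a $\sched$-path may wander inside $\cE$ accumulating an arbitrarily large positive weight $W$ before its first progress move, and may exit through different state-action pairs at different values of $W$; the corresponding $\cN$-run that then proceeds to $\goal$ without touching an entry state only inherits the bound $\accwgt\geqslant K'-W$, which is unbounded below over the support of $\sched$. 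Your proposed negative-cycle argument cannot rescue this: Lemma~\ref{MD-sufficient-eventually-almostsure-max-equals-1} can only be applied in $\cN$ once you already have some finite $K$ with $\Pr^{\tsched}_{\cN,\exitstate{\cE}}(\varphi_K[\fE^{+}])=1$, which is the very statement to be proved, and the absence of negative cycles in the collapse-free part of $\cN$ is not ``inherited'' from $\sched$ --- in $\cM$ a witness may traverse weight-decreasing loops outside the WDMECs and compensate by recharging inside one of them later.

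The paper closes this gap by not transferring $\sched$ directly but first normalizing its behaviour inside $\cE$: it fixes a single progress move $(u_{\cE},\alpha_{\cE})$ whose post-move threshold $L$ is finite, and replaces the initial phase of the scheduler by a weight-divergent scheduler for $\cE$ that exits only at $u_{\cE}$, only via $\alpha_{\cE}$, and only once the weight accumulated inside $\cE$ is nonnegative (properties (P2)/(P3)). This pins the threshold down to $R=\wgt(u_{\cE},\alpha_{\cE})+L$ and makes the induced $\cN$-scheduler deterministic at $\exitstate{\cE}$ with every $\goal$-reaching run of weight at least $R$ (property (P5)). Some such normalization of the exit from $\cE$ is unavoidable; without it the ``matching of accumulated weights across the collapse'' that you defer cannot be carried out.
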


\begin{proof}
  Recall that all states that belong to the same end component
  $\cE \in \{\cE_1,\ldots,\cE_k\}$ have the same value
  $\valueEas{\cM,\cE}=\{+\infty,-\infty\}$.
\begin{description}
\item[Set $X$] Let $X$ denote the set of end components
  $\cE \in \{\cE_1,\ldots,\cE_k\}$ such that
  $\valueEas{\cM,\cE}=+\infty$.  It suffices to show that $X$ is a
  fixed point of $\Omega$, as then $X \subseteq \GoodEC$ will follow,
  and therefore all states belonging to an end component $\cE \in X$
  are good.
\item[Progress moves]
  As in the proof of
  Lemma~\ref{MD-sufficient-eventually-almostsure-max-equals-1},
  page~\pageref{MD-sufficient-eventually-almostsure-max-equals-1}, we
  use the notion of progress moves.  This time, we need the term
  progress move for maximal end components.  Given a maximal end
  component $\cE$, we refer to a state-action pair $(u,\alpha)$ with
  $u \in \cE$ as a \emph{progress move} for $\cE$ if there is some
  state $v$ with $P_{\cM}(u,\alpha,v)>0$ and $v$ does not belong to
  $\cE$.  By assumption \textbf{(A3)}, each maximal end component of
  $\cM$ has a progress move. Moreover, whenever $\sched$ is a
  scheduler for $\cM$ and $s$ a state in $\cM$ such that
  $\Pr^{\sched}_{\cM,s}(\Diamond T)=1$ then for each $\sched$-path
  $\fpath$ from $s$ to some maximal end component $\cE$ there is an
  $\sched$-path $\fpath'$ that extends $\fpath$ and where
  $(\last(\fpath'),\sched(\fpath'))$ is a progress move of $\cE$.
\item[Schedulers $\sched_{\cE}$]
For each $\cE \in X$, there is some $R\in \Integer$
and a scheduler
$\sched_{\cE}$ enjoying the following properties:
\begin{description}
\item [(P1)]
  $\Pr^{\sched_{\cE}}_{\cM,s}(\varphi_R)=1$
  for all states $s$ in $\cE$
\item [(P2)]
  There is a progress move $(u_{\cE},\alpha_{\cE})$ 
  of $\cE$ such that
  $\sched_{\cE}(\fpath)=\alpha_{\cE}$ for each
  $\sched$-path $\fpath$ starting in some state of $\cE$
  with $\wgt_{\cM}(\fpath)\geqslant 0$ and
  $\last(\fpath)=u_{\cE}$.
\item [(P3)]
  Whenever $\fpath$ is a $\sched_{\cE}$-path consisting of
  states and state-action pairs in $\cE$ 
  such that $\sched_{\cE}(\fpath)$ is a progress move then
  $\wgt_{\cM}(\fpath)\geqslant 0$,
  $\last(\fpath)=u_{\cE}$ and
  $\sched_{\cE}(\fpath)=\alpha_{\cE}$.
\end{description}
Conditions \textbf{(P2)} and \textbf{(P3)} can be ensured by using a weight-divergent scheduler
until having accumulated enough weight and the current state in
$u_{\cE}$.

Property \textbf{(P1)} implies:
\begin{description}
\item [(P4)]
  All states
  that are reachable from $\cE$ via an $\sched_{\cE}$-path
  are either not contained in $\WDMEC$ or belong to an end component
  $\cE \in X$.
\item [(P5)]
  Whenever $\fpath$ is an $\sched_{\cE}$-path from $\cE$ to $\goal$
  where only the first state is contained in $\WDMEC$,
  then $\fpath$ can be seen as a path in $\cN$ starting in
  $\exitstate{\cE}$ and $\wgt_{\cN}(\fpath)\geqslant R$.
\item [(P6)]
  $\Pr^{\sched_{\cE}}_{\cM,u_{\cE}}(\Diamond \cE) <1$
\end{description}
Let now $\tsched_{\cE}$ be a scheduler for $\cN$ that schedules
$\alpha_{\cE}$ for state $\exitstate{\cE}$
and behaves as $\sched$ afterwards (when identifying $u_{\cE}$ with
$\entrystate{\cE}$) until reachaing a trap state $t\in T$ or the entry state
of an end component $\cF \in \{\cE_1,\ldots,\cE_k\}$.
In the latter case, $\cF \in X$ (by \textbf{(P4)}) and the behavior of
$\tsched_{\cE}$ after having reached $\entrystate{\cF}$ is irrelevant for our
purposes.
By \textbf{(P5)} and \textbf{(P6)} we then have 
\[
  \Pr^{\tsched_{\cE}}_{\cN,\exitstate{\cE}}(\varphi_R[X])=1 \enspace.
\]
This shows that $X$ is indeed a fixed point of $\Omega$.
\end{description}
As a consequence, all states in $X$ are good states, showing the desired result.
\end{proof}

From Lemmas~\ref{Good-WDMEC-infty} and~\ref{WDMEC-infty-Good} we
obtain a characterization of good states and good end components.
\begin{corollary}
 \label{Good=WDMEC-infty}
  $\Good = 
   \bigl\{ s\in \WDMEC : \valueEas{\cM,s}=+\infty \bigr\}$
  and
  $
  \GoodEC = 
  \bigl\{ \cE \in \{\cE_1,\ldots,\cE_k\} : 
                      \valueEas{\cM,\cE}=+\infty \ 
  \bigr\}
  $.
\end{corollary}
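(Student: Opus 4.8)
The plan is to combine the two preceding lemmas, which already supply the two inclusions for the first set equality, and then to lift the statement about individual states to one about whole end components. First I would record the trivial observation, directly from Definition~\ref{def:good-states}, that $\Good \subseteq \WDMEC$: indeed $\Good = \bigcup_{\cE \in \GoodEC} S_{\cE}$ with $\GoodEC \subseteq \{\cE_1,\ldots,\cE_k\}$, and the states of the latter end components are by definition exactly the elements of $\WDMEC$.

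For the first equation, the inclusion $\Good \subseteq \{\, s\in\WDMEC : \valueEas{\cM,s}=+\infty \,\}$ is immediate from Lemma~\ref{Good-WDMEC-infty} together with $\Good\subseteq\WDMEC$, while the reverse inclusion is precisely the content of Lemma~\ref{WDMEC-infty-Good}. Hence the two sets coincide.

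For the second equation, I would use two facts: that the maximal end components $\cE_1,\ldots,\cE_k$ are pairwise disjoint as state sets, and that by Lemma~\ref{same-value-wgt-div-EC}(b) all states of a fixed $\cE_i$ share a common value $\valueEas{\cM,\cE_i}\in\{+\infty,-\infty\}$. By disjointness and the definition $\Good = \bigcup_{\cE\in\GoodEC} S_{\cE}$, a given $\cE_i$ lies in $\GoodEC$ iff $S_{\cE_i}\subseteq\Good$ iff some (equivalently every) state of $\cE_i$ belongs to $\Good$. Applying the first equation, this is equivalent to $\valueEas{\cM,s}=+\infty$ for a state $s$ of $\cE_i$, which by the common-value property is the same as $\valueEas{\cM,\cE_i}=+\infty$. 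Chaining these equivalences yields the claimed characterization of $\GoodEC$.

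There is no genuine obstacle here, since Lemmas~\ref{Good-WDMEC-infty} and~\ref{WDMEC-infty-Good} do the substantive work; the remaining argument is bookkeeping. The only point demanding care is the transfer between the level of states and the level of end components: one must explicitly invoke the disjointness of maximal end components and the common-value property (Lemma~\ref{same-value-wgt-div-EC}) so that membership of a single state in $\Good$ can be promoted to membership of its entire end component in $\GoodEC$, and conversely.
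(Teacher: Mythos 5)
Your proof is correct and follows the same route as the paper, which derives the corollary directly as an immediate consequence of Lemmas~\ref{Good-WDMEC-infty} and~\ref{WDMEC-infty-Good}; the extra bookkeeping you supply (disjointness of MECs and the common-value property of Lemma~\ref{same-value-wgt-div-EC}(b) to pass from states to end components) is exactly what the paper leaves implicit.
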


Finally we characterize states of $\cM$ with value $+\infty$:
\begin{lemma}
  Let $s$ be a state in $\cM$. Then 
  $\valueEas{\cM,s}=+\infty$
  iff \ $\Pr^{\max}_{\cM,s}\big(\Diamond (T^*\cup \Good)\big)=1$.
\end{lemma}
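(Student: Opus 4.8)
The plan is to prove the two implications separately, in each case reducing to machinery already established for the auxiliary MDP $\cN$ and the weight-divergent maximal end components.

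\textbf{The implication ``$\Leftarrow$''.} Assume $\Pr^{\max}_{\cM,s}(\Diamond(T^*\cup\Good))=1$ and fix an MD-scheduler $\sched$ realizing it. Since $\Diamond\good$ already forces $\varphi_R$ for every $R$ (recall that by \textbf{(A1)} $T^*=\{\good\}$ and $\varphi_R=\Diamond\good\vee\Diamond(\goal\wedge(\accwgt\geqslant R))$), it suffices to handle first visits to $\Good$. Lemma~\ref{Good-WDMEC-infty} (more precisely the scheduler built in its proof) gives, for every integer $M$, a scheduler $\usched_M$ with $\Pr^{\usched_M}_{\cM,u}(\varphi_M)=1$ for all good states $u$. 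For a target threshold $R$, define $\tsched_R$ to behave as $\sched$ until the first visit to a state $u\in T^*\cup\Good$: if $u=\good$ the path already satisfies $\varphi_R$; if $u\in\Good$ is reached along a prefix of weight $w$, switch to $\usched_{R-w}$. On every resulting path that reaches $\goal$ the total weight is at least $w+(R-w)=R$, so $\Pr^{\tsched_R}_{\cM,s}(\varphi_R)=1$. As $R$ is arbitrary, $\valueEas{\cM,s}=+\infty$. The only point to check is that $\tsched_R$ is a legitimate history-dependent scheduler, which holds since the branch index $R-w$ is determined by the finite prefix.

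\textbf{The implication ``$\Rightarrow$''.} Assume $\valueEas{\cM,s}=+\infty$, i.e.\ $\Pr^{\max}_{\cM,s}(\varphi_K)=1$ for all $K$. By Corollary~\ref{Good=WDMEC-infty} the set $\{\cE:\valueEas{\cM,\cE}=+\infty\}$ equals $\GoodEC$, so the set $T^*_{\cN}$ of Lemma~\ref{Tstar-N} is $T^*\cup\{\exitstate{\cE}:\cE\in\GoodEC\}$. Applying Lemma~\ref{Tstar-N} to each $K$ yields $\Pr^{\max}_{\cN,\stateN{s}{\cN}}\big(\Diamond T^*_{\cN}\vee\Diamond(\goal\wedge(\accwgt\geqslant K))\big)=1$ for all $K$. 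Since $\exitstate{\cE}$ has $\entrystate{\cE}$ as its unique predecessor and $\entrystate{\cE}\to\exitstate{\cE}$ is deterministic, $\Diamond T^*_{\cN}$ and $\Diamond(T^*\cup\entrystate{\GoodEC})$ coincide as events, so the displayed objective is exactly $\varphi_K[\GoodEC]$. Thus $\stateN{s}{\cN}$ has infinite value for the DWR objective $\varphi_K[\GoodEC]$ in $\cN$, whose free target set is $T^*\cup\entrystate{\GoodEC}$.

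The crux is to show, inside $\cN$, that infinite value forces almost-sure reachability of the free target, namely $\Pr^{\max}_{\cN,\stateN{s}{\cN}}(\Diamond(T^*\cup\entrystate{\GoodEC}))=1$. This is the analogue of Lemma~\ref{KsE1-pm-infty-or-integer}(a) for an arbitrary free target set, and I would prove it by the pumping argument of ``Claim~1'' in the proof of Lemma~\ref{MD-sufficient-eventually-almostsure-max-equals-1}: if the maximal reachability probability of $T^*\cup\entrystate{\GoodEC}$ were $q<1$, then for every $K$ some scheduler would, with probability at least $1-q$, drive the weight above $K$ along paths that reach $\goal$ while avoiding the free target; letting $K\to\infty$ produces a positively weight-divergent end component in $\cN$, contradicting Lemma~\ref{simple-properties-MDP-N}(b). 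Finally, translating the witnessing $\cN$-scheduler back to $\cM$ exactly as in Lemma~\ref{K_N-2-K-M} (a visit to $\entrystate{\cE}$ in $\cN$ corresponds to entering the good end component $\cE$, hence $\Good$, in $\cM$, while $\good$ maps to $\good$) yields an $\cM$-scheduler reaching $T^*\cup\Good$ almost surely, giving $\Pr^{\max}_{\cM,s}(\Diamond(T^*\cup\Good))=1$. The main obstacle is precisely this generalization of Lemma~\ref{KsE1-pm-infty-or-integer}(a) to the target set $T^*\cup\entrystate{\GoodEC}$, together with the two bookkeeping translations between $\cM$ and $\cN$; the weight-divergence-free structure of $\cN$ is what turns the pumping argument into a genuine contradiction, and I would be careful that every free-target state is ``safe'', in the sense that reaching it discharges $\varphi_K$ for all $K$, so that the decomposition of $\Pr^{\sched}(\varphi_K)$ into its reachability part and its weight-bounded part remains valid.
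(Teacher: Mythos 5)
Your ``$\Leftarrow$'' direction coincides with the paper's: run an MD-scheduler witnessing $\Pr^{\max}_{\cM,s}(\Diamond(T^*\cup\Good))=1$ until a good state is reached with accumulated weight $w$, then switch to the scheduler $\sched_{K-w}$ supplied by (the proof of) Lemma~\ref{Good-WDMEC-infty}. That part is fine.

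The ``$\Rightarrow$'' direction has a genuine gap at exactly the point you flag as the crux. From $\Pr^{\max}_{\cN,\stateN{s}{\cN}}(\Diamond(T^*\cup\entrystate{\GoodEC}))=q<1$ you obtain, for every $K$, a scheduler under which the event ``reach $\goal$ with weight $\geqslant K$ while avoiding the free target'' has probability at least $1-q$. But the step ``letting $K\to\infty$ produces a positively weight-divergent end component'' does not follow from the pumping argument of Claim~1 in the proof of Lemma~\ref{MD-sufficient-eventually-almostsure-max-equals-1}. That argument works because there the relevant event has probability $1$, so the expected accumulated weight until the target is at least $R$ and one contradicts the finiteness of $\Exp{\sup}{}(\accdiaplus\goal)$ in MDPs without positively weight-divergent end components (Lemma~\ref{th:finiteness-min-exp-accwgt}, max version). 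Here the event has probability only $1-q$, the witnessing schedulers vary with $K$ (so no single scheduler with divergent $\limsup$ is produced for Lemma~\ref{mp-negative-not-wgt-div}), and the complementary mass --- the paths that do reach the free target --- can in principle carry arbitrarily negative accumulated weight, so no expectation is forced to be infinite and no weight-divergent end component is exhibited. Note also that in your sketch the hypothesis $q<1$ is used only to make $1-q$ positive, whereas its real job in a correct proof is structural (see below).

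What is missing is a restart mechanism that recycles the free-target-reaching paths. The paper's proof supplies it, working directly in $\cM$: it forms $\cM'$ by deleting the actions of the $\WDMEC$-states and adding zero-weight $\tau$-transitions from every state of $\Good\cup\{\good\}$ back to $s$; the hypothesis $\Pr^{\max}_{\cM,s}(\Diamond(\{\good\}\cup\Good))<1$ is what guarantees these loop-backs close no new end component, so $\cM'$ still has no positively weight-divergent end component. The scheduler $\tsched_R$ then plays $\sched_R$ and, upon looping back to $s$ with accumulated weight $w$, continues as $\sched_{R-w}$; each round succeeds with probability at least $1-q>0$, so almost every path reaches $\goal$ with weight $\geqslant R$, whence $\Exp{\sup}{\cM',s}(\accdiaplus\goal)=+\infty$, contradicting Lemma~\ref{th:finiteness-min-exp-accwgt}. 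Your route through $\cN$ (via Lemma~\ref{Tstar-N} and Corollary~\ref{Good=WDMEC-infty}) is a legitimate alternative setup and the translations between $\cM$ and $\cN$ you invoke are sound, but the step you call the main obstacle still requires this loop-back-plus-expectation argument (applied in $\cN$ with restarts from $T^*\cup\entrystate{\GoodEC}$ to $\stateN{s}{\cN}$); as written, it remains open.
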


\begin{proof}
The implication ``$\Longleftarrow$'' is an easy verification.
The task is to provide schedulers $\tsched_K$ with
$\Pr^{\sched_K}_{\cM,s}(\varphi_K)=1$ for each $K\in \Integer$.
The idea is to combine an MD-scheduler $\sched$ satisfying
$\Pr^{\sched}_{\cM,s}(\Diamond (T^*\cup \Good))=1$ with schedulers of
a family $(\sched_R)_{R\in \Integer}$ where
$\Pr^{\sched_R}_{\cM,u}(\varphi_R)=1$ for each state $u \in \Good$ and
each $R\in \Integer$.  For this, scheduler $\tsched_K$ first behaves
as $\sched$ until reaching the target state in $\good \in T^*$ or a
good state. In the latter case, if $w$ is the weight that has been
accumulated so far, $\tsched_K$ behaves as $\sched_{K-w}$ after having
reached a good state.

To prove ``$\Longrightarrow$'', we pick a state $s$ of $\cM$ such that
$\valueEas{\cM,s}=+\infty$.  The claim is trivial if
$s \in \{\good\} \cup \Good$.  Consider now the case where
$s \notin \{\good\}\cup \Good$.  Suppose by contradiction that
$\Pr^{\max}_{\cM,s}(\Diamond (\good \cup \Good))<1$.  Then, for each
scheduler $\sched$ with $\Pr^{\sched}_{\cM,s}(\Diamond T)=1$ we have
$\Pr^{\sched}_{\cM,s}((\neg \Good) \Until \goal)>0$.

Let $\cM'$ be the MDP that results from $\cM$ by (i) removing all
state-action pairs $(u,\alpha)$ with $u\in \WDMEC$ and (ii) adding the
state-action pairs $(u,\tau)$ with $P(u,\tau,s)=1$ and
$\wgt(u,\tau)=0$ for $u\in \Good \cup \{\good\}$.  Then, the end
components of $\cM'$ are exactly the end components of $\cM$ that do
not contain any state in $\WDMEC$.  In particular, $\cM'$ has no
(positively) weight-divergent end components.  This is because $s$ and
the states $u\in \Good \cup \{\good\}$ cannot belong to an end
component of $\cM'$ since
$\Pr^{\max}_{\cM',s}\big(\Diamond (\{\good\}\cup \Good)\big)=
\Pr^{\max}_{\cM,s}\big(\Diamond (\{\good\}\cup \Good)\big)<1$.

We now consider a sequence of schedulers $(\sched_K)_{K\in \Nat}$
with $\Pr^{\sched_K}_{\cM,s}(\varphi_K)=1$ for all $K\in \Nat$.
None of the states that are reachable from $s$ via a $\sched_K$-path
belongs to $\WDMEC \setminus \Good$
as no $\sched_K$-path from $s$ to $\goal$ can traverse
a state $u$ where $K_{\cM,u}^{\text{\rm E1}}=-\infty$.

Given $R\in \Integer$, we design a scheduler $\tsched_R$ for $\cM'$ as
follows.  Given an input path $\fpath$ for $\tsched$ where $\fpath$
does not contain a $\tau$-transition from a state
$u \in \Good \cup \{\good\}$ then $\tsched_R$ behaves as $\sched_R$.
Otherwise, $\fpath$ has the form $\fpath_1 ; \fpath_2$ where
$\fpath_1$ is a path from $s$ to $s$ where the last transition is a
$\tau$-transition from some state $u\in \Good \cup \{\good\}$ to $s$
and $\fpath_2$ is a path from $s$ that does not contain such a
$\tau$-transition. In this case, we define $w=\wgt(\fpath_1)$ and
$\tsched_R$ behaves for $\fpath$ in the same way as $\sched_{R-w}$
behaves for the path $\fpath_2$.  We then have
$\Pr^{\tsched_R}_{\cM',s}\big(\Diamond (\goal \wedge (\accwgt \geqslant
R))\big)=1$.  In particular,
$\Exp{\tsched_R}{\cM',s}(\accdiaplus \goal)\geqslant R$.  As this
holds for each $R \in \Integer$, we obtain:
$\Exp{\sup}{\cM',s}(\accdiaplus \goal)=+\infty$.  This is impossible
by 
Lemma \ref{th:finiteness-min-exp-accwgt}
(rephrased for maximal expected accumulated weights)
as $\cM'$ does not have positively weight-divergent end
components.
\end{proof}

We can finally relate values in $\cM$ and in $\cN$.
\begin{definition}[Values for the states in $\cN$]
\label{values-N}
  Given a state $u$ in $\cN$ the \emph{value of $u$ in $\cN$} is
  \[
     K_{\cN,u}
     \ \ = \ \ 
     \sup \ \bigl\{ \ K \in \Integer \ : \
                      \Pr^{\max}_{\cN,u}(\varphi_K[\GoodEC])=1
                     \ 
            \bigr\} \enspace.
 \]
\end{definition}

\begin{lemma}
 \label{K-N-E1=K-M-E1}
 For each state $s$ in $\cM$, $\valueEas{\cM,s} = K_{\cN,s_\cN}$.
\end{lemma}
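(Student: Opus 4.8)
The plan is to establish the two inequalities $\valueEas{\cM,s} \leqslant K_{\cN,\stateN{s}{\cN}}$ and $K_{\cN,\stateN{s}{\cN}} \leqslant \valueEas{\cM,s}$ separately, by proving for every integer $K$ the two implications: if $\Pr^{\max}_{\cM,s}(\varphi_K)=1$ then $\Pr^{\max}_{\cN,\stateN{s}{\cN}}(\varphi_K[\GoodEC])=1$, and conversely. Since for $\cM$ the set $\{K : \Pr^{\max}_{\cM,s}(\varphi_K)=1\}$ is downward closed (decreasing $K$ enlarges $\Diamond(\goal \wedge (\accwgt \geqslant K))$), and likewise the set $\{K : \Pr^{\max}_{\cN,\stateN{s}{\cN}}(\varphi_K[\GoodEC])=1\}$, both values are suprema of downward-closed subsets of $\Integer$ (finite values being attained as maxima), so these per-$K$ implications yield the equality of the two values including the $\pm\infty$ cases.

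The key observation driving the easy inequality $\valueEas{\cM,s} \leqslant K_{\cN,\stateN{s}{\cN}}$ is a topological fact about $\cN$: the only transition entering an exit state $\exitstate{\cE}$ is the deterministic $\tau$-move from $\entrystate{\cE}$, so every $\cN$-path reaching $\exitstate{\cE}$ first visits $\entrystate{\cE}$. Combined with Corollary~\ref{Good=WDMEC-infty}, which identifies $\{\exitstate{\cE} : \valueEas{\cM,\cE}=+\infty\}$ with $\exitstate{\GoodEC}$, this gives the path-wise inclusion of events
\[
  \Diamond T^*_{\cN} \ \vee \ \Diamond(\goal \wedge (\accwgt \geqslant K))
  \ \ \subseteq \ \ \varphi_K[\GoodEC],
\]
where $T^*_{\cN} = T^* \cup \{\exitstate{\cE} : \valueEas{\cM,\cE}=+\infty\}$ is exactly the set from Lemma~\ref{Tstar-N}. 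Now if $\Pr^{\max}_{\cM,s}(\varphi_K)=1$, then Lemma~\ref{Tstar-N} supplies a scheduler for $\cN$ witnessing $\Pr^{\max}_{\cN,\stateN{s}{\cN}}(\Diamond T^*_{\cN} \vee \Diamond(\goal \wedge (\accwgt \geqslant K)))=1$; by the inclusion the same scheduler witnesses $\Pr^{\max}_{\cN,\stateN{s}{\cN}}(\varphi_K[\GoodEC])=1$, so $K \leqslant K_{\cN,\stateN{s}{\cN}}$.

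For the reverse inequality, fix $K$ with $\Pr^{\max}_{\cN,\stateN{s}{\cN}}(\varphi_K[\GoodEC])=1$ and a witnessing scheduler $\tsched$ for $\cN$. The plan is to assemble a scheduler $\sched$ for $\cM$ with $\Pr^{\sched}_{\cM,s}(\varphi_K)=1$ out of two constructions already established: the scheduler translation $\cN \leadsto \cM$ of Lemma~\ref{K_N-2-K-M}, and the pumping-plus-recovery construction from the proof of Lemma~\ref{Good-WDMEC-infty}. Outside the weight-divergent MECs, $\sched$ follows the $\cM$-scheduler induced by $\tsched$ under the correspondence between $\cM$ and $\cN$. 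Whenever a run enters a good MEC $\cE \in \GoodEC$ (which corresponds to $\tsched$ reaching the success state $\entrystate{\cE}$), $\sched$ switches mode: since $\valueEas{\cM,\cE}=+\infty$, for the weight $w$ accumulated so far there is a scheduler from the current good state ensuring $\Diamond(\goal \wedge (\accwgt \geqslant K))$ almost surely (using target $K-w$ from that state), and $\sched$ follows it. Every $\tsched$-run in $\varphi_K[\GoodEC]$ either already reaches $\goal$ with weight $\geqslant K$ in the mimicked phase, or reaches some $\entrystate{\cE}$ with $\cE \in \GoodEC$ and is completed by the pumping phase; hence $\Pr^{\sched}_{\cM,s}(\varphi_K)=1$ and $K \leqslant \valueEas{\cM,s}$.

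The main obstacle lies in this second direction and is essentially bookkeeping rather than a new idea: one must verify that the mimicked phase faithfully reproduces the $\tsched$-distribution on prefixes up to the first visit of $T^* \cup \entrystate{\GoodEC}$, so that the almost-sure success transfers from $\cN$ to $\cM$, and that switching into the pumping phase at good MECs does not spoil runs that already satisfy $\varphi_K$ earlier. Both requirements are exactly of the type handled in Lemmas~\ref{K_N-2-K-M} and~\ref{Good-WDMEC-infty}, and the only genuinely new care is that the threshold handed to the pumping scheduler inside a good MEC must be offset by the finite weight $w$ accumulated on the entering prefix, which is well defined on every finite path.
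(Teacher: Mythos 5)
Your proposal is correct and follows essentially the same route as the paper: the inequality $\valueEas{\cM,s} \leqslant K_{\cN,\stateN{s}{\cN}}$ via Lemma~\ref{Tstar-N} and Corollary~\ref{Good=WDMEC-infty} (your explicit observation that exit states are only reachable through the corresponding entry states is exactly the implicit glue the paper relies on to pass from $T^*_{\cN}$ to $\entrystate{\GoodEC}$), and the reverse inequality by mimicking a witnessing $\cN$-scheduler in $\cM$ and, upon entering a good end component with accumulated weight $w$, switching to a scheduler achieving $\varphi_{K-w}$ almost surely from that good state. The organizational framing via downward-closed sets of thresholds and the explicit appeal to Lemmas~\ref{K_N-2-K-M} and~\ref{Good-WDMEC-infty} for the bookkeeping are consistent with, and slightly more detailed than, the paper's argument.
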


\begin{proof}
   We have $\valueEas{\cM,s} \leqslant K_{\cN,\stateN{s}{\cN}}$
   by Lemma \ref{Tstar-N}
   and Corollary \ref{Good=WDMEC-infty}.
   To prove 
   $\valueEas{\cM,s} \geqslant K_{\cN,\stateN{s}{\cN}}$
   we show that
   $\Pr^{\max}_{\cN,\stateN{s}{\cN}}(\varphi_K[\GoodEC])=1$ implies
   $\Pr^{\max}_{\cM,s}(\varphi_K)=1$.
   For this, pick a scheduler
   $\tsched$ for $\cN$ with
   $\Pr^{\tsched}_{\cN,\stateN{s}{\cN}}(\varphi_K[\GoodEC])=1$
   and a family $(\sched_R)_{R \in \Integer}$ of schedulers
   for $\cM$ such that
   $\Pr^{\sched_R}_{\cM,u}(\varphi_R)=1$ for all good states
   $u\in \Good$.
   Let now $\sched$ be the following scheduler for $\cM$ that
   mimics $\tsched$ until reaching a good end component $\cE$.
   If $w$ is the weight that has been accumulated so far then
   $\tsched$ behaves as $\sched_{K-w}$ from then on.
   We then have $\Pr^{\tsched}_{\cM,s}(\varphi_K)=1$.
\end{proof}

\subsection*{Computation of the Values $\valueEas{\cM,s}$ in the
  General Case}

Relying on Lemma~\ref{K-N-E1=K-M-E1}, the values $\valueEas{\cM,s}$
for each $s$ can be computed as follows.
\begin{enumerate}
\item [1.] Construct from $\cM$ the MDP $\cN$.
\item [2.]
   Compute the set $\GoodEC$, iteratively starting with 
 $X_0=\{\cE_1,\ldots,\cE_k\}$, and with 
   \[
     X_{i+1} \ \ = \ \ 
     \bigl\{ \ \cE\in X_i \ : \
               \exists K \in \Integer \text{ s.t. }
               \Pr^{\max}_{\cN,\exitstate{\cE}}(\varphi_K[X_i])=1 \ 
     \bigr\} \enspace.
   \]
   When the sequence converges, the obtained set is
   $\GoodEC$. %
   \\
   Since $\cN$ has no weight-divergent end components, for the
   computation of the sets $X_1,X_2,\ldots$ we rely on the techniques
   for MDPs with this restriction, presented at the beginning of this
   section as a particular case.
 \item [3.]  Compute the values $K_{\cN,s}$ (see Definition
   \ref{values-N}), again using the techniques for MDPs without
   weight-divergent end components.
\end{enumerate}
For the third step, we can switch from $\cN$ to the sub-MDP that
arises by removing the entry and exit states for the end components
$\cE\in \{\cE_1,\ldots,\cE_k\} \setminus \GoodEC$.  These are the
maximal weight-divergent end components of $\cM$ where
$\valueEas{\cM,\cE}=-\infty$.  Furthermore, for $\cE \in \GoodEC$,
state $\exitstate{\cE}$ can be turned into a trap.

\smallskip

Let us analyse the complexity of the above procedure.  The number of
iterations in the second step is bounded by the number $k$ of maximal
weight-divergent end components.  The values $\valueEas{\cM,s}$ can be
computed in polynomial time, assuming an oracle that determines the
values $\valueEas{\cN,s}$ for MDPs without weight-divergent end
components. Recall (see Theorem~\ref{complexity-DWR-E1-no-wgt-div-EC})
that for MDPs without weight-divergent end components, the decision
problem lies in $\NP \cap \coNP$, and the values are computable in
pseudo-polynomial time. Since $\PTIME^{\NP \cap \coNP}$ agrees with
$\NP\cap \coNP$ \citeapx{Brassard79}, we conclude:
  
\begin{theorem}
 \label{complexity-DWR-E1-general-case}
 The decision problem \Easdwr{} belongs to $\NP \cap \coNP$.
 The values $\valueEas{\cM,s}$ can be computed in pseudo-polynomial time.
\end{theorem}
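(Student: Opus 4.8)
The statement to prove is Theorem~\ref{thm:DWR-E1}, whose proof sketch is already given in the excerpt; here I lay out the full plan for the upper bound, which is the substantive part. The plan is to reduce to the normalized instance (assumptions \textbf{(A1)}--\textbf{(A3)} of Lemma~\ref{lem:Uposdwr-part-case}), so that $T^*=\{\good\}$ and $T\setminus T^*=\{\goal\}$ with both being traps, every state can almost-surely reach $T$, and no state outside $T^*$ reaches $T^*$ with maximal probability $1$. These reductions are polynomial and preserve the answer and the value, so no generality is lost. With this in place, the whole argument splits into the weight-divergent-free base case and the general case built on top of it.

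First I would settle the base case where $\cM$ has no positively weight-divergent end component. Using the spider construction (Theorem~\ref{weight-divergence-algorithm}) I may further assume $\cM$ has no $0$-EC, so every end component has $\Exp{\max}{\cE}(\MP)\leqslant 0$, hence $<0$. The key lemma here is Lemma~\ref{MD-sufficient-eventually-almostsure-max-equals-1}, that MD-schedulers suffice for almost-sure satisfaction of $\varphi$; its proof proceeds by defining the minimal accumulated weight $w_s$ with which each state $s$ can be reached (finite by a positive-divergence contradiction argument), then a rank-decomposition $Y_0\subseteq Y_1\subseteq\cdots$ of states admitting a ``progress move'' at minimal weight, showing the ranks exhaust all reachable non-target states. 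Given this, I reduce \Easdwr{} to a two-player mean-payoff game $\cG$ (player~1 picks actions, player~2 resolves probabilities, with a weight-$(-K)$ edge from $\goal$ back to $\sinit$ and a zero self-loop on $\good$), establishing the equivalence $\exists\sched.\Pr^\sched=1 \Longleftrightarrow \exists\stratone\forall\strattwo.\ \play{\cG}{\stratone}{\strattwo}\models\MP\geqslant 0$. This places the decision problem in $\NP\cap\coNP$ and, via binary search over the interval $[|S|\cdot W_{\min},|S|\cdot W_{\max}]$ with pseudo-polynomial mean-payoff-game calls, computes $\valueEas{\cM,s}$.

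For the general case, I would construct the MDP $\cN$ by collapsing each maximal positively weight-divergent end component $\cE_i$ into an entry/exit pair $\entrystate{\cE_i},\exitstate{\cE_i}$, inserting a $\tau$-transition of large weight $\omega$ chosen so that no entry/exit state lies on a negative simple cycle (Lemma~\ref{simple-properties-MDP-N}). Then $\cN$ has no positively weight-divergent end components, so the base-case machinery applies to it; but values in $\cN$ are only \emph{lower bounds} (Corollary~\ref{values-N-lower-bounds}) because an end component with $\valueEas{\cM,\cE}=+\infty$ can appear with value $-\infty$ in $\cN$. The heart of the argument is the greatest-fixed-point computation of $\GoodEC$ via the monotone operator $\Omega$, together with the two matching inclusions Lemma~\ref{Good-WDMEC-infty} ($s\in\Good\Rightarrow\valueEas{\cM,s}=+\infty$) and Lemma~\ref{WDMEC-infty-Good} (its converse), yielding $\Good=\{s\in\WDMEC:\valueEas{\cM,s}=+\infty\}$. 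Finally Lemma~\ref{K-N-E1=K-M-E1} gives $\valueEas{\cM,s}=K_{\cN,\stateN{s}{\cN}}$, reducing value computation to the objective $\varphi_K[\GoodEC]$ on the divergence-free MDP $\cN$.

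The main obstacle, and the step requiring the most care, is Lemma~\ref{WDMEC-infty-Good}: showing that a weight-divergent end component with infinite value is genuinely ``good.'' The difficulty is that $\Omega$'s definition quantifies over reachability in $\cN$, whereas $\valueEas{\cM,\cE}=+\infty$ is a statement about $\cM$; bridging this gap requires carefully engineering schedulers $\sched_\cE$ satisfying properties \textbf{(P1)}--\textbf{(P6)}, in particular arranging a single distinguished progress move $(u_\cE,\alpha_\cE)$ taken only at nonnegative accumulated weight, so that the induced $\cN$-scheduler witnesses $\Pr^{\tsched_\cE}_{\cN,\exitstate{\cE}}(\varphi_R[X])=1$ and certifies that $X=\{\cE:\valueEas{\cM,\cE}=+\infty\}$ is an $\Omega$-fixed point. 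For the complexity bookkeeping, the number of $\Omega$-iterations is bounded by $k$, each iteration invokes the base-case $\NP\cap\coNP$ oracle, and since $\PTIME^{\NP\cap\coNP}=\NP\cap\coNP$ the overall decision problem stays in $\NP\cap\coNP$ with pseudo-polynomial value computation (Theorem~\ref{complexity-DWR-E1-general-case}). The mean-payoff-game hardness is inherited directly from the reduction used for \Uposdwr{} (Lemma~\ref{MPG-hardness-Uposdwr}), adapted to the existential-almost-sure phrasing.
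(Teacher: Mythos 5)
Your proposal follows the paper's own proof essentially step for step: the normalization to \textbf{(A1)}--\textbf{(A3)}, the MD-sufficiency lemma and mean-payoff-game reduction for the divergence-free case, the construction of $\cN$ with entry/exit states, the greatest-fixed-point computation of $\GoodEC$ with the two matching inclusions, the value correspondence $\valueEas{\cM,s}=K_{\cN,\stateN{s}{\cN}}$, and the $\PTIME^{\NP\cap\coNP}=\NP\cap\coNP$ bookkeeping. The argument is correct and coincides with the paper's approach.
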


\subsection*{Mean-Payoff Game Hardness}

We now prove a lower complexity bound for the \Easdwr{} decision problem. 

\begin{lemma}
\label{MPG-hardness-Easdwr}
  \Easdwr{} is mean-payoff game hard.
\end{lemma}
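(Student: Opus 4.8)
The plan is to reduce the problem of deciding the winner of a (non-stochastic) two-player mean-payoff game to \Easdwr, mirroring in a dual fashion the construction used for \Uposdwr{} in Lemma~\ref{MPG-hardness-Uposdwr}. I would start from a mean-payoff game $\cG$ with vertices partitioned into player-1 vertices $V_1$ and player-2 vertices $V_2$, with integer edge weights of maximal absolute value $\wmax$, with $n$ vertices, initial vertex $v_0$, and the task of deciding whether player~1 has a strategy ensuring $\MP\geqslant 0$. From $\cG$ I build a \emph{contracting} MDP $\cM$ in which the scheduler plays the role of player~1 (the existential quantifier $\exists\sched$) while the probabilistic branching plays the role of player~2: since every branch has positive probability, the almost-sure requirement $\Pr^{\sched}=1$ forces the scheduler to account for \emph{all} of player~2's moves, as an adversary would.

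The construction would be as follows. For each player-1 vertex $s$ and each edge $s\to t$ of weight $w$ in $\cG$, state $s$ of $\cM$ gets an action $\alpha_t$ with $\wgt(s,\alpha_t)=w$ and $P(s,\alpha_t,t)=P(s,\alpha_t,\goal)=\tfrac12$. For each player-2 vertex $s$, as in Figure~\ref{fig:uposdwr-hard-mpg}, I add intermediate states $s_t$ for the successors $t$ of $s$, let state $s$ dispatch uniformly to them with weight $0$, and let each $s_t$ carry the corresponding edge weight with $P(s_t,\tau,t)=P(s_t,\tau,\goal)=\tfrac12$. Finally $\sinit$ moves deterministically to $v_0$ with weight $0$, and $\goal$ is a trap. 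The considered \dwr-property is the single weight-bounded reachability constraint $\varphi_K=\Diamond\big(\goal\wedge(\accwgt\geqslant K)\big)$ with threshold $K=-(n{-}1)\wmax$. Because every weight-committing transition carries a $\tfrac12$-branch to $\goal$, the MDP is contracting ($\goal$ is reached almost surely from every state) and hence has no end components at all, in particular no weight-divergent ones.

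The heart of the correctness argument is the equivalence: $\exists\sched.\ \Pr^{\sched}_{\cM,\sinit}(\varphi_K)=1$ holds iff player~1 wins $\cG$. The key observation is that, since every commit point offers a positive-probability branch to $\goal$ reaching it with exactly the current accumulated weight, one has $\Pr^{\sched}_{\cM,\sinit}(\varphi_K)=1$ iff the scheduler keeps the accumulated weight $\geqslant K$ along all reachable prefixes, i.e.\ iff player~1 wins the energy game from $v_0$ with initial credit $(n{-}1)\wmax$. I would then invoke the classical equivalence between mean-payoff and energy games: player~1 wins $\MP\geqslant0$ from $v_0$ iff some finite initial credit suffices, and the sufficient credit may be taken to be $(n{-}1)\wmax$. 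The forward implication is then immediate, and for the converse I use determinacy of mean-payoff games: if player~1 loses, player~2 has a strategy forcing $\MP<0$, so the accumulated weight tends to $-\infty$ and drops below $K$ at some finite step; the corresponding prefix has positive probability in $\cM$, and its $\goal$-branch witnesses $\Pr^{\sched}_{\cM,\sinit}\big(\Diamond(\goal\wedge(\accwgt<K))\big)>0$, hence $\Pr^{\sched}_{\cM,\sinit}(\varphi_K)<1$.

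The main obstacle is exactly this converse: turning ``player~2 can force $\MP<0$'' into ``some positive-probability finite prefix reaches $\goal$ below the threshold''. This requires (i) determinacy and memoryless optimality of mean-payoff games to extract a player-2 counter-strategy; (ii) the quantitative credit bound $(n{-}1)\wmax$ relating the energy and mean-payoff values, so that the \emph{fixed} threshold $K$ is the correct one and the reduction need not compute the game value; and (iii) the observation that although player~2 is realized probabilistically in $\cM$, every finite play consistent with its counter-strategy occurs with positive probability, so the drop below $K$ is genuinely witnessed on a positive-measure set of paths. Once these ingredients are assembled, the construction is clearly polynomial-time and establishes the claimed mean-payoff-game hardness of \Easdwr.
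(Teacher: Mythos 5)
Your reduction is correct, but it takes a genuinely different route from the paper's own proof. The paper also reduces from mean-payoff games, but it fixes the threshold at $K=0$ and instead equips the initial state $\sinit$ with a weight-$(+1)$ self-loop, so that the scheduler itself chooses how much initial credit to accumulate before entering the game; the probabilistic escape to $\goal$ is placed only on player-2 vertices, and correctness is argued internally via the MD-sufficiency result (Lemma~\ref{MD-sufficient-eventually-almostsure-max-equals-1}) together with the no-negative-cycle characterization of almost-sure weight-bounded reachability (Lemma~\ref{lm:charac-Uasdwr-traps}). You instead hard-wire the credit into the threshold $K=-(n{-}1)\wmax$, put the escape branch on every weight-carrying transition so that the almost-sure condition becomes exactly an energy condition at every game vertex, and then invoke the classical equivalence between mean-payoff and energy games with the explicit credit bound $(n{-}1)\wmax$, using determinacy of mean-payoff games for the converse. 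What the paper's version buys is self-containedness: it needs no quantitative bound relating the game value to the weights and no external energy-game result, only lemmas already established in the paper, and its threshold does not depend on the input weights. What your version buys is brevity and independence from that machinery: you never need the MD-sufficiency lemma, because a player-2 counter-strategy forcing $\MP<0$ directly yields a positive-probability finite prefix whose $\goal$-branch violates the bound, and your observation that every $\goal$-arrival carries exactly the accumulated weight of the corresponding game vertex makes that witness immediate. The only points worth making explicit when writing this up are that the sufficient credit $(n{-}1)\wmax$ is exactly the standard simple-path bound extracted from a memoryless winning strategy with no reachable negative cycle, and that $K$ has polynomially many bits, so the reduction remains polynomial-time.
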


\begin{proof}
  To establish mean-payoff hardness, we describe a polynomial-time
  reduction from the problem to decide whether player 1 of a
  (non-stochastic) two-player mean-payoff game has a winning strategy
  from a given game location $\sinit$.  This problem is known to be in
  $\NP \cap \coNP$ (even $\UP \cap \coUP$), but not
  known to be in $\PTIME$.

  Let $\cG = (V,V_1,V_2,E,\wgt)$ be a two-player mean-payoff game
  where $V$ is a finite set of game locations, disjointly partitioned
  into $V_1$ and $V_2$. The set $V_i$ stands for the set of game
  locations where player $i$ has to move.
  $E \subseteq V_1 \times V_2 \cup V_2 \times V_1$ is the edge
  relation, where we suppose that $E$ is total in the sense that each
  location has at least one outgoing edge, and
  $\wgt\colon E_1 \to \Integer$ is the weight function\footnote{%
    Note that the general case, where players do not strictly
    alternate, and weights are also attached to moves of
    player~2 %
    can easily be reduced to game
    structures of this form.%
  } where
  $E_1=E \cap (V_1 \times V)$.  The objective of player 1 is to ensure
  that the mean payoff of all plays is nonnegative.  More precisely,
  we consider the problem where we are given $\cG$ and a distinguished
  starting location $v_0\in V$ and where the task is to decide whether
  player 1 has a strategy $\sched$ such that the mean payoff of all
  $\sched$-plays from $v_0$ is nonnegative.%

  Let now $\cM$ be the following MDP, as illustrated on
  Figure~\ref{fig:mp-red-Easdwr}.
\begin{figure}

\begin{tikzpicture}
	\node[state] (gv0) {$v_0$};
	\node[rstate, above right=1mm and 15mm of gv0] (gv2) {$v_2$};
	\node[state, above=10mm of gv2] (gv1) {$v_1$};
	\node[state, right=10mm of gv2] (gv1p) {$v_1'$};
	\node[state, below=7mm of gv1p] (gv1pp) {$v_1''$};
	\node[istate, above left=17mm and 8mm of gv0] {$\cG\colon$};%
	\path%
		(gv1) edge[ptran] node[right] {$\neuw{w}$} (gv2)
		(gv2) edge[ptran] (gv1p)
		(gv2) edge[ptran] (gv1pp);
	\node[state, right=70mm of gv0] (sinit) {$s_\init$};
	\node[state, right= of sinit] (v0) {$v_0$};
	\node[bullet, right= of v0] (b) {};
	\node[state, above=7mm of b] (v2) {$v_2$};	
	\node[state, above=7mm of v2] (v1) {$v_1$};
	\node[state, right=15mm of b] (v1p) {$v_1'$};
	\node[state, below=7mm of v1p] (v1pp) {$v_1''$};
	\node[state, below=15mm of b] (goal) {$\goal$};
	\node[istate, above left=17mm and 4mm of sinit] {$\cM\colon$};%
	\path%
		(sinit) edge[ptran,loop, min distance=9mm, out=60, in=120] node[above]{$\alpha$/\posw{1}} (sinit)
		(sinit) edge[ptran] node[above]{$\tau$/\nilw} (v0)
		(v1) edge[ptran] node[right,pos=0.4] {$v_2$/$\neuw{w}$} (v2)
		(v2) edge[ntran] node[right,pos=0.3] {$\tau$/$\nilw$} (b)
		(b) edge[ptran, bend right] node[above]{$\frac{1}{4}$} coordinate[pos=0.3](bv1p) (v1p)
		(b) edge[ptran, bend right] node[above,pos=0.6]{$\frac{1}{4}$} (v1pp)
		(b) edge[ptran, bend right] node[right,pos=0.6]{$\frac{1}{2}$} coordinate[pos=0.3](bgoal) (goal);	
	\draw%
		(bgoal) to[bend right] (bv1p);
	\coordinate[above right=0mm and 17mm of gv0] (c);
	\node[cloud, cloud puffs=23,cloud puff arc=120, cloud ignores aspect, draw=gray,
			minimum width=70mm, minimum height=57mm](cloud) at (c) {};
	\coordinate[above right=0mm and 9mm of v0] (gc);
	\node[cloud, cloud puffs=23,cloud puff arc=120, cloud ignores aspect, draw=gray,
			minimum width=80mm, minimum height=57mm](cloud) at (gc) {};
\end{tikzpicture}
 \caption{Reduction from mean-payoff games to \Easdwr.}
\label{fig:mp-red-Easdwr}
\end{figure}
  The state space is
  $S_{\cM}=V \cup \{\sinit,\goal\}$ where $\sinit$ is the initial
  state of $\cM$.  The action set is $\Act = V\cup \{\alpha,\tau\}$.
  The transition probabilities and weights are defined as follows.  In
  $\sinit$, actions $\tau$ and $\alpha$ are enabled with
  $P(\sinit,\tau,v_0)=1$ and $P(\sinit,\alpha,\sinit)=1$ and
  $\wgt(\sinit,\tau)=0$, $\wgt(\sinit,\alpha)=1$.  If $(v_1,v)\in E$
  where $v_1\in V_1$ then $P(v_1,v,v)=1$. The weight of the
  state-action pair $(v_1,v)$ is the weight of the edge $(v_1,v)$ in
  $\cG$.  In all other cases, $P(v_1,\cdot)=0$. (That is, only the
  actions $v\in V$ where $(v_1,v)$ is an edge in $\cG$ are enabled in
  state $v_1$ of $\cM$.)  Let now $v_2 \in V_2$ and let
  $\Post(v_2)=\{v\in V: (v_2,v)\in E\}$.  The only enabled action in
  $v_2$ is $\tau$.  The transition probabilities are given by
  $P(v_2,\tau,\goal)=\frac{1}{2}$ and $P(v_2,\tau,v)=\frac{1}{2k}$ where
  $k=|\Post(v_2)|$.  The weight of the state-action pair $(v_2,\tau)$
  is $0$.  State $\goal$ is a trap in $\cM$.

  As the edge relation $E$ of $\cG$ is total, $\goal$ is the only trap
  state of $\cM$.  By construction, we have
  $\Pr^{\min}_{\cM,s}(\Diamond \goal)=1$ for all states $s$ in $\cM$
  with $s\not= \sinit$.  This also implies that $\cM$ has a single end
  component $\cE$ consisting of the state-action pair
  $(\sinit,\alpha)$.

  Obviously, $\cM$ can be constructed in polynomial time from
  $\cG$. We let
\[
  \varphi \ \ = \ \ \Diamond (\goal \wedge (\accwgt \geqslant 0))
\]
that is, $T=\{\goal\}$ and $K_{\goal}=0$.  We claim that player~1 has
a winning strategy $\sched$ in $\cG$ iff there exists a scheduler
$\sched$ in $\cM$ with
$\Pr^{\max}_{\cM,\sinit}(\varphi)=1$.

Suppose first that player~1 has a winning strategy $\stratone$ in the
mean-payoff game. Without loss of generality, this winning strategy can be assumed to be
an MD-strategy. Let $\cG'$ be the graph structure induced by $\stratone$
restricted to the states that are reachable from $v_0$ along finite
$\stratone$-plays.  As $\stratone$ is winning, $\cG'$ has no negative
cycle.  Let $w$ be the minimal weight of a path $\fpath$ from $v_0$ to
$\goal$ in $\cG'$, and let $k=\max\{0,-w\}$.  Consider now the
following scheduler $\sched$ for $\cM$.  It schedules $k$-times action
$\alpha$ in state $\sinit$, moves to state $v_0$ via the
$\tau$-transition afterwards and behaves as $\stratone$ from then
on. In particular $\sched$ is a finite-memory scheduler. The
underlying graph of the Markov chain $\cC$ induced by $\sched$
(restricted to the states reachable from $\sinit$) agrees with $\cG'$
extended by an initial phase
\[
  \underbrace{
   \sinit \move{\alpha} \sinit \move{\alpha} \ldots \move{\alpha} \sinit}_{
    \text{$k$ transitions}}
   \move{\tau} v_0 \enspace.
\]
As $\cG'$, also $\cC$ has no negative cycles.
Moreover, $\wgt(\fpath)\geqslant k+w \geqslant 0$ for all paths
$\fpath$ in $\cC$ from $\sinit$ to $\goal$.
Hence, $\Pr^{\sched}_{\cM,\sinit}(\varphi)=1$.

Vice versa, suppose $\sched$ is a scheduler for $\cM$ such that
$\Pr^{\sched}_{\cM,\sinit}(\varphi)=1$.
In particular, $\Pr^{\sched}_{\cM,\sinit}(\Diamond \goal)=1$.
Thus, $\sched$ schedules $\tau$ for
$\sinit$ after having generated a path of the form
$\fpath = 
 \sinit \, \alpha \, \sinit \alpha \sinit \alpha \ldots \alpha \, \sinit$.
With $k=|\fpath|$ we have $\wgt(\fpath)=k$.
The residual scheduler $\tsched=\residual{\sched}{\fpath\, \tau \, v_0}$ 
can be viewed as a strategy for player 1 in the game structure $\cG$.
As
\[
  \Pr^{\tsched}_{\cM,v_0}(\Diamond (\goal \wedge (\accwgt \geqslant -k)))
  \ \ = \ \ 1 \enspace,
\]
all $\tsched$-paths starting in $v_0$ and ending in $\goal$ have weight
at least $-k$.
The sub-MDP $\cM'$ resulting from $\cM$ by removing the state-action
pair $(\sinit,\alpha)$ has no end components.
As $\tsched$ can be viewed as a scheduler for $\cM'$, we can rely on
Lemma~\ref{MD-sufficient-eventually-almostsure-max-equals-1}, 
which ensures the existence of an MD-scheduler $\usched$ with
\[
  \Pr^{\usched}_{\cM,v_0}(\Diamond (\goal \wedge (\accwgt \geqslant -k)))
  \ \ = \ \ 1 \enspace.
\]
Lemma~\ref{lm:charac-Uasdwr-traps}
yields
that the Markov chain induced by $\usched$ has no negative cycle.
Hence, $\usched$ is a winning strategy for player~1 in the game $\cG$.
\end{proof}

\subsection{Weight-Bounded B\"uchi Constraints}

\label{sec:appendix-buechi}

We now provide the proofs for Section \ref{sec:Buechi}.
Throughout this section, we suppose that $\cM=(S,\Act,P,\wgt)$ is an MDP
without traps, which ensures
that all maximal paths are infinite.
Furthermore, let
$\sinit$ be a state in $\cM$, $F$ a set of states in $\cM$,
and $K\in \Integer$.

\begin{lemma}
  Problems \EaswB{} and \UposwB{} are
  hard for non-stochastic two-player mean-payoff games.
\end{lemma}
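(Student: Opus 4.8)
The plan is to obtain both hardness results by composing the mean-payoff-game reductions already established for the reachability variants with a simple gadget that turns a one-shot weight-bounded reachability goal into a weight-bounded B\"uchi goal. Recall that Lemma~\ref{MPG-hardness-Easdwr} (resp.\ Lemma~\ref{MPG-hardness-Uposdwr}) produces, from a mean-payoff game $\cG$ with starting location $v_0$, an MDP $\cM$ with distinguished initial state $\sinit$ and a trap state $\goal$ such that player~1 wins $\cG$ iff $\exists\sched.\ \Pr^{\sched}_{\cM,\sinit}\big(\Diamond(\goal\wedge(\accwgt\geqslant 0))\big)=1$ (resp.\ iff the corresponding universal-positive instance holds). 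In both constructions $\goal$ is the only trap of $\cM$ and the threshold is $K=0$.

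From such an $\cM$ I would build an MDP $\cM'$ without traps by replacing the trap $\goal$ with a single self-loop of weight $0$, i.e.\ adding one state-action pair $(\goal,\tau)$ with $P_{\cM'}(\goal,\tau,\goal)=1$ and $\wgt_{\cM'}(\goal,\tau)=0$, leaving everything else unchanged; I then set $F=\{\goal\}$ and keep $K=0$. The key observation is that in $\cM'$ the weight-bounded B\"uchi property collapses to the original one-shot reachability property: once a path enters $\goal$ it must loop there forever, and since the loop has weight $0$ the accumulated weight is frozen at the value $w$ it had upon first reaching $\goal$. Hence for every maximal path $\infpath$ of $\cM'$ we have that $\infpath\models \Box\Diamond(\accwgt\geqslant 0)\wedge\Box\Diamond F$ holds iff $\infpath$ reaches $\goal$ with accumulated weight $w\geqslant 0$, which is exactly $\infpath\models\Diamond(\goal\wedge(\accwgt\geqslant 0))$. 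Because the transitions before the first visit of $\goal$ are identical in $\cM$ and $\cM'$, every scheduler yields $\Pr^{\sched}_{\cM',\sinit}\big(\Box\Diamond(\accwgt\geqslant 0)\wedge\Box\Diamond F\big)=\Pr^{\sched}_{\cM,\sinit}\big(\Diamond(\goal\wedge(\accwgt\geqslant 0))\big)$.

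With this identity the two reductions are immediate. For \EaswB, the existential-almost-sure questions on $\cM'$ and $\cM$ coincide, so the \EaswB instance $(\cM',\sinit,F,0)$ is positive iff player~1 wins $\cG$; mean-payoff hardness of \EaswB thus follows from Lemma~\ref{MPG-hardness-Easdwr}. For \UposwB, the universal-positive questions coincide verbatim, so the mean-payoff hardness of \Uposdwr established in Lemma~\ref{MPG-hardness-Uposdwr} transfers directly to \UposwB. In both cases the gadget adds a single state-action pair and is computable in polynomial time.

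The routine but essential points to check are the absence of spurious traps and end components: $\goal$ was the unique trap of $\cM$, so $\cM'$ has no traps as required by the B\"uchi setting, and the only new end component is the harmless $0$-weight loop $\{(\goal,\tau)\}$; any scheduler that never reaches $\goal$ fails $\Box\Diamond F$ and hence does not satisfy the objective, matching the reachability semantics. The main conceptual step is precisely the freezing argument above, which is what makes the passage from ``reach $\goal$ once with weight $\geqslant 0$'' to ``see weight $\geqslant 0$ and $F$ infinitely often'' \emph{exact} rather than merely sufficient; all the mean-payoff-game machinery is then inherited from Lemmas~\ref{MPG-hardness-Easdwr} and~\ref{MPG-hardness-Uposdwr}.
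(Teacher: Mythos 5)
Your proposal is correct and follows essentially the same route as the paper: both reduce from the already-established mean-payoff-game hardness of \Easdwr{} and \Uposdwr{} by turning the target/trap states into $0$-weight self-loops, so that the accumulated weight is frozen upon reaching the target and the conjunction $\Box \Diamond (\accwgt \geqslant K)\wedge \Box\Diamond F$ collapses exactly to the one-shot weight-bounded reachability condition, scheduler by scheduler. The only (inessential) difference is that you compose with the specific game-derived instances of Lemmas~\ref{MPG-hardness-Easdwr} and~\ref{MPG-hardness-Uposdwr}, whereas the paper states the gadget as a reduction from arbitrary \dwr{} instances.
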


\begin{proof}
  As the corresponding result has been established for the DWR problems
  \Easdwr{} and \Uposdwr{} 
  (see Lemma \ref{MPG-hardness-Easdwr} and \ref{MPG-hardness-Uposdwr}), 
  it suffices to provide prolynomial reductions from
  them.
  Let $\cM$ be an MDP and $\varphi(T,(K_t)_{t\in T})$ be a 
  disjunctive weight-bounded reachability constraint.
  Let $\cM'$ be the MDP resulting from $\cM$ by 
  (i) discarding all state-action pairs $(t,\alpha)\in \cM$ with $t\in T$,
  (ii) adding state-action pairs $(t,\tau)$ with
   $P_{\cM'}(t,\tau,t)=1$ and $\wgt_{\cM'}(t,\tau)=0$
   for all states $t\in T$ and all traps $t$ of $\cM$.
  Then, 
  $(\cM,\sinit)$ satisfies \Easdwr{} iff
  $(\cM,\sinit)$ satisfies \EaswB,
  and
  $(\cM,\sinit)$ satisfies \Uposdwr{} iff
  $(\cM,\sinit)$ satisfies \UposwB.
\end{proof}

\subsubsection{The Existential Problems \EaswB{} and \EposwB}

\label{sec:buechi-existential}

We introduce notations for sets of states belonging to given end
components. 
Let $\PumpEC_F$ is the set of states that belong to a
pumping end component $\cE$ that contain at least one state in $F$.
Likewise, $\GambEC_F$ is the set of states that belong to a
gambling end component containing at least one $F$-state. 
We write $\ZeroEC_F$ for the set of states 
that belong to a 0-EC $\cZ$ such that $\cZ$ contains at least one $F$-state
and $\cZ$ is a sub-component of
an MEC $\cE$ with $\Exp{\max}{\cE}(\MP)= 0$.
Then, $\WDMEC_F= \PumpEC_F \cup \GambEC_F$ consists of
the states that belong to some weight-divergent end component
containing at least one $F$-state.
$\PumpEC_F  \cup \ZeroEC_F$ is the set of states that 
are contained in some end component intersecting with $F$
that has a scheduler where the accumulated weight is bounded from below
almost surely.

\begin{lemma}
 \label{lemma:divsched}
  $\PumpEC_F$, $\WDMEC_F$,
  $\PumpEC_F \cup \ZeroEC_F$ and
  $\WDMEC_F \cup \ZeroEC_F$ 
  are computable in polynomial time.
Moreover, there exist a
scheduler $\divsched$ such that:
\begin{itemize}
\item
  $\Pr^{\divsched}_{\cM,s}(\Box \Diamond F)=1$ for all 
  $s\in \WDMEC_F \cup \ZeroEC_F$,
\item
  $\divsched$ is pumping from all states 
  $s\in \PumpEC_F$,
\item
  $\divsched$ is gambling from all states 
  $s\in \GambEC_F$, and
\item
  from all states $t \in \ZeroEC_F$,
    $\divsched$ realizes a 0-EC 
    $\cE$ with $\cE \cap F\not= \varnothing$. 
  In particular,
  $\Pr^{\divsched}_{\cM,t}(\Box \Diamond (\accwgt \geqslant 0) \wedge
      \Box \Diamond F)=1$ for all $t \in \ZeroEC_F$.    
\end{itemize}
\end{lemma}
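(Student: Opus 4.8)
The plan is to prove Lemma~\ref{lemma:divsched} by a uniform argument that first classifies the relevant maximal end components (MECs) and then stitches together known schedulers obtained from the earlier classification machinery. First I would establish the computability claims. The MECs of $\cM$ are computable in polynomial time \cite{Alfaro98Thesis,ChatHen11}, and for each MEC $\cE$ we compute $\Exp{\max}{\cE}(\MP)$ by linear programming. By Lemma~\ref{lemma:pumping-ecs}, $\cE$ is pumping iff $\Exp{\max}{\cE}(\MP)>0$; this identifies the pumping MECs. For MECs with $\Exp{\max}{\cE}(\MP)=0$, Theorem~\ref{weight-divergence-algorithm} decides weight-divergence (equivalently, being gambling) in polynomial time, and Lemma~\ref{mincredit-ZeroEC} computes all states lying in a $0$-EC. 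Intersecting each class with $F$ and taking unions over the finitely many MECs yields $\PumpEC_F$, $\GambEC_F$, $\WDMEC_F=\PumpEC_F\cup\GambEC_F$, $\ZeroEC_F$, and their unions, all in polynomial time. A subtlety here is that $\WDMEC_F$ must collect states of \emph{any} weight-divergent end component meeting $F$, not only maximal ones; but since every weight-divergent end component is contained in a weight-divergent MEC (as noted in Section~\ref{sec:max-0-EC}, unions of overlapping divergent ECs are divergent), it suffices to range over MECs, and similarly a $0$-EC meeting $F$ sits inside a maximal $0$-EC of some MEC with $\Exp{\max}{}(\MP)=0$.

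Next I would construct the witnessing scheduler $\divsched$. The idea is to define $\divsched$ separately on each relevant MEC and let it commit to one MEC forever once entered. For a pumping MEC $\cE$ meeting $F$, Lemma~\ref{lemma:pumping-ecs} gives a pumping MD-scheduler; since $\cE$ is strongly connected and contains an $F$-state, this scheduler can be interleaved with a ``visit $F$'' routine so that almost surely the weight diverges to $+\infty$ \emph{and} $F$ is visited infinitely often. Concretely I would run a finite-memory scheduler that alternates between accumulating weight (using the pumping MD-scheduler long enough to gain a large positive increment) and steering to $F$ along a fixed finite path; because the pumping increments dominate the bounded detours to $F$, the resulting paths remain pumping and satisfy $\Box\Diamond F$ almost surely. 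For a gambling MEC meeting $F$, the gambling MD-scheduler (from Theorem~\ref{weight-divergence-algorithm} / Theorem~\ref{thm:checking-gambling}) already has a single BSCC that is strongly connected and contains $F$, so $\Box\Diamond F$ holds almost surely and the scheduler is gambling from every state of $\cE$. For a maximal $0$-EC $\cZ$ meeting $F$, I would use the MD-scheduler realizing $\cZ$ together with the recurrence-value construction of Lemma~\ref{mincredit-ZeroEC} (and its corollary producing an MD-scheduler achieving $\rec(s)$): staying inside $\cZ$ keeps the accumulated weight bounded below, and since $\cZ$ is strongly connected and $\cZ\cap F\neq\varnothing$, it visits $F$ infinitely often; because $\cZ$ is a $0$-EC, the weight returns to its recurrence level infinitely often, giving $\Box\Diamond(\accwgt\geqslant 0)\wedge\Box\Diamond F$ after a suitable shift.

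Finally I would combine these local schedulers. Since $\cM$ is strongly connected (or, in the non-strongly-connected case, after restricting to the subgraph reachable within each MEC), from any state $s\in \WDMEC_F\cup\ZeroEC_F$ there is a finite path into the MEC $\cE(s)$ witnessing membership; $\divsched$ follows such a path and then switches permanently to the local scheduler for $\cE(s)$. Because each local scheduler operates entirely within a single end component and never leaves it, the tail behavior is governed by that component, so all four bulleted guarantees hold from the appropriate state sets. The main obstacle I anticipate is the pumping-plus-B\"uchi combination: one must verify that interleaving the weight-pumping phases with the finite excursions to $F$ preserves both $\liminf\wgt=+\infty$ (pumping) and infinitely many $F$-visits with probability~$1$. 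The clean way is to fix, for each pumping MEC, a target increment per cycle that strictly exceeds the maximal weight loss of a shortest detour to $F$ and back, and to use a finite-memory scheduler whose expected return time is finite; then a Borel--Cantelli argument (or the mean-payoff quotient representation of Lemma~\ref{lemma:scMC-exp-mp-quotient} applied to the induced finite Markov chain) yields both properties almost surely. This is the step requiring the most care, but it is analogous to the pumping-scheduler construction already carried out in the proof of Lemma~\ref{appendix:pumping-ecs}, so I would reuse that two-mode (normal/recovery) scheduler template and simply insert forced $F$-visits into the normal mode.
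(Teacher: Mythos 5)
Your proposal follows essentially the same route as the paper's proof: compute the MECs, classify them via $\Exp{\max}{\cE}(\MP)$ together with the weight-divergence algorithm and the $0$-EC computation of Lemma~\ref{mincredit-ZeroEC}, and then build $\divsched$ per MEC by combining a pumping/gambling/$0$-EC scheduler with an MD-scheduler that steers to $F$; you actually spell out the two-mode interleaving that the paper dismisses as obvious, which is a welcome addition. One slip: your claim that for a gambling MEC the gambling MD-scheduler ``already has a single BSCC that \ldots{} contains $F$'' is unjustified --- the gambling BSCC found by the algorithm need not contain an $F$-state, so the gambling case also requires the same interleaving with a visit-$F$ routine that you describe for the pumping case (one then checks that bounded detours preserve positive and negative weight-divergence of almost all paths, and that expected mean payoff $0$ follows since the MEC has maximal expected mean payoff $0$). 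With that fix the argument matches the paper's.
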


\begin{proof}
  $\PumpEC_F$ is the 
  union of the state spaces of the MEC $\cE$ of $\cM$ that contain at least
  one $F$-state and that enjoy the property $\Exp{\max}{\cE}(\MP)>0$
  (see Lemma \ref{lemma:pumping-ecs}).
  This yields the polynomial-time computability of $\PumpEC_F$.

  To compute $\WDMEC_F$ we can rely on the fact that
  each weight-divergent end component is contained in an MEC that
  is weight-divergent.
  Thus, we can apply standard techniques to compute the MECs
  of $\cM$. For each of the MECs $\cE$, we check whether $\cE$
  contains at least one $F$-state, and if so, we 
  check whether $\cE$ is weight-divergent 
  using the polynomial-time weight-divergence algorithm 
  presented in Section \ref{sec:wgt-div}.
  Then, $\WDMEC_F$ arises by the union of the state spaces of these MECs.

  The set $\PumpEC_F \cup \ZeroEC_F$ is the set of all states $s$
  that belong to a maximal end component $\cE$ with 
  $\cE \cap F \not= \varnothing$ and such either $\cE$ is pumping
  or $\Exp{\max}{\cE}(\MP)=0$ and $s$ belongs to a maximal 0-EC $\cZ$
  of $\cE$ with $\cZ \cap F\not=\varnothing$.
  Thus, to compute $\PumpEC_F \cup \ZeroEC_F$ in polynomial time
  we can rely on the pumping criterion presented in
  Lemma \ref{lemma:pumping-ecs} and the techniques
  of 
  Lemma \ref{mincredit-ZeroEC} 
  to determine maximal 0-ECs of in strongly connected
  MDPs with maximal expected mean payoff 0.
  The statement about the recurrence values $\rec(s)$ is immediate from
  Lemma \ref{mincredit-ZeroEC}.

  It remains to explain how to obtain scheduler $\divsched$.
  For each pumping resp. gambling MEC $\cE$ containing at least one $F$-state, 
  we pick a pumping resp. gambling
  scheduler $\vsched_{\cE}$ and an MD-scheduler
  $\usched_{\cE}$ for $\cE$ 
  such that $\Pr^{\usched_{\cE}}_{\cE,s}(\Diamond F)=1$
  for all states $s$ in $\cE$. 
  Obviously, $\usched_{\cE}$ and $\vsched_{\cE}$ can be combined to obtain
  a (possibly infinite-memory) 
  weight-divergent scheduler $\divsched_{\cE}$ for $\cE$ with
  $\Pr^{\divsched_{\cE}}_{\cE,s}(\Box \Diamond F)=1$ for all states
  $s$ in $\cE$.
  Composing the schedulers $\divsched_{\cE}$ with schedulers that realize
  maximal 0-ECs contained
  in MECs with maximal expected mean payoff 0 yields a scheduler
  $\divsched$ as stated in the lemma.
\end{proof}

We now show that problems
\EaswB{} and \EposwB{} are polynomially reducible
to \Easdwr{} and \Eposdwr, respectively 
(see Lemma \ref{lemma:Buechi-max-as} and \ref{lemma:Buechi-max-pos} below).
In both cases we use the
disjunctive weight-bounded reachability constraint
$\varphi=\varphi(T, (K_t)_{t\in T})$ where
 $T = \WDMEC_F \cup \ZeroEC_F$ 
 and $K_t=-\infty$ for $t\in \WDMEC_F$ and
 $K_t=K$ for $t\in \ZeroEC_F$.
 That is, $T^*=\WDMEC_F$ and
 \[
  \varphi \ \ = \ \ 
  \Diamond \WDMEC_F \vee 
  \bigvee_{t\in \ZeroEC_F}
    \Diamond (t \wedge (\accwgt \geqslant K)) \enspace.
\]

Given an end component $\cE$, let $\Limit{\cE}$ denote the set of
infinite paths $\infpath$ such that the limit of $\infpath$ equals $\cE$.
Recall that the limit of an infinite path $\infpath$ is the set of 
all state-action pairs $(s,\alpha)$ that occur infinitely often in
$\infpath$. 
In what follows, we often use de Alfaro's Theorem \cite{Alfaro98Thesis}
stating that under each scheduler, the probability
of the paths in $\bigcup_{\cE} \Limit{\cE}$ equals 1 when $\cE$ ranges
over all (possibly non-maximal) end components.

\begin{lemma}
 \label{lemma:Buechi-max-as}
 Let $\varphi$ be as above.
  Then
   $\Pr^{\max}_{\cM,\sinit}
      \bigl(\Box \Diamond (\accwgt \geqslant K) \wedge \Box \Diamond F 
      \bigr)=1$
  iff\ \  $\Pr^{\max}_{\cM,\sinit}(\varphi)=1$.
\end{lemma}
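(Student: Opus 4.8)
The goal is to show the equivalence
\[
\Pr^{\max}_{\cM,\sinit}\bigl(\Box \Diamond (\accwgt \geqslant K) \wedge \Box \Diamond F\bigr)=1
\ \iff\
\Pr^{\max}_{\cM,\sinit}(\varphi)=1,
\]
where $\varphi = \Diamond \WDMEC_F \vee \bigvee_{t\in \ZeroEC_F} \Diamond (t \wedge (\accwgt \geqslant K))$. The plan is to prove the two implications separately, using the scheduler $\divsched$ of Lemma~\ref{lemma:divsched} for the ``$\Longleftarrow$'' direction and de Alfaro's theorem on limits of almost all paths for the ``$\Longrightarrow$'' direction.

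\textbf{Direction $\Longleftarrow$.}
Suppose $\Pr^{\max}_{\cM,\sinit}(\varphi)=1$, witnessed by a scheduler $\sched$. The idea is to compose $\sched$ with the scheduler $\divsched$ from Lemma~\ref{lemma:divsched}. Almost every $\sched$-path eventually either reaches a state in $T^*=\WDMEC_F$, or reaches some $t\in\ZeroEC_F$ with accumulated weight at least $K$. I would build a new scheduler $\tsched$ that runs $\sched$ until this event happens and then switches to $\divsched$ from the state reached. In the first case, $\divsched$ guarantees $\Box\Diamond F$ (and is pumping or gambling, so the repeated weight condition $\Box\Diamond(\accwgt\geqslant K)$ holds); in the second case, from a state $t\in\ZeroEC_F$ reached with weight $\geqslant K$, the scheduler $\divsched$ realizes a $0$-EC $\cE$ with $\cE\cap F\neq\varnothing$, so it stays in $\cE$ visiting $F$ infinitely often, and since the accumulated weight only varies within the bounded range of the $0$-EC, I must check that starting above $K$ and using the recurrence value $\rec(t)$ the weight returns to $\geqslant K$ infinitely often. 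This is where the definition of $\rec$ and the choice of entering $t$ with sufficient weight margin must be used carefully: entering with weight $\geqslant K$ is not quite enough since $0$-ECs can contain negative-weight cycles, so one likely needs to enter with weight $\geqslant K - \rec(t)$ or to reformulate $T$ accordingly. Assuming the threshold is set so that the recurrence value guarantees $\Box\Diamond(\accwgt\geqslant K)$, the composed scheduler $\tsched$ witnesses the left-hand side with probability $1$.

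\textbf{Direction $\Longrightarrow$.}
Conversely, suppose some $\sched$ satisfies $\Pr^{\sched}_{\cM,\sinit}\bigl(\Box \Diamond (\accwgt \geqslant K) \wedge \Box \Diamond F\bigr)=1$. By de Alfaro's theorem, almost every $\sched$-path $\infpath$ has its limit $\lim(\infpath)$ equal to some end component $\cE$, and for almost all of these $\cE\cap F\neq\varnothing$ (since $F$ is visited infinitely often) and $\cE$ admits infinitely many positions with accumulated weight $\geqslant K$. The latter condition forces $\cE$ to be positively weight-divergent: the $\limsup$ of the accumulated weight along the path is $+\infty$ or at least stays above $K$ infinitely often while $\cE$ is the limit, which I would argue means $\cE$ is either pumping, gambling, or a $0$-EC entered with enough weight. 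Thus almost every path eventually stays inside an end component belonging to $\WDMEC_F$, or inside a $0$-EC in $\ZeroEC_F$ whose recurrence value permits weight $\geqslant K$ infinitely often, which in turn means the finite prefix reaching that $0$-EC arrives with weight at least $K$. Either way, almost every $\sched$-path satisfies $\Diamond\WDMEC_F$ or $\Diamond(t\wedge(\accwgt\geqslant K))$ for some $t\in\ZeroEC_F$, so $\Pr^{\sched}_{\cM,\sinit}(\varphi)=1$ and hence $\Pr^{\max}_{\cM,\sinit}(\varphi)=1$.

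\textbf{Main obstacle.}
The delicate point, and the main obstacle, is the interaction between the weight threshold $K$ and the behavior inside $0$-ECs. Because a $0$-EC can contain both positive- and negative-weight cycles, ``visiting $F$ infinitely often with weight $\geqslant K$ infinitely often'' does not straightforwardly translate into ``reaching the $0$-EC once with accumulated weight $\geqslant K$.'' The recurrence value $\rec(t)$ from Lemma~\ref{mincredit-ZeroEC} is precisely the tool designed to bridge this gap, and I expect the careful step to be matching the threshold used in the definition of $\varphi$ with the weight needed at entry so that both directions use a consistent bound; the excerpt's phrasing suggests the intended constant may need to incorporate $\rec(t)$ rather than being exactly $K$, and reconciling this is the crux of the argument.
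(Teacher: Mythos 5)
Your overall architecture matches the paper's proof: the ``$\Longleftarrow$'' direction composes the witnessing scheduler with $\divsched$ from Lemma~\ref{lemma:divsched}, and the ``$\Longrightarrow$'' direction uses de Alfaro's theorem to reduce to the limit end component $\cE$, observes $\Exp{\max}{\cE}(\MP)\geqslant 0$, and splits into the weight-divergent case (states in $\WDMEC_F$) and the 0-EC case (states in $\ZeroEC_F$, reached at some point with weight $\geqslant K$). So the route is the same.

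However, what you flag as the ``main obstacle'' is not an obstacle, and your proposed resolution (replacing the threshold $K$ by something involving $\rec(t)$) would change the lemma rather than prove it. Two things go wrong in your reasoning. First, a 0-EC \emph{cannot} contain a negative-weight cycle: by definition every cycle of a 0-EC has weight exactly $0$ (it may contain individual state-action pairs of negative weight, which is a different matter). Consequently, once the path enters $t\in\ZeroEC_F$ with accumulated weight $w\geqslant K$ and $\divsched$ thereafter realizes a 0-EC containing $t$ and an $F$-state, \emph{every} subsequent visit to $t$ has accumulated weight exactly $w$ again, and $t$ is visited infinitely often almost surely. Second, the property to be established is the B\"uchi constraint $\Box\Diamond(\accwgt\geqslant K)$, which only requires the weight to exceed $K$ at infinitely many positions --- the returns to $t$ already supply these, so no control over the weight \emph{between} visits is needed. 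The recurrence value $\rec(t)$ is the right tool for the coB\"uchi constraint $\Diamond\Box(\accwgt\geqslant K)$ (where it indeed appears, with $K_t=K-\rec(t)$, in Lemma~\ref{lemma:coBuechi}), but it plays no role here; the choice $K_t=K$ in $\varphi$ is correct, and the same zero-cycle observation is what makes the converse direction work (infinitely many prefixes of weight $\geqslant K$ inside a 0-EC force some prefix ending in a state $t$ of the 0-EC to have weight $\geqslant K$). With this clarification both of your directions close; without it, the ``$\Longleftarrow$'' direction as written is left conditional on a modification of the statement that is neither needed nor sound.
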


\begin{proof}
 The implication ``$\Longleftarrow$'' is an easy verification.
  Given a scheduler $\sched$ with $\Pr^{\sched}_{\cM,\sinit}(\varphi)=1$,
  combine $\sched$ with the scheduler $\divsched$
  of Lemma \ref{lemma:divsched}
  to obtain a new scheduler
  $\tsched$ with
  $\Pr^{\tsched}_{\cM,\sinit}
    \big(\Box \Diamond (\accwgt \geqslant K)\wedge \Box \Diamond F\big)=1$.

  To prove ``$\Longrightarrow$'',
  we suppose that we are given a scheduler $\tsched$ for $\cM$ with
  $\Pr^{\tsched}_{\cM,\sinit}
   \big(\Box \Diamond (\accwgt \geqslant K)\wedge \Box \Diamond F\big)=1$.
  Then, $\Pr^{\tsched}_{\cM,\sinit}(\Limit{\cE})>0$ implies
  $\ExpRew{\max}{\cE}(\MP) \geqslant 0$ for each end component $\cE$.
  Thus,
  $\cE$ is either weight-divergent or a 0-EC.
  As almost all $\tsched$-paths satisfy $\Box \Diamond F$ almost surely,
  $\cE$ must contain at least one $F$-state.
  This yields
  $$
    \Pr^{\tsched}_{\cM,\sinit}
     \bigl(\, \Diamond (\WDMEC_F \cup \ZeroEC_F) \, \bigr)\ \ = \ \ 1 \ .
  $$
  Moreover, if $\cE$ is a 0-EC with
  $\Pr^{\tsched}_{\cM,\sinit}(\Limit{\cE})>0$ and $\cE$ 
  is not a sub-component of
  a weight-divergent MEC  
  then all states of $\cE$ belong to $\ZeroEC_F$ and almost all
  $\tsched$-paths $\infpath \in \Limit{\cE}$
  have infinitely many prefixes $\fpath$ with
  $\wgt(\fpath) \geqslant K$.
  In particular, these paths $\infpath$ satisfy the
  formula
  $\bigvee_{t\in \ZeroEC_F} \!\!
      \Diamond (t \wedge (\accwgt \geqslant K))$.
  This yields $\Pr^{\tsched}_{\cM,\sinit}(\varphi)=1$. 
\end{proof}

\begin{lemma}
 \label{lemma:Buechi-max-pos}
 Let $\varphi$ be as in Lemma \ref{lemma:Buechi-max-as}.
 Then
  $\Pr^{\max}_{\cM,\sinit}
      \bigl(\Box \Diamond (\accwgt \geqslant K) \wedge \Box \Diamond F 
      \bigr)>0$ iff\ \ 
  $\Pr^{\max}_{\cM,\sinit}(\varphi)>0$.
\end{lemma}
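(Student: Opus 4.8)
The plan is to establish both implications by the same reasoning as in Lemma~\ref{lemma:Buechi-max-as}, with ``almost surely'' systematically weakened to ``with positive probability''; the crucial simplification in the positive setting is that positivity of a (disjunctive) reachability event is always witnessed by a single finite path.

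For the direction ``$\Longleftarrow$'' I would start from $\Pr^{\max}_{\cM,\sinit}(\varphi)>0$. Since $\varphi=\Diamond\WDMEC_F\vee\bigvee_{t\in\ZeroEC_F}\Diamond(t\wedge(\accwgt\geqslant K))$ is a finite disjunction of reachability events, positive probability under some scheduler yields a finite path $\fpath$ from $\sinit$ that either ends in a state of $\WDMEC_F$, or ends in some $t\in\ZeroEC_F$ with $\wgt(\fpath)\geqslant K$. I would then let $\tsched$ follow $\fpath$ and afterwards switch to the scheduler $\divsched$ of Lemma~\ref{lemma:divsched}. Following $\fpath$ has positive probability $p>0$, and conditioned on this prefix $\divsched$ enforces $\Box\Diamond(\accwgt\geqslant K)\wedge\Box\Diamond F$ almost surely: from a state of $\PumpEC_F$ the accumulated weight tends to $+\infty$ and from a state of $\GambEC_F$ its $\limsup$ is $+\infty$, so in either case $\accwgt\geqslant K$ holds infinitely often independently of the constant offset $\wgt(\fpath)$; and from $t\in\ZeroEC_F$ the guarantee $\Box\Diamond(\accwgt\geqslant 0)$ of $\divsched$ together with $\wgt(\fpath)\geqslant K$ gives $\Box\Diamond(\accwgt\geqslant K)$. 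Hence $\Pr^{\tsched}_{\cM,\sinit}(\Box\Diamond(\accwgt\geqslant K)\wedge\Box\Diamond F)\geqslant p>0$.

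For ``$\Longrightarrow$'' I would fix a scheduler $\tsched$ with $\Pr^{\tsched}_{\cM,\sinit}(\Box\Diamond(\accwgt\geqslant K)\wedge\Box\Diamond F)>0$. By de Alfaro's theorem the limit of almost every path is an end component, so some end component $\cE$ satisfies $\Pr^{\tsched}_{\cM,\sinit}(\Limit{\cE}\cap\{\Box\Diamond(\accwgt\geqslant K)\wedge\Box\Diamond F\})>0$. The key steps are: (i) on this positive-measure set $\limsup_n\wgt(\prefix{\infpath}{n})\geqslant K>-\infty$, which rules out $\ExpRew{\max}{\cE}(\MP)<0$ by Corollary~\ref{MP-pos-or-neg-implies-strict-wgt-div}(a), so $\ExpRew{\max}{\cE}(\MP)\geqslant 0$ and, by the classification of Section~\ref{sec:classification}, $\cE$ is weight-divergent, or has maximal expected mean payoff $0$ and contains a $0$-EC; (ii) since $\Box\Diamond F$ holds, the $F$-states visited infinitely often lie in $\lim(\infpath)=\cE$, whence $\cE\cap F\neq\varnothing$. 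I would then split exactly as in Lemma~\ref{lemma:Buechi-max-as}: if $\cE$ is weight-divergent then $\cE\subseteq\WDMEC_F$ and the path satisfies $\Diamond\WDMEC_F$; otherwise, arguing as there, the relevant $0$-EC lies in $\ZeroEC_F$ and, because the path eventually stays in $\cE$ while visiting prefixes of weight $\geqslant K$ infinitely often, infinitely many of these prefixes end in a state $t\in\ZeroEC_F$, so $\bigvee_{t\in\ZeroEC_F}\Diamond(t\wedge(\accwgt\geqslant K))$ holds. In all cases $\infpath\models\varphi$ on a positive-measure set, giving $\Pr^{\tsched}_{\cM,\sinit}(\varphi)>0$ and hence $\Pr^{\max}_{\cM,\sinit}(\varphi)>0$.

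The main obstacle is steps~(i)--(ii) of the converse, namely arguing that every limit end component carrying positive measure of Büchi-weight paths is either weight-divergent or a mean-payoff-$0$ component possessing a $0$-EC that meets $F$, and matching this dichotomy to the two disjuncts of $\varphi$ (in particular the bookkeeping that aligns the constant weight accumulated before the path settles in $\cE$ with the recurring prefixes of weight $\geqslant K$). This is precisely where the results of Section~\ref{sec:classification}---the mean-payoff/weight-divergence characterizations together with the polynomial computation of $\WDMEC_F$ and $\ZeroEC_F$ and the witnessing scheduler $\divsched$ in Lemma~\ref{lemma:divsched}---are indispensable. By contrast, the direction ``$\Longleftarrow$'' and the remaining measure-theoretic steps are routine adaptations of the corresponding almost-sure argument, obtained by replacing ``$=1$'' with ``$>0$''.
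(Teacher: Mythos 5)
Your proposal follows the same route as the paper's own proof: the direction starting from $\Pr^{\max}_{\cM,\sinit}(\varphi)>0$ is handled by extracting a single finite witness path and appending the scheduler $\divsched$ of Lemma~\ref{lemma:divsched}, and the converse goes through de Alfaro's limit theorem, the bound $\Exp{\max}{\cE}(\MP)\geqslant 0$ obtained from Corollary~\ref{MP-pos-or-neg-implies-strict-wgt-div}, and a dichotomy between weight-divergent limit components and $0$-ECs. Those parts are correct and essentially identical to the paper.

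The step that is not adequately supported is the dichotomy itself. You assert that a non-weight-divergent limit component $\cE$ with $\Exp{\max}{\cE}(\MP)=0$ \emph{contains} a $0$-EC, attributing this to ``the classification of Section~\ref{sec:classification}.'' Two problems. First, the relevant classification result there (Theorem~\ref{universal-neg-wgt-div}) characterizes the existence of $0$-ECs via paths that are \emph{bounded from below}, i.e.\ $\liminf_n\wgt(\prefix{\infpath}{n})>-\infty$; your paths only satisfy $\Box\Diamond(\accwgt\geqslant K)$, i.e.\ $\limsup_n\wgt(\prefix{\infpath}{n})\geqslant K$, which is strictly weaker, so that theorem does not apply directly. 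Second, ``contains a $0$-EC'' is too weak for your next sentence: to conclude that infinitely many prefixes of weight $\geqslant K$ end in a state of $\ZeroEC_F$, you need the \emph{entire} limit component $\cE$ --- in which the path eventually stays and whose state-action pairs it visits exhaustively --- to consist of $\ZeroEC_F$-states, i.e.\ $\cE$ itself must be a $0$-EC containing an $F$-state and sitting inside an MEC of maximal mean payoff $0$. The paper closes exactly this hole with a dedicated result (Lemma~\ref{Hilfslemma-Buechi-pos-max}): a strongly connected, non-weight-divergent $\cE$ with $\Exp{\max}{\cE}(\MP)=0$ that carries positive measure of paths satisfying $\Box\Diamond(\accwgt\geqslant K)$ with limit $\cE$ must \emph{be} a $0$-EC. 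Its proof is not a routine consequence of the stated classification theorems; it runs the iterative spider construction and uses preservation of $0$-EC-invariant properties to contradict universal negative pumping of the flattened MDP. Your argument needs this lemma (or an equivalent one) to go through.
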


\begin{proof}
  ``$\Longrightarrow$'':
  If $\sched$ is  a scheduler for $\cM$ 
  with $\Pr^{\sched}_{\cM,\sinit}(\varphi)>0$,
  then there is a finite $\sched$-path $\fpath$ from $\sinit$
  such that either $\last(\fpath)\in \WDMEC_F$ or
  $\last(\fpath)\in \ZeroEC_F$ and $\wgt(\fpath)\geqslant K$.
  Let now $\tsched$ be any scheduler for $\cM$ such that
  $\residual{\tsched}{\fpath}=\divsched$ where
  $\divsched$ is as in Lemma \ref{lemma:divsched}.
  Then,
  $\Pr^{\tsched}_{\cM,\sinit}
    (\Box \Diamond (\accwgt \geqslant K)\wedge \Box \Diamond F)>0$.

  ``$\Longleftarrow$'':
  We suppose we are given a scheduler $\tsched$
  for $\cM$ such that
  $\Pr^{\tsched}_{\cM,\sinit}
    \big(\Box \Diamond (\accwgt \geqslant K)\wedge \Box \Diamond F\big)$ is positive.
  There is an end component $\cE$ such that
  $\Pr^{\tsched}_{\cM,\sinit}(\Lambda_{\cE})>0$ where
  $\Lambda_{\cE}$ denotes the set of infinite paths $\infpath\in \Limit{\cE}$
  with $\infpath \models (\accwgt \geqslant K)\wedge \Box \Diamond F$.
  But then, $\cE$ contains an $F$-state and is probably bounded from below.
  Thus, $\Exp{\max}{\cE}(\MP)$ is nonnegative.

  If $\cE$ is weight-divergent then all states of $\cE$
  are contained in $\WDMEC_F$, in which case 
  $\Pr^{\sched}_{\cM,\sinit}(\Diamond \WDMEC_F)>0$
  and therefore $\Pr^{\sched}_{\cM,\sinit}(\varphi)>0$.

  Let us now consider the case where $\cE$ is not weight-divergent.
  Then, $\Exp{\max}{\cE}(\MP)=0$.
  Corollary \ref{Hilfslemma-Buechi-pos-max} yields
  that $\cE$ is a 0-EC.
  But then all states in $\cE$ are conatined in
  $\ZeroEC_F$. Hence, there is some state $t\in \ZeroEC$
  with $\Pr^{\sched}_{\cM,\sinit}
           \big(\Diamond (t \wedge (\accwgt \geqslant K))\big)>0$.
  But then $\Pr^{\sched}_{\cM,\sinit}(\varphi)>0$.
\end{proof}

\subsubsection{The Universal Problems \UaswB{} and \UposwB}

\label{sec:buechi-universal}

We now consider the universal variants \UaswB{} and \UposwB{} and show that
they are solvable using algorithms for the following two 
coB\"uchi problems:
\begin{center}
  \begin{tabular}{ll}
    \EposwcoB: &
    does there exist a scheduler $\sched$ s.t.
    $\Pr^{\sched}_{\cM,\sinit}\big(\Diamond \Box (\accwgt \geqslant K)\big)>0$ ?
    \\[0.5ex]
    \EaswcoB: &
    does there exist a scheduler $\sched$ s.t.
    $\Pr^{\sched}_{\cM,\sinit}\big(\Diamond \Box (\accwgt \geqslant K)\big)=1$ ?
  \end{tabular}
\end{center}

Property (S3) of the spider construction and 
the corresponding statement for the iterative application of
the spider construction (Lemma \ref{lemma:equivalence-M-and-Mi})
will serve as a useful vehicle to prove the following two lemmas,
which again will be used to reduce
\EposwcoB{} to \Eposdwr{} and \EaswcoB{} to \Easdwr
(see Lemma \ref{lemma:coBuechi} below).

\begin{lemma}
\label{Hilfslemma-Buechi-pos-max}
  Let $\cM$ be a strongly connected MDP where $\Exp{\max}{\cM}(\MP)=0$
  and where $\cM$ is not weight-divergent.
  If there exists a scheduler $\sched$ and some $K\in \Integer$ such that
  $\Pr^{\sched}_{\cM,s}\{\infpath \in \InfPaths :
     \infpath \models \Box \Diamond (\wgt \geqslant K)
     \wedge \lim(\infpath)=\cM\}$ is positive,
  then $\cM$ is a 0-EC. 
\end{lemma}

\begin{proof}
 Suppose by contradiction that $\cM$ is not a 0-EC.
 Let $\cN$ be the MDP resulting from applying the weight-divergence 
 algorithm to $\cM$ (iterative application
 of the spider construction).
  The set of infinite paths $\infpath$ 
  with $\limsup_{n \to \infty} \wgt(\prefix{\infpath}{n}) > -\infty$
  and where $\lim(\infpath)$ contains
  at least one state-action pair that is not contained in a 0-EC
  constitutes a 0-EC-invariant property
  with positive measure under $\sched$ from state $s$.
 Hence, Lemma \ref{lemma:equivalence-M-and-Mi}
 yields the existence of a scheduler $\tsched$ for
 $\cN$ such that
 the set of infinite paths $\infpath$ in $\cN$ 
 with $\limsup_{n \to \infty} \wgt(\prefix{\infpath}{n}) > -\infty$
 has positive measure from $s$.
 This, however, is impossible as 
 $\Exp{\max}{\cF}(\MP)<0$ for all end components $\cF$ of $\cN$
 (see Theorem \ref{weight-divergence-algorithm}),
 which yields that $\cN$ is universally negatively pumping.
\end{proof}

As a consequence of Lemma \ref{Hilfslemma-Buechi-pos-max}
we get the following corollary which can be seen as an add-on
for Lemma \ref{universal-neg-wgt-div} 
(but will not be used for the following
considerations on weight-bounded B\"uchi conditions).

\begin{corollary}
  Let $\cM$ be a strongly connected MDP where $\Exp{\max}{\cM}(\MP)=0$
  and where $\cM$ is not positively weight-divergent.
  Then, $\cM$ has no 0-ECs if and only if $\cM$ is universally
  negatively pumping.
\end{corollary}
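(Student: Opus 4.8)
The plan is to prove the two implications separately, built around the observation that $\cM$ is universally negatively pumping precisely when $\Pr^{\sched}_{\cM,s}\{\infpath\in\InfPaths : \Box\Diamond(\accwgt\geqslant K)\}=0$ for every scheduler $\sched$, every state $s$, and every $K\in\Integer$. Indeed, an infinite path has $\limsup_n \wgt(\prefix{\infpath}{n})>-\infty$ if and only if it satisfies $\Box\Diamond(\accwgt\geqslant K)$ for some integer $K$ (weights being integer-valued), so $\{\limsup=-\infty\}$ is the complement of the countable union $\bigcup_K\{\Box\Diamond(\accwgt\geqslant K)\}$. For the direction ``universally negatively pumping $\Longrightarrow$ no 0-EC'' this reformulation is not even needed: $\limsup=-\infty$ almost surely forces $\liminf=-\infty$ almost surely under every scheduler, i.e.\ every scheduler is negatively weight-divergent, and Theorem~\ref{universal-neg-wgt-div}(b) then yields that $\cM$ has no 0-EC.

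For the converse, suppose $\cM$ has no 0-EC; I would show $\Pr^{\sched}_{\cM,s}\{\Box\Diamond(\accwgt\geqslant K)\}=0$ for all $\sched,s,K$. By de Alfaro's theorem almost every $\sched$-path has its limit equal to some end component $\cE$ of $\cM$, so it suffices to prove $\Pr^{\sched}_{\cM,s}\{\Box\Diamond(\accwgt\geqslant K)\wedge\lim(\infpath)=\cE\}=0$ for each such $\cE$. Every end component $\cE$ of $\cM$ satisfies $\Exp{\max}{\cE}(\MP)\leqslant\Exp{\max}{\cM}(\MP)=0$, and, since $\cM$ is strongly connected and not positively weight-divergent, $\cE$ cannot be positively weight-divergent either, as otherwise reaching $\cE$ and then following a weight-divergent scheduler for $\cE$ would make $\cM$ positively weight-divergent. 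I would then split on the value $\Exp{\max}{\cE}(\MP)$. If it is strictly negative, Corollary~\ref{MP-pos-or-neg-implies-strict-wgt-div}(a) applied to $\cE$ gives $\limsup=-\infty$ almost surely for paths ultimately remaining in $\cE$, so the event has measure $0$. If it equals $0$, then $\cE$ is a strongly connected MDP with maximal mean payoff $0$ that is not (positively) weight-divergent, which are exactly the hypotheses of Lemma~\ref{Hilfslemma-Buechi-pos-max}; were the event $\{\Box\Diamond(\accwgt\geqslant K)\wedge\lim(\infpath)=\cE\}$ of positive measure, the lemma would force $\cE$ to be a 0-EC, contradicting the assumption, so the event has measure $0$. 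Combining the two subcases gives the claim.

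The main obstacle will be the last technical step: both Lemma~\ref{Hilfslemma-Buechi-pos-max} and Corollary~\ref{MP-pos-or-neg-implies-strict-wgt-div}(a) are stated for schedulers of the strongly connected MDP $\cE$ itself, whereas my event lives under an $\cM$-scheduler whose paths merely settle into $\cE$ in the limit. I would bridge this by the standard residual-scheduler decomposition: conditioning on a finite prefix $\fpath$ after which the path enters $\cE$ and has limit $\cE$, the residual $\residual{\sched}{\fpath}$ acts as a scheduler on $\cE$, under which the tail event $\Box\Diamond(\accwgt\geqslant K)$ becomes $\Box\Diamond(\accwgt\geqslant K-\wgt(\fpath))$; positive measure of the original event thus yields a prefix giving positive measure on $\cE$ itself, to which the cited results then apply. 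The remaining routine points are the passage between $\limsup>-\infty$ and the $\Box\Diamond(\accwgt\geqslant K)$ formulation, which uses only integrality of weights, and the fact that a finite prefix contributes only a finite weight offset to the $\limsup$ of the whole path.
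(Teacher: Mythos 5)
Your proof is correct and follows essentially the same route as the paper's: the paper argues the contrapositive of the hard direction — not universally negatively pumping yields an end component $\cE$ and a $K$ with $\Pr^{\sched}_{\cM,s}\{\Box\Diamond(\accwgt\geqslant K)\wedge\lim(\infpath)=\cE\}>0$, rules out $\Exp{\max}{\cE}(\MP)<0$ because that would make $\cE$ universally negatively pumping, and invokes Lemma~\ref{Hilfslemma-Buechi-pos-max} to conclude that $\cE$ is a 0-EC — which is exactly the mirror image of your direct case split on the sign of $\Exp{\max}{\cE}(\MP)$ using the same two cited results, while the easy direction is handled in both cases via the observation that a scheduler realizing a 0-EC is not negatively pumping. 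Your explicit treatment of the residual-scheduler bridge, of the countable union over $K$, and of why sub-end-components inherit non-weight-divergence only spells out steps the paper's proof leaves implicit.
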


\begin{proof}
  The implication ``$\Longleftarrow$'' is trivial as 0-ECs are obviously
  not negatively pumping.
  We now prove ``$\Longrightarrow$''. For this, we suppose that
  $\cM$ is not universally negativey pumping and show that
  $\cM$ has at least one 0-EC.
  Being not universally negatively pumping implies the
  existence of a scheduler $\sched$ and a state $s$ such that:
  $$
   \Pr^{\sched}_{\cM,s}
    \bigl\{\infpath \in \InfPaths : \liminf_{n \to \infty} 
     \wgt(\prefix{\infpath}{n}) \ > \ - \infty\bigr\} \ > \ 0\enskip.
  $$
  But then there is some $K\in \Integer$ and
  an end component $\cE$ such that 
  $$
    \Pr^{\sched}_{\cM,s}\bigl\{\infpath \in \InfPaths : 
           \infpath \models \Box \Diamond (\accwgt \geqslant K) 
           \wedge \lim(\infpath)=\cE 
    \bigr\} \ > \ 0 \enskip.
  $$
  As $\Exp{\max}{\cM}(\MP)=0$ we have
  $\Exp{\max}{\cE}(\MP)\leqslant 0$.
  As $\Exp{\max}{\cE}(\MP)<0$ would imply that $\cE$ is universally
  negatively pumping we get 
  $\Exp{\max}{\cE}(\MP)= 0$.
  Lemma \ref{Hilfslemma-Buechi-pos-max} 
  applied to $\cE$ yields that $\cE$ is a 0-EC.
\end{proof}

\begin{lemma}
\label{Hilfslemma-coBuechi-pos-max}
  Let $\cM$ be a strongly connected MDP where $\Exp{\max}{\cM}(\MP)=0$.
  If there exists a scheduler $\sched$ and some $K\in \Integer$ such that
  $\Pr^{\sched}_{\cM,s}\{\infpath \in \InfPaths :
     \infpath \models \Diamond \Box (\wgt \geqslant K)
     \wedge \lim(\infpath)=\cM\}$ is positive
  then $\cM$ is a 0-EC. 
\end{lemma}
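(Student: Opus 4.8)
The plan is to argue by contradiction: assuming $\cM$ is not a $0$-EC, I would produce a scheduler under which a positive-measure set of paths is bounded from below while its limit lies outside all $0$-ECs, and then contradict the universal negative weight-divergence guaranteed by Theorem~\ref{universal-neg-wgt-div}(b). The starting observation is that $\infpath \models \Diamond \Box (\accwgt \geqslant K)$ forces $\liminf_{n} \wgt(\prefix{\infpath}{n}) \geqslant K$, so every path counted by the hypothesis is bounded from below. Moreover, since $\Exp{\max}{\cM}(\MP)=0$ and $\cM$ is strongly connected but (by assumption) not a $0$-EC, the maximality theory of $0$-ECs (Lemma~\ref{lemma:union-of-0-EC}) yields a state-action pair of $\cM$ that belongs to no $0$-EC at all: otherwise every crossing pair between two distinct maximal $0$-ECs would, by the union lemma, merge them, forcing $\cM$ to be a single $0$-EC. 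Because the paths in the hypothesis additionally satisfy $\lim(\infpath)=\cM$, their limit contains such a pair. Hence the hypothesis gives $\Pr^{\sched}_{\cM,s}(\varphi)>0$ for the property $\varphi$ = ``bounded from below and the limit contains a pair lying in no $0$-EC''.

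First I would check that $\varphi$ is $0$-EC-invariant. Condition (I1) holds precisely because of the second conjunct: a path with an infinite suffix inside a $0$-EC $\cE$ has $\lim(\infpath)\subseteq\cE$, so its limit contains no pair outside a $0$-EC and $\varphi$ fails. Condition (I2) holds because purging $0$-weight segments affects neither the finiteness of the liminf nor which non-$0$-EC pairs recur. Then I would flatten all $0$-ECs of $\cM$ by the iterated spider construction, obtaining an MDP $\cM'$ on the same state space with no $0$-ECs (Lemma~\ref{appendix:ZeroEC} and the iterative spider of Section~\ref{sec:iterative-application-spider}); by Corollary~\ref{spider-preserves-wgt-div} weight-divergence is preserved throughout, so no early special case for weight-divergent $\cM$ is needed. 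Lemma~\ref{lemma:equivalence-M-and-Mi}(b) then transfers the positive lower bound: some scheduler $\tsched$ for $\cM'$ satisfies $\Pr^{\tsched}_{\cM',s}(\varphi)\geqslant \Pr^{\sched}_{\cM,s}(\varphi)>0$. Since $\cM'$ has no $0$-EC, in $\cM'$ the property $\varphi$ reduces to ``bounded from below'', so $\tsched$ gives positive measure to bounded-below paths whose limit is some end component $\cF'$ of $\cM'$.

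Finally I would derive the contradiction. The component $\cF'$ satisfies $\Exp{\max}{\cF'}(\MP)\leqslant 0$ (the maximal expected mean payoff of the flattened MDP's MEC is $\leqslant 0$ by Corollary~\ref{spider-MP-at-most-0}, applied iteratively), and $\Exp{\max}{\cF'}(\MP)<0$ is excluded because it would make $\cF'$ universally negatively pumping (Corollary~\ref{MP-pos-or-neg-implies-strict-wgt-div}(a)), contradicting the existence of bounded-below paths with limit $\cF'$; hence $\Exp{\max}{\cF'}(\MP)=0$. But $\cF'$ contains no $0$-EC, so Theorem~\ref{universal-neg-wgt-div}(b) forces every scheduler of $\cF'$ to be negatively weight-divergent, again contradicting the bounded-below limit-$\cF'$ paths. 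This contradiction shows $\cM$ is a $0$-EC. I expect the main obstacle to be the careful verification that $\varphi$ is measurable and $0$-EC-invariant and that the transfer through the (possibly many) spider steps of Lemma~\ref{lemma:equivalence-M-and-Mi} preserves it with the stated lower bound; this is exactly the mechanism underlying the adjacent Lemma~\ref{Hilfslemma-Buechi-pos-max}, which—via $\Diamond\Box \Rightarrow \Box\Diamond$—already settles the subcase where $\cM$ is not weight-divergent, so the present argument may also be phrased as handling only the remaining weight-divergent case.
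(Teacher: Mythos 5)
Your proof is correct and follows essentially the same route as the paper's: both argue by contradiction using the $0$-EC-invariant property ``$\Diamond\Box(\accwgt\geqslant K)$ and the limit contains a state-action pair in no $0$-EC'', flatten all $0$-ECs by the iterated spider construction, transfer the positive probability via Lemma~\ref{lemma:equivalence-M-and-Mi}, and contradict the universal negative weight-divergence of the resulting $0$-EC-free end components via Theorem~\ref{universal-neg-wgt-div}(b). Your additional justifications (why a non-$0$-EC pair must exist when $\cM$ is not a $0$-EC, and the case split on $\Exp{\max}{\cF'}(\MP)$) only make explicit what the paper leaves implicit.
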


\begin{proof}
  The argument is similar as for Lemma \ref{Hilfslemma-Buechi-pos-max}.
  Suppose by contradiction that $\cM$ is not a 0-EC.
  Let $\cN$ denote the MDP resulting from $\cM$ by successively applying
  the spider construction to flatten all 0-ECs of $\cM$. 
  For this we can rely on the algorithm to compute
  all maximal 0-ECs of $\cM$  presented
  in Section \ref{sec:compute-max-0-EC} 
  (see also Lemma \ref{mincredit-ZeroEC}) and then successively apply
  the spider construction to the BSCCs of the maximal 0-ECs.
  The final MDP $\cN$ has no 0-EC, but might still contain gambling
  end components.
  In any case, all end components of $\cN$ are negatively weight-divergent.
  The set of infinite paths $\infpath$ 
  with $\infpath \models \Diamond \Box (\wgt \geqslant K)$ 
  where $\lim(\infpath)$ contains at least one state-action pair
  that does not belong to a 0-EC 
  consitutes
  a measurable 0-EC-invariant property. 
  The probability for this property under
  $\sched$ from state $s$ is positive 
  (by assumption and as $\cM$ is not a 0-EC).
  Thanks to Lemma \ref{lemma:equivalence-M-and-Mi},
  scheduler $\sched$ for $\cM$ can be transformed into a scheduler
  $\tsched$ for $\cN$ such that 
  $\Diamond \Box (\wgt \geqslant K)$ holds with positive probability.
  But this is impossible as 
  all end components of $\cN$
  are negatively weight-divergent.
\end{proof}

\begin{lemma}
\label{lemma:coBuechi}
  Problem \EposwcoB{} is polynomially reducible to \Eposdwr, 
  while \EaswcoB{} is polynomially reducible to \Easdwr.
\end{lemma}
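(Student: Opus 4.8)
The plan is to reduce each coB\"uchi weight problem to its disjunctive weight-bounded reachability counterpart by encoding, in a DWR formula, exactly the end components from which the accumulated weight can eventually be kept above $K$. Using the classification of Section~\ref{sec:classification}, I would let $T^*$ be the set of states lying in some pumping end component of $\cM$, and $T\setminus T^*$ be $\ZeroEC$, the states lying in a maximal $0$-EC $\cZ$ of a maximal end component $\cE$ with $\Exp{\max}{\cE}(\MP)=0$. Setting $K_t=-\infty$ for $t\in T^*$ and $K_t=K-\rec(t)$ for $t\in \ZeroEC$ gives
\[
  \varphi \ = \ \Diamond T^* \ \vee \ \bigvee_{t\in \ZeroEC}\Diamond\bigl(t \wedge (\accwgt \geqslant K-\rec(t))\bigr).
\]
Since $T^*$, $\ZeroEC$ and the recurrence values $\rec(t)$ are all computable in polynomial time (Lemma~\ref{lemma:pumping-ecs} and Lemma~\ref{mincredit-ZeroEC}, proved as Lemma~\ref{appendix:values-in-ZeroEC}), the reduction is polynomial. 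It then remains to prove that $\Pr^{\sched}_{\cM,\sinit}\bigl(\Diamond\Box(\accwgt\geqslant K)\bigr)$ is positive (resp.\ equal to $1$) for some scheduler iff $\Pr^{\sched}_{\cM,\sinit}(\varphi)$ is positive (resp.\ equal to $1$) for some scheduler.

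For the direction from DWR to coB\"uchi, I would compose a witnessing DWR scheduler with a committing scheduler in the spirit of $\divsched$ from Lemma~\ref{lemma:divsched}: upon reaching a target state $t$, switch to a pumping scheduler if $t\in T^*$, or to an MD-scheduler realizing a $0$-EC that achieves $\Box(\accwgt\geqslant \rec(t))$ from $t$ if $t\in\ZeroEC$ (such MD-schedulers exist by the analysis of $\rec$ in Lemma~\ref{appendix:values-in-ZeroEC}). In the first case the accumulated weight diverges to $+\infty$, hence eventually exceeds $K$; in the second, entering $t$ with weight $w_0\geqslant K-\rec(t)$ keeps the weight $\geqslant w_0+\rec(t)\geqslant K$ forever. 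Thus every path satisfying a disjunct of $\varphi$ yields a path satisfying $\Diamond\Box(\accwgt\geqslant K)$, which transfers both the ``$>0$'' and the ``$=1$'' guarantees from $\varphi$ to the coB\"uchi condition.

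The converse is the crux. Given a coB\"uchi scheduler $\sched$, I would use de Alfaro's theorem to decompose almost all paths by their limit end component $\cE$. Because integer weights make $\Diamond\Box(\accwgt\geqslant K)$ equivalent to $\liminf_n \wgt(\prefix{\infpath}{n})\geqslant K$, any contributing path is bounded from below, so $\cE$ is not universally negatively weight-divergent; by Corollary~\ref{MP-pos-or-neg-implies-strict-wgt-div}(a) this forces $\Exp{\max}{\cE}(\MP)\geqslant 0$. If $\Exp{\max}{\cE}(\MP)>0$ then $\cE$ is pumping and its states lie in $T^*$, so the path satisfies $\Diamond T^*$. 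If $\Exp{\max}{\cE}(\MP)=0$, then the coB\"uchi condition holds with positive measure inside $\cE$, so Lemma~\ref{Hilfslemma-coBuechi-pos-max} applies and $\cE$ is a $0$-EC. Here I would exploit the $0$-EC property that, once the path is trapped in $\cE$ after entering at $s_0$ with weight $w_0$, the weight at every later visit of a state $s''$ equals $w_0+w(s_0,s'')$. Hence $\liminf=w_0+\min_{s''\in\cE}w(s_0,s'')$; at the minimising state $s'$ the weight equals this liminf and is therefore $\geqslant K$, while from $s'$ every $w(s',s'')=w(s_0,s'')-w(s_0,s')\geqslant 0$, so remaining in $\cE$ from $s'$ keeps the weight above its value at $s'$, giving $\rec(s')=0$. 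Thus the path reaches $s'\in T^*\cup\ZeroEC$ with $\accwgt\geqslant K=K-\rec(s')$, i.e.\ it satisfies the disjunct for $s'$.

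The main obstacle is this converse, and inside it the two-fold bookkeeping: invoking Lemma~\ref{Hilfslemma-coBuechi-pos-max} to rule out gambling limits and pin the mean-payoff-zero case down to genuine $0$-ECs, and the recurrence-value accounting identifying the minimum-weight recurrent state $s'$ as the witness with $\rec(s')=0$. For the almost-sure variant the same case analysis must hold for almost every path rather than a single positive-measure one, so I would additionally argue that whenever the coB\"uchi condition holds almost surely, the set of paths whose limit is pumping or a bounded-from-below $0$-EC has full measure and that each such path meets a disjunct of $\varphi$, yielding almost-sure satisfaction of $\varphi$; the reverse implication then follows from the composition argument applied to an almost-surely winning DWR scheduler.
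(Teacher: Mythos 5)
Your proposal is correct and follows essentially the same route as the paper: the identical reduction via $T^*=\PumpEC$ and $T\setminus T^*=\ZeroEC$ with thresholds $K-\rec(t)$, the same composition with a $\divsched$-style scheduler for the easy direction, and the same limit-end-component analysis with Lemma~\ref{Hilfslemma-coBuechi-pos-max} for the converse. The only (harmless) difference is cosmetic bookkeeping in the $0$-EC case: you single out the weight-minimising recurrent state $s'$ and argue $\rec(s')=0$, whereas the paper works with an arbitrary state $t$ of the limit component and the inequality $\rec(t)\geqslant w_{\cE}(t)$.
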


\begin{proof}
Let $\PumpEC=\PumpEC_S$ be the set of states that are contained in some
pumping end component and let $\ZeroEC=\ZeroEC_S$ 
denote the set of states that
belong to a maximal 0-EC $\cZ$ where $\cZ$ is a sub-component of a
maximal end component  of $\cM$ with maximal expected mean payoff 0.
Using the results of Section \ref{sec:classification} 
(see also Lemma \ref{lemma:divsched}), 
$\PumpEC$ and $\ZeroEC$ are computable in 
polynomial time.
Recall from Section \ref{sec:min-credit}
that for each state $t\in \ZeroEC$ the recurrence value
$\rec(t)$ defined as the maximal value $w\in \Integer$
such that
$\Pr^{\max}_{\cZ,s}
           \big(\, \Box (\accwgt  \geqslant w) \, \wedge \, \Box \Diamond t \, \big)
       =1$
is computable in polynomial time where $\cZ$ denotes the
unique maximal 0-EC that contains $t$.
Let now $\usched$ be a scheduler such that: 
\begin{itemize}
\item 
  $\usched$ is pumping
  from each state $t\in \PumpEC$,
\item
  $\Pr^{\usched}_{\cM,s}
    (\, \Box (\accwgt  \geqslant \rec(t)) \, \wedge \, \Box \Diamond t \, )
   =1$ for each state $t\in \ZeroEC$.
\end{itemize}
 Let $\varphi$ be the disjunctive weight-bounded reachability constraint
 $\varphi=\varphi(T,(K_t)_{t\in T})$ where
 $T = \PumpEC \cup \ZeroEC$ and $K_t=-\infty$ for $t\in \PumpEC$ and
 $K_t=K-\rec(t)$ for $t\in \ZeroEC \setminus \PumpEC$. 
 We now show:
 \begin{enumerate}
 \item [(a)] 
     $\Pr^{\max}_{\cM,\sinit}\big(\Diamond \Box (\accwgt \geqslant K)\big)>0$
      iff 
     $\Pr^{\max}_{\cM,\sinit}(\varphi)>0$
 \item [(b)]
     $\Pr^{\max}_{\cM,\sinit}\big(\Diamond \Box (\accwgt \geqslant K)\big)=1$
      iff 
     $\Pr^{\max}_{\cM,\sinit}(\varphi)=1$ 
 \end{enumerate}
 Clearly,
 statement (a) implies the polynomial reducibility of \EposwcoB{} to \Eposdwr,
 while the polynomial reducibility of \EaswcoB{} to \Easdwr{} follows from 
 statement (b).

 {\it Proof of statement (a).}
 The implication is ``$\Longleftarrow$'' is obvious any scheduler
 $\tsched$ with $\Pr^{\tsched}_{\cM,\sinit}(\varphi)>0$ 
 can be combined with the scheduler $\usched$ above
 to obtain a new scheduler $\sched$ with
 $\Pr^{\sched}_{\cM,\sinit}\big(\Diamond \Box (\accwgt \geqslant K)\big)>0$.

 To prove ``$\Longrightarrow$'', we suppose there is a  
 scheduler $\tsched$ with
 $\Pr^{\tsched}_{\cM,\sinit}\big(\Diamond \Box (\accwgt \geqslant K)\big)>0$.
 There exists an end component $\cE$ such that
 \[
   \Pr^{\tsched}_{\cM,\sinit}
    \bigl\{\ \infpath \in \Limit{\cE} \ : \ 
             \infpath \models \Diamond \Box (\accwgt \geqslant K) \ 
    \bigr\} \ \ > \ \ 0\enskip.
 \]
 If $\cE$ contains some state that belongs to a pumping end component
 (this covers the case where $\cE$ itself is pumping)
 then $\cE$ contains at least one state in 
 $\PumpEC$, which obviously yields $\Pr^{\tsched}_{\cM,\sinit}(\varphi)>0$.
 Suppose now that none of the states in $\cE$ belongs to a pumping
 end component. 
 Obviously we then have $\Exp{\max}{\cE}(\MP)=0$.
 Thanks to Lemma \ref{Hilfslemma-coBuechi-pos-max}
 we get that $\cE$ is a 0-EC.

Pick some state $t$ in $\cE$. 
The presented algorithm for computing the recurrence values 
(see Section \ref{sec:min-credit})
shows that $\rec(t) \geqslant w_{\cE}(t)$ 
where $w_{\cE}(t)=\min \{w(t,s):s\in \cE\}$.
(As before, $w(t,s)$ is the weight of all paths from $t$ to $s$ in 
the maximal 0-EC $\cZ$ that subsumes $\cE$.)
Thus, if $\infpath \in \Limit{\cE}$ with
$\infpath \models \Diamond \Box (\accwgt \geqslant K)$
then there exists an integer $L \geqslant K$ 
such that $\infpath$ contains infinitely
many finite prefixes $\fpath$ with $\last(\fpath)=t$ and $\wgt(\fpath)=L$.
We then have $L+w(t,s)\geqslant K$ for all states $s$ in $\cE$.
Thus, $L+w_{\cE}(t) \geqslant K$ and therefore
$$
  \rec(t) \ \ \geqslant \ \ w_{\cE}(t) \ \ \geqslant \ \ K-L \enskip.
$$
We get $L \geqslant K-\rec(t)=K_t$.
This yields $\infpath \models \Diamond (t \wedge (\accwgt \geqslant K_t))$.
But then $\Pr^{\tsched}_{\cM,\sinit}(\varphi)>0$.

 {\it Proof of statement (b).}  
 For the implication is ``$\Longleftarrow$'' we suppose that we are 
 given a scheduler $\tsched$ with
 $\Pr^{\max}_{\cM,\sinit}(\varphi)=1$.
 Composing $\tsched$ with the above scheduler $\usched$ we obtain a
 new scheduler $\sched$ with
 $\Pr^{\sched}_{\cM,\sinit}\big(\Diamond \Box (\accwgt \geqslant K)\big)=1$.

 To prove ``$\Longrightarrow$'' we suppose that we are given a scheduler
 $\sched$ with
 $\Pr^{\sched}_{\cM,\sinit}(\Diamond \Box (\accwgt \geqslant K))=1$.
 But then each end component $\cE$ where
 $\Pr^{\sched}_{\cM,\sinit}\{\infpath \in \Limit{\cE} :
    \infpath \models \Diamond \Box (\accwgt \geqslant K)\}>0$
 has nonnegative maximal expected mean payoff. 
 If $\Exp{\max}{\cE}(\MP)>0$ then $\cE$ is pumping and all states
 of $\cE$ belong to $\PumpEC$.
 If $\Exp{\max}{\cE}(\MP)=0$ then Lemma \ref{Hilfslemma-coBuechi-pos-max}
 implies that $\cE$ is a 0-EC.
 As in the proof of statement (a) we can pick an arbitrary state $t$ of
 $\cE$ and show 
 each $\sched$-paths $\infpath \in \Limit{\cE}$ with
 $\infpath \models \Diamond \Box (\accwgt \geqslant K)\}>0$
 has infinitely many prefixes $\fpath$ with $\last(\fpath)=t$
 and $\wgt(\fpath)\geqslant K_t$.
 This yields $\Pr^{\sched}_{\cM,\sinit}(\varphi)=1$.
\end{proof}

Combining Lemma \ref{lemma:coBuechi} 
with the results on \dwr{}-problems
yields:

\begin{corollary}
  \label{coBuechi-complexity} 
  \EposwcoB{} is solvable in polynomial time,
  while \EaswcoB{} is in $\textrm{NP}\cap \textrm{coNP}$ and
  solvable in pseudo-polynomial time.
\end{corollary}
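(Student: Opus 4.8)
The plan is to obtain the corollary as a routine composition of the two polynomial-time reductions supplied by Lemma~\ref{lemma:coBuechi} with the already-established complexity bounds for the disjunctive weight-bounded reachability problems. Lemma~\ref{lemma:coBuechi} reduces \EposwcoB{} to \Eposdwr{} and \EaswcoB{} to \Easdwr{} in polynomial time, so each coB\"uchi problem inherits the complexity of its \dwr-counterpart up to the cost of the reduction. The only thing to verify is that the particular reductions built in that lemma preserve the relevant complexity classes, which is a matter of bookkeeping rather than new mathematics.

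For \EposwcoB{} I would argue as follows. Theorem~\ref{thm:Eposdwr} states that \Eposdwr{} is decidable in polynomial time (and that the optimal threshold $\valueEpos{\cM,s}$ is computable in polynomial time via shortest-path algorithms). Composing the polynomial-time reduction of Lemma~\ref{lemma:coBuechi} with the polynomial-time decision procedure for \Eposdwr{} yields a polynomial-time decision procedure for \EposwcoB. Here I would note explicitly that the instance $(\cM,\varphi)$ produced by the reduction is the MDP $\cM$ itself equipped with the target set $T=\PumpEC\cup\ZeroEC$ and thresholds $K_t=K-\rec(t)$; since $\PumpEC$, $\ZeroEC$ and the recurrence values $\rec(t)$ are all computable in polynomial time (Lemma~\ref{lemma:divsched} and Lemma~\ref{mincredit-ZeroEC}) and the shifted thresholds have polynomially bounded bit-length, the overall running time stays polynomial.

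For \EaswcoB{} I would invoke Theorem~\ref{complexity-DWR-E1-general-case}: \Easdwr{} lies in $\NP\cap\coNP$ and is solvable in pseudo-polynomial time. Because many-one polynomial-time reductions preserve membership in $\NP$ and in $\coNP$ separately (compose the reduction with the nondeterministic verifier, resp.\ its complement), they also preserve membership in $\NP\cap\coNP$, so \EaswcoB{}$\,\in\NP\cap\coNP$ follows. For the pseudo-polynomial bound I would point out that the reduction does not blow up the magnitudes of the weights: the new thresholds are obtained by subtracting the polynomially-bounded recurrence values $\rec(t)$ from $K$, so the maximal weight magnitude in the \Easdwr-instance is at most polynomially larger than in the original, and a pseudo-polynomial algorithm for \Easdwr{} therefore runs in pseudo-polynomial time in the size of the \EaswcoB-instance.

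I do not expect a genuine obstacle here, since the substantive work was done earlier: Lemmas~\ref{Hilfslemma-Buechi-pos-max} and~\ref{Hilfslemma-coBuechi-pos-max} (which use the spider construction and Lemma~\ref{lemma:equivalence-M-and-Mi}) establish the structural equivalence underlying Lemma~\ref{lemma:coBuechi}. The one point that warrants care in writing up the proof is precisely the preservation argument for the \emph{pseudo}-polynomial bound, since a careless reduction could in principle exponentially inflate the numeric parameters; the argument above, relying on the fact that the reduction only shifts weights by polynomially-bounded recurrence values, is what closes that gap.
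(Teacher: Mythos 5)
Your proposal is correct and follows exactly the paper's route: the corollary is obtained by composing the polynomial-time reductions of Lemma~\ref{lemma:coBuechi} with Theorem~\ref{thm:Eposdwr} for \Eposdwr{} and Theorem~\ref{complexity-DWR-E1-general-case} for \Easdwr. The additional bookkeeping you supply (closure of $\NP\cap\coNP$ under polynomial many-one reductions, and the fact that the thresholds $K_t=K-\rec(t)$ only shift weights by polynomially bounded recurrence values so the pseudo-polynomial bound is preserved) is left implicit in the paper but is exactly the right justification.
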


We now return to weight-bounded B\"uchi constraints and show how
\UaswB{} and \UposwB{} are solvable using algorithms for (the complements of)
the coB\"uchi problems \EposwcoB{} and \EaswcoB, respectively.

\begin{lemma}
  Problem \UaswB{} is solvable in polynomial time.
\end{lemma}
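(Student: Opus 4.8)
The plan is to reduce \UaswB{} to two independent checks, each solvable in polynomial time by results already established. Concretely, I would prove that $\Pr^{\sched}_{\cM,s}\big(\Box \Diamond (\accwgt \geqslant K) \wedge \Box \Diamond F\big)=1$ holds for all schedulers $\sched$ if and only if (i)~$\Pr^{\min}_{\cM,s}(\Box \Diamond F)=1$, and (ii)~there is no scheduler $\sched$ with $\Pr^{\sched}_{\cM^-,s}\big(\Diamond \Box (\accwgt \geqslant L)\big)>0$, where $\cM^-$ arises from $\cM$ by negating all weights and $L = -(K-1)$. The key observation behind (ii) is that, since all weights are integers, along any path $\accwgt < K$ holds in $\cM$ exactly when $\accwgt \geqslant L$ holds in $\cM^-$; as $\cM$ and $\cM^-$ share the same transition structure and hence the same schedulers and the same path measures, the event $\Diamond \Box (\accwgt < K)$ in $\cM$ coincides with $\Diamond \Box (\accwgt \geqslant L)$ in $\cM^-$. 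Therefore requiring $\Box \Diamond (\accwgt \geqslant K)$ almost surely under every scheduler of $\cM$ is precisely the negation of (ii).

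For the equivalence itself, the forward direction is immediate: the conjunction implies each conjunct, so if the conjunction holds almost surely under all schedulers, then so does $\Box \Diamond F$, giving (i), and so does $\Box \Diamond (\accwgt \geqslant K)$, giving (ii) after the weight-negation translation. For the converse I would fix an arbitrary scheduler $\sched$; assumption (i) yields $\Pr^{\sched}_{\cM,s}(\Box \Diamond F)=1$, and the negation of (ii) yields $\Pr^{\sched}_{\cM,s}(\Box \Diamond (\accwgt \geqslant K))=1$. Since the intersection of two almost-sure events is again almost sure, the conjunction holds with probability $1$ under $\sched$, and as $\sched$ was arbitrary, \UaswB{} holds.

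For the complexity, I would argue that both checks run in polynomial time. Condition (i) is the standard qualitative universal B\"uchi problem for MDPs, equivalently the statement that no end component reachable from $s$ is disjoint from $F$, which is decidable in polynomial time by standard MDP analysis. Condition (ii) is exactly the complement of \EposwcoB{} for $\cM^-$ with threshold $L$; by Lemma~\ref{lemma:coBuechi} this reduces polynomially to \Eposdwr, which lies in \PTIME{} (Theorem~\ref{thm:Eposdwr}), so that \EposwcoB{} itself is in \PTIME{} (as recorded in Corollary~\ref{coBuechi-complexity}), and since \PTIME{} is closed under complement, (ii) is checkable in polynomial time as well. Running the two checks in sequence then yields a polynomial-time decision procedure for \UaswB.

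The main obstacle, I expect, will be getting the decomposition equivalence exactly right: verifying that the almost-sure conjunction under \emph{all} schedulers splits cleanly into the two separate universal-almost-sure constraints, and confirming that the integer-weight translation between the weight-B\"uchi constraint $\Box \Diamond (\accwgt \geqslant K)$ in $\cM$ and the weight-coB\"uchi constraint $\Diamond \Box (\accwgt \geqslant L)$ in $\cM^-$ preserves probabilities under matching schedulers with no off-by-one error in the choice $L = -(K-1)$.
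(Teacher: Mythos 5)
Your proof is correct and follows essentially the same route as the paper: the same decomposition into the universal almost-sure B\"uchi check $\Pr^{\min}_{\cM,s}(\Box \Diamond F)=1$ and the complement of \EposwcoB{} on $\cM^-$ with threshold $L=-(K{-}1)$, with the same integer-weight translation and the same appeal to the polynomial-time solvability of \EposwcoB{} via Corollary~\ref{coBuechi-complexity}. The decomposition step you flagged as the main obstacle is indeed sound, since universal quantification over schedulers combined with almost-sure semantics lets the conjunction split into its conjuncts in both directions.
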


\begin{proof}
Let $\cM^-$ denote the MDP resulting from $\cM$ 
by multiplying all weights with $-1$ and let $L=-(K{-}1)$.
\begin{center}
 \begin{tabular}{lcl}
   
   $(\cM,\sinit)$ satisfies \UaswB{} &
   iff \ \ &
   $\Pr^{\min}_{\cM,\sinit}(\Box \Diamond F)=1$ and
   $\forall \sched. \
     \Pr^{\sched}_{\cM,\sinit}
      \bigl(\, \Box \Diamond (\accwgt \geqslant K) \, \big)=1$
   \\[1ex]

 &  iff \ \ &
   $\Pr^{\min}_{\cM,\sinit}(\Box \Diamond F)=1$ and
   $\neg \exists \sched. \ 
     \Pr^{\sched}_{\cM,\sinit}
      \bigl(\, \Diamond \Box (\accwgt < K) \, \big)>0$
   
   \\[1ex]

  & iff \ \ &
   $\Pr^{\min}_{\cM,\sinit}(\Box \Diamond F)=1$ and
   $\neg \exists \sched. \
     \Pr^{\sched}_{\cM^-,\sinit}
      \bigl(\, \Diamond \Box (\accwgt \geqslant L) \, \big)>0$
\end{tabular}
\end{center}
Thus, \UaswB{} is solvable using known polynomial-time algorithms
to check whether $\Pr^{\min}_{\cM,s}(\Box \Diamond F)=1$
and an algorithm for the complement of \EposwcoB.
As \EposwcoB{} is solvable in polynomial time 
(Corollary \ref{coBuechi-complexity}),
so is \UaswB.
\end{proof}

\begin{lemma}
  \UposwB{} is in $\textrm{NP}\cap \textrm{coNP}$ and solvable
  in pseudo-polynomial time.
\end{lemma}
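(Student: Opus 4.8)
The plan is to reduce \UposwB{} to the complement of the almost-sure weight-bounded coBüchi problem \EaswcoB, whose complexity is already settled in Corollary~\ref{coBuechi-complexity}. First I would rewrite the negation. By definition, \UposwB{} fails on $(\cM,\sinit)$ exactly when some scheduler $\sched$ satisfies $\Pr^{\sched}_{\cM,\sinit}\big(\Box\Diamond(\accwgt\geqslant K)\wedge\Box\Diamond F\big)=0$, i.e. $\Pr^{\sched}_{\cM,\sinit}\big(\Diamond\Box(\accwgt<K)\vee\Diamond\Box\neg F\big)=1$. Passing to $\cM^{-}$ (all weights multiplied by $-1$) and setting $L=-(K-1)$, the first disjunct becomes the coBüchi weight condition $\Diamond\Box(\accwgt\geqslant L)$, while the second disjunct $\Diamond\Box\neg F$ amounts, almost surely, to the limit end component of the path containing no $F$-state (using that the limit of almost every path is an end component).

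The crux is to absorb the second disjunct into the first. To this end I would build $\cN$ from $\cM^{-}$ by collapsing every end component containing no $F$-state into a single absorbing region in which, once reached, the $\cN$-objective $\Diamond\Box(\accwgt\geqslant L)$ is satisfied unconditionally; concretely one replaces the enabled actions of all such states by a self-loop and, in the subsequent reduction of \EaswcoB{} to \Easdwr{} (Lemma~\ref{lemma:coBuechi}), treats these absorbing states as unconditional targets ($K_t=-\infty$). I would then prove the equivalence that \UposwB{} holds at $\sinit$ iff there is \emph{no} scheduler $\sched'$ for $\cN$ with $\Pr^{\sched'}_{\cN,\sinit}\big(\Diamond\Box(\accwgt\geqslant L)\big)=1$. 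For the direction turning a failing scheduler of $\cM$ into a witness for $\cN$, I would classify each path by its limit end component: a path whose limit contains $F$ has its Büchi part satisfied, so (as the objective fails) its weight-Büchi part must fail, giving $\cM$-weight eventually $<K$, hence $\cM^{-}$-weight eventually $\geqslant L$, so mimicking $\sched$ keeps the coBüchi condition alive; a path whose limit is $F$-free is routed into the collapsed region, which witnesses the $\cN$-objective. For the converse I would compose a $\cN$-witness with the divergence/recurrence scheduler of Lemma~\ref{lemma:divsched} and invoke Lemma~\ref{Hilfslemma-coBuechi-pos-max} (an end component carrying the coBüchi weight condition with positive measure and maximal mean payoff $0$ is a $0$-EC) to recover a scheduler of $\cM$ failing the objective almost surely.

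With the equivalence established, the complexity follows immediately. The MDP $\cN$ has essentially the state space of $\cM$, and its maximal end components (needed to single out the $F$-free ones), the negation of the weights, and the value $L$ are all computable in polynomial time, so $\cN$ is constructible in polynomial time. By Corollary~\ref{coBuechi-complexity}, \EaswcoB{} lies in $\NP\cap\coNP$ and is solvable in pseudo-polynomial time; since $\NP\cap\coNP$ is closed under complement and pseudo-polynomial time is preserved by flipping the answer, \UposwB{} inherits both the $\NP\cap\coNP$ membership and the pseudo-polynomial bound.

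The step I expect to be the main obstacle is the correctness of the collapse, and in particular forcing the two failure modes of the \UposwB{} objective — failing the weight-Büchi condition versus failing the $F$-Büchi condition — to coincide with the single coBüchi weight condition on $\cN$. The delicate point is that a path escaping into an $F$-free end component fulfils the negated objective in $\cM$ \emph{regardless} of its accumulated weight, whereas the plain condition $\Diamond\Box(\accwgt\geqslant L)$ is weight-sensitive; the collapse must therefore render these regions unconditionally accepting, and checking that this neither manufactures nor destroys witnesses in either direction of the equivalence is where the real work lies.
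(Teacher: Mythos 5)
Your proposal is correct and follows essentially the same route as the paper: negate the objective, pass to $\cM^-$ with $L=-(K-1)$, collapse the states lying in $F$-free end components into an absorbing region that satisfies the coB\"uchi weight condition unconditionally, and conclude via the complement of \EaswcoB{} and Corollary~\ref{coBuechi-complexity}. The only cosmetic difference is how unconditional acceptance of the collapsed region is enforced — the paper gives the collapsed trap a self-loop of weight $+1$ so that it becomes a pumping end component and is automatically treated as an unconditional target in the \EaswcoB-to-\Easdwr{} reduction, whereas you declare it an unconditional target ($K_t=-\infty$) directly — and both achieve the same effect.
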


\begin{proof}
As before, let $\cM^-$ denote the MDP resulting from $\cM$ 
by multiplying all weights with $-1$ and let $L=-(K{-}1)$.
\begin{center}
 \begin{tabular}{ll}
   &
   $(\cM,s)$ satisfies \UposwB
   \\[1ex]

   iff \ \ &
   $\forall \sched. \
     \Pr^{\sched}_{\cM,s}
      \bigl(\, \Box \Diamond (\accwgt \geqslant K) \, \, \wedge \, 
        \Box \Diamond F \, \big)>0$
   \\[1ex]

   iff \ \ &
   $\neg \exists \sched. \ 
     \Pr^{\sched}_{\cM,s}
      \bigl(\, \Diamond \Box (\accwgt < K) \, \vee \
        \Diamond \Box \neg F \, \big)=1$
   \\[1ex]

   iff \ \ &
   $\neg \exists \sched. \ 
     \Pr^{\sched}_{\cM^-,s}
      \bigl(\, \Diamond \Box (\accwgt \geqslant L) \, \vee \
        \Diamond \Box \neg F \, \big)=1$
 \end{tabular}
\end{center}
Let $G$ denote the union of all states belonging to a (possibly non-maximal)
end component $\cE$ of $\cM$ (or $\cM^-$) 
such that $\cE \cap F=\varnothing$.
The set $G$ is computable in polynomial time using standard techniques.
Let now $\cN$ be the MDP resulting from $\cM^-$
by collapsing all states in $G$ into a fresh trap state $g$
and adding a state-action pair $(g,\alpha)$ where
$P_{\cN}(g,\alpha,g)=1$ and $\wgt_{\cN}(g,\alpha)=1$.
Then:
\[
  \exists \sched. \ 
     \Pr^{\sched}_{\cM^-,s}
      \bigl(\, \Diamond \Box (\accwgt \geqslant L) \, \vee \
        \Diamond \Box \neg F \, )=1
\quad \Longleftrightarrow \quad
   \exists \sched. \ 
     \Pr^{\sched}_{\cN,s}
      \bigl(\, \Diamond \Box (\accwgt \geqslant L) \, \bigr)\, = \, 1
\enspace.
\]

This yields $(\cM,s)$ satisfies \UposwB{} \ iff $(\cN,s)$ does not satisfy
\EaswcoB{} \ for the weight bound $L$.
Thus, the claim follows from Corollary \ref{coBuechi-complexity}.
\end{proof}

\subsubsection{Optimal Values for Weight-Bounded B\"uchi Constraints}

The values of the optimization problems \EposwB, \EaswB, \UposwB{} and
\UaswB{}  are computable using the above reductions 
to the DWR problems and the algorithms presented for the optimization
variants for \Easdwr{} and \Eposdwr.
Thus,
\begin{center}
   \begin{tabular}{lcl}
      $\valueEposB{\cM,s}$ 
      & = &
      $\max \, \bigl\{ \, K \in \Integer \ : \ 
       \exists \sched. \
          \Pr^{\sched}_{\cM,s}
            \bigl(\, \Box \Diamond (\accwgt \geqslant K)
                \, \wedge \, \Box \Diamond F \, 
             \bigr) \, > \, 0 \, \bigr\}$
      \\[1ex]

      $\valueUasB{\cM,s}$ 
      & = &
      $\max \, \bigl\{ \, K \in \Integer \ : \ 
       \forall \sched. \
          \Pr^{\sched}_{\cM,s}
            \bigl(\, \Diamond \Box (\accwgt \geqslant K)
                \, \wedge \, \Box \Diamond F \, 
             \bigr) \, = \, 1 \, \bigr\}$
      \\[1ex]
\end{tabular}
\end{center}
are computable in polynomial time, while the optimal weight bounds
for \EaswB{} and \UposwB{} are computable in pseudo-polynomial time.

\end{document}